\documentclass{amsart}
\footskip 1.5cm
\usepackage[latin1]{inputenc}
\usepackage[english]{babel}
\usepackage{appendix}
\usepackage{latexsym}
\usepackage{amsmath}
\usepackage{amsfonts}
\usepackage{amssymb}
\usepackage{amstext}
\usepackage{mathrsfs}               
\usepackage{bbm}                    
\usepackage{bbold}                  
\usepackage{textcomp}               
\usepackage{enumerate}
\usepackage{enumitem}
\usepackage{units}
\usepackage{stmaryrd}
%
%
%
\newtheorem{thm}{Theorem}[section]
\newtheorem{cor}[thm]{Corollary}
 \newtheorem{lem}[thm]{Lemma}
 \newtheorem{prop}[thm]{Proposition}
 \theoremstyle{definition}
 \newtheorem{defn}[thm]{Definition}
 \theoremstyle{remark}
 \newtheorem{rem}[thm]{Remark}
 \newtheorem{ex}[thm]{Example}
 \numberwithin{equation}{section}
%
%
%
%
%
%
%
\newcommand{\CC}{\mathbb{C}}
\newcommand{\EE}{\mathbb{E}}

\newcommand{\NN}{\mathbb{N}}
\newcommand{\PP}{\mathbb{P}}

\newcommand{\RR}{\mathbb{R}}

%
%
\newcommand{\supp}{\mathrm{supp}}
\newcommand{\dist}{\mathrm{dist}}
\newcommand{\Ran}{\mathrm{Ran}}

\newcommand{\loc}{\mathrm{loc}}

\newcommand{\ren}{\mathrm{ren}}
\newcommand{\Id}{\mathrm{d}}
%
%
\newcommand{\SPn}[2]{\langle #1|#2\rangle} 
\newcommand{\SPb}[2]{\big\langle #1\big|#2\big\rangle} 
\newcommand{\SPB}[2]{\Big\langle #1\Big|#2 \Big\rangle}
%
%
\newcommand{\ol}[1]{\overline{#1}} 
 
\newcommand{\mr}[1]{\mathring{#1}}

\newcommand{\wt}[1]{\widetilde{#1}}
\newcommand{\restr}{\mathord{\upharpoonright}}
%
%
\newcommand{\nf}[2]{{#1/#2}}
\newcommand{\eh}{{1/2}}
\newcommand{\mh}{{-1/2}}
%
%
%
%

\newcommand{\cO}{\mathcal{O}} 
\newcommand{\cC}{\mathcal{C}}
\newcommand{\cD}{\mathcal{D}}\newcommand{\cP}{\mathcal{P}} 
\newcommand{\cQ}{\mathcal{Q}}

\newcommand{\cG}{\mathcal{G}}

\newcommand{\cU}{\mathcal{U}}

\newcommand{\cK}{\mathcal{K}}

%
%
\newcommand{\sA}{\mathscr{A}}
 
\newcommand{\sC}{\mathscr{C}}
\newcommand{\sD}{\mathscr{D}} 
\newcommand{\sE}{\mathscr{E}}\newcommand{\sQ}{\mathscr{Q}}
\newcommand{\sF}{\mathscr{F}}
\newcommand{\sG}{\mathscr{G}}

\newcommand{\sI}{\mathscr{I}}\newcommand{\sU}{\mathscr{U}}
\newcommand{\sV}{\mathscr{V}}
\newcommand{\sK}{\mathscr{K}}\newcommand{\sW}{\mathscr{W}}
\newcommand{\sL}{\mathscr{L}}

%
%

\newcommand{\fB}{\mathfrak{B}}

\newcommand{\fQ}{\mathfrak{Q}}
\newcommand{\fF}{\mathfrak{F}}
\newcommand{\fG}{\mathfrak{G}}
\newcommand{\fH}{\mathfrak{H}}

%
%
\newcommand{\V}[1]{\boldsymbol{#1}}
%
%
%
\newcommand{\ve}{\varepsilon}
\newcommand{\vp}{\varphi}

\newcommand{\vk}{\varkappa}
\newcommand{\vr}{\varrho}
\newcommand{\vt}{\vartheta}
\newcommand{\vs}{\varsigma}
%
%
\newcommand{\id}{\mathbbm{1}}                   
\newcommand{\dom}{\cD}                          
\newcommand{\fdom}{\cQ}                         
\newcommand{\HR}{\mathscr{H}}                   
\newcommand{\LO}{\mathcal{B}}                   
%
%
\newcommand{\ad}{a^\dagger}                     
%
%
\newcommand{\ee}{\mathfrak{g}}

\newcommand{\Geb}{\cG}
\newcommand{\UV}{\Lambda}
\newcommand{\Wi}{\mathrm{{\textsc{w}}}}
\newcommand{\NV}{H}  
\newcommand{\nv}{\mathfrak{h}}
\newcommand{\HV}{\widetilde{H}}
\newcommand{\hv}{\tilde{\mathfrak{h}}}
\newcommand{\Thh}{\Sigma}
\newcommand{\EV}{E}
\newcommand{\SGV}{S_{\Geb}}
\newcommand{\QGV}{\mathscr{Q}_{\Geb}}
\newcommand{\cfG}{\mathfrak{q}_{\Geb}}
\newcommand{\TV}{T}

\newcommand{\GM}{\nu}
\newcommand{\Th}[2]{E_{\Geb_{#1},#2}}
%
%
%

\begin{document}

\title[Ground states in the renormalized Nelson model]{Ground States and Associated Path Measures in the Renormalized Nelson Model}

\author{Fumio Hiroshima \and Oliver Matte}

\address{Fumio Hiroshima, Faculty of Mathematics, Kyushu University, Fukuoka Motooka 744, 
819-0395, Japan}

\email{hiroshima@math.kyushu-u.ac.jp}

\address{Oliver Matte, Institut for Matematiske Fag, Aalborg Universitet, Skjernvej 4A, 9220 Aalborg, Denmark}

\email{oliver@math.aau.dk}

\begin{abstract}
We prove the existence, uniqueness, and strict positivity of ground states of the possibly massless 
renormalized Nelson operator under an infrared regularity condition and for Kato decomposable 
electrostatic potentials fulfilling a binding condition. If the infrared regularity condition is violated, 
then we show non-existence of ground states of the massless renormalized Nelson operator with an 
arbitrary Kato decomposable potential. Furthermore, we prove the existence, uniqueness, and strict
positivity of ground states of the massless renormalized Nelson operator in a non-Fock representation
where the infrared condition is unnecessary.
Exponential and superexponential estimates on the pointwise spatial decay and the decay
with respect to the boson number for elements of spectral subspaces below localization
thresholds are provided. Moreover, some continuity properties of ground state eigenvectors are
discussed. Byproducts of our analysis are a hypercontractivity bound for the semigroup and
a new remark on Nelson's operator theoretic renormalization procedure.
Finally, we construct path measures associated with ground states of the renormalized Nelson
operator. Their analysis entails improved boson number decay estimates for ground state 
eigenvectors, as well as upper and lower bounds on the Gaussian localization with respect to the field 
variables in the ground state. 
As our results on uniqueness, positivity, and path measures exploit the ergodicity of the 
semigroup, we restrict our attention to one matter particle. All results are non-perturbative.

\medskip

\noindent
{\sc Keywords:} Renormalized Nelson model; Feynman-Kac; ground states; decay estimates; 
ground state path measures.

\medskip

\noindent
{\sc Mathematics Subject Classification 2020:} 47A75, 47D08, 60H30, 81T16, 81V99.
\end{abstract}

\maketitle


\section{Introduction and Main Results}\label{secintro}

\subsection{General introduction}

\noindent
The Nelson model describes a conserved number of non-relativistic quantum 
mechanical matter particles linearly coupled to a quantized radiation field comprising  
relativistic bosons. Its crucial feature is its comparatively simple renormalizability as demonstrated by 
Nelson back in 1964 \cite{Nelson1964proc,Nelson1964}. After introducing an ultraviolet
cutoff in the {\em a priori} ill-defined interaction terms of the Hamiltonian,
the addition of a diverging energy counter term suffices in fact to achieve norm resolvent
convergence of the so obtained operators as the cutoff goes to infinity.
Although the model has been studied extensively ever since, there still exist a lot of open 
mathematical questions on its spectral theory. In this article we shall address some problems left open 
in studying ground state eigenvectors and in particular in establishing their existence and uniqueness 
or proving their absence, depending on the respective assumptions. Since we are also interested in 
utilizing certain path measures associated with ground states, whose construction exploits the 
ergodicity of the semigroup generated by the Hamiltonian, we shall consider only one matter particle.

The key aspect of our work is that all theorems are {\em non-perturbative}, apply to the 
{\em renormalized} model, and cover {\em massless bosons} at the same time. Besides existence,
uniqueness, and absence of ground states, we shall discuss exponential and super-exponential 
decays with respect to the boson number and the spatial variable of the matter particle for ground 
state eigenvectors and more general elements of spectral subspaces. In addition we provide
upper and lower bounds describing the Gaussian localization with respect to the field variables
in ground states. We also present a new
hypercontractivity bound and address the continuity of ground state eigenvectors with respect to 
spatial coordinates and parameters. Furthermore, we treat the 
renormalized Nelson model in a non-Fock representation as well, and we give a new remark on
Nelson's operator theoretic renormalization procedure \cite{Nelson1964}. 
If an ultraviolet regularization is introduced in the model, then all our results are well-known, apart from 
the hypercontractivity bound and a few other technical improvements we shall mention later on. 
We comment on the earlier 
literature in the next subsection during a more detailed presentation of our main results. 

Before this however, we shall introduce our standing hypotheses on the model which are
{\em always} assumed to be satisfied throughout the whole article. In fact, the Nelson operators 
studied here depend on the following quantities:
\begin{enumerate}
\item[$\bullet$] 
a Kato decomposable exterior potential $V:\RR^3\to\RR$, which is a function of the spatial
coordinate $\V{x}$ of the matter particle. Hence, $V=V_+-V_-$ for some $V_-\ge0$ in the 
Kato class $K_3$ and $V_+\ge0$ in the local Kato class $K^{\loc}_3$. We recall that
$V_-\in K_3$ means by definition of $K_3$ that $V_-$ is measurable and
\begin{align*}
\lim_{r\downarrow 0}\sup_{\V{x}\in\RR^3}\int_{\{\V{y}\in\RR^3:|\V{x}-\V{y}|<r\}}\frac{V_-(\V{y})}{|\V{x}-\V{y}|}\Id\V{y}&=0.
\end{align*} 
Furthermore, $V_+\in K^{\loc}_3$ means explicitly that $V_+$ is measurable and satisfies
\begin{align*}
\lim_{r\downarrow 0}\sup_{\V{x}\in\RR^3}\int_{\{\V{y}\in\cK:|\V{x}-\V{y}|<r\}}
\frac{V_+(\V{y})}{|\V{x}-\V{y}|}\Id\V{y}&=0,
\end{align*} 
for every compact $\cK\subset\RR^3$. 
These fairly general assumptions on $V$ permit to define all Hamiltonians we are interested in by 
means of quadratic forms as well as to derive bounds on the associated semigroups with the help
of Feynman-Kac formulas. See \cite{AizenmanSimon1982} for a detailed discussion of Kato classes.
\item[$\bullet$] 
a non-negative boson mass $\mu\ge0$. The corresponding dispersion relation for a single
boson is given by
\begin{align*}
\omega(\V{k})&:=(\V{k}^2+\mu^2)^\eh,\quad\V{k}\in\RR^3.
\end{align*}
\item[$\bullet$] 
a measurable function $\eta:\RR^3\to\RR$ with $0\le\eta\le1$, which is even in the sense
that $\eta(-\V{k})=\eta(\V{k})$, for all $\V{k}\in\RR^3$. It is employed to impose a mild infrared 
regularization when we construct ground states of the massless Nelson operator.
Otherwise, the most relevant choice is $\eta=1$.
\item[$\bullet$] 
a coupling constant $\ee\in\RR$. In fact, its value does not affect the validity of any result 
obtained in this article.
\end{enumerate}

The heuristic interaction kernel supposed to appear in the field operator coupling the 
matter particle and the radiation field is thus given by the function 
$$
\RR^3\times\RR^3\ni(\V{x},\V{k})\longmapsto
\ee\omega(\V{k})^\mh\eta(\V{k})e^{-i\V{k}\cdot\V{x}}.
$$
In the most interesting case where $\eta=1$ away from zero, it is obviously not square-integrable with 
respect to $\V{k}$, whence Nelson developed his renormalization procedure.
\begin{enumerate}
\item[$\bullet$] 
Let $\NV$ stand for the renormalized Nelson operator corresponding to the above data. It is acting
in the Hilbert space $L^2(\RR^3,\sF)$ with $\sF$ denoting the bosonic Fock space over $L^2(\RR^3)$.
As alluded to above, $\NV$ is the norm resolvent limit as an ultraviolet cutoff parameter $\UV>0$
goes to infinity of the operators formally given by
\begin{align*}
&-\frac{1}{2}\Delta_{\V{x}}+V+\int_{\RR^3}\omega(\V{k})\ad(\V{k})a(\V{k})\Id\V{k}
\\
&\quad+\ee\int_{\{\V{k}\in\RR^3:|\V{k}|\le\UV\}}\frac{\eta(\V{k})}{\omega(\V{k})^{\eh}}(e^{-i\V{k}\cdot\V{x}}\ad(\V{k})
+e^{i\V{k}\cdot\V{x}}a(\V{k}))\Id\V{k}+E_{\UV}^{\ren}.
\end{align*}
Here $E_{\UV}^{\ren}>0$ is the logarithmically divergent energy counter term; as usual
$\ad(\V{k})$ and $a(\V{k})$ denote the ``pointwise" bosonic creation and annihilation operators
associated with $\sF$. Our article 
contains in fact a self-contained construction of $H$ taylor-made on our needs and complementing 
the existing, rich literature on the subject.
\end{enumerate}
When describing our results obtained through path measures associated with ground states in the
next subsection, we also employ some notation related to a Feynman-Kac formula for $\NV$ derived 
in \cite{MatteMoeller2017}. This formula will be used crucially at several places in our article.
It has the form
\begin{align}\label{FKintro}
(e^{-t\NV}\Psi)(\V{x})&=\EE[W^V_{t}(\V{x})^*\Psi(\V{B}_t^{\V{x}})],\quad\text{a.e. $\V{x}\in\RR^3$},
\end{align}
with a three-dimensional standard Brownian motion $\V{B}$ and $\V{B}^{\V{x}}:=\V{x}+\V{B}$.
Unlike earlier Feynman-Kac representations
for the renormalized Nelson model \cite{GHL2014,Nelson1964proc}, the formula \eqref{FKintro}
contains the {\em $\LO(\sF)$-valued} stochastic process $(W^V_{t}(\V{x}))_{t\ge0}$ 
with finite moments of any order and it can be applied to {\em every} $\Psi\in L^2(\RR^3,\sF)$;
see Subsection~\ref{ssecFK} for more details. 
\begin{enumerate}
\item[$\bullet$] $\LO(\sV)$ denotes the space of bounded linear operators on some
normed vector space $\sV$.
\end{enumerate}


\subsection{Description of Results}

\noindent
As  usual, the first step towards proving the existence of ground states are localization
estimates. The next two theorems, which are proved at the end of Subsection~\ref{ssecexploc},
actually provide more information than necessary.
Both of them are new for the renormalized model; see 
\cite{BFS1998b,BHLMS2002,Griesemer2004,GLL2001,LHB2011,Matte2016,Panati2009} 
for similar estimates in the ultraviolet regularized case, sometimes
restricted to ground state eigenvectors and to the case where the parameter $r$ appearing
in \eqref{intro1} and \eqref{intro2} is equal to $0$. 
Theorem~\ref{introthmsupexp} also provides improved quantitative
information on superexponential decay compared to the earlier literature \cite{BHLMS2002,LHB2011}, 
where the quantity $(1-\ve)a/(p+1)$ appearing in \eqref{intro2} is replaced by some unspecified, 
sufficiently small positive constant.

Before stating the theorems we introduce more notation and some conventions:
\begin{enumerate}
\item[$\bullet$]
We write $a\wedge b:=\min\{a,b\}$ and $a\vee b:=\max\{a,b\}$ for real numbers~$a$ and~$b$.
\item[$\bullet$]
$\Id\Gamma(\vk)$ denotes the differential second 
quantization of some self-adjoint operator $\vk$ in the Hilbert space for one boson; see 
Subsection~\ref{ssecWeyl} where some elements of Fock space calculus are recalled. 
\item[$\bullet$]  The localization threshold of $\NV$ will be denoted by 
$\Thh\in(-\infty,\infty]$; see Subsection~\ref{ssecexploc} for its precise definition.
\item[$\bullet$]
If we talk about continuous representatives of elements of $L^2(\RR^3,\sF)$, then continuity
is understood in the obvious sense of maps from $\RR^3$ to $\sF$. 
\end{enumerate}
According to \cite{MatteMoeller2017}, elements of the range of $e^{-t\NV}$ with $t>0$ have
continuous representatives; see also Remark~\ref{remcontrange}. 
Hence, the same holds for elements of the range of the spectral
projection $1_{(-\infty,\lambda]}(\NV)$, for any $\lambda\in\RR$. 
 
\begin{thm}[Spatial Exponential Decay of Spectral Subspaces]\label{introthmexpdec}
For all $\lambda,\sigma\in\RR$ with $\lambda<\sigma\le\Thh$ and $\ve,r>0$, we find 
$c>0$ such that the unique continuous representative 
$\Psi(\cdot)$ of any normalized element $\Psi$ in the range of $1_{(-\infty,\lambda]}(\NV)$ satisfies
\begin{align}\label{intro1}
\|e^{r\Id\Gamma(\omega\wedge1)}\Psi(\V{x})\|&\le 
ce^{-(1-\ve)|\V{x}|\sqrt{2\sigma-2\lambda}},\quad\V{x}\in\RR^3.
\end{align}
\end{thm}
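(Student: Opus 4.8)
The plan is to establish the exponential decay estimate \eqref{intro1} by means of an exponential weighting argument combined with the Feynman--Kac formula \eqref{FKintro}, exploiting the gap between $\lambda$ and the localization threshold $\Thh$. First I would fix $\lambda<\sigma\le\Thh$ and $\ve>0$, and choose a smooth, bounded weight function $F:\RR^3\to[0,\infty)$ with $|\nabla F|\le\alpha$ pointwise, where $\alpha:=(1-\ve')\sqrt{2\sigma-2\lambda}$ for a suitable $\ve'\in(0,\ve)$, and with $F(\V{x})$ comparable to $\alpha|\V{x}|$ for large $|\V{x}|$ (e.g.\ a mollified multiple of $|\V{x}|\wedge n$, letting $n\to\infty$ at the end via monotone convergence). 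The key operator-theoretic input is a lower bound of the form $e^{F}\NV e^{-F}\ge\NV-\tfrac{1}{2}|\nabla F|^2\ge\NV-\tfrac{1}{2}\alpha^2$ on a suitable form core, which after taking expectations in the ground-state-type energy inequality forces the weighted vector $e^{F}\Psi$ to lie (up to controlled errors) below the threshold $\Thh$; since $\Thh$ is by definition the bottom of the relevant localized/essential spectrum, finiteness of $\|e^{F}\Psi\|$ follows. To keep the boson-number weight $e^{r\Id\Gamma(\omega\wedge1)}$ in the game simultaneously, I would run the same argument with the combined multiplication operator, using that $e^{r\Id\Gamma(\omega\wedge1)}$ commutes with the spatial weight $e^{F}$ and that conjugating $\NV$ by $e^{r\Id\Gamma(\omega\wedge1)}$ produces only bounded (form-)perturbations by standard Fock-space calculus (pull-through / commutator estimates), which shifts the effective threshold by an amount that can be absorbed into the $\ve$-room.

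Concretely, I would work with the semigroup rather than the quadratic form, as this meshes directly with \eqref{FKintro}. Writing $P:=1_{(-\infty,\lambda]}(\NV)$ and $\Psi=P\Psi$, one has $e^{-t\NV}\Psi=e^{-t\NV}P\Psi$ and hence $\|e^{-t\NV}\Psi\|\ge$ (something) — more usefully, $e^{t\lambda}\Psi=e^{t\lambda}e^{t\NV}e^{-t\NV}\Psi$, so $\Psi$ is an eigenvector-like object for the bounded operator $e^{t\lambda}e^{-t\NV}$ restricted to $\Ran P$ in the sense that $\|e^{t\lambda}e^{-t\NV}\Psi\|\ge\|\Psi\|$. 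Applying the pointwise Feynman--Kac formula \eqref{FKintro} to $e^{-t\NV}(e^{-F}\cdot e^{F}\Psi)$ and using the a.e.\ bound $\|W^V_t(\V{x})^*\varphi\|\le$ (scalar Feynman--Kac kernel) $\cdot\,\|\varphi\|$ coming from the finite moments of $(W^V_t(\V{x}))_{t\ge0}$, I would obtain a pointwise inequality of the form
\begin{align*}
\|e^{r\Id\Gamma(\omega\wedge1)+F}\Psi(\V{x})\|
&\le e^{t\lambda}\,\EE\bigl[e^{F(\V{x})-F(\V{B}^{\V{x}}_t)+\int_0^t V_-(\V{B}^{\V{x}}_s)\,\Id s}\,
\|e^{r'\Id\Gamma(\omega\wedge1)}\Psi(\V{B}^{\V{x}}_t)\|\bigr]
\end{align*}
after enlarging $r'>r$ slightly to swallow the boson-weight conjugation error. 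The Lipschitz bound $|F(\V{x})-F(\V{y})|\le\alpha|\V{x}-\V{y}|$ together with the Gaussian exponential moment $\EE[e^{\alpha|\V{B}_t|}]\le e^{\alpha^2 t/2}$, the Kato-class control of $V_-$, and the global bound $\|e^{r'\Id\Gamma(\omega\wedge1)}\Psi(\cdot)\|\le c'$ (which itself follows by iterating the number-decay part of Thm.~\ref{introthmsupexp}, or can be established first as a separate, simpler instance of the same scheme with $F\equiv0$) then yield $\|e^{r\Id\Gamma(\omega\wedge1)+F}\Psi(\V{x})\|\le c\,e^{t\lambda+\alpha^2 t/2}$, uniformly in $\V{x}$. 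Optimizing the choice of $t$ and using $\alpha^2=(1-\ve')^2(2\sigma-2\lambda)$ turns the exponent into $\le c$ (with a constant uniform over normalized $\Psi\in\Ran P$), and since $F(\V{x})\to(1-\ve')\sqrt{2\sigma-2\lambda}\,|\V{x}|$ pointwise as the truncation parameter $n\to\infty$, Fatou's lemma gives \eqref{intro1} with $(1-\ve)$ in place of $(1-\ve')$ after adjusting constants.

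The main obstacle I anticipate is making the conjugation $e^{r\Id\Gamma(\omega\wedge1)}\NV e^{-r\Id\Gamma(\omega\wedge1)}$ rigorous for the \emph{renormalized} operator $\NV$: the renormalization counter term and the singular interaction mean one cannot simply commute $e^{r\Id\Gamma(\omega\wedge1)}$ through $\NV$ at the level of operators, and one must instead argue at the level of the stochastic process $W^V_t(\V{x})$, showing that $e^{r\Id\Gamma(\omega\wedge1)}W^V_t(\V{x})e^{-r\Id\Gamma(\omega\wedge1)}$ is well-defined with the same type of moment bounds (uniformly in $r$ on compacts), with the shift in constants controlled by the cutoff $\omega\wedge1$ that tames the infrared/ultraviolet behaviour. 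This is precisely the kind of estimate on $(W^V_t(\V{x}))_{t\ge0}$ that the Feynman--Kac analysis in \cite{MatteMoeller2017} was designed to supply, so I expect it to go through, but it is the technical heart of the matter. A secondary, more routine point is the interchange of the spatial weight $e^F$ with the expectation and the justification of the pointwise inequalities for the \emph{continuous representative}, which is handled by the regularizing property of $e^{-t\NV}$ already quoted after \eqref{FKintro} and by a density/approximation argument in $\Ran P$.
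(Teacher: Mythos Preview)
Your proposal has a genuine gap: the localization threshold $\Thh$ never actually enters your ``concrete'' semigroup argument, which means the argument cannot be correct as stated. If the displayed inequality and the subsequent estimate really gave $\|e^{r\Id\Gamma(\omega\wedge1)+F}\Psi(\V{x})\|\le c\,e^{t\lambda+\alpha^2 t/2}$ uniformly in $\V{x}$ for any Lipschitz $F$ with $|\nabla F|\le\alpha$, you could choose $\alpha$ arbitrarily large and get decay at any rate, regardless of $\Thh$. The error is in the displayed inequality itself: writing $\Psi=e^{-t\NV}(e^{t\NV}\Psi)$ and applying the Feynman--Kac formula produces $(e^{t\NV}\Psi)(\V{B}_t^{\V{x}})$ on the right, not $\Psi(\V{B}_t^{\V{x}})$. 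For $\Phi:=e^{t\NV}\Psi$ you only control $\|\Phi\|_2\le e^{t\lambda}$, and after inserting the spatial weight you inevitably need $\|e^{F}\Phi\|_2=\|e^{F}e^{t\NV}1_{(-\infty,\lambda]}(\NV)\Psi\|_2$, i.e.\ the $L^2$-bound $\|e^{F}1_{(-\infty,\lambda]}(\NV)\|<\infty$. Your first paragraph gestures at this (``forces the weighted vector $e^F\Psi$ to lie\dots\ below the threshold $\Thh$''), but that sentence is the statement to be proved, not an argument; for elements of spectral subspaces that are not eigenvectors the inequality $\Re\,\hv[e^{2F}\Psi,\Psi]=\hv[e^F\Psi]-\tfrac12\|\,|\nabla F|\,e^F\Psi\|^2$ alone does not give it.

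The paper supplies exactly this missing piece as a separate proposition (Prop.~\ref{propexploc}), proved by the Bach--Fr\"ohlich--Sigal/Griesemer method: an IMS localization $\chi_{0,R}^2+\chi_{1,R}^2=1$, a comparison operator $Y=\HV_{\Geb',K,\UV}+(\Th{R}{\UV}-E_{\Geb,\UV})\chi_{0,R}^2+\tfrac12|\nabla F|^2\chi_{0,R}^2$ satisfying $Y-\lambda\ge\tfrac12|\nabla F|^2$, and a Helffer--Sj\"ostrand representation of $1_{(-\infty,\lambda]}(\NV)$ combined with the weighted resolvent bound \eqref{resexploc}. This is where $\Thh$ enters, via $\Th{R}{\UV}\to\Thh$. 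Only after this $L^2$-estimate is in hand does the paper use the Feynman--Kac bound \eqref{intro3} to pass from $L^2$ to the pointwise estimate \eqref{intro1}, and the $e^{r\Id\Gamma(\omega\wedge1)}$ weight is handled not by conjugating $\NV$ (which, as you rightly suspect, is problematic for the renormalized operator whose form domain is not explicit) but by the simple H\"older interpolation $\|e^{r\Id\Gamma(\omega\wedge1)}\Psi(\V{x})\|\le\|e^{r\Id\Gamma(\omega\wedge1)/\delta}\Psi(\V{x})\|^{\delta}\|\Psi(\V{x})\|^{1-\delta}$ together with the global bound from \eqref{intro3}, reducing everything to the case $r=0$.
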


\begin{thm}[Spatial Superexponential Decay of Spectral Subspaces]\label{introthmsupexp}
Assume $V$ obeys the lower bound
\begin{align}\label{condVsupexp}
V(\V{x})&\ge\frac{a^2}{2}|\V{x}|^{2p}-b,\quad|\V{x}|\ge \rho,
\end{align}
for some $a,b,p,\rho>0$. Then, for all $\lambda\in\RR$ and $\ve,r>0$, there exists
$c>0$ such that the unique continuous representative 
$\Psi(\cdot)$ of any normalized element $\Psi$ in the range of $1_{(-\infty,\lambda]}(\NV)$ satisfies
\begin{align}\label{intro2}
\|e^{r\Id\Gamma(\omega\wedge1)}\Psi(\V{x})\|
&\le ce^{-(1-\ve)a|\V{x}|^{p+1}/(p+1)},\quad\V{x}\in\RR^3.
\end{align}
\end{thm}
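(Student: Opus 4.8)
The plan is to prove a weighted $L^2$-estimate and then upgrade it to the pointwise bound \eqref{intro2} by means of the Feynman--Kac formula \eqref{FKintro}, following the scheme already used for Thm.~\ref{introthmexpdec} but with the linear exponential weight replaced by an Agmon-type weight adapted to the confining potential \eqref{condVsupexp}. Fix $\lambda\in\RR$ and $\ve,r>0$. Since \eqref{condVsupexp} gives $2V(\V{x})-2\lambda\ge a^2|\V{x}|^{2p}-2b-2\lambda$ for $|\V{x}|\ge\rho$, choose $\rho_1\ge\rho$ and a $C^1$ function $F=F_\ve\ge0$ on $\RR^3$, depending only on $|\V{x}|$, with $F\equiv0$ on $\{|\V{x}|\le\rho_1\}$, with $|\nabla F(\V{x})|=(1-\tfrac\ve2)a|\V{x}|^p$ and hence $\tfrac12|\nabla F(\V{x})|^2\le(1-\tfrac\ve4)(V(\V{x})-\lambda)$ for $|\V{x}|\ge\rho_1$, and with $F(\V{x})\ge(1-\tfrac\ve2)a|\V{x}|^{p+1}/(p+1)-c_0$ for all $\V{x}$. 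The first and main goal is the uniform bound
\[
\sup\bigl\{\|e^{r\Id\Gamma(\omega\wedge1)+F(\cdot)}\Psi\|:\Psi\in\Ran\,1_{(-\infty,\lambda]}(\NV),\ \|\Psi\|=1\bigr\}<\infty .
\]

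This I would obtain by the Agmon--IMS quadratic-form comparison, applied to the bounded cut-offs $F_n:=F\wedge n$: using $1_{(-\infty,\lambda]}(\NV)(\NV-\lambda)1_{(-\infty,\lambda]}(\NV)\le0$ and the Agmon identity for $-\tfrac12\Delta$, one reaches $\tfrac12\|\nabla(e^{F_n}\Psi)\|^2+\langle e^{F_n}\Psi,(\NV_{\mathrm b}+V-\lambda-\tfrac12|\nabla F_n|^2)e^{F_n}\Psi\rangle\le c_1$ with $c_1$ independent of $n$ and of normalized $\Psi$ (the right-hand side coming from $\{|\V{x}|<\rho_1\}$, once one invokes the elementary bounds $\|\NV^{1/2}1_{(-\infty,\lambda]}(\NV)\|<\infty$ and $\|(\NV-\sigma)^{-1}1_{(-\infty,\lambda]}(\NV)\|\le(\sigma-\lambda)^{-1}$ for $\sigma>\lambda$ that make the spectral-subspace case, as opposed to the eigenvector case, routine). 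Here $\NV_{\mathrm b}$ is the fibrewise (in $\V{x}$) part of $\NV$ built from the field energy and the interaction; it commutes with the multiplication operator $e^{F_n(\cdot)}$ and, by Nelson's renormalization estimate, is bounded below. Far out, $V-\lambda-\tfrac12|\nabla F_n|^2\ge\tfrac\ve4(V-\lambda)\to+\infty$, so the bracket is $\ge$ const $>0$ on $\{|\V{x}|\ge\rho_1\}$, which yields $\|e^{F_n}\Psi\|^2\le c_2$ uniformly in $n$, and $n\to\infty$ gives the $F$-weighted bound. The boson weight $e^{r\Id\Gamma(\omega\wedge1)}$ is inserted exactly as in Thm.~\ref{introthmexpdec}: conjugation by it multiplies the interaction kernel by the bounded factor $e^{r(\omega\wedge1)}$, so it changes $\NV$ by a term that is again dominated far out by the growth of $V$. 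Since the \emph{renormalized} interaction is not form bounded, the whole computation is carried out on the ultraviolet-regularized operators $\NV_\Lambda$ with all constants independent of $\Lambda$ --- this is where the $\Lambda$-uniform lower bound on $\NV_{\mathrm b,\Lambda}$ after subtraction of the counter term $E_\Lambda$ is used decisively --- and one passes to the limit via $\NV_\Lambda-E_\Lambda\to\NV$ in the norm-resolvent sense, so that the spectral projections converge.

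For the pointwise bound, take a normalized $\Psi\in\Ran\,1_{(-\infty,\lambda]}(\NV)$ and $\tau>0$, write $\Psi=e^{-\tau\NV}\Phi_\tau$ with $\Phi_\tau:=e^{\tau\NV}\Psi\in\Ran\,1_{(-\infty,\lambda]}(\NV)$, $\|\Phi_\tau\|\le e^{\lambda\tau}$, so that $\|e^{r'\Id\Gamma(\omega\wedge1)+F(\cdot)}\Phi_\tau\|\le c_3 e^{\lambda\tau}$ for a fixed $r'>r$ by the first part. Insert this into \eqref{FKintro}: $e^{r\Id\Gamma(\omega\wedge1)}\Psi(\V{x})=\EE[e^{r\Id\Gamma(\omega\wedge1)}W^V_\tau(\V{x})^*\Phi_\tau(\V{B}_\tau^{\V{x}})]$; using a weighted moment bound for $W^V$ of the shape $\|e^{r\Id\Gamma(\omega\wedge1)}W^V_\tau(\V{x})^*e^{-r'\Id\Gamma(\omega\wedge1)}\|_{\LO(\sF)}\le e^{-\int_0^\tau V(\V{B}_s^{\V{x}})\Id s}R_\tau$ with $R_\tau\ge0$ of finite moments (of the kind established in Subsect.~\ref{ssecFK}), one Cauchy--Schwarz in $\EE$, and $|F(\V{B}_\tau^{\V{x}})-F(\V{x})|\lesssim(1+|\V{x}|)^p\sup_{s\le\tau}|\V{B}_s|$ to transfer $e^{F(\V{x})}$ onto $\Phi_\tau$, one obtains $e^{F(\V{x})}\|e^{r\Id\Gamma(\omega\wedge1)}\Psi(\V{x})\|\le\kappa(\tau,\V{x})\,\|e^{r'\Id\Gamma(\omega\wedge1)+F(\cdot)}\Phi_\tau\|$. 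After confining the Brownian path to the tube $\{\sup_{s\le\tau}|\V{B}_s|\le\delta|\V{x}|\}$ --- whose complement is Gaussian-negligible once $\tau$ is small, and on which the confining term in $\int_0^\tau V(\V{B}_s^{\V{x}})\Id s$ controls both the transfer error for $e^F$ and the Kato part $V_-$ of $V$ --- one gets $\log\kappa(\tau,\V{x})\le C\delta|\V{x}|^{p+1}+O(\log(1+|\V{x}|))$ with $\delta>0$ at our disposal, provided $\tau$ is fixed small when $p\le1$ and $\tau=\tau(\V{x})\sim|\V{x}|^{1-p}\to0$ when $p>1$. Choosing $\delta$ small relative to $\ve$, inserting $\|e^{r'\Id\Gamma(\omega\wedge1)+F(\cdot)}\Phi_\tau\|\le c_3 e^{\lambda\tau}$ and $F(\V{x})\ge(1-\tfrac\ve2)a|\V{x}|^{p+1}/(p+1)-c_0$, one arrives at \eqref{intro2} after relabelling $\ve$; the bound holds for the unique continuous representative of $\Psi$ since both sides of \eqref{FKintro} depend continuously on $\V{x}$, cf.\ Rem.~\ref{remcontrange}.

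The main obstacle is the weighted $L^2$-estimate, which is harder than its ultraviolet-regularized analogues \cite{BHLMS2002,LHB2011} for two reasons. First, the renormalized interaction is not form bounded, so the Agmon--IMS step must be performed on the regularized operators with cutoff-independent constants and then carried through the norm-resolvent limit, which forces the use of quantitative input from Nelson's renormalization rather than only soft spectral information. Second, to reach the \emph{sharp} constant $(1-\ve)a/(p+1)$ --- in place of the unspecified small constant of the earlier works --- the weight $F$ must carry the Agmon profile $|\nabla F|^2\approx2(V-\lambda)$, hence an unbounded gradient, so at every step (the commutator with $-\tfrac12\Delta$, the commutator with $\Id\Gamma(\omega\wedge1)$, and the transfer of $e^F$ along the Brownian path --- which is precisely why $\tau$ must be shrunk when $p>1$) one has to verify that the confining growth of $V$ strictly dominates $|\nabla F|^2$ and all error terms. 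The remaining ingredients --- the Khasminskii bound for $V_-$, the moment bounds on $W^V_\tau$, and the continuity upgrade --- are routine.
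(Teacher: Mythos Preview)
Your two-step strategy (weighted $L^2$ bound, then Feynman--Kac upgrade to $L^\infty$) matches the paper's, but the execution of both steps has gaps.

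The main gap is in the $L^2$ step. The Agmon identity you invoke reads $\Re\,\nv_\Lambda[e^{2F_n}\Psi,\Psi]=\nv_\Lambda[e^{F_n}\Psi]-\tfrac12\||\nabla F_n|e^{F_n}\Psi\|^2$, so the quantity you want to control equals $\Re\langle e^{2F_n}\Psi,(\NV_\Lambda-\lambda)\Psi\rangle$. For an \emph{eigenvector} this is $\le 0$; for a general $\Psi$ in the range of $1_{(-\infty,\lambda]}(\NV_\Lambda)$ it is \emph{not}, because $e^{2F_n}\Psi$ is no longer in the spectral subspace, and neither of the ``elementary bounds'' you cite closes the argument (bounding it by $\|e^{2F_n}\Psi\|\cdot\|(\NV_\Lambda-\lambda)\Psi\|$ is circular). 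The paper (Prop.~\ref{propexploc}, proved in App.~\ref{appexploc}) instead follows the BFS/Griesemer scheme: introduce a comparison operator $Y=\HV_{\Geb',K,\UV}+(\text{compactly supported bump})$ satisfying both $Y-\lambda\ge\Delta>0$ and $Y-\lambda\ge\tfrac12|\nabla F|^2$, derive a weighted resolvent bound for $Y$, and represent $1_{(-\infty,\lambda]}$ via the Helffer--Sj\"ostrand formula combined with the second resolvent identity. All of this is done directly for the Gross-transformed operator at $\UV=\infty$, whose form is explicit on $\QGV$ by Thm.~\ref{thmrbUV}; the result transfers to $\NV$ because the Gross transformation commutes with multiplication by $e^F$. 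This also avoids your $\Lambda\to\infty$ limiting procedure and the attendant convergence-of-spectral-projections issues.

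Two further points. First, conjugating $\vp(f)$ by $e^{r\Id\Gamma(\omega\wedge1)}$ does \emph{not} simply multiply the kernel by $e^{r(\omega\wedge1)}$ --- it yields $\ad(e^{r\kappa}f)+a(e^{-r\kappa}f)$, which is not self-adjoint --- and more seriously, the pathwise bound on $e^{r\Id\Gamma}W^V_\tau(\V{x})^*$ you need in the upgrade step requires $\tau\gtrsim r$ (see the proof of Prop.~\ref{propGSmu}(1)), incompatible with shrinking $\tau$ when $p>1$. The paper instead reduces to $r=0$ first, via H\"older and Prop.~\ref{propGSmu}(2), exactly as in the proof of Thm.~\ref{introthmexpdec}. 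Second, for the $L^\infty$ upgrade at $r=0$ the paper replaces your tube argument by a cleaner device: cap $F$ at the target point $\V{y}$ to obtain a \emph{globally Lipschitz} weight $F_{\V{y}}$ with constant $L\sim|\V{y}|^p$, then apply \eqref{hcbd} with $t=1/L^2\sim|\V{y}|^{-2p}$ so that $e^{6tL^2}=e^6$; the factor $t^{-3/4}$ contributes only a polynomial $|\V{y}|^{3p/2}$, absorbed into the exponential. Since $|\nabla F_{\V{y}}|\le|\nabla F|$, the $L^2$ estimate \eqref{exot1} applies to $F_{\V{y}}$ with $\V{y}$-independent constants.
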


The previous theorems are just two examples for the applicability of our $L^2$-localization
estimates in Proposition~\ref{propexploc}. This proposition also yields non-isotropic localization
estimates in situations where the potential behaves differently in various spatial directions.
We prove our $L^2$-localization estimates by further elaborating on the methods used 
in \cite{BFS1998b,Griesemer2004}. After that we turn them into $L^\infty$-decay estimates 
by means of the Feynman-Kac formula \eqref{FKintro} and the following bound proven
in Proposition~\ref{propGSmu},
\begin{align}\label{intro3}
\sup_{\V{x}\in\RR^3}\sup_{\|\Psi\|\le1}e^{F(\V{x})}\big\|e^{t\Id\Gamma(\omega\wedge1)/6}
\EE[W^V_{t}(\V{x})^*(e^{-F}\Psi)(\V{B}_t^{\V{x}})]\big\|_{\sF}
&\le c_{t}\frac{e^{6tL^2}}{t^{\nf{3}{4}}},
\end{align}
for all $t>0$.
Here $F:\RR^3\to\RR$ is Lipschitz continuous and $L$ is a Lipschitz constant for $F$. The
constant in \eqref{intro3} satisfies $\sup_{t\in(0,n]}c_{t}<\infty$, for all $n\in\NN$; 
besides $t$, it only depends on $\ee$ and $V$. 

As we shall see in Theorem~\ref{introthmsupexpNum} below, the function $\omega\wedge1$ can be
replaced by $1$ in the inequalities \eqref{intro1} and \eqref{intro2}, if we restrict our attention to
ground state eigenvectors (if any).
The question of whether ground state eigenvectors of Hamiltonians in non- or semi-relativistic
quantum field theory are in the domain of inverse Gaussians of field operators has also been
raised in the literature and treated with the help of associated path measures in 
\cite{Hiroshima2014}. The following remark shows that, if the boson wave function in the field operator 
does not contain too many soft boson modes, 
then this question can be answered in greater generality.
(As for ground state eigenvectors, see Remark~\ref{introremGauss2} below.)
\begin{enumerate}
\item[$\bullet$]
Henceforth, $\dom(T)$ denotes the domain of definition of a linear operator $T$. If $T$ is a 
non-negative self-adjoint operator in a Hilbert space, then $\fdom(T)$ is its form domain. 
\item[$\bullet$] The symbol $\vp(h)$ stands for the field operator associated with a
boson wave function $h$; see Subsection~\ref{ssecWeyl} for its precise definition.
\end{enumerate}

\begin{rem}[Gaussian Domination for Spectral Projections]\label{introremGauss1}
Let $\lambda\in\RR$, $r>0$, and $h\in\fdom(\omega^{-1})$ with $\|(\omega^\mh\vee1)h\|<1/2$. 
By virtue of \eqref{FKintro} and \eqref{intro3}, we then find a $(r,\lambda,\ee,V)$-dependent
constant $c>0$ such that the unique continuous representative $\Psi(\cdot)$ of any normalized 
element $\Psi$ in the range of the spectral projection $1_{(-\infty,\lambda]}(\NV)$ satisfies 
$\|e^{r\Id\Gamma(\omega\wedge1)}\Psi(\V{x})\|\le c$, for all $\V{x}\in\RR^3$.
The latter bound can further be combined with the following inequality proven in 
Appendix~\ref{appinvGauss},
\begin{align}\label{intro4}
\|e^{\vp(h)^2}\psi\|\le\frac{1}{\sqrt{1-4\alpha\|(\omega^\mh\vee1)h\|^2}}
\|e^{\Id\Gamma(\omega\wedge1)/(\alpha-1)}\psi\|,
\end{align}
for all $\alpha>1$ such that $4\alpha\|(\omega^\mh\vee1)h\|^2<1$ and
every $\psi\in\dom(e^{\Id\Gamma(\omega\wedge1)/(\alpha-1)})$.
Of course, \eqref{intro4} can also be combined with \eqref{intro1} or \eqref{intro2}.
\end{rem}

Whether the infimum of the spectrum of $\NV$,
\begin{align*}
\EV&:=\inf\sigma(\NV),
\end{align*}
is an eigenvalue or not, i.e., whether there is a ground state eigenvector or not, is clarified in the next 
two theorems, modulo a binding condition in the existence result. The latter condition obviously holds
for confining potentials. In many other relevant cases the binding condition can be verified with the
help of an argument from \cite{GLL2001}; see Subsection~\ref{ssecbinding}.

For sufficiently small $|\ee|$, the existence part of the next theorem could also be inferred from the
arguments given in \cite{HHS2005}. For a class of confining potentials $V$ and massive bosons
($\mu>0$), the existence of ground state eigenvectors was shown in \cite{Ammari2000} and with an
additional ultraviolet regularization in \cite{DerezinskiGerard1999}.
Several articles contain existence proofs for ground states in the case where both a mild infrared and 
an ultraviolet regularization are introduced in the massless Nelson model: Confining potentials 
are treated non-perturbatively in \cite{DerezinskiGerard2004,Gerard2000,Spohn1998}, 
weak coupling assumptions are used in \cite{BFS1998b,Spohn1998}. 
As far as ground states of {\em fiber} Hamiltonians in the renormalized Nelson model are concerned,
perturbative (resp. non-perturbative) results for massive bosons can be found in \cite{Cannon1971}
(resp. \cite{Froehlich1974}), while \cite{BDP2011} deals with the massless model at weak coupling.
Non-perturbative constructions of zero-modes for the ultraviolet regularized model appear
in \cite{BetzSpohn2005,Gross1972}; see also the textbook \cite{LHB2011} and the reference lists
in \cite{BDP2011,LHB2011}.

On the Fock space $\sF$ there exists a canonical notion of positivity defined by means of 
the Schr\"{o}dinger, or, ``$\cQ$-space" representation of $\sF$. (For the unfamiliar reader
we shall sketch a construction of $\cQ$-space in Subsection~\ref{ssecQspace}.)
This notion of positivity also induces a notion of positivity for maps from $\RR^3$ to $\sF$, 
which is employed in the next theorem and henceforth,
in particular when we are dealing with positivity improving or ergodic operators on $L^2(\RR^3,\sF)$.

\begin{thm}[Ground States for the IR Regular Nelson Operator]\label{introthmgsN}
Assume that the binding condition $\Thh>\EV$ and the infrared regularity condition
$\omega^{-3}\eta^2\in L^1_\loc(\RR^3)$ are fulfilled. Then $\EV$ is a non-degenerate eigenvalue
of $\NV$ and there exists a corresponding eigenvector whose unique continuous representative
is strictly positive.  
\end{thm}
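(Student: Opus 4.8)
The plan is to follow the well-established strategy for proving existence of ground states for massless field theory models via an infrared-regularized approximation, combined with the semigroup/Feynman–Kac tools developed earlier in the article; uniqueness (non-degeneracy) and strict positivity will then come from ergodicity of the semigroup. Concretely, I would introduce a family of massive approximations $\NV_\mu$ with $\mu>0$ (or, if $\mu=0$ already, a spatially cut-off infrared regularization) and exploit the fact that for $\mu>0$ the existence of a ground state eigenvector $\Psi_\mu$ is already available from the cited massive results together with the binding condition. The first step is then to establish uniform-in-$\mu$ a priori bounds on $\Psi_\mu$: exponential spatial localization from Theorem~\ref{introthmexpdec} (applicable once the binding condition $\Thh>\EV$ is known to persist uniformly, which follows from norm-resolvent convergence of the renormalized operators), and the number operator / field localization estimates encoded in \eqref{intro3} and \eqref{intro4}. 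These give that the family $\{\Psi_\mu\}$ lives in a fixed set that is compact after an appropriate localization, so that along a subsequence $\Psi_\mu\rightharpoonup\Psi_0$ weakly, and the localization bounds upgrade this to strong convergence.

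The second step is to show the weak limit $\Psi_0$ is nonzero; this is the heart of the matter and the main obstacle. Here the infrared regularity condition $\omega^{-3}\eta^2\in L^1_{\mathrm{loc}}$ enters decisively: it is exactly the condition guaranteeing that the number of soft bosons in the approximate ground states stays bounded as $\mu\downarrow0$, i.e.\ that no infinite cloud of soft bosons is emitted in the limit (the usual "infrared catastrophe" obstruction). Quantitatively, one derives a pull-through type bound $\|a(\V{k})\Psi_\mu\|\lesssim |\ee|\,\omega(\V{k})^{-3/2}\eta(\V{k})\,\|\Psi_\mu\|$ for the renormalized operator — the renormalization does not affect the soft-boson part of the coupling — and integrates against $\omega^{-3}\eta^2$ being locally integrable to conclude $\sup_\mu\|\Id\Gamma(\id)^{1/2}\Psi_\mu\|<\infty$ (or the relevant truncated version). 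Combined with the spatial exponential decay this prevents the escape of mass, so $\|\Psi_0\|=1$ after normalization, and a standard lower-semicontinuity argument for the quadratic form of $\NV$ together with continuity of $E$ in $\mu$ (from norm-resolvent convergence) shows $\NV\Psi_0=\EV\Psi_0$.

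The third and final step handles non-degeneracy and strict positivity. Using the Feynman–Kac formula \eqref{FKintro} with the $\LO(\sF)$-valued process $W^V_t(\V{x})$, one shows that $e^{-t\NV}$ is positivity improving in the $\cQ$-space representation: the kernel $W^V_t(\V{x})^*$ maps the positive cone of $\sF$ strictly into its interior (this is where positivity of the Brownian bridge expectation and the specific structure of $W^V_t$ established in \cite{MatteMoeller2017} are used), and Brownian motion connects any two points in $\RR^3$, giving irreducibility over the $\V{x}$-variable as well. Ergodicity (Perron–Frobenius / Faris' theorem for positivity improving semigroups) then forces $\EV=\inf\sigma(\NV)$, if it is an eigenvalue, to be simple with a ground state that can be chosen strictly positive; and any ground state eigenvector is a scalar multiple of it, hence its continuous representative — continuity being granted because $\Psi_0$ lies in the range of $e^{-t\NV}$ — is strictly positive pointwise. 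I expect the soft-boson bound in the second step to be the delicate point, since the pull-through formula must be justified for the renormalized operator, for which the naive interaction is only defined through Nelson's limiting procedure; the cleanest route is to carry out the pull-through argument at the level of the ultraviolet-cutoff, infrared-regularized operators where everything is rigorous, extract bounds that are uniform in both cutoffs, and then pass to the limits using the Feynman–Kac representation and the moment bounds on $W^V_t(\V{x})$.
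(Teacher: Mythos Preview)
Your uniqueness/positivity argument via the positivity-improving Feynman--Kac semigroup is exactly what the paper does (citing \cite{MatteMoeller2017}; see Thm.~\ref{thmPerronFrobenius}). The existence argument, however, differs from the paper's route and as written has a gap in the ultraviolet.

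The paper does not run the pull-through on $\NV$ or its massive approximations. Instead it observes that the infrared condition $\omega^{-3}\eta^2\in L^1_\loc$ is precisely the condition $\beta_{0,\infty}\in L^2(\RR^3)$, so the Gross transformation $G_{0,\infty}$ is a genuine unitary and $\NV=G_{0,\infty}^*\,\HV\,G_{0,\infty}$. Existence for $\NV$ then follows in one line from existence for the non-Fock operator $\HV$ (Thm.~\ref{thmGSH}/Thm.~\ref{thmGSN}), which is proved \emph{without} any infrared hypothesis. All the approximation steps (massive bosons via hypercontractivity and Gross' criterion, bounded domains, finite $\UV$) and the pull-through itself (Prop.~\ref{prop-IR}) are carried out for $\HV$, where the coupling entering the resolvent formula is $\V{m}\beta_\UV$ rather than $f_\UV$; the extra factor $\V{k}/(\omega+\V{m}^2/2)$ supplies enough UV decay for the compactness argument (Cors.~\ref{corIRbd}--\ref{corIRubd2}) to close uniformly in $\UV$, while the BFS-type counter term $\chi(\V{k})$ in \eqref{defdirintopIR} handles the IR side unconditionally. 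This route yields the non-Fock existence theorem (Thm.~\ref{introthmgsnonFock}) for free and reduces the role of the IR condition to a single unitary equivalence at the very end.

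The gap in your plan is this: the bound $\|a(\V{k})\Psi_\mu\|\lesssim \omega(\V{k})^{-3/2}\eta(\V{k})$ (which is indeed correct for the renormalized operator, cf.\ \eqref{IRmassless}) is \emph{not} square-integrable at large $\V{k}$ when $\eta$ is, say, constant near infinity, so it cannot deliver the uniform UV tail control $\sup_\mu\int_{\{|\V{k}|>R\}}\|a(\V{k})\Psi_\mu\|^2\,\Id\V{k}\to0$ required for compactness in the boson momenta. At the double-cutoff level the bound reads $\omega^{-3/2}\eta\,1_{\{|\V{m}|\le\UV\}}$, and this does not decay uniformly as $\UV\to\infty$; your claim $\sup_\mu\|\Id\Gamma(\id)^{1/2}\Psi_\mu\|<\infty$ therefore does not follow from the pull-through alone. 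You could repair this with the uniform form inclusion $\fdom(\NV_\mu)\subset\fdom(\Id\Gamma(\omega))$ of \cite{GriesemerWuensch2017} (Lem.~\ref{lemstrresconvdGN}), but that already rests on the Gross transformation---and once you use it, the paper's direct passage through $\HV$ is shorter and yields more.
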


\begin{proof}
The assertions on uniqueness and positivity follow directly from the ergodicity
of the semigroup generated by $\NV$ proven in \cite{MatteMoeller2017} and Faris'
Perron-Frobenius theorem \cite{Faris1972}; see also Theorem~\ref{thmPerronFrobenius} below. 
The existence part is contained in Theorem~\ref{thmGSN}. 
\end{proof}

Let us mention that the ergodicity of the semigroup generated by the {\em fiber} Hamiltonians in the
renormalized Nelson model -- with respect to a different notion of positivity on $\sF$ -- 
has been established in \cite{Lampart2021,Miyao2019}.

The infrared regularity condition in the previous theorem cannot be dropped as the next result shows.
In fact, the following theorem establishes non-existence of ground states in the infrared singular 
Nelson model for the first time without any ultraviolet regularization. Furthermore, it does not
require any binding condition or related technical restrictions as they were imposed on the 
potential $V$ in the previous literature 
\cite{DerezinskiGerard2004,GHPS2012,Hirokawa2006,LorincziMinlosSpohn2002,Panati2009}.
(A non-perturbative proof for the non-existence of ground states of infrared singular but
ultraviolet regularized {\em fiber} Hamiltonians in the Nelson model can be found in \cite{Dam2018};
see the references given there for corresponding perturbative results. Absence of ground states for 
renormalized massless fiber Hamiltonians in the Nelson model is proven in \cite{DamHinrichs2021}.)

The following theorem is proved by further elaborating on the argument 
given in \cite{DerezinskiGerard2004}.

\begin{thm}[Absence of Ground States for the IR Singular Nelson Operator]\label{introthmabs}
Consider the massless renormalized one-particle Nelson operator $\NV$ with an arbitrary 
Kato decomposable potential $V$ in the infrared singular case where
\begin{align}\label{introIRsing}
\ee^2\int_{\{|\V{k}|\le 1\}}\frac{\eta(\V{k})^2}{|\V{k}|^3}\Id\V{k}=\infty.
\end{align}
Then $\EV$ is not an eigenvalue of $\NV$.
\end{thm}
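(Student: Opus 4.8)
The plan is to argue by contradiction: assume $\Psi$ is a normalized ground state eigenvector of $\NV$ at energy $\EV = \inf\sigma(\NV)$, and derive a divergent lower bound on its boson number expectation that contradicts $\Psi \in \dom(\Id\Gamma(1))$ (or even $\Psi \in L^2(\RR^3,\sF)$). The mechanism is the standard infrared argument of Derezi\'{n}ski--G\'{e}rard, adapted to the renormalized model: one compares $\Psi$ with the state obtained by removing a single soft boson, i.e.\ one studies $a(\V{k})\Psi$, the annihilation of a boson of momentum $\V{k}$, and establishes a pull-through-type formula showing that $a(\V{k})\Psi$ is, up to controlled error, given by $(\NV - \EV + \omega(\V{k}))^{-1}$ applied to the derivative of the interaction with respect to the soft mode, acting on $\Psi$. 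Since here $\mu = 0$ we have $\omega(\V{k}) = |\V{k}|$, and the effective coupling to the mode $\V{k}$ carries the singular factor $\ee\,\omega(\V{k})^{\mh}\eta(\V{k}) = \ee\,|\V{k}|^{\mh}\eta(\V{k})$; squaring and integrating against $|\V{k}|^{-2}\,\Id\V{k}$ near the origin reproduces exactly the divergent integral in \eqref{introIRsing}, which forces $\int \|a(\V{k})\Psi\|^2\,\Id\V{k} = \infty$, i.e.\ $\Psi \notin \dom(\Id\Gamma(1)^{\eh})$, contradicting $\Psi \in \dom(\NV) \subset \fdom(\Id\Gamma(\omega))$ together with the spatial/boson-number localization already available (Theorem~\ref{introthmexpdec}, or more directly the hypotheses of the excerpt).

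The first technical step is to make sense of the pull-through formula in the renormalized setting, where the naive interaction kernel is not square-integrable and $\NV$ is not the obvious self-adjoint sum. Here I would not work with the formal Hamiltonian but instead exploit the Feynman--Kac representation \eqref{FKintro} and the $\LO(\sF)$-valued process $(W^V_t(\V{x}))_{t\ge0}$: the operator $a(\V{k})e^{-t\NV}$ can be computed by commuting $a(\V{k})$ through $W^V_t(\V{x})^*$, which produces a bounded-error term plus the expected singular contribution $-\ee\,|\V{k}|^{\mh}\eta(\V{k})$ times a multiplication/stochastic-integral factor. Integrating the resulting identity against $e^{-t(\cdots)}$ in $t$ and using $e^{-t\NV}\Psi = e^{-t\EV}\Psi$ yields the desired representation of $a(\V{k})\Psi$. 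This is precisely the point where ``further elaborating on the argument given in \cite{DerezinskiGerard2004}'' is needed: the renormalization counterterm and the UV behaviour of $W^V_t$ must be controlled so that the error terms stay in $L^2(\RR^3\times\RR^3)$ with respect to $\Id\V{k}\,\Id\V{x}$ uniformly, leaving the soft-mode singularity as the only obstruction to integrability.

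The second step is the lower bound itself. From the representation one extracts, for small $|\V{k}|$,
\begin{align*}
\|a(\V{k})\Psi\|^2 \;\ge\; \frac{\ee^2\,\eta(\V{k})^2}{2\,|\V{k}|\,(\,|\V{k}| + \Thh - \EV)^2}\,\bigl|\langle \Psi, \partial\Psi\rangle\bigr|^2 - (\text{integrable remainder}),
\end{align*}
where the inner product on the right is a fixed nonzero quantity (nonvanishing because $\EV$ lies strictly below the essential spectrum and the resolvent is bounded below near the mode, exactly as in \cite{DerezinskiGerard2004}; the binding gap is not needed here since the argument only uses $\omega(\V{k})\to 0$). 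Integrating over $\{|\V{k}|\le 1\}$ and invoking \eqref{introIRsing} gives $\int_{\RR^3}\|a(\V{k})\Psi\|^2\,\Id\V{k} = +\infty$. But $\Psi \in \dom(\NV)$, and by the form bound relating $\NV$ to $\Id\Gamma(\omega) = \Id\Gamma(|\V{k}|)$ (available through the earlier localization estimates and the Feynman--Kac bound \eqref{intro3}), one has $\Psi \in \fdom(\Id\Gamma(|\V{k}|))$, whence $\int |\V{k}|\,\|a(\V{k})\Psi\|^2\,\Id\V{k} < \infty$; since $|\V{k}|$ is bounded on $\{|\V{k}|\le1\}$ this is compatible, so I instead use $\Psi \in L^2(\RR^3,\sF)$ itself: the number operator bound must be upgraded, and the cleanest route is to show $\int_{\{|\V{k}|\le1\}}\|a(\V{k})\Psi\|^2\,\Id\V{k} \le \|\Id\Gamma(1)^{\eh}\Psi\|^2 < \infty$ using that ground states of $\NV$ lie in $\dom(\Id\Gamma(1))$ by Remark~\ref{introremGauss1} with $h$ chosen appropriately. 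The contradiction completes the proof.

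The main obstacle is the first step: establishing the pull-through formula and the uniform $L^2$-control of the error terms directly in the renormalized model, bypassing the ill-defined formal Hamiltonian. Once the Feynman--Kac machinery of \cite{MatteMoeller2017} delivers a clean commutation identity for $a(\V{k})$ against $W^V_t(\V{x})^*$ with error terms whose $\V{k}$-integral converges, the rest is the classical Derezi\'{n}ski--G\'{e}rard counting argument with $\omega(\V{k}) = |\V{k}|$.
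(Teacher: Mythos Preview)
Your high-level strategy---pull-through identity for $a(\V{k})\Psi$ plus the infrared divergence \eqref{introIRsing}---is correct, but the proposal has a genuine gap in closing the contradiction, and the technical route you sketch for the pull-through is not the one the paper takes.

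The gap is in your second step. You want to conclude from $\int_{\{|\V{k}|\le1\}}\|a(\V{k})\Psi\|^2\,\Id\V{k}=\infty$ that $\Psi=0$, and for this you need $\Psi\in\fdom(\Id\Gamma(1))$. You correctly notice that $\Psi\in\fdom(\Id\Gamma(|\V{k}|))$ is too weak near $\V{k}=0$, and you try to repair this via Remark~\ref{introremGauss1}. But that remark only yields control of $e^{r\Id\Gamma(\omega\wedge1)}\Psi(\V{x})$, and since $\omega\wedge1=|\V{k}|$ for $|\V{k}|<1$ this is again the soft bound. The upgrade to $\Id\Gamma(1)$ (Theorem~\ref{introthmsupexpNum}) explicitly \emph{requires} the infrared regularity condition $d<\infty$, which is the negation of your hypothesis. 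So there is no available route to $\Psi\in\fdom(\Id\Gamma(1))$ here, and your contradiction does not close.

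The paper avoids this circularity by a different mechanism. After deriving the exact identity
\[
(a\Phi_\infty)(\V{p})=-\ee\eta(\V{p})|\V{p}|^{-3/2}\Phi_\infty-\ee\eta(\V{p})|\V{p}|^{-1/2}(\NV-\EV+|\V{p}|)^{-1}(e^{-i\V{p}\cdot\hat{\V{x}}}-1)\Phi_\infty,
\]
it multiplies by the spatial cutoff $\vt_N=1_{\{|\V{x}|\le N\}}$ and uses an \emph{exponential resolvent decay bound} (Example~\ref{exresexpabs}, which needs no binding condition) to show that the second term, after localization, has \emph{finite} $L^2(\{|\V{p}|\le1\})$-norm. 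This gives
\[
\int_{\{|\V{p}|\le1\}}\big\|(a\vt_N\Phi_\infty)(\V{p})+\ee\eta(\V{p})|\V{p}|^{-3/2}\vt_N\Phi_\infty\big\|^2\,\Id\V{p}<\infty,
\]
and now a Fock-space component-by-component argument (\cite[Lem.~2.6]{DerezinskiGerard2004}) forces $\vt_N\Phi_\infty=0$, for every $N$. No a~priori bound on $\Id\Gamma(1)^{1/2}\Phi_\infty$ is ever invoked. This is also precisely what allows the paper to drop the binding condition imposed in earlier work.

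On the pull-through itself: the paper does \emph{not} commute $a(\V{k})$ through the Feynman--Kac integrand $W^V_t(\V{x})^*$. Instead it works with the ultraviolet-regularized operators $\NV_{\Geb,\UV}$ (whose domain and action are explicit by \eqref{OpformelNV}), computes the commutator there, and passes to the limit $\UV\to\infty$ using the form-domain inclusion $\fdom(\NV_{\Geb,\infty})\subset\fdom(\Id\Gamma(\omega))$ of Griesemer--W\"{u}nsch together with a strengthened resolvent convergence (Lemma~\ref{lemstrresconvdGN}). Your Feynman--Kac route may be feasible but is speculative as stated; the operator-theoretic limit is what actually runs.
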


\begin{proof}
The assertion is contained in the one of Theorem~\ref{thmabsence}.
\end{proof}

Similarly as in most other related works, we shall first prove the existence of ground states under
several simplifying assumptions, such as strict positivity of the boson mass, which thereupon are
removed in a chain of compactness arguments.
A crucial technical step is to derive a suitable formula for $a(\V{k})\Phi_\iota:=(a\Phi_\iota)(\V{k})$,
with $a$ denoting the ``pointwise annihilation operator" and 
$\{\Phi_\iota\}_{\iota\in I}$ some family of approximate ground
state eigenvectors, which together with the spatial localization estimates reveals compactness
of the family $\{\Phi_\iota\}_{\iota\in I}$. A suchlike formula comprising manifestly square-integrable
functions of $\V{k}$ can be obtained by working with {\em Gross transformed versions} of the involved 
Nelson operators. Here the Gross transformation is a
unitary operator that ceases to exist in the infrared singular situation \eqref{introIRsing}. In the
latter case one thus has to introduce an infrared cutoff in the transformation. After transforming the
Nelson operator one can, however, send the infrared cutoff to zero again and observe 
norm resolvent convergence of the transformed Nelson operators to a self-adjoint operator that we
denote by $\HV$. If \eqref{introIRsing} is satisfied, then we refer to $\HV$ as the 
{\em renormalized Nelson operator in the non-Fock representation}, because it is not unitarily
equivalent to $\NV$ anymore. This nomenclature is reminiscent of the fact that the above limiting
procedure also gives rise to representations of the canonical commutation relations inequivalent
to the Fock representation; see \cite{Arai2001,DerezinskiGerard2004} where related transformations 
are discussed. If the left hand side of \eqref{introIRsing} is finite, then the limiting Gross transformation
exists and intertwines $\NV$ and $\HV$.

As it is given by an explicit quadratic form comprising readily tractable terms, 
we found it convenient to work with $\HV$ as much as possible. 
Here we should mention that recent techniques employing a concept of {interior boundary conditions}
provide formulas also for $\dom(H)$ as well as for the action of $H$ on it, at least when $V$ is
relatively bounded with respect to the Laplace operator with relative bound $<1/2$;
see \cite{LampartSchmidt2019} for massive bosons and \cite{Schmidt2020} for massless ones.
These results can be related to the construction of singular perturbations of selfadjoint operators 
by (here two-fold) applications of Kre\u{\i}n's resolvent formula \cite{Posilicano2020}; 
besides explicit characterizations of $\dom(H)$ and
expressions for $H$, a formula for the difference of the resolvents of $H$ and the non-interacting
Hamiltonian can be found in \cite{Posilicano2020}.

The analysis of the operator $\HV$ used here is, however, physically relevant in its own right since, in special 
cases, $\HV$ describes the interaction via a Bose field of two quantum mechanical matter 
particles, one of them 
having an ``infinite mass" and pinned down at the origin. This situation has been investigated 
before in \cite{HHS2005} where a ground state of the renormalized Nelson operator in the non-Fock 
representation has been shown to exist under a weak coupling condition. We have been able to 
remove this restriction and prove the following two theorems. Before reading them the reader should
note that $\NV$ and $\HV$ have the same localization threshold $\Thh$ and the same 
spectrum, so that in particular $\EV=\inf\sigma(\HV)$. Furthermore, the reader should note
that elements in the range of $e^{-t\HV}$ with $t>0$ have continuous representatives, as we shall
show in Remark~\ref{remcontrange}.

\begin{thm}[Spatial Decay in the Non-Fock Case]\label{introthmdecnonFock}
The statements of Theorem~\ref{introthmexpdec} and Theorem~\ref{introthmsupexp} hold true
without further changes when $\HV$ is put in place of $\NV$.
\end{thm}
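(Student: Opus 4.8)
The plan is to transfer the already-established decay estimates for $\NV$ to $\HV$ by exploiting the two facts highlighted just before the statement: that $\NV$ and $\HV$ have the same localization threshold $\Thh$ and the same spectrum, and that elements in the range of $e^{-t\HV}$ with $t>0$ have continuous representatives (Rem.~\ref{remcontrange}). The proofs of Thm.~\ref{introthmexpdec} and Thm.~\ref{introthmsupexp} proceed in two stages: first, one derives $L^2$-localization estimates via the quadratic form methods of Prop.~\ref{propexploc}; second, one upgrades these to pointwise ($L^\infty$) estimates by inserting them into a Feynman-Kac formula together with the regularizing bound \eqref{intro3}. I would inspect which of these two inputs needs to be re-examined for $\HV$ and argue that neither requires genuinely new work.

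For the first stage, I would observe that the $L^2$-localization estimates in Prop.~\ref{propexploc} are stated (and proved) in terms of the quadratic form of the operator, the dispersion $\omega$, the operator $\Id\Gamma(\omega\wedge1)$, and the potential $V$ — quantities that are unchanged in passing to the non-Fock representation, or else one must check that the relevant IMS-type commutator estimates used in \cite{BFS1998b,Griesemer2004} carry through. Since $\HV$ is obtained as a norm-resolvent limit of Gross-transformed Nelson operators, and since it is given by an \emph{explicit quadratic form}, I expect the localization proof for $\HV$ to be, if anything, cleaner: the form-domain manipulations needed to run the IMS localization argument are directly available. The key point is that $\Thh$ is defined as a localization threshold for both operators and the $L^2$-estimates below $\lambda<\sigma\le\Thh$ depend only on the spectral gap $\sigma-\lambda$ together with the form estimates, all of which are insensitive to the Fock versus non-Fock distinction.

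For the second stage, the Feynman-Kac formula \eqref{FKintro} and the auxiliary bound \eqref{intro3} are stated for $\NV$, so I would invoke the analogous Feynman-Kac representation for $e^{-t\HV}$ — which arises from the same construction in \cite{MatteMoeller2017,HHS2005} with the interaction kernel replaced by its Gross-transformed (infrared-cutoff-limited) counterpart — and note that the process $(W^V_t(\V{x}))_{t\ge0}$ for $\HV$ still has finite moments of every order and that a bound of the form \eqref{intro3} holds with $L$-dependent constants. Concretely, the combination of the $L^2$-localization estimate with this Feynman-Kac bound upgrades the $L^2$-decay to the stated pointwise decay of the continuous representative exactly as in the proofs at the end of Subsect.~\ref{ssecexploc}, now using Rem.~\ref{remcontrange} to guarantee the existence of the continuous representative in the range of $1_{(-\infty,\lambda]}(\HV)$.

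The main obstacle I anticipate is not conceptual but bookkeeping: one must verify that the regularizing estimate \eqref{intro3}, whose proof in Prop.~\ref{propGSmu} presumably uses specific features of the $\LO(\sF)$-valued process attached to $\NV$, survives the Gross transformation with the same structure (Lipschitz constant $L$ entering through $e^{6tL^2}$, the $t^{-3/4}$ singularity, and the factor $e^{t\Id\Gamma(\omega\wedge1)/6}$). In the infrared singular case the transformed kernel is only defined after taking the infrared cutoff to zero, so one should either establish \eqref{intro3} directly for $\HV$ via the limiting kernel or obtain it as a limit of the cutoff versions, checking uniformity of the constants. Once this is in place, the passage from Thm.~\ref{introthmexpdec} and Thm.~\ref{introthmsupexp} to their $\HV$-analogues is word-for-word identical, which is why the statement claims the results hold "without further changes."
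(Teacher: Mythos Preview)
Your approach is correct and matches the paper's exactly. The bookkeeping obstacle you anticipate is not one: Prop.~\ref{propexploc} is actually stated with $\HV_{\Geb',\UV}$ as the \emph{primary} case (the $\NV$ version is the afterthought, derived via the Gross transformation), and Prop.~\ref{propGSmu} explicitly asserts in its final sentence that the same bounds hold for the process $\wt{W}_{\UV,t}(\V{x})$ and the semigroup $\wt{T}_{\Geb,\UV,t}$ attached to $\HV$ --- so the proof really is word-for-word identical with no additional work on \eqref{intro3} required.
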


\begin{proof}
This theorem is proved at the end of Subsection~\ref{ssecexploc}. (Given Proposition~\ref{propexploc} 
its proof is virtually identical to the ones of Theorem~\ref{introthmexpdec} and Theorem~\ref{introthmsupexp}.) 
\end{proof}

Notice that the next theorem does not require the infrared regularity condition imposed
in Theorem~\ref{introthmgsN}.

\begin{thm}[Existence of Ground States in the Non-Fock Case]\label{introthmgsnonFock}
Assume that the binding condition $\Thh>\EV$ is fulfilled. Then $\EV$ is a non-degenerate 
eigenvalue of $\HV$ and there exists a corresponding eigenvector whose unique continuous 
representative is strictly positive.
\end{thm}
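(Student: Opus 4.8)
The plan is to realize $\HV$ as a norm-resolvent limit of infrared regularized operators to which the existence theorem Thm.~\ref{thmGSN} already applies, and then to upgrade the resulting weak compactness of approximate ground states to norm compactness by combining the spatial localization estimates with a pull-through formula in the non-Fock picture. Concretely, for $\nu>0$ one replaces the form factor $\eta$ by $\eta_\nu:=\eta\,1_{\{|\V{k}|\ge\nu\}}$ and lets $\NV_\nu$ be the corresponding renormalized Nelson operator; it satisfies the infrared regularity condition trivially. Since $\eta_\nu$ vanishes near the origin the full Gross transformation $G_\nu$ exists, $\HV_\nu:=G_\nu\NV_\nu G_\nu^*$ is the associated non-Fock operator, and $\HV_\nu\to\HV$ in norm resolvent sense as $\nu\downarrow0$, which one verifies by the same form-convergence estimates that define $\HV$. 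Consequently $\EV_\nu:=\inf\sigma(\NV_\nu)=\inf\sigma(\HV_\nu)\to\EV$, and since $\NV_\nu$ and $\HV_\nu$ share a localization threshold $\Thh_\nu$ with $\liminf_{\nu\downarrow0}\Thh_\nu\ge\Thh$, the binding hypothesis $\Thh>\EV$ forces $\Thh_\nu>\EV_\nu$ for all small $\nu$. Thm.~\ref{thmGSN} then yields a normalized, strictly positive ground state $\Phi_\nu$ of $\NV_\nu$, and $\Psi_\nu:=G_\nu\Phi_\nu$ is a normalized ground state of $\HV_\nu$, i.e.\ $\HV_\nu\Psi_\nu=\EV_\nu\Psi_\nu$. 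Any non-zero $L^2(\RR^3,\sF)$-limit point $\Psi$ of $\{\Psi_\nu\}_{\nu>0}$ then satisfies $\HV\Psi=\EV\Psi$ by norm resolvent convergence together with $\EV_\nu\to\EV$, so $\EV$ is an eigenvalue of $\HV$; it therefore suffices to prove that $\{\Psi_\nu\}_{\nu>0}$ is relatively compact in $L^2(\RR^3,\sF)$.

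Tightness in the matter variable is immediate from Prop.~\ref{propexploc}: the exponential decay estimate \eqref{intro1}, with a fixed $\sigma\in(\EV,\Thh)$, holds for $\Psi_\nu$ uniformly in $\nu$ because $\Thh_\nu-\EV_\nu$ stays bounded below, and \eqref{intro3} then also yields a uniform bound on $\sup_{\V{x}}\|\Psi_\nu(\V{x})\|_{\sF}$ together with the continuous-representative statement. Tightness in the boson Fock space is the crucial point, and this is precisely where the Gross transformation is used: for $\Psi_\nu$ one derives a pull-through identity expressing the pointwise annihilation $a(\V{k})\Psi_\nu$ through $\Psi_\nu$ and shifted copies of it against a kernel that is genuinely square-integrable in $\V{k}$ with norm bounded uniformly in $\nu$, in contrast to the untransformed $a(\V{k})\Phi_\nu$, whose soft-boson part diverges as $\nu\downarrow0$. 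Feeding this identity, the uniform spatial decay, and the uniform bound on $\Id\Gamma(\omega\wedge1)$-expectations (which is already contained in \eqref{intro1}) into a standard Rellich-type compactness criterion on $L^2(\RR^3,\sF)$ gives the required relative compactness; since $\|\Psi_\nu\|=1$ and no mass escapes in the $\V{x}$-variable or in boson number, a norm-convergent subsequence has a non-zero limit $\Psi$.

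Finally, the non-degeneracy of $\EV$ and the strict positivity of the continuous representative of the eigenvector are not read off from the Gross transformation, which is not positivity preserving, but come directly from the Perron--Frobenius theorem Thm.~\ref{thmPerronFrobenius}: a Feynman--Kac representation for $e^{-t\HV}$ shows this semigroup to be positivity improving in the $\cQ$-space representation, so $e^{-t\EV}$ is a simple eigenvalue with a strictly positive eigenvector, and the continuity of the latter is the content of Rem.~\ref{remcontrange}. I expect the principal obstacle to be the uniform-in-$\nu$ $L^2(\RR^3\times\RR^3,\sF)$ bound on $a(\V{k})\Psi_\nu$: the Gross dressing must be arranged so that exactly the infrared-singular part of $a(\V{k})\Phi_\nu$ is absorbed --- the part responsible, via \eqref{introIRsing}, for the non-existence result Thm.~\ref{introthmabs} in the untransformed model --- while the remaining estimates, as well as the localization constants and the threshold lower bound, do not deteriorate in the limit $\nu\downarrow0$.
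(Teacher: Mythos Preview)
Your proposal is correct in outline but takes a genuinely different route from the paper. The paper's proof is a one-line reduction to Thm.~\ref{thmGSH}, which establishes existence of ground states of $\HV_{\Geb,\UV}$ for \emph{all} $\UV\in[0,\infty]$ and \emph{all} $\eta$ (IR-regular or not), via an \emph{ultraviolet} approximation $\UV_n\uparrow\infty$: at finite $\UV_n$ the pull-through formula Prop.~\ref{prop-IR} is directly available, and Cor.~\ref{corIRubd1} supplies compactness of the approximate ground states. The result for $\NV$ (Thm.~\ref{thmGSN}) is then a corollary under the IR condition, via the Gross transformation. You invert this order: you take Thm.~\ref{thmGSN} for the IR-regularized $\NV_\nu$ as input and pass to the \emph{infrared} limit $\nu\downarrow0$ in the Gross-transformed picture.

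Two remarks on what this buys and costs. First, since Thm.~\ref{thmGSN} is itself deduced from Thm.~\ref{thmGSH}, invoking it already presupposes a result that contains the conclusion; the passage through $\NV_\nu$ and $G_\nu$ is a detour --- you could quote existence for $\HV_\nu$ directly. Second, your compactness step requires the pull-through identity for ground states of $\HV_\nu$ at $\UV=\infty$, which is not Prop.~\ref{prop-IR} (stated only for finite $\UV$) but its extension Prop.~\ref{propIRinfty}, proved later in the paper for the continuity analysis. With that extension your argument is precisely the $\UV_\iota\equiv\infty$ variant of Cor.~\ref{corIRubd1} invoked in Thm.~\ref{thmL2cont}. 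A minor point: the paper's compactness criterion (Prop.~\ref{propcpt}) is of Kolmogorov--Riesz type in the $\V{k}$-variable, not Rellich--Kondrachov; the latter would require photon-derivative bounds that the paper deliberately avoids. Your treatment of uniqueness and strict positivity via Thm.~\ref{thmPerronFrobenius}, and continuity via Rem.~\ref{remcontrange}, coincides with the paper's.
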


\begin{proof}
The assertion follows from Theorem~\ref{thmPerronFrobenius} and Theorem~\ref{thmGSH}.
\end{proof}

The existence of ground states for ultraviolet regularized Nelson operators in non-Fock 
representations was proven for confining potentials in \cite{Arai2001} and 
under binding conditions, employing exponential $L^2$-localization estimates in
\cite{Panati2009,Sasaki2005}. There is an infrared problem arising in the proof of
Theorem~\ref{introthmgsnonFock} when we look for a suitable 
representation of the expression $(a\Phi_\iota)(\V{k})$ already discussed earlier.
In \cite{BFS1999} the analogous problem is solved for the Pauli-Fierz model by introducing
counter terms related to the Pauli-Fierz transformation in the computations. We shall employ
similar counter terms in the infrared region, exploiting the appearance of minimally coupled field 
operators in ultraviolet regularized versions of $\HV$ after the Gross transformation.

In fact, the construction of $\HV$ via quadratic forms in \cite{HHS2005} already
required a weak coupling condition which was traded for Nelson's assumption 
\cite{Nelson1964} that the infrared cutoff in the Gross transformation be sufficiently large.
The first non-perturbative construction of $\HV$ was achieved in \cite{MatteMoeller2017}.
Since $\HV$ was defined as the generator of a Feynman-Kac semigroup in
\cite{MatteMoeller2017}, we still have to verify that its quadratic form is given by the usual
formulas before we prove Theorem~\ref{introthmdecnonFock} and Theorem~\ref{introthmgsnonFock}. To 
avoid technical explanations at this point we allow ourselves to state the corresponding
theorem in a somewhat vague wording:

\begin{thm}[On Nelson's Renormalization Procedure]\label{introthmrenproc}
With only minor modifications, Nelson's operator theoretic renormalization procedure and its later
improvements for massless bosons \cite{GriesemerWuensch2017,HHS2005} can be carried 
through for every arbitrarily small infrared cutoff in the Gross transformation without any smallness 
assumptions on the matter-radiation coupling. This permits to verify that the quadratic 
form of the renormalized Nelson operator in the non-Fock representation $\HV$ is still given by 
Nelson's formulas.
\end{thm}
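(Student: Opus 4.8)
The plan is to revisit Nelson's renormalization argument \cite{Nelson1964} in the form used for massless bosons in \cite{GriesemerWuensch2017,HHS2005}, but to keep careful track of which steps actually require the infrared cutoff in the Gross transformation to be large. Fix an ultraviolet cutoff $\UV<\infty$ and an infrared cutoff $\sigma>0$, and let $U_{\sigma,\UV}$ denote the Gross transformation with interaction kernel restricted to the annulus $\{\sigma\le|\V{k}|\le\UV\}$. Conjugating the ultraviolet-cutoff Nelson operator $\NV_\UV$ by $U_{\sigma,\UV}$ produces an operator whose interaction, after cancellation of the most singular pieces, consists of a \emph{minimally coupled} field operator (a linear expression in creation and annihilation operators with an $L^2$ kernel that stays bounded uniformly in $\UV$), plus a quadratic field term whose kernel is again $\UV$-uniformly controlled, plus a diverging but purely numerical energy counterterm $E_\UV$. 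First I would isolate the analytic estimates underlying \cite{Nelson1964,GriesemerWuensch2017,HHS2005} — relative form boundedness of the transformed interaction with respect to the free operator $\Id\Gamma(\omega)+\,$(kinetic term), with relative bound strictly below $1$ — and check that in all of them the smallness of the relative bound is obtained by making $\sigma$ large, whereas for \emph{our} purposes we only need the relative bound to be finite, so that the transformed operator $\wt{\NV}_{\sigma,\UV}:=U_{\sigma,\UV}\NV_\UV U_{\sigma,\UV}^*-E_\UV$ is a well-defined self-adjoint operator for every $\sigma>0$ via the KLMN theorem. The point is that self-adjointness and the explicit quadratic-form description survive without smallness; only the norm-resolvent \emph{convergence} as $\UV\to\infty$ needs a quantitative argument.

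Next I would establish that convergence. The estimates just recalled give, for fixed $\sigma$, a bound on $\|(\wt{\NV}_{\sigma,\UV}-\wt{\NV}_{\sigma,\UV'})(\wt{\NV}_{\sigma,\UV}+i)^{-1}\|$ that tends to $0$ as $\UV,\UV'\to\infty$, because the kernels of the minimally coupled field term and of the quadratic term form Cauchy sequences in the relevant $L^2$-type norms (their tails beyond $\UV$ are square-integrable in $\V{k}$), and the counterterm $E_\UV$ has been subtracted. This yields a self-adjoint limit $\wt{\NV}_\sigma:=\lim_{\UV\to\infty}\wt{\NV}_{\sigma,\UV}$ whose quadratic form is the expected one (free form plus minimally coupled plus quadratic field term), because quadratic forms converge along with the resolvents when the form domains are stable. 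On the other hand, $\NV=\lim_{\UV\to\infty}(\NV_\UV-E_\UV)$ by the construction of $\NV$, and $U_{\sigma,\UV}\to U_{\sigma,\infty}$ strongly (the kernel of the Gross transformation on $\{\sigma\le|\V{k}|\}$ is square-integrable), with $U_{\sigma,\infty}$ unitary precisely when \eqref{introIRsing} fails and only isometric-up-to-the-soft-modes otherwise. Passing to the limit in $U_{\sigma,\UV}\NV_\UV U_{\sigma,\UV}^*-E_\UV=\wt{\NV}_{\sigma,\UV}$ then identifies $\wt{\NV}_\sigma$ with the conjugate of $\NV$ by the limiting Gross transformation, respectively — after a further limit $\sigma\downarrow0$, whose existence and norm-resolvent convergence were already established in \cite{MatteMoeller2017} and define $\HV$ — with the operator $\HV$. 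Combining the two descriptions of $\wt{\NV}_\sigma$ gives the asserted formula for the quadratic form of $\HV$.

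Finally I would reconcile this with the Feynman--Kac definition of $\HV$ from \cite{MatteMoeller2017}: since both the form-limit operator constructed above and the semigroup generator of \eqref{FKintro} (its $\sigma\downarrow0$ analogue) arise as norm-resolvent limits of the \emph{same} approximating operators $\wt{\NV}_{\sigma,\UV}$, they coincide, which is exactly the compatibility claimed in the theorem. The main obstacle is the bookkeeping in the first step: one must verify that every place where \cite{Nelson1964,GriesemerWuensch2017,HHS2005} invoke a large-$\sigma$ condition is used \emph{only} to push a relative form bound below $1$, and not, say, to control a commutator or a domain issue that genuinely degenerates for small $\sigma$; in the massless case one has to be particularly careful with the infrared behaviour of the quadratic term's kernel near $|\V{k}|=\sigma$, ensuring its $L^2(\RR^3\times\RR^3)$-norm is finite for each fixed $\sigma>0$ even though it blows up as $\sigma\downarrow0$. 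Once that audit is done, the remaining limits are routine applications of standard norm-resolvent convergence criteria.
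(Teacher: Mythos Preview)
Your proposal has a genuine gap at its central step. You correctly observe that in \cite{Nelson1964,GriesemerWuensch2017,HHS2005} the largeness of the infrared cutoff $\sigma$ (your notation; $K$ in the paper) is used solely to push the relative form bound of the $\UV$-dependent interaction below $1$. But your proposed remedy---``we only need the relative bound to be finite''---does not do the job. For each finite $\UV$ the Gross-transformed operator is self-adjoint regardless (it is unitarily equivalent to the cutoff Nelson operator), so nothing is at stake there; the issue is the $\UV\to\infty$ limit. Even if you establish norm resolvent convergence by your Cauchy argument, the theorem asks you to identify the \emph{quadratic form} of the limit, and for that you need the pointwise limit $\lim_{\UV\to\infty}\hv_{\Geb,K,\UV}[\Psi]$ on $\sQ_\Geb$ to define a \emph{closed} form. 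Your claim that ``quadratic forms converge along with the resolvents when the form domains are stable'' begs the question: stability of the form domain is exactly what a $\UV$-uniform relative bound $<1$ would deliver via KLMN, and without it the identification can fail. The paper's Rem.~\ref{remdefnonFockNelson} makes precisely this point for the untransformed Nelson forms, where $\fdom(\NV_{\RR^3,\infty})\cap\sQ_{\RR^3}=\{0\}$ despite norm resolvent convergence.

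The paper's actual ``minor modification'' is concrete and different from an audit: introduce an intermediate scale $L$ with $K\le L\le\UV$, split $\beta_{K,\UV}=\beta_{K,L}+\beta_{L,\UV}$, and decompose $\hv_{\Geb,K,\UV}=\hv_{\Geb,K,L}+\mathfrak{v}_{\Geb,K,L,\UV}$ (Lem.~\ref{lemLsplit}). The new comparison operator $\hv_{\Geb,K,L}$ is a Pauli--Fierz type Hamiltonian with \emph{fixed} ultraviolet cutoff $L$; it is controlled non-perturbatively for arbitrary coupling and arbitrary $K\ge0$ by the methods of \cite{HaslerHerbst2008,Hiroshima2000esa,Hiroshima2002,Matte2017} (see Rem.~\ref{remcompare}). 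Nelson's estimates (Prop.~\ref{proprbUV1}) then show that choosing $L$ large enough---depending only on $\ee$, not on $K$---makes the $\UV$-dependent remainder $\mathfrak{v}_{\Geb,K,L,\UV}$ relatively form-bounded by $\hv_{\Geb,K,L}$ with bound $\tfrac12$, uniformly in $\UV\in[L,\infty]$. This immediately yields closedness of the limiting form and the bounds \eqref{pernille0}. In short, the smallness of the relative bound is restored not by restricting $K$ or $\ee$, but by absorbing the low- and intermediate-momentum part of the interaction into the comparison operator.
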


\begin{proof}
The statement is formulated precisely and proved in Theorem~\ref{thmrbUV}.
\end{proof}

The ``minor modifications" alluded to in the previous theorem merely consist in putting a 
sufficiently large part of the interaction terms into a comparison operator that plays the role of the free 
Hamiltonian in Nelson's estimates and that can be dealt with by other, non-perturbative means 
\cite{HaslerHerbst2008,Hiroshima2000esa,Hiroshima2002,Matte2017}.

As can be seen from the estimations in Appendix~\ref{apprbUV}, an analogue of 
Theorem~\ref{introthmrenproc} for fiber Hamiltonians holds in the translation invariant 
Nelson model as well; see \cite{Cannon1971,DamHinrichs2021,Lampart2021,MatteMoeller2017} for 
various ways to construct renormalized fiber Hamiltonians in Nelson's model.

We mentioned above that the existence of ground states will be proven for massive bosons first. In 
fact, we do this under the further simplifying assumption that the matter particle be confined to a
bounded open subset $\Geb\subset\RR^3$. The reason for this is that, for bounded $\Geb$ and
massive bosons, the existence of ground states follows immediately from an abstract
sufficient condition due to Gross \cite{Gross1972} and the hypercontractivity estimate in the following 
theorem. Gross applied his criterion to ultraviolet regularized fiber Hamiltonians at zero total
momentum. We felt that it might be worthwhile to demonstrate the usefulness of
his criterion in our situation. In fact, its application is much easier than the discretization 
arguments \cite{Froehlich1974} or the related Fock space localization techniques 
\cite{DerezinskiGerard1999} used earlier. In our setting, Gross' 
criterion works, however, only for one matter particle (no Pauli principle) as it deals with positivity 
preserving semigroups.
  
In the statement of the next theorem, $\NV_{\Geb}$ and $\HV_{\Geb}$ are Dirichlet realizations of
the renormalized Nelson operator and its non-Fock version, respectively, for a matter particle confined 
to $\Geb$. Furthermore, 
$$
\cU_{\Geb}:L^2(\Geb,\sF)\longrightarrow L^2(\Geb,L^2(\cQ,\nu))
=L^2(\Geb\times\cQ,\Id\V{x}\otimes\nu)
$$ 
is the direct $\Id\V{x}$-integral of a unitary transformation onto a $\cQ$-space representation 
$L^2(\cQ,\nu)$ of the Fock space $\sF$. 

\begin{thm}[Hypercontractivity]\label{introthmhypcontr}
Let $\Geb\subset\RR^3$ be an arbitrary open subset and suppose that the boson mass $\mu$ is 
strictly positive. Let $t>0$ and put $p_{\mu,t}:=e^{t\mu/3}+1$. Then there exists
$c>0$, depending only on ${\ee,\mu,t,V,\Geb}$, such that 
\begin{align*}
\|\cU_{\Geb}e^{-t\NV_{\Geb}}\cU_{\Geb}^*\Psi
\|_{L^{p_{\mu,t}}(\Geb\times\cQ,\Id\V{x}\otimes\nu)}
&\le c\|\Psi\|_{L^2(\Geb\times\cQ,\Id\V{x}\otimes\nu)},
\end{align*}
for all $\Psi\in L^2(\Geb\times\cQ,\Id\V{x}\otimes\nu)$.
The same holds with $\HV_{\Geb}$ put in place of $\NV_{\Geb}$.
\end{thm}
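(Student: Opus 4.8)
The strategy is to reduce the claim to the classical Gross/Nelson hypercontractivity bound for the second quantization of the one-boson mass operator $\omega\ge\mu$ on the $\cQ$-space, combined with the Feynman--Kac formula \eqref{FKintro} and the uniform operator-norm moment bounds on the process $(W^V_t(\V{x}))_{t\ge0}$. First I would recall that, in the $\cQ$-space picture, $\cU_\Geb\Id\Gamma(\omega)\cU_\Geb^*$ generates a positivity preserving, hypercontractive semigroup: since $\omega\ge\mu>0$, Nelson's theorem gives that $e^{-t\Id\Gamma(\omega)}$ maps $L^2(\cQ,\nu)$ into $L^{q(t)}(\cQ,\nu)$ contractively with $q(t)=1+e^{2t\mu}$ (more precisely, one gets the optimal exponent from $e^{-t\omega}\le e^{-t\mu}$ on the one-particle space). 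The role of the factor $1/3$ and of $p_{\mu,t}=e^{t\mu/3}+1$ is that only a fraction of the ``free'' part $\Id\Gamma(\omega)$ is available after the interaction has been absorbed: in the Feynman--Kac representation one can split off a term like $e^{-(t/3)\Id\Gamma(\omega)}$ (or $e^{-(t/6)\Id\Gamma(\omega\wedge1)}$ as in \eqref{intro3}) while still controlling the remainder in operator norm uniformly in $\V{x}$.

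The key steps, in order, are as follows. (1) Start from \eqref{FKintro} applied with the Dirichlet realization $\NV_\Geb$, so that $(e^{-t\NV_\Geb}\Psi)(\V{x})=\EE[\id_{\{\tau_\Geb>t\}}W^V_t(\V{x})^*\Psi(\V{B}^{\V{x}}_t)]$, where $\tau_\Geb$ is the exit time from $\Geb$; the indicator only helps. (2) Insert a factorization $W^V_t(\V{x})^*=e^{-(t/3)\Id\Gamma(\omega)}R_t(\V{x})$ where $R_t(\V{x})$ has finite operator-norm moments of every order, uniformly for $t$ in bounded intervals and for $\V{x}\in\RR^3$; this is exactly the kind of bound recorded in \eqref{intro3} (with $\omega\wedge1$ there; since $\mu>0$ one has $\omega\wedge1\ge(1\wedge\mu^{-1})^{-1}\cdot$ something, so one can trade $\omega\wedge1$ for $\omega$ up to a $\mu$-dependent constant and a rescaling of $t$, which is where the clean exponent $e^{t\mu/3}$ comes from). (3) Transform to $\cQ$-space, so that $e^{-(t/3)\Id\Gamma(\omega)}$ becomes a hypercontractive operator from $L^2(\cQ,\nu)$ to $L^{p_{\mu,t}}(\cQ,\nu)$ with $p_{\mu,t}=e^{2(t/3)(\mu/2)}+1=e^{t\mu/3}+1$. (4) Combine: pointwise in $\V{x}$, $\|(\cU_\Geb e^{-t\NV_\Geb}\cU_\Geb^*\Psi)(\V{x})\|_{L^{p_{\mu,t}}(\cQ,\nu)}$ is bounded, via Jensen/Minkowski for the expectation $\EE[\cdot]$ and the operator-norm moment bound on $R_t$, by $C_{t,\mu,\ee,V}\,\EE[\|\Psi(\V{B}^{\V{x}}_t)\|_{\LO}]$-type quantities, hence by a constant times a Gaussian-convolution of $\|\Psi(\cdot)\|_{L^2(\cQ,\nu)}$. (5) Integrate the $p_{\mu,t}$-th power over $\V{x}\in\Geb$ and use that Gaussian convolution (i.e.\ the free heat semigroup on $\RR^3$) is $L^2(\Geb)\to L^{p_{\mu,t}}(\Geb)$ bounded when $\Geb$ has finite measure — but since $\Geb$ is arbitrary open here, I would instead keep the $L^2$-to-$L^2$ bound in the $\V{x}$-variable and the $L^2$-to-$L^{p_{\mu,t}}$ gain purely in the $\cQ$-variable, using that $p_{\mu,t}>2$ only in the field direction; one writes $L^{p}(\Geb\times\cQ)$ with $p=p_{\mu,t}$ and applies Minkowski's integral inequality to pull the $\V{x}$-integral inside, reducing to $\big\|\,\|(\cdots)(\V{x})\|_{L^p(\cQ)}\,\big\|_{L^p(\Geb,\Id\V{x})}$, then dominates $L^p(\Geb,\Id\V{x})$-in-$\V{x}$ by $L^2(\Geb,\Id\V{x})$-in-$\V{x}$ using the contractivity of the $\V{x}$-heat flow on $L^2$ together with the $\V{x}$-uniform pointwise bound from step (4), at the cost of a constant depending on $\Geb$ (finiteness of $|\Geb|$ is \emph{not} needed if one is slightly more careful and uses that the heat kernel at time $t$ is bounded, giving an $L^1\to L^\infty$ bound that interpolates). (6) Repeat verbatim with $\HV_\Geb$, which by Theorem~\ref{introthmrenproc}/\cite{MatteMoeller2017} also admits the same Feynman--Kac representation with an $\LO(\sF)$-valued process having finite moments of all orders.

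The main obstacle is step (2)/(4): one must produce the factorization $W^V_t(\V{x})^*=e^{-(t/3)\Id\Gamma(\omega)}R_t(\V{x})$ with genuinely \emph{operator-norm} control on $R_t(\V{x})$ that is uniform in $\V{x}\in\RR^3$ and locally uniform in $t$, not merely an $L^2$-type bound. This is precisely the content of the estimate \eqref{intro3} (proved in Prop.~\ref{propGSmu}) specialized to $F\equiv0$, so the work is to invoke that proposition with the right constants and to check that, when $\mu>0$, the cutoff $\omega\wedge1$ appearing there can be replaced by $c_\mu\,\omega$ for $c_\mu=\min\{1,\mu^{-1}\}^{-1}$, absorbing the loss into a redefinition of the available time $t\mapsto c_\mu t/2$ inside the second-quantized exponential, which then yields the stated exponent $p_{\mu,t}=e^{t\mu/3}+1$ after optimizing the split between the hypercontractive factor and the remainder. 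Everything else — Nelson's hypercontractivity on $\cQ$-space, Minkowski's inequality, the heat-flow bound in the matter variable, and the passage to $\HV_\Geb$ — is standard once this factorization is in hand.
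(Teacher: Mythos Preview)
Your proposal is correct and follows essentially the same route as the paper: Feynman--Kac formula, the moment bound of Prop.~\ref{propGSmu} to extract a second-quantized exponential from $W^V_t(\V{x})^*$, Nelson's hypercontractivity in the $\cQ$-variable, and heat-kernel smoothing in the $\V{x}$-variable. Two simplifications compared to your write-up: (i) the paper applies Prop.~\ref{propGSmu} directly with the constant function $\vk=\mu\le\omega$, so that Nelson's bound for $e^{-(t/6)\Id\Gamma(\mu)}$ immediately yields the exponent $p_{\mu,t}=1+e^{t\mu/3}$ without any detour through $\omega\wedge1$ and $\mu$-dependent rescalings; (ii) your step~(5) is overcomplicated --- Prop.~\ref{propGSmu}(2) already contains the $L^2(\Geb,\sF)\to L^p(\Geb,\sF)$ bound, obtained by applying $e^{t\Delta/2}:L^1(\RR^3)\to L^{p/2}(\RR^3)$ to $\|\Psi(\cdot)\|_\sF^2$, so no boundedness of $\Geb$, no Minkowski gymnastics, and no $\Geb$-dependent constant are needed at that stage.
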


\begin{proof}
This theorem is proved together with Theorem~\ref{thmGSmass} at the end of Subsection~\ref{ssecGSmass}.
\end{proof}

Besides Nelson's hypercontractivity bound for differential second quantizations of strictly positive
operators, the proof of the previous theorem relies on the estimates on 
Feynman-Kac integrands in \cite{MatteMoeller2017}.

Before we discuss results on path measures associated with ground states, we remark that the 
compactness arguments employed to remove simplifying assumptions in the construction of ground 
states can also be used to study the $L^2$-continuity of ground states with respect to parameters. 
Exploiting properties of Feynman-Kac semigroups, one can further pass from $L^2$-continuity
in parameters to joint continuity in the spatial variable $\V{x}$ and the parameters. This has already 
been observed in \cite{Matte2016} where ultraviolet regularized models (comprising linearly 
and minimally coupled fields) have been treated. For simplicity, we only choose the coupling constant 
as variable external parameter; see \cite{Matte2016} on how to include variations of the potential $V$.
The following two theorems are non-trivial since, for massless bosons, our models are both infrared 
and ultraviolet singular and ground state eigenvalues are imbedded in the continuous spectrum so
that well-known perturbation theoretic arguments do not apply.

\begin{thm}[Continuity]\label{introthmcont}
Let $I\subset\RR$ be some open interval.
For every $\ee\in I$, let $\NV_{\ee}$ denote
the renormalized Nelson operator with coupling constant $\ee$ and
let $\Thh_\ee$ and $\EV_{\ee}$ stand for its localization threshold and ground state
energy, respectively. Assume the binding condition $\Thh_\ee>\EV_\ee$ holds,
for all $\ee\in I$. Finally, assume the infrared condition
$\omega^{-3}\eta^2\in L^1_\loc(\RR^3)$ is fullfilled.
Let $\Phi_{\ee}$ denote the positive, normalized ground state eigenvector of $\NV_{\ee}$.
Then the following holds:
\begin{enumerate}
\item[{\rm(i)}]\label{conti}
The map $I\ni\ee\mapsto\Phi_{\ee}\in L^2(\RR^3,\sF)$ is continuous. 
\item[{\rm(ii)}]
If $\Phi_{\ee}(\cdot)$ denotes the unique continuous representative of $\Phi_{\ee}$, then the map
$\RR^3\times I\ni(\V{x},\ee)\mapsto\Phi_{\ee}(\V{x})\in\sF$ is continuous.
\end{enumerate}
\end{thm}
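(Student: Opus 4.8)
The plan is to prove both continuity statements by exploiting compactness of Feynman--Kac semigroups together with the uniqueness and non-degeneracy already established in Theorem~\ref{introthmgsN}. First I would fix $\ee_0\in I$ and a sequence $\ee_n\to\ee_0$ in $I$, and show that the family $\{\Phi_{\ee_n}\}_n$ is relatively compact in $L^2(\RR^3,\sF)$. The two ingredients here are: (a) uniform spatial exponential decay, provided by Theorem~\ref{introthmexpdec} (or Proposition~\ref{propexploc}) applied with a common $\lambda$ and $\sigma$ that can be chosen locally uniformly in $\ee$ because, by upper semicontinuity of $\ee\mapsto\EV_\ee$ and the assumed strict inequality $\Thh_\ee>\EV_\ee$, one can trap $\EV_{\ee_n}<\lambda<\sigma<\Thh_{\ee_n}$ for all large $n$; and (b) uniform boson-number decay, again from Theorem~\ref{introthmexpdec}, which gives a uniform bound on $\|e^{r\Id\Gamma(\omega\wedge1)}\Phi_{\ee_n}(\V{x})\|$. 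Feeding (a)+(b) into the Feynman--Kac bound \eqref{intro3} yields equicontinuity of the maps $\V{x}\mapsto\Phi_{\ee_n}(\V{x})$ on compacts, while (b) supplies tightness in the Fock variable; the spatial decay (a) then controls the tails at $|\V{x}|\to\infty$. Hence $\{\Phi_{\ee_n}\}_n$ is relatively compact in $L^2$, and any $L^2$-limit point $\Phi_*$ is a normalized eigenvector of $\NV_{\ee_0}$ at energy $\EV_{\ee_0}$ (using norm resolvent continuity of $\ee\mapsto\NV_\ee$, which in turn follows from the Feynman--Kac representation and continuity of the integrands in $\ee$). Strict positivity of the continuous representative is preserved in the limit, so by the non-degeneracy and Perron--Frobenius property in Theorem~\ref{thmPerronFrobenius} one must have $\Phi_*=\Phi_{\ee_0}$. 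Since every subsequence has a further subsequence converging to the same limit $\Phi_{\ee_0}$, the full sequence converges, proving (i).

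For part (ii), the idea is to bootstrap the $L^2$-continuity of (i) to joint continuity in $(\V{x},\ee)$ using the regularizing property of the semigroup. Writing $\Phi_{\ee}=e^{-t(\NV_\ee-\EV_\ee)}\Phi_\ee$ for any $t>0$ and using the Feynman--Kac formula \eqref{FKintro}, one expresses
\begin{align*}
\Phi_{\ee}(\V{x})&=e^{t\EV_\ee}\,\EE\big[W^{V}_{t,\ee}(\V{x})^*\,\Phi_\ee(\V{B}_t^{\V{x}})\big],
\end{align*}
where $W^{V}_{t,\ee}$ is the $\LO(\sF)$-valued integrand corresponding to coupling constant $\ee$. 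Continuity of the right-hand side in $(\V{x},\ee)$ then follows from: $L^2$-continuity of $\ee\mapsto\Phi_\ee$ from part (i); continuity of $(\V{x},\ee)\mapsto W^{V}_{t,\ee}(\V{x})$ as an $L^q$-process (with good moment bounds) together with continuity of $\V{x}\mapsto\EE[\,\cdot\,]$ coming from the Brownian shift; and a dominated-convergence / Hölder argument using the uniform moment estimates on $W^{V}_{t,\ee}$ from \cite{MatteMoeller2017} and the uniform boson-number decay from (a)+(b) to justify passing limits inside the expectation. The continuity of $\ee\mapsto\EV_\ee$ is a byproduct of part (i) (the eigenvalue equation passes to the limit), so the scalar prefactor is harmless. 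This is precisely the mechanism already used in \cite{Matte2016} for the ultraviolet regularized models, adapted here to the renormalized integrands.

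The main obstacle is the compactness step, specifically obtaining the required localization estimates \emph{uniformly} in $\ee$ near $\ee_0$. One has to check that the constants $c$ in \eqref{intro1} (equivalently in Proposition~\ref{propexploc}) and in the Feynman--Kac bound \eqref{intro3} depend on $\ee$ only through explicit, locally bounded quantities, and that the thresholds $\lambda,\sigma$ can be chosen $\ee$-uniformly on compact subintervals; this in turn rests on the lower semicontinuity of $\ee\mapsto\Thh_\ee$ and the upper semicontinuity of $\ee\mapsto\EV_\ee$, both of which should follow from the norm resolvent continuity of $\ee\mapsto\NV_\ee$. A secondary technical point is that one must verify that the infrared regularity condition $\omega^{-3}\eta^2\in L^1_{\loc}$ does not interact badly with the $\ee$-dependence — but since $\eta$ is fixed and only $\ee$ varies, this is purely a matter of tracking $\ee$ through the Gross-transformed representation of $(a\Phi_\ee)(\V{k})$ used to prove Theorem~\ref{introthmgsN}. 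Once the uniform estimates are in hand, the rest is a standard subsequence-extraction argument combined with the already-proven uniqueness.
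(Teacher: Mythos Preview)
Your plan for part~(ii) is essentially the paper's own argument (Thm.~\ref{thmcont}/\ref{thmcontN}): bootstrap $L^2$-continuity via the Feynman--Kac formula, approximating first in $V$ and then in $\Psi$, and finally feeding in the ground states. That part is fine.

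The gap is in part~(i), in your compactness step. Spatial exponential decay together with a uniform bound on $\|e^{r\Id\Gamma(\omega\wedge1)}\Phi_{\ee_n}(\V{x})\|$ does \emph{not} give relative compactness of $\{\Phi_{\ee_n}\}$ in $L^2(\RR^3,\sF)$. Equicontinuity in $\V{x}$ plus spatial tails handles the $\V{x}$-direction, but the Fock fibre $\sF$ is infinite-dimensional: a uniform bound on $e^{r\Id\Gamma(\omega\wedge1)}$ gives you nothing compact inside a fixed $n$-particle sector $L^2_{\mathrm{sym}}(\RR^{3n})$, and in the massless case it does not even control the particle number (since $\Id\Gamma(\omega\wedge1)$ can be arbitrarily small on vectors with many soft bosons). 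So ``tightness in the Fock variable'' as you describe it is not a compactness statement.

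What the paper actually does for~(i) (Thm.~\ref{thmL2cont}, carried to $\NV$ via the Gross transformation in Thm.~\ref{thmcontN}) is to use the pull-through formula for $(a\Phi_{\ee})(\V{k})$ at $\UV=\infty$ (Prop.~\ref{propIRinfty}) to verify the Kolmogorov-type conditions \eqref{cpt1a}--\eqref{cpt1b} of the compactness criterion Prop.~\ref{propcpt}, uniformly in $\ee$ near $\ee_0$. This is exactly the ingredient you relegated to a ``secondary technical point'': it is in fact the heart of the compactness argument, providing the boson-momentum localization and equicontinuity that your scheme lacks. The uniform exponential localization you cite (via Lem.~\ref{lemeva} with $\ee$-locally controlled constants) handles \eqref{cpt2a}--\eqref{cpt2b} and bounds the vectors $\V{\Upsilon}_j$ in \eqref{defUpsilon1}--\eqref{defUpsilon3} uniformly, but you still need the explicit representation \eqref{aki1} (extended to $\UV=\infty$) to close the argument. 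Once compactness is in hand, your subsequence-plus-uniqueness conclusion is exactly what the paper does.
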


\begin{proof}
All assertions follow from Theorem~\ref{thmcontN}.
\end{proof}

\begin{thm}[Continuity, Non-Fock Case]\label{introthmcontnonFock}
For every $\ee\in I$ in the open interval $I\subset\RR$, 
let $\HV_{\ee}$ denote the operator $\HV$ for the special choice $\ee$ of the 
coupling constant. Assume the binding condition $\Thh_\ee>\EV_\ee$ holds, for all $\ee\in I$.
Let $\Phi_{\ee}$ denote the positive, normalized ground state eigenvector of 
$\HV_{\ee}$. Then Statements {\rm(i)} and {\rm(ii)} above hold true.
\end{thm}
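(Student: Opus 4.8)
The plan is to mirror the proof of the Fock-space statement (Thm.~\ref{introthmcont}, i.e.\ Thm.~\ref{thmcontN}), replacing every Fock-space ingredient by the non-Fock counterpart established earlier in the article: the non-perturbative Feynman-Kac construction of $\HV_\ee$ together with the moment and continuity bounds for its $\LO(\sF)$-valued integrands $W^V_{t,\ee}(\V x)$ from \cite{MatteMoeller2017}, the Perron-Frobenius uniqueness and strict positivity (Thm.~\ref{thmPerronFrobenius}), the spatial and boson-number localization estimates (Thm.~\ref{introthmdecnonFock}, Prop.~\ref{propexploc}), and the compactness scheme behind the existence proof (Thm.~\ref{thmGSH}). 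The point that makes the adaptation work is that all of these depend on the coupling constant $\ee$ only through quantities that are bounded locally uniformly in $\ee$. As a preliminary I would therefore record that $\ee\mapsto\EV_\ee$ is continuous (from the strong continuity in $\ee$ of the Feynman-Kac semigroups $e^{-t\HV_\ee}$ and the ergodicity of the semigroup) while $\ee\mapsto\Thh_\ee$ is lower semicontinuous; since $\Thh_\ee>\EV_\ee$ throughout $I$, this yields $\inf_{\ee\in J}(\Thh_\ee-\EV_\ee)>0$ for every compact $J\subset I$. Hence the exponential spatial decay of Thm.~\ref{introthmdecnonFock}, the boson-number decay, and, via the pointwise bound of the type \eqref{intro3}, the bound $\sup_{\ee\in J}\sup_{\V x\in\RR^3}\|\Phi_\ee(\V x)\|_\sF<\infty$ all hold with constants, and an exponential rate, uniform for $\ee\in J$.

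\emph{Part (i).} Fix $\ee_0\in I$ and a sequence $\ee_n\to\ee_0$ inside a compact $J\subset I$. By the preliminary step the normalized ground state eigenvectors $\Phi_{\ee_n}$ are uniformly exponentially localized in $\V x$ and uniformly localized in the boson number, so the compactness argument used to construct ground states in Thm.~\ref{thmGSH} --- in particular the representation of $(a\Phi_{\ee_n})(\V k)$ through the infrared-cutoff Gross transformation and the counter-term device of \cite{BFS1999} --- extracts a subsequence $\Phi_{\ee_{n_j}}\to\Psi$ in $L^2(\RR^3,\sF)$ with $\|\Psi\|=1$. Passing to the limit in $\HV_{\ee_{n_j}}\Phi_{\ee_{n_j}}=\EV_{\ee_{n_j}}\Phi_{\ee_{n_j}}$, using the continuity in $\ee$ of the semigroups and $\EV_{\ee_n}\to\EV_{\ee_0}$, gives $\HV_{\ee_0}\Psi=\EV_{\ee_0}\Psi$; positivity of each $\Phi_{\ee_n}$ in the $\cQ$-space sense passes to $\Psi$, so $\Psi>0$ by Thm.~\ref{thmPerronFrobenius}, and uniqueness forces $\Psi=\Phi_{\ee_0}$. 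Since the limit is independent of the chosen subsequence, $\Phi_{\ee_n}\to\Phi_{\ee_0}$ in $L^2$, which is (i).

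\emph{Part (ii).} For every $t>0$ the Feynman-Kac formula for $\HV_\ee$ reproduces the eigenvector as a genuine function, $\Phi_\ee(\V x)=e^{t\EV_\ee}\,\EE[W^V_{t,\ee}(\V x)^*\Phi_\ee(\V B_t^{\V x})]$, and from this representation I would extract two facts. First, equicontinuity: $\|\Phi_\ee(\V x)-\Phi_\ee(\V x')\|_\sF$ admits a modulus of continuity on $\RR^3$ uniform for $\ee\in J$, by equicontinuity of the integrands $W^V_{t,\ee}(\cdot)$ together with the uniform sup-bound above --- this is the uniform-in-$\ee$ version of Rem.~\ref{remcontrange}. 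Second, for each fixed $\V x$ one has $\Phi_\ee(\V x)\to\Phi_{\ee_0}(\V x)$ as $\ee\to\ee_0$: indeed $\EV_\ee\to\EV_{\ee_0}$, $W^V_{t,\ee}(\V x)\to W^V_{t,\ee_0}(\V x)$ in $L^2$ over the probability space (with uniformly bounded second moments), and $\EE\|\Phi_\ee(\V B_t^{\V x})-\Phi_{\ee_0}(\V B_t^{\V x})\|_\sF^2=\int p_t(\V x,\V y)\,\|\Phi_\ee(\V y)-\Phi_{\ee_0}(\V y)\|_\sF^2\,\Id\V y\to0$ by (i), the uniform decay bound and dominated convergence --- so the Brownian expectation smears the mere $L^2$-convergence from (i) into convergence at the point $\V x$. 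Equicontinuity, this pointwise convergence, and the uniform spatial decay together give locally uniform convergence $\Phi_\ee(\cdot)\to\Phi_{\ee_0}(\cdot)$ on $\RR^3$ as $\ee\to\ee_0$, and hence, combined with the continuity of $\Phi_{\ee_0}(\cdot)$, joint continuity of $(\V x,\ee)\mapsto\Phi_\ee(\V x)$ on $\RR^3\times I$.

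The main obstacle I anticipate is not the continuity argument itself but the non-Fock infrastructure it rests on: propagating the joint continuity in $(\V x,\ee)$ and the $L^p(\EE)$-moment bounds for the Feynman-Kac integrand $W^V_{t,\ee}(\V x)$ through the infrared-cutoff Gross transformation and the subsequent removal of the cutoff, and, in the compactness step, controlling $(a\Phi_{\ee_n})(\V k)$ in the non-Fock representation, where the infrared counter terms of \cite{BFS1999} are needed. A smaller but genuinely necessary point is the lower semicontinuity of $\ee\mapsto\Thh_\ee$ used to make the binding condition locally uniform. Once these are in place, the remainder is a routine transcription of the proof of Thm.~\ref{introthmcont}.
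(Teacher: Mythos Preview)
Your proposal is correct in substance and follows the paper's route closely for Part~(i): the paper's Thm.~\ref{thmL2cont} argues exactly by compactness plus uniqueness, extending Cor.~\ref{corIRubd1} to $\UV_\iota\equiv\infty$ via the infrared formula of Prop.~\ref{propIRinfty}, then invoking Lem.~\ref{lemSRC} and Thm.~\ref{thmPerronFrobenius}. Your subsequence argument and the paper's contradiction argument are equivalent. One clarification: the paper does not prove or use lower semicontinuity of $\ee\mapsto\Thh_\ee$; instead it fixes a single radius $R$ and uses norm resolvent convergence (Lem.~\ref{lempert}) to get $E_{R,\ee_n}\to E_{R,\ee_0}$ and $E_{\ee_n}\to E_{\ee_0}$, which already yields the uniform binding gap needed for the localization bounds.

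For Part~(ii) the paper (Thm.~\ref{thmcont}) takes a somewhat different and more economical path than your equicontinuity-plus-pointwise scheme. It proves \emph{directly} that $(\wt T_t^{\,n}\Psi)(\V x_n)\to(\wt T_t\Psi)(\V x)$ for every fixed $\Psi\in L^2(\RR^3,\sF)$ and any sequence $(\V x_n,\ee_n)\to(\V x,\ee_0)$, by three approximation steps (bounded continuous $V$ and $\Psi$ first, then general $V$ via \eqref{approxVVn}, then general $\Psi$ via density and the uniform $L^2\!\to\! L^\infty$ bound). Joint continuity of $\Phi_\ee(\V x)$ then falls out of the single estimate
\[
\|(\wt T_t^{\,n}\Phi_{\ee_n})(\V x_n)-(\wt T_t\Phi_{\ee_0})(\V x)\|
\le\|(\wt T_t^{\,n}\Phi_{\ee_0})(\V x_n)-(\wt T_t\Phi_{\ee_0})(\V x)\|
+\|\wt T_t^{\,n}\|_{2,\infty}\|\Phi_{\ee_n}-\Phi_{\ee_0}\|_2,
\]
together with Part~(i). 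This sidesteps the need to establish a uniform modulus of continuity for $\V x\mapsto\Phi_\ee(\V x)$; your route would also work, but requires the extra equicontinuity input.

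Finally, your anticipated obstacle about ``propagating bounds for $W^V_{t,\ee}(\V x)$ through the infrared-cutoff Gross transformation'' is a misconception worth flagging: the Feynman-Kac integrand for $\HV_\ee$ is $\wt W_{\infty,t}(\V x)$, constructed \emph{directly} in \cite{MatteMoeller2017} with the moment bounds \eqref{bduU}, \eqref{WinLp} already available and depending on $\ee$ only through locally bounded constants. No Gross transformation enters the probabilistic side, so there is nothing to propagate through it.
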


\begin{proof}
The statements (i) and (ii) follow from Theorem~\ref{thmL2cont} and Theorem~\ref{thmcont}, respectively.
\end{proof}

Our last results are derived by means of certain path measures associated with the ground states
found in Theorem~\ref{introthmgsN}. Similar measures have been introduced for the ultraviolet 
regularized Nelson model in \cite{BHLMS2002}. They have been further explored in the ultraviolet 
regularized Nelson model and the related non- and semi-relativistic Pauli-Fierz models in
\cite{BetzHiroshima2009,BetzSpohn2005,Hiroshima2004,Hiroshima2014,HiroshimaLorincziTakaesu2012,LorincziMinlosSpohn2002,LorincziMinlosSpohn2002b}. Thanks to the Feynman-Kac formulas
of \cite{MatteMoeller2017} and the ergodicity of the semigroup proven there we are able to construct 
them for the renormalized Nelson model as well. We shall actually introduce a {\em family} of path 
measures associated with the ground state of $\NV$, attaching a separate path measure to each 
position $\V{x}\in\RR^3$, while the aforementioned articles deal, roughly speaking, with integrals over 
such a family.

In fact, let $\V{B}_-$ be a three-dimensional standard Brownian motion independent from $\V{B}$ and 
let $(W^V_{-t}(\V{x}))_{t\ge0}$ denote the operator-valued process appearing in the Feynman-Kac
formula \eqref{FKintro} corresponding to $\V{B}_-$. If $\Phi$ is the continuous representative
of the positive normalized ground state eigenvector of $\NV$, then a Markov property proven in 
\cite{MatteMoeller2017} reveals that the expressions
\begin{align}
m_{t}(\V{x})&:=e^{2t\EV}\SPn{W^V_{-t}(\V{x})^*\Phi(\V{B}_{-t}^{\V{x}})}{W^V_{t}(\V{x})^*
\Phi(\V{B}_{t}^{\V{x}})},\quad t\ge0,
\end{align}
define a strictly positive martingale. Now choose the (completed) Wiener space of continuous paths in 
$\RR^6$ as underlying probability space and let $\EE_{\Wi}$ denote the corresponding
expectation. Furthermore, let $\V{B}$ and $\V{B}_{-}$ stand for the first three
and the last three components, respectively, of the canonical evaluation process on $\Omega_{\Wi}$. 
Finally, let $\mr{\fF}_t$ be the $\sigma$-algebra generated by all $\V{B}_s$ 
and $\V{B}_{-s}$ with $s\in[0,t]$ ({\em not} augmented by the null sets of the completed
Wiener measure). Then, via a standard construction, the relations
\begin{align*}
\mu_{\V{x},*}(A)&:=\frac{1}{\|\Phi(\V{x})\|^2}\EE_{\Wi}[1_Am_{t}(\V{x})],
\quad A\in\mr{\fF}_t,\,t\ge0,
\end{align*}
uniquely determine a probability measure on the Borel-$\sigma$-algebra of $C([0,\infty),\RR^6)$ 
extending the well-defined map $\mu_{\V{x},*}$ . This probability measure, call it 
$\mu_{\V{x},\infty}$, is the said path measure associated with $\Phi$ and $\V{x}$. 

It turns out that expectation values like $\SPn{\Phi(\V{x})}{e^{-z\Id\Gamma(\chi)}\Phi(\V{x})}$
with a compactly supported $\chi:\RR^3\to[0,1]$, which are well-defined {\em a priori} at least for all
$z\in\{\zeta\in\CC|\,\Re[\zeta]\ge0\}$, can be 
represented as integrals with respect to $\mu_{\V{x},\infty}$ having obvious 
analytic extensions to all $z\in\CC$. Combining this observation with Theorems~\ref{introthmexpdec} 
and~\ref{introthmsupexp}, we eventually arrive at the following result:

\begin{thm}[Strong Boson Number Localization in Ground States]\label{introthmsupexpNum}
Assume that the binding condition $\Thh>\EV$ and the infrared regularity condition
$$
d:=\int_{\{|\V{k}|\le1\}}\frac{\eta(\V{k})^2}{\omega(\V{k})^3}\Id\V{k}<\infty
$$
are fulfilled. Let $\Phi$ be the continuous representative of the normalized, strictly positive 
ground state eigenvector of $\NV$. Let $\ve,r>0$ be arbitrary and pick some $\sigma\in\RR$ with 
$\sigma\le\Thh$. Then $\Phi(\V{x})\in\dom(e^{r\Id\Gamma(1)})$, for all $\V{x}\in\RR^3$, and 
there exists $c>0$ such that 
\begin{align}\label{introsupexpNum}
\|e^{r\Id\Gamma(1)}\Phi(\V{x})\|&\le ce^{(e^{4r}-1)\ee^2d/4}
e^{-(1-\ve)|\V{x}|\sqrt{2\sigma-2\EV}},\quad\V{x}\in\RR^3.
\end{align}
If the potential $V$ satisfies \eqref{condVsupexp} (which entails $\Thh=\infty$), then
the second exponential on the right hand side of \eqref{introsupexpNum} can be replaced by
$e^{-(1-\ve)a|\V{x}|^{p+1}/(p+1)}$.
\end{thm}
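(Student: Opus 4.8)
The plan is to combine the already-established spatial decay estimate from Thm.~\ref{introthmexpdec} (or Thm.~\ref{introthmsupexp} in the confining case) with a separate bound on the boson-number localization $\|e^{r\Id\Gamma(1)}\Phi(\V{x})\|$ coming from the path measure $\mu_{\V{x},\infty}$ attached to $\Phi$ and $\V{x}$. Since $\Id\Gamma(1)$ counts bosons, I would first write $e^{r\Id\Gamma(1)}=e^{r\Id\Gamma(1-\chi)}e^{r\Id\Gamma(\chi)}$ for a cutoff $\chi:\RR^3\to[0,1]$ that equals $1$ on $\{|\V{k}|\le1\}$ and is compactly supported. On the high-frequency part $1-\chi$ one has $1-\chi\le C(\omega\wedge1)$ for a suitable constant $C$ (because $\omega\wedge1$ is bounded below away from zero on the support of $1-\chi$), so $e^{r\Id\Gamma(1-\chi)}\le e^{Cr\Id\Gamma(\omega\wedge1)}$ as operators, and this factor is controlled by \eqref{intro1} (resp.\ \eqref{intro2}) applied with the constant $Cr$ in place of $r$. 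What remains is to bound the low-frequency contribution $\|e^{r\Id\Gamma(\chi)}\Phi(\V{x})\|$ uniformly in $\V{x}$, and to show this produces exactly the factor $e^{(e^{4r}-1)\ee^2 d/4}$.

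For the low-frequency factor I would pass to the path measure. By the discussion preceding the theorem, $\SPn{\Phi(\V{x})}{e^{-z\Id\Gamma(\chi)}\Phi(\V{x})}$ is $a$ priori defined for $\Re z\ge0$ and, after rewriting it as a $\mu_{\V{x},\infty}$-integral via the Feynman--Kac formula \eqref{FKintro} and the definition of $m_t(\V{x})$, admits an analytic continuation to all $z\in\CC$; evaluating at $z=-2r$ gives $\|e^{r\Id\Gamma(\chi)}\Phi(\V{x})\|^2$. The key point is that the operator-valued process $W^V_t(\V{x})$ has an explicit structure (a product of a Gaussian/exponential factor in a field operator, built from the interaction kernel, times a scalar factor involving the renormalization energy), so that conjugating $e^{-z\Id\Gamma(\chi)}$ through $W^V_t(\V{x})$ and $W^V_{-t}(\V{x})$ amounts to rescaling the soft-boson part of the wave functions appearing in those field operators by a factor like $e^{-z}$. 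The net effect, after taking the Gaussian (Wiener) expectation, is that the $z$-dependence of $m_t(\V{x})$ factors out through the low-frequency interaction kernel; because $\chi=1$ where $\eta\ne0$ is supported in the infrared and the kernel has $L^2$-norm squared of the soft part equal to $\ee^2\int_{\{|\V{k}|\le1\}}\eta^2\omega^{-3}\,\Id\V{k}=\ee^2 d$ (up to the normalization in the definition of $\vp(h)$), one obtains a bound of Gaussian type: $\|e^{r\Id\Gamma(\chi)}\Phi(\V{x})\|^2\le e^{(e^{4r}-1)\ee^2 d/2}\|\Phi(\V{x})\|^2$. Taking square roots yields the advertised factor $e^{(e^{4r}-1)\ee^2 d/4}$, and in particular shows $\Phi(\V{x})\in\dom(e^{r\Id\Gamma(\chi)})$, which together with the high-frequency factor gives $\Phi(\V{x})\in\dom(e^{r\Id\Gamma(1)})$.

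Putting the two factors together: $\|e^{r\Id\Gamma(1)}\Phi(\V{x})\|\le\|e^{r\Id\Gamma(1-\chi)}e^{r\Id\Gamma(\chi)}\Phi(\V{x})\|$, where the operator $e^{r\Id\Gamma(1-\chi)}$ is applied to the vector $e^{r\Id\Gamma(\chi)}\Phi(\V{x})$ (note these two commute and the argument is still a decay-estimate vector since $1-\chi$ is a high-frequency symbol). Using $e^{r\Id\Gamma(1-\chi)}\le e^{Cr\Id\Gamma(\omega\wedge1)}$ and Thm.~\ref{introthmexpdec}, the vector $e^{Cr\Id\Gamma(\omega\wedge1)}\Phi(\V{x})$ has norm $\le c\,e^{-(1-\ve)|\V{x}|\sqrt{2\sigma-2\EV}}$; one has to be slightly careful because we also applied $e^{r\Id\Gamma(\chi)}$, so the cleanest route is to bound $\|e^{r\Id\Gamma(1)}\Phi(\V{x})\|\le\|e^{(Cr+r)\Id\Gamma(\omega\wedge1)}\Phi(\V{x})\|^{1/2}\,\|e^{2r\Id\Gamma(\chi)}\Phi(\V{x})\|^{1/2}$ via $\Id\Gamma(1)\le\Id\Gamma(\chi)+\Id\Gamma(1-\chi)$ and Cauchy--Schwarz in the interpolation, so that the spatial decay comes from one factor and the $d$-dependent constant from the other; the exponent $e^{4r}$ rather than $e^{2r}$ then tracks this doubling. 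In the confining case \eqref{condVsupexp} one simply substitutes Thm.~\ref{introthmsupexp} for Thm.~\ref{introthmexpdec}, noting that \eqref{condVsupexp} forces $\Thh=\infty$.

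\textbf{Main obstacle.} The delicate step is the analytic continuation and the exact bookkeeping of constants in the path-measure representation of $\SPn{\Phi(\V{x})}{e^{-z\Id\Gamma(\chi)}\Phi(\V{x})}$: one must verify that the Wiener-space expectation of $m_t(\V{x})$ with the $e^{-z\Id\Gamma(\chi)}$ inserted really is entire in $z$ (control of moments of the integrand uniformly on compact $z$-sets, using the finite-moment property of $W^V_t(\V{x})$ stated after \eqref{FKintro}), that the limit $t\to\infty$ commutes with this insertion (martingale convergence plus a dominated-convergence argument), and that the soft-boson contribution to the kernel really contributes precisely $\ee^2 d$ so that the constant is $(e^{4r}-1)\ee^2 d/4$ and not merely $O(d)$. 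The rest is a routine assembly of the pieces already available in the excerpt.
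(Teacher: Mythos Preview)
Your proposal follows essentially the same route as the paper: split $1=\chi+(1-\chi)$ into a low-frequency and a high-frequency part, control the high-frequency factor via $1-\chi\le\omega\wedge1$ and Thm.~\ref{introthmexpdec}/\ref{introthmsupexp}, and control the low-frequency factor through the path measure $\mu_{\V{x},\infty}$ and analytic continuation; then glue the two together by the Cauchy--Schwarz inequality
\[
\|e^{r\Id\Gamma(1)}\Phi(\V{x})\|\le\|e^{2r\Id\Gamma(\chi)}\Phi(\V{x})\|^{1/2}\,\|e^{2r\Id\Gamma(1-\chi)}\Phi(\V{x})\|^{1/2}.
\]
This is exactly the argument the paper gives (Prop.~5.5 combined with the short proof at the end of Subsect.~5.4), including the observation that the full spatial decay rate survives because \emph{both} square-root factors carry it: $\|e^{2r\Id\Gamma(\chi)}\Phi(\V{x})\|^{1/2}$ contributes $\|\Phi(\V{x})\|^{1/2}$ times the $d$-dependent constant, and $\|\Phi(\V{x})\|$ itself obeys \eqref{intro1}.

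Two small points worth cleaning up. First, the paper takes the \emph{sharp} cutoff $\chi_1:=1_{\{|\V{m}|\le1\}}$ rather than a smooth compactly supported $\chi$; in the massless case this gives $1-\chi_1\le\omega\wedge1$ with constant $1$, and --- more importantly --- it makes the path-measure computation clean: $e^{-z\chi_1}-1=(e^{-z}-1)\chi_1$, so the exponent in $\wt{K}_{\V{x},t}[e^{-z\Id\Gamma(\chi_1)}]$ factors as $(e^{-z}-1)w_{1,t}$ with $\|w_{1,\infty}\|_\infty\le\ee^2 d$, yielding the exact constant $e^{(e^{4r}-1)\ee^2 d/4}$. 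With a smooth $\chi$ the factor $e^{-z\chi(\V{k})}-1$ does not factor this way and the bookkeeping is messier. Second, your interpolation bound should read $2Cr$, not $Cr+r$, in the $\omega\wedge1$ exponent (and the paper reduces to the massless case first, since for $\mu>0$ one has $1\le(\omega\wedge1)/(\mu\wedge1)$ and Thm.~\ref{introthmexpdec} applies directly).
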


\begin{proof}
This theorem is proved at the end of Subsection~\ref{ssecsuperexpN}.
\end{proof}

\begin{rem}[Gaussian Domination for Ground States]\label{introremGauss2}
Theorem~\ref{introthmsupexpNum} permits to relax the assumption on $h$ in Remark~\ref{introremGauss1},
if we restrict our attention to ground state eigenvectors instead of considering more general elements
of spectral subspaces. In fact, Lemma~\ref{leminvGauss} also implies the bound
\begin{align*}
\|e^{\vp(h)^2}\psi\|\le\frac{1}{\sqrt{1-4\alpha\|h\|^2}}\|e^{\Id\Gamma(1)/(\alpha-1)}\psi\|,
\quad\psi\in\dom(e^{\Id\Gamma(1)/(\alpha-1)}),
\end{align*}
for all $h\in L^2(\RR^3)$ with $\|h\|<1/2$ and $\alpha>1$ with $4\alpha\|h\|^2<1$, 
which can be combined with \eqref{introsupexpNum}.
\end{rem}

Suppose that $h\in L^2(\RR^3)$ is an element of the completely real subspace
\begin{align}\label{realsubspace}
\mathfrak{r}&:=\big\{f\in L^2(\RR^3)\big|\,\ol{f(\V{k})}=f(-\V{k}),\,\text{a.e. $\V{k}\in\RR^3$}\big\},
\end{align}
in which case $\vp(h)$ is interpreted as a position observable for the radiation field.
Then the condition $\|h\|<1/2$ in the estimation of $\|e^{\vp(h)^2}\Phi(\V{x})\|$ described in the
previous remark cannot be improved, no matter what special properties $h$ might have otherwise. 
This is illustrated by our last main theorem.
The lower bound \eqref{lbGauss} asserted in it is again derived by means of the path measure 
$\mu_{\V{x},\infty}$. Notice that $\Phi(\V{x})\not=0$, for all $\V{x}\in\RR^3$, in \eqref{lbGauss}
and \eqref{notinGauss}.

\begin{thm}[Lower Bound on the Gaussian Domination for Ground States]\label{introthmGaussdomlb}
Under the assumptions of Theorem~\ref{introthmsupexpNum}, consider some $h\in\mathfrak{r}$
and let $\V{x}\in\RR^3$. Then
\begin{align}\label{lbGauss}
\|e^{\vp(h)^2}\Phi(\V{x})\|^2&\ge\frac{\|\Phi(\V{x})\|^2}{\sqrt{1-4\|h\|^2}},\quad
\text{if}\;\,\|h\|<\frac{1}{2},
\\\label{notinGauss}
\Phi(\V{x})&\notin\dom(e^{\vp(h)^2}),\qquad\;\:\text{if}\;\,\|h\|\ge\frac{1}{2}.
\end{align}
\end{thm}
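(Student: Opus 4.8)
The plan is to compute the Gaussian expectation $\|e^{\vp(h)^2}\Phi(\V{x})\|^2$ explicitly using the path measure $\mu_{\V{x},\infty}$ and a Gaussian random variable representation of the field operator. Since $h\in\mathfrak{r}$, the operator $\vp(h)$ is a position-type field observable and, under the $\cQ$-space representation, it acts as multiplication by a centered real Gaussian random variable with variance $\|h\|^2/2$ (with the normalization implicit in the definition of $\vp$ recalled in Subsect.~\ref{ssecWeyl}; the constant will have to be read off from that convention). The key structural input is that, by the martingale/Markov construction recalled just before Thm.~\ref{introthmsupexpNum}, expectation values of the form $\SPn{\Phi(\V{x})}{\Theta\,\Phi(\V{x})}$ for suitable functions $\Theta$ of the field can be written as $\|\Phi(\V{x})\|^2\,\EE_{\mu_{\V{x},\infty}}[\,\cdot\,]$ of a corresponding functional of the Brownian paths, via the Feynman-Kac formula \eqref{FKintro} and the explicit Gaussian structure of the processes $W^V_{\pm t}(\V{x})$.

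First I would reduce to the case $h\in\mathfrak r$ with $h$ bounded and compactly supported and $e^{c\vp(h)^2}$ replaced by its bounded truncation $e^{c\vp(h)^2}1_{\{\vp(h)^2\le N\}}$, so that all quantities are manifestly finite; the general case then follows by monotone convergence in $N$ and an approximation argument in $h$. For such $h$, I would use that $\Phi(\V{x})\in\dom(e^{r\Id\Gamma(1)})$ for every $r>0$ by Thm.~\ref{introthmsupexpNum}, which in particular places $\Phi(\V{x})$ in the domain of $e^{\delta\cN}$ for every $\delta>0$ and hence guarantees that the vector $\Phi(\V{x})$ has a representative in $\cQ$-space all of whose Gaussian moments are finite; this legitimizes the manipulations below. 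Writing $g:=\vp(h)$ as a real Gaussian variable $g$ under $\nu$, I would express
\begin{align*}
\|e^{\vp(h)^2}\Phi(\V{x})\|^2&=\int_{\cQ}e^{2g(q)^2}|(\cU\Phi(\V{x}))(q)|^2\,\nu(\Id q),
\end{align*}
and then the point is to evaluate this via $\mu_{\V{x},\infty}$: the quantity $|(\cU\Phi(\V{x}))(q)|^2\nu(\Id q)$ is, up to the factor $\|\Phi(\V{x})\|^2$, the image under the Gaussian variable $g$ of a probability measure on $\RR$ that is itself Gaussian (this is the crucial fact — it follows from the representation of $\Phi$ via imaginary-time paths, since the ground state is, morally, a Gaussian perturbation of the Fock vacuum in the field variables, and the path measure encodes exactly this). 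Concretely, I expect to obtain that under $\mu_{\V{x},\infty}$ the random variable $g$ is centered Gaussian with variance $\tfrac14+s(\V{x})$ for some $s(\V{x})\ge0$ (again up to the normalization constant in $\vp$; the value $\tfrac14$ corresponds to the vacuum and is what makes the threshold $\|h\|=1/2$ appear). Then the elementary Gaussian integral $\int e^{2t^2}\,\tfrac{1}{\sqrt{2\pi v}}e^{-t^2/2v}\,\Id t=(1-4v)^{-1/2}$ for $v<1/4$ and $=+\infty$ for $v\ge1/4$, applied with $v=\tfrac14\|h\|^2/\|h_0\|^2$-type normalization, yields $\|e^{\vp(h)^2}\Phi(\V{x})\|^2=\|\Phi(\V{x})\|^2(1-4\|h\|^2-cs(\V{x})\|h\|^2)^{-1/2}\ge\|\Phi(\V{x})\|^2(1-4\|h\|^2)^{-1/2}$ when $\|h\|<1/2$, and divergence when $\|h\|\ge1/2$, which is exactly \eqref{lbGauss} and \eqref{notinGauss}.

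The main obstacle will be establishing rigorously that the field-variable marginal of $|(\cU\Phi(\V{x}))|^2\,\nu$ is Gaussian — equivalently, computing the moment generating function $\EE_{\mu_{\V{x},\infty}}[e^{\lambda g}]$ and showing it has the form $e^{\lambda^2 v(\V{x})/2}$ with $v(\V{x})\ge\tfrac14\|h\|^2/(\text{norm})$. I would do this by writing $\SPn{\Phi(\V{x})}{e^{\lambda\vp(h)}\Phi(\V{x})}=\SPn{\Phi(\V{x})}{W(\lambda h)\Phi(\V{x})}e^{\lambda^2\|h\|^2/4}$-type identity using Weyl operators, then inserting the Feynman-Kac formula \eqref{FKintro} for $e^{-t\NV}$ on both sides (using $\Phi=e^{t\EV}e^{-t\NV}\Phi$), and identifying the resulting Brownian-motion expectation as the $\mu_{\V{x},\infty}$-expectation of an exponential of a Gaussian (Wiener) integral, which is Gaussian in $\lambda$ by Itô isometry / the explicit form of $W^V_{\pm t}(\V{x})$ given in \cite{MatteMoeller2017} and recalled in Subsect.~\ref{ssecFK}. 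The contributions from the interaction make $v(\V{x})$ strictly larger than the vacuum value in general, which only helps the lower bound; the delicate point is to ensure the vacuum contribution is captured with the correct constant $\tfrac14\|h\|^2$ and no loss, and to justify exchanging the $\lambda$-derivatives / the analytic continuation with the path-space expectation, which is where the finiteness of all Gaussian moments of $\Phi(\V{x})$ (from Thm.~\ref{introthmsupexpNum}) is used. For the divergence statement \eqref{notinGauss} when $\|h\|=1/2$ exactly, I would use Fatou's lemma applied to the truncations to get $\|e^{\vp(h)^2}\Phi(\V{x})\|^2\ge\liminf_N\|\Phi(\V{x})\|^2(1-4v_N)^{-1/2}=\infty$ where $v_N\uparrow v(\V{x})\ge1/4$.
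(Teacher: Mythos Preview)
Your strategy is in the right spirit --- use the path measure $\mu_{\V{x},\infty}$ to represent $\SPn{\Phi(\V{x})}{e^{2\vp(h)^2}\Phi(\V{x})}$ --- but the central claim that the law of $\vp(h)$ under the probability measure $\|\Phi(\V{x})\|^{-2}|\cU\Phi(\V{x})|^2\,\nu$ is \emph{Gaussian} is incorrect, and your formula $\|e^{\vp(h)^2}\Phi(\V{x})\|^2=\|\Phi(\V{x})\|^2(1-4\|h\|^2-cs(\V{x})\|h\|^2)^{-1/2}$ does not hold. What one actually obtains from Cor.~\ref{corGSEEPM} applied to the Weyl operator $\sW(-ish)$ is
\[
\SPn{\Phi(\V{x})}{e^{-is\vp(h)}\Phi(\V{x})}
=\|\Phi(\V{x})\|^2\int_{\Omega_{\Wi}}e^{-s^2\|h\|^2/2+is\,\alpha_{\V{x},\infty}[h]}\,\Id\mu_{\V{x},\infty},
\]
where $\alpha_{\V{x},\infty}[h]$ is a bounded \emph{real} (since $h\in\mathfrak r$) but in general non-constant and non-Gaussian functional of the two-sided Brownian path. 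Thus the law of $\vp(h)$ is a \emph{Gaussian mixture}: conditionally on the path it is $N(\alpha_{\V{x},\infty}[h],\|h\|^2)$, with fixed variance $\|h\|^2$ and random mean. The Gaussian integral then gives, for $\|h\|<\tfrac12$,
\[
\|e^{\vp(h)^2}\Phi(\V{x})\|^2
=\frac{\|\Phi(\V{x})\|^2}{\sqrt{1-4\|h\|^2}}
\int_{\Omega_{\Wi}}e^{2\alpha_{\V{x},\infty}[h]^2/(1-4\|h\|^2)}\,\Id\mu_{\V{x},\infty}
\;\ge\;\frac{\|\Phi(\V{x})\|^2}{\sqrt{1-4\|h\|^2}},
\]
so the lower bound survives --- but via the inequality $e^{2\alpha^2/(1-4\|h\|^2)}\ge1$, not via an increased variance.

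The more serious gap is how you reach this identity at $z=-4$. The path-measure representation of $\SPn{\Phi(\V{x})}{e^{-z\vp(h)^2/2}\Phi(\V{x})}$ is a~priori available only for $\Re z\ge0$ (bounded observables), and your proposed route of working directly with $e^{\lambda\vp(h)}$ for real $\lambda$ bypasses the boundedness needed to invoke Cor.~\ref{corGSEEPM}. The paper instead (i) integrates the characteristic-function identity above against $(2\pi)^{-1/2}e^{-s^2/2}\Id s$ and rescales $h\mapsto z^{1/2}h$ to obtain an explicit formula for $\SPn{\Phi(\V{x})}{e^{-z\vp(h)^2/2}\Phi(\V{x})}$ on $z\ge0$, (ii) observes that the right-hand side extends holomorphically to the disc $\{|z|<\|h\|^{-2}\}$ because $\alpha_{\V{x},\infty}[h]$ is bounded, and (iii) applies Lem.~\ref{lemholext} to conclude $\Phi(\V{x})\in\dom(e^{\vp(h)^2})$ and to evaluate at $z=-4$. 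The boundedness of $\alpha_{\V{x},\infty}[h]$ in turn requires the extra regularity $\omega^\epsilon h\in L^2(\RR^3)$, which is removed afterwards by an approximation argument using Thm.~\ref{introthmsupexpNum} and Lem.~\ref{leminvGauss}; your sketch does not isolate this issue. Finally, for \eqref{notinGauss} the paper does not use Fatou on truncations but simply rescales: if $\|h\|\ge\tfrac12$, apply \eqref{lbGauss} to $\delta_n^{1/2}h$ with $\delta_n\uparrow\delta$, $4\delta\|h\|^2=1$, and use monotone convergence.
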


\begin{proof}
This theorem is proved in Subsection~\ref{ssecGaussdom}.
\end{proof}


\subsection*{Organization of the Article}

\begin{enumerate}
\item[$\bullet$] 
In the succeeding Section~\ref{secdefs} we shall introduce the Nelson model and explain
the Feynman-Kac formulas found in \cite{MatteMoeller2017} in detail. In particular, we clarify and 
prove the assertions made in Theorem~\ref{introthmrenproc}, deferring the most technical steps to the
appendix. 
\item[$\bullet$] 
The existence of ground states is addressed in Section~\ref{secGS}. Along the way we 
derive our hypercontrativity bound as well as the decay estimates stated in
Theorems~\ref{introthmexpdec},~\ref{introthmsupexp}, and~\ref{introthmdecnonFock},
again deferring most of the technical work to the appendix. Section~\ref{secGS} concludes with a
discussion of continuity properties of ground states. 
\item[$\bullet$] 
We prove our results on absence
of ground states in Section~\ref{secabsence}. 
\item[$\bullet$] 
Finally, we study path measures associated with
ground states in Section~\ref{secGibbs}. 
\item[$\bullet$] 
As indicated above,
the main text is followed by several appendices containing more technical material.
\end{enumerate}


\section{Definitions and Preliminary Results}\label{secdefs}

\noindent
In this section we bring together all prerequisites necessary to derive the results summarized in
the introduction. In Subsections~\ref{ssecWeyl} through~\ref{ssecGross} we discuss, respectively,
some basic bosonic Fock space calculus, $\cQ$-space representations of the Fock space, the
Schr\"{o}dinger operator corresponding to $V$, ultraviolet regularized Nelson operators, and 
Gross transformations. The renormalized Nelson operator and its non-Fock version are introduced
in Subsection~\ref{ssecUVren}. In the final Subsection~\ref{ssecFK} we explain the Feynman-Kac formulas
derived in \cite{MatteMoeller2017}.


\subsection{Fock Space and Weyl Representation}\label{ssecWeyl}

\noindent
In what follows we present some well-known material on bosonic Fock spaces and 
corresponding Weyl representations one should have in mind while reading this article; see, e.g., 
\cite{Parthasarathy1992} for a textbook exposition of these matters.

\subsubsection*{Fock Space}
The bosonic Fock space modeled over the Hilbert space for a single boson, $L^2(\RR^3)$, is the 
countable direct sum of ``$n$-particle subspaces" given by
\begin{align}\label{defFock}
\sF:=\CC\oplus\bigoplus_{n\in\NN}L_{\mathrm{sym}}^2(\RR^{3n}).
\end{align}
Here $L_{\mathrm{sym}}^2(\RR^{3n})$ denotes the closed subspace in $L^2(\RR^{3n})$ of all its
elements $\psi^{(n)}$ satisfying
$$
\psi^{(n)}(\V{k}_{\pi(1)},\ldots,\V{k}_{\pi(n)})=\psi^{(n)}(\V{k}_1,\ldots,\V{k}_{n}),
$$
for a.e. $(\V{k}_1,\ldots,\V{k}_{n})\in(\RR^3)^n$ an every permutation $\pi$ of
$\{1,\ldots,n\}$. 

\subsubsection*{Exponential Vectors}
The Fock space contains the exponential vectors
\begin{align*}
\epsilon(h):=(1,h,\ldots,(n!)^\mh h^{\otimes_n},\ldots\:)\in\sF,\quad h\in L^2(\RR^3),
\end{align*}
which are convenient for introducing the Weyl representation and computations.
In the previous formula $h^{\otimes_n}(\V{k}_1,\ldots,\V{k}_{n}):=\prod_{j=1}^nh(\V{k}_j)$,
for a.e. $(\V{k}_1,\ldots,\V{k}_{n})\in(\RR^3)^n$.
The set of all exponential vectors is total in $\sF$ and the map
$L^2(\RR^3)\ni h\mapsto\epsilon(h)\in\sF$ is analytic.

\subsubsection*{Weyl Representation}
Let $\sU(\sK)$ denote the set of unitary operators on a complex Hilbert space $\sK$
equipped with the topology associated with strong convergence of operators.
For all $f,h\in\ L^2(\RR^3)$ and $U\in\sU(L^2(\RR^3))$, we now set
\begin{align}\label{defWeyl}
\sW(f,U)\epsilon(h)&:=e^{-\|f\|^2/2-\SPn{f}{Uh}}\epsilon(f+Uh).
\end{align}
Together with a linear and isometric extension this prescription defines a unitary operator
on $\sF$, again denoted by the symbol $\sW(f,U)$. The so-obtained map
$L^2(\RR^3)\times\sU(L^2(\RR^3))\ni(f,U)\mapsto\sW(f,U)\in\sU(\sF)$, called the
Weyl representation, is strongly continuous. If $U=\id$, then we typically write
$\sW(f):=\sW(f,\id)$ for short. Furthermore, $\Gamma(U):=\sW(0,U)$.

The map $\sW$ is indeed a projective representation of the semi-direct product of
$L^2(\RR^3)$ with $\sU(L^2(\RR^3))$. More precisely, we have the following Weyl relations,
\begin{align}\label{Weylrel}
\sW(f_1,U_1)\sW(f_2,U_2)=e^{-i\Im\SPn{f_1}{U_1f_2}}
\sW(f_1+U_1f_2,U_1U_2),
\end{align}
for all $f_1,f_2\in L^2(\RR^3)$ and $U_1,U_2\in\sU(L^2(\RR^3))$.

\subsubsection*{Field Operators}
Let $f\in L^2(\RR^3)$. Then the above remarks imply that $\RR\ni t\mapsto\sW(-itf)$
is a strongly continuous unitary group. Its self-adjoint generator is called the field operator
associated with $f$ and denoted by $\vp(f)$.

\subsubsection*{Differential Second Quantizations}
Likewise, if $\vk$ is a self-adjoint multiplication operator in $L^2(\RR^3)$, then the generator of
the strongly continuous unitary group $\RR\ni t\mapsto\Gamma(e^{-it\vk})$ 
is called the (differential) second quantization of $\vk$ and denoted by $\Id\Gamma(\vk)$. For instance,
$h\in\dom(\vk)$ implies $\epsilon(h)\in\dom(\Id\Gamma(\vk))$ and
\begin{align}\label{dGexpv}
\Id\Gamma(\vk)\epsilon(h)=\epsilon'(h)\vk h.
\end{align}
In fact, every $n$-particle subspace in the direct sum \eqref{defFock} is a reducing subspace
of $\Id\Gamma(\vk)$. The restriction of $\Id\Gamma(\vk)$ to $L_{\mathrm{sym}}^2(\RR^{3n})$ is
equal to the maximal operator of multiplication with the function
$(\V{k}_1,\ldots,\V{k}_n)\mapsto\vk(\V{k}_1)+\dots+\vk(\V{k}_n)$. 
Thus, if $\vk$ is semi-bounded from
above, then $\epsilon(h)\in\dom(e^{\Id\Gamma(\vk)})$ and
\begin{align}\label{SGdGamma}
e^{\Id\Gamma(\vk)}\epsilon(h)&=\epsilon(e^\vk h),\quad h\in L^2(\RR^3).
\end{align}

\subsubsection*{Creation and Annihilation Operators, Differentiation Formulas}
The creation and annihilation operators associated with $f\in L^2(\RR^3)$ are, respectively, defined by
\begin{align*}
a(f)\epsilon(h)&:=\SPn{f}{h}\epsilon(h)\quad\text{and}\quad\ad(f)\epsilon(h):=\epsilon'(h)f,
\quad h\in L^2(\RR^3),
\end{align*}
followed by linear and closed extensions. It turns out that $a(f)^*=\ad(f)$, $\ad(f)^*=a(f)$, and
$\vp(f)$ is the closure of $\ad(f)+a(f)$. Hence, if $h\in L^2(\RR^3)$ and ${\mathfrak{r}}$ is the
completely real subspace given by \eqref{realsubspace}, then the map 
${\mathfrak{r}}\ni f\mapsto\sW(f)\epsilon(h)\in\sF$ has a 
Fr\'{e}chet derivative, whose action on the tangent vector $v\in{\mathfrak{r}}$ is given by
\begin{align*}
\Id_f\sW(f)\epsilon(h)v&=(\ad(v)-\SPn{v}{f+h})\sW(f)\epsilon(h)=-i\vp(iv)\sW(f)\epsilon(h).
\end{align*}
In particular, the following chain rule holds, for every differentiable $\beta:\RR^3\to{\mathfrak{r}}$,
\begin{align}\label{karl}
\partial_{x_j}\sW(\beta(\V{x}))\epsilon(h)&=-i\vp(i\partial_{x_j}\beta(\V{x}))\sW(\beta(\V{x}))\epsilon(h),
\quad\V{x}\in\RR^3.
\end{align}

\subsubsection*{Relative Bounds and Commutation Relations}
Assume that $\vk$ is a maximal, non-negative, and invertible multiplication operator
in $L^2(\RR^3)$ and let $\psi\in\fdom(\Id\Gamma(\vk))$. Then the maps $f\mapsto a(f)\psi$, 
$f\mapsto\ad(f)\psi$, and $f\mapsto\vp(f)\psi$ are well-defined and real linear from $\dom(\vk^\mh)$ 
into $\sF$, and we have the relative bounds
\begin{align}\label{rba}
\|a(f)\psi\|&\le\|\vk^\mh f\|\|\Id\Gamma(\vk)^\eh\psi\|,
\\\label{rbad}
\|\ad(f)\psi\|&\le\|(\vk^\mh\vee1)f\|\|(1+\Id\Gamma(\vk))^\eh\psi\|,
\\\label{rbvp}
\|\vp(f)\psi\|&\le2\|(\vk^\mh\vee1)f\|\|(1+\Id\Gamma(\vk))^\eh\psi\|.
\end{align}
Finally, if $f,g,h\in\dom(\vk^\mh)$, then $a(f)$, $\ad(f)$, and $\vp(f)$ map $\dom(\Id\Gamma(\vk))$
into $\fdom(\Id\Gamma(\vk))$ and the Weyl relations entail the following commutation relations,
\begin{align}\label{CCR1}
[a(f),\ad(g)]\psi&=[a(f),\vp(g)]\psi=[\vp(f),\ad(g)]\psi=\SPn{f}{g}\psi,
\\
[a(f),a(g)]\psi&=[\ad(f),\ad(g)]\psi=0,\quad\psi\in\dom(\Id\Gamma(\vk)).
\end{align}

\subsubsection*{Vector Notation}
Again let $\vk$ be a maximal, non-negative, and invertible multiplication operator
in $L^2(\RR^3)$ and $\psi\in\fdom(\Id\Gamma(\vk))$.
For vectors $\V{f}=(f_1,\ldots,f_n)$ and $\V{\phi}=(\phi_1,\ldots,\phi_n)$ with 
$f_1,\ldots,f_n\in\dom(\vk^\mh)$ and $\phi_1,\ldots,\phi_n\in\fdom(\Id\Gamma(\vk))$, 
we shall use the shorthands
\begin{align}\label{vecnot0}
\vp(\V{f})\psi:=\big(\vp(f_1)\psi,\ldots,\vp(f_n)\psi\big),\quad
\vp(\V{f})\cdot\V{\phi}:=\sum_{j=1}^n\vp(f_j)\phi_j,
\end{align}
and their analogues for $a$ and $\ad$. We combine this with the notation
\begin{align}\label{vecnot1}
\|\V{v}\|^2:=\|v_1\|^2+\dots+\|v_n\|^2,\quad
\SPn{\V{v}}{\V{w}}:=\SPn{v_1}{w_1}+\dots+\SPn{v_n}{w_n},
\end{align}
where $\V{v}=(v_1,\ldots,v_n)$ and $\V{w}=(w_1,\ldots,w_n)$ are tuples comprised of 
elements of a fixed Hilbert space. For instance,
\begin{align}\label{vecnot2}
\|a(\V{f})\psi\|^2:=\sum_{j=1}^n\|a(f_j)\psi\|^2,\quad
\SPn{\ad(\V{f})\psi}{a(\V{f})\psi}=\sum_{j=1}^n\SPn{\ad(f_j)\psi}{a(f_j)\psi}.
\end{align}


\subsection{Schr\"{o}dinger Representation of Fock Space ($\cQ$-Space)}\label{ssecQspace}

\noindent
We shall quite substantially make use of the fact that one can unitarily map the bosonic
Fock space $\sF$ onto an $L^2$-space defined by a probability measure, such that
the field operators corresponding to a certain completely real subspace $\mathfrak{r}$
of the one-boson space turn into maximal operators of multiplication with elements of a 
Gaussian process indexed by $\mathfrak{r}$. To introduce the corresponding notation and to 
shed some light on this unitary transformation for the non-expert reader, we briefly explain 
one canonical possibility to construct it. More details on the construction sketched below can be 
found, e.g., in \cite{Berezanskii1986}. For discussions of $\cQ$-space in the context of quantum
field theory we refer to \cite{LHB2011,Simon1974}.

Recall the definition of the completely real subspace ${\mathfrak{r}}$ in \eqref{realsubspace}. 
In view of the Weyl relations \eqref{Weylrel} the corresponding set of Weyl operators
$\{\sW(f):f\in{\mathfrak{r}}\}$ generates a commutative unital sub-$C^*$-algebra of $\LO(\sF)$ 
that we call $\sA$.
Let ${\sf G}:\sA\to C(\cQ)$ denote the corresponding Gelfand $*$-isomorphism onto the continuous
functions on the maximal ideal space $\cQ$ of $\sA$, which is a compact Hausdorff space.
Let $\fQ$ denote the Borel-$\sigma$-algebra of $\cQ$ and let $\GM:\fQ\to[0,1]$ be the unique Borel 
probability measure representing the positive and normalized linear functional
\begin{align*}
\ell:C(\cQ)\longrightarrow\CC,\quad F\longmapsto\ell(F)
:=\SPn{\epsilon(0)}{{\sf G}^{-1}(F)\epsilon(0)},
\end{align*}
according to Riesz' representation theorem. Since the complex linear span of all
exponential vectors $\{\epsilon(h):h\in{\mathfrak{r}}\}$ is dense in $\sF$ and since
$\sW(h)\epsilon(0)$ is colinear to $\epsilon(h)$, for all $h\in{\mathfrak{r}}$, one can verify that
a complex linear and isometric extension of the prescription
\begin{align*}
\cU\sW(h)\epsilon(0):={\sf G}(\sW(h)),\quad h\in{\mathfrak{r}},
\end{align*}
yields a unitary operator $\cU:\sF\to L^2(\cQ,\GM)$. In particular, $\cU\epsilon(0)=1$.
If we set $\phi(f):=\cU\vp(f)\cU^*$, then $\{\phi(f):f\in{\mathfrak{r}}\}$ is indeed a Gaussian process,
\begin{align*}
\int_\cQ e^{-it\phi(f)}\Id\GM=\SPn{\epsilon(0)}{e^{-it\vp(f)}\epsilon(0)}=
\SPn{\epsilon(0)}{\sW(-itf)\epsilon(0)}=e^{-t^2\|f\|^2/2},\quad f\in{\mathfrak{r}}.
\end{align*}


\subsection{Scalar- and Vector-Valued Schr\"{o}dinger Operators}

\noindent
Next, we introduce the Hamilton operator for the matter particle alone, in absence of the
quantized radiation field. It is given by a Schr\"{o}dinger operator with the possibly quite
singular Kato decomposable potential $V$. For technical reasons we shall actually define 
Dirichlet realizations of Schr\"{o}dinger operators on open subsets of $\RR^3$.

In the whole article 
\begin{align*}
\Geb\subset\RR^3\;\text{is open and non-empty.}
\end{align*} 
We define a corresponding {\em minimal} quadratic form $\mathfrak{s}_{\Geb}^V$ 
in the Hilbert space $L^2(\Geb)$ as follows: First, we introduce the {\em maximal} form by
\begin{align*}
\mathfrak{s}_{\Geb,\max}^+[h]&:=\frac{1}{2}\int_{\Geb}|\nabla h(\V{x})|^2\Id\V{x}
+\int_{\Geb}V_+(\V{x})|h(\V{x})|^2\Id\V{x},
\end{align*}
for all $h\in\dom(\mathfrak{s}_{\Geb,\max}^+):= W^{1,2}(\Geb)\cap\fdom(V_+\restr_{\Geb})$.
As a sum of two non-negative closed forms this maximal form is non-negative and closed as well.
Then we let $\mathfrak{s}_{\Geb}^+$ denote the closure of 
$\mathfrak{s}_{\Geb,\max}^+\restr_{C_0^\infty(\Geb)}$ and set
\begin{align*}
\mathfrak{s}_{\Geb}[h]&:=\mathfrak{s}_{\Geb}^+[h]-\int_{\Geb}V_-(\V{x})
|h(\V{x})|^2\Id\V{x},\quad h\in\dom(\mathfrak{s}_{\Geb}^+).
\end{align*}
Since $V_-$ is infinitesimally form bounded with respect to the 
negative Laplacian on $\RR^3$, \cite{AizenmanSimon1982}, it follows that the restriction of
$V_-$ to $\Geb$ is infinitesimally form bounded with respect to $\mathfrak{s}_{\Geb}^+$.
Hence, $\mathfrak{s}_{\Geb}$ is semi-bounded and closed. The self-adjoint operator representing 
$\mathfrak{s}_\Geb$ will be denoted by $\SGV$. It is the Dirichlet realization of
the Schr\"{o}dinger operator on $\Geb$ with potential $V$.

We also define a vector-valued version of the Dirichlet-Schr\"{o}dinger operator acting in 
$L^2(\Geb,\sF)$. To this end we first
recall that $\Psi\in L^1_\loc(\Geb,\sF)$ is said to have weak partial derivatives if, for every
$j\in\{1,2,3\}$, we can find a (necessarily unique) $\Upsilon_j\in L^1_\loc(\Geb,\sF)$ such that
\begin{align*}
\int_\Geb\partial_{x_j}\bar{g}(\V{x})\SPn{\phi}{\Psi(\V{x})}\Id\V{x}&=
-\int_{\Geb}\bar{g}(\V{x})\SPn{\phi}{\Upsilon_j(\V{x})}\Id\V{x},
\end{align*}
for all $g\in C_0^\infty(\Geb)$ and all $\phi$ in some total subset of $\sF$. In the affirmative case
we write $\partial_{x_j}\Psi:=\Upsilon_j$. In complete analogy to the scalar case,
$W^{1,2}(\Geb,\sF)$ is the space of all $\Psi\in L^2(\Geb,\sF)$ having weak
partial derivatives $\partial_{x_1}\Psi$, $\partial_{x_2}\Psi$, and $\partial_{x_3}\Psi$ that
belong to $L^2(\Geb,\sF)$ as well. 

Mimicking the construction in the scalar case we now define a maximal form by
\begin{align*}
\mathfrak{t}_{\Geb,\max}^+[\Psi]&:=\frac{1}{2}\int_{\Geb}\|\nabla\Psi(\V{x})\|^2\Id\V{x}
+\int_{\Geb}V_+(\V{x})\|\Psi(\V{x})\|^2\Id\V{x},
\end{align*}
for all $\Psi\in W^{1,2}(\Geb,\sF)$ such that the second integral in the above formula is finite;
here we use the notation \eqref{vecnot1}. Writing
\begin{align}\label{defsDsE}
\sD(\Geb,\sE)&:=\mathrm{span}\big\{g\phi\big|\,g\in C_0^\infty(\Geb),\,\phi\in\sE\big\},
\quad\sE\subset\sF,
\end{align}
and denoting 
\begin{align}\label{deftGplus}
\mathfrak{t}_\Geb^+&:=\ol{\mathfrak{t}_{\Geb,\max}^+\restr_{\sD(\Geb,\sF)}},
\end{align}
we finally set
\begin{align}\label{defvvSchroedinger}
\mathfrak{t}_\Geb[\Psi]:=\mathfrak{t}_\Geb^+[\Psi]-\int_{\Geb}V_-(\V{x})\|\Psi(\V{x})\|^2\Id\V{x},
\quad\Psi\in\dom(\mathfrak{t}_\Geb^+).
\end{align}
As explained in \cite[\textsection4]{Matte2017}, the form $\mathfrak{t}_\Geb$ is again semi-bounded
and closed. The vector-valued Schr\"{o}dinger operator representing its closure 
will be denoted by $\TV_\Geb$.


\subsection{The Nelson Operator with Ultraviolet Cutoff}\label{ssecNelson}

\noindent
Next, we introduce the Nelson operator with an ultraviolet cutoff matter-radiation interaction.
It shall be convenient to denote the identity map on $\RR^3$ by
\begin{align*}
\V{m}(\V{k})&=(m_1(\V{k}),m_2(\V{k}),m_3(\V{k})):=\V{k},\quad\V{k}\in\RR^3,
\end{align*}
when it is interpreted as the momentum operator of the bosons in Fourier space. Furthermore,
the following notation for free waves will be convenient,
\begin{align*}
e_{\V{x}}:=e^{-i\V{m}\cdot\V{x}},\quad\V{x}\in\RR^3.
\end{align*}
For  infrared and ultraviolet cutoff parameters $0\le K\le\Lambda\le\infty$, we now define
\begin{align}\label{deffbeta}
f_{\Lambda}:=\ee\omega^\mh\eta1_{\{|\V{m}|\le\Lambda\}},
\quad\beta_{K,\Lambda}:=(\omega+\V{m}^2/2)^{-1}f_{\Lambda}1_{\{|\V{m}|>K\}}.
\end{align}
Here and henceforth $1_A$ denotes the characteristic function of a set $A$.

Notice that $f_{\infty}$ is locally square-integrable but not in $L^2(\RR^3)$, if
$\ee\eta$ is constant and non-zero near infinity. For strictly positive $K$, the function $\beta_{K,\infty}$
is in $L^2(\RR^3)$, while $\beta_{0,\Lambda}$ might have a non-square-integrable singularity at 
$0$, if the boson mass $\mu$ is zero. 

In our definition of the Nelson Hamiltonian we shall add the energy renormalization right away
which, for all $0\le\Lambda<\infty$, is given by
\begin{align*}
E_{\Lambda}^{\ren}&:=\int_{\RR^3}f_{\Lambda}(\V{k})\beta_{0,\Lambda}(\V{k})\Id\V{k}.
\end{align*}
Notice that $E_{\Lambda}^{\ren}$ diverges logarithmically as $\Lambda\to\infty$, if $\ee\eta$ is
constant and non-zero near infinity.

The Nelson Hamiltonian on $\Geb$ with ultraviolet cutoff
$\Lambda\in[0,\infty)$, denoted $\NV_{\Geb,\Lambda}$, is the unique self-adjoint operator 
representing the semi-bounded, closed form $\nv_{\Geb,\Lambda}$ defined on the domain
\begin{align}\label{defQGV}
\QGV&:=\dom(\mathfrak{t}_\Geb^+)\cap L^2(\Geb,\fdom(\Id\Gamma(\omega)))
\end{align}
by the formula
\begin{align}\nonumber
\nv_{\Geb,\UV}[\Psi]&:=\mathfrak{t}_\Geb[\Psi]+
\int_{\Geb}\|\Id\Gamma(\omega)^\eh\Psi(\V{x})\|^2\Id\V{x}
\\\label{Nform}
&\quad+\int_{\Geb}\SPn{\Psi(\V{x})}{\vp(e_{\V{x}}f_{\UV})\Psi(\V{x})}\Id\V{x}
+E_{\UV}^{\ren}\|\Psi\|^2,\quad\Psi\in{\QGV}.
\end{align}
The above form is indeed semi-bounded and closed, as the first line of the right hand side
of \eqref{Nform} contains a sum of two semi-bounded closed forms and, by \eqref{rbvp}, the second 
line is infinitesimally bounded with respect to the first. In fact, 
\begin{align}\label{OpformelNV}
(\NV_{\Geb,\UV}\Psi)(\V{x})&=(\TV_{\Geb}\Psi)(\V{x})+\Id\Gamma(\omega)\Psi(\V{x})
+\vp(e_{\V{x}}f_{\UV})\Psi(\V{x})+E_{\UV}^{\ren}\Psi(\V{x}),
\end{align}
for a.e. $\V{x}\in\Geb$ and all $\Psi\in\dom(\NV_{\Geb,\Lambda})=
\dom(\TV_\Geb)\cap L^2(\Geb,\dom(\Id\Gamma(\omega)))$; see, e.g.,
\cite[Lem.~5.1(1) and Rem.~5.8]{Matte2017}.

Defining the comparison form
\begin{align}\label{compform}
\cfG[\Psi]&:=\mathfrak{t}_{\Geb}^+[\Psi]+
\int_{\Geb}\|\Id\Gamma(\omega)^\eh\Psi(\V{x})\|^2\Id\V{x},
\end{align}
for all $\Psi\in{\QGV}$, we also have the formula
\begin{align*}
\nv_{\Geb,\UV}[\Psi]&=\cfG[\Psi]-\int_{\Geb}V_-(\V{x})\|\Psi(\V{x})\|^2\Id\V{x}
+\int_{\Geb}\SPn{\Psi(\V{x})}{\vp(e_{\V{x}}f_{\UV})\Psi(\V{x})}\Id\V{x}+E_{\UV}^{\ren}\|\Psi\|^2,
\end{align*}
as well as the relative bounds
\begin{align}\label{rbcfnv}
\frac{1}{c}\cfG[\Psi]-c\|\Psi\|^2&\le\nv_{\Geb,\UV}[\Psi]\le c\cfG[\Psi]+c\|\Psi\|^2,
\end{align}
where $c>0$ depends only on $\UV\in[0,\infty)$, $\ee$, and $V$. The latter bounds follow
again from \eqref{rbvp} and the fact that $V_-$ is infinitesimally form-bounded with respect
to the Laplacian.


\subsection{The Gross Transformation}\label{ssecGross}

\noindent
To define ultraviolet renormalized operators we follow Nelson \cite{Nelson1964} and introduce
a Gross transformation $G_{K,\UV}\in\LO(L^2(\Geb,\sF))$, for all $0\le K\le\UV\le\infty$ with 
$\beta_{K,\UV}\in L^2(\RR^3)$. This is the unitary operator given by
\begin{align}\label{defGrosstrafo}
(G_{K,\UV}\Psi)(\V{x})&:=\sW(e_{\V{x}}\beta_{K,\UV})\Psi(\V{x}),
\quad\text{a.e. $\V{x}\in\Geb$},
\end{align}
for every $\Psi\in L^2(\Geb,\sF)$. Note that $G_{K,\UV}$ depends on the open subset $\Geb$
of $\RR^3$, which is not displayed in the notation since $G_{K,\UV}$ is always 
given by the same formula. According to the discussion following \eqref{deffbeta},
$G_{K,\UV}$ is defined only for strictly positive $K$, if $\mu=0$ and 
$\ee^2|\V{m}|^{-3}\eta^2$ is not in $L^1_\loc(\RR^3)$.

For all finite $\UV$ but without any restrictions on $K\in[0,\UV]$, we shall now construct 
another quadratic form $\hv_{\Geb,K,\Lambda}$. Later on we shall verify that it is indeed 
the Gross transformed Nelson form provided that, in addition, $\beta_{K,\UV}\in L^2(\RR^3)$ holds.
Let $0\le K\le\UV<\infty$ and $\Psi\in{\QGV}\subset W^{1,2}(\RR^3,\sF)$. Then we define
\begin{align}\label{deftLambdax}
\mathfrak{t}_{K,\UV,\V{x}}[\Psi]&:=\frac{1}{2}\|\nabla\Psi(\V{x})
+i\vp(e_{\V{x}}\V{m}\beta_{K,\UV})\Psi(\V{x})\|^2+V(\V{x})\|\Psi(\V{x})\|^2,
\end{align}
and
\begin{align}\nonumber
\hv_{K,\UV,\V{x}}[\Psi]
&:=\mathfrak{t}_{K,\UV,\V{x}}[\Psi]+\|\Id\Gamma(\omega)^\eh\Psi(\V{x})\|^2
+\Big(E_K^{\ren}-\frac{1}{2}\|\V{m}\beta_{K,\UV}\|^2\Big)\|\Psi(\V{x})\|^2
\\\label{defhGx}
&\quad+\frac{1}{2}\SPn{\Psi(\V{x})}{\vp(e_{\V{x}}\V{m}^2\beta_{K,\UV})\Psi(\V{x})}
+\SPn{\Psi(\V{x})}{\vp(e_{\V{x}}f_{K})\Psi(\V{x})},
\end{align}
for a.e. $\V{x}\in\Geb$, as well as
\begin{align*}
\hv_{\Geb,K,\UV}[\Psi]&:=\int_{\Geb}\hv_{K,\UV,\V{x}}[\Psi]\Id\V{x}.
\end{align*}

\begin{rem}\label{remcompare}
 Let $0\le K\le\UV<\infty$. Then well-known arguments 
(see, e.g., \cite[Rem.~3.1 and \textsection4, in particular Lem.~4.5]{Matte2017}) imply the
following relative bounds with respect to the comparison form $\cfG$ defined in \eqref{compform},
\begin{align}\label{fbUV}
\frac{1}{c_\UV}\cfG[\Psi]-c_\UV\|\Psi\|^2&\le\hv_{\Geb,K,\UV}[\Psi]
\le{c_{\UV}}\cfG[\Psi]+c_{\UV}\|\Psi\|^2,
\end{align}
for all $\Psi\in{\QGV}$, where the constant $c_\UV>0$ depends only on $\UV$, $\ee$, and $V$.
\end{rem}

In view of \eqref{fbUV} the form $\hv_{\Geb,K,\UV}$ is semi-bounded and closed on its domain 
${\QGV}$. Denoting the unique self-adjoint operator representing $\hv_{\Geb, K,\UV}$
by $\HV_{\Geb,K,\UV}$, we have the following standard result. For the reader's convenience, 
we provide a proof of the next proposition in Appendix~\ref{appGrosstrafo}.

\begin{prop}\label{propGrosstrafo}
Let $0\le K\le\UV<\infty$ be such that $\beta_{K,\UV}\in L^2(\RR^3)$. Then
$G_{K,\UV}$ and $G_{K,\UV}^*$ map ${\QGV}$ into itself and
\begin{align}\label{trafoNelsonform}
\nv_{\Geb,\UV}[G_{K,\UV}^*\Psi]=\hv_{\Geb,K,\UV}[\Psi],
\quad\Psi\in{\QGV}.
\end{align}
In particular, 
\begin{align}\label{trafoNelsonHamUV}
G_{K,\Lambda}\NV_{\Geb,\UV}G_{K,\UV}^*=\HV_{\Geb,K,\UV}.
\end{align}
\end{prop}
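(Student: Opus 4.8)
The plan is to deduce the operator identity \eqref{trafoNelsonHamUV} from the form identity \eqref{trafoNelsonform}: since $\nv_{\Geb,\UV}$ and $\hv_{\Geb,K,\UV}$ are both semi-bounded and closed on the common domain $\QGV$ and $G_{K,\UV}$ is unitary, once we know that $G_{K,\UV}$ and $G_{K,\UV}^*$ leave $\QGV$ invariant and that \eqref{trafoNelsonform} holds, the operator $G_{K,\UV}\NV_{\Geb,\UV}G_{K,\UV}^*$ represents the closed form $\Psi\mapsto\nv_{\Geb,\UV}[G_{K,\UV}^*\Psi]=\hv_{\Geb,K,\UV}[\Psi]$ and hence coincides with $\HV_{\Geb,K,\UV}$ by uniqueness. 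So there are really two tasks: invariance of $\QGV$ and the fibrewise form computation. Both rest on the facts that $G_{K,\UV}$ acts fibrewise through the Weyl operators $\sW(e_{\V{x}}\beta_{K,\UV})$, that $\beta_{K,\UV}$ is real and even so that $e_{\V{x}}\beta_{K,\UV}$ lies in the completely real subspace $\mathfrak{r}$ for every $\V{x}\in\RR^3$ (as does $\partial_{x_j}(e_{\V{x}}\beta_{K,\UV})$), and that $\int k_j\beta_{K,\UV}^2\,\Id\V{k}=0$ by parity; moreover, for finite $\UV$ all of $\beta_{K,\UV}$, $\omega^\eh\beta_{K,\UV}$, $\V{m}\beta_{K,\UV}$, $(\omega^\mh\vee1)\V{m}\beta_{K,\UV}$ and $\V{m}^2\beta_{K,\UV}$ lie in $L^2(\RR^3)$, which is what makes all occurring field operators densely defined and relatively $\Id\Gamma(\omega)^\eh$-bounded via \eqref{rba}--\eqref{rbvp}.

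\textbf{Invariance of $\QGV$.} Since $\sW(e_{\V{x}}\beta_{K,\UV})$ is unitary on $\sF$, we have $\|(G_{K,\UV}^{\pm}\Psi)(\V{x})\|=\|\Psi(\V{x})\|$ for a.e.\ $\V{x}$, so the $L^2$-norm and the $V_+$-integrability are preserved, and strong measurability of $\V{x}\mapsto(G_{K,\UV}^{\pm}\Psi)(\V{x})$ follows from strong continuity of the Weyl representation. The Weyl relations give the pull-through $\sW(g)^*\Id\Gamma(\omega)\sW(g)=\Id\Gamma(\omega)+\vp(\omega g)+\|\omega^\eh g\|^2$, which combined with \eqref{rbvp} yields $\|\Id\Gamma(\omega)^\eh(G_{K,\UV}^{\pm}\Psi)(\V{x})\|^2\le c\big(\|\Id\Gamma(\omega)^\eh\Psi(\V{x})\|^2+\|\Psi(\V{x})\|^2\big)$ and hence, after integration, $G_{K,\UV}^{\pm}\Psi\in L^2(\Geb,\fdom(\Id\Gamma(\omega)))$. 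For the weak gradient we invoke the chain rule \eqref{karl}: using $\partial_{x_j}(e_{\V{x}}\beta_{K,\UV})=-ie_{\V{x}}m_j\beta_{K,\UV}$ and that the scalar shift produced when commuting $\sW(e_{\V{x}}\beta_{K,\UV})$ past $\vp(e_{\V{x}}m_j\beta_{K,\UV})$ equals $-2\int k_j\beta_{K,\UV}^2\,\Id\V{k}=0$, a Leibniz computation gives
\[
\partial_{x_j}(G_{K,\UV}^{*}\Psi)(\V{x})=\sW(e_{\V{x}}\beta_{K,\UV})^{*}\big(\partial_{x_j}\Psi(\V{x})+i\vp(e_{\V{x}}m_j\beta_{K,\UV})\Psi(\V{x})\big),
\]
whose right-hand side lies in $L^2(\RR^3,\sF)$ by \eqref{rbvp}; the analogous formula with $-\beta_{K,\UV}$ handles $G_{K,\UV}$. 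The one genuinely delicate point is that $\dom(\mathfrak{t}_\Geb^+)$ is defined as the \emph{closure} of $\mathfrak{t}_{\Geb,\max}^+\restr_{\sD(\Geb,\sF)}$, i.e.\ it encodes a Dirichlet boundary condition, so that mere membership in $W^{1,2}(\Geb,\sF)\cap\fdom(V_+)$ is not enough; one must approximate $\Psi\in\QGV$ by elements of a core on which $G_{K,\UV}^{\pm}$ is manifestly seen to land in $\dom(\mathfrak{t}_\Geb^+)$ and then pass to the limit in the $\cfG$-graph norm, using the comparabilities \eqref{rbcfnv} and \eqref{fbUV}. This is the main obstacle, and it is exactly where the technical work deferred to App.~\ref{appGrosstrafo} concentrates.

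\textbf{The form identity.} Granting $\QGV$-invariance, write $\Phi:=G_{K,\UV}^{*}\Psi$ and expand $\nv_{\Geb,\UV}[\Phi]$ fibrewise. The gradient formula above together with $\|\Phi(\V{x})\|=\|\Psi(\V{x})\|$ gives immediately $\tfrac12\|\nabla\Phi(\V{x})\|^2+V(\V{x})\|\Phi(\V{x})\|^2=\mathfrak{t}_{K,\UV,\V{x}}[\Psi]$, i.e.\ the first line of \eqref{defhGx}. For the remaining terms we use the Weyl pull-throughs $\sW(g)\Id\Gamma(\omega)\sW(g)^{*}=\Id\Gamma(\omega)-\vp(\omega g)+\|\omega^\eh g\|^2$ and $\sW(g)\vp(h)\sW(g)^{*}=\vp(h)-2\Re\SPn{g}{h}$ with $g=e_{\V{x}}\beta_{K,\UV}$, together with the arithmetic identity $\SPn{\beta_{K,\UV}}{f_\UV}=E_\UV^{\ren}-E_K^{\ren}$ (which follows from $f_\UV 1_{\{|\V{m}|\le K\}}=f_K$). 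Collecting the field operators produces $\vp\big(e_{\V{x}}(f_\UV-\omega\beta_{K,\UV})\big)=\vp\big(e_{\V{x}}(f_K+\tfrac12\V{m}^2\beta_{K,\UV})\big)$, the crucial step being the pointwise resolvent identity $f_\UV-\omega\beta_{K,\UV}=f_K+\tfrac12\V{m}^2\beta_{K,\UV}$ built into the definition \eqref{deffbeta} of $\beta_{K,\UV}$; these are precisely the last two terms of \eqref{defhGx}. The scalar contributions combine as $\|\omega^\eh\beta_{K,\UV}\|^2-2(E_\UV^{\ren}-E_K^{\ren})+E_\UV^{\ren}=E_K^{\ren}-\tfrac12\|\V{m}\beta_{K,\UV}\|^2$, which is the coefficient appearing in \eqref{defhGx} and is equivalent to the identity $\|\omega^\eh\beta_{K,\UV}\|^2+\tfrac12\|\V{m}\beta_{K,\UV}\|^2+E_K^{\ren}=E_\UV^{\ren}$, again a direct consequence of \eqref{deffbeta} and the definition of the renormalization constants. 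Integrating over $\Geb$ yields $\nv_{\Geb,\UV}[G_{K,\UV}^{*}\Psi]=\hv_{\Geb,K,\UV}[\Psi]$ for all $\Psi\in\QGV$, i.e.\ \eqref{trafoNelsonform}, and \eqref{trafoNelsonHamUV} then follows as explained in the first paragraph.
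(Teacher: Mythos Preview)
Your proposal is correct and follows essentially the same route as the paper: invariance of $\QGV$ via the Weyl pull-through for $\Id\Gamma(\omega)$ (Lem.~\ref{lemGrWubd}) together with the gradient formula $\nabla(G_{K,\UV}^*\Psi)(\V{x})=\sW(-e_{\V{x}}\beta_{K,\UV})\big(\nabla\Psi(\V{x})+i\vp(e_{\V{x}}\V{m}\beta_{K,\UV})\Psi(\V{x})\big)$ (the paper's Lem.~\ref{lemnablaGrossPsi}, proved by precisely the core-approximation argument you flagged as the delicate point), followed by the same fibrewise computation using \eqref{Wvp}, \eqref{WdGammaform}, and the three algebraic identities you wrote down. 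The proof in App.~\ref{appGrosstrafo} is organized exactly along these lines.
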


\begin{prop}\label{propdomHKLV}
Let $0\le K\le\UV<\infty$. Then the Hamiltonian $\HV_{\Geb,K,\UV}$ has the domain
\begin{align*}
\dom(\HV_{\Geb,K,\UV})&=\dom(\TV_\Geb)\cap L^2(\Geb,\dom(\Id\Gamma(\omega))),
\end{align*}
and its action on $\Psi\in\dom(\HV_{\Geb,K,\UV})$ is given by
\begin{align}\nonumber
(\HV_{\Geb,K,\UV}\Psi)(\V{x})&=
(\TV_\Geb\Psi)(\V{x})+\Id\Gamma(\omega)\Psi(\V{x})
-i\vp(e_{\V{x}}\V{m}\beta_{K,\UV})\cdot\nabla\Psi(\V{x})
\\\nonumber
&\quad+\frac{1}{2}\vp(e_{\V{x}}\V{m}\beta_{K,\UV})^2\Psi(\V{x})+\vp(e_{\V{x}}f_{K})\Psi(\V{x})
\\\label{forHKLV}
&\quad+a(e_{\V{x}}\V{m}^2\beta_{K,\UV})\Psi(\V{x})
-\frac{1}{2}\|\V{m}\beta_{K,\UV}\|^2\Psi(\V{x}),\quad\text{a.e. $\V{x}\in\Geb$.}
\end{align}
If $\sC$ is a core for the Schr\"{o}dinger operator on scalar functions $\SGV$ and
$\sD$ is a core for $\Id\Gamma(\omega)$, then an operator core for $\HV_{\Geb,K,\UV}$ is given by
$\mathrm{span}\{g\phi|\,g\in\sC,\phi\in\sD\}$.
Furthermore, $\TV_\Geb(g\phi)=(\SGV g)\phi$, for all $g\in\sC$ and $\phi\in\sD$.
\end{prop}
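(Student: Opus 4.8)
The plan is to reduce everything to the already-established unitary intertwining relation \eqref{trafoNelsonHamUV} for suitable choices of the cutoff parameters, and to a perturbation argument for the remaining cases. First I would observe that the right-hand side of \eqref{forHKLV} does define a symmetric operator on $\dom(\TV_\Geb)\cap L^2(\Geb,\dom(\Id\Gamma(\omega)))$: for $\Psi$ in that domain, $(\TV_\Geb\Psi)(\V{x})$ and $\Id\Gamma(\omega)\Psi(\V{x})$ are the two leading terms, and the remaining field terms are controlled by the relative bounds \eqref{rba}, \eqref{rbad}, \eqref{rbvp} applied pointwise in $\V{x}$ with $\vk=\omega$ (note $\beta_{K,\UV}$, $\V{m}\beta_{K,\UV}$, $\V{m}^2\beta_{K,\UV}$, and $f_K$ all lie in $\dom(\omega^{-1/2})$ for $0\le K\le\UV<\infty$, since $\UV$ is finite and, in the massless case, the factor $1_{\{|\V{m}|>K\}}$ or the extra power of $\omega$ in $\beta$ removes the infrared singularity), together with the fact that $\|\nabla\Psi(\V{x})\|$ is form-bounded by $\mathfrak{t}_\Geb^+$. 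These same bounds show that the right-hand side of \eqref{forHKLV} is infinitesimally bounded relative to $\TV_\Geb+\Id\Gamma(\omega)$ on the stated domain. Call the operator so defined $\widehat{H}$.

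Next I would identify $\widehat{H}$ with $\HV_{\Geb,K,\UV}$. The cleanest route is: the operator $\TV_\Geb\oplus\Id\Gamma(\omega)$-type object on $\dom(\TV_\Geb)\cap L^2(\Geb,\dom(\Id\Gamma(\omega)))$ is self-adjoint (this is the $K=\UV=0$, $\ee=0$ degenerate case, or can be taken from \cite{Matte2017}), and by the relative bound just established $\widehat{H}$ is a self-adjoint operator with that same domain, by the Kato–Rellich theorem applied in the form sense / Wüst's theorem — one must check the perturbation is symmetric and infinitesimally bounded, which the pointwise estimates give. Then I would verify that the quadratic form of $\widehat{H}$ equals $\hv_{\Geb,K,\UV}$ on ${\QGV}$: for $\Psi\in\dom(\TV_\Geb)\cap L^2(\Geb,\dom(\Id\Gamma(\omega)))$ one computes $\SPn{\Psi}{\widehat{H}\Psi}$ term by term and matches it against \eqref{deftLambdax}–\eqref{defhGx}, using $\vp=\ad+a$, the commutation relation \eqref{CCR1} to turn $\tfrac12\vp(e_{\V{x}}\V{m}\beta)^2$-type terms into the form $\tfrac12\|\nabla\Psi+i\vp(\dots)\Psi\|^2$ plus lower-order corrections — this is where the term $-\tfrac12\|\V{m}\beta_{K,\UV}\|^2$ and the split of $a(e_{\V{x}}\V{m}^2\beta)$ off from $\tfrac12\vp(e_{\V{x}}\V{m}^2\beta)$ come from; one also uses the chain rule \eqref{karl} to handle $\nabla(\sW(\cdot)\dots)$ if one prefers to argue via the Gross transformation directly on a dense set. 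Since a semi-bounded self-adjoint operator is determined by its form, and both $\widehat{H}$ and $\HV_{\Geb,K,\UV}$ represent $\hv_{\Geb,K,\UV}$, they coincide; in particular the domain is as claimed and the action is \eqref{forHKLV}. For the subcase $\beta_{K,\UV}\in L^2$ this is also an immediate corollary of Proposition \ref{propGrosstrafo}, since $G_{K,\UV}$ maps $\dom(\TV_\Geb)\cap L^2(\Geb,\dom(\Id\Gamma(\omega)))$ onto itself (the Weyl operator preserves $\dom(\Id\Gamma(\omega))$ because $\beta_{K,\UV}\in\dom(\omega^{1/2})$, and preserves $W^{1,2}$ via \eqref{karl}); working out $G_{K,\UV}(\TV_\Geb+\Id\Gamma(\omega)+\vp(e_{\V{x}}f_\UV)+E_\UV^{\ren})G_{K,\UV}^*$ with \eqref{karl} and the Weyl relations reproduces \eqref{forHKLV}.

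For the core statement I would argue as follows. If $\sC$ is a core for $\SGV$ and $\sD$ a core for $\Id\Gamma(\omega)$, then $\mathrm{span}\{g\phi\mid g\in\sC,\phi\in\sD\}$ is a core for the unperturbed operator $\TV_\Geb+\Id\Gamma(\omega)$ acting in $L^2(\Geb,\sF)$: indeed $C_0^\infty(\Geb)\otimes\sD$ type arguments, or the standard fact that for a sum of two commuting self-adjoint operators on a tensor product the algebraic tensor product of cores is a core, combined with $\TV_\Geb(g\phi)=(\SGV g)\phi$ for $g\in C_0^\infty(\Geb)$ — which in turn follows from the definition \eqref{deftGplus}–\eqref{defvvSchroedinger} of $\TV_\Geb$ and the fact that $\SGV$ restricted to nice functions is the classical Schrödinger operator, noting $\sD(\Geb,\sF)$ is a form core. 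Since $\widehat{H}-(\TV_\Geb+\Id\Gamma(\omega))$ is infinitesimally bounded relative to $\TV_\Geb+\Id\Gamma(\omega)$, any core for the latter is a core for $\widehat{H}=\HV_{\Geb,K,\UV}$; this is the standard fact that a relatively bounded perturbation with relative bound $<1$ does not change operator cores. The identity $\TV_\Geb(g\phi)=(\SGV g)\phi$ for $g\in\sC$, $\phi\in\sD$ then has to be checked also for a general core element $g$ of $\SGV$ (not just $g\in C_0^\infty(\Geb)$) — this follows from closedness of $\TV_\Geb$ by approximating $g$ in the graph norm of $\SGV$.

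The main obstacle I expect is the bookkeeping in the form computation that matches $\SPn{\Psi}{\widehat{H}\Psi}$ against $\hv_{\Geb,K,\UV}$, i.e., correctly accounting for the $c$-number shift $-\tfrac12\|\V{m}\beta_{K,\UV}\|^2$ and the decomposition of the minimally-coupled square $\tfrac12\|\nabla\Psi+i\vp(e_{\V{x}}\V{m}\beta)\Psi\|^2$ into the kinetic term, the cross term $-i\vp(e_{\V{x}}\V{m}\beta)\cdot\nabla\Psi$, and $\tfrac12\vp(e_{\V{x}}\V{m}\beta)^2\Psi$ — together with keeping track that $\tfrac12\vp(e_{\V{x}}\V{m}^2\beta)$ in \eqref{defhGx} is split into $a(e_{\V{x}}\V{m}^2\beta)$ plus (via \eqref{CCR1}) half of the commutator and the $\ad$-part, versus the $\tfrac12\|\V{m}\beta\|^2$ counterterm. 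All of this is routine once the pointwise relative bounds are in place, but it is the step where sign and normalization errors are easy to make; everything else is a direct invocation of Proposition \ref{propGrosstrafo}, the relative-bound machinery of \eqref{rba}–\eqref{rbvp}, and standard core-stability facts.
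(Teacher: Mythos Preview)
Your approach is correct in spirit and is essentially the direct argument that the paper defers to the reference \cite{Matte2017}: there the domain, operator core, and explicit action are established for general Pauli--Fierz type operators with Kato-class potentials, and the paper's own proof simply invokes \cite[Thm.~5.7, Rem.~5.8, Prop.~5.2(1)]{Matte2017} together with the algebraic identity
\[
\tfrac{i}{2}\vp(ie_{\V{x}}\V{m}^2\beta_{K,\UV})\psi+\tfrac{1}{2}\vp(e_{\V{x}}\V{m}^2\beta_{K,\UV})\psi
= a(e_{\V{x}}\V{m}^2\beta_{K,\UV})\psi
\]
to convert the formula coming from expanding the minimally coupled Laplacian into the stated form \eqref{forHKLV}. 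So you are reconstructing what the cited reference does rather than taking a different route.

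One point you gloss over deserves a flag: the claim that the perturbation is \emph{infinitesimally operator-bounded} relative to $\TV_\Geb+\Id\Gamma(\omega)$ is not a direct consequence of \eqref{rba}--\eqref{rbvp} applied pointwise. The delicate term is the cross term $-i\vp(e_{\V{x}}\V{m}\beta_{K,\UV})\cdot\nabla\Psi$: bounding its $L^2(\Geb,\sF)$-norm requires control of $\int_\Geb\|\Id\Gamma(\omega)^{1/2}\nabla\Psi(\V{x})\|^2\,\Id\V{x}$, i.e., a \emph{mixed} estimate, not just separate control of $\|\nabla\Psi\|$ and $\|\Id\Gamma(\omega)\Psi\|$. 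This is handled by exploiting that $\TV_\Geb$ and $\Id\Gamma(\omega)$ commute (so $\|A^{1/2}B^{1/2}\Psi\|\le\tfrac{1}{\sqrt{2}}\|(A+B)\Psi\|$ for commuting nonnegative $A,B$), together with the infinitesimal form bound of $V_-$ with respect to the Laplacian; this is precisely the content of the self-adjointness results in \cite{HaslerHerbst2008,Hiroshima2000esa,Hiroshima2002,Matte2017} that the paper cites. Once this mixed estimate is in place, your Kato--Rellich argument and the core-stability conclusion go through exactly as you describe.
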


\begin{proof}
The assertions on the domain and operator cores of $\HV_{\Geb,K,\UV}$ 
are special cases of \cite[Thm.~5.7 and Rem.~5.8]{Matte2017}.
(For a smaller class of potentials $V$ and $\Geb=\RR^3$, these results also follow from 
\cite{HaslerHerbst2008,Hiroshima2000esa,Hiroshima2002}.) Of course, the last assertion
is standard; see, e.g., \cite[Lem.~4.2]{Matte2017}.
The formula \eqref{forHKLV} is a direct consequence of \cite[Prop.~5.2(1)]{Matte2017} and the relation
\begin{align}\label{fiberXX}
\frac{i}{2}\vp(ie_{\V{x}}\V{m}^2\beta_{K,\Lambda})\psi
+\frac{1}{2}\vp(e_{\V{x}}\V{m}^2\beta_{K,\Lambda})\psi
&=a(e_{\V{x}}\V{m}^2\beta_{K,\Lambda})\psi,
\end{align}
valid for all $\V{x}\in\RR^3$ and $\psi\in\fdom(\Id\Gamma(\omega))$.
\end{proof}


\subsection{Ultraviolet Renormalization}\label{ssecUVren}

\noindent
To discuss the limiting behavior as $\UV$ goes to infinity we split up
$$
\beta_{K,\UV}=\beta_{K,L}+\beta_{L,\UV},\quad \text{for $0\le K\le L\le \UV.$}
$$
This gives rise to a representation of the form $\hv_{\Geb,K,\UV}$ as the sum
of a ``simple" part with fixed ultraviolet cutoff at $L$, whose properties are
well-known, and a $\UV$-dependent perturbation. A computation explained in 
Lemma~\ref{lemLsplit} shows indeed that
\begin{align}\label{defqKLV}
\hv_{\Geb,K,\UV}[\Psi]&=\hv_{\Geb,K,L}[\Psi]+\mathfrak{v}_{\Geb,K,L,\UV}[\Psi],
\end{align}
for all $\Psi\in{\QGV}$ and $0\le K\le L\le\UV<\infty$, with
\begin{align*}
\mathfrak{v}_{\Geb,K,L,\UV}[\Psi]:=\int_{\Geb}\mathfrak{v}_{K,L,\UV,\V{x}}[\Psi]\Id\V{x},
\end{align*}
where, for a.e. $\V{x}\in\Geb$, we abbreviate
\begin{align}\nonumber
&\mathfrak{v}_{K,L,\UV,\V{x}}[\Psi]
\\\nonumber
&:=
\Re\Big[2\SPB{a(e_{\V{x}}\V{m}\beta_{L,\UV})\Psi(\V{x})}{-i\nabla\Psi(\V{x})
+\vp(e_{\V{x}}\V{m}\beta_{K,L})\Psi(\V{x})}
\\\label{defvKL}
&\quad+\|a(e_{\V{x}}\V{m}\beta_{L,\UV})\Psi(\V{x})\|^2+\SPn{\ad(e_{\V{x}}\V{m}\beta_{L,\UV})
\Psi(\V{x})}{a(e_{\V{x}}\V{m}\beta_{L,\UV})\Psi(\V{x})}\Big].
\end{align}
By a simple modification of Nelson's ideas \cite{Nelson1964} and subsequent
extensions to massless bosons \cite{GriesemerWuensch2017,HHS2005} we
obtain the relative bounds on $\mathfrak{v}_{\Geb,K,L,\UV}$ in the next two
propositions, whose proofs are deferred to Appendix~\ref{apprbUV}. The main difference to the
earlier work is the introduction of the extra parameter $L$.
In \cite{GriesemerWuensch2017,HHS2005,Nelson1964} the case $K=L$ is
treated which forces one to either choose $K$ large enough or $|\ee|$ small enough.

\begin{prop}\label{proprbUV1}
Let $\ve>0$. Then there exists $L_{\ve}>0$, otherwise only depending on $\ee$, such that, for all 
$L\ge L_\ve$, there exists $c_{\ve,L}>0$, otherwise only depending on $\ee$ and $V$, such that
\begin{align*}
\int_{\Geb}|\mathfrak{v}_{K,L,\UV,\V{x}}[\Psi]|\Id\V{x}&\le\ve\hv_{\Geb,K,L}[\Psi]+c_{\ve,L}\|\Psi\|^2,
\end{align*}
for all $\Psi\in{\QGV}$ and $K,\UV$ satisfying $0\le K\le L\le\UV<\infty$.
\end{prop}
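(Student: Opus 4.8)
The plan is to establish Prop.~\ref{proprbUV1} by adapting Nelson's original relative boundedness estimates \cite{Nelson1964} in the massless-friendly form of \cite{GriesemerWuensch2017,HHS2005}, working termwise on the four contributions appearing in $\mathfrak{v}_{K,L,\UV,\V{x}}[\Psi]$ as displayed in \eqref{defvKL}. The key new input compared to the cited literature is that the creation/annihilation operators carry the wave function $e_{\V{x}}\V{m}\beta_{L,\UV}$ supported on $\{|\V{m}|>L\}$, so the relevant small parameter is a tail norm like $\|\omega^{\mh}\V{m}\beta_{L,\UV}\|$ or $\|\V{m}\beta_{L,\UV}\|$, which can be made as small as we please by taking $L$ large, \emph{uniformly in $K$ and $\UV$}. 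Concretely, since $\beta_{L,\UV}=(\omega+\V{m}^2/2)^{-1}f_{\UV}1_{\{|\V{m}|>L\}}$ and $f_{\UV}=\ee\omega^{\mh}\eta1_{\{|\V{m}|\le\UV\}}$, one has the pointwise bound $|\V{m}\beta_{L,\UV}|\le 2|\ee|\,|\V{m}|\,\omega^{-\nf{3}{2}}1_{\{L<|\V{m}|\}}\le 2|\ee|\,\omega^{-\eh}1_{\{L<|\V{m}|\}}$, so $\|\V{m}\beta_{L,\UV}\|^2\le 4\ee^2\int_{\{|\V{k}|>L\}}\omega(\V{k})^{-1}\Id\V{k}$, which is finite and tends to $0$ as $L\to\infty$ with a rate depending only on $\ee$; similarly $\|\omega^{\mh}\V{m}\beta_{L,\UV}\|\to0$. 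I would fix $L_\ve$ so that these tail norms are below a threshold determined by $\ve$.

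Next I would bound each of the four summands. For the term $2\Re\SPn{a(e_{\V{x}}\V{m}\beta_{L,\UV})\Psi(\V{x})}{-i\nabla\Psi(\V{x})}$ I use the $a$-bound \eqref{rba} fiberwise, $\|a(e_{\V{x}}\V{m}\beta_{L,\UV})\Psi(\V{x})\|\le\|\omega^{\mh}\V{m}\beta_{L,\UV}\|\,\|\Id\Gamma(\omega)^\eh\Psi(\V{x})\|$, together with Cauchy--Schwarz and Young's inequality $2\alpha\beta\le\delta\alpha^2+\delta^{-1}\beta^2$, producing $\delta\|\nabla\Psi(\V{x})\|^2$ plus a small multiple of $\|\Id\Gamma(\omega)^\eh\Psi(\V{x})\|^2$, both controlled by the comparison form $\cfG$ and hence, via \eqref{fbUV}, by $\hv_{\Geb,K,L}+\text{const}\|\Psi\|^2$. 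The crossed term with $\vp(e_{\V{x}}\V{m}\beta_{K,L})\Psi(\V{x})$ is the only place where $K$ and the fixed cutoff $L$ enter the dangerous object; here I would use \eqref{rbvp} to bound $\|\vp(e_{\V{x}}\V{m}\beta_{K,L})\Psi(\V{x})\|$ by $2\|(\omega^{\mh}\vee1)\V{m}\beta_{K,L}\|\,\|(1+\Id\Gamma(\omega))^\eh\Psi(\V{x})\|$ — note $\|\V{m}\beta_{K,L}\|$ is finite \emph{for fixed $L$} even when $K=0$ and $\mu=0$, since the singularity of $\beta_{0,L}$ at the origin is mild enough that $|\V{m}\beta_{0,L}|\lesssim|\ee|\,|\V{m}|^{-\nf12}$ is locally square integrable in $3$ dimensions — and then pair this with the small factor $\|\omega^{\mh}\V{m}\beta_{L,\UV}\|$ from the $a$-factor. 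The quadratic term $\|a(e_{\V{x}}\V{m}\beta_{L,\UV})\Psi(\V{x})\|^2$ is directly $\le\|\omega^{\mh}\V{m}\beta_{L,\UV}\|^2\|\Id\Gamma(\omega)^\eh\Psi(\V{x})\|^2$, small times the comparison form. The last term, a normal-ordered quadratic expression $\SPn{\ad(e_{\V{x}}\V{m}\beta_{L,\UV})\Psi(\V{x})}{a(e_{\V{x}}\V{m}\beta_{L,\UV})\Psi(\V{x})}$, I would handle by pulling the annihilation operator to the right using the commutation relations \eqref{CCR1} (picking up an $\|\V{m}\beta_{L,\UV}\|^2$ from the contraction) and then estimating the genuinely quadratic piece via \eqref{rba} and \eqref{rbad}, again gaining two small tail factors. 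Integrating each fiberwise bound in $\V{x}$ over $\Geb$ and invoking \eqref{fbUV} to absorb the comparison form into $\hv_{\Geb,K,L}$ completes the estimate, with $c_{\ve,L}$ depending on $\ve$, $L$, $\ee$, and $V$ only.

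The main obstacle I anticipate is bookkeeping the uniformity in $K$ and $\UV$ simultaneously while keeping the $L$-dependent constants honest: one must be sure that every ``large'' norm that is allowed to be $L$-dependent (namely $\|\V{m}\beta_{K,L}\|$, which sits on the fixed-cutoff side) is indeed finite uniformly in $K\in[0,L]$ — here the worst case $K=0$, $\mu=0$ is the one to check, and it works because $\int_{\{|\V{k}|\le L\}}|\V{k}|^2\omega(\V{k})^{-3}\Id\V{k}<\infty$ in $\RR^3$ — while every ``small'' norm ($\|\omega^{\mh}\V{m}\beta_{L,\UV}\|$, $\|\V{m}\beta_{L,\UV}\|$) is bounded by the tail integral $\int_{\{|\V{k}|>L\}}\omega(\V{k})^{-1}\Id\V{k}$ uniformly in $\UV\in[L,\infty)$ and tends to zero as $L\to\infty$ at a rate depending only on $\ee$. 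Once this dichotomy is set up cleanly, choosing $L_\ve$ to make the small factors beat the implicit constants from \eqref{fbUV} and the Young inequalities is routine. A secondary technical point is to justify that all manipulations — the fiberwise estimates, the use of \eqref{CCR1} and the relative bounds \eqref{rba}--\eqref{rbvp} — are legitimate on the form domain ${\QGV}=\dom(\mathfrak{t}_\Geb^+)\cap L^2(\Geb,\fdom(\Id\Gamma(\omega)))$, which is exactly the setting in which those bounds were stated, so this reduces to an approximation/density argument of the kind already used in \cite{Matte2017}.
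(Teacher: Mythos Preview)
Your approach has a genuine gap in the treatment of the normal-ordered term
$\Re\SPn{\ad(e_{\V{x}}\V{m}\beta_{L,\UV})\Psi(\V{x})}{a(e_{\V{x}}\V{m}\beta_{L,\UV})\Psi(\V{x})}$.
The computation you give for the tail norm is wrong: from
$|\V{m}\beta_{L,\UV}|\le 2|\ee|\,\omega^{-\eh}1_{\{|\V{m}|>L\}}$ you conclude
$\|\V{m}\beta_{L,\UV}\|^2\le 4\ee^2\int_{\{|\V{k}|>L\}}\omega(\V{k})^{-1}\Id\V{k}$ and claim this is finite, but in three dimensions $\int_{\{|\V{k}|>L\}}\omega(\V{k})^{-1}\Id\V{k}=\infty$. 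Even with the sharper bound $|\V{m}\beta_{L,\UV}|\lesssim|\ee|\,|\V{k}|^{-\nf32}$ for $|\V{k}|>L$, one still gets $\|\V{m}\beta_{L,\UV}\|^2\sim\ee^2\ln(\UV/L)$, which diverges as $\UV\to\infty$. Consequently the $\ad$-bound \eqref{rbad}, which requires $\|(\omega^{\mh}\vee1)\V{m}\beta_{L,\UV}\|=\|\V{m}\beta_{L,\UV}\|$ for $L\ge1$, cannot yield anything uniform in $\UV$, and neither can the ``contraction'' term $\|\V{m}\beta_{L,\UV}\|^2$ you propose to extract via the CCR.

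This logarithmic divergence is precisely the obstruction that Nelson's renormalization has to circumvent, and the paper does so via Lem.~\ref{lemtertius}: one writes $\SPn{\ad(\V{m}\beta)\psi}{a(\V{m}\beta)\psi}=\SPn{\psi}{a(\V{m}\beta)a(\V{m}\beta)\psi}$ and bounds the double annihilation by the two-body pull-through estimate \eqref{myresluger1} combined with \eqref{myresluger2}. This effectively trades the divergent factor $\|\V{m}\beta_{L,\UV}\|$ for $\|(|\V{m}|^{\eh}\vee|\V{m}|^{\nf34})\beta_{L,\UV}\|$, which \emph{is} finite uniformly in $\UV$ and vanishes as $L\to\infty$. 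The remaining terms in $\mathfrak{v}_{K,L,\UV,\V{x}}$ are handled essentially as you describe (see Lem.~\ref{lemvx1}), with the minor difference that the paper keeps the covariant gradient $\nabla\Psi+i\vp(e_{\V{x}}\V{m}\beta_{K,L})\Psi$ intact, which matches $\hv_{\Geb,K,L}$ directly via \eqref{sarah1} without passing through \eqref{fbUV}.
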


In the next proposition and henceforth we abbreviate
\begin{align*}
b_{K,\UV}:=\Big(\int_{K<|\V{k}|\le\UV}
\frac{1\vee|\V{k}|^{\eh}}{(|\V{k}|+\V{k}^2/2)^2}\Id\V{k}\Big)^\eh,\quad0\le K\le\UV\le\infty.
\end{align*}

\begin{prop}\label{proprbUV2}
There exists $c>0$, depending only on $\ee$, and, for all $L\ge0$, there exists $c_L>0$, 
otherwise only depending on $\ee$ and $V$, such that
\begin{align*}
\int_{\Geb}|\mathfrak{v}_{K,L,\UV,\V{x}}[\Psi]-\mathfrak{v}_{K,L,\UV',\V{x}}[\Psi]|\Id\V{x}
&\le cb_{\UV,\UV'}(\hv_{\Geb,K,L}[\Psi]+c_L\|\Psi\|^2),
\end{align*}
for all $\Psi\in{\QGV}$, $K\in[0,L]$, and $L\le\UV\le\UV'<\infty$.
\end{prop}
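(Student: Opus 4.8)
\emph{Plan of proof.} Since $\mathfrak{v}_{K,L,\UV,\V{x}}=\hv_{K,\UV,\V{x}}-\hv_{K,L,\V{x}}$ by \eqref{defqKLV}, the two cutoff-$L$ pieces cancel and it suffices to bound $\int_{\Geb}|\mathfrak{v}_{K,L,\UV,\V{x}}[\Psi]-\mathfrak{v}_{K,L,\UV',\V{x}}[\Psi]|\Id\V{x}$, which I would do by inserting the splitting $\beta_{L,\UV'}=\beta_{L,\UV}+\beta_{\UV,\UV'}$ directly into the expression \eqref{defvKL}. Writing $\gamma_{\V{x}}:=e_{\V{x}}\V{m}\beta_{\UV,\UV'}$, this produces finitely many terms in each of which $\gamma_{\V{x}}$ occurs as the argument of at least one annihilation operator, namely: terms in which $a(\gamma_{\V{x}})$ is paired by Cauchy--Schwarz with $-i\nabla\Psi(\V{x})$, with $\vp(e_{\V{x}}\V{m}\beta_{K,L})\Psi(\V{x})$, or with $a(e_{\V{x}}\V{m}\beta_{L,\UV})\Psi(\V{x})$; and the ``pair'' terms $\|a(\gamma_{\V{x}})\Psi(\V{x})\|^2$, $\SPn{\Psi(\V{x})}{a(e_{\V{x}}\V{m}\beta_{L,\UV})a(\gamma_{\V{x}})\Psi(\V{x})}$, and $\SPn{\ad(\gamma_{\V{x}})\Psi(\V{x})}{a(\gamma_{\V{x}})\Psi(\V{x})}$. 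The single dimensional input is $\|\omega^\mh\V{m}\beta_{\UV,\UV'}\|^2\le\ee^2 b_{\UV,\UV'}^2$, which follows at once from $|\V{k}|^2\le\omega(\V{k})^2$ and $\omega(\V{k})+\V{k}^2/2\ge|\V{k}|+\V{k}^2/2$; in the same way $\|\omega^\mh\V{m}\beta_{L,\UV}\|\le|\ee| b_{0,\infty}$ uniformly in $L$ and $\UV$, while $\|(\omega^\mh\vee1)\V{m}\beta_{K,L}\|$ is finite and depends only on $L$ and $\ee$.

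For the first three families one applies Cauchy--Schwarz in $\V{x}$ together with the relative bounds \eqref{rba} and \eqref{rbvp}: each such term is at most $|\ee| b_{\UV,\UV'}$ times a product of factors of the form $(\int_\Geb\|\nabla\Psi(\V{x})\|^2\Id\V{x})^\eh$ and $(\int_\Geb\|(1+\Id\Gamma(\omega))^\eh\Psi(\V{x})\|^2\Id\V{x})^\eh$, each $\le c(\cfG[\Psi]+\|\Psi\|^2)^\eh$. By \eqref{fbUV} one has $\cfG[\Psi]\le c_L(\hv_{\Geb,K,L}[\Psi]+\|\Psi\|^2)$, and conversely, invoking Prop.~\ref{proprbUV1} with a fixed $\ve\in(0,1)$ (and routing through the intermediate cutoff $L_\ve$ whenever $L<L_\ve$), one obtains $\hv_{\Geb,K,\UV}[\Psi]\le C_L(\hv_{\Geb,K,L}[\Psi]+\|\Psi\|^2)$ with $C_L$ independent of $\UV$; hence these terms are $\le c\,b_{\UV,\UV'}(\hv_{\Geb,K,L}[\Psi]+c_L\|\Psi\|^2)$. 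The term $\|a(\gamma_{\V{x}})\Psi(\V{x})\|^2$ is treated the same way, using $b_{\UV,\UV'}^2\le b_{0,\infty}b_{\UV,\UV'}$ to retain only a first power of $b_{\UV,\UV'}$.

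The delicate point is the two ``double annihilation'' contributions $\SPn{\Psi(\V{x})}{a(e_{\V{x}}\V{m}\beta_{L,\UV})a(\gamma_{\V{x}})\Psi(\V{x})}$ and $\SPn{\ad(\gamma_{\V{x}})\Psi(\V{x})}{a(\gamma_{\V{x}})\Psi(\V{x})}$. Here one must \emph{not} estimate by $\|\ad(\cdot)\Psi\|\,\|a(\cdot)\Psi\|$ nor by the elementary bound $\pm(a(f)^2+\ad(f)^2)\le2\ad(f)a(f)+\|f\|^2$, since that would reintroduce the quantity $\|\V{m}\beta_{\UV,\UV'}\|^2$, which behaves like $\ee^2\log(\UV'/\UV)$ and is \emph{not} controlled by $b_{\UV,\UV'}$. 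Instead one repeats Nelson's commutator (``pull-through'') argument exactly as in the proof of Prop.~\ref{proprbUV1}, but applied to the difference and with the extra parameter $L$, exploiting that $\beta_{\UV,\UV'}$ is supported in $\{\,\UV<|\V{m}|\le\UV'\,\}$ and carries the resolvent factor $(\omega+\V{m}^2/2)^{-1}$; this bounds the two contributions by $c\,\|\omega^\mh\V{m}\beta_{\UV,\UV'}\|^2(\hv_{\Geb,K,L}[\Psi]+c_L\|\Psi\|^2)$ and $c\,\|\omega^\mh\V{m}\beta_{L,\UV}\|\,\|\omega^\mh\V{m}\beta_{\UV,\UV'}\|\,(\hv_{\Geb,K,L}[\Psi]+c_L\|\Psi\|^2)$ respectively, the lower-order remainders produced by the commutators carrying a factor $b_{\UV,\UV'}$ as well. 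By $\|\omega^\mh\V{m}\beta_{\UV,\UV'}\|^2\le\ee^2 b_{\UV,\UV'}^2\le\ee^2 b_{0,\infty}b_{\UV,\UV'}$ and $\|\omega^\mh\V{m}\beta_{L,\UV}\|\le|\ee| b_{0,\infty}$ the desired factor $b_{\UV,\UV'}$ emerges, and summing all contributions yields the claim. I expect the main obstacle to be precisely the verification that Nelson's estimate, transplanted to the difference $\mathfrak{v}_{\Geb,K,L,\UV}-\mathfrak{v}_{\Geb,K,L,\UV'}$, is genuinely proportional to the first power of $b_{\UV,\UV'}$ rather than merely to $\|\V{m}\beta_{\UV,\UV'}\|^2$, the uniform bookkeeping of the $L$-dependent constants being a secondary nuisance.
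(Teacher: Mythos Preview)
Your overall strategy coincides with the paper's: insert the splitting $\beta_{L,\UV'}=\beta_{L,\UV}+\beta_{\UV,\UV'}$ into \eqref{defvKL}, dispatch the cross terms with $-i\nabla\Psi$, $\vp(e_{\V{x}}\V{m}\beta_{K,L})\Psi$, and $a(e_{\V{x}}\V{m}\beta_{L,\UV})\Psi$ by Cauchy--Schwarz plus \eqref{rba}/\eqref{rbvp}, and isolate the bilinear terms $\SPn{\ad(e_{\V{x}}\V{m}\beta)\Psi(\V{x})}{a(e_{\V{x}}\V{m}\beta')\Psi(\V{x})}$ as the only delicate ones. The paper packages the pointwise computation as Lem.~\ref{lemvx2} and then converts to $\hv_{\Geb,K,L}$ via the direct lower bound \eqref{sarah1}; your route through \eqref{fbUV} works too, and the detour through Prop.~\ref{proprbUV1} is not needed.

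There is one genuine imprecision, however, concerning the delicate terms. You assert that the Nelson-type estimate produces a factor $\|\omega^{-1/2}\V{m}\beta_{\UV,\UV'}\|$ times the \emph{form} norm. That is not what one gets: a bound on $\SPn{\ad(\V{m}\beta)\psi}{a(\V{m}\beta')\psi}$ using only $\omega^{-1/2}$ weights on both factors forces $\|\Id\Gamma(\omega)\psi\|$ (operator domain) on the right, which is not available for $\Psi\in\QGV$. The form-domain version --- this is the paper's key Lem.~\ref{lemtertius}, proved via the double-annihilator inequality \eqref{myresluger1}--\eqref{myresluger2} --- yields instead the weight $\|(|\V{m}|^{1/2}\vee|\V{m}|^{3/4})\beta\|$, i.e.\ $\omega^{-1/4}$ rather than $\omega^{-1/2}$ on the ultraviolet part. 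This is exactly why the definition of $b_{\UV,\UV'}$ carries the numerator $1\vee|\V{k}|^{1/2}$ and not just $1$: it has to absorb $\||\V{m}|^{3/4}\beta_{\UV,\UV'}\|$. Your inequality $\|\omega^{-1/2}\V{m}\beta_{\UV,\UV'}\|\le|\ee|\,b_{\UV,\UV'}$ is true but does not account for that factor; the inequality actually used is $\|(|\V{m}|^{1/2}\vee|\V{m}|^{3/4})\beta_{\UV,\UV'}\|\le|\ee|\,b_{\UV,\UV'}$. Once this correction is made, the rest of your outline goes through unchanged.
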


As alluded to above, in the earlier literature the next theorem has been proved only for sufficiently 
small $|\ee|$ \cite{HHS2005} or for sufficiently large $K$ \cite{GriesemerWuensch2017,Nelson1964}.

\begin{thm}\label{thmrbUV}
Let $K_0\in(0,\infty)$. Then the following holds, for all $K\in[0,K_0]$:
\begin{enumerate}
\item[{\rm(1)}] The following limits exist and define a closed semi-bounded form in $L^2(\Geb,\sF)$,
\begin{align}\label{defhKinftyV}
\hv_{\Geb,K,\infty}[\Psi]&:=\lim_{\UV\to\infty}\hv_{\Geb,K,\UV}[\Psi],\quad\Psi\in{\QGV}.
\end{align}
\item[{\rm(2)}] There exists $c>0$, depending only on $\ee$, $V$, and $K_0$, such that
\begin{align}\label{pernille0}
\frac{1}{c}\cfG[\Psi]-c\|\Psi\|^2&\le\hv_{\Geb,K,\UV}[\Psi]\le{c}\cfG[\Psi]+c\|\Psi\|^2,
\end{align}
for all $\Psi\in\QGV$ and $\UV\in[K,\infty]$. (Here $\cfG$ is defined in \eqref{compform}.) 
\item[{\rm(3)}] Let $\HV_{\Geb,K,\infty}$ denote the 
self-adjoint operator representing $\hv_{\Geb,K,\infty}$. Then
$$
\HV_{\Geb,K,\UV}\xrightarrow{\;\;\UV\to\infty\;\;}\HV_{\Geb,K,\infty}\quad
\text{in norm resolvent sense.}
$$
\item[{\rm(4)}] 
Let $\zeta_\UV>-\inf\sigma(\HV_{\Geb,K,\UV})+\epsilon$, for every $\UV\in[K,\infty]$
and some $\epsilon>0$, and suppose that $\zeta_\UV\to\zeta_\infty$, as $\UV\to\infty$. Abbreviate
\begin{align*}
\wt{D}_{\Geb,K,\UV}&:=(\HV_{\Geb,K,\UV}+\zeta_\UV)^{-1}
-(\HV_{\Geb,K,\infty}+\zeta_\infty)^{-1}.
\end{align*}
Then
\begin{align*}
\sup_{{\Psi\in{\QGV}:\atop\|\Psi\|=1}}\big\|(\HV_{\Geb,K,\infty}+\zeta_\infty)^{\nf{1}{2}}
\wt{D}_{\Geb,K,\UV}(\HV_{\Geb,K,\infty}+\zeta_\infty)^{\nf{1}{2}}\Psi\big\|&
\xrightarrow{\;\;\;\UV\to\infty\;\;\;}0,
\\
\sup_{{\Phi\in L^2(\Geb,\fdom(\Id\Gamma(\omega))):\atop\|\Phi\|=1}}
\big\|(1+\Id\Gamma(\omega))^{\nf{1}{2}}\wt{D}_{\Geb,K,\UV}
(1+\Id\Gamma(\omega))^{\nf{1}{2}}\Phi\big\|&\xrightarrow{\;\;\;\UV\to\infty\;\;\;}0.
\end{align*}
\end{enumerate}
\end{thm}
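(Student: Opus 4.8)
The plan is to prove Theorem~\ref{thmrbUV} as a consequence of Propositions~\ref{proprbUV1} and~\ref{proprbUV2}, using the fixed-cutoff form $\hv_{\Geb,K,L}$ as a stable comparison form and treating $\mathfrak{v}_{\Geb,K,L,\UV}$ as a genuinely small perturbation once $L$ is chosen large enough. The first move is to fix, once and for all, the value $L:=L_{1/2}$ supplied by Prop.~\ref{proprbUV1} with $\ve=1/2$ (noting that $L_\ve$ depends only on $\ee$, so it is a legitimate universal choice independent of $K\in[0,K_0]$). For this $L$ the perturbation bound reads $|\mathfrak{v}_{\Geb,K,L,\UV}[\Psi]|\le\tfrac12\hv_{\Geb,K,L}[\Psi]+c_L\|\Psi\|^2$, uniformly in $K\in[0,L]$ and $\UV\in[L,\infty)$. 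Since we only care about $K\le K_0$, and we may clearly enlarge $L$ to assume $L\ge K_0$, this covers all relevant $K$. Combined with the already-known relative bound \eqref{fbUV} for $\hv_{\Geb,K,L}$ against $\cfG$ (which holds with an $L$-dependent, hence now fixed, constant), the KLMN theorem immediately gives that each $\hv_{\Geb,K,\UV}=\hv_{\Geb,K,L}+\mathfrak{v}_{\Geb,K,L,\UV}$ is semibounded and closed on $\QGV$ with a lower bound uniform in $(K,\UV)\in[0,K_0]\times[L,\infty]$, and also an upper bound of the same type; combined with \eqref{fbUV} again this yields the two-sided estimate \eqref{pernille0} of part~(2), first for $\UV\in[L,\infty)$ and then, after part~(1) is established, for $\UV=\infty$ by passing to the limit. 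The range $\UV\in[K,L]$ is already covered by Remark~\ref{remcompare}.

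For part~(1), the existence of the limit \eqref{defhKinftyV} is where Prop.~\ref{proprbUV2} does the work: for $L\le\UV\le\UV'$ it gives $|\mathfrak{v}_{\Geb,K,L,\UV}[\Psi]-\mathfrak{v}_{\Geb,K,L,\UV'}[\Psi]|\le cb_{\UV,\UV'}(\hv_{\Geb,K,L}[\Psi]+c_L\|\Psi\|^2)$, and since $b_{\UV,\UV'}\to0$ as $\UV,\UV'\to\infty$ (the integrand is $O(|\V{k}|^{-7/2})$ at infinity, hence integrable, so the tail integrals vanish), the net $\UV\mapsto\hv_{\Geb,K,\UV}[\Psi]$ is Cauchy in $\RR$ for each fixed $\Psi\in\QGV$. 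The limit form $\hv_{\Geb,K,\infty}$ therefore satisfies $|\hv_{\Geb,K,\infty}[\Psi]-\hv_{\Geb,K,L}[\Psi]|\le\tfrac12\hv_{\Geb,K,L}[\Psi]+c_L\|\Psi\|^2$ (pass to the limit in the bound of Prop.~\ref{proprbUV1}), so it is a form-bounded perturbation of the closed form $\hv_{\Geb,K,L}$ with relative bound $\le1/2$; by KLMN it is itself semibounded and closed on $\QGV$. This proves~(1), and then~(2) at $\UV=\infty$ follows by letting $\UV\to\infty$ in the already-proven finite-$\UV$ version of \eqref{pernille0}.

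For part~(3), norm resolvent convergence, the standard route is to show convergence of the resolvents sandwiched appropriately. Fix $\zeta$ large enough that $\hv_{\Geb,K,\UV}+\zeta\ge1$ for all $\UV\in[L,\infty]$ (possible by the uniform lower bound just established). Writing $H_\UV:=\HV_{\Geb,K,\UV}$ and using the second resolvent identity in form language, one has
\begin{align*}
(H_\UV+\zeta)^{-1}-(H_\infty+\zeta)^{-1}
&=(H_\UV+\zeta)^{-1}\big(\mathfrak{v}_{\Geb,K,L,\infty}-\mathfrak{v}_{\Geb,K,L,\UV}\big)(H_\infty+\zeta)^{-1},
\end{align*}
interpreted via the bounded operators $(H_\bullet+\zeta)^{-1/2}$ mapping into $\QGV$ with norms controlled by \eqref{pernille0}. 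Since $(H_\infty+\zeta)^{-1/2}$ and $(H_\UV+\zeta)^{-1/2}$ land in $\QGV$ with uniformly bounded $\cfG^{1/2}$-norm, and since Prop.~\ref{proprbUV2} (taking $\UV'\to\infty$, legitimate because the difference form converges) shows the middle form is bounded by $cb_{\UV,\infty}(\hv_{\Geb,K,L}[\cdot]+c_L\|\cdot\|^2)\le c'b_{\UV,\infty}(\cfG[\cdot]+\|\cdot\|^2)$, the whole expression has operator norm $\le C b_{\UV,\infty}\to0$. This gives~(3), and essentially the same sandwiched estimate, now keeping the $(\HV_{\Geb,K,\infty}+\zeta_\infty)^{1/2}$ or $(1+\Id\Gamma(\omega))^{1/2}$ factors on the outside and using \eqref{pernille0} to pass between $\cfG$, $\Id\Gamma(\omega)$ and $\HV_{\Geb,K,\infty}$, yields the two uniform limits in~(4); one also needs $\zeta_\UV\to\zeta_\infty$ to absorb the difference $(H_\UV+\zeta_\UV)^{-1}-(H_\UV+\zeta_\infty)^{-1}=(\zeta_\infty-\zeta_\UV)(H_\UV+\zeta_\UV)^{-1}(H_\UV+\zeta_\infty)^{-1}$, which is $O(|\zeta_\infty-\zeta_\UV|)$ uniformly.

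\textbf{Main obstacle.} The routine parts are the KLMN bookkeeping and the resolvent identities; the one genuinely delicate point is making the sandwiched resolvent estimates in~(3) and especially~(4) rigorous when the perturbation $\mathfrak{v}_{\Geb,K,L,\UV}$ is only a \emph{form} (not operator) perturbation and involves first derivatives $\nabla\Psi$ — so one must consistently work with the form norm $\cfG[\cdot]^{1/2}+\|\cdot\|$, verify that $(\HV_{\Geb,K,\UV}+\zeta)^{-1/2}$ and $(1+\Id\Gamma(\omega))^{1/2}(\HV_{\Geb,K,\infty}+\zeta_\infty)^{-1/2}$ are bounded uniformly in the cutoff via \eqref{pernille0}, and check that the difference form $\mathfrak{v}_{\Geb,K,L,\infty}-\mathfrak{v}_{\Geb,K,L,\UV}$ extends continuously from $\QGV$ to the form domain so that the sandwiched product is a bona fide bounded operator with the claimed $O(b_{\UV,\infty})$ norm. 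Uniformity in $K\in[0,K_0]$ throughout is ensured precisely because every constant produced by Props.~\ref{proprbUV1} and~\ref{proprbUV2} depends on $K$ only through the fixed upper bound $L\ge K_0$.
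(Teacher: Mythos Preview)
Your proposal is essentially the same as the paper's proof: fix $L=\max\{K_0,L_{1/2}\}$ via Prop.~\ref{proprbUV1} with $\ve=1/2$, use Prop.~\ref{proprbUV2} for the Cauchy property giving~(1), derive~(2) from the KLMN-type bound combined with \eqref{fbUV}, and obtain~(3)--(4) from a relative form bound $|\hv_{\Geb,K,\UV}[\Psi]-\hv_{\Geb,K,\infty}[\Psi]|\le c'b_{\UV,\infty}(\hv_{\Geb,K,\infty}[\Psi]+\zeta\|\Psi\|^2)$ via sandwiched resolvent estimates.

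Two minor packaging differences are worth noting. First, for the range $\UV\in[K,L)$ in part~(2) you appeal to Remark~\ref{remcompare}; this works but requires observing that the constant $c_\UV$ there stays bounded for $\UV$ in the compact interval $[0,L]$ (it does, being built from norms like $\|\V{m}\beta_{K,\UV}\|$). The paper instead uses an $\eta$-trick: since the constant already obtained for $\UV\in[L,\infty]$ is $\eta$-independent, one can replace $\eta$ by $\eta\,1_{\{|\V{m}|\le\UV\}}$ to absorb any finite cutoff $\UV<L$ into the infrared function. Second, the ``main obstacle'' you flag---making the sandwiched resolvent-difference rigorous when the perturbation is only a form---is precisely what the paper isolates as the abstract Lemma~\ref{lemeasyresolvent}: given $|\mathfrak{a}[\psi]-\mathfrak{b}[\psi]|\le q\mathfrak{a}[\psi]+q\zeta\|\psi\|^2$ with $q<1$, one gets the $A^{1/2}(A^{-1}-B^{-1})A^{1/2}$ and $M^{1/2}(A^{-1}-B^{-1})M^{1/2}$ bounds directly. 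Applying that lemma with $A=\HV_{\Geb,K,\infty}+\zeta_\infty$, $B=\HV_{\Geb,K,\UV}+\zeta_\UV$, $M=1+\Id\Gamma(\omega)$, and $q$ of order $b_{\UV,\infty}+|\zeta_\UV-\zeta_\infty|$ handles both limits in~(4) (and hence~(3)) in one stroke, without having to write out the second resolvent identity or separately treat the $\zeta_\UV\to\zeta_\infty$ contribution.
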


\begin{proof}
We choose $\ve=1/2$ in Proposition~\ref{proprbUV1}, let $L_{\eh}$ denote the 
corresponding parameter appearing in its statement, and put $L:=\max\{K_0,L_{\eh}\}$ so that
$L$ depends on $\ee$ and $K_0$ only. Then
\begin{align}\label{knud}
|\mathfrak{v}_{\Geb,K,L,\UV}[\Psi]|\le\frac{1}{2}\hv_{\Geb,K,L}[\Psi]
+c\|\Psi\|^2,\quad\Psi\in{\QGV},
\end{align}
for all $\UV\in[L,\infty)$ and some $c>0$ depending only on $\ee$, $V$, and $K_0$. 
Proposition~\ref{proprbUV2} shows that the following limits exist,
\begin{align}\label{knud2}
\mathfrak{v}_{\Geb,K,L,\infty}[\Psi]:=\lim_{\UV\to\infty}\mathfrak{v}_{\Geb,K,L,\UV}[\Psi],
\quad\Psi\in{\QGV}.
\end{align}
In view of \eqref{defqKLV} the limits \eqref{defhKinftyV} exist as well.
Since \eqref{knud} extends to $\UV=\infty$,
the symmetric form $\mathfrak{v}_{\Geb,K,L,\infty}$ is a small perturbation of the 
semi-bounded, closed form $\hv_{\Geb,K,L}$. This shows that
$\hv_{\Geb,K,\infty}=\hv_{\Geb,K,L}+\mathfrak{v}_{\Geb,K,L,\infty}$
is semi-bounded and closed, too. Altogether this proves (1).

To prove (2) we first consider $\UV\in[L,\infty]$. Since $L$ depends only on $\ee$ and $K_0$,
the bound \eqref{pernille0} is then a consequence of \eqref{fbUV} and \eqref{knud} 
with $\UV\in[L,\infty]$. Since the constant in \eqref{pernille0} is $\eta$-independent,  
the case $\UV\in[K,L)$ can be accommodated for by choosing an appropriate $\eta$
in the bound derived for $\UV\in[L,\infty]$.

Another consequence of Proposition~\ref{proprbUV2}, \eqref{knud} with $\UV=\infty$, and
\eqref{knud2} is the bound
\begin{align}\label{pernille1}
|\hv_{\Geb,K,\UV}[\Psi]-\hv_{\Geb,K,\infty}[\Psi]|&\le
c'b_{\UV,\infty}(\hv_{\Geb,K,\infty}[\Psi]+\zeta\|\Psi\|^2),\quad\Psi\in{\QGV},
\end{align}
valid for all $\UV\in[L,\infty)$. Here $c'>0$ depends only on $\ee$ and $\zeta>0$
only on $\ee$, $V$, and $K_0$. The bounds \eqref{pernille0} and \eqref{pernille1} together with 
Lemma~\ref{lemeasyresolvent} now imply all statements of (4), which in turn implies (3).
\end{proof}

\begin{rem}\label{remdefrenNelson}
Pick $K\ge0$ such that $\beta_{K,\infty}$ is square-integrable. Then the strong continuity of
the Weyl representation implies that $G_{K,\UV}\to G_{K,\infty}$, $\UV\to\infty$, strongly.
Therefore, the following limit exists in strong resolvent sense,
\begin{align}\label{defNelsonHam}
\NV_{\Geb,\infty}:=\lim_{\UV\to\infty}\NV_{\Geb,\UV}
=\lim_{\UV\to\infty}G_{K,\UV}^*\HV_{\Geb,K,\UV}G_{K,\UV}
=G_{K,\infty}^*\HV_{\Geb,K,\infty}G_{K,\infty}.
\end{align}
This is Nelson's \cite{Nelson1964} definition of the 
{\em renormalized Nelson Hamiltonian $\NV_{\Geb,\infty}$}.
Later on it was observed that the convergence \eqref{defNelsonHam}
actually holds in norm resolvent sense as well
\cite{Ammari2000,GriesemerWuensch2017,MatteMoeller2017}.
\end{rem}

We denote the quadratic form associated with $\NV_{\Geb,\infty}$ by $\nv_{\Geb,\infty}$.

\begin{rem}\label{remdefnonFockNelson}
In the case where $\beta_{0,\infty}\notin L^2(\RR^3)$, we refer to the operator $\HV_{\Geb,0,\infty}$ 
as the {\em renormalized Nelson operator in the non-Fock representation}. 

In fact, the operators $\HV_{\RR^3,K,\infty}$, $K\ge0$, have already been
constructed non-perturbatively in \cite{MatteMoeller2017} as generators of the Feynman-Kac
semigroups introduced further below. The convergence 
$\HV_{\RR^3,K,\UV}\to\HV_{\RR^3,K,\infty}$, $\UV\to\infty$, in norm resolvent sense
has also been observed in \cite{MatteMoeller2017}. In the latter book it is, however, not
verified that the form corresponding to $\HV_{\RR^3,K,\infty}$ is given by the limit 
\eqref{defhKinftyV}. Notice that the latter result does not follow from the norm resolvent convergence
and general principles. For example, in the case where $\mu=0$ and $\ee\eta$ is constant and
non-zero, it is known \cite{GriesemerWuensch2017} that
$\fdom(\NV_{\RR^3,\infty})\cap{\sQ_{\RR^3}}=\{0\}$. In particular, 
the forms $\nv_{\RR^3,\UV}$, $\UV\in[0,\infty)$, which are defined on ${\sQ_{\RR^3}}$,
do not converge pointwise to $\nv_{\RR^3,\infty}$, although the corresponding operators converge
in norm resolvent sense.
\end{rem}

\begin{rem}
The two operators we are really interested in are $\NV_{\Geb,\infty}$ and $\HV_{\Geb,0,\infty}$.
Assume that $\beta_{0,\infty}\notin L^2(\RR^3)$. Then they are not unitarily equivalent, but they
still have the same spectrum. This holds because, by construction, $\NV_{\Geb,\infty}$ is unitarily
equivalent to every $\HV_{\Geb,K,\infty}$ with $K>0$ and
$\HV_{\Geb,K,\infty}\to\HV_{\Geb,0,\infty}$, $K\downarrow0$, in the norm resolvent sense;
for the latter result confer \cite{HHS2005,MatteMoeller2017} or apply Lemma~\ref{lempert} below
with $\hat{\ee}=\ee$ and $\hat{\eta}=1_{\{|\V{m}|>K\}}\eta$.
\end{rem}

For later reference we note two simple consequences of Theorem~\ref{thmrbUV}:

\begin{ex}\label{exIMS}
Let $0\le K<\UV\le\infty$ and let
$\chi^2_{0}+\chi^2_{1}=1$ be a smooth IMS type partition of unity on $\RR^3$, 
where $\chi_{0}\ge0$ has compact support in $\{|\V{x}|\le2\}$ and $\chi_1\ge0$ is
supported in $\{|\V{x}|\ge1\}$. 
We may further assume that $|\nabla\chi_k|\le2$, for $k\in\{0,1\}$.
Put $\chi_{k,R}(\V{x}):=\chi_k(\V{x}/R)$, $R\ge1$, $k\in\{0,1\}$. Then multiplication with 
$\chi_{0,R}$ or $\chi_{1,R}$ leaves $\dom(\hv_{\Geb,\UV})={\QGV}$ invariant, 
and the following IMS localization formula is valid for all $\Psi\in{\QGV}$,
\begin{equation}\label{IMSq}
\hv_{\Geb,K,\UV}[\Psi]=
\sum_{k=0}^1\bigg\{\hv_{\Geb,K,\UV}[\chi_{k,R}\Psi]
-\frac{1}{2R^2}\int_{\Geb}|\nabla\chi_{k}(\V{x}/R)|^2\|\Psi(\V{x})\|^2\Id\V{x}\bigg\}.
\end{equation}
The formula is well-known at least for finite $\UV$ and extends to $\UV=\infty$ by 
Theorem~\ref{thmrbUV}. If $f\in C^\infty(\RR^3,\RR)$ is bounded and has a bounded derivative, 
then multiplication with $f$ leaves ${\QGV}$ invariant as well, and we readily verify the relation
\begin{align}\label{exp5}
\Re\,\hv_{\Geb,K,\UV}[f^2\Psi,\Psi]&=
\hv_{\Geb,K,\UV}[f\Psi]-\frac{1}{2}\|(\nabla f)\Psi\|^2,\quad\Psi\in{\QGV},
\end{align}
again starting with finite $\UV$ and passing
to the limit $\UV\to\infty$ with the help of Theorem~\ref{thmrbUV}.
\end{ex}

After constructing $\NV_{\Geb,\infty}$ and clarifying its relation to $\HV_{\Geb,0,\UV}$
the parameter $K$ has served its purpose. We shall set it to zero in the remaining part of the main
text and simplify our notation by setting
\begin{align*}
\beta_\UV&:=\beta_{0,\UV},\quad
\hv_{\Geb,\UV}:=\hv_{\Geb,0,\UV},\quad \HV_{\Geb,\UV}:=\HV_{\Geb,0,\UV},\quad\UV\in[0,\infty].
\end{align*}
We further abbreviate
\begin{align*}
\mathfrak{v}_{\Geb,\UV}:=\mathfrak{v}_{\Geb,0,0,\UV},\quad\UV\in[0,\infty),
\end{align*}
and notice that, by \eqref{defqKLV} and \eqref{defhKinftyV}, the limit
\begin{align*}
\mathfrak{v}_{\Geb,\infty}&:=\lim_{\UV\to\infty}\mathfrak{v}_{\Geb,\UV}
\end{align*}
exists on $\QGV$. We then have
\begin{align}\label{forhtv}
\hv_{\Geb,\UV}=\mathfrak{t}_\Geb+\mathfrak{v}_{\Geb,\UV},\quad\UV\in[0,\infty],
\end{align}
where $\mathfrak{t}_\Geb$ is defined in \eqref{defvvSchroedinger}.

\begin{lem}\label{lempert}
Let $\UV\in[0,\infty]$. Pick a second coupling constant $\hat{\ee}$ and another measurable
even function $\hat{\eta}:\RR^3\to\RR$ with $0\le\hat{\eta}\le1$. Keep the boson mass
$\mu\ge0$ fixed and define
\begin{align*}
\hat{\beta}_\UV&:=\hat{\ee}\hat{\eta}\omega^\mh(\omega+\V{m}^2/2)^{-1}1_{\{|\V{m}|\le\UV\}},
\\
d_\UV&:=\|(|\V{m}|^\eh\vee|\V{m}|^{\nf{3}{4}})(\beta_\UV-\hat{\beta}_\UV)\|.
\end{align*}
Let $\hat{\mathfrak{h}}_{\Geb,\UV}$ be the quadratic form obtained upon putting 
$\hat{\ee}\hat{\eta}$ in place of $\ee\eta$ in the construction of $\hv_{\Geb,\UV}$. Then
\begin{align}\label{rbhtildeh}
|\hv_{\Geb,\UV}[\Psi]-\hat{\mathfrak{h}}_{\Geb,\UV}[\Psi]|&\le 
cd_\UV(\hv_{\Geb,\UV}[\Psi]+\zeta\|\Psi\|^2),\quad\Psi\in{\QGV},
\end{align}
for some $c>0$ depending only on $V$ and a common upper bound on $|\ee|$ and $|\hat{\ee}|$, 
and for some $\zeta>0$ depending only on $\ee$ and $V$.
\end{lem}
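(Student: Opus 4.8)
The plan is to imitate the proofs of Prop.~\ref{proprbUV1} and Prop.~\ref{proprbUV2}, now applied to the difference of the two Nelson forms. First I would reduce to finite ultraviolet cutoffs. By \eqref{forhtv}, the forms $\hv_{\Geb,\infty}$ and $\hat{\mathfrak{h}}_{\Geb,\infty}$ are the pointwise limits on $\QGV$ of $\hv_{\Geb,\UV}$ and $\hat{\mathfrak{h}}_{\Geb,\UV}$ as $\UV\to\infty$, and $d_\UV\uparrow d_\infty<\infty$ by monotone convergence: the integrand defining $d_\UV^2$ increases in $\UV$, and its integral over all of $\RR^3$ converges because $\omega+\V m^2/2\ge|\V m|+\V m^2/2$ makes $(|\V m|\vee|\V m|^{\nf{3}{4}\cdot2})\,\omega^{-1}(\omega+\V m^2/2)^{-2}$ integrable. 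Hence it suffices to prove \eqref{rbhtildeh} for $\UV\in[0,\infty)$ with $\UV$-independent constants and then pass to the limit.

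For finite $\UV$ I would insert the explicit form \eqref{defhGx} with $K=0$ (so that $f_0=\beta_{0,0}=E_0^{\ren}=0$) and set $\gamma:=\beta_\UV-\hat\beta_\UV$. Subtracting, the Schr\"odinger part $\mathfrak{t}_{\Geb}$ and the free-field term cancel; expanding the minimally coupled kinetic forms and normal-ordering via additivity of $\vp$ and \eqref{CCR1}, the divergent vacuum constants $\tfrac12(\|\V m\beta_\UV\|^2-\|\V m\hat\beta_\UV\|^2)$ cancel against the commutator contributions (the first Nelson-type cancellation), leaving
\[
\hv_{\Geb,\UV}[\Psi]-\hat{\mathfrak{h}}_{\Geb,\UV}[\Psi]
=\int_{\Geb}\Big\{\Re\SPn{\nabla\Psi(\V x)}{i\vp(e_{\V x}\V m\gamma)\Psi(\V x)}
+\tfrac12\SPn{\Psi(\V x)}{\vp(e_{\V x}\V m^2\gamma)\Psi(\V x)}\Big\}\Id\V x+\mathfrak{R}[\Psi],
\]
where $\mathfrak{R}[\Psi]$ gathers the terms bilinear in the interaction. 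Using $\SPn{\Psi(\V x)}{\vp(g)\Psi(\V x)}=2\Re\SPn{\Psi(\V x)}{a(g)\Psi(\V x)}$ and integrating the cross term by parts in $\V x$ via $\partial_{x_j}(e_{\V x}m_j\gamma)=-ie_{\V x}m_j^2\gamma$ (cf.~\eqref{karl}), the two $\V m^2\gamma$-contributions cancel and only $-2\sum_j\int_{\Geb}\Im\SPn{\partial_{x_j}\Psi(\V x)}{a(e_{\V x}m_j\gamma)\Psi(\V x)}\Id\V x$ remains from the first two terms; by \eqref{rba}, the pointwise inequality $\omega(\V k)^{\mh}|\V k|\le|\V k|^{\eh}\vee|\V k|^{\nf{3}{4}}$ (whence $\|\omega^{\mh}\V m\gamma\|\le d_\UV$) and Cauchy--Schwarz, this is $\le c\,d_\UV\cfG[\Psi]$. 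Likewise, after normal-ordering $\mathfrak{R}[\Psi]$ collapses (the residual $\|a(e_{\V x}m_j\gamma)\Psi(\V x)\|^2$-terms cancel) to $\sum_j\int_{\Geb}\Re\SPn{a(e_{\V x}m_j\gamma)\Psi(\V x)}{\vp(e_{\V x}m_j(\beta_\UV+\hat\beta_\UV))\Psi(\V x)}\Id\V x$, whose $a$--$a$ component is immediately $\le c\,d_\UV\cfG[\Psi]$ because $\|\omega^{\mh}\V m\gamma\|\le d_\UV$ and $\|\omega^{\mh}\V m(\beta_\UV+\hat\beta_\UV)\|$ is bounded uniformly in $\UV$ by a constant depending only on an upper bound for $|\ee|$ and $|\hat\ee|$. (Here and below $c$ may change from line to line and depends only on $V$ and that common coupling bound.)

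It then remains to collect the estimates and invoke \eqref{pernille0} (whose constant does not depend on $\eta$, hence may be chosen uniformly for $\ee$ and $\hat\ee$), which turns $\cfG[\Psi]$ into $c(\hv_{\Geb,\UV}[\Psi]+\|\Psi\|^2)$; renaming constants gives \eqref{rbhtildeh} for $\UV\in[0,\infty)$, and the reduction above supplies the case $\UV=\infty$. The hard part is the ultraviolet estimate of the remaining $a$--$\ad$ component of $\mathfrak{R}[\Psi]$: the naive relative bound \eqref{rbad} for $\ad(e_{\V x}m_j(\beta_\UV+\hat\beta_\UV))$ diverges logarithmically in $\UV$, so one must instead exploit the explicit shape $\beta_\UV=\ee\eta\,\omega^{\mh}(\omega+\V m^2/2)^{-1}1_{\{|\V m|\le\UV\}}$ through the commutator/pull-through manipulations of App.~\ref{apprbUV} --- this is precisely where the exponents $\eh$ and $\nf{3}{4}$ in the weight defining $d_\UV$ are sharp. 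Apart from the presence of two sets of coupling data and the bookkeeping needed to keep $\gamma$ attached throughout to annihilation operators (whose relative bounds involve only the $\omega^{\mh}$-weight, which $d_\UV$ dominates), this step is identical to the proof of Prop.~\ref{proprbUV2}.
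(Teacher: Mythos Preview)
Your approach is correct and essentially the same as the paper's: both reduce to finite $\UV$, bound the difference of the interaction parts by $c\,d_\UV(\cfG[\Psi]+\|\Psi\|^2)$ using \eqref{rba} for the $\nabla$--$a$ and $a$--$a$ contributions and Lem.~\ref{lemtertius} for the $\ad$--$a$ contributions, and then invoke \eqref{pernille0}. The only difference is that the paper quotes the ready-made representation \eqref{forhtv} (so that $\hv_{\Geb,\UV}-\hat{\mathfrak{h}}_{\Geb,\UV}=\mathfrak{v}_{\Geb,\UV}-\hat{\mathfrak{v}}_{\Geb,\UV}$) together with Lem.~\ref{lembdvmu}, whereas you re-derive the same structure from \eqref{defhGx} via normal ordering and the integration-by-parts manoeuvre --- which is exactly the content of Lem.~\ref{lemLsplit} with $K=L=0$.

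One bookkeeping slip: your displayed formula for $\mathfrak{R}[\Psi]$ is not quite right. Writing $\beta_\UV=\hat\beta_\UV+\gamma$ and expanding, the $\|a(e_{\V{x}}\V{m}\gamma)\Psi\|^2$ term does \emph{not} cancel, and the $\ad$--$a$ part is $\Re\SPn{a(e_{\V{x}}\V{m}\gamma)\Psi}{\ad(e_{\V{x}}\V{m}\hat\beta_\UV)\Psi}+\Re\SPn{\ad(e_{\V{x}}\V{m}\gamma)\Psi}{a(e_{\V{x}}\V{m}\beta_\UV)\Psi}+\Re\SPn{\ad(e_{\V{x}}\V{m}\gamma)\Psi}{a(e_{\V{x}}\V{m}\gamma)\Psi}$ rather than the single term $\Re\SPn{a(e_{\V{x}}\V{m}\gamma)\Psi}{\ad(e_{\V{x}}\V{m}(\beta_\UV+\hat\beta_\UV))\Psi}$. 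This is harmless for the estimate: Lem.~\ref{lemtertius} is symmetric in its two arguments and yields $\le c\,d_\UV(\cfG[\Psi]+\|\Psi\|^2)$ for each such term, and the surviving $\|a(e_{\V{x}}\V{m}\gamma)\Psi\|^2$ contributes $\le d_\UV^2\,\cfG[\Psi]\le c\,d_\UV\,\cfG[\Psi]$ since $d_\UV\le d_\infty$ is bounded by a constant depending only on the common coupling bound. So your stated conclusion goes through once the formula for $\mathfrak{R}$ is corrected; in particular, $\gamma$ need not always sit on the annihilation side.
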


\begin{proof}
First, suppose that $0\le\UV<\infty$. Combining \eqref{forhtv} and its analogue for 
$\hat{\mathfrak{h}}_{\Geb,\UV}$ with \eqref{bdvmu} we then deduce that
\begin{align*}
|\hv_{\Geb,\UV}[\Psi]-\hat{\mathfrak{h}}_{\Geb,\UV}[\Psi]|
\le c'd_\UV(1\vee|\ee|\vee|\hat{\ee}|)(\cfG[\Psi]+\|\Psi\|^2),\quad\Psi\in{\QGV},
\end{align*}
with a universal constant $c'>0$. Together with \eqref{pernille0} this implies \eqref{rbhtildeh} for finite 
$\UV$, which then extends to the case $\UV=\infty$ by virtue of Theorem~\ref{thmrbUV}.
\end{proof}


\subsection{Feynman-Kac Formulas}\label{ssecFK} 

\noindent
Our constructions of path measures associated with ground states are based on 
Feynman-Kac formulas for the semigroups generated by
$\NV_{\Geb,\UV}$ and $\HV_{\Geb,\UV}$. For $\UV=\infty$ and $\Geb=\RR^3$, these
formulas were proven in \cite{MatteMoeller2017}. 
For finite $\UV$ and $\Geb=\RR^3$, Feynman-Kac formulas with a representation of
the integrand different from the one given below have been known since a long time;
see, e.g., the textbook \cite{LHB2011} and the references given there. 
The latter well-known formulas have, however, the disadvantage of applying only to vectors $\Psi$ 
in suitable dense subspaces of $L^2(\RR^3,\sF)$ an they do not seem to imply $L^2$-to-$L^p$-norm or
hypercontractivity bounds on the semigroup.

To explain our Feynman-Kac formulas we first have to introduce more notation.
In the whole article $(\Omega,\fF,(\fF_{t})_{t\ge0},\PP)$ is a filtered probability space 
satisfying the ``usual assumptions" of completeness and right continuity. 
The bold letter $\V{B}$ denotes a three-dimensional $(\fF_{t})_{t\ge0}$-Brownian motion. 
For every $\V{x}\in\RR^3$, we put $\V{B}^{\V{x}}:=\V{x}+\V{B}$. The first entry time of 
$\V{B}^{\V{x}}$ into $\Geb^c$ will be denoted by
\begin{align}\label{firstentryBGebc}
\tau_{\Geb}(\V{x})&:=\inf\big\{t\ge0\,\big|\:\V{B}^{\V{x}}_t\in\Geb^c\big\}.
\end{align}
We call a stochastic process continuous if all its path are continuous, and not just almost all of them.
The Brownian motion $\V{B}$ is assumed to be continuous in this sense.

In what follows we further abbreviate
\begin{align}\label{deffrk}
\mathfrak{k}&:=L^2(\RR^3,[\omega^{-1}\vee1]\Id\V{k}).
\end{align}
Our Feynman-Kac formulas involve the series 
\begin{align*}
F_t(h)&:=\sum_{n=0}^\infty\frac{1}{n!}\ad(h)^ne^{-t\Id\Gamma(\omega)},
\quad h\in\mathfrak{k},\,t>0.
\end{align*}
Their partial sums are indeed well-defined on $\sF$ and they
converge absolutely with respect to the operator norm on $\LO(\sF)$. The resulting maps
$F_t:\mathfrak{k}\to\LO(\sF)$ are analytic. Furthermore,
\begin{align}\label{Fexpv}
F_t(h)\epsilon(g)&=\epsilon(h+e^{-t\omega}g),
\\\label{Fstarexpv}
F_t(h)^*\epsilon(g)&=e^{\SPn{h}{g}}\epsilon(e^{-t\omega}g),
\\\label{bdFt}
\|F_t(h)\|&\le c\exp\Big(4\|[(t\omega)^\mh\vee1]h\|_{L^2(\RR^3)}^2\Big),
\end{align}
for all $h\in\mathfrak{k}$, $g\in L^2(\RR^3)$, $t>0$, and some $c>0$.
The maps $F_t$ have been introduced and discussed in \cite[App.~6]{GMM2017}
and \eqref{bdFt} follows easily from Lemma~17.4 in that paper.

Recall the definition \eqref{realsubspace} of the completely real subspace 
${\mathfrak{r}}\subset L^2(\RR^3)$ and let $\UV\in[0,\infty]$ and $\V{x}\in\RR^3$. 
In \cite{MatteMoeller2017} we constructed
\begin{enumerate}
\item[$\bullet$] continuous adapted real-valued processes 
\begin{align*}
u_{\UV}=(u_{\UV,t})_{t\ge0},\quad\tilde{u}_{\UV}=(\tilde{u}_{\UV,t})_{t\ge0},
\end{align*}
satisfying $u_{\UV,0}=\tilde{u}_{\UV,0}=0$;
\item[$\bullet$] continuous adapted ${\mathfrak{r}}$-valued processes 
\begin{align*}
U_{\UV}^+=(U_{\UV,t}^+)_{t\ge0},\quad U_{\UV}^-=(U_{\UV,t}^-)_{t\ge0},\quad
\wt{U}_{\UV}^+=(\wt{U}_{\UV,t}^+)_{t\ge0},\quad\wt{U}_{\UV}^-=(\wt{U}_{\UV,t}^-)_{t\ge0},
\end{align*}
satisfying $U_{\UV,0}^\pm=\wt{U}_{\UV,0}^\pm=0$;
\end{enumerate}
such that the contributions to the Feynman-Kac integrands coming from the radiation field are given by
\begin{align}\label{defW}
W_{\UV,t}(\V{x})&:=e^{u_{\UV,t}}F_{\nf{t}{2}}(-e_{\V{x}}U_{\UV,t}^+)
F_{\nf{t}{2}}(-e_{\V{x}}U_{\UV,t}^-)^*,
\\\label{defwtW}
\wt{W}_{\UV,t}(\V{x})&:=e^{\tilde{u}_{\UV,t}}
F_{\nf{t}{2}}(e_{\V{x}}\wt{U}_{\UV,t}^+)F_{\nf{t}{2}}(e_{\V{x}}\wt{U}_{\UV,t}^-)^*,
\end{align}
for all $t>0$, and $W_{\UV,0}(\V{x}):=\wt{W}_{\UV,0}(\V{x}):=\id_{\sF}$.
(The notation $W_{t}^V(\V{x})$ used in the introduction is defined in \eqref{WVshorthand}.)

In the whole article it will never be necessary to employ explicit formulas for 
$u_\UV$, $\tilde{u}_\UV$, $U^+_\UV$, or $\wt{U}_\UV^\pm$, 
whence we refer the interested reader 
to \cite{MatteMoeller2017} for detailed information. We shall merely introduce and employ some 
formulas for $U^-_\infty$  in Sects.~\ref{ssecsuperexpN} and~\ref{ssecGaussdom}.

The Feynman-Kac semigroups associated with the above processes and $\Geb$ are defined by
\begin{align}\label{defTGL}
(T_{\Geb,\UV,t}\Psi)(\V{x})&:=\EE\Big[1_{\{\tau_{\Geb}(\V{x})>t\}}
e^{-\int_0^tV(\V{B}^{\V{x}}_s)\Id s}W_{\UV,t}(\V{x})^*\Psi(\V{B}^{\V{x}}_t)\Big],
\\\label{defwtTGL}
(\wt{T}_{\Geb,\UV,t}\Psi)(\V{x})&:=\EE\Big[1_{\{\tau_{\Geb}(\V{x})>t\}}
e^{-\int_0^tV(\V{B}^{\V{x}}_s)\Id s}\wt{W}_{\UV,t}(\V{x})^*\Psi(\V{B}^{\V{x}}_t)\Big],
\end{align}
for all $t\ge0$ and $\Psi\in L^2(\RR^3,\sF)$. Notice that, by their definition, 
$T_{\Geb,\UV,t}$ and $\wt{T}_{\Geb,\UV,t}$ act on equivalence classes of
functions defined on the whole $\RR^3$. Since $1_{\{\tau_{\Geb}(\V{x})>t\}}=0$, for all
$\V{x}\in\Geb^c$ and $t\ge0$, their action on $\Psi$ depends, however, only on the restriction
of $\Psi$ to $\Geb$. If $\Psi\in L^2(\Geb',\sF)$, with an open $\Geb'\subset\RR^3$ 
not necessarily equal to $\Geb$,
then we extend it by $0$ to the whole $\RR^3$ and
denote the action of $T_{\Geb,\UV,t}$ and $\wt{T}_{\Geb,\UV,t}$ on this extension
again by the symbols on the left hand sides of \eqref{defTGL} and \eqref{defwtTGL}, respectively.

For every $\V{x}\in\RR^3$, the expectations in \eqref{defTGL} and \eqref{defwtTGL} are well-defined
$\sF$-valued Bochner-Lebesgue integrals. In fact, since $V$ is Kato decomposable, we know 
\cite{AizenmanSimon1982} that
\begin{align}\label{kashmir}
\sup_{\V{x}\in\RR^3}\EE\Big[e^{-p\int_0^tV(\V{B}_s^{\V{x}})\Id s}\Big]<\infty,\quad p,t\ge 0,
\end{align}
and in \cite{MatteMoeller2017} it is shown that
\begin{align}\label{bduU}
\EE[e^{pu_{\UV,t}}]\le c_{t},\quad
\EE\Big[e^{p\|[(t\omega)^\mh\vee1]U^\pm_{\UV,t}\|^2}\Big]\le c_{t},
\end{align}
for all $p,t>0$. Here and in \eqref{WinLp} below the constants $c_t,c_t'>0$ depend only on 
$p$ and $\ee$ besides $t$. The bounds \eqref{bduU} still hold true, when the symbols 
$\tilde{u}$ and $\wt{U}$ are put in place of $u$ and $U$, respectively. In view of \eqref{bdFt}
we thus have
\begin{align}\label{WinLp}
\sup_{\V{x}\in\RR^3}\EE\big[\|W_{\UV,t}(\V{x})\|^p\big]\le c_{t}',\quad
\sup_{\V{x}\in\RR^3}\EE\big[\|\wt{W}_{\UV,t}(\V{x})\|^p\big]\le c_{t}',\quad p,t\ge0.
\end{align}

\begin{thm}\label{thmFKD}
For all $\UV\in[0,\infty]$, $\Psi\in L^2(\Geb,\sF)$, and $t\ge0$, the following Feynman-Kac
formulas are satisfied,
\begin{align}\label{FKDfor}
e^{-t\NV_{\Geb,\UV}}\Psi&=T_{\Geb,\UV,t}\Psi,\quad
e^{-t\HV_{\Geb,\UV}}\Psi=\wt{T}_{\Geb,\UV,t}\Psi.
\end{align}
\end{thm}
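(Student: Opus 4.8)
The plan is to reduce the claim with general finite or infinite $\UV$ and general open $\Geb$ to the single case $\UV=\infty$, $\Geb=\RR^3$, which is the main input from \cite{MatteMoeller2017}. First I would record that for $\UV=\infty$ and $\Geb=\RR^3$ both identities in \eqref{FKDfor} are exactly the Feynman--Kac formulas proven in \cite{MatteMoeller2017}; this is the only genuinely analytic ingredient, and everything else is soft. The remaining work splits into three reductions: (a) passing from $\UV=\infty$ to finite $\UV$; (b) passing from $\Geb=\RR^3$ to a general open $\Geb$; and (c) transferring the $\NV$-statement to the $\HV$-statement (or vice versa) via the Gross transformation.

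For reduction (a), fix a finite $\UV$ and note that the processes $u_\UV,\tilde u_\UV,U_\UV^\pm,\wt U_\UV^\pm$, and hence $W_{\UV,t}(\V x)$ and $\wt W_{\UV,t}(\V x)$, are the ones constructed in \cite{MatteMoeller2017} for the cutoff operators $\NV_{\RR^3,\UV}$ and $\HV_{\RR^3,\UV}$, so the formulas $e^{-t\NV_{\RR^3,\UV}}=T_{\RR^3,\UV,t}$ and $e^{-t\HV_{\RR^3,\UV}}=\wt T_{\RR^3,\UV,t}$ for finite $\UV$ are again part of what \cite{MatteMoeller2017} supplies; alternatively, if one only wants to cite the $\UV=\infty$ case, one lets $\UV\to\infty$ using the norm resolvent convergence $\NV_{\RR^3,\UV}\to\NV_{\RR^3,\infty}$ and $\HV_{\RR^3,\UV}\to\HV_{\RR^3,\infty}$ from Thm.~\ref{thmrbUV}(3) and Rem.~\ref{remdefrenNelson} (which gives strong convergence of the semigroups $e^{-t\NV_{\RR^3,\UV}}\to e^{-t\NV_{\RR^3,\infty}}$ strongly in $t$ on bounded intervals), together with the convergence $W_{\UV,t}(\V x)\to W_{\infty,t}(\V x)$ and the uniform $L^p$-bounds \eqref{WinLp} plus \eqref{kashmir}, which let one pass to the limit inside the expectation in \eqref{defTGL} by dominated convergence. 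For reduction (c), the Gross transformation $G_{K,\UV}$ (with $K>0$ if necessary in the infrared singular case) is a fibered Weyl operator, $(G_{K,\UV}\Psi)(\V x)=\sW(e_{\V x}\beta_{K,\UV})\Psi(\V x)$, and by \eqref{trafoNelsonHamUV} it intertwines $\NV_{\Geb,K,\UV}$ and $\HV_{\Geb,K,\UV}$; conjugating the Feynman--Kac formula by $G_{K,\UV}$ converts $T_{\Geb,\UV,t}$ into $\wt T_{\Geb,\UV,t}$ provided the processes $\wt W_\UV$ are exactly the ones obtained by the Weyl-conjugation of $W_\UV$ — which is precisely how they were defined in \cite{MatteMoeller2017}, so this step is a bookkeeping check (and for the limiting $K\downarrow0$ one uses the norm resolvent convergence $\HV_{\Geb,K,\infty}\to\HV_{\Geb,0,\infty}$ recorded after Rem.~\ref{remdefnonFockNelson} together with the strong convergence $G_{K,\infty}\to G_{0,\infty}$ when $\beta_{0,\infty}\in L^2$, or directly cites \cite{MatteMoeller2017} in the non-Fock case).

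The only reduction with real content is (b), passing from $\Geb=\RR^3$ to a general open $\Geb\subset\RR^3$. Here the strategy is the standard Dirichlet-realization / killed-process argument: the operator $\NV_{\Geb,\UV}$ is defined via the form $\nv_{\Geb,\UV}$ on $\QGV\subset L^2(\Geb,\sF)$, obtained from the $\RR^3$-form by imposing the Dirichlet condition through the closure in \eqref{deftGplus}, and on the probabilistic side the indicator $1_{\{\tau_\Geb(\V x)>t\}}$ in \eqref{defTGL} kills the Brownian path upon exit from $\Geb$. One shows that $T_{\Geb,\UV,\cdot}$ is a strongly continuous self-adjoint contraction semigroup on $L^2(\Geb,\sF)$ (self-adjointness follows from the time-reversal symmetry already built into $W_\UV$, and the Markov/semigroup property is inherited from the $\Geb=\RR^3$ case together with the strong Markov property of $\V B$ at $\tau_\Geb$, exactly as for scalar Schr\"odinger semigroups), and then identifies its generator with $\NV_{\Geb,\UV}$ by checking that the associated form agrees with $\nv_{\Geb,\UV}$. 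The cleanest way to do the last identification is via a Trotter-type / monotone approximation: approximate the Dirichlet form by adding $n\cdot 1_{\Geb^c}$ to the potential on $\RR^3$, use the already-established $\RR^3$-formula with potential $V+n1_{\Geb^c}$ (legitimate since $n1_{\Geb^c}$ is in the local Kato class), and let $n\to\infty$; on the analytic side the operators converge in strong resolvent sense to the Dirichlet realization (monotone convergence of forms), and on the probabilistic side $e^{-n\int_0^t 1_{\Geb^c}(\V B^{\V x}_s)\,\mathrm ds}\to 1_{\{\tau_\Geb(\V x)>t\}}$ pointwise boundedly, so dominated convergence (again using \eqref{kashmir} and \eqref{WinLp}) closes the argument. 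I expect this identification of the Dirichlet generator — matching the form-theoretic definition of $\NV_{\Geb,\UV}$ through \eqref{deftGplus}, \eqref{defvvSchroedinger}, \eqref{Nform} with the probabilistically-defined semigroup $T_{\Geb,\UV,t}$ — to be the main technical obstacle, since it is where the subtle interplay between the boundary behaviour encoded in the form closure and the pathwise killing must be reconciled; all the other steps are either quoted from \cite{MatteMoeller2017} or are routine limiting arguments controlled by the uniform moment bounds \eqref{bduU}, \eqref{WinLp}, and \eqref{kashmir}.
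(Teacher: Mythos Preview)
Your overall strategy---cite \cite{MatteMoeller2017} for $\Geb=\RR^3$ and then extend to general open $\Geb$ via a Dirichlet/killing argument---matches the paper's, and you are right that this extension is where the real work lies. The paper also handles $\NV_{\Geb,\infty}$ separately by passing to the limit from finite $\UV$ (since the form domain of $\NV_{\Geb,\infty}$ is not explicit and so the relevant form-domain conditions cannot be checked directly), as you anticipate in (a). Your reduction (c) is unnecessary: \cite{MatteMoeller2017} already supplies both the $\NV$- and the $\HV$-formulas on $\RR^3$, and the paper simply verifies the hypotheses of the same abstract Dirichlet lemma for each in parallel.

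Your penalization scheme in (b), however, has a genuine gap. You assert that the monotone form limit of $\NV_{\RR^3,\UV}+n\,1_{\Geb^c}$ is the Dirichlet realization $\NV_{\Geb,\UV}$. For general open $\Geb$ this is false: the limiting form has domain $\{\Psi\in\sQ_{\RR^3}:\Psi=0\text{ a.e.\ on }\Geb^c\}$, which can be strictly larger than $\QGV$, the closure of $\sD(\Geb,\sF)$ in form norm (cf.\ \eqref{deftGplus}, \eqref{defQGV}). Take $\Geb=\RR^3\setminus\{x_3=0\}$: then $\Geb^c$ has Lebesgue measure zero, so $1_{\Geb^c}=0$ a.e.\ and your penalization does nothing, yet $\NV_{\Geb,\UV}$ imposes a genuine Dirichlet condition on the plane. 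The same example breaks your probabilistic limit: $\int_0^t 1_{\Geb^c}(\V B_s^{\V x})\,\Id s=0$ a.s.\ while $\tau_\Geb(\V x)<t$ with positive probability, so $e^{-n\int_0^t 1_{\Geb^c}(\V B_s^{\V x})\,\Id s}\not\to 1_{\{\tau_\Geb(\V x)>t\}}$.

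The paper, following \cite{Simon1978Adv} (details in \cite{Matte2019}), replaces $1_{\Geb^c}$ by the weight
\[
Y(\V x)=\frac{1}{\dist(\V x,\Geb^c)^3}+\sum_{\ell\ge1}|\nabla\vt_\ell(\V x)|^2\quad\text{on }\Geb,\qquad Y\equiv\infty\text{ on }\Geb^c,
\]
built from smooth cutoffs $\vt_\ell$ with supports exhausting $\Geb$. Because $Y$ blows up at $\partial\Geb$ from \emph{inside} $\Geb$, the set $\sD_Y:=\sQ_{\RR^3}\cap\fdom(Y)$ is contained in $\QGV$ and is dense there in form norm; these are precisely the conditions feeding an abstract lemma (a special case of \cite[Lem.~3.4]{Matte2019}) that identifies the generator of the killed semigroup with the \emph{minimal} Dirichlet realization, for arbitrary open $\Geb$.
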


\begin{proof}
For $\Geb=\RR^3$, the theorem is proven in \cite{MatteMoeller2017}. The extension to
proper open subsets $\Geb$ of $\RR^3$ proceeds along the lines of the appendix to
\cite{Simon1978Adv}. The details are explained in Appendix~\ref{appFKD} where we use some
technical results of \cite{Matte2019}.
\end{proof}

We shall crucially use the following result on the Feynman-Kac integrands, whose 
proof can be found in \cite[\textsection8.1]{MatteMoeller2017}. Here we employ
the unitary map $\cU:\sF\to L^2(\cQ,\GM)$ constructed in Subsection~\ref{ssecQspace}. 

\begin{thm}\label{thmWpos}
For all $\UV\in[0,\infty]$, $\V{x}\in\RR^3$, $t>0$, and pointwise on $\Omega$, the operators
$\cU W_{\UV,t}(\V{x})^*\cU^*$ and $\cU\wt{W}_{\UV,t}(\V{x})^*\cU^*$
are positivity improving.
\end{thm}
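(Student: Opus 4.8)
The plan is to prove the statement for a fixed sample point in $\Omega$, a fixed $\UV\in[0,\infty]$, a fixed $\V{x}\in\RR^3$, and $t>0$ (the case $t=0$ is trivial since then $W_{\UV,0}(\V{x})^*=\wt W_{\UV,0}(\V{x})^*=\id_\sF$). First I would bring $W_{\UV,t}(\V{x})^*$ into a form adapted to the Schr\"{o}dinger representation $\cU:\sF\to L^2(\cQ,\GM)$. By \eqref{defW} and the fact that $u_{\UV,t}$ is real, $W_{\UV,t}(\V{x})^*=e^{u_{\UV,t}}F_{t/2}(g^-)F_{t/2}(g^+)^*$ with $g^\pm:=-e_{\V{x}}U^\pm_{\UV,t}$ and harmless positive scalar $e^{u_{\UV,t}}$. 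The functions $g^\pm$ lie in the completely real subspace $\mathfrak{r}$ of \eqref{realsubspace}: the $U^\pm_{\UV,t}$ are $\mathfrak{r}$-valued, the free wave satisfies $\ol{e_{\V{x}}(-\V{k})}=\ol{e^{i\V{k}\cdot\V{x}}}=e_{\V{x}}(\V{k})$, and $\mathfrak{r}$ is closed under pointwise multiplication. The identical reasoning applied to \eqref{defwtW} gives $\wt W_{\UV,t}(\V{x})^*=e^{\tilde u_{\UV,t}}F_{t/2}(\tilde g^-)F_{t/2}(\tilde g^+)^*$ with $\tilde g^\pm:=e_{\V{x}}\wt U^\pm_{\UV,t}\in\mathfrak{r}$ and $e^{\tilde u_{\UV,t}}>0$. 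Hence it suffices to prove that, for every $g\in\mathfrak{r}$ and $t>0$, both $\cU F_{t/2}(g)\cU^*$ and $\cU F_{t/2}(g)^*\cU^*$ are positivity improving on $L^2(\cQ,\GM)$: then $\cU W_{\UV,t}(\V{x})^*\cU^*$, being a positive multiple of a product of two positivity improving operators, is positivity improving, and likewise for $\cU\wt W_{\UV,t}(\V{x})^*\cU^*$.

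Second I would decompose these two building blocks. From the definition of $F_t$ together with \eqref{Fexpv}--\eqref{Fstarexpv} one has $F_s(g)=e^{\ad(g)}\,e^{-s\Id\Gamma(\omega)}$ and $F_s(g)^*=e^{-s\Id\Gamma(\omega)}\,e^{a(g)}$, so it remains to understand the three elementary factors in $\cQ$-space. Standard Fock-space-to-$\cQ$-space calculus identifies $\cU e^{a(g)}\cU^*$ and $\cU e^{\ad(g)}\cU^*$, for $g\in\mathfrak{r}$, as compositions of multiplication by an a.e.\ strictly positive function with a Cameron--Martin shift of the underlying Gaussian field in the direction $g$ (the exact form being immaterial here). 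Since $g$ lies in the Cameron--Martin space of $\GM$, such a shift is a densely defined linear bijection which, by quasi-invariance of $\GM$, maps $\GM$-a.e.\ strictly positive functions to $\GM$-a.e.\ strictly positive functions and maps the non-negative cone injectively into itself; multiplication by an a.e.\ strictly positive function has the same two properties. Thus $\cU e^{a(g)}\cU^*$ and $\cU e^{\ad(g)}\cU^*$ preserve the class of non-negative functions, do not annihilate nonzero vectors, and preserve the class of a.e.\ strictly positive functions. The minor point that $e^{a(g)}$ and $e^{\ad(g)}$ are unbounded whereas $F_{t/2}(g)^{(*)}$ is bounded is dealt with by carrying out the manipulations on the dense span of the exponential vectors $\{\epsilon(f):f\in\mathfrak{r}\}$, on which all the relevant series converge absolutely, and transferring the conclusions to the bounded operators --- equivalently, by working with the associated integral kernels, as in \cite[\textsection 8.1]{MatteMoeller2017}.

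The crux --- and the step I expect to be the main obstacle, precisely because massless bosons are allowed --- is the positivity improving property of the middle factor $\cU e^{-s\Id\Gamma(\omega)}\cU^*$ for every $s>0$. Here $e^{-s\omega}$ is a non-negative self-adjoint contraction on $L^2(\RR^3)$ commuting with the conjugation defining $\mathfrak{r}$, and, crucially, $\ker(\id-e^{-s\omega})=\ker\omega=\{\V{0}\}$ because $\omega(\V{k})>0$ for all $\V{k}\neq\V{0}$ (indeed $\|e^{-s\omega}\|<1$ when $\mu>0$); the infrared degeneracy $\omega(\V{k})\to0$ as $\V{k}\to\V{0}$ is harmless since $\{\V{k}:\omega(\V{k})=0\}$ is a Lebesgue null set and therefore cannot produce a reducing tensor factor on which $e^{-s\Id\Gamma(\omega)}$ acts as the identity. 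By the standard criterion for positivity improvement of second quantizations in the Schr\"{o}dinger representation (see \cite{Gross1972}; see also \cite{LHB2011}), $\cU e^{-s\Id\Gamma(\omega)}\cU^*$ is then positivity improving. Chaining: given $0\le\psi\in L^2(\cQ,\GM)$ with $\psi\neq0$, the vector $\cU e^{a(g)}\cU^*\psi$ is non-negative and nonzero, $\cU e^{-s\Id\Gamma(\omega)}\cU^*$ turns it into an a.e.\ strictly positive function, and $\cU e^{\ad(g)}\cU^*$ keeps it a.e.\ strictly positive; hence $\cU F_{t/2}(g)^*\cU^*\psi$ is a.e.\ strictly positive, and symmetrically $\cU F_{t/2}(g)\cU^*\psi$ is too. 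By the reduction of the first paragraph this shows that $\cU W_{\UV,t}(\V{x})^*\cU^*$ and $\cU\wt W_{\UV,t}(\V{x})^*\cU^*$ are positivity improving, which is the assertion.
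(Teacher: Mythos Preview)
The paper does not supply its own proof; it refers the reader to \cite[\textsection8.1]{MatteMoeller2017}. Your outline is the natural one and almost certainly coincides with what is done there: reduce to $g^\pm\in\mathfrak{r}\cap\mathfrak{k}$ (your check that $e_{\V{x}}U^\pm_{\UV,t}\in\mathfrak{r}$ is correct), factor $F_s(g)=e^{\ad(g)}e^{-s\Id\Gamma(\omega)}$ and $F_s(g)^*=e^{-s\Id\Gamma(\omega)}e^{a(g)}$ on exponential vectors, invoke the standard criterion to get that $\cU\Gamma(e^{-s\omega})\cU^*$ is positivity improving (valid since $\ker\omega=\{0\}$ in $L^2(\RR^3)$ even when $\mu=0$, the zero set $\{\V{0}\}$ being Lebesgue-null), and identify the outer factors in $\cQ$-space as positive multiplication composed with a Cameron--Martin shift.

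Two small points. First, your parenthetical about $t=0$ is off: the identity is positivity \emph{preserving}, not improving; this is harmless since the statement only concerns $t>0$. Second, the one genuinely delicate step is the one you flag yourself: a general nonnegative nonzero $\psi\in L^2(\cQ,\nu)$ need not lie in the span of exponential vectors nor in $\dom(e^{a(g)})$, so ``carrying out the manipulations on exponential vectors and transferring the conclusions'' is not by itself a proof. The clean fix---which your final parenthetical already points at---is to bypass the unbounded factorization altogether and exhibit $\cU F_s(g)^{(*)}\cU^*$ directly as an integral operator with strictly positive (shifted Mehler-type) kernel, which is the route taken in \cite[\textsection8.1]{MatteMoeller2017}. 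With that in place your chaining argument goes through verbatim.
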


With the help of $\cU$ and Fubini's theorem we can construct a natural isomorphism
\begin{align*}
\cU_\Geb:L^2(\Geb,\sF)\longrightarrow L^2(\Geb\times\cQ,\Id\V{x}\otimes\GM),
\end{align*} 
by setting 
\begin{align}\label{defcUG}
(\cU_\Geb\Psi)(\V{x},q):=(\cU\Psi(\V{x}))(q),
\end{align} 
for a.e. $(\V{x},q)\in\Geb\times\cQ$ and all $\Psi\in L^2(\Geb,\sF)$.

\begin{cor}\label{corTpos}
Let $\UV\in[0,\infty]$ and $t>0$. Then $\cU_\Geb T_{\Geb,\UV,t}\cU_\Geb^*$ and 
$\cU_\Geb\wt{T}_{\Geb,\UV,t}\cU_\Geb^*$ are positivity preserving.
If $\Geb$ is connected, then they are positivity improving.
\end{cor}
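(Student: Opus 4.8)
The plan is to transfer the positivity statement for the operator-valued integrands from Theorem \ref{thmWpos} to the integral operators $T_{\Geb,\UV,t}$ and $\wt T_{\Geb,\UV,t}$ by a direct pointwise argument in the $\cQ$-space picture. First I would fix $\UV\in[0,\infty]$ and $t>0$ and conjugate the defining formula \eqref{defTGL} by $\cU_\Geb$, using the elementary fact that $\cU_\Geb$ acts fiberwise via $\cU$ as in \eqref{defcUG} and that Bochner integrals commute with the bounded operator $\cU$. This gives, for $\Psi\in L^2(\Geb\times\cQ,\Id\V{x}\otimes\GM)$ and a.e.\ $(\V{x},q)$,
\begin{align*}
\big(\cU_\Geb T_{\Geb,\UV,t}\cU_\Geb^*\Psi\big)(\V{x},q)
&=\EE\Big[1_{\{\tau_{\Geb}(\V{x})>t\}}
e^{-\int_0^tV(\V{B}^{\V{x}}_s)\Id s}
\big((\cU W_{\UV,t}(\V{x})^*\cU^*)\Psi(\V{B}^{\V{x}}_t,\cdot)\big)(q)\Big],
\end{align*}
and the analogous identity with $\wt T$ and $\wt W$. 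The indicator $1_{\{\tau_\Geb(\V{x})>t\}}$ and the exponential of the (real) potential term are nonnegative scalars, and by Theorem \ref{thmWpos} the operator $\cU W_{\UV,t}(\V{x})^*\cU^*$ is positivity improving pointwise on $\Omega$, hence in particular positivity preserving; so if $\Psi\ge0$ then the integrand is a.e.\ a nonnegative element of $L^2(\cQ,\GM)$, and the Bochner integral of a function taking values in the closed convex cone of nonnegative functions stays in that cone (one may test against nonnegative $g\in L^2(\Geb\times\cQ)$ and use Fubini together with Tonelli to push the expectation and the $q$-integration through). This proves the positivity preserving assertion.

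For the positivity improving claim when $\Geb$ is connected, I would argue by contradiction: suppose $\Psi\ge0$, $\Psi\not\equiv0$, but the set $N:=\{(\V{x},q): (\cU_\Geb T_{\Geb,\UV,t}\cU_\Geb^*\Psi)(\V{x},q)=0\}$ has positive measure. From the formula above, $(\V{x},q)\in N$ forces
\begin{align*}
1_{\{\tau_{\Geb}(\V{x})>t\}}
e^{-\int_0^tV(\V{B}^{\V{x}}_s)\Id s}
\big((\cU W_{\UV,t}(\V{x})^*\cU^*)\Psi(\V{B}^{\V{x}}_t,\cdot)\big)(q)=0
\quad\text{$\PP$-a.s.}
\end{align*}
Since the exponential factor is strictly positive and $\PP(\tau_\Geb(\V{x})>t, \V{B}^{\V{x}}_t\in U)>0$ for every nonempty open $U\subset\Geb$ — here connectedness and openness of $\Geb$ enter, guaranteeing that Brownian motion started at $\V{x}\in\Geb$ can reach any open subset of $\Geb$ before leaving $\Geb$ with positive probability — and since $\cU W_{\UV,t}(\V{x})^*\cU^*$ is positivity improving pointwise, the vanishing forces $\Psi(\V{y},\cdot)=0$ for a.e.\ $\V{y}$ in a neighborhood of $\V{x}$ in $\Geb$, for a.e.\ $\V{x}$ in the $\V{x}$-section of $N$. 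A Fubini/connectedness argument then propagates this to $\Psi\equiv0$ on $\Geb$, a contradiction. The same reasoning applies verbatim to $\wt T$.

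The main obstacle is the measurability and integrability bookkeeping needed to make the interchange of the $\PP$-expectation, the $\V{x}$-integration, and the $q$-integration rigorous, and to justify testing against nonnegative functions: one needs joint measurability of $(\omega,\V{x})\mapsto\cU W_{\UV,t}(\V{x})^*\cU^*\Psi(\V{B}^{\V{x}}_t(\omega),\cdot)$ as an $L^2(\cQ,\GM)$-valued map, together with the moment bounds \eqref{WinLp} and \eqref{kashmir} to invoke Tonelli's theorem for nonnegative integrands. These ingredients are all available from the constructions in \cite{MatteMoeller2017} recalled above, so the argument is essentially a careful but routine application of Theorem \ref{thmWpos}; the positivity improving part additionally uses the standard support theorem for Brownian motion in a connected open set, which I would invoke rather than reprove.
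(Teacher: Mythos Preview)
Your approach is correct and rests on the same two ingredients as the paper: Theorem~\ref{thmWpos} for the fiberwise positivity improving property, and the fact that Brownian motion in a connected open set hits any set of positive Lebesgue measure before exiting with positive probability. The paper's argument is a bit more direct: rather than arguing by contradiction from the pointwise definition, it uses the equivalent pairing characterization, showing $\SPn{\Phi}{\cU_\Geb T_{\Geb,\UV,t}\cU_\Geb^*\Upsilon}>0$ for all nonnegative nonzero $\Phi,\Upsilon$. This cleanly decouples the two variables: once $\V{x}\in\cP_\Phi$ and $\V{B}_t^{\V{x}}\in\cP_\Upsilon$ (the $\V{x}$-supports), Theorem~\ref{thmWpos} handles the $\cQ$-integration, and one is left with $\EE[1_{\{\tau_\Geb(\V{x})>t\}}1_{\cP_\Upsilon}(\V{B}_t^{\V{x}})]>0$, which is exactly the positivity improving property of the Dirichlet--Laplacian semigroup on connected~$\Geb$ via the scalar Feynman--Kac formula. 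Your ``vanishing in a neighborhood, then propagate by connectedness'' step is slightly loose as written, since from vanishing at a \emph{single} $(\V{x},q)$ one cannot immediately conclude $\Psi(\V{y},\cdot)=0$ nearby---the positivity improving statement of Theorem~\ref{thmWpos} lives in the $q$-variable. The easy fix is to first integrate over the $q$-section $N_{\V{x}}$ (which has positive $\GM$-measure for $\V{x}$ in a positive-measure set by Fubini); then a single such $\V{x}$ already forces $\Psi(\V{y},\cdot)=0$ for a.e.\ $\V{y}\in\Geb$ at once, and no propagation is needed.
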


\begin{proof}
The first claim is evident from Theorem~\ref{thmWpos}. So, assume right away that $\Geb$ is
connected. We shall only consider $T_{\Geb,\UV,t}$, as the proof for 
$\wt{T}_{\Geb,\UV,t}$ is identical. Let $\Phi,\Upsilon\in L^2(\Geb\times\cQ,\Id\V{x}\otimes\GM)$ 
be non-negative and non-zero. We have to show that 
$\SPn{\Phi}{\cU_\Geb T_{\Geb,\UV,t}\cU_\Geb^*\Upsilon}>0$.
Let $\Phi(\cdot)$ and $\Upsilon(\cdot)$ be representatives of $\Phi$ and $\Upsilon$,
respectively. Let $\cP_\Phi$ be the set of all $\V{x}\in\Geb$ for which
$\int_\cQ|\Phi(\V{x},q)|^2\Id\GM(q)>0$ and define $\cP_\Upsilon$ analogously. 
Pick some elementary event $\gamma\in\Omega$.
If $\V{x}\in\cP_\Phi$ and $\V{B}_t^{\V{x}}(\gamma)\in\cP_{\Upsilon}$,
then it follows from Theorem~\ref{thmWpos} that
\begin{align*}
\SPb{\cU^*\Phi(\V{x},\cdot)}{W_{\UV,t}(\V{x},\gamma)^*\cU^*
\Psi(\V{B}_t^{\V{x}}(\gamma),\cdot)}_{L^2(\cQ,\nu)}>0.
\end{align*}
Therefore, it remains to show that, for every $\V{x}\in\Geb$, 
\begin{align*}
\EE\big[1_{\{\tau_\Geb(\V{x})>t\}}1_{\cP_\Upsilon}(\V{B}_t^{\V{x}})\big]&>0.
\end{align*}
Since $\cP_\Upsilon$ has strictly positive measure,
this follows, however, from the Feynman-Kac formula for the Dirichlet-Laplacian on $\Geb$
and the fact that the semigroup generated by the latter operator is positivity improving because 
$\Geb$ is connected; see \cite[Lem.~1]{FarisSimon1975}.
\end{proof}

In the next theorem and henceforth we call a vector $\Psi\in L^2(\Geb,\sF)$ strictly positive, 
if $\cU_\Geb\Psi$ is strictly positive. Furthermore, we set
\begin{align*}
E_{\Geb,\UV}&:=\inf\sigma(\NV_{\Geb,\UV})=\inf\sigma(\HV_{\Geb,\UV}).
\end{align*}

\begin{thm}\label{thmPerronFrobenius}
Let $\UV\in[0,\infty]$, suppose that $\Geb$ is connected, and assume that $E_{\Geb,\UV}$
is an eigenvalue of $\HV_{\Geb,\UV}$. Then $E_{\Geb,\UV}$ has multiplicity one and there
exists a corresponding eigenvector that is strictly positive. The same statement holds with
$\NV_{\Geb,\UV}$ put in place of $\HV_{\Geb,\UV}$.
\end{thm}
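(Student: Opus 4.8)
The plan is to deduce the statement from the classical Perron--Frobenius (ergodicity) theorem for positivity improving self-adjoint semigroups, carried out in the $\cQ$-space picture. First I would pass to the unitarily equivalent operator $\cA:=\cU_\Geb\HV_{\Geb,\UV}\cU_\Geb^*$ acting in $L^2(\Geb\times\cQ,\Id\V{x}\otimes\GM)$; by Thm.~\ref{thmFKD} the semigroup it generates is $e^{-t\cA}=\cU_\Geb\wt{T}_{\Geb,\UV,t}\cU_\Geb^*$. Since $\Geb$ is connected, Cor.~\ref{corTpos} shows that $e^{-t\cA}$ is positivity improving for every $t>0$, while $\cA$ is self-adjoint, bounded below, and $\inf\sigma(\cA)=E_{\Geb,\UV}$; being the generator of a $C_0$-semigroup of self-adjoint operators poses no issue.

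Next I would invoke the standard abstract result: if $e^{-t\cA}$ is a strongly continuous, positivity improving, self-adjoint semigroup on the $L^2$-space of a $\sigma$-finite measure space, and $E:=\inf\sigma(\cA)$ is an eigenvalue of $\cA$, then $E$ is simple and the corresponding one-dimensional eigenspace is spanned by a strictly positive function. All hypotheses are met by hypothesis and by the previous paragraph; the only bookkeeping point is the spectral mapping linking ``$E_{\Geb,\UV}$ is an eigenvalue of $\cA$'' to ``$e^{-tE_{\Geb,\UV}}=\|e^{-t\cA}\|$ is the top eigenvalue of the bounded self-adjoint operator $e^{-t\cA}$ with the same eigenspace'', which is immediate from the spectral calculus. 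Transporting the resulting strictly positive eigenvector back with $\cU_\Geb^*$ and recalling the definition of strict positivity for vectors in $L^2(\Geb,\sF)$ given just before the theorem, we obtain the assertion for $\HV_{\Geb,\UV}$.

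Finally, the assertion for $\NV_{\Geb,\UV}$ follows verbatim by the same argument, now using $e^{-t\NV_{\Geb,\UV}}=T_{\Geb,\UV,t}$ together with the positivity improving property of $\cU_\Geb T_{\Geb,\UV,t}\cU_\Geb^*$ supplied by Cor.~\ref{corTpos}; running the argument directly for $\NV_{\Geb,\UV}$ avoids having to distinguish whether the Gross transformation intertwining $\NV_{\Geb,\UV}$ and $\HV_{\Geb,\UV}$ exists. I do not expect a real obstacle: the substantive work has already been absorbed into establishing that the Feynman-Kac integrands are positivity improving (Thm.~\ref{thmWpos}) and into the connectedness argument through the Dirichlet--Laplacian (Cor.~\ref{corTpos}), so the remaining step is just the clean invocation of the abstract ergodicity theorem, with the mild spectral-mapping remark above being the one thing worth stating explicitly.
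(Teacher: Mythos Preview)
Your proposal is correct and follows essentially the same route as the paper: the paper's proof simply cites the Feynman-Kac formulas of Thm.~\ref{thmFKD}, the positivity improving property from Cor.~\ref{corTpos}, and Faris' Perron--Frobenius type theorem \cite{Faris1972}, which is exactly the ``standard abstract result'' you invoke. Your write-up merely spells out the spectral-mapping and $\cU_\Geb$-conjugation bookkeeping in more detail than the paper does.
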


\begin{proof}
The assertion follows from the Feynman-Kac formulas of Theorem~\ref{thmFKD}, from
Corollary~\ref{corTpos}, and from Faris' Perron-Frobenius type theorem \cite{Faris1972}.
\end{proof}


\section{Existence of Ground States}\label{secGS}

\noindent
The objective of this section is to show that, under a binding condition discussed in
Subsection~\ref{ssecbinding}, the minimal energy $E_{\Geb,\UV}$
is always an eigenvalue of $\HV_{\Geb,\UV}$ and, under the additional infrared regularity condition
$\omega^{-3}\eta^2\in L^1_\loc(\RR^3)$, it is also an eigenvalue of $\NV_{\Geb,\UV}$. 
The existence proofs proceed in two main steps:
\begin{enumerate}
\item[(i)] We consider strictly positive boson masses $\mu$ and bounded $\Geb$ and apply
a criterion due to Gross \cite{Gross1972}; the required hypercontractivity of the semigroups
can be inferred from the results of \cite{MatteMoeller2017}.
\item[(ii)] In a chain of approximation arguments we successively trade the restriction
$\mu>0$ for a sharp infrared cutoff, remove that infrared cutoff afterwards, and pass to
possibly unbounded $\Geb$. For technical reasons we perform these three steps at a finite
ultraviolet cutoff, which is removed in a last approximation step. 
In each of these four steps we apply a recent variant \cite{Matte2016} of a compactness 
argument from \cite{GLL2001} to some approximating sequence of eigenvectors.
\end{enumerate}
The main steps~(i) and~(ii) are presented in Subsection~\ref{ssecGSmass} and 
Subsection~\ref{ssecGSgen}, respectively. The compactness argument mentioned in (ii) requires two 
crucial technical ingredients. The first one, which is only needed when $\Geb$ is unbounded, 
is a uniform bound on the spatial localization of the considered eigenvector sequence. It is presented
in Subsection~\ref{ssecexploc} and most parts of its proof are deferred to Appendix~\ref{appexploc}.
The second ingredient is a formula for the action of a 
``pointwise" annihilation operator on the ground state eigenvectors revealing information 
about their dependence on the boson momenta; see Subsection~\ref{ssecIR}.
The compactness argument itself is explained in Subsection~\ref{sseccomparg}. It is based on a Fock 
space adaption of the well-known characterization of compact sets in $L^2(\RR^d)$, whose detailed 
proof is provided in Appendix~\ref{appcpt} for the convenience of the reader. In the final
Subsection~\ref{sseccontGS} we discuss continuity properties of ground states.


\subsection{Ground States for Massive Bosons and Bounded Domains}\label{ssecGSmass}

\noindent
The next proposition provides some key estimates permitting to prove the existence of ground
states for massive bosons and bounded $\Geb$ in the subsequent theorem. 
The proposition itself holds, however, also for massless bosons and unbounded $\Geb$.
Notice that everything done in this subsection applies to the renormalized operators 
($\UV=\infty$) right away. Later on we shall, however, apply Theorem~\ref{thmGSmass} only for 
$\UV<\infty$, since finiteness of $\UV$ is required in Proposition~\ref{prop-IR}.

\begin{prop}\label{propGSmu} 
Let $\vk$ be a non-negative, bounded multiplication operator on $L^2(\RR^3)$ and suppose
that $\vk\le\omega$. Then the following holds, for all $\UV\in[0,\infty]$ and $t>0$:
\begin{enumerate}
\item[{\rm(1)}] For all $\V{x}\in\RR^3$ and pointwise on $\Omega$, the operator 
${W}_{\UV,t}(\V{x})^*$ maps $\sF$ into $\dom(e^{t\Id\Gamma(\vk)/6})$ and, 
for every $p>0$, there exists a constant $c_{p,t}>0$, depending only on
$\ee$ and $\|\vk\|$ in addition, such that
\begin{align}\label{massiv1}
\sup_{\V{x}\in\RR^3}\EE\Big[\|e^{t\Id\Gamma(\vk)/6}{W}_{\UV,t}(\V{x})^*\|^p\Big]&\le c_{p,t}.
\end{align}
\item[{\rm(2)}] Let $p\in[2,\infty]$. Suppose that $F:\RR^3\to\RR$ is Lipschitz continuous
with Lipschitz constant $L\ge0$ and bounded from below. Then ${T}_{\Geb,\UV,t}$ maps the range
$e^{-F}L^2(\Geb,\sF)$ into  $L^p(\Geb,\dom(e^{t\Id\Gamma(\vk)/6});e^F\Id\V{x})$ and, 
for every $\Psi\in L^2(\Geb,\sF)$,
\begin{align}\label{hcbd}
\|e^{t\Id\Gamma(\vk)/6}e^F T_{\Geb,\UV,t}e^{-F}\Psi\|_{L^{p}(\Geb,\sF)}&\le 
c_{p,t}\frac{e^{6tL^2}}{t^{3(1/2-1/p)/2}}\|\Psi\|_{L^2(\Geb,\sF)}.
\end{align}
Here $c_{p,t}>0$ depends only on $\ee$, $\|\vk\|$, and $V$ in addition and it is
monotonically increasing in $t$.
Instead of the $L^\infty$-norm, i.e., essential supremum, one can also take the pointwise
supremum in \eqref{hcbd}. Furthermore, we can replace $L^q(\Geb,\sF)$ by
$L^q(\RR^3,\sF)$, for $q\in\{2,p\}$, in \eqref{hcbd}.
\item[{\rm(3)}] If $\Phi_{\Geb,\UV}$ is an eigenvector of $\NV_{\Geb,\UV}$, then 
$\Phi_{\Geb,\UV}\in L^p(\Geb,\dom(e^{r\Id\Gamma(\vk)}))$, for all $r>0$ and $p\in[2,\infty]$.
\end{enumerate}
The same assertions hold for $\wt{W}_{\UV,t}(\V{x})$, $\wt{T}_{\Geb,\UV,t}$, and eigenvectors
of $\HV_{\Geb,\UV}$ as well.
\end{prop}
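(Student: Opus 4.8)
The plan is to prove (1) by explicit bosonic Fock space calculus combined with the pathwise moment bounds of \cite{MatteMoeller2017} recorded in \eqref{bduU}, to deduce (2) from (1) by a Feynman--Kac estimate together with a heat-kernel (Young's inequality) argument, and to obtain (3) as an immediate corollary of (2). Throughout the proof of (1) all manipulations are carried out for a fixed elementary event, which is legitimate since $W_{\UV,t}(\V{x})$ is defined pathwise; thus $u_{\UV,t}$ is a fixed real number and $U^{\pm}_{\UV,t}$ are fixed elements of $L^2(\RR^3)$. Abbreviate $r:=t/6$. The starting point is the identity, valid on the total set of exponential vectors whenever $s>0$, $h\in L^2(\RR^3)$ and $r\vk\le(s/2)\omega$,
\[
e^{r\Id\Gamma(\vk)}F_{s}(h)=F_{s/2}(e^{r\vk}h)\,e^{-\Id\Gamma((s/2)\omega-r\vk)},
\]
which one checks at once from \eqref{Fexpv}, \eqref{SGdGamma} and the fact that $e^{r\vk}$, $e_{\V{x}}$ and $\omega$ all commute as multiplication operators. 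Recalling, via \eqref{Fexpv} and \eqref{Fstarexpv}, that $W_{\UV,t}(\V{x})^{*}=e^{u_{\UV,t}}F_{t/2}(-e_{\V{x}}U^{-}_{\UV,t})F_{t/2}(-e_{\V{x}}U^{+}_{\UV,t})^{*}$, and applying the above with $s=t/2$ (permissible because $r\vk=(t/6)\vk\le(t/6)\omega\le(t/4)\omega$ by $\vk\le\omega$), I obtain on exponential vectors
\[
e^{r\Id\Gamma(\vk)}W_{\UV,t}(\V{x})^{*}=e^{u_{\UV,t}}\,F_{t/4}(-e^{r\vk}e_{\V{x}}U^{-}_{\UV,t})\,e^{-\Id\Gamma((t/4)\omega-r\vk)}\,F_{t/2}(-e_{\V{x}}U^{+}_{\UV,t})^{*}.
\]
The right-hand side is a bounded operator on $\sF$: the two outer $F$-factors are bounded by \eqref{bdFt}, and the middle one is a contraction since $(t/4)\omega-r\vk\ge(t/12)\omega\ge0$. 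As every exponential vector lies in $\dom(e^{r\Id\Gamma(\vk)})$ (here the boundedness of $\vk$ is essential) and $e^{r\Id\Gamma(\vk)}$ is closed, a routine approximation argument upgrades the displayed identity to all of $\sF$ and shows that $W_{\UV,t}(\V{x})^{*}$ maps $\sF$ into $\dom(e^{r\Id\Gamma(\vk)})$. Estimating the two $F$-factors by \eqref{bdFt}, using $\|e^{r\vk}\|\le e^{r\|\vk\|}$, $|e_{\V{x}}|\equiv1$, and the elementary comparison $[(s\omega)^{\mh}\vee1]\le\sqrt{t/s}\,[(t\omega)^{\mh}\vee1]$ for $s\le t$, I arrive at a pathwise bound
\[
\|e^{r\Id\Gamma(\vk)}W_{\UV,t}(\V{x})^{*}\|\le c_{t}\,e^{u_{\UV,t}}\exp\!\big(c_{t}'\|[(t\omega)^{\mh}\vee1]U^{-}_{\UV,t}\|^{2}+c_{t}'\|[(t\omega)^{\mh}\vee1]U^{+}_{\UV,t}\|^{2}\big),
\]
with $c_t,c_t'>0$ depending only on $\ee$, $\|\vk\|$, $t$. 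Raising to the power $p$, splitting the three exponential factors by H\"older's inequality, and invoking \eqref{bduU} (whose bounds are uniform in $\V{x}$ and $\UV$) yields \eqref{massiv1}, proving (1).

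For (2), fix $p\in[2,\infty]$, a Lipschitz function $F$ with constant $L\ge0$ bounded from below, and $\Psi\in L^{2}(\Geb,\sF)$. Starting from \eqref{defTGL}, passing the norm inside the Bochner integral, using (1), the Lipschitz bound $e^{F(\V{x})-F(\V{B}^{\V{x}}_t)}\le e^{L|\V{B}_t|}$, and writing
\[
M_t(\V{x}):=e^{L|\V{B}_t|}\,e^{-\int_0^tV(\V{B}^{\V{x}}_s)\Id s}\,\|e^{r\Id\Gamma(\vk)}W_{\UV,t}(\V{x})^{*}\|,
\]
I get the everywhere-pointwise estimate $\|e^{r\Id\Gamma(\vk)}(e^{F}T_{\Geb,\UV,t}e^{-F}\Psi)(\V{x})\|_{\sF}\le\EE[M_t(\V{x})\,\|\Psi(\V{B}^{\V{x}}_t)\|_{\sF}]$; that the right member is $\dom(e^{r\Id\Gamma(\vk)})$-valued follows from (1) and the closedness of $e^{r\Id\Gamma(\vk)}$. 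A Cauchy--Schwarz step in the expectation gives $\|e^{r\Id\Gamma(\vk)}(e^{F}T_{\Geb,\UV,t}e^{-F}\Psi)(\V{x})\|\le\EE[M_t(\V{x})^{2}]^{1/2}\EE[\|\Psi(\V{B}^{\V{x}}_t)\|^{2}]^{1/2}$. Here $\sup_{\V{x}}\EE[M_t(\V{x})^{2}]\le c_t\,e^{6tL^{2}}$: splitting once more by Cauchy--Schwarz, the factor $\EE[(e^{-\int_0^tV}\|e^{r\Id\Gamma(\vk)}W^{*}_{\UV,t}(\V{x})\|)^{4}]$ is bounded uniformly in $\V{x}$ and $\UV$ by \eqref{kashmir} and (1), while $\EE[e^{4L|\V{B}_t|}]^{1/2}\le3^{3/4}e^{6tL^{2}}$ by the Gaussian bound $\EE[e^{a|\V{B}_t|}]\le(1-\theta)^{-3/2}e^{a^{2}t/(2\theta)}$, $0<\theta<1$, applied with $a=4L$ and $\theta=2/3$; the constant $c_t$ so obtained is monotone in $t$. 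Raising the pointwise estimate to the power $p$, integrating in $\V{x}$, and applying Young's convolution inequality to $\EE[\|\Psi(\V{B}^{\V{x}}_t)\|^{2}]=(p_t*\|\Psi(\cdot)\|^{2})(\V{x})$ (with $p_t$ the Gaussian transition density and $\|p_t\|_{L^{p/2}}^{1/2}=c\,t^{-3(1/2-1/p)/2}$) yields \eqref{hcbd} for all $p\in[2,\infty]$, the cases $p=2$ and $p=\infty$ corresponding to $\|p_t\|_{L^{1}}=1$ and $\|p_t\|_{L^{\infty}}=(2\pi t)^{-3/2}$. Since the pointwise estimate holds for every $\V{x}$, and $(T_{\Geb,\UV,t}\,\cdot\,)(\V{x})=0$ for $\V{x}\in\Geb^{c}$ while only $\Psi\restr_{\Geb}$ enters, one may freely replace the essential supremum by the pointwise supremum and $\Geb$ by $\RR^{3}$ in \eqref{hcbd}.

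Finally, (3) is read off from (2) with $F\equiv0$: if $\Phi_{\Geb,\UV}$ is an eigenvector of $\NV_{\Geb,\UV}$ with eigenvalue $\lambda$, then the Feynman--Kac formula \eqref{FKDfor} gives $\Phi_{\Geb,\UV}=e^{t\lambda}T_{\Geb,\UV,t}\Phi_{\Geb,\UV}$, hence $\Phi_{\Geb,\UV}\in L^{p}(\Geb,\dom(e^{t\Id\Gamma(\vk)/6}))$ for all $p\in[2,\infty]$, and choosing $t=6r$ finishes the proof. The statements for $\wt{W}_{\UV,t}(\V{x})$, $\wt{T}_{\Geb,\UV,t}$ and eigenvectors of $\HV_{\Geb,\UV}$ are obtained verbatim from \eqref{defwtW}, \eqref{defwtTGL}, the tilde analogue of \eqref{bduU}, and the second formula in \eqref{FKDfor}; the altered signs in \eqref{defwtW} are irrelevant since only operator norms enter. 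The only genuinely delicate point is part (1)---making the operator factorization of $e^{r\Id\Gamma(\vk)}W_{\UV,t}(\V{x})^{*}$ and the accompanying domain claim rigorous, i.e.\ controlling the unbounded operators $e^{r\Id\Gamma(\vk)}$ and $\ad(\cdot)$ through the exponential-vector calculus---after which (2) and (3) are routine Feynman--Kac and heat-kernel bookkeeping.
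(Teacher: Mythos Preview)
Your proof is correct and follows essentially the same approach as the paper. The only differences are cosmetic: in part~(1) you factorize $e^{r\Id\Gamma(\vk)}F_{s}(h)$ as $F_{s/2}(e^{r\vk}h)e^{-\Id\Gamma((s/2)\omega-r\vk)}$ whereas the paper uses the equivalent split $F_{s-r}(e^{r\vk}h)e^{-r\Id\Gamma(\omega-\vk)}$, and in part~(2) you apply two nested Cauchy--Schwarz steps while the paper does a single four-way H\"older split with exponents $(6,6,6,2)$; both routes yield the same $e^{6tL^2}$ factor and the same heat-kernel power $t^{-3(1/2-1/p)/2}$.
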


\begin{proof}
To prove the first part, let $s>r>0$ and $g\in\mathfrak{k}$, where $\mathfrak{k}$ is the space defined
in \eqref{deffrk}. Then \eqref{SGdGamma} and \eqref{Fexpv} imply
\begin{align*}
e^{r\Id\Gamma(\vk)}F_{0,s}(g)\epsilon(h)&=\epsilon(e^{-s\omega+r\vk}h+e^{r\vk}g)
=F_{0,s-r}(e^{r\vk}g)e^{-r\Id\Gamma(\omega-\vk)}\epsilon(h),
\end{align*}
for all $h\in L^2(\RR^3)$. From this, the totality of the exponential vectors, and \eqref{bdFt} 
we infer that $F_{0,s}(g)$ maps $\sF$ into $\dom(e^{r\Id\Gamma(\vk)})$ and
\begin{align*}
e^{r\Id\Gamma(\vk)}F_{0,s}(g)=F_{0,s-r}(e^{r\vk}g)e^{-r\Id\Gamma(\omega-\vk)}.
\end{align*}
Here $\|e^{-r\Id\Gamma(\omega-\vk)})\|\le1$. Together with \eqref{bdFt} 
and \eqref{defW} this further implies that, at every fixed elementary event, 
${W}_{\UV,t}(\V{x})^*$ maps $\sF$ into $\dom(e^{t\Id\Gamma(\vk)/6})$ with
\begin{align*}
\|e^{t\Id\Gamma(\vk)/6}{W}_{\UV,t}(\V{x})^*\|&
\le c'e^{u_{\UV,t}+ce^{t\|\vk\|/3}\|[(t\omega)^\mh\vee1]U^-_{\UV,t}(\V{x})\|^2
+c\|[(t\omega)^\mh\vee1]U^+_{\UV,t}(\V{x})\|^2},
\end{align*}
for some universal constants $c,c'>0$. The bound \eqref{massiv1} now follows from
\eqref{bduU} and H\"{o}lder's inequality.

Now let $p\in[2,\infty)$ and $\Psi\in L^2(\RR^3,\sF)$. Then we further obtain
\begin{align*}
\int_\Geb&\EE\Big[e^{F(\V{x})-F(\V{B}_t^{\V{x}})-\int_0^tV(\V{B}_s^{\V{x}})\Id s}
\|e^{t\Id\Gamma(\vk)/6}W_{\Lambda,t}(\V{x})^*\|\|\Psi(\V{B}_t^{\V{x}})\|\Big]^p\Id\V{x}
\\
&\le\EE\Big[e^{6L|\V{B}_t|}\Big]^{\nf{p}{6}}
\sup_{\V{y}\in\RR^3}\EE\Big[e^{-6\int_0^tV(\V{B}_s^{\V{y}})\Id s}\Big]^{\nf{p}{6}}
\\
&\quad\cdot\sup_{\V{z}\in\RR^3}
\EE\Big[\|e^{t\Id\Gamma(\vk)/6}{W}_{\UV,t}(\V{z})^*\|^6\Big]^{\nf{p}{6}}
\int_{\RR^3}\EE\big[\|\Psi(\V{B}_t^{\V{x}})\|^2\big]^{\nf{p}{2}}\Id\V{x}.
\end{align*}
Since $e^{t\Delta/2}$ maps $L^{1}(\RR^3)$ continuously into $L^{\nf{p}{2}}(\RR^3)$, the integral
in the last line is less than or equal to
$c_{p,t}\|\|\Psi(\cdot)\|^2\|_{L^1(\RR^3)}^{\nf{p}{2}}=c_{p,t}\|\Psi\|_{L^2(\RR^3,\sF)}^p$.
According to wellknown bounds on the semigroup of the free Laplacian we may take
$c_{p,t}=c_pt^{-3(p/2-1)/2}$. In view of \eqref{kashmir} and the bound
\begin{align*}
\EE\Big[e^{r|\V{B}_t|}\Big]&\le2^{\nf{3}{2}}e^{r^2t},\quad r\ge0,
\end{align*}
we arrive at \eqref{hcbd} for finite $p$ and $L^q(\RR^3,\sF)$-norms. Its version for 
$L^q(\Geb,\sF)$-norms follows trivially by considering $\Psi$ that vanish a.e. on $\Geb^c$.
Obvious modifications of these arguments take 
care of the case $p=\infty$. Altogether this implies the second part of the proposition.

The third part follows from the second one and 
$\Phi_{\Geb,\UV}=e^{tE_{\Geb,\UV}}T_{\Geb,\UV,t}\Phi_{\Geb,\UV}$, $t\ge0$. 

The same proof works also for the Gross transformed objects, whence the last assertion is clear.
\end{proof}

\begin{thm}\label{thmGSmass}
Assume that $\mu>0$ and that $\Geb\subset\RR^3$ is open and bounded. Let $\UV\in[0,\infty]$.
Then $E_{\Geb,\UV}$ is an eigenvalue of both $\NV_{\Geb,\UV}$ and $\HV_{\Geb,\UV}$ and
all corresponding eigenvectors are contained in the domain of $e^{r\Id\Gamma(1)}$, for every $r>0$. 
\end{thm}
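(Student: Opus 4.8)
The plan is to deduce the theorem from two ingredients that are essentially already prepared: the smoothing estimates for the Feynman--Kac semigroups in Prop.~\ref{propGSmu}, and Gross' abstract existence criterion for ground states of positivity preserving semigroups \cite{Gross1972}. Concretely I would (i) upgrade Prop.~\ref{propGSmu} to the hypercontractivity bound of Thm.~\ref{introthmhypcontr} by transporting everything to $\cQ$-space and inserting Nelson's Gaussian hypercontractivity theorem, (ii) feed this bound together with the positivity from Cor.~\ref{corTpos} into Gross' criterion to obtain that $E_{\Geb,\UV}$ is an eigenvalue, and (iii) read off the $e^{r\Id\Gamma(1)}$-regularity of eigenvectors directly from Prop.~\ref{propGSmu}(3). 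All three steps are run in parallel for $\NV_{\Geb,\UV}$ (using $W$, $T$) and for $\HV_{\Geb,\UV}$ (using $\wt{W}$, $\wt{T}$), the latter being covered by the final sentence of Prop.~\ref{propGSmu}.

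For step (i), fix $t>0$ and apply Prop.~\ref{propGSmu}(2) with $F\equiv0$, with the bounded multiplication operator $\vk:=\mu$ (the constant; $\vk\le\omega$ since $\omega\ge\mu$), and with the exponent $p:=p_{\mu,t}=e^{t\mu/3}+1$. By Thm.~\ref{thmFKD} this produces a factorization $e^{-t\NV_{\Geb,\UV}}\Psi=T_{\Geb,\UV,t}\Psi=e^{-t\Id\Gamma(\mu)/6}\Phi$ with $\Phi\in L^{p_{\mu,t}}(\Geb,\sF)$ and $\|\Phi\|_{L^{p_{\mu,t}}(\Geb,\sF)}\le c_t\|\Psi\|_{L^2(\Geb,\sF)}$, where $c_t$ depends only on $\ee,\mu,t,V$. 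Transporting to $\cQ$-space via the isomorphism $\cU_\Geb$ from \eqref{defcUG}, the fibre operator $\cU e^{-t\Id\Gamma(\mu)/6}\cU^*$ is the second quantization of the scalar contraction $e^{-t\mu/6}$, i.e.\ the Ornstein--Uhlenbeck-type semigroup on $L^2(\cQ,\GM)$ at time $t\mu/6$; by Nelson's hypercontractivity theorem it maps $L^2(\cQ,\GM)$ contractively into $L^q(\cQ,\GM)$ with $q-1=e^{2(t\mu/6)}=e^{t\mu/3}$, that is $q=p_{\mu,t}$. Applying this inequality fibrewise in $\V{x}$, raising to the power $p_{\mu,t}$, integrating over $\Geb$ and taking the root, one obtains
\begin{align*}
\|\cU_\Geb e^{-t\NV_{\Geb,\UV}}\cU_\Geb^*\Psi\|_{L^{p_{\mu,t}}(\Geb\times\cQ,\,\Id\V{x}\otimes\GM)}
&\le\|\Phi\|_{L^{p_{\mu,t}}(\Geb,\sF)}\le c_t\|\Psi\|_{L^2(\Geb,\sF)},
\end{align*}
which, since $\|\Psi\|_{L^2(\Geb,\sF)}=\|\cU_\Geb\Psi\|_{L^2(\Geb\times\cQ)}$, is exactly the claimed bound; the version for $\HV_{\Geb,\UV}$ follows verbatim with $\wt{T}$, $\wt{W}$. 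This also disposes of Thm.~\ref{introthmhypcontr}.

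For step (ii), since $\Geb$ is bounded the measure $\Id\V{x}\otimes\GM$ is finite, and after a harmless rescaling we may treat it as a probability measure. On this probability space the semigroup $\cU_\Geb e^{-t\NV_{\Geb,\UV}}\cU_\Geb^*$ is positivity preserving by Cor.~\ref{corTpos}, it is hypercontractive by step (i), and its self-adjoint generator is bounded below, so that $E_{\Geb,\UV}=\inf\sigma(\NV_{\Geb,\UV})>-\infty$; the lower bound is immediate from \eqref{pernille0} together with $\cfG\ge0$ (and, for $\UV=\infty$, the unitary equivalence of $\NV_{\Geb,\infty}$ and $\HV_{\Geb,\infty}$, which is available here because $\mu>0$ makes $\beta_\infty$ square-integrable). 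These are precisely the hypotheses of Gross' criterion \cite{Gross1972}, which hence yields that $E_{\Geb,\UV}$ is an eigenvalue of $\NV_{\Geb,\UV}$; the same argument with $\wt{T}$ gives it for $\HV_{\Geb,\UV}$. For step (iii), let $\Phi$ be any eigenvector for the eigenvalue $E_{\Geb,\UV}$; then $\Phi=e^{tE_{\Geb,\UV}}T_{\Geb,\UV,t}\Phi$, and Prop.~\ref{propGSmu}(3), applied with the positive constant $\vk:=\mu$, gives $\Phi\in\dom(e^{s\Id\Gamma(\mu)})$ for every $s>0$. Since $\Id\Gamma(\mu)=\mu\,\Id\Gamma(1)$ with $\mu>0$, the product $s\mu$ exhausts $(0,\infty)$, whence $\Phi\in\dom(e^{r\Id\Gamma(1)})$ for all $r>0$; this is the single place where $\mu>0$ is used essentially, and the identical argument applies to eigenvectors of $\HV_{\Geb,\UV}$.

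The genuinely substantive estimates are all contained in Prop.~\ref{propGSmu} (hence ultimately in the bounds on the Feynman--Kac integrands of \cite{MatteMoeller2017}); the only delicate bookkeeping is in step (i), namely matching the ``time'' $t\mu/6$ coming from the $e^{t\Id\Gamma(\mu)/6}$-smoothing in Prop.~\ref{propGSmu}(2) to the exponent $p_{\mu,t}=e^{t\mu/3}+1$ dictated by Nelson's hypercontractivity inequality, and keeping track of which $L^p$-space over $\Geb$ one works in so that the $\V{x}$-integration closes. After that, Gross' criterion functions as a black box and steps (ii)--(iii) are routine, so I expect no serious obstacle beyond the care needed in step (i).
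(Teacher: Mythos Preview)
Your proposal is correct and follows essentially the same approach as the paper's own proof: both combine Prop.~\ref{propGSmu}(2) with $\vk=\mu$ and Nelson's hypercontractivity bound on $L^2(\cQ,\GM)$ to establish the $L^2\to L^{p_{\mu,t}}$ smoothing of $\cU_\Geb e^{-t\NV_{\Geb,\UV}}\cU_\Geb^*$, then invoke Gross' criterion on the finite measure space $\Geb\times\cQ$ together with Cor.~\ref{corTpos} to conclude that $E_{\Geb,\UV}$ is an eigenvalue, and finally read off the $e^{r\Id\Gamma(1)}$-regularity from Prop.~\ref{propGSmu}(3). The only cosmetic difference is that the paper leaves the last step implicit, whereas you spell out the substitution $s\mu\mapsto r$ explicitly.
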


\begin{proof}[Proof of Theorem~\ref{introthmhypcontr} and Theorem~\ref{thmGSmass}.]
We only treat $\NV_{\Geb,\UV}$ explicitly and make use of the first 
Feynman-Kac formula in \eqref{FKDfor} without further mention. To deal with 
$\HV_{\Geb,\Lambda}$ we simply have to employ the second formula in \eqref{FKDfor} instead. 

Recall the definitions of the Wiener-It\^{o}-Segal type isomorphisms $\cU$ and $\cU_{\Geb}$
in Subsection~\ref{ssecQspace} and \eqref{defcUG}, respectively.

Since $(\Geb\times\cQ,\fB(\Geb)\otimes\fQ,\Id\V{x}\otimes\GM)$ (with the Borel-$\sigma$-algebra 
$\fB(\cG)$ of $\Geb$) is a finite measure
space and $\cU_\Geb{T}_{\Geb,\UV,t}\cU_\Geb^*$ is bounded, self-adjoint, and positivity preserving, 
we may employ an abstract result of Gross \cite[Thm.~1]{Gross1972} to prove the existence of ground
state eigenvectors. According to Gross' theorem, it suffices to prove that 
$\cU_\Geb{T}_{\Geb,\UV,t}\cU_\Geb^*$ maps 
$L^2(\Geb\times\cQ,\Id\V{x}\otimes\GM)$ continuously into
$L^p(\Geb\times\cQ,\Id\V{x}\otimes\GM)$, for some 
$t>0$ and $p>2$. We thus fix some arbitrary $t>0$ and define $p_{\mu,t}:=e^{t\mu/3}+1$.
Then a well-known hypercontractivity bound of Nelson 
(see, e.g., \cite[Thm.~I.17]{Simon1974}) implies that 
$$
\|\cU e^{-t\Id\Gamma(\mu)/6}\cU^*\phi\|_{L^{p_{\mu,t}}(\GM)}
\le\|\phi\|_{L^2(\GM)}=\|\cU^*\phi\|_{\sF},\quad\phi\in L^2(\cQ,\GM).
$$ 
By virtue of Proposition~\ref{propGSmu} we thus get
\begin{align*}
\int_{\Geb}\|\cU (T_{\Geb,\UV,t}\Psi)(\V{x})\|^{p_{\mu,t}}_{L^{p_{\mu,t}}(\GM)}\Id\V{x}
&\le \int_{\Geb}\|e^{t\Id\Gamma(\mu)/6}(T_{\Geb,\UV,t}\Psi)(\V{x})\|_{\sF}^{p_{\mu,t}}\Id\V{x}
\le c\|\Psi\|_{L^2(\Geb,\sF)}^{p_{\mu,t}},
\end{align*}
for all $\Psi\in L^2(\Geb,\sF)$, where the constant $c$ depends on
$\ee$, $\mu$, $t$, $V$, and $\Geb$. This shows that
$$
\|\cU_\Geb T_{\Geb,\UV,t}\cU_\Geb^*\Phi
\|_{L^{p_{\mu,t}}(\Geb\times\cQ,\Id\V{x}\otimes\GM)}
\le c\|\Phi\|_{L^{2}(\Geb\times\cQ,\Id\V{x}\otimes\GM)},
$$
for all $\Phi\in L^{2}(\Geb\times\cQ,\Id\V{x}\otimes\GM)$.
The aforementioned result of Gross now implies that $\|\cU_\Geb {T}_{\Geb,\UV,t}\cU_\Geb^*\|$
is an eigenvalue of $\cU_\Geb {T}_{\UV,t}\cU_\Geb^*$. Together with the spectral calculus this 
shows that $E_{\Geb,\UV}$ is an eigenvalue of $\NV_{\Geb,\UV}$. 
\end{proof}


\subsection{Exponential Localization}\label{ssecexploc}

\noindent
In this subsection we derive the first technical prerequisite for the compactness
arguments mentioned in the beginning of this section, namely estimates on the spatial localization
of elements in spectral subspaces below a localization threshold. (Later on we shall see examples 
of such subspaces other than $\{0\}$.) The general method applied to prove exponential localization of 
these spectral subspaces, which might belong to the continuous subspace of the considered operator, 
originates from \cite{BFS1998b}, has been further developed in \cite{Griesemer2004}, and was
used in several other articles (e.g. \cite{KMS2011,Panati2009}). In our case the implementation 
of the method requires a few extra arguments to get the strengthened bound \eqref{exot2} and to
cover exponential weight functions $F$ that are not linearly bounded. Notice also that everything done
in this subsection applies to the case $\UV=\infty$ right away.

Let $\UV\in[0,\infty]$ and $R>0$. Then we abbreviate
\begin{align*}
\Geb_R&:=\Geb\cap\{|\V{x}|>R\},
\end{align*}
and recall that, according to our earlier notation,
\begin{align*}
\Th{R}{\UV}&=\inf\sigma(\NV_{\Geb_R,\UV})=
\inf\sigma(\HV_{\Geb_R,\UV}),\quad\text{if $\Geb_R\not=\emptyset$.}
\end{align*}
We further set
\begin{align*}
\Th{R}{\UV}&:=\infty,\quad\text{if $\Geb_R=\emptyset$.}
\end{align*}
By ``localization threshold" we mean the generalized limit
\begin{align*}
\Sigma_{\Geb,\UV}&:=\lim_{R\to\infty}\Th{R}{\UV}\in(-\infty,\infty].
\end{align*}
It exists because $\Th{R}{\UV}$ is monotonically increasing in $R>0$.
The monotonicity in turn is obvious since $\sD(\Geb_R,\fdom(\Id\Gamma(\omega)))$,
defined as in \eqref{defsDsE}, is a form core for $\HV_{\Geb_R,\UV}$, if $\Geb_R$ is non-empty.
In the next proposition and sometimes later on we shall also use the notation
\begin{align}\label{defEnull}
E_{\tilde{\Geb},\UV}^0&:=E_{\tilde{\Geb},\UV}\;\,\text{in case that $V=0$, for any
open $\tilde{\Geb}\subset\RR^3$.}
\end{align}

Of course, the spectral subspaces of the Nelson operators corresponding to bounded open subsets
of $\RR^3$ are trivially localized. The crucial point about the bounds asserted in the next proposition
is that their right hand sides depend on $\UV$ only through the quantities $\Th{R}{\UV}$ 
and $E_{\Geb,\UV}$ and are uniform in the possibly bounded open subsets $\Geb'\subset\Geb$.
In the statement of the proposition the symbol $\sQ_{\Geb'}$ is defined by \eqref{defQGV} with
$\Geb'$ put in place of $\Geb$.

\begin{prop}\label{propexploc}
Assume that $\Geb\subset\RR^3$ is open and unbounded. 
Let $\Lambda\in[0,\infty]$, $\Geb'\subset\Geb$ be open, and suppose that
$\lambda\in\RR$ and $R>0$ satisfy
\begin{align}\label{defDelta}
\Delta&:=\Th{R}{\UV}-\lambda-\frac{4}{R^2}>0.
\end{align}
Let $F:\RR^3\to\RR$ be locally Lipschitz continuous and bounded from below and assume that
{\em one} of the following two bounds holds,
\begin{align}\label{bedFa}
\Th{R}{\UV}&\ge\frac{1}{2}|\nabla F|^2+\frac{4}{R^2}+\lambda,\quad\text{a.e. on $B_R^{c}$,}
\\\label{bedFb}
\Th{R}{\UV}^0+V&\ge\frac{1}{2}|\nabla F|^2+\frac{4}{R^2}+\lambda,\quad\text{a.e. on $B_R^{c}$,}
\end{align}
where $B_{r}$ is the open ball of radius $r>0$ about the origin in $\RR^3$.
Then, for every $\ve\in(0,1)$, the range of 
$1_{(-\infty,\lambda]}(\HV_{\Geb',\UV})$ is contained in the domain of $e^{(1-\ve)F}$ and
there exists a universal constant $c>0$ such that
\begin{align}\label{exot1}
\|e^{(1-\ve)F}1_{(-\infty,\lambda]}(\HV_{\Geb',\UV})\|
\le\frac{c}{\ve^2}\frac{(\Th{R}{\UV}-E_{\Geb,\UV}+1)^2}{(1\wedge\Delta)^2}\cdot C_F(\ve,R)&,
\end{align}
with
\begin{align*}
C_F(\ve,R)&:=\exp\Big({(1-\ve)\max_{\ol{B}_{2R}}F}\Big)
\big(1+\underset{B_{2R}}{\mathrm{ess\,sup}}|\nabla F|^2\big).
\end{align*}
Furthermore, $e^{(1-\ve)F}$ maps the range of $1_{(-\infty,\lambda]}(\HV_{\Geb',\UV})$ 
into $\sQ_{\Geb'}$ and
\begin{align}\nonumber
\|(\HV_{\Geb',\UV}-E_{\Geb',\UV})^\eh e^{(1-\ve)F}1_{(-\infty,\lambda]}(\HV_{\Geb',\UV})\|&
\\\label{exot2}
\le\frac{c}{\ve^{\nf{5}{2}}}\frac{(\Th{R}{\UV}-E_{\Geb,\UV}+1)^{\nf{5}{2}}}{(1\wedge\Delta)^2}
\cdot C_F(\ve,R)&.
\end{align}
The same assertions hold when the symbol $\HV_{\Geb',\UV}$ is replaced by $\NV_{\Geb',\UV}$
everywhere.
\end{prop}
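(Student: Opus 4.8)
The plan is to adapt the form-theoretic exponential localization method of Bach--Fröhlich--Sigal and Griesemer \cite{BFS1998b,Griesemer2004}, whose ingredients are the IMS localization formula \eqref{IMSq}, the multiplier identity \eqref{exp5}, the monotonicity of the Dirichlet ground state energies under shrinking of the domain (the same monotonicity used to define $\Thh$), and the spectral gap $\Th{R}{\UV}-\HV_{\Geb',\UV}\ge\Delta$ that is available precisely on $\cR:=\Ran(1_{(-\infty,\lambda]}(\HV_{\Geb',\UV}))$; the constant bookkeeping is the part deferred to App.~\ref{appexploc}. I would treat $\HV_{\Geb',\UV}$ directly; the assertion for $\NV_{\Geb',\UV}$ is identical when $\UV<\infty$ and follows for $\UV=\infty$ by conjugating with a Gross transformation $G_{K,\infty}$, $K>0$, which is a multiplication operator in $\V{x}$, hence commutes with $e^{(1-\ve)F}$ and with $\chi_{k,R}$ and intertwines $\NV_{\Geb',\infty}$ with $\HV_{\Geb',K,\infty}$, the latter having the same thresholds and ground state energy. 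One may assume $\lambda\ge E_{\Geb',\UV}$, since otherwise $\cR=\{0\}$. Passing to the shifted objects $\hat H:=\HV_{\Geb',\UV}-E_{\Geb,\UV}\ge0$, $\hat{\mathfrak h}:=\hv_{\Geb',\UV}-E_{\Geb,\UV}\|\cdot\|^2\ge0$, $\hat\lambda:=\lambda-E_{\Geb,\UV}\ge0$ and $\hat\Sigma_R:=\Th{R}{\UV}-E_{\Geb,\UV}\ge0$ (so that $\hat\Sigma_R=\hat\lambda+\Delta+4R^{-2}$), one has $\cR\subset\dom(\hat H)$, $\hat H\cR\subset\cR$, $\|\hat H\Psi\|\le\hat\lambda\|\Psi\|$ and $\hat{\mathfrak h}[\Psi]\le\hat\lambda\|\Psi\|^2$ on $\cR$. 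For the weight I would use the bounded Lipschitz truncations $F_n:=(1-\ve)F\wedge n\nearrow(1-\ve)F$, which satisfy $|\nabla F_n|\le(1-\ve)|\nabla F|$ a.e.\ and have globally bounded gradient, because $F$ is locally Lipschitz and, by \eqref{bedFa} or \eqref{bedFb}, $|\nabla F|$ is essentially bounded on $\{(1-\ve)F\le n\}$; after extending \eqref{exp5} and \eqref{IMSq} to bounded Lipschitz multipliers by mollification (using the relative form bounds \eqref{pernille0}), $e^{F_n}\Psi$ and $\chi_{k,R}e^{F_n}\Psi$ lie in $\fdom(\HV_{\Geb',\UV})$ for $\Psi\in\cR$.

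The heart of the argument is the following estimate, for a normalized $\Psi\in\cR$. Applying the extended \eqref{exp5} with the multiplier $f=\chi_{1,R}e^{F_n}$, together with the lower bound $\hat{\mathfrak h}[\chi_{1,R}e^{F_n}\Psi]\ge\hat\Sigma_R\|\chi_{1,R}e^{F_n}\Psi\|^2$ — valid because $\chi_{1,R}e^{F_n}\Psi$ is supported in $\Geb'\cap\{|\V{x}|>R\}$ and $\HV_{\Geb'\cap\{|\V{x}|>R\},\UV}\ge\HV_{\Geb_R,\UV}\ge\Th{R}{\UV}$ by domain monotonicity — gives
\begin{align*}
\hat\Sigma_R\,\|\chi_{1,R}e^{F_n}\Psi\|^2
&\le\Re\big\langle\chi_{1,R}^2e^{2F_n}\Psi,\hat H\Psi\big\rangle
+\tfrac12\big\|\big((\nabla\chi_{1,R})+\chi_{1,R}\nabla F_n\big)e^{F_n}\Psi\big\|^2 .
\end{align*}
In the last term one uses $|\nabla F_n|^2\le(1-\ve)^2|\nabla F|^2\le2(1-\ve)^2\Delta$ on $\{|\V{x}|>R\}$ — this is exactly what \eqref{bedFa} delivers, whereas under \eqref{bedFb} one first separates $V$ from $\hat{\mathfrak h}$ and bounds the kinetic part by the $V{=}0$ threshold, retaining $V$ pointwise — together with $e^{F_n}\le\exp((1-\ve)\max_{\ol B_{2R}}F)$ and $|\nabla\chi_{1,R}|\le2/R$ on the transition shell $\{R\le|\V{x}|\le2R\}$; this yields $\tfrac12\|\cdots\|^2\le(1-\ve)^2\Delta\,\|\chi_{1,R}e^{F_n}\Psi\|^2+c\,C_F(\ve,R)$. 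Since $\hat\Sigma_R-(1-\ve)^2\Delta=\hat\lambda+4R^{-2}+\ve(2-\ve)\Delta$ and $\Re\langle\chi_{1,R}^2e^{2F_n}\Psi,\hat H\Psi\rangle\le\|\chi_{1,R}e^{F_n}\Psi\|\,\|\chi_{1,R}e^{F_n}\hat H\Psi\|$, one obtains
\begin{align*}
\big(\hat\lambda+\tfrac{4}{R^2}+\ve(2-\ve)\Delta\big)\|\chi_{1,R}e^{F_n}\Psi\|^2
&\le\|\chi_{1,R}e^{F_n}\Psi\|\,\|\chi_{1,R}e^{F_n}\hat H\Psi\|+c\,C_F(\ve,R).
\end{align*}

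The decisive step is then to set $A_n:=\sup\{\|\chi_{1,R}e^{F_n}\Xi\|:\Xi\in\cR,\ \|\Xi\|=1\}$, which is \emph{finite} because $F_n$ is bounded. As $\hat H\Psi\in\cR$ with $\|\hat H\Psi\|\le\hat\lambda$, one has $\|\chi_{1,R}e^{F_n}\hat H\Psi\|\le A_n\hat\lambda$, and Young's inequality with the optimal parameter $\hat\lambda$ absorbs the $\hat\lambda$-proportional term: for every normalized $\Psi\in\cR$,
\begin{align*}
\big(\tfrac{\hat\lambda}{2}+\tfrac{4}{R^2}+\ve(2-\ve)\Delta\big)\|\chi_{1,R}e^{F_n}\Psi\|^2
&\le\tfrac{\hat\lambda}{2}A_n^2+c\,C_F(\ve,R).
\end{align*}
Taking the supremum over $\Psi$ on the left and cancelling the common finite quantity $\tfrac{\hat\lambda}{2}A_n^2$ eliminates $\hat\lambda$ entirely, giving $(4R^{-2}+\ve(2-\ve)\Delta)A_n^2\le c\,C_F(\ve,R)$ uniformly in $n$; since $4R^{-2}+\ve(2-\ve)\Delta\gtrsim\ve(1\wedge\Delta)$ and $\|\chi_{0,R}e^{F_n}\Psi\|\le\exp((1-\ve)\max_{\ol B_{2R}}F)$, the norm $\|e^{F_n}\Psi\|$ is bounded uniformly in $n$ by (a multiple of) the right-hand side of \eqref{exot1}. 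Monotone convergence, $e^{2F_n}\uparrow e^{2(1-\ve)F}$, then places $\Psi$ in $\dom(e^{(1-\ve)F})$ with the bound \eqref{exot1}.

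For \eqref{exot2} one feeds the $L^2$-bound just obtained back into \eqref{exp5} with $f=e^{F_n}$: from $\hat{\mathfrak h}[e^{F_n}\Psi]=\Re\langle e^{2F_n}\Psi,\hat H\Psi\rangle+\tfrac12\||\nabla F_n|e^{F_n}\Psi\|^2$, the first term is $\le\|e^{F_n}\Psi\|\,\|e^{F_n}\hat H\Psi\|\le\hat\Sigma_R\,\|e^{F_n}\Psi\|^2$ (again using $\hat H\Psi\in\cR$ and the already-established bound on $e^{F_n}$ over $\cR$) and the gradient term is $\le\Delta\|e^{F_n}\Psi\|^2+c\,C_F(\ve,R)^2$ as above; closedness of $\hat{\mathfrak h}$ lets one pass to the limit, so that $\|(\HV_{\Geb',\UV}-E_{\Geb',\UV})^{\eh}e^{(1-\ve)F}\Psi\|^2=\hat{\mathfrak h}[e^{(1-\ve)F}\Psi]$ satisfies \eqref{exot2} and, in particular, $e^{(1-\ve)F}$ maps $\cR$ into $\fdom(\HV_{\Geb',\UV})$. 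The main obstacle — and the reason App.~\ref{appexploc} is required — is the technical underpinning of this scheme when $\UV=\infty$: extending \eqref{exp5} and \eqref{IMSq} to Lipschitz multipliers and verifying invariance of $\fdom(\HV_{\Geb',\UV})$ under the relevant multiplications rests on the relative form bounds of Thm.~\ref{thmrbUV}, and, more delicately, the constant tracking must be organized so that all bounds depend on $\UV$ and on $\Geb'$ only through $\Th{R}{\UV}$ and $E_{\Geb,\UV}$, uniformly over open $\Geb'\subset\Geb$, which is precisely what makes the proposition usable in the subsequent compactness and limiting arguments.
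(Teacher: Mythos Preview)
Your approach is correct in spirit and follows Griesemer's direct form-theoretic scheme: apply the multiplier identity \eqref{exp5} with $f=\chi_{1,R}e^{F_n}$, lower-bound $\hat{\mathfrak h}[f\Psi]$ by $\hat\Sigma_R\|f\Psi\|^2$ via domain monotonicity, and close the estimate self-referentially through $A_n:=\sup_{\Xi\in\cR,\|\Xi\|=1}\|\chi_{1,R}e^{F_n}\Xi\|$, exploiting $\hat H\cR\subset\cR$ and $\|\hat H\restr_\cR\|\le\hat\lambda$. The paper, however, takes a genuinely different route. It introduces the \emph{comparison operator}
\[
Y_{\Geb'}^{R}:=\HV_{\Geb',K,\UV}+\big(\hat\Sigma_R+\tfrac12|\nabla F|^2\big)\chi_{0,R}^2,
\]
derives the form lower bounds $Y_{\Geb'}^{R}-\lambda\ge\Delta$ and $Y_{\Geb'}^{R}-\lambda\ge\tfrac12|\nabla F|^2$ from IMS (via \eqref{cassius2}--\eqref{cassius4}), feeds those into the \emph{resolvent} exponential estimates of Lem.~\ref{lemresexploc} (themselves proved from \eqref{exp5}), and then writes
\[
1_{(-\infty,\lambda]}(\HV_{\Geb',K,\UV})=\big[\theta(\HV_{\Geb',K,\UV})-\theta(Y_{\Geb'}^R)\big]1_{(-\infty,\lambda]}(\HV_{\Geb',K,\UV})
\]
(since $\theta(Y_{\Geb'}^R)=0$) via a Helffer--Sj\"ostrand representation; the second resolvent identity localizes $e^{F_\ve}$ to $\supp\chi_{0,R}^2\subset B_{2R}$, because $Y_{\Geb'}^R-\HV_{\Geb',K,\UV}$ is a compactly supported multiplication operator. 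This yields \eqref{exot1} and \eqref{exot2} simultaneously (the latter through the strengthened resolvent bound \eqref{resexploc2a}), and handles super-linearly growing $F$ cleanly since the weight $|\nabla F|^2$ is absorbed into $Y_{\Geb'}^R$ rather than into the multiplier. Your scheme is more elementary and, carried out carefully, gives comparable (or even slightly sharper) constants; but your claim that $F_n=(1-\ve)F\wedge n$ has globally bounded gradient is not justified under \eqref{bedFb} (there $|\nabla F|$ can be unbounded on sublevel sets of $F$), so the invariance $\chi_{1,R}e^{F_n}\sQ_{\Geb'}\subset\sQ_{\Geb'}$ and the extension of \eqref{exp5} need a further mollification step. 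The paper sidesteps this issue precisely because the exponential weight only ever meets the resolvent of $Y_{\Geb'}^R$, where Lem.~\ref{lemresexploc} supplies the needed a~priori control.
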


\begin{proof}
A detailed proof of this proposition is given in Appendix~\ref{appexploc}. Let us mention that
the {\em finiteness} of the left hand side of \eqref{exot1} is (almost) a direct consequence
of \eqref{IMSq}, \eqref{exp5}, and \cite[Thm.~1]{Griesemer2004}, at least if $F$ is not increasing
faster than linearly. The precise form of the upper bound in \eqref{exot1} essentially 
follows from analyzing the proofs in \cite{Griesemer2004} (see also \cite{Panati2009}), while the 
proof of \eqref{exot2} requires additional arguments.
\end{proof}

We finish this subsection by extracting Theorems~\ref{introthmexpdec},~\ref{introthmsupexp},
and~\ref{introthmdecnonFock} from the previous proposition.

\begin{proof}[Proof of Theorem~\ref{introthmexpdec}.]
We drop the subscripts $\Geb=\RR^3$ and $\UV=\infty$ so that 
$\NV=\NV_{\RR^3,\infty}$ and $\Sigma=\Sigma_{\RR^3,\infty}$. 
(We only have to consider $\Geb=\RR^3$, 
but the proof obviously works for general open $\Geb$.) 

Let $\lambda\in\RR$, suppose that $\Psi$ is a normalized element of the range of
$1_{(-\infty,\lambda]}(\NV)$, and let $\Psi(\cdot)$ be its continuous representative. We write
\begin{align}\label{piet}
\Psi=e^{-t\NV}1_{(-\infty,\lambda]}(\NV)\big\{e^{t\NV}1_{(-\infty,\lambda]}(\NV)\big\}\Psi.
\end{align}
H\"{o}lder's inequality, Proposition~\ref{propGSmu}(2), and \eqref{piet} with $t=6r/\delta$ imply
\begin{align*}
\|e^{r\Id\Gamma(\omega\wedge1)}\Psi(\V{x})\|
&\le\|e^{r\Id\Gamma(\omega\wedge1)/\delta}\Psi(\V{x})\|^\delta\|\Psi(\V{x})\|^{1-\delta}
\le ce^{6r\lambda}\|\Psi(\V{x})\|^{1-\delta},
\end{align*}
for all $\V{x}\in\RR^3$, $\delta\in(0,1)$, $r>0$, and some $c>0$ depending only on $\delta$, $r$, $\ee$, and $V$. 
Therefore, it remains to treat the case $r=0$.

Given $\lambda<\sigma\le\Sigma$ and $\ve>0$, we put
$F(\V{x}):=(1-\ve/2)\sqrt{2\sigma-2\lambda}|\V{x}|$, $\V{x}\in\RR^3$. Then \eqref{defDelta} and 
\eqref{bedFa} are fulfilled for sufficiently large $R>0$. Define $\ve'>0$ such that
$(1-\ve')(1-\ve/2)=(1-\ve)$. Choosing $t=1$ in \eqref{piet} we find, for all $\V{x}\in\RR^3$, 
\begin{align}\label{piet2}
\|e^{(1-\ve')F(\V{x})}\Psi(\V{x})\|
&\le  \|e^{(1-\ve')F}e^{-\NV}e^{(\ve'-1)F}\|_{2,\infty}
\|e^{(1-\ve')F}1_{(-\infty,\lambda]}(\NV)\|e^\lambda.
\end{align}
Applying \eqref{exot1} with $\ve'$ and $\NV$ put in place 
of $\ve$ and $\HV_{\Geb',\UV}$, respectively, and using \eqref{hcbd}, we see that the
right hand side of \eqref{piet2} is indeed well-defined and finite.
\end{proof}

\begin{proof}[Proof of Theorem~\ref{introthmsupexp}.]
Again we drop the subscripts for $\Geb=\RR^3$ and $\UV=\infty$. 
Furthermore, we write $E_R^0:=E_{\{|\V{x}|>R\},\infty}^0$ for short.
By the same argument as in the preceding proof of 
Theorem~\ref{introthmexpdec} it suffices to treat the case $r=0$.

Given $\ve>0$, we set $F(\V{x}):=(1-\ve/3)a|\V{x}|^{p+1}/(p+1)$, $\V{x}\in\RR^3$.
For a sufficiently large $R\ge\rho$, that we keep fixed in what follows, the condition \eqref{defDelta} is satisfied and
\begin{align*}
V(\V{x})+E_R^0\ge\frac{a^2}{2}|\V{x}|^{2p}-b+E_R^0
\ge\frac{1}{2}|\nabla F(\V{x})|^2+\frac{4}{R^2}+\lambda,\quad|\V{x}|\ge R.
\end{align*}
Let $\Psi$ be the continuous representative of a normalized element of the range of
$1_{(-\infty,\lambda]}(\NV)$ and let $\V{y}\in\RR^3\setminus\{0\}$.
We shall apply \eqref{hcbd} with the Lipschitz continuous weight function given by
\begin{align*}
F_{\V{y}}(\V{x}):=\left\{\begin{array}{ll}
F(\V{x}),&\text{if $|\V{x}|\le|\V{y}|$,}
\\
F(\V{y}),&\text{if $|\V{x}|>|\V{y}|$.}
\end{array}\right.
\end{align*}
A Lipschitz constant for $F_{\V{y}}$ obviously is $L:=1\vee[(1-\ve/3)a|\V{y}|^p]$.
We further choose $\ve'>0$ such that $(1-\ve')(1-\ve/3)=(1-2\ve/3)$.
Writing $\Psi$ as in \eqref{piet} with $t=1/L^2$, applying an analogue of \eqref{piet2},
and using  \eqref{hcbd} and \eqref{exot1} afterwards, we find some $c>0$ such that
\begin{align*}
\|e^{(1-\ve')F_{\V{y}}(\V{y})}\Psi(\V{y})\|&\le c(1\vee|\V{y}|)^{3p/2}.
\end{align*}
Notice that we chose $t$ such that the exponential on the right hand side of \eqref{hcbd} equals 
$e^{6tL^2}=e^6$. We also used that the constant appearing in \eqref{hcbd} is non-decreasing in the 
time parameter and that $t=1/L^2\le1$. Finally, we used that 
$\|e^{t\NV}1_{(-\infty,\lambda]}(\NV)\|\le e^{\lambda/L^2}\le e^{0\vee\lambda}$.
We thus arrive at
\begin{align*}
\|\Psi(\V{y})\|&\le c(1\vee|\V{y}|)^{3p/2}e^{-(1-2\ve/3)a|\V{y}|^{p+1}/(p+1)}
\le c'e^{-(1-\ve)a|\V{y}|^{p+1}/(p+1)},
\end{align*}
where $\V{y}$ was an arbitrary element of $\RR^3\setminus\{0\}$ and $c'>0$ is
$\V{y}$-independent.
\end{proof}

\begin{proof}[Proof of Theorem~\ref{introthmdecnonFock}.]
To infer Theorem~\ref{introthmdecnonFock} from Proposition~\ref{propexploc}
we merely have to replace the symbol $\NV$ by $\HV$ in the above proofs
of Theorem~\ref{introthmexpdec} and Theorem~\ref{introthmsupexp}.
Notice that \eqref{hcbd} can be applied with $\wt{T}_{\RR^3,\infty,t}$
put in place of $T_{\Geb,\UV,t}$.
\end{proof}


\subsection{The Binding Condition}\label{ssecbinding}

\noindent
Proposition~\ref{propexploc} is non-trivial only if the following {\em binding condition} is fulfilled,
\begin{align}\label{bindcond}
\Sigma_{\Geb,\UV}>E_{\Geb,\UV}.
\end{align}
We emphasize once more that both the ionization threshold on the left hand side of \eqref{bindcond}
and the infimum of the spectrum on the right hand side are the same for 
$\NV_{\Geb,\UV}$ and for $\HV_{\Geb,\UV}$ and this holds for all $\UV\in[0,\infty]$.
 
\begin{ex}\label{exbindcond} Let $\Lambda\in[0,\infty]$.
Then the binding condition 
\begin{align}\label{bindcondRR3} 
\Sigma_{\RR^3,\UV}>E_{\RR^3,\UV}
\end{align}
holds in the following two cases: 
\begin{enumerate}
\item[(1)] If the potential is confining, i.e., $V(\V{x})\to\infty$, $|\V{x}|\to\infty$, then we
obviously have $\Sigma_{\RR^3,\UV}=\infty$, so that \eqref{bindcondRR3} is trivially satisfied.
\item[(2)] For finite $\UV$, an argument in \cite[Thm.~3.1]{GLL2001} applied to 
$\NV_{\RR^3,\UV}$ yields
\begin{align}\label{bindbd}
E_{\RR^3,\UV}^0+\inf\sigma(S_{\RR^3})\ge E_{\RR^3,\UV}.
\end{align}
Here $E_{\RR^3,\UV}^0$ is defined in \eqref{defEnull} and $S_{\RR^3}$ is the
ordinary Schr\"{o}dinger operator with potential $V$.
The bound \eqref{bindbd} extends to the case $\UV=\infty$ by the norm resolvent convergence
$\NV_{\RR^3,\UV}\to\NV_{\RR^3,\infty}$, $\UV\to\infty$, which also holds for $V=0$, 
of course. Therefore, \eqref{bindcondRR3} is fulfilled, if
\begin{align}\label{mogens70}
\Sigma_{\RR^3,\UV}&\ge E_{\RR^3,\UV}^0+\inf\sigma_{\mathrm{ess}}(S_{\RR^3}),
\quad\UV\in[0,\infty],
\end{align}
where $\sigma_{\mathrm{ess}}$ denotes the essential spectrum, and
\begin{align*}
\inf\sigma_{\mathrm{ess}}(S_{\RR^3})&>\inf\sigma(S_{\RR^3}).
\end{align*}
Here \eqref{mogens70} is easily verified when $V(\V{x})\to\inf\sigma_{\mathrm{ess}}(S_{\RR^3})$,
as $|\V{x}|\to\infty$, by working with the quadratic forms $\hv_{\{|\V{x}|>R\},\UV}$.
\end{enumerate}
\end{ex}

In what follows it will be convenient to put a hat $\,\hat{{}}\,$ on top of $\V{x}$ when it should be 
regarded as a multiplication operator rather than a variable. Furthermore, we shall
employ the common notation 
$$
\langle\V{x}\rangle:=(1+|\V{x}|^2)^\eh,\quad\V{x}\in\RR^3.
$$

\begin{lem}\label{lemeva}
Let $0\le\UV<\infty$, $\ve\in(0,1]$, and $B\subset\Geb$ be some open ball. 
Assume that the binding condition \eqref{bindcond} holds. Choose $R\ge1$ such that
$$
\Delta:=\Th{R}{\UV}-\frac{4}{R^2}-\lambda=\frac{1}{2}(\Th{R}{\UV}-E_{\Geb,\UV})-\frac{4}{R^2}>0,
$$
with
\begin{align*}
\lambda&:= E_{\Geb,\UV}+\frac{1}{2}(\Th{R}{\UV}-E_{\Geb,\UV}),
\end{align*}
 and abbreviate
\begin{align*}
\alpha:=(1-\ve)\sqrt{2\Delta},\quad M:=\frac{e^{\alpha(1+2R)}}{\ve^{\nf{5}{2}}}
\frac{(\Th{R}{\UV}-E_{\Geb,\UV}+1)^{\nf{7}{2}}}{(\Delta\wedge1)^2}.
\end{align*}
Pick any open subset $\Geb'\subset\Geb$ and assume that:
\begin{enumerate}
\item[{\rm(a)}] $E_{\Geb',\UV}\le\lambda$.
\item[{\rm(b)}] $\Geb'$ contains $B$.
\item[{\rm(c)}] $\HV_{\Geb',\UV}$ has a normalized ground state eigenvector $\Phi_{\Geb',\UV}$.
\end{enumerate}
 Then $\Phi_{\Geb',\UV}$, the components of the weak gradient $\nabla\Phi_{\Geb',\UV}$, and 
the components of $a(e_{\hat{\V{x}}}\V{m}\beta_{\UV})\Phi_{\Geb',\UV}$ are in the domain of 
$e^{\alpha\langle\hat{\V{x}}\rangle}$ and 
\begin{align}\label{eva1}
\|e^{\alpha\langle\hat{\V{x}}\rangle}\Phi_{\Geb',\UV}\|&\le cM,
\\\label{eva1alpha}
\|e^{\alpha\langle\hat{\V{x}}\rangle}\nabla\Phi_{\Geb',\UV}\|&\le c'(1+\alpha)M,
\\\label{eva1beta}
\|e^{\alpha\langle\hat{\V{x}}\rangle}a(e_{\hat{\V{x}}}\V{m}\beta_{\UV})\Phi_{\Geb',\UV}\|
&\le c'M,
\end{align}
where $c>0$ is a universal constant and $c'>0$ depends only on $B$, $\ee$, and $V$.
In fact, $c'$ is locally bounded in $\ee\in\RR$ when $B$ and $V$ are held fixed.
\end{lem}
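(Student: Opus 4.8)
The plan is to apply Proposition~\ref{propexploc} with the specific exponential weight $F(\V{x}) := \alpha\langle\V{x}\rangle$ and then translate the resulting operator-norm bounds on $e^{(1-\ve')F}1_{(-\infty,\lambda]}(\HV_{\Geb',\UV})$ into the three asserted estimates \eqref{eva1}--\eqref{eva1beta}. First I would check the hypotheses of Proposition~\ref{propexploc} for the chosen $F$: since $|\nabla\langle\V{x}\rangle| \le 1$ everywhere, we have $\frac12|\nabla F|^2 \le \frac12\alpha^2 = (1-\ve)^2\Delta \le \Delta$, and by the very definition of $\Delta$ and $\lambda$ in the statement, this gives $\Th{R}{\UV} \ge \frac12|\nabla F|^2 + \frac{4}{R^2} + \lambda$ on all of $\RR^3$, so \eqref{bedFa} holds (the constant $R\ge1$ has been fixed precisely so that $\Delta>0$). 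Assumptions (a) and (c) tell us that $\Phi_{\Geb',\UV}$ lies in the range of $1_{(-\infty,\lambda]}(\HV_{\Geb',\UV})$, which is nonzero. Applying \eqref{exot1} and \eqref{exot2} with some $\ve' \in (0,1)$ chosen so that $(1-\ve')\alpha$ absorbs the weight we actually want --- or more simply, rescaling: run the proposition with weight $F_0(\V{x}) = \frac{\alpha}{1-\ve'}\langle\V{x}\rangle$ for a fixed small $\ve'$ so that $(1-\ve')F_0 = \alpha\langle\V{x}\rangle$, after verifying the hypotheses still hold for $F_0$ when $\ve$ is slightly shrunk. The factor $C_{F_0}(\ve',R)$ then contributes $\exp((1-\ve')\max_{\ol B_{2R}} F_0) \le e^{\alpha(1+2R)}$ (since $\langle\V{x}\rangle \le 1+|\V{x}| \le 1+2R$ on $\ol B_{2R}$) together with $1 + \mathrm{ess\,sup}_{B_{2R}}|\nabla F_0|^2 \le 1 + (\tfrac{\alpha}{1-\ve'})^2 \le c(1+\alpha^2)$. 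Combining this with the prefactor $\ve'^{-2}(\Th{R}{\UV}-E_{\Geb,\UV}+1)^2/(1\wedge\Delta)^2$ from \eqref{exot1} and the extra half-power of $\Th{R}{\UV}-E_{\Geb,\UV}+1$ from \eqref{exot2} yields bounds of the form $cM$ and $c(1+\alpha^2)M$ with the $M$ of the statement (the exponents $\nf{7}{2}$ and the power of $\ve$ there are consistent with combining the two estimates and squeezing in the $(1+\alpha^2)$ term).

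The second task is to pass from $\|e^{\alpha\langle\hat\V{x}\rangle}\Phi_{\Geb',\UV}\| \le cM$, which is \eqref{eva1} directly, to the gradient and annihilation-operator estimates \eqref{eva1alpha} and \eqref{eva1beta}. For the gradient I would invoke the second half of Proposition~\ref{propexploc}, namely that $e^{(1-\ve')F_0}$ maps the range of $1_{(-\infty,\lambda]}(\HV_{\Geb',\UV})$ into $\sQ_{\Geb'}$ with the quantitative bound \eqref{exot2} on $\|(\HV_{\Geb',\UV}-E_{\Geb',\UV})^\eh e^{(1-\ve')F_0}1_{(-\infty,\lambda]}(\HV_{\Geb',\UV})\|$. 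By the form comparison \eqref{pernille0} of Theorem~\ref{thmrbUV} (valid for all $\UV\in[K,\infty]$ with $K=0$ here), control of $(\HV_{\Geb',\UV}-E_{\Geb',\UV})^\eh$ applied to $e^{\alpha\langle\hat\V{x}\rangle}\Phi_{\Geb',\UV}$ controls $\cfG^{1/2}$ of that same vector, which in turn controls both $\|\Id\Gamma(\omega)^\eh(\cdot)\|$ and $\|\nabla(\cdot)\|$ (recall $\cfG[\Psi] = \mathfrak{t}_\Geb^+[\Psi] + \int\|\Id\Gamma(\omega)^\eh\Psi(\V{x})\|^2\,\Id\V{x}$ and $\mathfrak{t}_\Geb^+$ dominates $\frac12\int|\nabla\Psi|^2$ up to the $V_-$ form bound). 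One then has to commute $e^{\alpha\langle\hat\V{x}\rangle}$ past $\nabla$, which produces a commutator term $\alpha(\nabla\langle\V{x}\rangle)e^{\alpha\langle\hat\V{x}\rangle}\Phi_{\Geb',\UV}$ bounded by $\alpha\,\|e^{\alpha\langle\hat\V{x}\rangle}\Phi_{\Geb',\UV}\| \le \alpha cM$ using \eqref{eva1}; this is where the factor $(1+\alpha)$ in \eqref{eva1alpha} enters. For \eqref{eva1beta} I would use the relative bound \eqref{rba}, $\|a(e_{\hat\V{x}}\V{m}\beta_\UV)\psi\| \le \|\omega^\mh \V{m}\beta_\UV\|\,\|\Id\Gamma(\omega)^\eh\psi\|$ applied fiberwise (the $e_{\hat\V{x}}$ is just a phase and does not change norms), noting $\|\omega^\mh\V{m}\beta_\UV\| < \infty$ for finite $\UV$ since $\V{m}\beta_\UV$ has a compact support and an integrable singularity; again one commutes $e^{\alpha\langle\hat\V{x}\rangle}$ past $a(\cdot)$ — but here $e_{\hat\V{x}}$ depends on $\V{x}$, so there is a commutator of $a$ with the multiplication by $e^{\alpha\langle\hat\V{x}\rangle}$ only through the $\V{x}$-dependence of $e_{\hat\V{x}}$; since $\alpha\langle\V{x}\rangle$ is a scalar multiplication it commutes with $a(e_{\hat\V{x}}\V{m}\beta_\UV)$ as an operator on Fock space at fixed $\V{x}$, so in fact $e^{\alpha\langle\hat\V{x}\rangle}a(e_{\hat\V{x}}\V{m}\beta_\UV)\Phi = a(e_{\hat\V{x}}\V{m}\beta_\UV)(e^{\alpha\langle\hat\V{x}\rangle}\Phi)$ pointwise, and the bound follows from \eqref{rba} with $\psi = e^{\alpha\langle\hat\V{x}\rangle}\Phi_{\Geb',\UV}(\V{x})$ together with the $\cfG$-control already obtained.

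**The main obstacle** I anticipate is bookkeeping the constants so that they genuinely collapse into the stated $M$ with the exact exponents $\nf{5}{2}$ on $\ve$ and $\nf{7}{2}$ on $(\Th{R}{\UV}-E_{\Geb,\UV}+1)$, and confirming that the $\ve'$-dependence (from the rescaling trick) can be hidden in the universal constant $c$ once $\ve' $ is chosen as a fixed function of $\ve$, e.g. $\ve' = \ve/2$. A minor additional point requiring care: one must justify that $e^{\alpha\langle\hat\V{x}\rangle}\Phi_{\Geb',\UV}$ actually lies in $\dom((\HV_{\Geb',\UV}-E_{\Geb',\UV})^\eh) = \sQ_{\Geb'}$ — this is exactly the content of the ``$e^{(1-\ve)F}$ maps the range into $\sQ_{\Geb'}$'' clause of Proposition~\ref{propexploc}, so no new work is needed, but it must be cited. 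Finally, the claim that $c'$ is locally bounded in $\ee$ follows because the only $\ee$-dependence enters through $\|\omega^\mh\V{m}\beta_\UV\|$ (linear in $|\ee|$, for fixed $\UV$) and through the $\cfG$-to-$\hv_{\Geb',\UV}$ comparison constant of \eqref{pernille0} (locally bounded in $\ee$), so the product is locally bounded; this is just a remark appended at the end.
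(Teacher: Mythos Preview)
Your approach is essentially the paper's: apply Proposition~\ref{propexploc} with a weight proportional to $\langle\V{x}\rangle$, read off \eqref{eva1} from \eqref{exot1}, then use \eqref{exot2} together with the form comparison \eqref{pernille0} to control $\cfG[e^{\alpha\langle\hat{\V{x}}\rangle}\Phi_{\Geb',\UV}]$, hence both $\|\nabla(e^{\alpha\langle\hat{\V{x}}\rangle}\Phi_{\Geb',\UV})\|$ and $\|\Id\Gamma(\omega)^\eh e^{\alpha\langle\hat{\V{x}}\rangle}\Phi_{\Geb',\UV}\|$; finish with the product rule for the gradient (your commutator term, the paper's \eqref{daniel}) and with \eqref{rba} for the annihilation operator.

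Two remarks. First, the rescaling trick with an auxiliary $\ve'$ is unnecessary: the paper takes $F(\V{x})=\sqrt{2\Delta}\langle\V{x}\rangle$ directly, so that $(1-\ve)F=\alpha\langle\V{x}\rangle$ with the very $\ve$ of the statement, and \eqref{bedFa} holds because $|\nabla F|^2\le 2\Delta$. This removes the bookkeeping worry you flagged. Second, and more substantively, you are missing the step that explains why $c'$ depends on $B$ rather than on $\Geb'$. When you pass from $\|(\HV_{\Geb',\UV}-E_{\Geb',\UV})^\eh\Psi\|$ to $\cfG[\Psi]^\eh$ via \eqref{pernille0}, the term $E_{\Geb',\UV}\|\Psi\|^2$ appears on the right, and $E_{\Geb',\UV}$ can in principle depend on $\Geb'$. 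The paper disposes of this by the variational upper bound
\[
E_{\Geb',\UV}\le\inf_{g\in C_0^\infty(B),\,\|g\|=1}\hv_{\Geb',\UV}[g\epsilon(0)]
=\inf_{g\in C_0^\infty(B),\,\|g\|=1}\mathfrak{s}_B[g],
\]
which uses only assumption (b) and depends only on $B$ and $V$. Without this, your $c'$ would carry an uncontrolled $\Geb'$-dependence and the lemma as stated would not follow.
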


Notice that $M$ depends on $\ve$, $R$, and on all model parameters $\mu$, $\ee$, $\eta$,
$V$, $\Geb$, $\UV$, but only through the quantities $E_{\Geb,\UV}$ and $\Th{R}{\UV}$, 
which can be controlled in many relevant situations.

\begin{proof}
We shall apply Proposition~\ref{propexploc} with $F(\V{x})=\sqrt{2\Delta}\langle{\V{x}}\rangle$, observing
that \eqref{defDelta} and \eqref{bedFa} are satisfied under the present assumptions.
The bound \eqref{eva1} is then a direct consequence of \eqref{exot1} and
$|\nabla F|^2\le2\Delta\le\Th{R}{\UV}-E_{\Geb,\UV}$.
Furthermore, since $e^{\alpha\langle\hat{\V{x}}\rangle}\Phi_{\Geb',\UV}\in\sQ_{\Geb'}$ 
by Proposition~\ref{propexploc}, we may insert it into \eqref{pernille0}, which yields
\begin{align}\nonumber
&\frac{1}{2}\int_{\Geb'}\|\nabla(e^{\alpha\langle\hat{\V{x}}\rangle}\Phi_{\Geb',\UV})(\V{x})
\|^2\Id\V{x}+\int_{\Geb'}e^{2\alpha\langle{\V{x}}\rangle}\|\Id\Gamma(\omega)^\eh
\Phi_{\Geb',\UV}(\V{x})\|^2\Id\V{x}
\\\label{eva2}
&\le c\|(\HV_{\Geb',\UV}-E_{\Geb',\UV})^\eh e^{\alpha\langle\hat{\V{x}}\rangle}
\Phi_{\Geb',\UV}\|^2
+c(E_{\Geb',\UV}+c')\|e^{\alpha\langle\hat{\V{x}}\rangle}\Phi_{\Geb',\UV}\|^2,
\end{align}
where $c,c'>0$ depend only on $\ee$ and $V$. Recall that $\epsilon(0)$ is the vacuum 
vector in $\sF$. In view of \eqref{defvKL} and \eqref{forhtv} we then have the upper bound
\begin{align}\label{ubEDir}
E_{\Geb',\UV}&\le\inf_{g\in C_0^\infty(B),\|g\|=1}\mathfrak{h}_{\Geb',\UV}[g\epsilon(0)]
=\inf_{g\in C_0^\infty(B),\|g\|=1}\mathfrak{s}_{B}[g],
\end{align}
where the rightmost quantity obviously depends on $B$ and $V$ only.
For measurable $\Psi:\Geb'\to\fdom(\Id\Gamma(\omega))$, we further have the $\UV$-uniform bound
\begin{align}\label{eva4}
\int_{\Geb'}e^{2\alpha\langle{\V{x}}\rangle}\|a(e_{\V{x}}\V{m}\beta_{\UV})\Psi(\V{x})\|^2\Id\V{x}
&\le c''\ee^2\int_{\Geb'}e^{2\alpha\langle{\V{x}}\rangle}
\|\Id\Gamma(\omega)^\eh\Psi(\V{x})\|^2\Id\V{x},
\end{align}
with a universal constant $c''$ satisfying $2\||\V{m}|^\eh\beta_{\infty}\|^2\le c''\ee^2$. Finally,
\begin{align}\label{daniel}
e^{\alpha\langle{\V{x}}\rangle}\partial_{x_j}\Psi&=\partial_{x_j}(e^{\alpha\langle{\V{x}}\rangle}
\Psi)-\alpha\frac{x_j}{\langle{\V{x}}\rangle}e^{\alpha\langle{\V{x}}\rangle}\Psi,
\quad\text{in $L^1_\loc(\Geb',\sF)$,}
\end{align}
for all $\Psi\in{W}^{1,2}(\Geb',\sF)$ and $ j\in\{1,2,3\}$. In conjunction with \eqref{exot2}, 
\eqref{eva1}, \eqref{eva2}, and \eqref{ubEDir} the relations \eqref{eva4} and \eqref{daniel} imply  
\eqref{eva1beta} and \eqref{eva1alpha}, respectively.
\end{proof}

\begin{rem}\label{remcstlokbd}
Independent of whether a binding condition holds or not, the bounds \eqref{eva2},
\eqref{ubEDir}, and \eqref{eva4} in the previous proof are still valid when $\alpha=0$.
For all finite $\UV$ and all open $\Geb'\subset\RR$ satisfying conditions (b) and (c)
of Lemma~\ref{lemeva}, we thus find
\begin{align}\label{eva1anull}
\|\nabla\Phi_{\Geb',\UV}\|&\le c,\quad
\|a(e_{\hat{\V{x}}}\V{m}\beta_{\Lambda})\Phi_{\Geb',\UV}\|\le c.
\end{align}
Here the constant $c$ depends only on $B$, $\ee$, and $V$ and it is
locally bounded in $\ee\in\RR$.
\end{rem}


\subsection{Infrared Behavior}\label{ssecIR}

\noindent
Next, we assume, for some finite $\UV$, that $\Phi_{\Geb,\UV}$ is a ground state eigenvector
of $\HV_{\Geb,\UV}$. We shall derive a formula for $a\Phi_{\Geb,\UV}$, where $a$ denotes
the ``pointwise" annihilation operator. It can be defined as follows. For every
$n\in\NN$, $n\ge2$, there is a canonical isomorphism 
$\tilde{I}_n:L^2(\RR^{3n})\to L^2(\RR^3,L^2(\RR^{3(n-1)}))$, whose restriction to
$L_{\mathrm{sym}}^2(\RR^{3n})$ we denote by $I_n$. The symbol $I_1$ will denote the
identity on $L^2(\RR^3)$. Defining $\dom(a):=L^2(\Geb,\fdom(\Id\Gamma(\omega)))$ and
\begin{align*}
\big((a\Psi)(\V{k})\big)(\V{x})
:=\big((n+1)^\eh[I_{n+1}( \Psi(\V{x})^{(n+1)})](\V{k})\big)_{n=0}^\infty\in\sF,
\end{align*}
for representatives of equivalence classes and $\Psi\in\dom(a)$, we obtain a well-defined map
$a:\dom(a)\to L^2(\RR^3,L^2(\Geb,\sF);\omega^\eh\Id\V{k})$. 
We have indeed the well-known relation
\begin{align}\label{fordGak}
\int_{\Geb}\|\Id\Gamma(\omega)^\eh\Psi(\V{x})\|^2\Id\V{x}
=\int_{\RR^3}\omega(\V{k})\|(a\Psi)(\V{k})\|^2\Id\V{k},\quad\Psi\in\dom(a).
\end{align}

The proof of the next proposition is a suitable version of a well-known commutator, or, 
``pull-through" argument; compare, e.g., \cite{Froehlich1974}. Instead of commuting the 
Hamiltonian with an annihilation operator we shall, however, commute it with the direct integral 
operators in \eqref{defdirintopIR}. Here the inclusion of the second expression under the direct integral 
is inspired by \cite{BFS1999}. It is used to control infrared singularities showing up in the 
computations: In fact, the fractions in \eqref{defB2} and \eqref{defB3} are bounded by $|\V{k}|$, for 
small $|\V{k}|$, thanks to the term $-\chi(\V{k})$ coming from the second expression under the direct 
integral.

\begin{prop}\label{prop-IR}
Let $0\le\UV<\infty$. Assume that the binding condition \eqref{bindcond} holds and that
$\Phi_{\Geb,\UV}\in\dom(\HV_{\Geb,\UV})$ is a ground state eigenvector of 
$\HV_{\Geb,\UV}$. We abbreviate
\begin{align}\label{defUpsilon1}
\V{\Upsilon}_{1}&:=i\hat{\V{x}}\Phi_{\Geb,\UV},
\\\label{defUpsilon2}
\V{\Upsilon}_{2}&:=\langle\hat{\V{x}}\rangle(-i\nabla+a(e_{\hat{\V{x}}}\V{m}\beta_{\UV}))
\Phi_{\Geb,\UV},
\\\label{defUpsilon3}
\Upsilon_{3}&:=\langle\hat{\V{x}}\rangle\Phi_{\Geb,\UV},
\end{align}
observing that these vectors are well-defined elements of $L^2(\Geb,\sF)$ by Lemma~\ref{lemeva}.
Let $\chi\in C_0^\infty(\RR^3)$ satisfy $\chi(\V{k})=1$, if $|\V{k}|\le1$, and
$\chi(\V{k})=0$, if $|\V{k}|\ge2$. For every $\V{k}\in\RR^3\setminus\{0\}$, we further write
$$
R_{\Geb,\UV}(\V{k}):=(\HV_{\Geb,\UV}-E_{\Geb,\UV}+\omega(\V{k}))^{-1},
$$ 
and we introduce the following bounded operators on $L^2(\Geb,\sF)$,
\begin{align}\label{defB1}
\Xi_{1}(\V{k})&:=(\HV_{\Geb,\UV}-E_{\Geb,\UV})R_{\Geb,\UV}(\V{k})\chi(\V{k}),
\\\label{defB2}
\Xi_{2}(\V{k})&:=R_{\Geb,\UV}(\V{k})
\frac{e^{-i\V{k}\cdot\hat{\V{x}}}-\chi(\V{k})}{\langle\hat{\V{x}}\rangle},
\\\nonumber
\V{\Xi}_{3}(\V{k})&:=(\HV_{\Geb,\UV}-E_{\Geb,\UV}+1)^\eh R_{\Geb,\UV}(\V{k})
\\\label{defB3}
&\quad\times\{a(e_{\hat{\V{x}}}\V{m}\beta_{\UV})
(\HV_{\Geb,\UV}-E_{\Geb,\UV}+1)^{\mh}\}^*
\frac{e^{-i\V{k}\cdot\hat{\V{x}}}-\chi(\V{k})}{\langle\hat{\V{x}}\rangle}.
\end{align}
Then the following identity holds, for a.e. $\V{k}\in\RR^3$,
\begin{align}\label{aki1}
(a\Phi_{\Geb,\UV})(\V{k})
&=-\sum_{j=1}^2\Xi_{j}(\V{k}){\beta_{\UV}(\V{k})}\V{k}\cdot\V{\Upsilon}_{j}
-\V{\Xi}_{3}(\V{k})\cdot\V{k}{\beta_{\UV}(\V{k})}\Upsilon_{3}.
\end{align}
\end{prop}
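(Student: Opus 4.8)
The plan is to run a \emph{pull-through} (commutator) argument for the pointwise annihilation operator and then rearrange the resulting expression so as to make the infrared cancellations visible. Throughout we keep $\UV<\infty$ fixed, so that $\HV_{\Geb,\UV}$ is the explicit operator of Prop.~\ref{propdomHKLV} and \eqref{forHKLV} (with $K=0$, so that the term $\vp(e_{\V{x}}f_K)$ is absent). Since the binding condition \eqref{bindcond} is in force, Lem.~\ref{lemeva} and Rem.~\ref{remcstlokbd} apply; they guarantee that $\Phi:=\Phi_{\Geb,\UV}$, $\nabla\Phi$, $a(e_{\hat{\V{x}}}\V{m}\beta_\UV)\Phi$, and $\langle\hat{\V{x}}\rangle$ times each of these belong to $L^2(\Geb,\sF)$, which makes $\V{\Upsilon}_1,\V{\Upsilon}_2,\Upsilon_3$ and the right-hand side of \eqref{aki1} meaningful. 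Two commutator identities drive the proof. First, from the explicit action \eqref{forHKLV} one checks by a direct (if slightly tedious) computation that $\hat{\V{x}}\Phi\in\dom(\HV_{\Geb,\UV})$ and
\begin{align*}
(\HV_{\Geb,\UV}-E_{\Geb,\UV})\hat{\V{x}}\Phi=-i\bigl(-i\nabla+\vp(e_{\hat{\V{x}}}\V{m}\beta_\UV)\bigr)\Phi,
\end{align*}
the only terms of $\HV_{\Geb,\UV}$ not commuting with multiplication by $\hat{\V{x}}$ being the vector-valued Schr\"odinger part (which contributes $-\nabla$) and $-i\vp(e_{\hat{\V{x}}}\V{m}\beta_\UV)\cdot\nabla$ (which contributes $-i\vp(e_{\hat{\V{x}}}\V{m}\beta_\UV)$). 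Second, the Fock-space commutation relations \eqref{CCR1} give, at the level of the pointwise annihilation operator, $[a(\V{k}),\Id\Gamma(\omega)]=\omega(\V{k})a(\V{k})$, $[a(\V{k}),a(h)]=0$, and $[a(\V{k}),\ad(h)]=h(\V{k})$, whence, with \eqref{forHKLV}, heuristically $[a(\V{k}),\HV_{\Geb,\UV}]=\omega(\V{k})a(\V{k})+e^{-i\V{k}\cdot\hat{\V{x}}}\beta_\UV(\V{k})\,\V{k}\cdot(-i\nabla+\vp(e_{\hat{\V{x}}}\V{m}\beta_\UV))$; this is the engine of the argument (the potential $V$ and the constant $\tfrac12\|\V{m}\beta_\UV\|^2$ drop out, and $a(e_{\hat{\V{x}}}\V{m}^2\beta_\UV)$ commutes with $a(\V{k})$).

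To make this rigorous I would work with smeared operators $a(g)$ for $g\in C_0^\infty(\RR^3\setminus\{0\})$: here \eqref{CCR1} and the relative bounds \eqref{rba}--\eqref{rbvp} show that $a(g)$ preserves $\dom(\HV_{\Geb,\UV})$ and, using $\HV_{\Geb,\UV}\Phi=E_{\Geb,\UV}\Phi$, that
\begin{align*}
(\HV_{\Geb,\UV}-E_{\Geb,\UV})a(g)\Phi+a(\omega g)\Phi
=\int_{\RR^3}\overline{g(\V{k})}\,\theta(\V{k})\,\Id\V{k},\qquad
\theta(\V{k}):=-e^{-i\V{k}\cdot\hat{\V{x}}}\beta_\UV(\V{k})\,\V{k}\cdot\bigl(-i\nabla+\vp(e_{\hat{\V{x}}}\V{m}\beta_\UV)\bigr)\Phi,
\end{align*}
the integral converging absolutely in $L^2(\Geb,\sF)$ because $\|\theta(\V{k})\|$ is bounded by a constant times $|\beta_\UV(\V{k})|\,|\V{k}|$, which is locally integrable. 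Pairing this with a countable dense family of $\psi\in\dom(\HV_{\Geb,\UV})$, expanding $a(g)\Phi=\int\overline{g(\V{k})}(a\Phi)(\V{k})\Id\V{k}$, and letting $g$ range over $C_0^\infty(\RR^3\setminus\{0\})$, the fundamental lemma of the calculus of variations gives, for a.e.\ $\V{k}$, the identity $\langle(\HV_{\Geb,\UV}-E_{\Geb,\UV}+\omega(\V{k}))\psi,(a\Phi)(\V{k})\rangle=\langle\psi,\theta(\V{k})\rangle$ for all such $\psi$, hence for all $\psi\in\dom(\HV_{\Geb,\UV})$; since $\HV_{\Geb,\UV}-E_{\Geb,\UV}+\omega(\V{k})\ge\omega(\V{k})>0$ for $\V{k}\neq0$, this yields $(a\Phi)(\V{k})\in\dom(\HV_{\Geb,\UV})$ and the \emph{bare pull-through formula}
\begin{align*}
(a\Phi)(\V{k})=R_{\Geb,\UV}(\V{k})\,\theta(\V{k})
=-R_{\Geb,\UV}(\V{k})\,e^{-i\V{k}\cdot\hat{\V{x}}}\beta_\UV(\V{k})\,\V{k}\cdot\bigl(-i\nabla+\vp(e_{\hat{\V{x}}}\V{m}\beta_\UV)\bigr)\Phi
\qquad\text{for a.e. }\V{k}.
\end{align*}

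It remains to reorganize the right-hand side, splitting $e^{-i\V{k}\cdot\hat{\V{x}}}=\bigl(e^{-i\V{k}\cdot\hat{\V{x}}}-\chi(\V{k})\bigr)+\chi(\V{k})$ and $\vp(e_{\hat{\V{x}}}\V{m}\beta_\UV)=a(e_{\hat{\V{x}}}\V{m}\beta_\UV)+\ad(e_{\hat{\V{x}}}\V{m}\beta_\UV)$, which produces three pieces. In the $\chi(\V{k})$-piece one replaces $(-i\nabla+\vp(e_{\hat{\V{x}}}\V{m}\beta_\UV))\Phi$ by $i(\HV_{\Geb,\UV}-E_{\Geb,\UV})\hat{\V{x}}\Phi$ via the first commutator identity and commutes $R_{\Geb,\UV}(\V{k})$ past $\HV_{\Geb,\UV}-E_{\Geb,\UV}$; this is exactly the $\Xi_1,\V{\Upsilon}_1$ term. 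In the remaining piece one inserts $\langle\hat{\V{x}}\rangle^{-1}\langle\hat{\V{x}}\rangle$: for the annihilation part the $\langle\hat{\V{x}}\rangle$ is moved past the scalar symbol $\beta_\UV(\V{k})\V{k}$ onto $(-i\nabla+a(e_{\hat{\V{x}}}\V{m}\beta_\UV))\Phi$ to build $\V{\Upsilon}_2$, leaving $R_{\Geb,\UV}(\V{k})(e^{-i\V{k}\cdot\hat{\V{x}}}-\chi(\V{k}))\langle\hat{\V{x}}\rangle^{-1}=\Xi_2(\V{k})$; for the creation part $\ad(e_{\hat{\V{x}}}\V{m}\beta_\UV)$ commutes with the multiplication operators, so $\langle\hat{\V{x}}\rangle$ lands on $\Phi$ to form $\Upsilon_3$, and the residual operator is recognized as $\V{\Xi}_3(\V{k})$ — the dressing $(\HV_{\Geb,\UV}-E_{\Geb,\UV}+1)^{\pm1/2}$ in its definition merely turns the unbounded $\ad(e_{\hat{\V{x}}}\V{m}\beta_\UV)$ into the bounded $\{a(e_{\hat{\V{x}}}\V{m}\beta_\UV)(\HV_{\Geb,\UV}-E_{\Geb,\UV}+1)^{-1/2}\}^*$, boundedness following from \eqref{rba} and the form bound $\Id\Gamma(\omega)\lesssim\HV_{\Geb,\UV}-E_{\Geb,\UV}+1$ implicit in \eqref{pernille0}. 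Summing the three pieces gives \eqref{aki1}; boundedness of $\Xi_1(\V{k}),\Xi_2(\V{k}),\V{\Xi}_3(\V{k})$ on $L^2(\Geb,\sF)$ follows from $\|(\HV_{\Geb,\UV}-E_{\Geb,\UV})R_{\Geb,\UV}(\V{k})\|\le1$, $\|R_{\Geb,\UV}(\V{k})\|\le\omega(\V{k})^{-1}$, and $|e^{-i\V{k}\cdot\V{x}}-\chi(\V{k})|\le\langle\V{x}\rangle(2\wedge C|\V{k}|)$.

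I expect the two genuinely technical points to be (a) the rigorous smeared pull-through — checking $a(g)\Phi\in\dom(\HV_{\Geb,\UV})$ with the stated commutator and the density/measurability step producing the a.e.\ statement, using the operator core of Prop.~\ref{propdomHKLV} and \eqref{rba}--\eqref{rbvp} — and (b) the identity $\hat{\V{x}}\Phi\in\dom(\HV_{\Geb,\UV})$ with $(\HV_{\Geb,\UV}-E_{\Geb,\UV})\hat{\V{x}}\Phi=-i(-i\nabla+\vp(e_{\hat{\V{x}}}\V{m}\beta_\UV))\Phi$, which one establishes by a spatial-cutoff regularization whose error terms are controlled by the exponential localization of Lem.~\ref{lemeva}. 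Everything else is bookkeeping: multiplication operators in $\V{x}$ commute with the ``fiber'' Fock operators, scalar symbols commute with everything, and the elementary estimate $|e^{-i\V{k}\cdot\V{x}}-1|\le|\V{k}|\,|\V{x}|$ is what makes the $-\chi(\V{k})$ subtractions do their job.
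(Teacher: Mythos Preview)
Your proposal is correct and uses the same essential ingredients as the paper --- a pull-through/commutator argument together with the $\chi(\V{k})$ counter term (inspired by \cite{BFS1999}) to handle the infrared region --- but the organization differs in one respect worth noting. You first derive the \emph{bare} pull-through formula for $(a\Phi)(\V{k})$, and only afterwards split $e^{-i\V{k}\cdot\hat{\V{x}}}=\chi(\V{k})+(e^{-i\V{k}\cdot\hat{\V{x}}}-\chi(\V{k}))$ and invoke the strong identity $(\HV_{\Geb,\UV}-E_{\Geb,\UV})\hat{\V{x}}\Phi=-i(-i\nabla+\vp(e_{\hat{\V{x}}}\V{m}\beta_\UV))\Phi$; this forces you to establish $\hat{\V{x}}\Phi\in\dom(\HV_{\Geb,\UV})$ as an intermediate fact (your technical point~(b)). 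The paper instead builds the counter term into the commuted operator from the outset, replacing $a(g)$ by $A_r(g):=\int_\Geb^\oplus\bigl(a(g)+\vt_r(\V{x})\SPn{g}{i\chi\,\V{x}\cdot\V{m}\beta_\UV}\bigr)\Id\V{x}$ with a spatial cutoff $\vt_r$, and works \emph{weakly} throughout: it pairs with $R_{\Geb,\UV}(\V{p})\Psi$ for compactly supported $\Psi$, using the resolvent localization bound \eqref{resexploc2} to get $R_{\Geb,\UV}(\V{p})\Psi\in\dom(\langle\hat{\V{x}}\rangle)$ and thus pass to $r\to\infty$. In other words, the paper places the unbounded $\hat{\V{x}}$-weight on the \emph{resolvent side} rather than on the ground state, so that it never needs $\hat{\V{x}}\Phi\in\dom(\HV_{\Geb,\UV})$. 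The pointwise a.e.\ statement is then extracted via a Lebesgue-point argument (choosing $g=1_{\cC_n(\V{p})}$) rather than your duality argument, and the final rewriting $\vp=a^\dagger+a$ together with $R(\V{k})a^\dagger(e_{\hat{\V{x}}}\V{m}\beta_\UV)\subset(\HV-E+1)^{1/2}R(\V{k})\{a(e_{\hat{\V{x}}}\V{m}\beta_\UV)(\HV-E+1)^{-1/2}\}^*$ is identical to yours. Your route is conceptually cleaner in separating the two ideas; the paper's avoids proving the domain statement~(b) at the cost of carrying the counter term through the whole computation.
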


In the preceding statement and below we are using shorthands analogous to \eqref{vecnot0}.
For instance, the expression in \eqref{defB3} actually is a triplet of operators comprising one
bounded operator for each component $m_j$ of $\V{m}$.

\begin{proof}
Let $\V{p}\in\RR^3\setminus\{{0}\}$ and let $g\in L^2(\RR^3)$ have a compact support in 
$\RR^3\setminus\{{0}\}$, so that $\dom(\Id\Gamma(\omega)^\eh)\subset\dom(a(g))$.
Furthermore, let $\tilde{\vt}:\RR\to\RR$ be smooth and such that $0\le\tilde{\vt}\le1$, 
$\tilde{\vt}=1$ on $(-\infty,1]$, and $\tilde{\vt}=0$ on $[2,\infty)$. Set
$\vt_r(\V{x}):=\tilde{\vt}(r^{-1}\ln\langle\V{x}\rangle)$, for all $\V{x}\in\RR^3$ and $r\ge1$. Then 
$$
|\nabla\vt_r(\V{x})|\le\frac{1}{r}\frac{c}{\langle\V{x}\rangle},\quad 
|\partial_j^2\vt_r(\V{x})|\le\frac{1}{r}\frac{c}{\langle\V{x}\rangle^2}, \quad j\in\{1,2,3\},
\quad\V{x}\in\RR^3,
$$  
for some $r$-independent constant $c>0$. We finally introduce the direct integral operators
\begin{align}\label{defdirintopIR}
A_r(g):=\int_{\Geb}^\oplus(a(g)+\vt_r(\V{x})
\SPn{g}{i\chi\V{x}\cdot\V{m}\beta_{\UV}})\Id\V{x},\quad r\ge1,
\end{align}
and denote by $A(g)$ the operator obtained upon putting $1$ in place of $\vt_r$
in the preceding formula.

For every $j\in\{1,2,3\}$, the expression $x_j\vt_r(\V{x})$ defines an element of $C_0^\infty(\RR^3)$.
Hence, we know that multiplication with it leaves $\dom(S_\Geb)$ invariant and 
\begin{align}\label{adriane1}
\V{x}\vt_rS_\Geb v-S_\Geb\V{x}\vt_rv&=(\vt_r +\V{x}\vt_r')\nabla v
+\frac{1}{r}\V{\ve}_rv,\quad v\in\dom(S_\Geb)\subset\mr{W}^{1,2}(\Geb),
\end{align}
where $\V{\ve}_r\in C_0^\infty(\RR^3,\RR^3)$ satisfies $\sup_{r\ge1}\|\V{\ve}_r\|_\infty<\infty$ and
$\mr{W}^{1,2}(\Geb)$ denotes the closure of $C_0^\infty(\Geb)$ with respect to the norm on $W^{1,2}(\Geb)$.
Let $\mho\in\mathrm{span}\{v\epsilon(h):v\in\dom(S_\Geb),\,h\in\dom(\omega)\}$. Then, 
by virtue of Proposition~\ref{propdomHKLV} and \eqref{adriane1}, the following computation is justified,
\begin{align}\nonumber
&\big(A_r(g)(\HV_{\Geb,\UV}-E_{\Geb,\UV})\mho-
(\HV_{\Geb,\UV}-E_{\Geb,\UV})A_r(g)\mho\big)(\V{x})
\\\nonumber
&=a(\omega g)\mho(\V{x})-i\SPn{g}{(e_{\V{x}}-\vt_r(\V{x})\chi)\V{m}\beta_{\UV}}
\cdot(\nabla+i\vp(e_{\V{x}}\V{m}\beta_{\UV}))\mho(\V{x})
\\\nonumber
&\quad+i\SPn{g}{\chi\V{x}\cdot\V{m}\beta_{\UV}}(\nabla\vt_r)(\V{x})\cdot
(\nabla+i\vp(e_{\V{x}}\V{m}\beta_\UV))\mho(\V{x})
\\\label{adriane2}
&\quad+\frac{1}{r}\V{\ve}_r(\V{x})\cdot\SPn{g}{i\chi\V{m}\beta_\UV}\mho(\V{x}),
\quad\text{a.e. $\V{x}\in\Geb$.}
\end{align}
Next, we scalar multiply the above expression with
$R_{\Geb,\UV}(\V{p})\Psi$, for arbitrary $\Psi\in\sD(\Geb,\fdom(\Id\Gamma(\omega)))$, 
and re-write the left hand side containing the commutator as 
\begin{align}\label{adriane3}
\SPn{A_r(g)^*R_{\Geb,\UV}(\V{p})\Psi}{(\HV_{\Geb,\UV}-E_{\Geb,\UV})\mho}
-\SPn{A_r(g)^*(\HV_{\Geb,\UV}-E_{\Geb,\UV})R_{\Geb,\UV}(\V{p})\Psi}{\mho}.
\end{align}
After that we pass to the limit $r\to\infty$. Then the operator $A_r(g)$ gets replaced by $A(g)$
in \eqref{adriane3}. Here we take into account that
$R_{\Geb,\UV}(\V{p})\Psi\in\dom(\langle\hat{\V{x}}\rangle)\cap\QGV$ by \eqref{resexploc2} 
(where $V-E_{\Geb,\UV}+\omega(\V{p})$ should be put in place of $V$) and therefore also 
$$
(\HV_{\Geb,\UV}-E_{\Geb,\UV})R_{\Geb,\UV}(\V{p})\Psi
=\Psi-\omega(\V{p})R_{\Geb,\UV}(\V{p})\Psi\in\dom(\langle\hat{\V{x}}\rangle)\cap\QGV.
$$
The contribution of the third and fourth lines of \eqref{adriane2} vanish in the limit $r\to\infty$ due to 
the properties of $\vt_r$ and $\V{\ve}_r$. In the next step we replace $\mho$ by $\Phi_{\Geb,\UV}$, 
which is possible since, according to Proposition~\ref{propdomHKLV}, $\mho$ can be chosen in an operator
core for $\HV_{\Geb,\UV}$. Notice that, if 
$\mho_n\in\mathrm{span}\{v\epsilon(h):v\in\dom(S_\Geb),\,h\in\dom(\omega)\}$, $n\in\NN$,
converge to $\Phi_{\Geb,\UV}$ in the graph norm of 
$\HV_{\Geb,\UV}$, as $n\to\infty$, then also
\begin{align*}
a(\omega g)\mho_n\to a(\omega g)\Phi_{\Geb,\UV},\quad
(\nabla+i\vp(e_{\hat{\V{x}}}\V{m}\beta_{\UV}))\mho_n
\to(\nabla+i\vp(e_{\hat{\V{x}}}\V{m}\beta_{\UV}))\Phi_{\Geb,\UV},
\end{align*} 
by \eqref{rba} and the formulas for $\hv_{\Geb,\UV}$.
Since $\Psi$ was chosen in a dense subset of $L^2(\Geb,\sF)$, this procedure eventually results in
\begin{align}\label{andi1}
A(g){\Phi_{\Geb,\UV}}
&=R_{\Geb,\UV}(\V{p})\int_{\RR^3}\bar{g}(\V{p})\Theta_{\V{p}}(\V{k})\Id\V{k},
\end{align}
where the integrand $\Theta_{\V{p}}:\RR^3\to L^2(\Geb,\sF)$ of the 
$L^2(\Geb,\sF)$-valued Bochner-Lebesgue integral is given by
\begin{align*}
\Theta_{\V{p}}(\V{k})&:=i\V{k}(e^{-i\V{k}\cdot\hat{\V{x}}}-\chi(\V{k})){\beta}_{\UV}(\V{k})
\cdot(\nabla+i\vp(e_{\hat{\V{x}}}\V{m}\beta_{\UV}))\Phi_{\Geb,\UV}
\\
&\quad
+i\omega(\V{p})\chi(\V{k}){\hat{\V{x}}}\cdot\V{k}\beta_{\UV}(\V{k})\Phi_{\Geb,\UV}
+(\omega(\V{p})-\omega(\V{k}))(a\Phi_{\Geb,\UV})(\V{k}).
\end{align*}
Now we choose $g(\V{k}):=1_{\cC_n(\V{p})}(\V{k})$ with
$\cC_n(\V{p}):=\{\V{k}\in\RR^3:\|\V{k}-\V{p}\|_{\infty}<1/n\}$ and $n\in\NN$ sufficiently large
such that $1/n<|\V{p}|/\sqrt{3}$. Taking into account that $\langle\hat{\V{x}}\rangle\Phi_{\Geb,\UV}$
is in $L^2(\Geb,\sF)$ and using that $|\omega(\V{p})-\omega(\V{k})|\le\sqrt{3}/n$ 
for all $\V{k}\in \cC_n(\V{p})$,
it is then easy to see that the term on the right hand side of \eqref{andi1} converges to
$R_{\Geb,\UV}(\V{p})\Theta_{\V{p}}(\V{p})$, as $n\to\infty$, provided that $\V{p}$ is a 
Lebesgue point of $\|(a\Phi_{\Geb,\UV})(\cdot)\|\in L^1_\loc(\RR^3\setminus\{0\})$.
In what follows we will further suppose that $\V{p}$ is a Lebesgue point of
$a\Phi_{\Geb,\UV}\in L^1_\loc(\RR^3\setminus\{{0}\},L^2(\Geb,\sF))$, recalling that the
Lebesgue point theorem also holds for the Bochner-Lebesgue integral; see, e.g., 
\cite[Cor.~1 on p.~87]{HillePhillips1957}. Then the left hand side of \eqref{andi1}
converges to the left hand side of the following identity
$$
(a\Phi_{\Geb,\UV})(\V{p})+i\chi(\V{p})\hat{\V{x}}\cdot\V{p}\beta_{\UV}(\V{p})
\Phi_{\Geb,\UV}=R_{\Geb,\UV}(\V{p})\Theta_{\V{p}}(\V{p}).
$$
Altogether we see that the previous identity holds for a.e. $\V{p}$. Finally, we re-write the vector
$\vp(e_{\hat{\V{x}}}\V{m}\beta_{\UV})\Phi_{\Geb,\UV}$ 
appearing in $\Theta_{\V{p}}(\V{p})$ as the sum of
$\ad(e_{\hat{\V{x}}}\V{m}\beta_{\UV})\Phi_{\Geb,\UV}$ and
$a(e_{\hat{\V{x}}}\V{m}\beta_{\UV})\Phi_{\Geb,\UV}$, and employ the relations
\begin{align*}
&R_{\Geb,\UV}(\V{p})\ad(e_{\hat{\V{x}}}\V{m}\beta_{\UV})
\\
&=(\HV_{\Geb,\UV}-E_{\Geb,\UV}+1)^\eh R_{\Geb,\UV}(\V{p})
(\HV_{\Geb,\UV}-E_{\Geb,\UV}+1)^\mh \ad(e_{\hat{\V{x}}}\V{m}\beta_{\UV})
\\
&\subset(\HV_{\Geb,\UV}-E_{\Geb,\UV}+1)^\eh R_{\Geb,\UV}(\V{p})
\{a(e_{\hat{\V{x}}}\V{m}\beta_{\UV})(\HV_{\Geb,\UV}-E_{\Geb,\UV}+1)^\mh\}^*
\end{align*}
to conclude.
\end{proof}


\subsection{Compactness of Families of Ground State Eigenvectors}\label{sseccomparg}

\noindent
The next proposition is an adaption of the well-known characterization of compact sets in
$L^2(\RR^d)$ to the Hilbert space $L^2(\RR^3,\sF)$. 
The cutoff function appearing in its statement takes
care of possible infrared singularities in the boson momenta; by choosing more complicated 
cutoffs one could in principle allow for singularities along lower-dimensional
sets that are more complex than $\{0\}$.
The proposition has implicitly been used in \cite{Matte2016} as a substitute for
an argument based on the Rellich-Kondrachov theorem in \cite{GLL2001}. The latter requires
technically more cumbersome photon derivative bounds as an input.
For the convenience of the reader a proof of the next proposition is given in Appendix~\ref{appcpt}.

\begin{prop}\label{propcpt}
Let $\{\Phi_\iota\}_{\iota\in\sI}$ be a bounded family of vectors in $\sF$ or $L^2(\RR^3,\sF)$.
Pick a cutoff function $\tilde{\vr}\in C^\infty(\RR,\RR)$ with 
$0\le\tilde{\vr}\le1$, $\tilde{\vr}=0$ on $(-\infty,1]$ and $\tilde{\vr}=1$ on $[2,\infty)$. 
Set $\vr_\delta(\V{k}):=\tilde{\vr}(|\V{k}|/\delta)$, for all $\V{k}\in\RR^3$ and $\delta\in(0,1]$. Define
\begin{align*}
N(r_0,r_1)&:=\sup_{\iota\in\sI}\int_{\{r_0\le|\V{k}|\le r_1\}}\|(a\Phi_\iota)(\V{k})\|^2\Id\V{k},
\\
\triangle_\delta(\V{h})&:=\sup_{\iota\in\sI}\int_{\RR^3}
\|(\vr_\delta a\Phi_\iota)(\V{k})-(\vr_\delta a\Phi_\iota)(\V{k}+\V{h})\|^2\Id\V{k},
\end{align*}
for all $0\le r_0<r_1\le\infty$ and $(\delta,\V{h})\in(0,1]\times\RR^3$, respectively. Assume that
\begin{align}\label{cpt1a}
N(0,\infty)<\infty,\quad\lim_{r_0\to\infty}N(r_0,\infty)&=0,\quad\lim_{r_1\downarrow0}N(0,r_1)=0,
\\\label{cpt1b}
\forall\delta\in(0,1]:\quad\lim_{\V{h}\to{0}}\triangle_\delta(\V{h})&=0.
\end{align}
If the Hilbert space $L^2(\RR^3,\sF)$ is considered, assume in addition that
\begin{align}\label{cpt2a}
\lim_{R\to\infty}\sup_{\iota\in\sI}\int_{\{|\V{x}|\ge R\}}\|\Phi_\iota(\V{x})\|^2\Id\V{x}&=0,
\\\label{cpt2b}
\lim_{\V{y}\to0}\sup_{\iota\in\sI}\int_{\RR^3}\|\Phi_\iota(\V{x}+\V{y})-
\Phi_\iota(\V{x})\|^2\Id\V{x}&=0.
\end{align}
If all these conditions are fulfilled, then $\{\Phi_\iota\}_{\iota\in\sI}$ is relatively compact.
\end{prop}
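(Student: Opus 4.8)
The strategy is the standard reduction of Fock-space compactness to scalar $L^2$-compactness via the Fréchet--Kolmogorov (Riesz) criterion, applied level-by-level in the boson number, combined with a tail estimate in the boson number using the relation~\eqref{fordGak}. I would proceed as follows. First I would treat the case $L^2(\RR^3,\sF)$ (the case of $\sF$ alone is a strict sub-case obtained by dropping conditions~\eqref{cpt2a}--\eqref{cpt2b} and ignoring the $\V{x}$-variable, so it suffices to do the harder one). The key idea is to split each $\Phi_\iota$ according to the decomposition of $\sF$ into $n$-particle sectors as $\Phi_\iota=\bigoplus_{n\ge0}\Phi_\iota^{(n)}$, and to estimate the high-$n$ tail uniformly using~\eqref{fordGak}: since $\int_{\RR^3}\omega(\V{k})\|(a\Phi_\iota)(\V{k})\|^2\Id\V{k}=\int_\Geb\|\Id\Gamma(\omega)^{\eh}\Phi_\iota(\V{x})\|^2\Id\V{x}$ and $\Id\Gamma(\omega)\ge\mu N$ on the $n$-particle sectors when $\mu>0$, while more generally $\Id\Gamma(\omega)\ge N$ on sectors supported where $\omega\ge1$, one controls $\sum_{n\ge M}\|\Phi_\iota^{(n)}\|^2$; the low-$\V{k}$ part is handled separately because $\vr_\delta$ excises a neighbourhood of the origin and condition~\eqref{cpt1a} controls the mass near $\V{k}=0$ and near $\V{k}=\infty$ of $a\Phi_\iota$ uniformly. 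Concretely, I would first show that for any $\epsilon>0$ there are $M\in\NN$, $\delta\in(0,1]$, $r_1<\infty$ such that the part of $\Phi_\iota$ living in sectors $n> M$, or with some boson momentum in $\{|\V{k}|<\delta\}\cup\{|\V{k}|>r_1\}$, has $\sF$-norm$^2$ at most $\epsilon$, uniformly in $\iota$; this uses~\eqref{cpt1a} and~\eqref{fordGak}.

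Having reduced to a uniformly bounded family supported in finitely many sectors $n\le M$ with all boson momenta confined to the compact annulus $\{\delta\le|\V{k}|\le r_1\}$ (up to an $\epsilon$-error), I would apply the classical Fréchet--Kolmogorov criterion in the Hilbert space $L^2(\RR^3,L^2(A^n_{\mathrm{sym}}))\subset L^2(\RR^{3(1+n)})$, where $A=\{\delta\le|\V{k}|\le r_1\}$. That criterion requires: (a) uniform boundedness — clear; (b) uniform smallness of the tail in the $\V{x}$-variable — this is exactly~\eqref{cpt2a}; (c) equicontinuity of translations in $\V{x}$ — this is~\eqref{cpt2b}; (d) equicontinuity of translations in the boson variables $\V{k}_1,\dots,\V{k}_n$. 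For (d), note that translating $\Phi_\iota^{(n)}$ by $\V{h}$ in the first boson momentum variable and using symmetry, one controls $\|\Phi_\iota^{(n)}(\cdot\,;\V{k}_1+\V{h},\V{k}_2,\dots,\V{k}_n)-\Phi_\iota^{(n)}(\cdot\,;\V{k}_1,\dots,\V{k}_n)\|^2$ in terms of $\triangle_\delta(\V{h})$ via the pointwise-annihilation-operator identity, because $(a\Phi_\iota)(\V{k}_1)$ on the sector $n$ is $\sqrt{n}$ times $\Phi_\iota^{(n)}$ with the first variable frozen to $\V{k}_1$; the cutoff $\vr_\delta$ is harmless on $A$ since $\vr_\delta\equiv1$ there for $\delta$ small. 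Summing over the finitely many sectors $n\le M$ and using~\eqref{cpt1b} then yields equicontinuity in all boson variables simultaneously. A standard diagonal/triangle-inequality argument combining the $\epsilon$-tail bound with the relative compactness of the truncated family then shows $\{\Phi_\iota\}_{\iota\in\sI}$ is totally bounded, hence relatively compact.

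The main obstacle I anticipate is bookkeeping rather than conceptual: one must carefully manage the interplay between the boson-number truncation, the momentum-annulus truncation, and the three kinds of translations, making sure all error terms are uniform in $\iota$ and that the cutoff $\vr_\delta$ (which acts only on the \emph{first} boson variable in the definition of $\triangle_\delta$, via $a$) correctly controls translations in \emph{each} boson variable after symmetrization. A secondary subtlety is that without a positive boson mass the operator $\Id\Gamma(\omega)$ does not bound the number operator $N$ globally — only on the part of Fock space built from momenta where $\omega\ge1$ — so the high-$n$ tail estimate has to be combined with the low-$\V{k}$ excision provided by~\eqref{cpt1a}; this is precisely why the hypothesis is phrased in terms of $N(0,r_1)$, $N(r_0,\infty)$, and $\triangle_\delta$ with the cutoff $\vr_\delta$ rather than in terms of $\Id\Gamma(\omega)^{\eh}$ alone. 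Once these points are organized, the remaining estimates are routine applications of Minkowski's inequality and dominated convergence, and I would relegate them to the appendix as indicated.
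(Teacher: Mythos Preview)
Your plan is correct and matches the paper's approach: Kolmogorov's criterion sector-by-sector (the paper's Step~1 applies it to $\{(\Gamma(\vr_\delta)\Phi_\iota)^{(\ell)}\}$ for each fixed $\ell,\delta$), combined with the high-$n$ tail bound $\sum_{n>M}\|\Phi_\iota^{(n)}\|^2\le N(0,\infty)/(M+1)$ and the low-$|\V{k}|$ estimate $\langle\Phi_\iota,(1-\Gamma(\vr_\delta^2))\Phi_\iota\rangle\le\langle\Phi_\iota,\Id\Gamma(1_{B_{2\delta}})\Phi_\iota\rangle=N(0,2\delta)$. One clarification: your reference to~\eqref{fordGak} and the discussion of $\Id\Gamma(\omega)$ versus the number operator is an unnecessary detour---by definition of the pointwise annihilation operator one has $N(0,\infty)=\sup_\iota\int\|(a\Phi_\iota)(\V{k})\|^2\,\Id\V{k}=\sup_\iota\langle\Phi_\iota,\Id\Gamma(1)\Phi_\iota\rangle$ with \emph{no} weight $\omega$, so the tail bound is immediate regardless of the boson mass and your ``secondary subtlety'' dissolves. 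The paper packages the final step slightly differently (extracting a weakly convergent subsequence and then showing $\|\Phi_\infty\|^2\ge\limsup\|\Phi_n\|^2$ via the sector-wise strong convergence from Step~1, rather than arguing total boundedness directly), but this is cosmetic.
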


In the following three corollaries we apply the preceding proposition to sequences of ground state
eigenvectors. The first corollary deals with a fixed and bounded region $\Geb$, the second one with 
an unbounded $\Geb$, while the third corollary will be used to approximate unbounded regions by
bounded ones.

\begin{cor}\label{corIRbd}
Assume that $\Geb$ is bounded. Let $\{\mu_\iota\}_{\iota\in\NN}$ be a converging
sequence of non-negative boson masses, $\{\ee_\iota\}_{\iota\in\NN}$ a converging sequence of
coupling constants, and assume that the measurable even functions $\eta_\iota:\RR^3\to[0,1]$,
$\iota\in\NN$, converge pointwise on $\RR^3$. Finally, let $\{\UV_\iota\}_{\iota\in\NN}$ be
a sequence of finite ultraviolet cutoffs that either converges in $[0,\infty)$ or diverges to $\infty$. 
Denote by $\HV_\iota$ the operator obtained upon choosing $\mu=\mu_\iota$, $\ee=\ee_\iota$, 
and $\eta=\eta_\iota$ in the construction of $\HV_{\Geb,\UV_\iota}$ and
assume that $\Phi_\iota$ is a normalized ground state eigenvector of $\HV_\iota$.
Then $\{\Phi_\iota\}_{\iota\in\NN}$ contains a subsequence that converges in $L^2(\Geb,\sF)$.
\end{cor}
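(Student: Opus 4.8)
The plan is to apply the compactness criterion of Prop.~\ref{propcpt} in the Hilbert space $L^2(\Geb,\sF)$ to the sequence $\{\Phi_\iota\}_{\iota\in\NN}$. Since $\Geb$ is bounded, conditions \eqref{cpt2a} holds trivially (the integrand vanishes for $R$ larger than the diameter of $\Geb$), so the bulk of the work is to verify \eqref{cpt1a}, \eqref{cpt1b}, and \eqref{cpt2b}. The decisive input is the explicit formula \eqref{aki1} for $(a\Phi_\iota)(\V{k})$ furnished by Prop.~\ref{prop-IR}, together with the uniform bounds of Rem.~\ref{remcstlokbd}: for bounded $\Geb$ one may take $\Geb'=\Geb$, and the binding condition is vacuous there since all the required $L^2$-localization estimates are automatic on a bounded domain. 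More precisely, from \eqref{eva1anull} (with $\alpha=0$) one obtains, uniformly in $\iota$, bounds on $\|\nabla\Phi_\iota\|$ and on $\|a(e_{\hat{\V{x}}}\V{m}\beta_{\UV_\iota})\Phi_\iota\|$; boundedness of $\Geb$ also makes $\|\langle\hat{\V{x}}\rangle\Phi_\iota\|$ uniformly bounded. Hence the three vectors $\V{\Upsilon}_1,\V{\Upsilon}_2,\Upsilon_3$ entering \eqref{aki1} have $L^2(\Geb,\sF)$-norms bounded uniformly in $\iota$.

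First I would record the operator bounds on the $\Xi_j(\V{k})$. Since $\HV_{\Geb,\UV_\iota}-E_{\Geb,\UV_\iota}\ge0$, the resolvent $R_{\Geb,\UV_\iota}(\V{k})$ has norm $\le\omega(\V{k})^{-1}\le|\V{k}|^{-1}$, and $\|\Xi_1(\V{k})\|\le\|\chi\|_\infty$ because $(\HV-E)R(\V{k})=\id-\omega(\V{k})R(\V{k})$ is a contraction on the range of $\chi(\V{k})$. For $\Xi_2(\V{k})$ one uses that $|e^{-i\V{k}\cdot\V{x}}-\chi(\V{k})|\le|e^{-i\V{k}\cdot\V{x}}-1|+|1-\chi(\V{k})|\le|\V{k}||\V{x}|+\id_{\{|\V{k}|\ge1\}}$, which combined with $\|R(\V{k})\|\le\omega(\V{k})^{-1}$ gives $\|\Xi_2(\V{k})\langle\hat{\V{x}}\rangle^{-1}\cdot\|$ controlled by a constant times $(1+\id_{\{|\V{k}|\ge1\}}\omega(\V{k})^{-1})$; the factor $\langle\hat{\V{x}}\rangle^{-1}$ is absorbed into the definition of $\V{\Upsilon}_2,\Upsilon_3$. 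The triplet $\V{\Xi}_3(\V{k})$ is handled the same way, noting that $a(e_{\hat{\V{x}}}\V{m}\beta_{\UV_\iota})(\HV_{\Geb,\UV_\iota}-E_{\Geb,\UV_\iota}+1)^{\mh}$ is bounded uniformly in $\iota$ by \eqref{rba} applied with $\vk=\omega$ and the form bound encoded in the definition of $\hv_{\Geb,\UV}$, and that conjugating the resolvent with $(\HV-E+1)^{\eh}$ costs a factor $\le1$ since $\omega(\V{k})\ge0$. The net effect is a pointwise bound
\begin{align*}
\|(a\Phi_\iota)(\V{k})\|&\le C\,|\V{k}|\,|\beta_{\UV_\iota}(\V{k})|\big(1+\id_{\{|\V{k}|\ge1\}}\omega(\V{k})^{-1}\big),
\end{align*}
with $C$ independent of $\iota$. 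Since $|\V{k}\beta_{\UV_\iota}(\V{k})|\le|\ee_\iota|\,\omega(\V{k})^{\mh}|\V{k}|(\omega(\V{k})+\V{k}^2/2)^{-1}\le|\ee_\iota|\,\omega(\V{k})^{\mh}$, and the $\ee_\iota$ are uniformly bounded, the right-hand side is dominated by an $\iota$-independent $L^2(\RR^3,\Id\V{k})$ function (square-integrable near $0$ because $\omega^{-1}\in L^2_\loc$ in three dimensions, and decaying at infinity thanks to the extra $\omega^{-1}$). This gives \eqref{cpt1a} at once: $N(0,\infty)<\infty$, and dominated convergence forces $N(r_0,\infty)\to0$ and $N(0,r_1)\to0$.

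It remains to check the two equicontinuity conditions \eqref{cpt1b} and \eqref{cpt2b}; this is the step I expect to be the main obstacle, since it requires genuine continuity of $\V{k}\mapsto(a\Phi_\iota)(\V{k})$ and of $\V{y}\mapsto\Phi_\iota(\cdot+\V{y})$ that is \emph{uniform} in $\iota$. For \eqref{cpt2b} I would argue via the uniform gradient bound $\sup_\iota\|\nabla\Phi_\iota\|<\infty$ from \eqref{eva1anull}: in $W^{1,2}(\Geb,\sF)$ a uniform $H^1$-bound yields equicontinuity of translations, $\|\Phi_\iota(\cdot+\V{y})-\Phi_\iota(\cdot)\|_{L^2(\Geb',\sF)}\le|\V{y}|\sup_\iota\|\nabla\Phi_\iota\|$ for $\V{y}$ small relative to $\dist(\Geb',\partial\Geb)$, with the contribution of the boundary layer going to zero as the layer shrinks, uniformly in $\iota$. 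For \eqref{cpt1b} I would differentiate, or rather take difference quotients, in the formula \eqref{aki1} after multiplying by the cutoff $\vr_\delta$, which confines matters to $|\V{k}|\ge\delta$ where the resolvent $R_{\Geb,\UV_\iota}(\V{k})$ and the scalar factors $\omega(\V{k})$, $\beta_{\UV_\iota}(\V{k})$, $\chi(\V{k})$, $e^{-i\V{k}\cdot\hat{\V{x}}}$ are Lipschitz in $\V{k}$ with constants depending only on $\delta$ (using $\|R(\V{k})-R(\V{k}')\|\le\|R(\V{k})\|\,\|R(\V{k}')\|\,|\omega(\V{k})-\omega(\V{k}')|\le\delta^{-2}|\V{k}-\V{k}'|$, and that multiplication by $\langle\hat{\V{x}}\rangle^{-1}(e^{-i\V{k}\cdot\hat{\V{x}}}-e^{-i\V{k}'\cdot\hat{\V{x}}})$ has norm $\le|\V{k}-\V{k}'|$ since $\Geb$ is bounded). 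Combining these Lipschitz estimates on the operator-valued and scalar factors with the uniform bounds on $\V{\Upsilon}_1,\V{\Upsilon}_2,\Upsilon_3$ yields $\triangle_\delta(\V{h})\le C_\delta|\V{h}|\to0$ as $\V{h}\to0$, for each fixed $\delta\in(0,1]$, with $C_\delta$ independent of $\iota$. With all hypotheses of Prop.~\ref{propcpt} verified, $\{\Phi_\iota\}_{\iota\in\NN}$ is relatively compact in $L^2(\Geb,\sF)$, hence has a convergent subsequence, which is the assertion.
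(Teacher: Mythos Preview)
Your overall strategy is the paper's, and the handling of \eqref{cpt2a}, \eqref{cpt2b} is fine (the paper just extends by zero to $\RR^3$ and uses the $W^{1,2}$ bound there, which cleans up your boundary-layer remark). There are, however, two genuine gaps in the boson-variable part.

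\textbf{The Lipschitz argument for \eqref{cpt1b} does not work.} You assert that $\beta_{\UV_\iota}(\V{k})$ is Lipschitz in $\V{k}$ on $\{|\V{k}|\ge\delta\}$ with constants uniform in $\iota$, but
\[
\beta_{\UV_\iota}(\V{k})=\ee_\iota\,\eta_\iota(\V{k})\,\omega_\iota(\V{k})^{\mh}(\omega_\iota(\V{k})+\V{k}^2/2)^{-1}1_{\{|\V{k}|\le\UV_\iota\}},
\]
and the hypotheses only say that $\eta_\iota$ is measurable and converges pointwise, while $1_{\{|\V{m}|\le\UV_\iota\}}$ has a jump at $|\V{k}|=\UV_\iota$; neither factor is Lipschitz, so the difference-quotient bound $\triangle_\delta(\V{h})\le C_\delta|\V{h}|$ cannot be obtained this way. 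The paper avoids regularity of $\beta_\iota$ altogether: after splitting off the Lipschitz operator-valued pieces it reduces $\triangle_\delta(\V{h})$ to a term controlled by $\sup_\iota\|(\vr_\delta\omega_\iota^\eh\beta_\iota)(\cdot+\V{h})-(\vr_\delta\omega_\iota^\eh\beta_\iota)(\cdot)\|_{L^2}^2$, observes that the sequence $\vr_\delta\omega_\iota^\eh\beta_\iota$ converges in $L^2(\RR^3)$ by dominated convergence, and invokes the converse direction of Kolmogorov--Riesz (relatively compact $\Rightarrow$ uniformly equicontinuous translations). This is the missing ingredient.

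\textbf{The ultraviolet decay for \eqref{cpt1a} is borderline insufficient as stated.} Your bound $\|\Xi_2(\V{k})\|\lesssim 1$ for large $|\V{k}|$ comes from splitting $|e^{-i\V{k}\cdot\V{x}}-\chi(\V{k})|\le|\V{k}||\V{x}|+1_{\{|\V{k}|\ge1\}}$, which for large $|\V{k}|$ costs you the factor $|\V{k}|$ that the resolvent provides. With this and the (correct) asymptotics $|\V{k}\beta_{\UV_\iota}(\V{k})|\sim|\V{k}|^{-3/2}$ you get $\|(a\Phi_\iota)(\V{k})\|^2\lesssim|\V{k}|^{-3}$, which is \emph{not} integrable at infinity in three dimensions; your further simplification to $|\V{k}\beta_{\UV_\iota}|\le|\ee_\iota|\omega^{\mh}$ only makes this worse. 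The fix is trivial: use the crude bound $|e^{-i\V{k}\cdot\V{x}}-\chi(\V{k})|\le2$, so $\|\Xi_2(\V{k})\|\le2/\omega(\V{k})$, giving the paper's estimate $N(r_0,\infty)\le c\int_{|\V{k}|\ge r_0}(|\V{k}|+\V{k}^2/2)^{-2}\Id\V{k}\to0$. (Similarly, your claim that $(\HV-E+1)^{\eh}R(\V{k})$ has norm $\le1$ is false---at $t=0$ in the spectral calculus it equals $\omega(\V{k})^{-1}$---but the correct bound $\sup_{t\ge0}(t+1)^{\eh}/(t+\omega)\le(1\wedge|\V{k}|^{-1})^{\eh}\cdot\mathrm{const}$ still gives what you need.)
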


\begin{proof}
We extend every $\Phi_\iota$ to $\RR^3$ by setting it equal to $0$ on $\Geb^c$ and denote
this extension again by the same symbol. We shall apply Proposition~\ref{propcpt} to show that the set
$\{\Phi_\iota:\iota\in\NN\}$ is relatively compact in $L^2(\RR^3,\sF)$, which will prove the claim 
because $1_\Geb L^2(\RR^3,\sF)$ is a closed subspace of $L^2(\RR^3,\sF)$. 

Of course, \eqref{cpt2a} is satisfied trvially since $\Geb$ is bounded. 
To verify \eqref{cpt2b} it suffices to show that $\{\Phi_\iota:\iota\in\NN\}$ is bounded
in $W^{1,2}(\RR^3,\sF)$. In view of \eqref{deftGplus} and \eqref{defQGV}
we know, however, that every $\Phi_\iota$ is in the completion of
$\sD(\Geb,\sF)$ with respect to the norm on $W^{1,2}(\RR^3,\sF)$. In particular
$\nabla\Phi_\iota=0$ a.e. on $\Geb^c$ and the bound 
$\sup_{\iota\in\NN}\|\nabla\Phi_\iota\|<\infty$
on the weak gradients of the extended functions follows from Remark~\ref{remcstlokbd}.

To verify \eqref{cpt1a} and \eqref{cpt1b} we first discuss the operators defined in
\eqref{defB1}--\eqref{defB3}, whose $\iota$-dependence will be indicated by a
superscript ${}^{(\iota)}$. We further set
\begin{align*}
\omega_\iota&:=(\V{m}^2+\mu_\iota^2)^\eh,
\\
\beta_{\iota}&:=\ee_\iota\eta_\iota1_{\{|\V{m}|\le\UV_\iota\}}\omega_\iota^\mh
(\omega_\iota+\V{m}^2/2)^{-1},
\\
E_\iota&:=\inf\sigma(H_\iota),
\end{align*}
for all $\iota\in\NN$. We first note the elementary bound
\begin{align*}
\frac{|e^{-i\V{k}\cdot\V{x}}-\chi(\V{k})|}{\langle\V{x}\rangle}&\le
1_{\{\chi=1\}}(\V{k})\frac{|\V{k}||\V{x}|}{\langle\V{x}\rangle}+2\cdot1_{\{\chi<1\}}(\V{k})
\le2\wedge(2|\V{k}|),\quad\V{k},\V{x}\in\RR^3,
\end{align*}
as well as the following consequence of \eqref{rba}, \eqref{pernille0}, and \eqref{ubEDir},
\begin{align}
\|a(e_{\hat{\V{x}}}\V{m}\beta_\iota)(H_\iota-E_\iota+1)^{\mh}\|\label{cpt99}
&\le c\||\V{m}|^\eh\beta_\iota\|(E_\iota+c)^\eh\le c_1,\quad\iota\in\NN.
\end{align}
Here $c>0$ depends only on $\ee^\star:=\sup_{\iota}|\ee_\iota|$ and $V$. 
The constants $c_1,c_2,\ldots>0$ 
appearing here and later on in this proof depend only on $B,\ee^\star,V$,
where $B$ is some open ball contained in $\Geb$.
The norm of the operator in the first line of the right hand side of \eqref{defB3}
is $\le\sup_{t\ge0}\sqrt{t+1}/(t+\omega_\iota(\V{k}))\le1/\sqrt{|\V{k}|(2\wedge|\V{k}|)}$.
We thus find that, uniformly in $\iota\in\sI$,
\begin{align}\label{bdsB}
\|\Xi_1^{(\iota)}(\V{k})\|,\|\Xi_2^{(\iota)}(\V{k})\|&\le2\wedge\frac{2}{|\V{k}|},\qquad
\|\V{\Xi}_3^{(\iota)}(\V{k})\|\le c_2\Big({1\wedge{\frac{1}{|\V{k}|}}}\Big)^\eh,
\end{align}
for all $\V{k}\in\RR^3\setminus\{{0}\}$. Of course, the boundedness of $\Geb$ implies
\begin{align}\label{bdGeb}
\mathfrak{c}_\Geb:=\sup_{\V{x}\in\Geb}\langle\V{x}\rangle<\infty.
\end{align}
Also employing \eqref{eva1anull} we conclude that
\begin{align}\label{carlotta0}
N(r_0,r_1)&\le c_3\mathfrak{c}_\Geb^2\int_{\{r_0\le|\V{k}|\le r_1\}}
\frac{1\wedge|\V{k}|}{(|\V{k}|+\V{k}^2/2)^2}\Id\V{k},\quad 0\le r_0\le r_1\le\infty.
\end{align}
This verifies \eqref{cpt1a}.

The spectral calculus, \eqref{cpt99}, and elementary estimations further reveal that
\begin{align*}
\left.\begin{array}{r}
\phantom{\Big|}\|(\Xi_j^{(\iota)})(\V{k}+\V{h})-(\Xi_j^{(\iota)})(\V{k})\|
\\
\phantom{\Big|}
\|(\V{\Xi}_3^{(\iota)})(\V{k}+\V{h})-(\V{\Xi}_3^{(\iota)})(\V{k})\|\end{array}\right\}
&\le c_4\frac{|\V{h}|}{|\V{k}+\V{h}|},\quad j\in\{1,2\},
\end{align*}
for all $\V{h},\V{k}\in\RR^3$ with $\V{k}\not=0$ and $\V{k}+\V{h}\not=0$. This permits to get
\begin{align*}
&\|(\vr_\delta\V{m}\beta_\iota\Xi_j^{(\iota)})(\V{k}+\V{h})
-(\vr_\delta\V{m}\beta_\iota\Xi_j^{(\iota)})(\V{k})\|
\\
&\le c_5|\V{h}||(\vr_\delta\beta_\iota)(\V{k}+\V{h})|
+c_5\Big({1\wedge{\frac{1}{|\V{k}|}}}\Big)^\eh
|(\vr_\delta\V{m}\beta_\iota)(\V{k}+\V{h})-(\vr_\delta\V{m}\beta_\iota)(\V{k})|,
\end{align*}
for $j\in\{1,2\}$, as well as a completely analogous bound for 
$\vr_\delta\beta_\iota\V{m}\cdot\V{\Xi}^{(\iota)}_3$.
Here we further estimate, assuming $|\V{h}|\le1$ in addition,
\begin{align*}
&\Big({1\wedge{\frac{1}{|\V{k}|}}}\Big)^\eh
|(\vr_\delta\V{m}\beta_\iota)(\V{k}+\V{h})-(\vr_\delta\V{m}\beta_\iota)(\V{k})|
\\
&\le|\V{h}||(\vr_\delta\beta_\iota)(\V{k}+\V{h})|+\omega_\iota(\V{k})^\eh
|(\vr_\delta\beta_\iota)(\V{k}+\V{h})-(\vr_\delta\beta_\iota)(\V{k})|
\\
&\le2|\V{h}|^\eh|(\vr_\delta\beta_\iota)(\V{k}+\V{h})|+
|(\vr_\delta\omega_\iota\beta_\iota)(\V{k}+\V{h})-(\vr_\delta\omega_\iota\beta_\iota)(\V{k})|.
\end{align*}
Combining the latter estimates with \eqref{aki1} and using \eqref{eva1anull} we deduce that
\begin{align}\nonumber
\triangle_\delta(\V{h})&\le c_6\mathfrak{c}_\Geb^2\frac{|\V{h}|}{\delta}
\int_{\RR^3}\frac{1}{(|\V{p}|+\V{p}^2/2)^2}\Id\V{p}
\\\label{carlotta}
&\quad+c_6\mathfrak{c}_\Geb^2\sup_{\iota\in\NN}
\int_{\RR^3}|(\vr_\delta\omega_\iota^\eh\beta_\iota)(\V{k}+\V{h})
-(\vr_\delta\omega_\iota^\eh\beta_\iota)(\V{k})|^2\Id\V{k},
\end{align}
for all $\V{h}\in\RR^3$ satisfying $|\V{h}|\le1$ and all $\delta\in(0,1]$. 
Finally, we observe that our assumptions on $\ee_\iota$, $\mu_\iota$, $\eta_\iota$, and
$\UV_\iota$ together with the dominated convergence theorem imply that, 
in fact for every $\delta\ge0$, the sequence 
$\{\vr_\delta\omega_\iota^\eh\beta_\iota\}_{\iota\in\NN}$
converges in $L^2(\RR^3)$. In particular, its elements form a relatively compact set
in $L^2(\RR^3)$. By Kolmogorov's characterization of relatively compact sets in $L^2(\RR^3)$
the integral in the second line of \eqref{carlotta} goes to zero, as $\V{h}\to0$.
Altogether we now see that \eqref{cpt1b} is satisfied.
\end{proof}

\begin{cor}\label{corIRubd1}
Let $\{(\ee_\iota,\eta_\iota,\UV_\iota,\HV_{\iota})\}_{\iota\in\NN}$ be given as in 
Corollary~\ref{corIRbd} with the only exceptions that $\Geb$ is now assumed to be {\em unbounded}
and the boson mass $\mu\ge0$ is kept fixed. Set 
$$
\ee_\infty:=\lim_{\iota\to\infty},\quad\eta_\infty:=\lim_{\iota\to\infty}\eta_\iota,
\quad\UV_\infty:=\lim_{\iota\to\infty}\UV_\iota\in[0,\infty].
$$
Let $\HV_\infty$ denote the operator $\HV_{\Geb,\UV_\infty}$ defined by means of $\mu$, 
$\ee_\infty$, and $\eta_\infty$. Finally, let $\Sigma_\iota$ and $E_\iota$ denote the localization 
threshold and minimal energy of $\HV_\iota$, for all $\iota\in\NN\cup\{\infty\}$. Assume that
\begin{align*}
\Sigma_{\infty}>E_{\infty}.
\end{align*}
Then the following holds:
\begin{enumerate}
\item[{\rm(1)}] There exists $J\in\NN$ such that $\Sigma_{\iota}>E_{\iota}$, for all $\iota\ge J$. 
\item[{\rm(2)}] 
Assume that $\Phi_\iota$ is a normalized ground state eigenvector of 
$\HV_{\iota}$, for every $\iota\in\NN$ with $\iota\ge J$.
Then $\{\Phi_\iota\}_{\iota=J}^\infty$ contains a subsequence that converges in $L^2(\Geb,\sF)$.
\end{enumerate}
\end{cor}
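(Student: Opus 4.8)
\noindent
The plan is to split the argument into two stages. First I would establish that the ground state energies $E_\iota$ converge to $E_\infty$ and that the localization thresholds are lower semicontinuous, $\liminf_\iota\Sigma_\iota\ge\Sigma_\infty$ (with $\liminf\Sigma_\iota\ge\infty$ read as $\Sigma_\iota\to\infty$); this settles part~(1) and, more importantly, supplies $\iota$-uniform constants for the exponential localization bounds. Then I would run the compactness argument from the proof of Cor.~\ref{corIRbd}, with the boundedness of $\Geb$ exploited there replaced by the said uniform spatial decay.

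For the first stage I would absorb the ultraviolet cutoff into the form factor, i.e. note that $\HV_{\Geb',\UV_\iota}$ built from $(\ee_\iota,\eta_\iota)$ coincides, for every open $\Geb'\subset\Geb$, with the operator obtained from the pair $(\ee_\iota,\eta_\iota1_{\{|\V{m}|\le\UV_\iota\}})$ and ultraviolet cutoff $\infty$. This reduces everything to the case of a fixed cutoff $\UV=\infty$ with only the coupling data $(\ee,\eta)$ varying, to which Lem.~\ref{lempert} applies directly: it gives a relative form bound between $\HV_\iota$ and $\HV_\infty$ on $\Geb'$ with constant $cd_\iota$, where $d_\iota\to0$ by dominated convergence since $\ee_\iota\to\ee_\infty$, $\eta_\iota\to\eta_\infty$ pointwise, the boson mass is fixed, and $\UV_\iota\to\UV_\infty$. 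Crucially the additive constant $cd_\iota\zeta$ does not depend on $\Geb'$, so taking infima of the corresponding forms on $\Geb$ and on each $\Geb_R$ gives $E_\iota\to E_\infty$ and $\liminf_\iota\Th{R}{\UV_\iota}\ge\Th{R}{\UV_\infty}$ for every fixed finite $R$; letting $R\to\infty$ yields $\liminf_\iota\Sigma_\iota\ge\Sigma_\infty$. Since $E_\infty<\Sigma_\infty$ by hypothesis, this forces $\Sigma_\iota>E_\iota$ for all large $\iota$, which is part~(1). In addition I would record that one may fix a single finite $R$ with $\Th{R}{\UV_\infty}$ large enough (and $4/R^2$ small enough) that, for all $\iota$ large, the quantity $\Delta_\iota:=\tfrac12(\Th{R}{\UV_\iota}-E_\iota)-4/R^2$ appearing in Lem.~\ref{lemeva} is bounded below by a fixed $\delta_0>0$ and the constant $M_\iota$ there is bounded above.

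For part~(2) I would verify the hypotheses of Prop.~\ref{propcpt} for $\{\Phi_\iota\}_{\iota\ge J'}$, $J'\ge J$ large, following the proof of Cor.~\ref{corIRbd} step by step. Applying Lem.~\ref{lemeva} on $\Geb'=\Geb$ with the fixed $R$, a fixed $\ve\in(0,1)$, and the uniform bounds on $\Delta_\iota$ and $M_\iota$ just obtained, I get a fixed $\alpha_0>0$ and a constant $C_0<\infty$ with
\begin{align*}
\|e^{\alpha_0\langle\hat{\V{x}}\rangle}\Phi_\iota\|
+\|e^{\alpha_0\langle\hat{\V{x}}\rangle}\nabla\Phi_\iota\|
+\|e^{\alpha_0\langle\hat{\V{x}}\rangle}a(e_{\hat{\V{x}}}\V{m}\beta_{\UV_\iota})\Phi_\iota\|\le C_0,
\quad\iota\ge J'
\end{align*}
(the finitely many remaining indices being harmless, cf.\ \eqref{eva1}, \eqref{eva1alpha}, \eqref{eva1beta}). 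The first term yields tightness \eqref{cpt2a}; the uniform weak-gradient bound, already present in \eqref{eva1anull}, bounds $\{\Phi_\iota\}$ in $W^{1,2}(\RR^3,\sF)$ and hence gives the $L^2$-equicontinuity \eqref{cpt2b}. For \eqref{cpt1a} and \eqref{cpt1b} I would insert the pull-through formula \eqref{aki1} of Prop.~\ref{prop-IR}, now written with $\V{\Upsilon}_{1,\iota}=i\hat{\V{x}}\Phi_\iota$, $\V{\Upsilon}_{2,\iota}=\langle\hat{\V{x}}\rangle(-i\nabla+a(e_{\hat{\V{x}}}\V{m}\beta_{\UV_\iota}))\Phi_\iota$, $\Upsilon_{3,\iota}=\langle\hat{\V{x}}\rangle\Phi_\iota$, whose $L^2(\Geb,\sF)$-norms are uniformly bounded by the displayed estimate since $\langle\hat{\V{x}}\rangle\le ce^{\alpha_0\langle\hat{\V{x}}\rangle}$; the operator-norm bounds \eqref{bdsB} on the $\Xi^{(\iota)}_j(\V{k})$ and on their increments in $\V{k}$ are $\iota$-uniform, and $\{\vr_\delta\omega_\iota^\eh\beta_{\UV_\iota}\}_\iota$ converges in $L^2(\RR^3)$ by dominated convergence, hence is $L^2$-equicontinuous in Kolmogorov's sense. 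Re-running the estimates \eqref{carlotta0}--\eqref{carlotta} with the uniform bounds on the $\V{\Upsilon}_{j,\iota}$ in place of the factor $\mathfrak{c}_\Geb$ used in Cor.~\ref{corIRbd} then verifies \eqref{cpt1a} and \eqref{cpt1b}. Prop.~\ref{propcpt} yields a subsequence of $\{\Phi_\iota\}$ converging in $L^2(\RR^3,\sF)$, and the limit lies in the closed subspace $L^2(\Geb,\sF)$.

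The main obstacle is the first stage: one must be sure that $E_\iota$ and, above all, $\Sigma_\iota$ behave continuously, respectively lower semicontinuously, under the combined perturbation of $\ee$, $\eta$, and $\UV$, and uniformly in the underlying domain, so that the exponential-localization parameters delivered by Lem.~\ref{lemeva} can be chosen independently of $\iota$. Once that uniformity is secured, part~(2) is a routine transcription of the proof of Cor.~\ref{corIRbd}.
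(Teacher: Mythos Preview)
Your proposal is correct and follows essentially the same route as the paper: both arguments use Lem.~\ref{lempert} (with the ultraviolet cutoff absorbed into $\eta$) to get norm resolvent convergence of $\HV_\iota$ and of $\HV_{R,\iota}$, hence convergence of $E_\iota$ and $E_{R,\iota}$; both then fix a single large $R$ so that the parameters in Lem.~\ref{lemeva} are uniformly controlled, and feed the resulting uniform exponential bounds \eqref{eva1}--\eqref{eva1beta} into the proof of Cor.~\ref{corIRbd} in place of the factor $\mathfrak{c}_\Geb$. The only cosmetic difference is that the paper works directly at a fixed $R$ and uses $\Sigma_\iota\ge E_{R,\iota}$ to obtain~(1), whereas you phrase this as $\liminf_\iota\Sigma_\iota\ge\Sigma_\infty$; the underlying estimate is the same.
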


\begin{proof}
Pick some $\tilde{\alpha},b>0$ such that 
$$
\Sigma_{\infty}-E_{\infty}\ge\tilde{\alpha}^2+4b.
$$ 
For every $\iota\in\NN\cup\{\infty\}$, let $\HV_{R,\iota}$ denote the operator 
$\HV_{\Geb_R,\UV_\iota}$ defined by means of $\mu$, $\ee_\iota$, $\eta_\iota$, and set 
$E_{R,\iota}:=\inf\sigma(\HV_{R,\iota})$; recall that $\Geb_R=\Geb\cap\{|\V{x}|>R\}$. 
Here we choose $R\ge1$ so large that 
\begin{align*}
E_{R,\infty}-E_{\infty}-\frac{8}{R^2}\ge2\tilde{\alpha}^2+3b.
\end{align*}
Lemma~\ref{lempert} implies that $\HV_{R,\iota}\to\HV_{R,\infty}$, $\iota\to\infty$, 
in norm resolvent sense. Since norm resolvent convergence entails convergence of the spectrum
\cite{ReedSimonI}, we see that $E_{R,\iota}\to E_{R,\infty}$. 
Likewise, $E_{\iota}\to E_{\infty}$, $\iota\to\infty$, 
by norm resolvent convergence. Therefore, we find some $J\in\NN$ such that, for all natural
numbers $\iota\ge J$,
\begin{align}\label{lalilu}
E_{R,\iota}-E_{\iota}&\le E_{R,\infty}- E_{\infty}+1\quad\text{and}\quad
E_{R,\iota}- E_{\iota}-\frac{8}{R^2}\ge\tilde{\alpha}^2+2b.
\end{align}
Since $\Sigma_{\iota}\ge E_{R,\iota}$, this implies Assertion~(1). 

To prove (2) we just have to substitute all arguments that exploited the boundedness of $\Geb$
in the proof of Proposition~\ref{corIRbd} by the following considerations: Notice first that the right hand sides 
of the inequalities in the proof of Proposition~\ref{corIRbd} depend on $\Geb$ only via the open ball 
$B$ and the quantity defined in \eqref{bdGeb}; furthermore, the constants in \eqref{eva1anull} 
contribute to the right hand sides of \eqref{carlotta0} and \eqref{carlotta}.
We shall now apply Lemma~\ref{lemeva} for each fixed $\iota\ge J$ and with $\Geb'=\Geb$, always 
using the parameter $R$ chosen above. Then the quantities $\lambda$ and $\Delta$ appearing in the
statement of Lemma~\ref{lemeva} become $\iota$-dependent,
$$
\lambda_\iota:=E_\iota+\frac{1}{2}(E_{R,\iota}-E_\iota),
\quad\Delta_\iota:=\frac{1}{2}(E_{R,\iota}-E_\iota)-\frac{4}{R^2},\quad\iota\in\NN.
$$
Thanks to \eqref{lalilu} we have, however, the uniform lower and upper bounds 
$$
\frac{\tilde{\alpha}^2}{2}+b\le\Delta_\iota\le\frac{1}{2}(E_{R,\infty}- E_{\infty}+1),\quad\iota\in\NN.
$$
Of course, $E_\iota\le\lambda_\iota$, whence the conditions (a) and (b) in
Lemma~\ref{lemeva} are trivially satisfied and (c) holds by assumption in the present situation.
Finally, we choose $\ve\in(0,1)$ such that $(1-\ve)\sqrt{\tilde{\alpha}^2+2b}=\tilde{\alpha}$.
In view of the preceding remarks and the first bound in \eqref{lalilu} we find 
$\iota$-independent upper bounds on the quantities called $\alpha$ and $M$ in Lemma~\ref{lemeva},
and $\tilde{\alpha}$ is a lower bound for $\alpha$.
Therefore, \eqref{eva1}--\eqref{eva1beta} yield the uniform bounds
\begin{align*}
\sup_{\iota\ge J}\|e^{\tilde{\alpha}\langle\hat{\V{x}}\rangle}\Phi_{\iota}\|&<\infty,\quad
\sup_{\iota\ge J}\|e^{\tilde{\alpha}\langle\hat{\V{x}}\rangle}\nabla\Phi_{\iota}\|<\infty,\quad
\sup_{\iota\ge J}\|e^{\tilde{\alpha}\langle\hat{\V{x}}\rangle}
a(e_{\hat{\V{x}}}\V{m}\beta_{\UV_\iota})\Phi_{\iota}\|<\infty.
\end{align*}
The first one clearly implies \eqref{cpt2a}. Together they entail uniform 
(in $\iota\ge J$) bounds on the expressions in \eqref{defUpsilon1}--\eqref{defUpsilon3}, 
which can be used as substitutes for 
the bounds \eqref{eva1anull} and \eqref{bdGeb} employed in the proof of Proposition~\ref{corIRbd}.
\end{proof}

To prove the third corollary of Proposition~\ref{propcpt} we need the following lemma:

\begin{lem}\label{lemconvEGebn0}
Assume that $\Geb$ is unbounded, and pick open sets $\Geb_\iota\subset\Geb$, $\iota\in\NN$, 
such that $\emptyset\not=\Geb_1\subset\Geb_2\subset\Geb_3\subset\ldots\:$, such that
$\bigcup_{\iota\in\NN}\Geb_\iota=\Geb$, and such that every compact subset of $\Geb$ 
is contained in some $\Geb_\iota$. Keep $\mu\ge0$, $\ee\in\RR$, $\UV\in[0,\infty]$, 
and $\eta$ fixed. Then
\begin{align}\label{convEGebn}
\lim_{\iota\to\infty}E_{\Geb_\iota,\UV}&=E_{\Geb,\UV}.
\end{align}
\end{lem}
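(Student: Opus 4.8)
The plan is to establish the two inequalities $\limsup_{\iota\to\infty}E_{\Geb_\iota,\UV}\le E_{\Geb,\UV}$ and $\liminf_{\iota\to\infty}E_{\Geb_\iota,\UV}\ge E_{\Geb,\UV}$ separately, using the variational (min-max) characterization of the ground state energies and the fact that $\sD(\tilde\Geb,\fdom(\Id\Gamma(\omega)))$ is a form core for $\HV_{\tilde\Geb,\UV}$ for every open $\tilde\Geb$. The monotonicity $\Geb_1\subset\Geb_2\subset\cdots$ immediately gives that $\iota\mapsto E_{\Geb_\iota,\UV}$ is non-increasing (extending a function by zero only enlarges the class of admissible test functions, and form domains are nested), so the limit on the left of \eqref{convEGebn} exists in $[-\infty,\infty)$; moreover each $E_{\Geb_\iota,\UV}\ge E_{\Geb,\UV}$ because $\QGV[\Geb_\iota]\subset\QGV$, which already yields $\lim_\iota E_{\Geb_\iota,\UV}\ge E_{\Geb,\UV}$ and in particular that the limit is finite.

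For the reverse inequality, fix $\delta>0$ and choose $\Psi\in\sD(\Geb,\fdom(\Id\Gamma(\omega)))$ with $\|\Psi\|=1$ and $\hv_{\Geb,\UV}[\Psi]\le E_{\Geb,\UV}+\delta$; this is possible since $\sD(\Geb,\fdom(\Id\Gamma(\omega)))$ is a form core. Such a $\Psi$ is a finite sum $\sum_k g_k\phi_k$ with $g_k\in C_0^\infty(\Geb)$, hence $\supp\Psi$ is a compact subset of $\Geb$, so by hypothesis $\supp\Psi\subset\Geb_{\iota_0}$ for some $\iota_0$. Then for all $\iota\ge\iota_0$ we have $\Psi\in\sD(\Geb_\iota,\fdom(\Id\Gamma(\omega)))\subset\QGV[\Geb_\iota]$, and since $\hv_{\Geb_\iota,\UV}[\Psi]=\hv_{\Geb,\UV}[\Psi]$ (the form \eqref{forhtv} is local, being built from $\mathfrak{t}_\Geb$ and the pointwise-defined perturbation $\mathfrak{v}_{\Geb,\UV}$, and $\Psi$ vanishes near $\partial\Geb_\iota$), the min-max principle gives $E_{\Geb_\iota,\UV}\le\hv_{\Geb_\iota,\UV}[\Psi]\le E_{\Geb,\UV}+\delta$. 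Letting $\iota\to\infty$ and then $\delta\downarrow0$ yields $\lim_\iota E_{\Geb_\iota,\UV}\le E_{\Geb,\UV}$, completing the proof.

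I expect the only mildly delicate point to be the identity $\hv_{\Geb_\iota,\UV}[\Psi]=\hv_{\Geb,\UV}[\Psi]$ for compactly supported $\Psi$, i.e. the locality of the form; for $\UV<\infty$ this is clear from the explicit integrand \eqref{deftLambdax}--\eqref{defhGx}, which only involves $\Psi(\V{x})$, $\nabla\Psi(\V{x})$, $V(\V{x})$ and field operators acting fiberwise, and for $\UV=\infty$ one passes to the limit using Theorem \ref{thmrbUV}(1) together with \eqref{defqKLV}, noting that $\mathfrak{v}_{\Geb,K,L,\UV}$ is likewise given by the pointwise integrand \eqref{defvKL}. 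One small additional check is that $\Psi\in\sD(\Geb,\fdom(\Id\Gamma(\omega)))$ indeed lies in $\QGV=\dom(\mathfrak{t}_\Geb^+)\cap L^2(\Geb,\fdom(\Id\Gamma(\omega)))$, which is immediate from $\Psi\in\sD(\Geb,\sF)$ and $\phi_k\in\fdom(\Id\Gamma(\omega))$. No step here requires a binding condition or finiteness of $\UV$, consistent with the statement of the lemma.
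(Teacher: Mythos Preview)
Your proof is correct and follows essentially the same approach as the paper: use that $\sD(\Geb,\fdom(\Id\Gamma(\omega)))$ is a form core to produce a compactly supported test vector $\Psi$ with $\hv_{\Geb,\UV}[\Psi]<E_{\Geb,\UV}+\ve$, then use it as a test vector on every $\Geb_\iota$ containing its support. The paper's proof is slightly more terse (it does not bother with the monotonicity observation or the detailed locality discussion for $\UV=\infty$), but the argument is the same.
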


\begin{proof}
By the variational principle and the fact that $\sD(\Geb_\iota,\fdom(\Id\Gamma(\omega)))$ is a
form core for $\HV_{\Geb_\iota,\UV}$ it is clear that $E_{\Geb_\iota,\UV}\ge E_{\Geb,\UV}$, 
for all $\iota\in\NN$. Now let $\ve>0$. Since $\sD(\Geb,\fdom(\Id\Gamma(\omega)))$ 
is a form core for $\HV_{\Geb,\UV}$, we find some normalized 
$\Psi\in\sD(\Geb,\fdom(\Id\Gamma(\omega)))$ such that 
$\hv_{\Geb,\UV}[\Psi]<E_{\Geb,\UV}+\ve$. There exists $\iota_0\in\NN$ such that 
$\supp(\Psi)\subset\Geb_\iota$, for all $\iota\ge\iota_0$, and we
conclude that $\Psi\restr_{\Geb_\iota}\in\dom(\hv_{\Geb_\iota,\UV})$ and
\begin{align*}
E_{\Geb_\iota,\UV}&=E_{\Geb_\iota,\UV}\|\Psi\restr_{\Geb_\iota}\|^2\le
\hv_{\Geb_\iota,\UV}[\Psi\restr_{\Geb_\iota}]
=\hv_{\Geb,\UV}[\Psi]<E_{\Geb,\UV}+\ve,\quad\iota\ge\iota_0,
\end{align*}
which proves \eqref{convEGebn}.
\end{proof}

\begin{cor}\label{corIRubd2}
In the situation of Lemma~\ref{lemconvEGebn0} we suppose in addition that
$\UV$ is finite and assume that the binding condition holds for $\Geb$, i.e.,
$\Sigma_{\Geb,\UV}>E_{\Geb,\UV}$. Furthermore, we assume that $\Phi_\iota$ is a 
normalized ground state eigenvector of $\HV_{\Geb_\iota,\UV}$, for every $\iota\in\NN$.
If every $\Phi_\iota$ is extended by $0$ to $\Geb$, then $\{\Phi_\iota\}_{\iota=J}^\infty$ 
contains a subsequence that converges in $L^2(\Geb,\sF)$.
\end{cor}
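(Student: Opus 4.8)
The plan is to verify the five hypotheses of Prop.~\ref{propcpt} for the family $\{\Phi_\iota\}$ (each member extended by $0$ to all of $\RR^3$, so that it becomes a bounded family in the closed subspace $L^2(\Geb,\sF)$ of $L^2(\RR^3,\sF)$); relative compactness there then produces the asserted convergent subsequence. Apart from setting up uniform localization bounds for the varying domains $\Geb_\iota$, the argument is essentially a transcription of the proofs of Cor.~\ref{corIRbd} and Cor.~\ref{corIRubd1}.

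First I would transfer the binding condition to the $\Geb_\iota$. Since $\Geb_\iota\cap\{|\V{x}|>R\}\subset\Geb\cap\{|\V{x}|>R\}$ for every $R$, domain monotonicity of the Dirichlet forms gives $\Sigma_{\Geb_\iota,\UV}\ge\Sigma_{\Geb,\UV}$, while Lem.~\ref{lemconvEGebn0} gives $E_{\Geb_\iota,\UV}\to E_{\Geb,\UV}$ with $E_{\Geb_\iota,\UV}\ge E_{\Geb,\UV}$. Using the binding assumption $\Sigma_{\Geb,\UV}>E_{\Geb,\UV}$ I would fix, exactly as in Lem.~\ref{lemeva} and with the data of $\Geb$ alone, a radius $R\ge1$ and the numbers $\lambda,\Delta,\alpha,M$, together with an open ball $B$ with $\overline B\subset\Geb$ compact. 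Because $E_{\Geb_\iota,\UV}\to E_{\Geb,\UV}<\lambda$ and $\overline B\subset\Geb_\iota$ for all large $\iota$, there is $\iota_0\in\NN$ so that for $\iota\ge\iota_0$ the hypotheses (a)--(c) of Lem.~\ref{lemeva} hold with $\Geb'=\Geb_\iota$, (c) being the standing assumption of the corollary. Since $\alpha$ and $M$ in Lem.~\ref{lemeva} depend on the model only through $E_{\Geb,\UV}$ and $\Th{R}{\UV}$, they are $\iota$-independent, so \eqref{eva1}, \eqref{eva1alpha}, \eqref{eva1beta} provide bounds on $\Phi_\iota$, on its weak gradient, and on $a(e_{\hat{\V{x}}}\V{m}\beta_\UV)\Phi_\iota$ weighted by $e^{\alpha\langle\hat{\V{x}}\rangle}$ that are uniform in $\iota\ge\iota_0$.

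Next I would check the hypotheses of Prop.~\ref{propcpt}. The uniform weighted bound on $\Phi_\iota$ itself gives the spatial decay \eqref{cpt2a}. For \eqref{cpt2b} I would argue as in Cor.~\ref{corIRbd}: each extended $\Phi_\iota$ lies in the $W^{1,2}(\RR^3,\sF)$-closure of $\sD(\Geb_\iota,\sF)$, so its weak gradient vanishes a.e. off $\Geb_\iota$ and, by Rem.~\ref{remcstlokbd}, is bounded in $L^2$ uniformly in $\iota$; hence $\{\Phi_\iota\}$ is bounded in $W^{1,2}(\RR^3,\sF)$ and its translations are $L^2$-equicontinuous. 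For \eqref{cpt1a} and \eqref{cpt1b} I would apply Prop.~\ref{prop-IR} to $\HV_{\Geb_\iota,\UV}$ (legitimate for $\iota\ge\iota_0$, since $\Phi_\iota$ is then a ground state eigenvector and the binding condition holds for $\Geb_\iota$), obtaining the representation \eqref{aki1} of $(a\Phi_\iota)(\V{k})$, and then repeat the estimates from the proof of Cor.~\ref{corIRbd}. Here the situation is in fact simpler than there: $\mu,\ee,\eta,\UV$ are fixed, so the operator bounds \eqref{bdsB} on $\Xi_1^{(\iota)},\Xi_2^{(\iota)},\V{\Xi}_3^{(\iota)}$ hold uniformly with the single functions $\omega$ and $\beta_\UV$, the uniform weighted bounds of the previous step control the vectors $\V{\Upsilon}_j^{(\iota)}$ and $\Upsilon_3^{(\iota)}$ (their supports lying in $\Geb$ playing the role of \eqref{bdGeb}), and Kolmogorov's characterization of relatively compact sets in $L^2(\RR^3)$ applied to the one fixed function $\vr_\delta\omega^\eh\beta_\UV$ disposes of the translation term in the analogue of \eqref{carlotta}.

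With all five hypotheses verified, Prop.~\ref{propcpt} shows that $\{\Phi_\iota\}_{\iota\ge\iota_0}$ is relatively compact in $L^2(\RR^3,\sF)$, hence in $L^2(\Geb,\sF)$, and discarding the finitely many indices below $\iota_0$ extracts a subsequence converging in $L^2(\Geb,\sF)$. I expect the main obstacle to be the first step: making the hypotheses of Lem.~\ref{lemeva} hold uniformly along the exhausting sequence $\Geb_\iota$. This rests on combining the energy convergence $E_{\Geb_\iota,\UV}\to E_{\Geb,\UV}$ of Lem.~\ref{lemconvEGebn0} with the monotonicity $\Sigma_{\Geb_\iota,\UV}\ge\Sigma_{\Geb,\UV}$ and on the fact, visible in the statement of Lem.~\ref{lemeva}, that its weights and constants are expressed through $\Geb$-data only; once this is in place, the remainder is a near-verbatim repetition of the two preceding corollaries.
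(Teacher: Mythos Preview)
Your proposal is correct and follows essentially the same route as the paper: fix the data $R,\lambda,\Delta,\alpha,M$ of Lem.~\ref{lemeva} using only $\Geb$, use Lem.~\ref{lemconvEGebn0} to verify hypothesis~(a) for $\Geb'=\Geb_\iota$ with $\iota$ large, obtain the uniform weighted bounds \eqref{eva1}--\eqref{eva1beta}, and then run the estimates of Cor.~\ref{corIRbd} with these bounds replacing \eqref{eva1anull} and \eqref{bdGeb}. The only cosmetic difference is that the paper picks $B\subset\Geb_1$ directly (so condition~(b) holds for all $\iota$), whereas you pick $B$ with $\overline B\subset\Geb$ and invoke the exhaustion property; both work.
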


\begin{proof}
We choose $R>0$ and define $\lambda$ and $\Delta$ precisely as in the statement of
Lemma~\ref{lemeva}. Furthermore, we choose an arbitrary open ball $B\subset\Geb_1$ and pick some
$\ve\in(0,1)$. By our assumptions, 
every $\Geb_\iota$ satisfies the conditions (b) and (c) in Lemma~\ref{lemeva}. To verify Condition~(a) we 
employ Lemma~\ref{lemconvEGebn0} which implies that $E_{\Geb_\iota,\UV}\le\lambda$ for all 
$\iota\ge J$ and some $J\in\NN$. Therefore, the bounds \eqref{eva1}--\eqref{eva1beta} are
available with $\Geb'=\Geb_\iota$, for every $\iota\ge J$. They yield uniform
(in $\iota\ge J$) bounds on the expressions in \eqref{defUpsilon1}--\eqref{defUpsilon3}, which can be 
used as substitutes for \eqref{eva1anull} and \eqref{bdGeb} in the proof of Proposition~\ref{corIRbd}.
\end{proof}


\subsection{Construction of ground states in the general case}\label{ssecGSgen}

\noindent
In a chain of approximation steps we next drop the various restrictive hypotheses 
employed in Theorem~\ref{thmGSmass}. To this end we shall repeatedly combine the 
compactness results of the previous subsection with the following abstract lemma,
which is identical to \cite[Lem.~5.1]{KMS2011}. The lemma is a slightly improved version 
of a statement we learned from \cite{BFS1998b}.

\begin{lem}\label{lemSRC}
Let $A,A_1,A_2,\ldots$ be self-adjoint operators in some separable Hilbert space $\sK$ such that
$A_j\to A$ in the strong resolvent sense, as $j\to\infty$. For every $j\in\NN$, let $a_j\in\RR$
be an eigenvalue of $A_j$ and $\phi_j\in\dom(A_j)\setminus\{0\}$ a corresponding
eigenvector. Assume that $\{\phi_j\}_{j\in\NN}$ converges weakly to some non-zero $\phi\in\sK$.
Then $a:=\lim_{j\to\infty}a_j$ exists, $\phi\in\dom(A)$, and $A\phi=a\phi$.
If $a_j=\inf\sigma(A_j)$, for every $j\in\NN$, then $a=\inf\sigma(A)$.
\end{lem}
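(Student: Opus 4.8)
The plan is to prove Lemma~\ref{lemSRC} by the standard strong-resolvent-convergence argument, pairing the resolvent identities against test vectors and using the weak convergence of the eigenvectors.

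First I would fix some $z\in\CC\setminus\RR$, say $z=i$, and consider the resolvents $R_j:=(A_j-z)^{-1}$ and $R:=(A-z)^{-1}$; by hypothesis $R_j\to R$ strongly. Since $A_j\phi_j=a_j\phi_j$, we have $R_j\phi_j=(a_j-z)^{-1}\phi_j$. The weak convergence $\phi_j\rightharpoonup\phi$ together with the normalization step (I would first rescale so that $\|\phi_j\|=1$, which is harmless) shows $\{\phi_j\}$ is bounded, so by Banach-Steinhaus and a standard $3\varepsilon$-estimate, $R_j\phi_j\rightharpoonup R\phi$: indeed for any $\psi\in\sK$, $\SPn{\psi}{R_j\phi_j-R\phi}=\SPn{(R_j^*-R^*)\psi}{\phi_j}+\SPn{R^*\psi}{\phi_j-\phi}$, and both terms vanish as $j\to\infty$ because $R_j^*\to R^*$ strongly (strong resolvent convergence is equivalent for $A_j$ and the adjoint resolvents here since the operators are self-adjoint) and $\phi_j\rightharpoonup\phi$.

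Next I would extract the eigenvalue limit. Since $\phi\neq0$, pick $\psi$ with $\SPn{\psi}{\phi}\neq0$; then $(a_j-z)^{-1}\SPn{\psi}{\phi_j}=\SPn{\psi}{R_j\phi_j}\to\SPn{\psi}{R\phi}$. The left side is $(a_j-z)^{-1}$ times a sequence converging to the nonzero number $\SPn{\psi}{\phi}$, and the right side is finite; moreover $a_j\in\RR$ so $|a_j-z|\ge 1$, hence $(a_j-z)^{-1}$ is bounded and along any convergent subsequence $(a_{j_k}-z)^{-1}\to w$ for some $w$ with $|w|\le1$. If $w=0$ we would get $\SPn{\psi}{R\phi}=0$ for all such $\psi$, forcing $R\phi=0$, i.e. $\phi=0$, a contradiction; so $w\neq0$ and $a_{j_k}\to z+1/\bar w\cdot(\dots)$—more cleanly, $a_{j_k}=z+1/((a_{j_k}-z)^{-1})\to z+1/w=:a\in\RR$. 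Since every subsequence has a further subsequence converging to such a limit and the limit is pinned down by $\SPn{\psi}{R\phi}=(a-z)^{-1}\SPn{\psi}{\phi}$ (true for all $\psi$, not just one), the full sequence converges: $a_j\to a$. Then $R\phi=\lim R_j\phi_j=\lim(a_j-z)^{-1}\phi_j=(a-z)^{-1}\phi$ in the weak sense, hence $R\phi=(a-z)^{-1}\phi$, which gives $\phi=(a-z)R\phi\in\Ran(R)=\dom(A)$ and $(A-z)\phi=(a-z)\phi$, i.e. $A\phi=a\phi$.

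Finally, for the last assertion, suppose $a_j=\inf\sigma(A_j)$ for all $j$. I would argue in two directions. Since $a$ is an eigenvalue of $A$ we have $\inf\sigma(A)\le a$. For the reverse inequality, suppose for contradiction $\inf\sigma(A)<a$; pick $b$ with $\inf\sigma(A)<b<a$, so $\id_{(-\infty,b]}(A)\neq0$. By one of the standard characterizations of strong resolvent convergence (see \cite{ReedSimonI}), if $b$ is not an eigenvalue of $A$ then $\id_{(-\infty,b]}(A_j)\to\id_{(-\infty,b]}(A)$ strongly, hence $\id_{(-\infty,b]}(A_j)\neq0$ for large $j$, contradicting $a_j=\inf\sigma(A_j)\ge a-\varepsilon>b$ for large $j$ (using $a_j\to a$); if $b$ happens to be an eigenvalue we simply replace it by a nearby non-eigenvalue in $(\inf\sigma(A),a)$, which exists since the spectrum is nonempty below $a$ and has at most countably many eigenvalues isolated or not—more simply, $(\inf\sigma(A),a)$ meets $\sigma(A)$ in a set of positive ``thickness'' so contains points that are not eigenvalues, or one invokes that $\sigma(A)\cap(-\infty,b]\neq\emptyset$ for a dense set of $b$. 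The main (only) obstacle here is this lower-semicontinuity of the spectrum step; everything preceding it is routine. I expect to simply cite \cite{ReedSimonI} for the relevant convergence-of-spectral-projections facts, as the excerpt already does for ``norm resolvent convergence entails convergence of the spectrum''.
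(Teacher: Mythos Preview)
The paper does not give its own proof of this lemma; it simply states that the lemma ``is identical to \cite[Lem.~5.1]{KMS2011}'' and is ``a slightly improved version of a statement we learned from \cite{BFS1998b}.'' So there is no in-paper argument to compare against, and your task was really just to supply a correct proof.

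Your argument is essentially correct and follows the standard route. A few remarks on presentation:
\begin{itemize}
\item The normalization step is unnecessary: weak convergence of $\{\phi_j\}$ already implies boundedness by the uniform boundedness principle, so you can skip the rescaling entirely and avoid the (minor) worry about whether the rescaled sequence still converges weakly.
\item The passage showing $w\neq0$ and uniqueness of the subsequential limit is a bit tangled. The clean way is: any subsequential limit $w$ of $(a_j-z)^{-1}$ satisfies $R\phi=w\phi$ (taking weak limits in $R_j\phi_j=(a_j-z)^{-1}\phi_j$ against \emph{all} test vectors), and since $R$ is injective and $\phi\neq0$ this forces $w\neq0$ and pins down $w$ uniquely. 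Hence the full sequence $(a_j-z)^{-1}$ converges, and so does $a_j$.
\item For the final assertion you can bypass the spectral-projection discussion and cite \cite[Thm.~VIII.24(a)]{ReedSimonI} directly: under strong resolvent convergence, every $\lambda\in\sigma(A)$ is a limit of points $\lambda_j\in\sigma(A_j)$. If $\inf\sigma(A)<a$, pick $\lambda\in\sigma(A)$ with $\lambda<a$ and obtain $\lambda_j\in\sigma(A_j)$ with $\lambda_j\to\lambda<a$, contradicting $a_j=\inf\sigma(A_j)\to a$. This is shorter and avoids the slightly delicate issue of choosing $b$ outside the point spectrum.
\end{itemize}
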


For technical reasons we first have to trade the positive mass required in Theorem~\ref{thmGSmass}
for a sharp infrared cutoff.

\begin{prop}[Ground states with IR cutoff, bounded $\Geb$, and finite~$\UV$]\label{propGSIRcutoff}
Let $\UV\in(0,\infty)$ and assume that $\Geb$ is bounded and that $\eta=0$ on 
$\{|\V{m}|\le\sigma\}$, for some $\sigma\in(0,\UV)$. Then $E_{\Geb,\UV}$ is an eigenvalue 
of both $\HV_{\Geb,\UV}$ and $\NV_{\Geb,\UV}$.
\end{prop}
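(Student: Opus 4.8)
The plan is to reduce to the positive-mass case already handled in Theorem~\ref{thmGSmass} by an approximation argument combining Lemma~\ref{lemSRC} with the compactness statement of Corollary~\ref{corIRbd}. Since $\eta$ vanishes on $\{|\V{m}|\le\sigma\}$, the renormalization and Gross-transformation objects behave gently in the infrared: the interaction kernel $f_\UV=\ee\omega^\mh\eta1_{\{|\V{m}|\le\UV\}}$ is square-integrable even at $\mu=0$, so that $\NV_{\Geb,\UV}$ and $\HV_{\Geb,\UV}$ are legitimate operators, and they are unitarily equivalent via the Gross transformation $G_{0,\UV}$ (Proposition~\ref{propGrosstrafo}) because $\beta_{0,\UV}=\beta_\UV\in L^2(\RR^3)$ under the standing infrared cutoff. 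Hence it suffices to produce a ground state of $\HV_{\Geb,\UV}$; the assertion for $\NV_{\Geb,\UV}$ then follows by applying $G_{0,\UV}^*$.

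First I would introduce a sequence of positive boson masses $\mu_\iota\downarrow0$, keeping $\ee$, $\eta$, and $\UV$ fixed, and let $\HV_\iota:=\HV_{\Geb,\UV}$ with mass $\mu_\iota$. By Theorem~\ref{thmGSmass}, each $\HV_\iota$ has a normalized ground state eigenvector $\Phi_\iota$ with eigenvalue $E_\iota:=E_{\Geb,\UV}$ (mass $\mu_\iota$). Next I would verify that $\HV_\iota\to\HV_{\Geb,\UV}$, $\iota\to\infty$, in the norm resolvent (hence strong resolvent) sense: this follows from Lemma~\ref{lempert} applied with $\hat\ee=\ee$, $\hat\eta=\eta$, but varying only the mass, once one observes that $d_\UV=\|(|\V{m}|^\eh\vee|\V{m}|^{\nf34})(\beta_\UV^{(\mu_\iota)}-\beta_\UV^{(0)})\|\to0$ by dominated convergence, using that the integrand is supported in $\sigma\le|\V{k}|\le\UV$ and $\omega_{\mu_\iota}^\mh(\omega_{\mu_\iota}+\V{m}^2/2)^{-1}\to|\V{m}|^\mh(|\V{m}|+\V{m}^2/2)^{-1}$ uniformly on that annulus. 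Then I would invoke Corollary~\ref{corIRbd} (with the constant sequences $\ee_\iota=\ee$, $\eta_\iota=\eta$, $\UV_\iota=\UV$ and masses $\mu_\iota\to0$) to extract a subsequence $\Phi_{\iota_k}$ converging in $L^2(\Geb,\sF)$ to some $\Phi$. Since convergence is in norm, $\|\Phi\|=1$, so $\Phi\neq0$; in particular the weak limit is non-zero, so Lemma~\ref{lemSRC} applies and yields $\Phi\in\dom(\HV_{\Geb,\UV})$, $\HV_{\Geb,\UV}\Phi=E\Phi$ with $E=\lim_kE_{\iota_k}=\inf\sigma(\HV_{\Geb,\UV})=E_{\Geb,\UV}$. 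This is the desired eigenvector of $\HV_{\Geb,\UV}$, and applying the unitary $G_{0,\UV}^*$ produces one for $\NV_{\Geb,\UV}$; the eigenvalues coincide since $E_{\Geb,\UV}=\inf\sigma(\NV_{\Geb,\UV})=\inf\sigma(\HV_{\Geb,\UV})$.

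The main obstacle is the verification of the hypotheses of Corollary~\ref{corIRbd}, but essentially all of the work has already been done: the spatial localization conditions \eqref{cpt2a}, \eqref{cpt2b} are trivial or follow from Remark~\ref{remcstlokbd} since $\Geb$ is bounded, while the boson-momentum conditions \eqref{cpt1a}, \eqref{cpt1b} are precisely what the pull-through formula \eqref{aki1} of Proposition~\ref{prop-IR} together with the uniform bounds \eqref{bdsB}--\eqref{carlotta} deliver; the key point is that these estimates are uniform in the mass because the $\mu$-dependence enters only through $\omega_{\mu_\iota}$ and $\beta_\UV^{(\mu_\iota)}$, which converge in $L^2(\RR^3)$ (indeed the sequences $\vr_\delta\omega_{\mu_\iota}^\eh\beta_\UV^{(\mu_\iota)}$ converge by dominated convergence), and through the ground state energies $E_\iota$, which converge by norm resolvent convergence and hence stay bounded. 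One small care point is that Corollary~\ref{corIRbd} as stated requires $\UV_\iota$ finite and the sequence to converge in $[0,\infty)$, which is satisfied here with the constant value $\UV$; and one should note that Proposition~\ref{prop-IR} requires finite $\UV$, which is part of the hypothesis of the present proposition, and the binding condition, which holds trivially for bounded $\Geb$ since then $\Geb_R=\emptyset$ for large $R$ and $\Sigma_{\Geb,\UV}=\infty>E_{\Geb,\UV}$.
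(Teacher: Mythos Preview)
Your overall architecture --- approximate by positive masses, apply Theorem~\ref{thmGSmass} to obtain ground states $\Phi_\iota$, extract a convergent subsequence via Corollary~\ref{corIRbd}, and conclude with Lemma~\ref{lemSRC} --- is exactly the paper's strategy, and your discussion of the compactness step and of the binding condition for bounded $\Geb$ is correct.

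The gap is in the resolvent convergence $\HV_\iota\to\HV_{\Geb,\UV}$. Lemma~\ref{lempert} is stated for \emph{fixed} boson mass: it compares two forms built with the same dispersion $\omega$ but different $\ee\eta$, so that only the interaction part $\mathfrak{v}_{\Geb,\UV}$ changes and \eqref{rbhtildeh} follows from \eqref{bdvmu}. When you vary $\mu$, the free-field term $\int_\Geb\|\Id\Gamma(\omega_\iota)^{\eh}\Psi\|^2\,\Id\V{x}$ changes as well, and this extra contribution $\langle\Psi,\Id\Gamma(\omega_\iota-\omega)\Psi\rangle$ is \emph{not} a small form perturbation of the massless form: from $0\le\omega_\iota-\omega\le\mu_\iota$ one gets only $\Id\Gamma(\omega_\iota-\omega)\le\mu_\iota\,\Id\Gamma(1)$, and $\Id\Gamma(1)$ is not form-bounded by $\Id\Gamma(|\V{m}|)$. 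Moreover the positive-mass form domains $\sQ_{\Geb}^{(\iota)}$ are strictly smaller than the massless one. Hence neither the hypothesis nor the conclusion of Lemma~\ref{lempert} carry over, and norm resolvent convergence does not follow from the argument you gave.

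The paper circumvents this with the replacement manoeuvre \eqref{repl}: $\ee$ and $\eta$ are rescaled along with $\mu_n$ so that the coupling function $f_{\UV,n}$ stays $n$-independent; then $\hv_n$ differs from $\hv_{\Geb,\UV}$ essentially only by the monotone term $\Id\Gamma(\omega_n)-\Id\Gamma(\omega)\ge0$, and monotone convergence of forms (together with the fact that the smaller domain $\sQ_{\Geb,1}$ is a core for the limit form) yields \emph{strong} resolvent convergence --- which is all Lemma~\ref{lemSRC} needs. Your route can be repaired along the same lines: separate the free-field change (monotone) from the interaction change (genuinely small in form sense, as it is supported on the compact annulus $\{\sigma\le|\V{m}|\le\UV\}$ thanks to the infrared cutoff), and combine the two to obtain strong resolvent convergence.
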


\begin{proof}
It only remains to treat the case $\mu=0$ and it suffices to consider $\HV_{\Geb,\UV}$, because
$\HV_{\Geb,\UV}$ and $\NV_{\Geb,\UV}$ are unitarily equivalent via the Gross transformation 
$G_{\sigma,\UV}$.

Let $\mu=0$, so that $\omega=|\V{m}|$, 
and let $\{\mu_n\}_{n\in\NN}$ be a monotone zero-sequence of strictly positive 
real numbers. For every $n\in\NN$, put $\omega_n(\V{k}):=(\V{k}^2+\mu_n^2)^\eh$,
$\V{k}\in\RR^3$, and let $\HV_n$ be the operator obtained by doing the following 
replacements in the construction of $\HV_{\Geb,\UV}$,
\begin{align}\label{repl}
\mu\mapsto\mu_n,\quad
\eta\mapsto\eta_n:=\frac{\sigma^\eh}{(\sigma^2+\mu_n^2)^{\nf{1}{4}}}
\frac{\omega_n^\eh}{|\V{m}|^\eh}\eta,
\quad\ee\mapsto\ee_n:=\ee\frac{(\sigma^2+\mu_n^2)^{\nf{1}{4}}}{\sigma^\eh}.
\end{align}
Let $f_{\UV,n}$ be the coupling function obtained after all these replacements.
Then $f_{\UV,n}$ is actually {\em $n$-independent} and always equal to
$\ee|\V{m}|^\mh\eta1_{\{|\V{m}|\le\UV\}}$. Defining ${\QGV}$ by means of
$\omega=|\V{m}|$ and setting
$\sQ_{\Geb,n}:=\dom(\mathfrak{t}_\Geb^+)\cap L^2(\Geb,\fdom(\Id\Gamma(\omega_n)))$, 
we have $\sQ_{\Geb,n}=\sQ_{\Geb,1}\subset{\QGV}$, for all $n\in\NN$, and the
quadratic form of $\HV_n$, call it $\hv_n$, is simply given by
\begin{align*}
\hv_n[\Psi]=\hv_{\Geb,\UV}[\Psi]
+\int_\Geb\|\Id\Gamma(\omega_n)^\eh\Psi(\V{x})\|^2\Id\V{x}-
\int_\Geb\|\Id\Gamma(\omega)^\eh\Psi(\V{x})\|^2,\quad\Psi\in\sQ_{\Geb,1}.
\end{align*}
In fact, the replacement manoeuvre \eqref{repl} is only necessary to argue that $\hv_n$ 
can be dealt with by our previous results; notice that $\eta_n\le1$. 
In particular, we know from Theorem~\ref{thmGSmass} that
$\inf\sigma(\HV_n)$ is an eigenvalue of $\HV_n$; let $\Phi_n$ be a corresponding normalized
eigenvector. By Corollary~\ref{corIRbd}, $\{\Phi_{n}\}_{n\in\NN}$ contains a convergent subsequence, 
call it $\{\Phi_{n_j}\}_{j\in\NN}$, whose limit, call it $\Phi_\infty$, is normalized, too, of course.
Furthermore, the monotone convergence of quadratic forms,
$\hv_n[\Psi]\downarrow\hv_{\Geb,\UV}[\Psi]$,
$\Psi\in\sQ_{\Geb,1}$, and the fact that $\sQ_{\Geb,1}$ is a core for $\hv_{\Geb,\UV}$
imply that $\HV_n\to\HV_{\Geb,\UV}$, $n\to\infty$, in the strong resolvent sense;
see \cite[Thm.~S.15 and Thm.~S.16]{ReedSimonI}.
Applying Lemma~\ref{lemSRC} to the subsequence $\{\Phi_{n_j}\}_{j\in\NN}$,
we see that $\Phi_\infty\in\dom(\HV_{\Geb,\UV})$ and
$\HV_{\Geb,\UV}\Phi_\infty=E_{\Geb,\UV}\Phi_\infty$.
\end{proof}

\begin{prop}[Ground states for bounded $\Geb$ and finite $\UV$]\label{propGSbd}
Let $\UV\in(0,\infty)$ and suppose that $\Geb$ is bounded.
Then $E_{\Geb,\UV}$ is an eigenvalue of $\HV_{\Geb,\UV}$.
\end{prop}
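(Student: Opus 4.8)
The plan is to remove the sharp infrared cutoff imposed in Proposition~\ref{propGSIRcutoff} by an approximation argument of the same flavour as the one used to prove that proposition, now shrinking the cutoff to zero. Fix a monotone zero sequence $\sigma_n\downarrow0$ with $0<\sigma_n<\UV$ for all $n\in\NN$, put
\begin{align*}
\eta_n&:=1_{\{|\V{m}|>\sigma_n\}}\eta,
\end{align*}
which is again measurable, even, and satisfies $0\le\eta_n\le1$ as well as $\eta_n=0$ on $\{|\V{m}|\le\sigma_n\}$, and let $\HV_n$ denote the operator obtained by using $\eta_n$ in place of $\eta$ (keeping $\mu$, $\ee$, and $\UV$ fixed) in the construction of $\HV_{\Geb,\UV}$. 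Since $\sigma_n\in(0,\UV)$, Proposition~\ref{propGSIRcutoff} applies to each $\HV_n$ and provides a normalized ground state eigenvector $\Phi_n$ of $\HV_n$ at the energy $E_n:=\inf\sigma(\HV_n)$.

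Next I would extract a convergent subsequence. Because $\eta_n\to\eta$ pointwise on $\RR^3\setminus\{0\}$, the constant sequences $\mu$, $\ee$, $\UV$ trivially converge, and $\Geb$ is bounded by hypothesis, Corollary~\ref{corIRbd} applies with data $(\ee,\eta_n,\UV,\HV_n)$, so some subsequence $\{\Phi_{n_j}\}_{j\in\NN}$ converges in $L^2(\Geb,\sF)$ to a limit $\Phi_\infty$; since each $\Phi_{n_j}$ is normalized and the convergence is in norm, $\|\Phi_\infty\|=1$, in particular $\Phi_\infty\neq0$.

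It remains to identify $\Phi_\infty$ as a ground state of $\HV_{\Geb,\UV}$. Applying Lemma~\ref{lempert} with $\hat{\ee}=\ee$ and $\hat{\eta}=\eta_n$, we have $\beta_{\UV}-\hat{\beta}_{\UV}^{(n)}=\ee\,1_{\{|\V{m}|\le\sigma_n\}}\eta\,\omega^{\mh}(\omega+\V{m}^2/2)^{-1}$, so the quantity $d_{\UV}^{(n)}$ from Lemma~\ref{lempert} is the $L^2(\RR^3)$-norm of the function $(|\V{m}|^{\eh}\vee|\V{m}|^{\nf{3}{4}})(\beta_{\UV}-\hat{\beta}_{\UV}^{(n)})$, which is supported in the shrinking ball $\{|\V{k}|\le\sigma_n\}$. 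Its modulus squared is integrable near the origin for every $\mu\ge0$ (this is precisely what the weights in Lemma~\ref{lempert}, and likewise in $b_{K,\UV}$, are designed to guarantee), so dominated convergence yields $d_{\UV}^{(n)}\to0$ as $n\to\infty$, and hence $\HV_n\to\HV_{\Geb,\UV}$ in the norm resolvent sense by Lemma~\ref{lempert}. Finally, Lemma~\ref{lemSRC} applied to the subsequence $\{\Phi_{n_j}\}_j$, with $A_j=\HV_{n_j}$, $a_j=E_{n_j}=\inf\sigma(\HV_{n_j})$, and $A=\HV_{\Geb,\UV}$, shows that $\lim_j E_{n_j}$ exists and equals $\inf\sigma(\HV_{\Geb,\UV})=E_{\Geb,\UV}$ and that $\Phi_\infty\in\dom(\HV_{\Geb,\UV})$ with $\HV_{\Geb,\UV}\Phi_\infty=E_{\Geb,\UV}\Phi_\infty$, which proves the proposition.

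The only genuinely non-formal point is the verification $d_{\UV}^{(n)}\to0$, i.e. the local integrability near $\V{k}=0$ of $(|\V{k}|\vee|\V{k}|^{\nf{3}{2}})\,\omega(\V{k})^{-1}(\omega(\V{k})+\V{k}^2/2)^{-2}$ in the massless case $\mu=0$: there $\omega=|\V{m}|$ and this density is comparable to $|\V{k}|^{-1}$ near the origin, which is integrable against $\Id\V{k}$, so the dominated convergence argument goes through; for $\mu>0$ the density is even bounded near the origin. Everything else is a direct reuse of Proposition~\ref{propGSIRcutoff}, Corollary~\ref{corIRbd}, Lemma~\ref{lempert}, and Lemma~\ref{lemSRC}.
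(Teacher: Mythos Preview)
Your proof is correct and follows essentially the same route as the paper: introduce the infrared-cut-off modifications $\eta_n=1_{\{|\V{m}|>\sigma_n\}}\eta$, use Proposition~\ref{propGSIRcutoff} to obtain ground states $\Phi_n$, extract a convergent subsequence via Corollary~\ref{corIRbd}, get norm resolvent convergence from Lemma~\ref{lempert}, and conclude with Lemma~\ref{lemSRC}. One harmless slip: in the massless case the density $(|\V{k}|\vee|\V{k}|^{3/2})\,|\V{k}|^{-1}(|\V{k}|+\V{k}^2/2)^{-2}$ behaves like $|\V{k}|^{-2}$ near the origin, not $|\V{k}|^{-1}$, but this is still integrable against $\Id\V{k}$ in three dimensions, so your dominated-convergence argument for $d_{\UV}^{(n)}\to0$ goes through unchanged.
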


\begin{proof}
Let $\HV_n$ denote the operator obtained upon putting $1_{\{|\V{m}|>1/n\}}\eta$ in place
of $\eta$ in the definition of $\HV_{\Geb,\UV}$. Then Lemma~\ref{lempert} implies that 
$\HV_n\to\HV_{\Geb,\UV}$ in norm resolvent sense, as $n\to\infty$. 
Invoking Proposition~\ref{propGSIRcutoff} we further find a
normalized ground state eigenvector $\Phi_n$ of $\HV_n$, for every $n\in\NN$. 
By Corollary~\ref{corIRbd}, $\{\Phi_n\}_{n\in\NN}$ contains a converging subsequence and we conclude
by applying Lemma~\ref{lemSRC} to that subsequence.
\end{proof}

In the next step we approximate an unbounded $\Geb$ by bounded open subsets.
To this end let us recall that we defined the operators $\wt{T}_{\Geb,\UV,t}$ on
$L^2(\RR^3,\sF)$ with the convention that a given $\Psi\in L^2(\Geb',\sF)$, defined
on an open subset $\Geb'\subset\RR^3$ not necessarily equal to $\Geb$, 
is first extended to $\RR^3$ by $0$ before we apply $\wt{T}_{\Geb,\UV,t}$ to it.
For later reference we further note that
\begin{align}\label{L2LpGeb}
\sup_{\UV\in[0,\infty]}\sup_{\cG\subset\RR^3\,\mathrm{open}}\sup\big\{
\|\wt{T}_{\cG,\UV,t}\Psi\|_p\,\big|\:\Psi\in L^2(\RR^3,\sF),\,\|\Psi\|_2\le1\big\}<\infty,
\end{align}
for all $t>0$ and $p\in[2,\infty]$, as an immediate consequence of Proposition~\ref{propGSmu}(2)
applied to $\HV_{\Geb,\UV}$. We also need a final technical lemma before we can continue
our construction of ground states:

\begin{lem}\label{lemconvEGebn}
Assume that $\Geb$ is unbounded, and pick open sets $\Geb_n\subset\Geb$, $n\in\NN$, such that
$\emptyset\not=\Geb_1\subset\Geb_2\subset\Geb_3\subset\ldots\:$, such that
$\bigcup_{n\in\NN}\Geb_n=\Geb$, and such that every compact subset of $\Geb$ 
is contained in some $\Geb_n$. Let $t>0$, $p\in[2,\infty)$, and $\Psi\in L^2(\RR^3,\sF)$. Then
\begin{align}\label{convTGebn}
\lim_{n\to\infty}\sup_{\UV\in[0,\infty]}
\|\wt{T}_{\cG_n,\UV,t}\Psi- \wt{T}_{\Geb,\UV,t}\Psi\|_{p}&=0,
\end{align}
where the norm is the one on $L^p(\RR^3,\sF)$.
\end{lem}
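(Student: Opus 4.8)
The plan is to read the difference $\wt T_{\Geb_n,\UV,t}\Psi-\wt T_{\Geb,\UV,t}\Psi$ off the Feynman--Kac formula \eqref{defwtTGL} and to quarantine all $\UV$-dependence inside the $\UV$-uniform moment bounds \eqref{kashmir} and \eqref{WinLp}, leaving a purely geometric smallness that can be handled by dominated convergence. First I would observe that $\Geb_n\subset\Geb$ forces $\Geb_n^c\supset\Geb^c$ and hence $\tau_{\Geb_n}(\V{x})\le\tau_{\Geb}(\V{x})$ pointwise on $\Omega$ for the first entry times \eqref{firstentryBGebc}; consequently $1_{\{\tau_{\Geb}(\V{x})>t\}}-1_{\{\tau_{\Geb_n}(\V{x})>t\}}=1_{\{\tau_{\Geb_n}(\V{x})\le t<\tau_{\Geb}(\V{x})\}}$, a non-negative measurable random variable. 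Inserting this into \eqref{defwtTGL}, using the triangle inequality for Bochner integrals and $\|\wt W_{\UV,t}(\V{x})^*\psi\|\le\|\wt W_{\UV,t}(\V{x})^*\|\,\|\psi\|$, gives for a.e.\ $\V{x}\in\RR^3$ the pointwise estimate $\|(\wt T_{\Geb,\UV,t}\Psi)(\V{x})-(\wt T_{\Geb_n,\UV,t}\Psi)(\V{x})\|_{\sF}\le\EE\big[1_{\{\tau_{\Geb_n}(\V{x})\le t<\tau_{\Geb}(\V{x})\}}\,e^{-\int_0^tV(\V{B}^{\V{x}}_s)\Id s}\,\|\wt W_{\UV,t}(\V{x})^*\|\,\|\Psi(\V{B}^{\V{x}}_t)\|_{\sF}\big]$.

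Then I would apply H\"older's inequality in $(\Omega,\PP)$ with exponents $4$, $4$, $2$, placing the indicator in the first slot, $e^{-\int_0^tV}\|\wt W_{\UV,t}(\V{x})^*\|$ in the second, and $\|\Psi(\V{B}^{\V{x}}_t)\|_{\sF}$ in the third. Since the indicator is idempotent, the first factor becomes $\PP\big(\tau_{\Geb_n}(\V{x})\le t<\tau_{\Geb}(\V{x})\big)^{1/4}=:\pi_n(\V{x})^{1/4}$; a further Cauchy--Schwarz step together with \eqref{kashmir} and \eqref{WinLp}, whose constants do not depend on $\UV$, bounds the second factor by a finite constant $\kappa_t$ independent of $n$, $\V{x}$ and $\UV$; and the third factor is $\EE[\|\Psi(\V{B}^{\V{x}}_t)\|_{\sF}^2]^{1/2}$. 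Raising to the $p$-th power and integrating in $\V{x}$ produces the $\UV$-free bound $\sup_{\UV\in[0,\infty]}\|\wt T_{\Geb_n,\UV,t}\Psi-\wt T_{\Geb,\UV,t}\Psi\|_p^p\le\kappa_t^p\int_{\RR^3}\pi_n(\V{x})^{p/4}\,\EE[\|\Psi(\V{B}^{\V{x}}_t)\|_{\sF}^2]^{p/2}\Id\V{x}$.

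Finally I would let $n\to\infty$ in this bound by dominated convergence. The weight $\V{x}\mapsto\EE[\|\Psi(\V{B}^{\V{x}}_t)\|_{\sF}^2]=(e^{t\Delta/2}\|\Psi(\cdot)\|_{\sF}^2)(\V{x})$ has its $(p/2)$-th power in $L^1(\RR^3)$ because $\|\Psi(\cdot)\|_{\sF}^2\in L^1(\RR^3)$ and $e^{t\Delta/2}$ maps $L^1(\RR^3)$ continuously into $L^{p/2}(\RR^3)$ -- this is the one place where the hypothesis $p<\infty$ enters -- and, as $0\le\pi_n\le1$, it dominates the integrand uniformly in $n$. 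To see that $\pi_n(\V{x})\to0$ for each fixed $\V{x}$: if $\V{x}\notin\Geb$ then $\tau_{\Geb}(\V{x})=0$ and $\pi_n(\V{x})=0$; if $\V{x}\in\Geb$ and $\gamma\in\Omega$ satisfies $\tau_{\Geb}(\V{x})(\gamma)>t$, then the continuous path $\{\V{B}^{\V{x}}_s(\gamma)\,|\,0\le s\le t\}$ is a compact subset of the open set $\Geb$, hence contained in some $\Geb_{n_0}$, so by continuity of the path and openness of $\Geb_{n_0}$ one has $\tau_{\Geb_{n_0}}(\V{x})(\gamma)>t$ and thus $\tau_{\Geb_n}(\V{x})(\gamma)>t$ for all $n\ge n_0$; on $\{\tau_{\Geb}(\V{x})\le t\}$ the indicator vanishes for every $n$. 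Hence $1_{\{\tau_{\Geb_n}(\V{x})\le t<\tau_{\Geb}(\V{x})\}}\to0$ boundedly on $\Omega$, so $\pi_n(\V{x})\to0$, and a second appeal to dominated convergence yields \eqref{convTGebn}. The only genuine difficulty is the uniformity in $\UV$; it is resolved by the H\"older splitting in the second step, which isolates every $\UV$-dependent quantity inside the bounds \eqref{kashmir} and \eqref{WinLp} already known to be $\UV$-uniform, whereas the exhaustion argument for $\pi_n\to0$ and the heat-kernel smoothing are routine.
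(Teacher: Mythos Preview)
Your proof is correct and follows essentially the same route as the paper: both arguments separate the indicator $1_{\{\tau_{\Geb_n}(\V{x})\le t\}}-1_{\{\tau_{\Geb}(\V{x})\le t\}}$ from the $\UV$-dependent factors via H\"older, absorb the latter into the $\UV$-uniform moment bounds \eqref{kashmir} and \eqref{WinLp}, use the heat-kernel smoothing $e^{t\Delta/2}:L^1\to L^{p/2}$ to place the $\Psi$-term in $L^1(\RR^3)$, and then invoke dominated convergence twice together with the compactness-of-paths observation to get $\pi_n(\V{x})\to0$. The only cosmetic difference is the choice of H\"older exponents (you use $4,4,2$ followed by Cauchy--Schwarz, the paper effectively uses $6,6,6,2$), which is immaterial.
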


\begin{proof}
Let $\V{x}\in\RR^3$ and $\tau_{\Geb_n}(\V{x})$ be the first entry time of $\V{B}^{\V{x}}$ into 
$\Geb_n^c$. Let $\gamma\in\Omega$. Then the image of the path 
$\{\V{B}^{\V{x}}_s(\gamma):s\in[0,t]\}$ up to time $t$ is compact. Hence, it is contained in $\Geb$, 
if and only if it is contained in every $\Geb_n$ with $n\ge n_\gamma$, for some $n_\gamma\in\NN$.
This implies that $1_{\{\tau_{\Geb_n}(\V{x})>t\}}\to1_{\{\tau_{\Geb}(\V{x})>t\}}$ on $\Omega$,
as $n\to\infty$. Thus, $\EE[|1_{\{\tau_{\Geb_n}(\V{x})>t\}}-1_{\{\tau_{\Geb}(\V{x})>t\}}|^6]\to0$
by dominated convergence. Let $\Psi\in L^2(\RR^3,\sF)$. 
Employing H\"{o}lder's inequality, \eqref{kashmir},
and \eqref{WinLp} similarly as in the proof of Proposition~\ref{propGSmu}(2) we then find
\begin{align*}
\|\wt{T}_{\cG_n,\UV,t}\Psi- \wt{T}_{\Geb,\UV,t}\Psi\|_{p}
\le\sup_{\V{y}\in\RR^3}\EE\Big[e^{-6\int_0^tV(\V{B}_s^{\V{y}})\Id s}\Big]^{\nf{1}{6}}
\sup_{\V{z}\in\RR^3}\EE[\|\wt{W}_{\UV,t}(\V{z})\|^6]^{\nf{1}{6}}&
\\
\cdot\bigg(\int_{\RR^3}
\EE\big[|1_{\{\tau_{\Geb_n}(\V{x})>t\}}-1_{\{\tau_{\Geb}(\V{x})>t\}}|^6\big]^{\nf{p}{6}}
(e^{t\Delta/2}\|\Psi(\cdot)\|^2)(\V{x})^{\nf{p}{2}}\Id\V{x}\bigg)^{\nf{1}{p}}&.
\end{align*}
We recall that $e^{t\Delta/2}$ with $t>0$ maps $L^1(\RR^3)$ 
continuously into $L^{\nf{p}{2}}(\RR^3)$. Therefore, the right hand side of the previous estimation 
goes to zero as $n\to\infty$ by dominated convergence. The convergence is uniform
in $\UV\in[0,\infty]$ on account of \eqref{WinLp} where the constants are $\UV$-independent.
\end{proof}

\begin{prop}[Ground states for finite $\UV$]\label{propGSHfin}
Let $\UV\in(0,\infty)$ and assume that the binding condition \eqref{bindcond} is fulfilled.
Then $E_{\Geb,\UV}$ is an eigenvalue of $\HV_{\Geb,\UV}$.
\end{prop}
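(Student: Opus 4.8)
The plan is to exhaust the unbounded set $\Geb$ by bounded open subsets and to carry ground states along this exhaustion by combining the compactness result of Cor.~\ref{corIRubd2} with the abstract eigenvalue lemma Lem.~\ref{lemSRC}. First I would pick open sets $\Geb_n\subset\Geb$ with $\emptyset\not=\Geb_1\subset\Geb_2\subset\cdots$, $\bigcup_n\Geb_n=\Geb$, each $\Geb_n$ bounded, and such that every compact subset of $\Geb$ lies in some $\Geb_n$ (for instance $\Geb_n:=\{\V{x}\in\Geb:|\V{x}|<n,\ \dist(\V{x},\Geb^c)>1/n\}$). Since $\UV$ is finite and $\Geb_n$ is bounded, Prop.~\ref{propGSbd} supplies a normalized ground state eigenvector $\Phi_n$ of $\HV_{\Geb_n,\UV}$, which we view as an element of $L^2(\Geb,\sF)$ by extension by $0$. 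The binding condition \eqref{bindcond} for $\Geb$ is precisely the hypothesis of Cor.~\ref{corIRubd2}, so after passing to a subsequence we may assume $\Phi_n\to\Phi_\infty$ in $L^2(\Geb,\sF)$; since $\|\Phi_\infty\|=\lim_n\|\Phi_n\|=1$, the limit is nonzero.

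It then remains to identify $\Phi_\infty$ as a ground state of $\HV_{\Geb,\UV}$. For this I would realise the operators $\HV_{\Geb_n,\UV}$ on the single Hilbert space $L^2(\Geb,\sF)$: fix $c_0>\sup_n E_{\Geb_n,\UV}$ (finite since $E_{\Geb_n,\UV}\downarrow E_{\Geb,\UV}$ by Lem.~\ref{lemconvEGebn0}) and let $A_n$ act as $\HV_{\Geb_n,\UV}$ on the closed subspace $L^2(\Geb_n,\sF)$ and as $c_0\id$ on its orthogonal complement, so that $\inf\sigma(A_n)=E_{\Geb_n,\UV}$ and $\Phi_n\in\dom(A_n)$ with $A_n\Phi_n=E_{\Geb_n,\UV}\Phi_n$. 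Writing $P_n$ for the orthogonal projection onto $L^2(\Geb_n,\sF)$, the Feynman-Kac formula of Thm.~\ref{thmFKD} gives $e^{-tA_n}=\wt T_{\Geb_n,\UV,t}+e^{-c_0t}(\id-P_n)$ on $L^2(\Geb,\sF)$ and $e^{-t\HV_{\Geb,\UV}}=\wt T_{\Geb,\UV,t}$, while Lem.~\ref{lemconvEGebn} (with $p=2$) yields $\wt T_{\Geb_n,\UV,t}\Psi\to\wt T_{\Geb,\UV,t}\Psi$ for every $\Psi$ and $t>0$. Since $P_n\to\id$ strongly, feeding these convergences into the Laplace representation $(A_n+\zeta)^{-1}=\int_0^\infty e^{-\zeta t}e^{-tA_n}\,\Id t$ (for large $\Re\zeta$) and dominating the $t$-integral by the uniform-in-$n$ semigroup bound contained in \eqref{pernille0} (or \eqref{L2LpGeb}), I get $(A_n+\zeta)^{-1}\to(\HV_{\Geb,\UV}+\zeta)^{-1}$ strongly, i.e.\ $A_n\to\HV_{\Geb,\UV}$ in the strong resolvent sense.

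Finally I would apply Lem.~\ref{lemSRC} with $a_n=E_{\Geb_n,\UV}=\inf\sigma(A_n)$, $\phi_n=\Phi_n$, and the nonzero weak limit $\Phi_\infty$: it yields $\Phi_\infty\in\dom(\HV_{\Geb,\UV})$, $\HV_{\Geb,\UV}\Phi_\infty=a\Phi_\infty$ with $a:=\lim_n E_{\Geb_n,\UV}$, and $a=\inf\sigma(\HV_{\Geb,\UV})$; by Lem.~\ref{lemconvEGebn0}, $a=E_{\Geb,\UV}$, so $E_{\Geb,\UV}$ is an eigenvalue of $\HV_{\Geb,\UV}$. I do not expect a serious obstacle here: the substantive ingredients---the spatial localization estimates behind Cor.~\ref{corIRubd2}, the pull-through identity of Prop.~\ref{prop-IR}, and the Feynman-Kac monotonicity of Lem.~\ref{lemconvEGebn}---are already in place, and the only genuine care needed is the extension-by-zero bookkeeping handled via the auxiliary operators $A_n$ above; the step is essentially an assembly of Prop.~\ref{propGSbd}, Cor.~\ref{corIRubd2}, Lem.~\ref{lemconvEGebn}, Lem.~\ref{lemconvEGebn0}, and Lem.~\ref{lemSRC}.
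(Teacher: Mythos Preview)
Your proposal is correct and uses the same skeleton as the paper: exhaust $\Geb$ by bounded open sets, invoke Prop.~\ref{propGSbd} on each $\Geb_n$, extract a convergent subsequence via Cor.~\ref{corIRubd2}, and pass to the limit using the Feynman--Kac convergence of Lem.~\ref{lemconvEGebn} together with Lem.~\ref{lemconvEGebn0}.

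The only noteworthy difference is in the final identification step. You embed the $\HV_{\Geb_n,\UV}$ into the common space $L^2(\Geb,\sF)$ via auxiliary operators $A_n=\HV_{\Geb_n,\UV}\oplus c_0\id$, verify strong resolvent convergence $A_n\to\HV_{\Geb,\UV}$ through the Laplace representation of the resolvent, and then apply Lem.~\ref{lemSRC}. The paper bypasses this machinery entirely: since $\Phi_{n_j}=e^{tE_{\Geb_{n_j},\UV}}\wt T_{\Geb_{n_j},\UV,t}\Phi_{n_j}$ for every $t>0$, one passes to the limit on both sides directly (using \eqref{L2LpGeb} to control $\wt T_{\Geb_{n_j},\UV,t}(\Phi_{n_j}-\Phi)$ and \eqref{convTGebn} for $\wt T_{\Geb_{n_j},\UV,t}\Phi-\wt T_{\Geb,\UV,t}\Phi$) to obtain $\Phi=e^{-t(\HV_{\Geb,\UV}-E_{\Geb,\UV})}\Phi$ for all $t>0$, and concludes by spectral calculus. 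Your route works, but the direct semigroup argument is shorter and avoids both the $A_n$ construction and Lem.~\ref{lemSRC}.
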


\begin{proof}
With Proposition~\ref{propGSbd} in mind we assume without loss of generality that $\Geb$ is unbounded.
Let $\Geb_n$, $n\in\NN$, be bounded and have all properties postulated in the statement of 
Lemma~\ref{lemconvEGebn}. By virtue of Proposition~\ref{propGSbd} we can, for every
$n\in\NN$, find a normalized ground state eigenvector of $\HV_{\Geb_n,\UV}$; 
we denote by $\Phi_n$ its extension by $0$ to $\Geb$. Thanks to the binding condition
and Corollary~\ref{corIRubd2} we know that $\{\Phi_n\}_{n\in\NN}$ contains a subsequence
converging in $L^2(\Geb,\sF)$, say $\{\Phi_{n_j}\}_{j\in\NN}$.
The relations \eqref{convEGebn}, \eqref{L2LpGeb}, \eqref{convTGebn}, 
and the Feynman-Kac formulas \eqref{FKDfor} for $\Geb_{n_j}$ and $\Geb$ now imply that
\begin{align*}
\Phi:=\lim_{j\to\infty}\Phi_{n_j}=\lim_{j\to\infty}e^{tE_{\Geb_{n_j},\UV}}
\wt{T}_{\Geb_{n_j},\UV,t}\Phi_{n_j}=e^{-t(\HV_{\Geb,\UV}-E_{\Geb,\UV})}\Phi,
\end{align*}
for every $t>0$. The claim now follows from the spectral calculus.
\end{proof}

Finally, we remove the ultraviolet cutoff in our existence results:

\begin{thm}[Ground states for $\HV_{\Geb,\UV}$; general case]\label{thmGSH}
Let $\UV\in[0,\infty]$ and assume that the binding condition \eqref{bindcond} is  fulfilled. 
Then $E_{\Geb,\UV}$ is an eigenvalue of $\HV_{\Geb,\UV}$. 
\end{thm}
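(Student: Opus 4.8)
The plan is to reduce the assertion to the case of a finite ultraviolet cutoff, which has already been settled. For $\UV\in(0,\infty)$ the claim is precisely Prop.~\ref{propGSHfin}, so we may assume $\UV=\infty$ together with the binding condition $\Sigma_{\Geb,\infty}>E_{\Geb,\infty}$. Fix any sequence $\{\UV_n\}_{n\in\NN}$ of finite ultraviolet cutoffs with $\UV_n\to\infty$, and keep the boson mass $\mu$, the coupling constant $\ee$, and the function $\eta$ fixed. I would then apply Cor.~\ref{corIRubd1} with the constant choices $\ee_\iota:=\ee$, $\eta_\iota:=\eta$, $\UV_\iota:=\UV_n$, so that $\HV_\iota$ becomes $\HV_{\Geb,\UV_n}$ and $\HV_\infty$ becomes $\HV_{\Geb,\infty}$. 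Part~(1) of that corollary supplies some $J\in\NN$ such that $\Sigma_{\Geb,\UV_n}>E_{\Geb,\UV_n}$ for all $n\ge J$; hence, by Prop.~\ref{propGSHfin}, the operator $\HV_{\Geb,\UV_n}$ admits a normalized ground state eigenvector $\Phi_n$ for every such $n$. Part~(2) then yields a subsequence $\{\Phi_{n_j}\}_{j\in\NN}$ converging in $L^2(\Geb,\sF)$; denote its limit by $\Phi_\infty$, which is again normalized, hence non-zero.

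It then remains to identify $\Phi_\infty$ as a ground state of $\HV_{\Geb,\infty}$. By Thm.~\ref{thmrbUV}(3) one has $\HV_{\Geb,\UV_n}\to\HV_{\Geb,\infty}$ in norm resolvent sense, hence in the strong resolvent sense, and convergence of the spectra gives $E_{\Geb,\UV_n}\to E_{\Geb,\infty}$. Applying Lem.~\ref{lemSRC} to the convergent subsequence $\{\Phi_{n_j}\}_{j\in\NN}$ with $A_j:=\HV_{\Geb,\UV_{n_j}}$, $A:=\HV_{\Geb,\infty}$, and $a_j:=E_{\Geb,\UV_{n_j}}$, one concludes $\Phi_\infty\in\dom(\HV_{\Geb,\infty})$ and $\HV_{\Geb,\infty}\Phi_\infty=E_{\Geb,\infty}\Phi_\infty$, the last clause of Lem.~\ref{lemSRC} confirming that $E_{\Geb,\infty}=\inf\sigma(\HV_{\Geb,\infty})$. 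Alternatively, one may pass to the limit in the Feynman--Kac identity $\Phi_{n_j}=e^{tE_{\Geb,\UV_{n_j}}}\wt{T}_{\Geb,\UV_{n_j},t}\Phi_{n_j}$, as in the proof of Prop.~\ref{propGSHfin}, using the strong convergence of $\wt{T}_{\Geb,\UV_n,t}$ implied by the norm resolvent convergence together with \eqref{L2LpGeb}.

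The only non-routine ingredient is the compactness statement in Cor.~\ref{corIRubd1}, whose proof rests on the pull-through formula of Prop.~\ref{prop-IR} and the exponential localization bounds of Lem.~\ref{lemeva}; both are legitimately available here since they are applied to the finite-cutoff eigenvectors $\Phi_n$ only, where $\UV_n<\infty$. The boson-momentum tightness conditions \eqref{cpt1a}--\eqref{cpt1b} and the spatial conditions \eqref{cpt2a}--\eqref{cpt2b} of Prop.~\ref{propcpt} then hold uniformly in $n\ge J$ by the $\UV$-uniform bounds derived in the proof of Cor.~\ref{corIRubd1}; this uniformity, obtained via Lem.~\ref{lemeva} with the fixed localization radius $R$, is the place where the binding condition is genuinely needed, and I expect it to be the main point of the argument.
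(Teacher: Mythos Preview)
Your proposal is correct and follows essentially the same approach as the paper's proof: reduce to finite $\UV$ via Prop.~\ref{propGSHfin}, apply Cor.~\ref{corIRubd1} (with constant $\ee_\iota=\ee$, $\eta_\iota=\eta$) to obtain both the binding conditions at finite cutoff and the compactness of the approximating eigenvectors, then conclude with Lem.~\ref{lemSRC} and the norm resolvent convergence from Thm.~\ref{thmrbUV}(3). Your additional commentary on where the binding condition enters and the alternative Feynman--Kac route is accurate but goes beyond what the paper records.
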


\begin{proof}
In view of Proposition~\ref{propGSHfin} it suffices to consider $\UV=\infty$.
Pick any increasing sequence $\Lambda_n\uparrow\infty$, $n\to\infty$.
Then $\HV_{\Geb,\Lambda_n}\to\HV_{\Geb,\infty}$, $n\to\infty$, in the norm
resolvent sense and Corollary~\ref{corIRubd1} ensures that the binding conditions
$\Sigma_{\Geb,\UV_n}>E_{\Geb,\UV_n}$ hold, for all $n\ge n_0$ and some $n_0\in\NN$.
By Proposition~\ref{propGSHfin} every $\HV_{\Geb,\UV_n}$ with $n\ge n_0$ has a normalized ground state
eigenvector, say $\Phi_n$. Again from Corollary~\ref{corIRubd1} we infer that $\{\Phi_n\}_{n\ge n_0}$ 
contains a converging subsequence. We conclude with the help of Lemma~\ref{lemSRC}.
\end{proof}

\begin{thm}[Ground states for $\NV_{\Geb,\UV}$; general case]\label{thmGSN}
Let $\UV\in[0,\infty]$ and assume that the binding condition \eqref{bindcond} and the
infrared condition $\omega^{-\nf{3}{2}}\eta\in L^1_\loc(\RR^3)$ are fulfilled. 
Then $E_{\Geb,\UV}$ is an eigenvalue of $\NV_{\Geb,\UV}$.
\end{thm}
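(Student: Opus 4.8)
The plan is to deduce the statement from Thm.~\ref{thmGSH} by carrying its conclusion back through the Gross transformation. The role of the infrared hypothesis is exactly to make $\beta_{\UV}=\beta_{0,\UV}$ square-integrable for \emph{every} $\UV\in[0,\infty]$: for finite $\UV$ the cutoff $1_{\{|\V{m}|\le\UV\}}$ kills any large-momentum issue, while near $\V{k}={0}$ one has $|\beta_{0,\UV}(\V{k})|^{2}\sim\ee^{2}\omega(\V{k})^{-3}\eta(\V{k})^{2}$, so $\beta_{0,\UV}\in L^{2}(\RR^{3})$ there amounts precisely to $\omega^{-3}\eta^{2}\in L^{1}_{\loc}(\RR^{3})$ (equivalently, to the condition for $G_{0,\UV}$ to be defined noted after \eqref{deffbeta}), which for $\mu>0$ holds automatically. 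Hence the Gross transformation $G_{0,\UV}\in\LO(L^{2}(\Geb,\sF))$ of \eqref{defGrosstrafo} is a well-defined unitary operator for all $\UV\in[0,\infty]$.

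First I would record that $G_{0,\UV}$ intertwines $\NV_{\Geb,\UV}$ and $\HV_{\Geb,\UV}$. For finite $\UV$ this is \eqref{trafoNelsonHamUV} of Prop.~\ref{propGrosstrafo} taken with $K=0$, namely $G_{0,\UV}\NV_{\Geb,\UV}G_{0,\UV}^{*}=\HV_{\Geb,0,\UV}=\HV_{\Geb,\UV}$; for $\UV=\infty$ it is the relation $\NV_{\Geb,\infty}=G_{0,\infty}^{*}\HV_{\Geb,0,\infty}G_{0,\infty}$ recorded in Rem.~\ref{remdefrenNelson} (this is \eqref{defNelsonHam} with $K=0$, legitimate under the infrared hypothesis). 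Consequently $\NV_{\Geb,\UV}$ and $\HV_{\Geb,\UV}$ are unitarily equivalent, so they have the same spectrum — in particular $E_{\Geb,\UV}$ is the infimum for both — and $E_{\Geb,\UV}$ is an eigenvalue of one exactly when it is an eigenvalue of the other, with eigenvectors corresponding via $G_{0,\UV}^{*}$.

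It then remains only to apply Thm.~\ref{thmGSH}. The binding condition \eqref{bindcond} is phrased entirely through the bottoms of the spectra of the Nelson operators on $\Geb$ and on the exterior regions $\Geb_{R}$, and these quantities are identical for $\NV$ and $\HV$; thus the hypothesis of the present theorem is literally the hypothesis of Thm.~\ref{thmGSH}, which gives that $E_{\Geb,\UV}$ is an eigenvalue of $\HV_{\Geb,\UV}$ and hence, by the previous paragraph, of $\NV_{\Geb,\UV}$. I do not expect a genuine obstacle here: all the analytic work — the hypercontractivity input, the spatial localization estimates of Subsect.~\ref{ssecexploc}, the pull-through formula of Subsect.~\ref{ssecIR}, and the successive approximation/compactness arguments of Subsect.~\ref{ssecGSgen} — is already packaged into Thm.~\ref{thmGSH}. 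The one point I would verify with care is the equivalence of the infrared hypothesis with $\beta_{0,\UV}\in L^{2}(\RR^{3})$ for all $\UV\in[0,\infty]$, since if this failed at $\UV=\infty$ one would be in the genuinely non-Fock situation of Rem.~\ref{remdefnonFockNelson} — still covered for $\HV$ by Thm.~\ref{thmGSH}, but no longer transferable to $\NV$, in accordance with the absence result Thm.~\ref{introthmabs}.
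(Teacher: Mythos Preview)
Your proposal is correct and follows exactly the paper's approach: the infrared condition ensures $\beta_{0,\UV}\in L^2(\RR^3)$ for every $\UV\in[0,\infty]$, so the Gross transformation $G_{0,\UV}$ furnishes a unitary equivalence between $\NV_{\Geb,\UV}$ and $\HV_{\Geb,\UV}$ (via Prop.~\ref{propGrosstrafo} for finite $\UV$ and \eqref{defNelsonHam} with $K=0$ for $\UV=\infty$), and the result then follows from Thm.~\ref{thmGSH}. The paper's own proof is a two-line version of precisely this argument.
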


\begin{proof}
Under the given infrared condition $\NV_{\Geb,\UV}$ is unitarily equivalent to
$\HV_{\Geb,\UV}$ via the Gross transformation $G_{0,\UV}$. Therefore, the claim
follows from Theorem~\ref{thmGSH}.
\end{proof}


\subsection{Continuity Properties of Ground States}\label{sseccontGS}

\noindent
As mentioned in the introduction, the compactness argument employed repeatedly in the
previous subsection can also be used to study the $L^2$-continuity of ground state eigenvectors of 
the renormalized Hamiltonians with respect to parameters like $\ee$. Before we do this we have, 
however, to extend the crucial formulas of Proposition~\ref{prop-IR} to the case $\UV=\infty$. In particular,
we have to give a meaning to the annihilation operators corresponding to the components of
$e_{\V{x}}\V{m}\beta_\infty$, which are not in $L^2(\RR^3)$.

In fact, the definition of the ``pointwise" annihilation operator in the beginning of
Subsection~\ref{ssecIR} suggests the following extended definition of the ``smeared"
annihilation operator: Let $f\in L^2(\RR^3,\omega^{-1}\Id\V{k})$ and 
$\Psi\in\dom(a_\omega(f)):=\fdom(\Id\Gamma(\omega))$. Since 
$a\Psi\in L^2(\RR^3,L^2(\Geb,\sF);\omega^\eh\Id\V{k})$,
the following $L^2(\Geb,\sF)$-valued Bochner-Lebesgue integral exists,
\begin{align*}
a_\omega(f)\Psi:=\int_{\RR^3}\bar{f}(\V{k})(a\Psi)(\V{k})\Id\V{k}.
\end{align*} 
As soon as $f\in\dom(\omega^\mh)\subset L^2(\RR^3)$, we then have 
$a_\omega(f)=a(f)\!\!\upharpoonright_{\fdom(\Id\Gamma(\omega))}$. (This is always the case
if $\inf\omega>0$, of course.) In view of \eqref{fordGak} and the Cauchy-Schwarz inequality the
familiar relative bound for the annihilation operator is still satisfied,
\begin{align}\label{rbaomega}
\|a_\omega(f)\Psi\|&\le\|\omega^\mh f\|_{L^2(\RR^3)}\|\Id\Gamma(\omega)^\eh\Psi\|,
\quad\Psi\in\fdom(\Id\Gamma(\omega)).
\end{align}
This gives in particular a meaning to the components of
\begin{align*}
a_\omega(e_{\V{x}}\V{m}\beta_{\infty}):=
\big(a_\omega(e_{\V{x}}{m_1}\beta_{\infty}),a_\omega(e_{\V{x}}{m_2}\beta_{\infty}),
a_\omega(e_{\V{x}}{m_3}\beta_{\infty})\big),
\end{align*}
which are well-defined operators whose domains equal $\fdom(\Id\Gamma(\omega))$.
The new notation is very convenient to state the following extension of Proposition~\ref{prop-IR}.

\begin{prop}\label{propIRinfty}
Assume that the binding condition $\Sigma_{\Geb,\infty}>E_{\Geb,\infty}$ is fulfilled
and let $\Phi_{\Geb,\infty}$ be a ground state eigenvector of $\HV_{\Geb,\infty}$.
Then, for a.e. $\V{k}\in\RR^3\setminus\{0\}$, the formula \eqref{aki1} is still valid for $\UV=\infty$,
provided that $a_\omega(e_{\hat{\V{x}}}\V{m}\beta_{\infty})$ is put in place of
$a(e_{\hat{\V{x}}}\V{m}\beta_{\UV})$ in the definitions \eqref{defUpsilon2} and \eqref{defB3}.
\end{prop}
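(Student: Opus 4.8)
The plan is to deduce the formula by approximation from the case of a finite ultraviolet cutoff, where Prop.~\ref{prop-IR} is already available. First I would record two preliminaries. Under the binding condition $\Sigma_{\Geb,\infty}>E_{\Geb,\infty}$, the operator $\HV_{\Geb,\infty}$ has a ground state eigenvector by Thm.~\ref{thmGSH} and the eigenspace is one-dimensional by Thm.~\ref{thmPerronFrobenius}; since the claimed identity \eqref{aki1} is linear in the eigenvector, it suffices to establish it for the positive normalized ground state, which I call $\Phi_{\Geb,\infty}$. Next I would observe that Lem.~\ref{lemeva} goes through verbatim for $\UV=\infty$, the only changes being that $a_\omega(e_{\hat{\V{x}}}\V{m}\beta_{\infty})$ replaces $a(e_{\hat{\V{x}}}\V{m}\beta_{\UV})$, that Prop.~\ref{propexploc} is invoked in its $\UV=\infty$ form, that the comparison bound \eqref{pernille0} is taken from Thm.~\ref{thmrbUV}(2), and that \eqref{rbaomega} (together with $\|\omega^{\mh}\V{m}\beta_{\infty}\|<\infty$) is used in place of \eqref{rba}; in particular $\V{\Upsilon}_{1},\V{\Upsilon}_{2},\Upsilon_{3}$ and the bounded operators $\Xi_{1}(\V{k}),\Xi_{2}(\V{k}),\V{\Xi}_{3}(\V{k})$ are well-defined for $\UV=\infty$ and satisfy the same a.e.-in-$\V{k}$ operator-norm bounds $\|\Xi_{1}(\V{k})\|,\|\Xi_{2}(\V{k})\|\le2\wedge(2/|\V{k}|)$ and $\|\V{\Xi}_{3}(\V{k})\|\le c(1\wedge|\V{k}|^{-1})^{\eh}$ derived in the proof of Cor.~\ref{corIRbd}. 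I would then pick $\UV_n\uparrow\infty$: for large $n$ the binding condition holds at cutoff $\UV_n$ (Cor.~\ref{corIRubd1}), so $\HV_{\Geb,\UV_n}$ has a normalized positive ground state $\Phi_n$ (Prop.~\ref{propGSHfin}), and by the compactness argument of Cor.~\ref{corIRbd}/Cor.~\ref{corIRubd1} and Lem.~\ref{lemSRC}, exactly as in the proof of Thm.~\ref{thmGSH}, along a subsequence (not relabeled) $\Phi_n\to\Phi_{\Geb,\infty}$ in $L^2(\Geb,\sF)$, while $E_{\Geb,\UV_n}\to E_{\Geb,\infty}$, $\HV_{\Geb,\UV_n}\to\HV_{\Geb,\infty}$ in norm resolvent sense (Thm.~\ref{thmrbUV}), and the bounds of Lem.~\ref{lemeva} for $\Phi_n$ are uniform in $n$.

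The core of the argument is to pass to the limit in \eqref{aki1} written for $\Phi_n$. On the vector side, the uniform weighted bound $\|e^{\alpha\langle\hat{\V{x}}\rangle}\Phi_n\|\le c$ together with $\Phi_n\to\Phi_{\Geb,\infty}$ in $L^2$ yields $\V{\Upsilon}_{1}^{(n)}=i\hat{\V{x}}\Phi_n\to i\hat{\V{x}}\Phi_{\Geb,\infty}$ and $\Upsilon_{3}^{(n)}=\langle\hat{\V{x}}\rangle\Phi_n\to\langle\hat{\V{x}}\rangle\Phi_{\Geb,\infty}$ strongly in $L^2(\Geb,\sF)$, by splitting off the tail $\{|\V{x}|\ge R\}$, which is uniformly small. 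For $\V{\Upsilon}_{2}^{(n)}=\langle\hat{\V{x}}\rangle(-i\nabla+a(e_{\hat{\V{x}}}\V{m}\beta_{\UV_n}))\Phi_n$ I would only need weak convergence in $L^2(\Geb,\sF)$ to the corresponding $\UV=\infty$ vector: the terms $\langle\hat{\V{x}}\rangle\nabla\Phi_n$ are uniformly $L^2$-bounded (Lem.~\ref{lemeva}) and converge weakly, and $\langle\hat{\V{x}}\rangle a(e_{\hat{\V{x}}}\V{m}\beta_{\UV_n})\Phi_n$ converges weakly, which one sees by writing $\beta_{\UV_n}=\beta_{L}+(\beta_{\UV_n}-\beta_{L})$ for a large fixed finite $L$, treating the fixed operator $a(e_{\hat{\V{x}}}\V{m}\beta_{L})$ via weak convergence of $\Phi_n$ in $\fdom(\Id\Gamma(\omega))$, and controlling the remainder by \eqref{rbaomega} and the smallness of $\|\omega^{\mh}\V{m}\beta_{\infty}1_{\{|\V{m}|>L\}}\|$, uniformly in $n$.

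On the operator side I would show that for a.e.\ fixed $\V{k}\in\RR^3\setminus\{0\}$ each factor of $\Xi_{j}^{(n)}(\V{k})$ converges strongly to its $\UV=\infty$ counterpart. The resolvent factors $R_{\Geb,\UV_n}(\V{k})$, $(\HV_{\Geb,\UV_n}-E_{\Geb,\UV_n}+1)^{\mh}$, and $(\HV_{\Geb,\UV_n}-E_{\Geb,\UV_n}+1)^{\eh}R_{\Geb,\UV_n}(\V{k})$ are bounded continuous functions of $\HV_{\Geb,\UV_n}-E_{\Geb,\UV_n}$ vanishing at infinity, hence converge in operator norm by norm resolvent convergence and $E_{\Geb,\UV_n}\to E_{\Geb,\infty}$; and $\beta_{\UV_n}(\V{k})\to\beta_{\infty}(\V{k})$ (indeed they agree once $|\V{k}|\le\UV_n$). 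The delicate factor is $\{a(e_{\hat{\V{x}}}\V{m}\beta_{\UV_n})(\HV_{\Geb,\UV_n}-E_{\Geb,\UV_n}+1)^{\mh}\}^{*}$ in $\V{\Xi}_{3}^{(n)}(\V{k})$: writing the difference to its $\UV=\infty$ analogue as $a_\omega(e_{\hat{\V{x}}}\V{m}(\beta_{\UV_n}-\beta_{\infty}))(\HV_{\Geb,\UV_n}-E_{\Geb,\UV_n}+1)^{\mh}$ plus $a_\omega(e_{\hat{\V{x}}}\V{m}\beta_{\infty})[(\HV_{\Geb,\UV_n}-E_{\Geb,\UV_n}+1)^{\mh}-(\HV_{\Geb,\infty}-E_{\Geb,\infty}+1)^{\mh}]$, the first part tends to $0$ in norm by \eqref{rbaomega}, the uniform bound $\|\Id\Gamma(\omega)^{\eh}(\HV_{\Geb,\UV_n}-E_{\Geb,\UV_n}+1)^{\mh}\|\le c$ from \eqref{pernille0}, and $\|\omega^{\mh}\V{m}\beta_{\infty}1_{\{|\V{m}|>\UV_n\}}\|\to0$, and the second part reduces, again via \eqref{rbaomega} and \eqref{pernille0}, to the convergence $\Id\Gamma(\omega)^{\eh}(\HV_{\Geb,\UV_n}-E_{\Geb,\UV_n}+1)^{\mh}\to\Id\Gamma(\omega)^{\eh}(\HV_{\Geb,\infty}-E_{\Geb,\infty}+1)^{\mh}$, which I would obtain from the uniform form comparison \eqref{pernille0}, the pointwise form convergence $\hv_{\Geb,\UV_n}[\Psi]\to\hv_{\Geb,\infty}[\Psi]$ on $\QGV$ of \eqref{defhKinftyV}, and the refined resolvent-difference estimates of Thm.~\ref{thmrbUV}(4). \emph{Making this last reduction rigorous --- bridging the relative bound \eqref{rbaomega} for the non-square-integrable symbol $\V{m}\beta_{\infty}$ with the estimates of Thm.~\ref{thmrbUV}(4) so as to promote their statements about the $(\cdot)^{-1}$-resolvents to the $(\cdot)^{\mh}$-factors that actually occur in $\V{\Xi}_{3}$ --- is the step I expect to be the main obstacle}; everything else is soft.

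Finally I would conclude as follows. Pairing both sides of \eqref{aki1} for $\Phi_n$ with an arbitrary $\Psi\in\fdom(\Id\Gamma(\omega))$ and integrating in $\V{k}$ against an arbitrary $g\in C_0^\infty(\RR^3\setminus\{0\})$, the left-hand side equals $\SPn{\ad(g)\Psi}{\Phi_n}\to\SPn{\ad(g)\Psi}{\Phi_{\Geb,\infty}}$, i.e.\ it converges to the corresponding $\UV=\infty$ expression; on the right-hand side, the factorwise strong/weak convergences just established give pointwise-in-$\V{k}$ convergence of $\bar g(\V{k})\SPn{\Psi}{\cdots}$, while the bounds $\|\Xi_{j}^{(n)}(\V{k})\|\le2\wedge(2/|\V{k}|)$, $\|\V{\Xi}_{3}^{(n)}(\V{k})\|\le c(1\wedge|\V{k}|^{-1})^{\eh}$, $|\V{k}\beta_{\UV_n}(\V{k})|\le|\V{k}\beta_{\infty}(\V{k})|$, and the uniform $L^2(\Geb,\sF)$-bounds on the $\V{\Upsilon}_j^{(n)}$ provide an $n$-independent $L^1(\RR^3)$ majorant, so dominated convergence applies. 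Hence the two sides of the $\UV=\infty$ version of \eqref{aki1}, paired with $\Psi$ and integrated against $g$, coincide for all such $\Psi$ and $g$; since both, as maps of $\V{k}$, lie in $L^1_{\loc}(\RR^3\setminus\{0\};L^2(\Geb,\sF))$ (the left-hand side because $a\Phi_{\Geb,\infty}$ lies in the weighted $L^2$-space of \eqref{fordGak}), the identity \eqref{aki1} holds for a.e.\ $\V{k}\in\RR^3\setminus\{0\}$. By the linearity reduction of the first paragraph this proves the proposition for every ground state eigenvector of $\HV_{\Geb,\infty}$.
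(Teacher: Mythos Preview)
Your overall strategy---approximating by finite cutoffs $\UV_n\uparrow\infty$, invoking Prop.~\ref{prop-IR} at each stage, and passing to the limit using norm resolvent convergence and the compactness of Thm.~\ref{thmGSH}---is exactly the paper's. Most of your convergence arguments are fine, and you have correctly located the one genuine difficulty.

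That difficulty, however, dissolves once you notice an algebraic identity the paper exploits: since $(\HV-E+1)^{\eh}$ and $R_{\Geb,\UV}(\V{k})$ commute and are self-adjoint,
\[
(\HV-E+1)^{\eh}R_{\Geb,\UV}(\V{k})\,\big\{a_\omega(e_{\hat{\V{x}}}\V{m}\beta_{\UV})(\HV-E+1)^{-\eh}\big\}^{*}
=\big\{a_\omega(e_{\hat{\V{x}}}\V{m}\beta_{\UV})\,R_{\Geb,\UV}(\V{k})\big\}^{*},
\]
so that
\[
\V{\Xi}_{3}(\V{k})=\big\{a_\omega(e_{\hat{\V{x}}}\V{m}\beta_{\UV})\,R_{\Geb,\UV}(\V{k})\big\}^{*}\,
\frac{e^{-i\V{k}\cdot\hat{\V{x}}}-\chi(\V{k})}{\langle\hat{\V{x}}\rangle}.
\]
In this form the delicate factor involves the \emph{full} resolvent $R_{\Geb,\UV}(\V{k})=(\HV_{\Geb,\UV}-E_{\Geb,\UV}+\omega(\V{k}))^{-1}$ rather than $(\HV-E+1)^{-\eh}$, and the second limit relation in Thm.~\ref{thmrbUV}(4) (which is stated precisely for inverses, not square roots) combined with \eqref{rbaomega} gives
\[
a_\omega(e_{\hat{\V{x}}}\V{m}\beta_{\UV_n})\,R_{\Geb,\UV_n}(\V{k})
\xrightarrow{\,n\to\infty\,}
a_\omega(e_{\hat{\V{x}}}\V{m}\beta_{\infty})\,R_{\Geb,\infty}(\V{k})
\quad\text{in }\LO(L^2(\Geb,\sF))^3,
\]
and hence norm convergence of $\V{\Xi}_{3}^{(n)}(\V{k})$. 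No ``promotion'' from $(\cdot)^{-1}$ to $(\cdot)^{-\eh}$ is needed.

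Two further simplifications the paper uses that you might adopt. First, in the products $\Xi_{2}(\V{k})\V{\Upsilon}_{2}$ and $\V{\Xi}_{3}(\V{k})\Upsilon_{3}$ the factors $\langle\hat{\V{x}}\rangle^{-1}$ and $\langle\hat{\V{x}}\rangle$ cancel, so one never needs weighted convergence of $\V{\Upsilon}_{2}^{(n)}$ or $\Upsilon_{3}^{(n)}$: it suffices that $\nabla\Phi_n\to\nabla\Phi$, $\Id\Gamma(\omega)^{\eh}\Phi_n\to\Id\Gamma(\omega)^{\eh}\Phi$, and $a(e_{\hat{\V{x}}}\V{m}\beta_{\UV_n})\Phi_n\to a_\omega(e_{\hat{\V{x}}}\V{m}\beta_{\infty})\Phi$ in $L^2(\Geb,\sF)$, all of which follow from \eqref{pernille0} and $\hv_{\Geb,\UV_n}[\Phi_n-\Phi]\to0$. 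Second, for the left-hand side the paper avoids pairing and dominated convergence in $\V{k}$ altogether: the form-norm convergence just mentioned gives $\Id\Gamma(\omega)^{\eh}(\Phi_n-\Phi)\to0$, hence by \eqref{fordGak} and Riesz--Fischer, $(a\Phi_{n})(\V{k})\to(a\Phi)(\V{k})$ in $L^2(\Geb,\sF)$ for a.e.\ $\V{k}$ along a subsequence, so the identity can be read off pointwise in $\V{k}$ (after inserting a spatial cutoff $1_{\{|\V{x}|\le N\}}$ to handle the unbounded $\hat{\V{x}}$ in $\V{\Upsilon}_1$, and then letting $N\to\infty$).
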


\begin{proof}
Let $\V{k}\in\RR^3\setminus\{0\}$. Since $\HV_{\Geb,\UV}\to\HV_{\Geb,\infty}$, 
$\UV\to\infty$, in norm resolvent sense, we know that $E_{\Geb,\UV}\to E_{\Geb,\infty}$, 
thus $R_{\Geb,\UV}(\V{k})\to R_{\Geb,\infty}(\V{k})$ in operator norm. Hence, also
\begin{align}\label{vernon0}
(\HV_{\Geb,\UV}-E_{\Geb,\UV})R_{\Geb,\UV}(\V{k})
\xrightarrow{\;\;\UV\to\infty\;\;}(\HV_{\Geb,\infty}-E_{\Geb,\infty})R_{\Geb,\infty}(\V{k}),
\end{align}
in operator norm, since 
\begin{align}\label{vernon0b}
(\HV_{\Geb,\UV'}-E_{\Geb,\UV'})R_{\Geb,\UV'}(\V{k})=\id-\omega(\V{k})R_{\Geb,\UV'}(\V{k}),
\quad\UV'\in[0,\infty].
\end{align}
Furthermore, the second limit relation stated in Theorem~\ref{thmrbUV}(4) together with
\eqref{rbaomega} shows that
\begin{align}\label{vernon1}
a_{\omega}(e_{\hat{\V{x}}}\V{m}\beta_{\UV})R_{\Geb,\UV}(\V{k})
\xrightarrow{\;\;\UV\to\infty\;\;}a_{\omega}(e_{\hat{\V{x}}}\V{m}\beta_{\infty})R_{\Geb,\infty}(\V{k})
\quad\text{in $\LO(L^2(\Geb,\sF))^3$.}
\end{align}

Next, let $n_0\in\NN$ and $\UV_n$, $\Phi_n$, $n\ge n_0$, be the same objects as in the proof 
of  Theorem~\ref{thmGSH}. Let $\{\Phi_{n_j}\}_{j\in\NN}$ be a converging subsequence of
$\{\Phi_n\}_{n\ge n_0}$ and put $\Phi:=\lim_{j\to\infty}\Phi_{n_j}$. Then
\begin{align}\nonumber
&\|a(e_{\hat{\V{x}}}\V{m}\beta_{\UV_{n_j}})\Phi_{n_j}
-a_\omega(e_{\hat{\V{x}}}\V{m}\beta_{\infty})\Phi\|
\\\label{vernon2}
&\le\||\V{m}|^\eh(\beta_{\UV_{n_j}}-\beta_{\infty})\|\|\Id\Gamma(\omega)^\eh\Phi\|
+\||\V{m}|^\eh\beta_{\infty}\|\|\Id\Gamma(\omega)^\eh(\Phi_{n_j}-\Phi)\|,
\end{align}
for every $j\in\NN$. On account of \eqref{pernille0} we further find 
\begin{align}\nonumber
&\|\nabla(\Phi_{n_j}-\Phi)\|^2+\|\Id\Gamma(\omega)^\eh(\Phi_{n_j}-\Phi)\|^2
\\\label{vernon3}
&\le c\hv_{\Geb,\UV_{n_j}}[\Phi_{n_j}-\Phi]+c'\|\Phi_{n_j}-\Phi\|^2,\quad j\in\NN,
\end{align}
with constants $c,c'>0$ depending only on $\ee$ and $V$. Here the term
\begin{align*}
\hv_{\Geb,\UV_{n_j}}[\Phi_{n_j}-\Phi]
&=E_{\Geb,\UV_{n_j}}+E_{\Geb,\infty}-2E_{\Geb,\UV_{n_j}}\Re\SPn{\Phi_{n_j}}{\Phi}
+\hv_{\Geb,\UV_{n_j}}[\Phi]-\hv_{\Geb,\infty}[\Phi]
\end{align*}
vanishes in the limit $j\to\infty$ in view of \eqref{defhKinftyV}.

Let us extend the notation introduced in \eqref{defUpsilon1} through \eqref{defB3} to
$\UV=\infty$ replacing $a$ by $a_\omega$ in \eqref{defUpsilon2} and \eqref{defB3}. 
Furthermore, let us indicate the $\UV$-dependence of these vectors and operators by an additional 
subscript $\UV$. To apply \eqref{vernon1} we shall employ the following
equivalent representation of the operator triple in \eqref{defB3},
\begin{align}\label{B3alt}
\V{\Xi}_{\UV,3}(\V{k})&=\{a_\omega(e_{\hat{\V{x}}}\V{m}\beta_{\UV})R_{\Geb,\UV}(\V{k})\}^*
\frac{e^{-i\V{k}\cdot\hat{\V{x}}}-\chi(\V{k})}{\langle\hat{\V{x}}\rangle},\quad \UV\in[0,\infty].
\end{align}
Since the terms $\langle\hat{\V{x}}\rangle$ and $1/\langle\hat{\V{x}}\rangle$ in
\eqref{defUpsilon2} and \eqref{defB2} (resp. \eqref{defUpsilon3} and \eqref{B3alt}) cancel each other,
the above relations \eqref{vernon1}--\eqref{vernon3} then imply that
\begin{align*}
\Xi_{\UV_{n_j},2}(\V{k})\V{k}\cdot\V{\Upsilon}_{\UV_{n_j},2}
&\xrightarrow{\;\;j\to\infty\;\;}\Xi_{\infty,2}(\V{k})\V{k}\cdot\V{\Upsilon}_{\infty,2},
\\
\V{\Xi}_{\UV_{n_j},3}(\V{k})\cdot\V{k}\Upsilon_{\UV_{n_j},3}
&\xrightarrow{\;\;j\to\infty\;\;}\V{\Xi}_{\infty,3}(\V{k})\cdot\V{k}\Upsilon_{\infty,3}.
\end{align*}
Let $N\in\NN$ and denote by $\vt_N$ the maximal operator of multiplication with the characteristic
function of $\{|\V{x}|\le N\}$ on $L^2(\Geb,\sF)$. By virtue of \eqref{vernon0b} and 
\eqref{resexploc2} we then know that the components of
$\vt_N(\HV_{\Geb,\UV}-E_{\Geb,\UV})R_{\Geb,\UV}(\V{k})\hat{\V{x}}$ 
extend to bounded operators on $L^2(\Geb,\sF)$, whose norms are bounded by a constant
depending solely on $\V{k}\not=0$ and $N$.
(Here we apply \eqref{resexploc2} with $V-E_{\Geb,\UV}+\omega(\V{k})$ substituted for $V$
and $z=0$.) Since the components of $\hat{\V{x}}\Phi$ are in $L^2(\Geb,\sF)$,
this in conjunction with \eqref{vernon0} implies 
\begin{align*}
&\vt_N\Xi_{\UV_{n_j},1}(\V{k})\V{k}\cdot\V{\Upsilon}_{\UV_{n_j},1}
\xrightarrow{\;\;j\to\infty\;\;}\vt_N\Xi_{\infty,1}(\V{k})\V{k}\cdot\V{\Upsilon}_{\infty,1}.
\end{align*}

Finally, we observe that the convergence 
$\Id\Gamma(\omega)^\eh(\Phi_{\UV_{n_j}}-\Phi)\to0$, $j\to\infty$, in
$L^2(\Geb,\sF)$, the relation \eqref{fordGak}, and the Riesz-Fischer theorem imply that 
$$
(a\Phi_{\UV_{m_i}})(\V{k})\xrightarrow{\;\;i\to\infty\;\;}(a\Phi)(\V{k})
\quad\text{in $L^2(\Geb,\sF)$, for a.e. $\V{k}\in\RR^3$,}
$$
where $\{m_i\}_{i\in\NN}$ is some subsequence of $\{n_j\}_{j\in\NN}$.

Putting all the above remarks together and applying \eqref{prop-IR} 
to every cutoff $\UV_{m_i}$ with $i\in\NN$, we arrive at
\begin{align*}
\vt_N(a\Phi)(\V{k})&=-\sum_{\ell=1}^2\vt_N\Xi_{\infty,\ell}(\V{k})
{\beta_{\infty}(\V{k})}\V{k}\cdot\V{\Upsilon}_{\UV,\ell}
-\vt_N\V{\Xi}_{\infty,3}(\V{k})\cdot\V{k}{\beta_{\infty}(\V{k})}\Upsilon_{\infty,3},
\end{align*}
for all $N\in\NN$ and every $\V{k}\not=0$ in the complement of some $N$-independent zero set.
\end{proof}

We are now in a position to study the norm continuity of ground state eigenvectors with respect to the 
coupling constant:

\begin{thm}\label{thmL2cont}
Assume that $\Geb$ is connected. Let $\{\ee_n\}_{n\in\NN}$ be a sequence in $\RR$ converging
to some $\ee\in\RR$. Keep $\mu$ and $\eta$ fixed and
assume that, for the operator $\HV:=\HV_{\Geb,\infty}$ defined by means of 
$\ee$, the binding condition $\Sigma_{\Geb,\infty}>E_{\Geb,\infty}$ holds.
Let $\Phi$ be the normalized, strictly positive ground state eigenvector of $\HV$, 
that exists according to Theorem~\ref{thmPerronFrobenius} and Theorem~\ref{thmGSH}. 
Let $\HV_n$ be the operator obtained upon putting $\ee_n$ in place of $\ee$ in the construction of 
$\HV$. Then there exists $n_0\in\NN$ such that, for all $n\ge n_0$, we find a 
normalized, strictly positive ground state eigenvector $\Phi_n$ of $\HV_n$, and 
$\lim_{n\to\infty}\Phi_n=\Phi$.
\end{thm}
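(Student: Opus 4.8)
The plan is to deduce the statement from three facts: norm resolvent convergence $\HV_n\to\HV$, relative compactness of any sequence of normalized ground states of the $\HV_n$, and the Perron--Frobenius uniqueness of Thm.~\ref{thmPerronFrobenius}. First I would establish the norm resolvent convergence. Since $\mu$ and $\eta$ are kept fixed and only the coupling constant varies, Lem.~\ref{lempert} applies with $\hat{\eta}=\eta$ and $\hat{\ee}=\ee_n$; the quantity $d_\infty$ there is finite (its square equals $\int(|\V{m}|\vee|\V{m}|^{\nf{3}{2}})(\ee-\ee_n)^2\eta^2\omega^{-1}(\omega+\V{m}^2/2)^{-2}\Id\V{k}$, whose integrand is integrable near the origin even when $\mu=0$ and is $O(|\V{m}|^{-\nf{7}{2}})$ at infinity) and satisfies $d_\infty\le c|\ee-\ee_n|\to0$. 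Thus \eqref{rbhtildeh} presents $\hat{\mathfrak{h}}_{\Geb,\infty}$, the form of $\HV_n$, as a relative form perturbation of $\hv_{\Geb,\infty}$ with relative bound tending to $0$; together with the two-sided comparison \eqref{pernille0} and Lem.~\ref{lemeasyresolvent} this gives $\HV_n\to\HV$ in the norm, hence also the strong, resolvent sense. The same argument on $\Geb_R=\Geb\cap\{|\V{x}|>R\}$ in place of $\Geb$ gives, for each fixed $R>0$, the analogous convergence of the corresponding Dirichlet operators (trivially if $\Geb_R=\emptyset$).

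Next I would check that the binding condition survives. Norm resolvent convergence forces $E_{\Geb,\infty}^{(n)}\to E_{\Geb,\infty}$ and $\Th{R}{\infty}^{(n)}\to\Th{R}{\infty}$ for each fixed $R$ (the superscript $(n)$ indicating that $\ee_n$ is used). Choosing $R$ large enough that $\Th{R}{\infty}>E_{\Geb,\infty}$ --- which is possible because $\Th{R}{\infty}$ increases to $\Sigma_{\Geb,\infty}>E_{\Geb,\infty}$ as $R\to\infty$, and is automatic when $\Geb$ is bounded --- we obtain some $n_0\in\NN$ with $\Sigma_{\Geb,\infty}^{(n)}\ge\Th{R}{\infty}^{(n)}>E_{\Geb,\infty}^{(n)}$ for all $n\ge n_0$ (this is the argument behind Cor.~\ref{corIRubd1}(1)). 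By Thm.~\ref{thmGSH} and, since $\Geb$ is connected, Thm.~\ref{thmPerronFrobenius}, each $\HV_n$ with $n\ge n_0$ has a normalized, strictly positive ground state eigenvector $\Phi_n$.

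The heart of the proof is to show that $\{\Phi_n\}_{n\ge n_0}$ is relatively compact in $L^2(\Geb,\sF)$; here I would repeat the proof of Cor.~\ref{corIRbd} for bounded $\Geb$, resp.\ of Cor.~\ref{corIRubd1} for unbounded $\Geb$, with the single modification that the pull-through identity \eqref{aki1} is used in the $\UV=\infty$ form of Prop.~\ref{propIRinfty}, so that $a_\omega(e_{\hat{\V{x}}}\V{m}\beta_\infty^{(n)})$ appears in place of $a(e_{\hat{\V{x}}}\V{m}\beta_{\UV})$ (here $\beta_\infty^{(n)}$ is $\beta_\infty$ formed with $\ee_n$). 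The necessary $n$-uniform a priori bounds come from the uniform comparison bound \eqref{pernille0} for $\nabla\Phi_n$ and $\Id\Gamma(\omega)^\eh\Phi_n$, from \eqref{eva4} for the annihilation term, and --- when $\Geb$ is unbounded --- from Prop.~\ref{propexploc}, applied with the $R$ fixed above and some $\ve\in(0,1)$ to all $\HV_n$ at once, since its bounds depend on $n$ only through $E_{\Geb,\infty}^{(n)}$ and $\Th{R}{\infty}^{(n)}$; this yields $n$-uniform versions of \eqref{eva1}--\eqref{eva1beta} (with $a_\omega$), hence uniform control of the vectors in \eqref{defUpsilon1}--\eqref{defUpsilon3}, whereas the operators $\Xi_j^{(n)}(\V{k})$, $\V{\Xi}_3^{(n)}(\V{k})$ obey the same $\V{k}$-dependent bounds \eqref{bdsB} as before. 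Substituting these into \eqref{aki1} and running the estimates \eqref{carlotta0}--\eqref{carlotta}, while using that $\ee_n\to\ee$ makes $\{\vr_\delta\omega^\eh\beta_\infty^{(n)}\}_n$ converge --- hence be relatively compact --- in $L^2(\RR^3)$ for each $\delta\ge0$, one verifies the hypotheses \eqref{cpt1a}, \eqref{cpt1b}, \eqref{cpt2a}, \eqref{cpt2b} of Prop.~\ref{propcpt}. Hence every subsequence of $\{\Phi_n\}_{n\ge n_0}$ has a further subsequence converging in $L^2(\Geb,\sF)$.

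Finally, along such a convergent sub-subsequence $\Phi_{n_j}\to\Phi_\infty$ normalization gives $\Phi_\infty\ne0$; by the strong resolvent convergence established above and Lem.~\ref{lemSRC}, $\Phi_\infty\in\dom(\HV)$ and $\HV\Phi_\infty=E_{\Geb,\infty}\Phi_\infty$; and since $\Phi_\infty$ is a nonzero, normalized limit of strictly positive vectors while the eigenspace of $E_{\Geb,\infty}$ is, by Thm.~\ref{thmPerronFrobenius}, spanned by the strictly positive $\Phi$, necessarily $\Phi_\infty=\Phi$. As every subsequence of $\{\Phi_n\}_{n\ge n_0}$ therefore has a further subsequence converging to the common limit $\Phi$, the whole sequence converges, $\lim_{n\to\infty}\Phi_n=\Phi$. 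The main obstacle is the compactness step: one must make sure the infrared pull-through analysis of Subsect.~\ref{ssecIR} and the exponential localization of Subsect.~\ref{ssecexploc} are genuinely available for the renormalized operators ($\UV=\infty$) with a varying coupling constant --- i.e.\ that Prop.~\ref{propIRinfty} and $n$-uniform Lem.~\ref{lemeva}-type bounds may be invoked --- and that the boson-momentum equicontinuity \eqref{cpt1b} is uniform in $n$, which rests on the $L^2$-convergence $\vr_\delta\omega^\eh\beta_\infty^{(n)}\to\vr_\delta\omega^\eh\beta_\infty$ controlling $\Xi_j^{(n)}(\V{k})$ and $\V{\Xi}_3^{(n)}(\V{k})$.
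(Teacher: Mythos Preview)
Your proposal is correct and follows essentially the same approach as the paper: norm resolvent convergence via Lem.~\ref{lempert}, existence and relative compactness of the $\Phi_n$ by running the proof of Cor.~\ref{corIRubd1} (resp.\ Cor.~\ref{corIRbd}) with $\UV=\infty$ using Prop.~\ref{propIRinfty}, identification of subsequential limits via Lem.~\ref{lemSRC}, and uniqueness via positivity and Thm.~\ref{thmPerronFrobenius}. The only cosmetic difference is that the paper phrases the last step as a proof by contradiction, whereas you use the equivalent ``every subsequence has a further subsequence converging to the same limit'' formulation.
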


\begin{proof}
With the extension of Proposition~\ref{prop-IR} given in Proposition~\ref{propIRinfty} at hand, we see that
the statement and proof of Corollary~\ref{corIRubd1} remain valid, if all ultraviolet cutoffs
$\UV_\iota$ appearing there are set equal to $\infty$. This shows that a sequence of
ground state eigenvectors $\{\Phi_n\}_{n\ge n_0}$ as in the statement exists.

Suppose for contradiction that $\{\Phi_n\}_{n\ge n_0}$ does not converge to $\Phi$. Then we
find some $\ve>0$ and a subsequence $\{\Phi_{n_\ell}\}_{\ell\in\NN}$ 
such that $\|\Phi_{n_\ell}-\Phi\|\ge\ve$, for all $\ell\in\NN$.
Then the just described modified version of Corollary~\ref{corIRubd1} further implies, however, that
$\{\Phi_{n_\ell}\}_{\ell\in\NN}$ contains another, strongly converging subsequence. Calling that 
sub-subsequence $\{\Phi_j'\}_{j\in\NN}$ and its strong limit $\Phi'$, we have $\|\Phi'-\Phi\|\ge\ve$.
Thanks to Lemma~\ref{lempert} we also know that $\HV_n\to\HV$, $n\to\infty$, in norm resolvent sense. 
Invoking Lemma~\ref{lemSRC} we see that $\Phi'$ is a normalized ground state eigenvector of $\HV$.
Since every $\Phi_j'$ is strictly positive, $\Phi'$ must be non-negative and, therefore,
$\Phi'=\Phi$; a contradiction!
\end{proof}

For convenience, we consider only the case $\Geb=\RR^3$ in the remaining part of this section.
The next two theorems complete the proofs of Theorems~\ref{introthmcont} and~\ref{introthmcontnonFock}
in the introduction:

\begin{thm}\label{thmcont}
In the situation of Theorem~\ref{thmL2cont} with $\Geb=\RR^3$, all ground state eigenvectors $\Phi$ and
$\Phi_n$, $n\in\NN$, $n\ge n_0$, have representatives which are continuous maps from $\RR^3$ into
$\sF$. Moreover, if $\{\V{x}_n\}_{n=n_0}^\infty$ is a sequence in $\RR^3$ converging to some
$\V{x}\in\RR^3$, then $\lim_{n\to\infty}\Phi_n(\V{x}_n)=\Phi(\V{x})$.
\end{thm}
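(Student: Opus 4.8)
The plan is to upgrade the $L^2$-convergence $\Phi_n\to\Phi$ provided by Thm.~\ref{thmL2cont} to the asserted pointwise convergence by exploiting the Feynman--Kac formula of Thm.~\ref{thmFKD}, in the spirit of the $L^2$-to-pointwise argument already carried out in \cite{Matte2016} for ultraviolet regularized models. First I would collect the ingredients available at the outset: $\Phi_n\to\Phi$ in $L^2(\RR^3,\sF)$; by Lem.~\ref{lempert} one has $\HV_n\to\HV$ in norm resolvent sense, hence the ground state energies $E_n:=\inf\sigma(\HV_n)$ converge to $E:=\inf\sigma(\HV)$ and $e^{tE_n}\to e^{tE}$ for every fixed $t>0$; since $\Phi_n=e^{tE_n}e^{-t\HV_n}\Phi_n$ and $\Phi=e^{tE}e^{-t\HV}\Phi$ lie in the ranges of $e^{-t\HV_n}$, resp. $e^{-t\HV}$, Rem.~\ref{remcontrange} supplies the continuous representatives $\Phi_n(\cdot)$ and $\Phi(\cdot)$, which already settles the first assertion; moreover $\Phi(\cdot)$ is bounded, because $\Phi\in L^\infty(\RR^3,\sF)$ by Prop.~\ref{propGSmu}(3). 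Writing $\wt{W}^{(n)}_{\infty,t}$ and $\wt{T}^{(n)}_{\RR^3,\infty,t}$ for the objects of Subsect.~\ref{ssecFK} built from the coupling constant $\ee_n$, Thm.~\ref{thmFKD} (for which $\tau_{\RR^3}\equiv\infty$) together with Rem.~\ref{remcontrange} gives, for the continuous representatives and all $t>0$, $\V{x}\in\RR^3$,
\begin{align}\label{FKpwthmcont}
\Phi_n(\V{x})&=e^{tE_n}\EE\Big[\wt{W}^{(n)}_{\infty,t}(\V{x})^*\,\Phi_n(\V{B}_t^{\V{x}})\,
e^{-\int_0^tV(\V{B}_s^{\V{x}})\Id s}\Big],
\end{align}
and the analogous identity for $\Phi$ with $\ee$ in place of $\ee_n$.

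Next I would fix $t>0$ and a sequence $\V{x}_n\to\V{x}$, insert \eqref{FKpwthmcont} into $\Phi_n(\V{x}_n)-\Phi(\V{x})$, and telescope, obtaining
\begin{align*}
\Phi_n(\V{x}_n)-\Phi(\V{x})&=e^{tE_n}\EE\big[\wt{W}^{(n)}_{\infty,t}(\V{x}_n)^*\,
(\Phi_n-\Phi)(\V{B}_t^{\V{x}_n})\,e^{-\int_0^tV(\V{B}_s^{\V{x}_n})\Id s}\big]
\\
&\quad+e^{tE_n}\EE\big[\big(\wt{W}^{(n)}_{\infty,t}(\V{x}_n)^*-\wt{W}_{\infty,t}(\V{x}_n)^*\big)\,
\Phi(\V{B}_t^{\V{x}_n})\,e^{-\int_0^tV(\V{B}_s^{\V{x}_n})\Id s}\big]
\\
&\quad+\big(e^{t(E_n-E)}-1\big)\Phi(\V{x}_n)+\big(\Phi(\V{x}_n)-\Phi(\V{x})\big).
\end{align*}
The first summand equals $e^{tE_n}\big(\wt{T}^{(n)}_{\RR^3,\infty,t}(\Phi_n-\Phi)\big)(\V{x}_n)$, so by Prop.~\ref{propGSmu}(2) applied with $\vk=0$, $F=0$, $p=\infty$ (and the pointwise supremum) its norm is at most $c\,t^{-\nf{3}{4}}\|\Phi_n-\Phi\|_{L^2(\RR^3,\sF)}$, the constant being locally bounded in the coupling constant --- hence uniformly bounded for large $n$ --- and it therefore tends to $0$. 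The last two summands tend to $0$ since $E_n\to E$ and $\Phi(\cdot)$ is continuous and bounded. Thus everything is reduced to showing that the \emph{change of coupling} summand (the second one) tends to $0$, and this is the main obstacle.

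To treat it I would establish the pathwise convergence $\wt{W}^{(n)}_{\infty,t}(\V{x}_n)^*\to\wt{W}_{\infty,t}(\V{x})^*$ on $\Omega$ and then pass to the limit under the expectation by uniform integrability. In view of \eqref{defwtW}, convergence of the space point is harmless: it follows from the continuity of the maps $F_s\colon\mathfrak{k}\to\LO(\sF)$ of Subsect.~\ref{ssecFK} together with $e_{\V{x}_n}g\to e_{\V{x}}g$ in $\mathfrak{k}$ (dominated convergence). The genuinely new point is continuity in the coupling constant, i.e.\ that the auxiliary processes $\tilde{u}_\infty$ and $\wt{U}^{\pm}_\infty$ of \cite{MatteMoeller2017} depend continuously on $\ee$ --- say in probability, locally uniformly in $t$ --- which I expect to read off from their explicit construction in \cite{MatteMoeller2017}, whose integrands depend continuously (indeed affinely) on $\ee$. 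Passing to a subsequence along which this convergence holds almost surely, the integrands of the change of coupling summand converge $\PP$-almost surely; since they are bounded in $L^3(\PP)$ uniformly in $n$ --- by \eqref{WinLp} with $p=6$, by \eqref{kashmir} with $p=6$, and by $\|\Phi\|_{L^\infty(\RR^3,\sF)}<\infty$ --- they are uniformly integrable, so that summand tends to $0$ along the subsequence. A standard subsequence argument (every subsequence of $\{\Phi_n(\V{x}_n)\}$ admits a further subsequence converging to $\Phi(\V{x})$) then yields $\Phi_n(\V{x}_n)\to\Phi(\V{x})$, which in particular gives the joint continuity of $(\V{x},\ee)\mapsto\Phi_\ee(\V{x})$ needed for Thm.~\ref{introthmcontnonFock}(ii).
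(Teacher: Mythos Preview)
Your approach is essentially the paper's: both use the Feynman--Kac representation, a telescoping, the uniform $L^2\to L^\infty$ bound of Prop.~\ref{propGSmu}(2), and continuity in $(\V{x},\ee)$ of the Feynman--Kac integrands. The paper organizes the latter as a standalone convergence $(\wt{T}^n_t\Psi)(\V{x}_n)\to(\wt{T}_t\Psi)(\V{x})$ for arbitrary fixed $\Psi\in L^2$, proved by a three-step approximation (continuous bounded $V$ and $\Psi$ first, then general $V$ via potential approximation, then general $\Psi$), before specializing to $\Psi=\Phi$; your four-term telescoping aims at the same conclusion more directly.

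Two small points deserve more care. First, your claim that the integrand of the change-of-coupling summand converges $\PP$-a.s.\ is not justified for general Kato decomposable $V$: the factor $e^{-\int_0^tV(\V{B}_s^{\V{x}_n})\Id s}$ need not converge pathwise as $\V{x}_n\to\V{x}$ and is not pathwise bounded, so the product need not tend to zero a.s. The paper circumvents this by approximating $V$ with continuous bounded potentials (its Step~2). In your setup you can instead apply H\"older first, bounding the summand in norm by $\|\Phi\|_\infty\sup_n\EE[e^{-2\int_0^tV(\V{B}_s^{\V{x}_n})\Id s}]^{1/2}\,\EE[\|\wt{W}^{(n)}_{\infty,t}(\V{x}_n)-\wt{W}_{\infty,t}(\V{x}_n)\|^2]^{1/2}$, the first two factors being finite by Prop.~\ref{propGSmu}(3) and \eqref{kashmir}, and then apply a.s.\ convergence plus uniform integrability to the $\wt{W}$-difference alone. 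Second, $\tilde{u}_\infty$ is quadratic rather than affine in $\ee$ (it carries the renormalization energy), though of course still continuous in $\ee$, so your argument survives once you drop the word ``affinely''.
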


\begin{proof}
We fix some $t>0$ in this proof and suppose without loss of generality that $n_0=1$.
We shall proceed in four steps.

{\em Step~1.} Assume that $V$ is continuous and bounded and let $\Psi\in L^2(\RR^3,\sF)$
be continuous and bounded as well. Denote by $\wt{T}_t^n$ the Feynman-Kac operator defined
by means of $\ee_n$, and by $\wt{T}_t$ the one corresponding to $\ee$. (Here we drop the
subscripts $\Geb=\RR^3$ and $\UV=\infty$.) Then the convergence
\begin{align}\label{corinna}
(\wt{T}^n_t\Psi)(\V{x}_n)\xrightarrow{\;\;n\to\infty\;\;}(\wt{T}_t\Psi)(\V{x}),
\end{align} 
follows in a straightforward
fashion from the dominated convergence theorem, if we take the continuity of
$\mathfrak{k}\ni h\mapsto F_{\nf{t}{2}}(h)\in\LO(\sF)$, \eqref{bdFt}, and \eqref{bduU} into account
and observe that $\ee_ne_{\V{x}_n}\to\ee e_{\V{x}}$ strongly as bounded operators on $\mathfrak{k}$.

{\em Step~2.} Let $\Psi$ be as in Step~1. For general Kato decomposable $V$, we find a sequence of 
bounded and continuous potentials $V_m:\RR^3\to\RR$, $m\in\NN$, such that
\begin{align}\label{approxVVn}
\sup_{\V{y}\in K}\EE\Big[\big|e^{-\int_0^tV_m(\V{B}_s^{\V{y}})\Id s}-
e^{-\int_0^tV(\V{B}_s^{\V{y}})\Id s}\big|^p\Big]\xrightarrow{\;\;m\to\infty\;\;}0,\quad p>0,
\end{align}
for all compact $K\subset\RR^3$; see, e.g., \cite[Prop.~2.3 and Lem.~C.6]{BHL2000}.
Let $\wt{T}^{n,m}_t$ be the Feynman-Kac operator defined
by means of $\ee_n$ and $V_m$, and $\wt{T}_t^n$ the one corresponding to $\ee_n$ and $V$,
where we set $\ee_\infty:=\ee$ and $\V{x}_\infty:=\V{x}$.
Then H\"{o}lder's inequality, \eqref{bdFt}, \eqref{bduU}, and \eqref{approxVVn} imply
\begin{align*}
\sup_{n\in\NN\cup\{\infty\}}\big\|(\wt{T}^{n,m}_t\Psi)(\V{x}_n)-(\wt{T}_t^n\Psi)(\V{x}_n)\big\|
\xrightarrow{\;\;m\to\infty\;\;}0.
\end{align*}
Hence, \eqref{corinna} holds for general $V$ as well.

{\em Step~3.} Let $\Psi\in L^2(\RR^3,\sF)$. Pick continuous and bounded
$\Psi_m\in L^2(\RR^3,\sF)$, $m\in\NN$, such that $\Psi_m\to\Psi$, as $m\to\infty$.
Define $\wt{T}_t^n$ as in Step~2.
Employing H\"{o}lder's inequality, \eqref{bdFt}, \eqref{kashmir}, and \eqref{bduU},
we find some $c_t>0$, depending only on $\sup_n|\ee_n|$ and $V$ besides $t$, such that
\begin{align*}
\sup_{n\in\NN\cup\{\infty\}}
\big\|\wt{T}^{n}_t(\Psi_m-\Psi)(\V{x}_n)\big\|\le c_t\|e^{t\Delta/2}\|_{1,\infty}^{\eh}
\|\Psi_m-\Psi\|_2\xrightarrow{\;\;m\to\infty\;\;}0.
\end{align*}
We conclude that \eqref{corinna} actually holds for general $V$ and all square-integrable $\Psi$.

{\em Step~4.} We now apply the result of Step~3 to the ground state eigenvectors $\Phi_n$ and
$\Phi$. Define $\wt{T}^{n}_t$ and $\wt{T}_t$ as in Step~1, but now for general $V$. Then
\begin{align*}
\big\|(\wt{T}^{n}_t\Phi_n)(\V{x}_n)-(\wt{T}_t\Phi)(\V{x})\big\|
&\le\big\|(\wt{T}^{n}_t\Phi)(\V{x}_n)-(\wt{T}_t\Phi)(\V{x})\big\|
+\|\wt{T}^n_t\|_{2,\infty}\|\Phi_n-\Phi\|_2,
\end{align*}
for all $n\in\NN$. Since $\|\wt{T}^n_t\|_{2,\infty}$ is uniformly bounded in $n\in\NN$
(again by H\"{o}lder's inequality, \eqref{bdFt}, \eqref{kashmir}, and \eqref{bduU}), we infer
from Theorem~\ref{thmL2cont} and Step~3 that the left hand side of the previous estimate goes to zero, 
as $n\to\infty$. Since $\HV_n\to\HV$ in norm resolvent sense, we also know that
$E_n:=\inf\sigma(\NV_n)\to E:=\inf\sigma(\NV)$, whence
$\Phi_n(\V{x}_n)=e^{tE_n}(\wt{T}_t^n\Phi_n)(\V{x}_n)$ converges to
$e^{tE}(\wt{T}_t\Phi)(\V{x})=\Phi(\V{x})$, as claimed.
\end{proof}

\begin{thm}\label{thmcontN}
Consider the case $\Geb=\RR^3$ and assume that, besides the binding condition
$\Sigma_{\RR^3,\infty}>E_{\RR^3,\infty}$, also the infrared regularity condition
$\omega^{-\nf{3}{2}}\eta\in L^2_\loc(\RR^3)$ is satisfied. Then the statements of 
Theorem~\ref{thmL2cont} and Theorem~\ref{thmcont} still hold true when $\NV:=\NV_{\RR^3,\infty}$ is
put in place of $\HV_{\RR^3,\infty}$ and $\HV_n$ is substituted by the operator $H_n$ obtained
upon replacing $\ee$ by $\ee_n$ in the construction of $\NV$.
\end{thm}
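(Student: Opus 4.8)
The plan is to reduce the claim to Thm.~\ref{thmL2cont} and Thm.~\ref{thmcont} via the Gross transformation, using the latter purely as a vehicle to transport compactness and \emph{not} to keep track of phases, since it does not respect the canonical $\cQ$-space notion of positivity. Under the infrared regularity condition $\omega^{-\nf{3}{2}}\eta\in L^2_\loc(\RR^3)$ the function $\beta_\infty=\beta_{0,\infty}$, as well as its analogue $\beta_\infty^{(n)}$ built with coupling constant $\ee_n$, all lie in $L^2(\RR^3)$. Hence the Gross transformations $G:=G_{0,\infty}$ and $G_n$ — i.e.\ the operator \eqref{defGrosstrafo} with $\beta_\infty^{(n)}$ in place of $\beta_{0,\infty}$ — are unitaries on $L^2(\RR^3,\sF)$ satisfying $G\NV G^*=\HV_{\RR^3,\infty}$ and $G_nH_nG_n^*=\HV_n$, where $\HV_n$ is the operator appearing in Thm.~\ref{thmL2cont}; see Rem.~\ref{remdefrenNelson}. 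Since $\ee_n\to\ee$ we have $\|\beta_\infty^{(n)}-\beta_\infty\|_{L^2(\RR^3)}=|\ee_n-\ee|\,\|\omega^\mh\eta(\omega+\V{m}^2/2)^{-1}\|_{L^2(\RR^3)}\to0$, so the strong continuity of the Weyl representation and dominated convergence give $G_n\to G$ and $G_n^*\to G^*$ strongly on $L^2(\RR^3,\sF)$. Moreover $\inf\sigma(H_n)=\inf\sigma(\HV_n)\to\inf\sigma(\HV_{\RR^3,\infty})=\inf\sigma(\NV)=\EV$ by the norm resolvent convergence $\HV_n\to\HV_{\RR^3,\infty}$ (Lem.~\ref{lempert}) and unitary equivalence. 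Arguing exactly as in the proof of Thm.~\ref{thmL2cont}, the $\UV=\infty$ version of Cor.~\ref{corIRubd1}(1) (which rests on Prop.~\ref{propIRinfty}) yields some $n_0\in\NN$ such that the binding condition holds for $H_n$ for all $n\ge n_0$; since $G$ and the $G_n$ are unitary and $\HV_{\RR^3,\infty}$, $\HV_n$ ($n\ge n_0$) have ground state eigenvectors by Thm.~\ref{thmGSH}, so do $\NV$ and the $H_n$, and Thm.~\ref{thmPerronFrobenius} shows that each of these possesses a unique normalized, strictly positive ground state eigenvector; write $\Phi$ and $\Phi_n$ for these.

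\emph{$L^2$-continuity.} Put $\Psi_n:=G_n\Phi_n$ for $n\ge n_0$, which is a normalized ground state eigenvector of $\HV_n$ (no longer $\cQ$-positive, but that is irrelevant here). Given any subsequence of $\{\Phi_n\}_{n\ge n_0}$, the corresponding subsequence of $\{\Psi_n\}$ has, by the $\UV=\infty$ version of Cor.~\ref{corIRubd1}(2), a further subsequence $\Psi_{n_j}\to\Psi$ in $L^2(\RR^3,\sF)$; by Lem.~\ref{lemSRC} together with $\HV_n\to\HV_{\RR^3,\infty}$ in norm resolvent sense, $\Psi$ is a normalized ground state eigenvector of $\HV_{\RR^3,\infty}$. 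Since the $G_{n_j}^*$ are unitary and converge strongly to $G^*$, it follows that $\Phi_{n_j}=G_{n_j}^*\Psi_{n_j}\to G^*\Psi$, and $G^*\Psi$ is a normalized ground state eigenvector of $\NV$. As each $\Phi_{n_j}$ is non-negative in the $\cQ$-space sense and the positive cone is closed in $L^2(\RR^3\times\cQ,\Id\V{x}\otimes\GM)$, the limit $G^*\Psi$ is non-negative as well; since the ground state eigenspace of $\NV$ is one-dimensional and spanned by the strictly positive $\Phi$, we conclude $G^*\Psi=\Phi$. Thus every subsequence of $\{\Phi_n\}_{n\ge n_0}$ has a further subsequence converging to $\Phi$, whence $\Phi_n\to\Phi$ in $L^2(\RR^3,\sF)$.

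\emph{Joint continuity in the spatial variable.} This is the analogue of Thm.~\ref{thmcont}, and its four-step proof carries over with only notational changes once the Feynman-Kac operators $\wt T^n_t$ are replaced by the un-tilded ones $T^n_t:=T_{\RR^3,\infty,t}$ built from $\ee_n$ (using $e^{-t\NV_{\RR^3,\infty}}=T_{\RR^3,\infty,t}$ from \eqref{FKDfor}): every ingredient used there — the bound \eqref{bdFt}, the moment estimates \eqref{kashmir} and \eqref{bduU}, the continuity of $\mathfrak{k}\ni h\mapsto F_{\nf{t}{2}}(h)\in\LO(\sF)$, the $L^2$-to-$L^\infty$ bound of Prop.~\ref{propGSmu}(2), and the fact that $\ee_ne_{\V{x}_n}\to\ee e_{\V{x}}$ strongly on $\mathfrak{k}$ when $\V{x}_n\to\V{x}$ — holds equally for the processes $W_{\UV,t}(\V{x})$. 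This gives $(T^n_t\Psi)(\V{x}_n)\to(T_t\Psi)(\V{x})$ for all $\Psi\in L^2(\RR^3,\sF)$ and $\V{x}_n\to\V{x}$; specializing to $\ee_n\equiv\ee$ also produces continuous representatives of $\Phi$ and of the $\Phi_n$. Combining the above with the $L^2$-convergence $\Phi_n\to\Phi$, with the uniform bound $\sup_{n\ge n_0}\|T^n_t\|_{2,\infty}<\infty$, and with $\inf\sigma(H_n)\to\EV$, we obtain $\Phi_n(\V{x}_n)=e^{t\inf\sigma(H_n)}(T^n_t\Phi_n)(\V{x}_n)\to e^{t\EV}(T_t\Phi)(\V{x})=\Phi(\V{x})$, which is the desired conclusion.

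The only step requiring an argument beyond a verbatim repetition of earlier ones is that the canonical $\cQ$-space positivity is not invariant under the Gross transformation, so the distinguished strictly positive ground state of $\HV_{\RR^3,\infty}$ cannot simply be transported to a strictly positive ground state of $\NV$ with a controlled phase. This is circumvented by running the compactness/subsequence scheme directly on $\{\Phi_n\}\subset L^2(\RR^3,\sF)$, importing compactness from the $\HV$-side through the unitaries $G_n$, and pinning the subsequential limit down by non-negativity rather than by any phase bookkeeping; everything else is supplied by the localization estimates of Prop.~\ref{propexploc}, the pull-through identity of Prop.~\ref{prop-IR} and Prop.~\ref{propIRinfty}, and the Feynman-Kac bounds already exploited in Thm.~\ref{thmL2cont} and Thm.~\ref{thmcont}.
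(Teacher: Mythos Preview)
Your proof is correct and follows the same two-part strategy as the paper: use the unitary equivalence via the Gross transformation to carry over the $L^2$-continuity from Thm.~\ref{thmL2cont}, and then literally repeat the Feynman-Kac argument of Thm.~\ref{thmcont} with the un-tilded operators $T_t^n$ for the joint spatial continuity. Your treatment is in fact more careful than the paper's terse ``it is clear,'' since you explicitly address the subtlety that the Gross transformation does not preserve $\cQ$-space positivity and therefore pin down the subsequential limit via the closedness of the positive cone rather than by transporting phases.
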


\begin{proof}
Under the condition $\omega^{-\nf{3}{2}}\eta\in L^2_\loc(\RR^3)$, the Nelson operators and their
non-Fock versions are unitarily equivalent, whence it is clear that the assertion of 
Theorem~\ref{thmL2cont} carries over to $\NV$ and $\NV_n$. To prove the convergence
$\Phi_n(\V{x}_n)\to\Phi(\V{x})$, $n\to\infty$, for ground state eigenvectors of 
$H_n$ and $H$, we can literally copy the proof of 
Theorem~\ref{thmcont}, just dropping the tildes everywhere.
\end{proof}

\begin{rem}\label{remcontrange}
Let $t>0$, $\Psi\in L^2(\RR^3,\sG)$, and write again $\HV=\HV_{\RR^3,\infty}$
and $\NV=\NV_{\RR^3,\infty}$.
If we choose a constant sequence of coupling constants $\ee_n=\ee$, $n\in\NN$, in the
proof of Theorem~\ref{thmcont}, then the result of its third step shows that $e^{-t\HV}\Psi$ 
has a unique continuous representative, which is given by the right hand side of the 
corresponding Feynman-Kac formula. The same remark applies to $e^{-tH}\Psi$. In the latter case
this re-proves a part of \cite[Thm.~8.8]{MatteMoeller2017}.
\end{rem}


\section{Absence of Ground States}\label{secabsence}

\noindent
In this short section we complement the existence results of the previous one by proving 
the non-existence of ground state eigenvectors of the massless ($\mu=0$)
Nelson operators $\NV_{\Geb,\UV}$ with $\UV\in(0,\infty]$ in the infrared singular case where
\begin{align}\label{IRsing}
\ee^2\int_{\{|\V{k}|<1\}}\frac{\eta(\V{k})^2}{\omega(\V{k})^3}\Id\V{k}=\infty.
\end{align}
If a binding condition is fulfilled, then this result is new only for $\UV=\infty$; see 
\cite{Hirokawa2006,Panati2009}.
Thanks to an additional argument based on the bound \eqref{resexploc2aN} we can,
however, drop the binding condition in the next theorem. Apart from this its proof is a simple 
modification of the one given for finite $\UV$ in \cite[Thm.~2.5(2)]{DerezinskiGerard2004} and 
\cite[\textsection5]{Panati2009}. Results based on path integration techniques proving the absence of 
ground states in Nelson type models with confining potentials and ultraviolet regularized interactions  
can be found in \cite{GHPS2009,GHPS2012,LHB2011,LorincziMinlosSpohn2002}.
Ultraviolet regularized fiber Hamiltonians
are treated non-perturbatively in \cite{Dam2018}. A non-perturbative proof of the absence
of ground states for renormalized fiber Hamiltonians in the massless Nelson model has been achieved
in~\cite{DamHinrichs2021}.

\begin{thm}\label{thmabsence}
Let $\mu=0$, $\UV\in(0,\infty]$, and assume that \eqref{IRsing} is satisfied.
Then $E_{\Geb,\UV}$ is not an eigenvalue of $\NV_{\Geb,\UV}$.
\end{thm}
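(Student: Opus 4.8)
The plan is to argue by contradiction, assuming that $\Phi$ is a normalized ground state eigenvector of $\NV_{\Geb,\UV}$ at energy $E_{\Geb,\UV}$, and to derive a contradiction with the infrared singularity \eqref{IRsing}. The essential mechanism, going back to \cite{DerezinskiGerard2004,Panati2009}, is a \emph{pull-through} type lower bound on $\|(a\Phi)(\V{k})\|$ of the order $|\ee\eta(\V{k})|\,\omega(\V{k})^{-\nf{3}{2}}$ for small $|\V{k}|$: if such a bound holds, then
\begin{align*}
\int_{\{|\V{k}|<1\}}\|(a\Phi)(\V{k})\|^2\,\Id\V{k}
&\gtrsim\ee^2\int_{\{|\V{k}|<1\}}\frac{\eta(\V{k})^2}{\omega(\V{k})^3}\,\Id\V{k}=\infty,
\end{align*}
whereas the left hand side is finite because $\Phi\in\dom(\Id\Gamma(\omega))\subset\dom(a)$ and $\int_{\RR^3}\omega(\V{k})\|(a\Phi)(\V{k})\|^2\Id\V{k}<\infty$ by \eqref{fordGak}; since $\omega$ is bounded below on $\{|\V{k}|<1\}$ only by $0$ this needs a touch of care, but $\omega(\V{k})^{-1}\ge 1$ on $\{|\V{k}|\le 1\}$ when $\mu=0$ makes it work in the right direction, so the contradiction is genuine. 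So the whole proof reduces to establishing this pointwise lower bound on $a\Phi$.

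First I would treat the case of finite $\UV$, where $G_{0,\UV}$ is a genuine Gross transformation (note $\beta_{0,\UV}$ may fail to be in $L^2$ when $\mu=0$, but only because of the infrared; since $\UV<\infty$ the only issue is at $\V{k}=0$, and for the pull-through argument one works with an infrared cutoff and removes it, or equivalently transfers the problem to $\HV_{\Geb,\UV}$). The cleanest route is to use the formula for $a\Phi_{\Geb,\UV}$ already established in Prop.~\ref{prop-IR}: writing $\Phi=\Phi_{\Geb,\UV}$ and using \eqref{aki1}, one has, for a.e.\ $\V{k}$,
\begin{align*}
(a\Phi)(\V{k})&=-\sum_{j=1}^2\Xi_{j}(\V{k})\beta_{\UV}(\V{k})\V{k}\cdot\V{\Upsilon}_j
-\V{\Xi}_{3}(\V{k})\cdot\V{k}\beta_{\UV}(\V{k})\Upsilon_3,
\end{align*}
which is the \emph{wrong} direction (an upper bound) as it stands. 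To get a lower bound one instead runs the pull-through commutator computation but keeps the leading singular term explicitly: commuting $a(g)$ with $\HV_{\Geb,\UV}-E_{\Geb,\UV}$ produces, besides $a(\omega g)$, the term $-\SPn{g}{e_{\hat{\V{x}}}\V{m}\beta_\UV}\cdot(-i\nabla+\vp(e_{\hat{\V{x}}}\V{m}\beta_\UV))\Phi$, so that for a.e.\ $\V{k}$,
\begin{align*}
(a\Phi)(\V{k})+\beta_\UV(\V{k})\V{k}\cdot\big(\text{bounded vector}\big)
&=\omega(\V{k})\,R_{\Geb,\UV}(\V{k})\,(\text{bounded in }L^2),
\end{align*}
up to the harmless $\chi$-regularizations. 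One then pairs with a fixed nonzero $\psi$ (e.g.\ a suitable projection of $\Phi$ itself) to obtain $\|(a\Phi)(\V{k})\|\ge |\beta_\UV(\V{k})|\,|\V{k}|\cdot|\langle\psi,\,\hat{\V{x}}$-term$\rangle|-C\,\omega(\V{k})/(\text{spectral gap or }\omega(\V{k}))$; since $\beta_\UV(\V{k})|\V{k}|\sim \ee\eta(\V{k})\omega(\V{k})^{-\nf{1}{2}}(\omega(\V{k})+\V{k}^2/2)^{-1}\sim\ee\eta(\V{k})\omega(\V{k})^{-\nf{3}{2}}$ near $\V{k}=0$ and the error is $O(1)$, the claimed lower bound follows \emph{provided} the matrix element does not vanish. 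This non-vanishing is where the real work is: one uses strict positivity of $\Phi$ (Thm.~\ref{thmPerronFrobenius}) to ensure that the relevant inner product, essentially $\SPn{\Phi(\V{x})}{\hat{\V{x}}\Phi(\V{x})}$ integrated against a positive weight, is bounded away from zero — or, following \cite{DerezinskiGerard2004,Panati2009} more literally, one picks the test vector to be $\Phi$ and tracks the identity so that the coefficient of the singular term is $\|\Phi\|^2=1$ up to controllable corrections.

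The one place where I expect to have to add something beyond \cite{DerezinskiGerard2004,Panati2009} — and where the text flags the novelty — is removing the binding condition. Without binding we have no exponential localization and, in particular, no a priori control of $\hat{\V{x}}\Phi$, so the vectors $\V{\Upsilon}_1,\V{\Upsilon}_2,\Upsilon_3$ in Prop.~\ref{prop-IR} need not be well-defined. The remedy indicated is the resolvent bound \eqref{resexploc2aN}: even when the full spectral localization machinery is unavailable, $(\HV_{\Geb,\UV}-E_{\Geb,\UV}+\omega(\V{k}))^{-1}$ still maps into a weighted space with polynomial-in-$\langle\hat{\V{x}}\rangle$ control at fixed $\V{k}\ne 0$, with a bound that is \emph{integrable against} $\eta(\V{k})^2\omega(\V{k})^{-3}\Id\V{k}$ near the origin in the combination that appears — more precisely, one runs the pull-through directly on $R_{\Geb,\UV}(\V{k})$, using \eqref{resexploc2aN} to make sense of and bound the $\langle\hat{\V{x}}\rangle$-weighted quantities \emph{locally in $\V{x}$} (cf.\ the use of $\vt_N$ in the proof of Prop.~\ref{propIRinfty}), and sends the spatial cutoff to infinity at the end. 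For $\UV=\infty$ the argument is then identical after first passing through $\HV_{\Geb,\infty}$ via the norm-resolvent limit and the extended pull-through formula of Prop.~\ref{propIRinfty} (with $a_\omega$ in place of $a$), exactly as \eqref{vernon0}--\eqref{B3alt} were used there; since $\NV_{\Geb,\infty}$ and $\HV_{\Geb,\infty}$ differ by the limiting Gross transformation only when the left side of \eqref{IRsing} is finite, in the present infrared-singular regime the non-existence for $\NV_{\Geb,\infty}$ follows from that for $\HV_{\Geb,\infty}$ together with the observation that any eigenvector of one would transform to an eigenvector of the other after an infrared cutoff, which the same $a\Phi$ lower bound forbids. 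The main obstacle, to summarize, is the combination of two things at once: (i) extracting a two-sided (not merely upper) estimate on $(a\Phi)(\V{k})$ with the correct $\omega^{-\nf{3}{2}}$ singularity and a non-degenerate coefficient, and (ii) doing so with only the weak resolvent bound \eqref{resexploc2aN} in hand rather than exponential localization, which forces the spatial-cutoff-and-limit bookkeeping.
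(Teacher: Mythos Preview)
Your approach has a fundamental gap: you route the pull-through through the Gross-transformed operator $\HV_{\Geb,\UV}$ and Prop.~\ref{prop-IR}, but that operator is precisely the one from which the infrared singularity has been \emph{removed}. Concretely, your asymptotic is wrong: for $\mu=0$ one has $|\V{k}|\beta_\UV(\V{k})=\ee\eta(\V{k})|\V{k}|^{\eh}(|\V{k}|+\V{k}^2/2)^{-1}\sim\ee\eta(\V{k})|\V{k}|^{-\eh}$ near $\V{k}=0$, not $|\V{k}|^{-\nf{3}{2}}$. This is not a bookkeeping slip --- it is the whole point of the Gross transformation, and it is why Thm.~\ref{thmGSH} shows that $\HV_{\Geb,\UV}$ \emph{does} have a ground state even under \eqref{IRsing}. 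Consequently no pull-through on $\HV$ can yield absence, and your final paragraph gets the logic backwards: in the infrared-singular regime $\NV_{\Geb,\infty}$ and $\HV_{\Geb,\infty}$ are \emph{not} unitarily equivalent, so non-existence for $\NV$ cannot be inherited from $\HV$ (which in fact has a ground state). Also, Prop.~\ref{prop-IR} itself assumes the binding condition, so you cannot invoke it as stated.

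The paper instead runs the pull-through directly on $\NV_{\Geb,\UV}$, whose interaction $\vp(e_{\hat{\V{x}}}f_\UV)$ gives $[a(g),\vp(e_{\hat{\V{x}}}f_\UV)]=\SPn{g}{e_{\hat{\V{x}}}f_\UV}$ and hence, after localizing in $\V{p}$ and passing $\UV\to\infty$ (using Lem.~\ref{lemstrresconvdGN} for $\fdom(\NV_{\Geb,\infty})\subset\fdom(\Id\Gamma(\omega))$), the exact identity
\[
(a\Phi_\infty)(\V{p})=-\ee\eta(\V{p})\omega(\V{p})^{-\eh}I_{\Geb,\infty}(\V{p})e^{-i\V{p}\cdot\hat{\V{x}}}\Phi_\infty,
\qquad I_{\Geb,\infty}(\V{p})=(\NV_{\Geb,\infty}-E_{\Geb,\infty}+\omega(\V{p}))^{-1}.
\]
Splitting $e^{-i\V{p}\cdot\hat{\V{x}}}=1+(e^{-i\V{p}\cdot\hat{\V{x}}}-1)$ and using that $\Phi_\infty$ is a ground state so $I_{\Geb,\infty}(\V{p})\Phi_\infty=\omega(\V{p})^{-1}\Phi_\infty$ produces the singular leading term $-\ee\eta(\V{p})\omega(\V{p})^{-\nf{3}{2}}\Phi_\infty$ with unit coefficient --- no positivity or non-degeneracy argument is needed. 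The binding condition is avoided not by weighted-space mapping properties of the resolvent as you suggest, but by multiplying by a hard spatial cutoff $\vt_N=1_{\{|\hat{\V{x}}|\le N\}}$ and controlling the remainder via the exponential resolvent bound \eqref{resexploc2aN} with weight $F_{\V{p}}(\V{x})=\omega(\V{p})^\eh\langle\V{x}\rangle$: one gets $\|\vt_N I_{\Geb,\infty}(\V{p})(e^{-i\V{p}\cdot\hat{\V{x}}}-1)\Phi_\infty\|\le 2e^N\omega(\V{p})^{-\eh}\|\Phi_\infty\|$, hence
\[
\int_{\{|\V{p}|\le 1\}}\big\|(a\vt_N\Phi_\infty)(\V{p})+\ee\eta(\V{p})\omega(\V{p})^{-\nf{3}{2}}\vt_N\Phi_\infty\big\|^2\Id\V{p}<\infty,
\]
which under \eqref{IRsing} forces $\vt_N\Phi_\infty=0$ for every $N$ (cf.\ \cite[Lem.~2.6]{DerezinskiGerard2004}).
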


\begin{proof}
We prove the theorem only for $\NV_{\Geb,\infty}$. Then the general case can be included by
choosing $\eta$ appropriately. Let $\Phi_\infty\in\Ran(1_{\{E_{\Geb,\infty}\}}(\NV_{\Geb,\infty}))$.
We have to show that $\Phi_\infty=0$. To this end we proceed in two steps. In the first one we
derive a formula for $(a\Phi_\infty)(\V{p})$ by modifying the usual ``pull-through type"
argument. In the second step we present the aforementioned version of the argument from
\cite{DerezinskiGerard2004,Panati2009} that avoids the use of a binding condition.

{\em Step 1.} We pick some $\chi\in C_0(\RR,\RR)$ with $\chi(0)=1$ 
and set $\wt{\chi}(t):=t\chi(t)$, $t\in\RR$. For every $\UV\in(0,\infty)$, we further put 
$\Phi_\UV:=\chi(\NV_{\Geb,\UV}-E_{\Geb,\UV})\Phi_\infty\in\dom(\NV_{\Geb,\UV})$. 
Then the convergence in norm resolvent sense
$\NV_{\Geb,\UV}-E_{\Geb,\UV}\to\NV_{\Geb,\infty}-E_{\Geb,\infty}$ entails
\begin{align}\label{bjarke0}
\Phi_\UV\longrightarrow\Phi_\infty,\quad\wt{\chi}(\NV_{\Geb,\UV}-E_{\Geb,\UV})\Phi_\infty
\longrightarrow0,\qquad\UV\to\infty.
\end{align}

Let $\V{p}\in\RR^3\setminus\{0\}$ and let $g\in L^2(\RR^3)$ have a compact support in
$\RR^3\setminus\{0\}$. We further set
\begin{align*}
I_{\Geb,\UV'}(\V{p}):=(\NV_{\Geb,\UV'}-E_{\Geb,\UV'}+\omega(\V{p}))^{-1},\quad\UV'\in(0,\infty].
\end{align*}
According to \cite{GriesemerWuensch2017} the form domain of every 
$\NV_{\Geb,\UV'}$ with finite or infinite $\UV'$ is contained in the form domain of 
$\Id\Gamma(\omega)$. (See also the explanation in the proof of Lemma~\ref{lemstrresconvdGN} below.) 
In particular, the range of $I_{\Geb,\UV'}(\V{p})$ is contained in the domain of $\ad(\omega^sg)$, 
for all $s\ge0$. With the help of  \eqref{OpformelNV} we can proceed along the lines of the proof of 
Proposition~\ref{prop-IR}, with $\Phi_\UV$ put in place of the ground state eigenvector $\Phi_{\Geb,\UV}$ 
considered there, to obtain the relation
\begin{align}\nonumber
\SPn{\ad(g)I_{\Geb,\UV}(\V{p})\Psi}{\wt{\chi}(\NV_{\Geb,\UV}-E_{\Geb,\UV})\Phi_\infty}
&=\SPn{\ad(g)\Psi}{\Phi_\UV}
\\\nonumber
&\quad+\SPn{\ad([\omega-\omega(\V{p})]g)I_{\Geb,\UV}(\V{p})\Psi}{\Phi_\UV}
\\\label{bjarke1}
&\quad+\SPn{I_{\Geb,\UV}(\V{p})\Psi}{\SPn{g}{e_{\hat{\V{x}}}f_\UV}\Phi_\UV},
\end{align}
for all $\Psi\in\sD(\Geb,\fdom(\Id\Gamma(\omega)))$ and finite $\UV$.

Let $\UV_0>0$ be so large that the open ball of radius $\UV_0$ about the origin in $\RR^3$ contains 
the support of $g$. Then the bounded multiplication operator defined by the function
\begin{align}\label{bjarke1b}
\SPn{g}{e_{\V{x}}f_\UV}=\int_{\RR^3}\ol{g(\V{k})}e^{-i\V{k}\cdot\V{x}}f_\infty(\V{k})\Id\V{k}
=:Q_{g}(\V{x}),\quad\V{x}\in\Geb,
\end{align}
actually is independent of $\UV\in[\UV_0,\infty)$. The subsequent 
Lemma~\ref{lemstrresconvdGN} further implies
\begin{align}\label{bjarke2}
\ad(\omega^sg)I_{\Geb,\UV}(\V{p})\Psi\xrightarrow{\;\;\UV\to\infty\;\;}
\ad(\omega^sg)I_{\Geb,\infty}(\V{p})\Psi,
\end{align}
for all $\Psi\in L^2(\Geb,\sF)$ and $s\ge0$. Passing to the limit $\UV\to\infty$ in \eqref{bjarke1} and
taking \eqref{bjarke0}, \eqref{bjarke1b}, \eqref{bjarke2}, and 
$\Phi_\infty\in\fdom(\Id\Gamma(\omega))$ into account, we deduce that
\begin{align}\nonumber
a(g)\Phi_\infty=I_{\Geb,\infty}(\V{p})a([\omega(\V{p})-\omega]g)\Phi_\infty
-I_{\Geb,\infty}(\V{p})Q_{g}\Phi_\infty.
\end{align}
As in the proof of Proposition~\ref{prop-IR}, an argument based on the Lebesgue point theorem
now leads to the first identity in
\begin{align}\nonumber
(a\Phi_\infty)(\V{p})&=-\ee\eta(\V{p}){\omega(\V{p})^\mh}I_{\Geb,\infty}(\V{p})
e^{-i\V{p}\cdot\hat{\V{x}}}\Phi_\infty
\\\label{IRmassless}
&=-\ee\eta(\V{p}){\omega(\V{p})^{-\nf{3}{2}}}\Phi_\infty
-\ee\eta(\V{p}){\omega(\V{p})^\mh}I_{\Geb,\infty}(\V{p})
(e^{-i\V{p}\cdot\hat{\V{x}}}-1)\Phi_\infty,
\end{align}
for a.e. $\V{p}\not=0$.

{\em Step 2.}
Let $N\in\NN$ and let $\vt_N$ denote multiplication with the characteristic function of
$\{|\V{x}|\le N\}$ in $L^2(\Geb,\sF)$. Define
$F_{\V{p}}(\V{x}):=\omega(\V{p})^\eh\langle\V{x}\rangle$, for all $\V{p},\V{x}\in\RR^3$.
Then $|\nabla F_{\V{p}}|\le\omega(\V{p})^\eh$. For non-zero $\V{p}$, Ex.~\ref{exresexpabs}
implies that $e^{-F_{\V{p}}}I_{\Geb,\infty}(\V{p})e^{F_{\V{p}}}$ extends to a bounded operator
on $L^2(\Geb,\sF)$ with norm $\le2/\omega(\V{p})$. Together with the elementary bounds
$$
e^{-F_{\V{p}}(\V{x})}|e^{-i\V{p}\cdot{\V{x}}}-1|\le|\V{p}||\V{x}|e^{-|\V{p}|^\eh|\V{x}|}
\le\frac{|\V{p}|^\eh}{e},
$$
this gives
\begin{align*}
0<|\V{p}|\le1\quad\Rightarrow\quad\|\eta(\V{p}){\omega(\V{p})^\mh}\vt_NI_{\Geb,\infty}(\V{p})
(e^{-i\V{p}\cdot\hat{\V{x}}}-1)\Phi_\infty\|&\le\frac{2e^N}{|\V{p}|}\|\Phi_\infty\|,
\end{align*}
which in conjunction with \eqref{IRmassless} permits to get
\begin{align*}
\int_{\{|\V{p}|\le1\}}\|(a\vt_N\Phi_\infty)(\V{p})
+\ee\eta(\V{p}){\omega(\V{p})^{-\nf{3}{2}}}\vt_N\Phi_\infty\|^2\Id\V{p}
&\le16\pi\ee^2e^{2N}\|\Phi_\infty\|^2.
\end{align*}
Under the condition \eqref{IRsing} the previous bound can, however, only be true provided that
$\vt_N\Phi_\infty=0$; compare \cite[Lem.~2.6]{DerezinskiGerard2004}. 
Since $N\in\NN$ was arbitrary, we find $\Phi_\infty=0$.
\end{proof}

The next lemma holds again for arbitrary boson masses $\mu\ge0$.

\begin{lem}\label{lemstrresconvdGN}
For every $\UV\in[0,\infty]$, the form domain of $\NV_{\Geb,\UV}$ is contained in
the form domain of $\Id\Gamma(\omega)$ and
\begin{align}\label{strresconvdGN}
\Id\Gamma(\omega)^\eh(\NV_{\Geb,\UV}-E_{\Geb,\UV}+\zeta)^{-1}\Psi
\xrightarrow{\;\;\UV\to\infty\;\;}
\Id\Gamma(\omega)^\eh(\NV_{\Geb,\infty}-E_{\Geb,\infty}+\zeta)^{-1}\Psi,
\end{align} 
for all $\Psi\in L^2(\Geb,\sF)$ and $\zeta>0$.
\end{lem}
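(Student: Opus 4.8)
The plan is to transfer everything to the already-constructed Gross transformed operators $\HV_{\Geb,K,\UV}$, for which Thm.~\ref{thmrbUV} supplies $\UV$-uniform control, and to exploit that the Gross transformation interacts with $\Id\Gamma(\omega)$ through an explicit, uniformly bounded correction. Fix once and for all a constant $K\in(0,\infty)$. Then $\beta_{K,\UV}\in L^2(\RR^3)$ for every $\UV\in[0,\infty]$, the unitaries $G_\UV:=G_{K,\UV}$ intertwine $\NV_{\Geb,\UV}$ and $\HV_{\Geb,K,\UV}$ (Prop.~\ref{propGrosstrafo} for $\UV<\infty$, Rem.~\ref{remdefrenNelson} for $\UV=\infty$), $\fdom(\HV_{\Geb,K,\UV})=\QGV$, and $\inf\sigma(\HV_{\Geb,K,\UV})=E_{\Geb,\UV}$, for all $\UV\in[0,\infty]$. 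Since $K>0$ the functions $\omega^\eh\beta_{K,\UV}=\ee\eta\,(\omega+\V{m}^2/2)^{-1}1_{\{K<|\V{m}|\le\UV\}}$ and $(\omega^\eh\vee\omega)\beta_{K,\UV}$ have no infrared singularity and decay like $|\V{k}|^{-2}$, hence lie in $L^2(\RR^3)$ and are dominated there by their $\UV=\infty$ versions. By the standard Weyl identity $\sW(g)^*\Id\Gamma(\omega)\sW(g)=\Id\Gamma(\omega)+\vp(\omega g)+\|\omega^\eh g\|^2$ (with the paper's conventions) and the relative bound \eqref{rbvp}, the fibrewise Weyl operators making up $G_\UV^{\pm1}$ leave $\fdom(\Id\Gamma(\omega))$ invariant and satisfy $\|\Id\Gamma(\omega)^\eh G_\UV^{\pm1}\Psi\|\le c_0\|(1+\Id\Gamma(\omega))^\eh\Psi\|$ with $c_0$ independent of $\UV$. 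The first assertion of the lemma follows at once: for finite $\UV$, $\fdom(\NV_{\Geb,\UV})=\QGV\subseteq\fdom(\Id\Gamma(\omega))$ by \eqref{defQGV}, and for $\UV=\infty$, $\fdom(\NV_{\Geb,\infty})=G_\infty^*\QGV\subseteq\fdom(\Id\Gamma(\omega))$ by the invariance just recorded; this also reproves the inclusion invoked in the proof of Thm.~\ref{thmabsence}.

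For the convergence I would first establish a $\UV$-uniform form bound. Put $\zeta_0:=1+\sup_{\UV\in[0,\infty]}|E_{\Geb,\UV}|<\infty$ (finite since $E_{\Geb,\UV}\to E_{\Geb,\infty}$ by norm resolvent convergence). For $\Psi\in\fdom(\NV_{\Geb,\UV})$ one has $G_\UV\Psi\in\QGV$ and $\|(\HV_{\Geb,K,\UV}+\zeta_0)^\eh G_\UV\Psi\|=\|(\NV_{\Geb,\UV}+\zeta_0)^\eh\Psi\|$, so the $\UV$-uniform form comparison \eqref{pernille0} and the Weyl bound of the previous paragraph, applied to $G_\UV\Psi$, give
\begin{align*}
\int_\Geb\|\Id\Gamma(\omega)^\eh\Psi(\V{x})\|^2\,\Id\V{x}\le c\big(\|(\NV_{\Geb,\UV}+\zeta_0)^\eh\Psi\|^2+\|\Psi\|^2\big),\qquad\UV\in[0,\infty],
\end{align*}
with $c$ independent of $\UV$. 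Taking $\Psi=R_\UV\chi$, where $R_\UV:=(\NV_{\Geb,\UV}-E_{\Geb,\UV}+\zeta)^{-1}$, bounds $\|\Id\Gamma(\omega)^\eh R_\UV\|$ uniformly in $\UV$. Since $R_\UV\chi\to R_\infty\chi$ in $L^2$ (norm resolvent convergence together with $E_{\Geb,\UV}\to E_{\Geb,\infty}$) and $R_\infty\chi\in\fdom(\Id\Gamma(\omega))$ by the first part, testing against vectors in a core of the self-adjoint operator $\Id\Gamma(\omega)^\eh$ yields $\Id\Gamma(\omega)^\eh R_\UV\chi\rightharpoonup\Id\Gamma(\omega)^\eh R_\infty\chi$. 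In a Hilbert space weak convergence together with convergence of norms implies strong convergence, so it remains to prove $\|\Id\Gamma(\omega)^\eh R_\UV\Psi\|\to\|\Id\Gamma(\omega)^\eh R_\infty\Psi\|$.

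For the convergence of norms I would return to the Gross picture. With $\wt\Phi_\UV:=G_\UV R_\UV\Psi=(\HV_{\Geb,K,\UV}-E_{\Geb,\UV}+\zeta)^{-1}G_\UV\Psi$ and $G_\UV\Id\Gamma(\omega)G_\UV^*=\Id\Gamma(\omega)-\vp(e_{\hat{\V{x}}}\omega\beta_{K,\UV})+\|\omega^\eh\beta_{K,\UV}\|^2$ fibrewise, one obtains
\begin{align*}
\|\Id\Gamma(\omega)^\eh R_\UV\Psi\|^2=\|\Id\Gamma(\omega)^\eh\wt\Phi_\UV\|^2-\int_\Geb\SPn{\wt\Phi_\UV(\V{x})}{\vp(e_{\V{x}}\omega\beta_{K,\UV})\wt\Phi_\UV(\V{x})}\Id\V{x}+\|\omega^\eh\beta_{K,\UV}\|^2\|\wt\Phi_\UV\|^2 .
\end{align*}
Here $\wt\Phi_\UV\to\wt\Phi_\infty$ in $L^2$ (Thm.~\ref{thmrbUV}(3), $E_{\Geb,\UV}\to E_{\Geb,\infty}$, and strong continuity of $\sW$), and $\{\wt\Phi_\UV\}$ is bounded in $\fdom(\Id\Gamma(\omega))$ by \eqref{pernille0} and the computation of the preceding paragraph; the decisive point is that in fact $\Id\Gamma(\omega)^\eh\wt\Phi_\UV\to\Id\Gamma(\omega)^\eh\wt\Phi_\infty$ in $L^2$. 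I would get this by splitting $\wt\Phi_\UV-\wt\Phi_\infty=\wt R_\UV(G_\UV-G_\infty)\Psi+(\wt R_\UV-\wt R_\infty)G_\infty\Psi$ with $\wt R_\UV:=(\HV_{\Geb,K,\UV}-E_{\Geb,\UV}+\zeta)^{-1}$: the first summand is controlled by the $\UV$-uniform bound $\|\Id\Gamma(\omega)^\eh\wt R_\UV\|\le c'$ (again from \eqref{pernille0}) and $\|(G_\UV-G_\infty)\Psi\|\to0$, and for the second one writes $G_\infty\Psi=(1+\Id\Gamma(\omega))^\eh\phi$ with $\phi:=(1+\Id\Gamma(\omega))^{\mh}G_\infty\Psi\in\fdom(\Id\Gamma(\omega))$ and applies the \emph{second} limit relation of Thm.~\ref{thmrbUV}(4) with $\zeta_\UV:=\zeta-E_{\Geb,\UV}$ (which meets the hypotheses there, taking $\epsilon=\zeta/2$, and satisfies $\zeta_\UV\to\zeta_\infty$). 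Feeding these convergences back into the displayed identity — using also $\|(\omega^\eh\vee\omega)(\beta_{K,\UV}-\beta_{K,\infty})\|\to0$ and $\|\omega^\eh\beta_{K,\UV}\|\to\|\omega^\eh\beta_{K,\infty}\|$ by dominated convergence, and \eqref{rbvp} to control the cross term — shows $\|\Id\Gamma(\omega)^\eh R_\UV\Psi\|^2\to\|\Id\Gamma(\omega)^\eh R_\infty\Psi\|^2$, and the proof is complete.

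The only genuinely delicate step is the convergence $\Id\Gamma(\omega)^\eh\wt\Phi_\UV\to\Id\Gamma(\omega)^\eh\wt\Phi_\infty$ with the \emph{unbounded} operator $\Id\Gamma(\omega)^\eh$ in front: norm resolvent convergence of the $\HV_{\Geb,K,\UV}$ alone does not suffice, and it is exactly here that the quantitative bound Thm.~\ref{thmrbUV}(4), rather than merely Thm.~\ref{thmrbUV}(3), is used. Everything else is routine: Fock-space relative bounds such as \eqref{rbvp}, strong continuity of the Weyl representation, and dominated convergence for the explicit functions $\beta_{K,\UV}$.
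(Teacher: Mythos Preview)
Your proof is correct and uses the same essential inputs as the paper—the Gross intertwining $\NV_{\Geb,\UV}=G_{K,\UV}^*\HV_{\Geb,K,\UV}G_{K,\UV}$, the uniform form comparison \eqref{pernille0}, the Weyl identity of Lem.~\ref{lemWdG}, and crucially the second limit of Thm.~\ref{thmrbUV}(4)—but the packaging is different. The paper argues directly: writing $\theta=1+\Id\Gamma(\omega)$, it factorizes
\[
\theta^\eh R_\UV\Psi
=\big(\theta^\eh G_{K,\UV}^*\theta^{-\eh}\big)\big(\theta^\eh\wt R_\UV\theta^\eh\big)\big(\theta^{-\eh}G_{K,\UV}\Psi\big),
\]
and then shows that each factor converges (the middle one in norm by Thm.~\ref{thmrbUV}(4), the right one by strong continuity of the Weyl representation, and the left one strongly by a dedicated Lem.~\ref{lemstrcontWeyldG}, whose proof is precisely the form identity \eqref{WdGammaform} you also invoke). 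This yields the strong convergence in one shot, without a weak/norm split. Your route—uniform bound $\Rightarrow$ weak convergence, then norm convergence via the scalar identity \eqref{WdGammaform} applied to $R_\UV\Psi=G_{K,\UV}^*\wt\Phi_\UV$—is a legitimate alternative; it trades the auxiliary Lem.~\ref{lemstrcontWeyldG} for tracking three separate terms in the norm formula and re-deriving the strong convergence $\Id\Gamma(\omega)^\eh\wt\Phi_\UV\to\Id\Gamma(\omega)^\eh\wt\Phi_\infty$ from Thm.~\ref{thmrbUV}(4) anyway. In short: same ingredients, the paper's factorization is shorter, your weak-plus-norm argument is more elementary in spirit but involves a detour once you already have $\Id\Gamma(\omega)^\eh\wt\Phi_\UV\to\Id\Gamma(\omega)^\eh\wt\Phi_\infty$.
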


\begin{proof}
The first assertion is proved in \cite{GriesemerWuensch2017}. (It follows from 
$\fdom(\HV_{\Geb,K,\UV})={\QGV}\subset\fdom(\Id\Gamma(\omega))$, 
$0<K<\UV\le\infty$, the relations
\eqref{trafoNelsonHamUV} and \eqref{defNelsonHam}, and Lemma~\ref{lemGrWubd}.)

To verify \eqref{strresconvdGN} we write 
$\theta:=\int_{\Geb}^\oplus(1+\Id\Gamma(\omega))\Id\V{x}$,
recall the notation \eqref{defGrosstrafo} for the Gross transformation, and pick some $K>0$.
In view of the second convergence relation asserted in Theorem~\ref{thmrbUV}(4), 
the strong convergence $G_{K,\UV}\to G_{K,\infty}$, $\UV\to\infty$, and the transformation formulas
\eqref{trafoNelsonHamUV} and \eqref{defNelsonHam} it suffices to show that
\begin{align}\label{Gthetaeh}
\sup_{K\le\UV\le\infty}\|\theta^\eh G_{K,\UV}^*\theta^\mh\|&\le1+\|\omega^\eh\beta_\infty\|,
\end{align}
as well as
\begin{align}\label{astronauta}
\theta^\eh G_{K,\UV}^*\theta^\mh\Psi\xrightarrow{\;\;\UV\to\infty\;\;}
\theta^\eh G_{K,\infty}^*\theta^\mh\Psi,\quad\Psi\in L^2(\Geb,\sF).
\end{align}
But \eqref{Gthetaeh} is a direct consequence of Lemma~\ref{lemGrWubd} (which presents a bound
from \cite{GriesemerWuensch2017}), while \eqref{astronauta} follows from 
Lemma~\ref{lemGrWubd}, Lemma~\ref{lemstrcontWeyldG}, and the dominated convergence theorem.
\end{proof}


\section{Path Measures Associated with Ground States}\label{secGibbs}

\noindent
{\em In this section we shall always assume that the binding condition 
\eqref{bindcond} and the infrared regularity condition $\omega^{-3}\eta^2\in L^1_\loc(\RR^3)$
are satisfied.} For simplicity we shall restrict our attention to the case 
\begin{align*}
\Geb=\RR^3\quad\text{and}\quad\UV=\infty,
\end{align*}
and we shall drop the subscripts $\Geb$ and $\UV$ in the notation, so that for instance 
\begin{align*}
\NV=\NV_{\RR^3,\infty},\quad E=E_{\RR^3,\infty},
\end{align*} 
just as in the introduction. (Then the ultraviolet regular case actually is
included since we still have the freedom to choose $\eta$.)
The symbol $\Phi$ will denote the continuous representative of the strictly 
positive normalized eigenvector of the Nelson operator $\NV$ corresponding to the
infimum of its spectrum $E$.
Finally, we fix some $\V{x}\in\RR^3$ throughout the whole section.

Our objective is to construct certain path measures associated with $\Phi$ and $\V{x}$
that permit to obtain nontrivial bounds on various expectation values with respect to 
$\Phi(\V{x})$. A similar analysis of ground state expectations in non-relativistic quantum field
theory has been pursued first in \cite{BHLMS2002}. The analogues of the path measures considered
in the latter article (for finite $\UV$) are obtained upon integrating the path measures constructed here
with respect to the probability density $\|\Phi(\cdot)\|$. 


\subsection{Martingales Associated with Ground States}\label{ssecGSmart}

\noindent
We start by defining the process
\begin{align*}
M_{t}(\V{x}):=e^{tE}W_{t}^V(\V{x})^*\Phi(\V{B}_t^{\V{x}}),\quad t\ge0,
\end{align*}
where we used the shorthand
\begin{align}\label{WVshorthand}
W_{t}^V(\V{x})&:=e^{-\int_0^tV(\V{B}_s^{\V{x}})\Id s}W_{\infty,t}(\V{x}).
\end{align}
Furthermore, we let $(\fG_t)_{t\ge0}$ denote the completion of the filtration generated by
the Brownian motion $\V{B}=(\V{B}_t)_{t\ge0}$, which is automatically right continuous.

\begin{lem}\label{lemMmart}
The process $(M_{t}(\V{x}))_{t\ge0}$ is a continuous $\sF$-valued martingale with respect to the
 filtration $(\fG_t)_{t\ge0}$. For each $t\ge0$, the random variable $M_{t}(\V{x})$ is in
 every $L^p(\Omega,\PP;\sF)$ with $1\le p<\infty$.
\end{lem}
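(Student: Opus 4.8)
The plan is to verify the two assertions separately: first the $L^p$-integrability of each $M_t(\V{x})$, then the martingale property.

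For integrability, fix $t\ge0$. The vector $\Phi$ is bounded as a map $\RR^3\to\sF$ since it has a continuous representative and, by the spatial decay estimate of Thm.~\ref{introthmexpdec} (or simply the bound $\|\Phi(\V{x})\|\le c$ noted in Rem.~\ref{introremGauss1}), we have $\sup_{\V{y}}\|\Phi(\V{y})\|<\infty$. Hence $\|M_t(\V{x})\|\le e^{tE}\|W_t^V(\V{x})\|\,\|\Phi(\V{B}_t^{\V{x}})\|\le c\,e^{tE}\,\|W_t^V(\V{x})\|$ pointwise on $\Omega$. Writing $\|W_t^V(\V{x})\|\le e^{-\int_0^tV(\V{B}_s^{\V{x}})\Id s}\|W_{\infty,t}(\V{x})\|$ (with the convention $e^{-\int_0^t V}$ understood via $V=V_+-V_-$ and only $V_-$ contributing a positive exponent in the relevant estimates), H\"older's inequality together with \eqref{kashmir} and \eqref{WinLp} gives $\EE[\|M_t(\V{x})\|^p]<\infty$ for every $p\in[1,\infty)$. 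This also shows $M_t(\V{x})\in L^1(\Omega,\PP;\sF)$, so conditional expectations are well-defined.

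For the martingale property and continuity, the natural route is to recognize $M_t(\V{x})$ as arising from the Feynman-Kac semigroup via a Markov property of the process $(W_{\infty,t}(\V{x}))_{t\ge0}$ established in \cite{MatteMoeller2017}. Concretely, for $0\le s\le t$ one wants to show $\EE[M_t(\V{x})\mid\fG_s]=M_s(\V{x})$. The key is the multiplicative cocycle/Markov identity for the Feynman-Kac integrand: conditionally on $\fG_s$, the increment $W_{\infty,t}(\V{x})$ factors as $W_{\infty,s}(\V{x})$ composed with an independent copy of $W_{\infty,t-s}$ evaluated at the shifted Brownian motion started at $\V{B}_s^{\V{x}}$, and likewise $e^{-\int_0^t V(\V{B}_r^{\V{x}})\Id r}$ splits as $e^{-\int_0^s V}\cdot e^{-\int_s^t V}$. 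Taking the conditional expectation and using the Markov property of $\V{B}$ together with the Feynman-Kac formula \eqref{FKintro} (in the form $e^{-(t-s)\NV}\Phi=e^{-(t-s)E}\Phi$, since $\Phi$ is the ground state eigenvector) yields
\begin{align*}
\EE[M_t(\V{x})\mid\fG_s]
&=e^{tE}W_{\infty,s}(\V{x})^*e^{-\int_0^sV(\V{B}_r^{\V{x}})\Id r}
\EE\big[\wt W_{t-s}^{V}(\V{y})^*\Phi(\wt{\V{B}}_{t-s}^{\V{y}})\big]\big|_{\V{y}=\V{B}_s^{\V{x}}}\\
&=e^{tE}W_s^V(\V{x})^*\big(e^{-(t-s)\NV}\Phi\big)(\V{B}_s^{\V{x}})
=e^{tE}W_s^V(\V{x})^*e^{-(t-s)E}\Phi(\V{B}_s^{\V{x}})=M_s(\V{x}),
\end{align*}
where $\wt{\V{B}},\wt W$ denote independent copies. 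Continuity of $t\mapsto M_t(\V{x})$ follows from the continuity of $\V{B}$, the continuity of the representative $\Phi(\cdot)$, and the (almost sure) continuity of $t\mapsto W_{\infty,t}(\V{x})$ in $\LO(\sF)$, which is part of the construction in \cite{MatteMoeller2017} (the processes $u_\infty$, $U^\pm_\infty$ are continuous and adapted, and $F_{t/2}$ depends continuously on its arguments).

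The main obstacle is making the conditional-expectation/cocycle step rigorous: one must locate and cite precisely the relevant Markov (or cocycle) property of the operator-valued Feynman-Kac integrand from \cite{MatteMoeller2017}, check that it applies to $W_{\infty,t}(\V{x})^*$ (adjoints, and the fact that $\Phi$ is only in $L^2$ a priori but has a bounded continuous representative), and handle the measurability of $\V{y}\mapsto\EE[\wt W^V_{t-s}(\V{y})^*\Phi(\wt{\V{B}}^{\V{y}}_{t-s})]$ so that evaluation at the random point $\V{B}_s^{\V{x}}$ is legitimate. Since \cite{MatteMoeller2017} is invoked repeatedly in this article precisely for such Markov properties (indeed the martingale $m_t(\V{x})$ in the introduction is built the same way), I expect this to be a direct citation plus a short verification rather than new work, but it is the step where all the care lies.
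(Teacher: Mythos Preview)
Your proposal is correct and follows essentially the same route as the paper: integrability via \eqref{kashmir}, \eqref{WinLp}, and the boundedness of $\Phi$; the martingale identity via the Markov property of the Feynman-Kac integrand proved in \cite[Prop.~5.8(4)]{MatteMoeller2017} combined with $T_{t-s}\Phi=e^{-(t-s)E}\Phi$. The paper also cites \cite[Rem.~5.4(1)\&(2)]{MatteMoeller2017} for the $\fG_s$-measurability of $M_s(\V{x})$, which you leave implicit in the adaptedness of the processes entering $W_{\infty,t}(\V{x})$.
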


\begin{proof}
On account of \eqref{kashmir}, \eqref{WinLp}, and the boundedness of $\Phi$ it is clear
that $M_{t}(\V{x})\in L^p(\Omega,\PP;\sF)$, for all $t\ge0$ and finite $p\ge1$.
For all $0\le s\le t<\infty$, a Markov property proven in 
\cite[Prop.~5.8 (4)]{MatteMoeller2017} further implies the first equality in
\begin{align*}
\EE^{\fF_s}[M_{t}(\V{x})]
&=e^{tE}W_{s}^V(\V{x})^*(T_{t-s}\Phi)(\V{B}_s^{\V{x}})=M_{s}(\V{x}),
\quad\text{$\PP$-a.s.;}
\end{align*}
recall that $T_{\tau}=T_{\RR^3,\infty,\tau}$ as we are dropping all subscripts $\RR^3$ and $\infty$
in the notation. The second equality follows from the relation $e^{(t-s)E}T_{t-s}\Phi=\Phi$.
From \cite[Rem.~5.4(1)\&(2)]{MatteMoeller2017} we infer that
$M_{s}(\V{x})$ is $\fG_s$-measurable and we conclude.
\end{proof}


\subsection{Construction of Path Measures Associated with Ground States}\label{ssecGSpathm}

\noindent
In what follows we suppose that $\V{B}_-=(\V{B}_{-t})_{t\ge0}$ be a three-dimensional Brownian 
motion on $(\Omega,\fF,(\fF_t)_{t\ge0},\PP)$ independent from $\V{B}$.  We shall put a minus sign in 
front of the time parameter of all probabilistic objects defined by means of $\V{B}_-$. This
leads to better looking formulas and facilitates comparison of our discussion with the previous
literature where the notion of two-sided Brownian motion was used. For instance, 
$(\fG_{-t})_{t\ge0}$, $({W}_{-t}^V(\V{x}))_{t\ge0}$, and $({M}_{-t}(\V{x}))_{t\ge0}$ are 
defined as before but with $\V{B}_-$ put in place of $\V{B}$. Then
Lemma~\ref{lemMmart} says that $({M}_{-t}(\V{x}))_{t\ge0}$ is $(\fG_{-t})_{t\ge0}$-adapted.
In particular, the path maps ${M}_{\bullet}(\V{x})$ and ${M}_{-\bullet}(\V{x})$ are independent.

For all $t\ge0$, we further put $\fH_t:=\sigma(\fG_t\cup\fG_{-t})$ and
\begin{align*}
m_{t}(\V{x}):=\SPn{{M}_{-t}(\V{x})}{{M}_{t}(\V{x})},\quad t\ge0.
\end{align*}

\begin{lem}\label{lemmmart}
The process $(m_{t}(\V{x}))_{t\ge0}$ is a positive martingale with respect to the filtration 
$(\fH_t)_{t\ge0}$ starting at $\|\Phi(\V{x})\|^2>0$. For each $t\ge0$, the random variable 
$m_{t}(\V{x})$ is in every $L^p(\Omega,\PP)$ with $1\le p<\infty$. In particular,
\begin{align}\label{Gibbs3}
\EE[m_{t}(\V{x})]&=\|\Phi(\V{x})\|^2,\quad t\ge0.
\end{align}
\end{lem}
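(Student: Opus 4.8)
The plan is to verify the three assertions of Lemma~\ref{lemmmart} in turn: positivity, integrability (hence the $L^p$ claim), the martingale property, and finally the identity \eqref{Gibbs3}, which is just the martingale property evaluated between time $0$ and time $t$ together with the value at time $0$.

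First I would record that $m_0(\V{x})=\SPn{M_0(\V{x})}{M_0(\V{x})}=\SPn{\Phi(\V{x})}{\Phi(\V{x})}=\|\Phi(\V{x})\|^2$, since $W_0^V(\V{x})=\id_\sF$ and $e^{0\cdot E}=1$; strict positivity of this number is exactly the strict positivity of $\Phi(\V{x})$ established in Thm.~\ref{introthmgsN} (via Thm.~\ref{thmPerronFrobenius}). Positivity of $m_t(\V{x})$ for $t>0$ is more delicate: a priori $\SPn{M_{-t}(\V{x})}{M_t(\V{x})}$ is a pairing of two generically non-parallel vectors and need not be positive. Here I would invoke Thm.~\ref{thmWpos}: transferring to the $\cQ$-space picture via $\cU$, the operators $\cU W_{\infty,t}(\V{x})^*\cU^*$ are positivity improving, so $\cU M_t(\V{x})$ and $\cU M_{-t}(\V{x})$ are (pointwise on $\Omega$) strictly positive elements of $L^2(\cQ,\nu)$ — because $\Phi(\V{B}^{\V{x}}_t)$ and $\Phi(\V{B}^{\V{x}}_{-t})$ are non-negative and non-zero, being values of the strictly positive continuous representative $\Phi(\cdot)$. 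Their $L^2(\cQ,\nu)$ inner product is therefore strictly positive, and this inner product equals $m_t(\V{x})$. This is the step I expect to be the main obstacle, or at least the one requiring the most care, since one has to be sure the positivity-improving property passes through the Bochner integrals defining $W$ and through the two independent Brownian motions cleanly.

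For integrability, I would use the Cauchy–Schwarz inequality in $\sF$ to bound $|m_t(\V{x})|\le\|M_{-t}(\V{x})\|\,\|M_t(\V{x})\|$, then apply Cauchy–Schwarz on $(\Omega,\PP)$ together with Lem.~\ref{lemMmart}, which already gives $M_{\pm t}(\V{x})\in L^q(\Omega,\PP;\sF)$ for every finite $q$; since $\V{B}$ and $\V{B}_-$ are independent, $\EE[m_t(\V{x})^p]\le\EE[\|M_{-t}(\V{x})\|^{2p}]^{1/2}\EE[\|M_t(\V{x})\|^{2p}]^{1/2}<\infty$. For the martingale property with respect to $(\fH_t)_{t\ge0}$, the key point is that $\fH_t=\sigma(\fG_t\cup\fG_{-t})$ is generated by two independent filtrations, along which $M_\bullet(\V{x})$ and $M_{-\bullet}(\V{x})$ are respectively martingales (Lem.~\ref{lemMmart}). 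Thus for $0\le s\le t$, conditioning on $\fH_s$ and using independence of the two factors,
\begin{align*}
\EE^{\fH_s}[m_t(\V{x})]
&=\EE^{\fH_s}\big[\SPn{M_{-t}(\V{x})}{M_t(\V{x})}\big]
=\SPn{\EE^{\fG_{-s}}[M_{-t}(\V{x})]}{\EE^{\fG_s}[M_t(\V{x})]}
=\SPn{M_{-s}(\V{x})}{M_s(\V{x})}=m_s(\V{x}),
\end{align*}
where interchanging the $\sF$-inner product with the (Bochner) conditional expectation is legitimate because $M_{\pm t}(\V{x})\in L^2(\Omega,\PP;\sF)$ and the two factors live on independent $\sigma$-algebras, so the conditional expectation of the pairing factorizes. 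Finally, \eqref{Gibbs3} follows by taking $s=0$ in the martingale identity (or directly $\EE[m_t(\V{x})]=\EE[m_0(\V{x})]=\|\Phi(\V{x})\|^2$), using that $\fH_0$ is trivial up to null sets. I would present the positivity argument first, then integrability, then the martingale computation, then \eqref{Gibbs3} as a one-line corollary.
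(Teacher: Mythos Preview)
Your proposal is correct and follows essentially the same approach as the paper. Two minor differences: the paper's proof is actually silent on positivity (leaving it implicit in Thm.~\ref{thmWpos} and the strict positivity of $\Phi$, exactly as you spell out), and where you compress the martingale step into a single factorization $\EE^{\fH_s}[\SPn{M_{-t}}{M_t}]=\SPn{\EE^{\fG_{-s}}[M_{-t}]}{\EE^{\fG_s}[M_t]}$, the paper instead unpacks this via an explicit tower through the intermediate $\sigma$-algebra $\sigma(\fG_{-t}\cup\fG_s)$ before reducing to $\fG_{-s}$ and $\fG_s$ separately; both routes rely on the same independence of $\fG_\infty$ and $\fG_{-\infty}$ and are logically equivalent.
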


\begin{proof}
The existence of the $p$-th moments of $m_t(\V{x})$ for all finite $p\ge1$ follows immediately 
from Lemma~\ref{lemMmart}. Let $t>s\ge0$. Since $\fH_s\subset\sigma(\fG_{-t}\cup\fG_s)$ and 
$M_{-t}(\V{x})$ is $\sigma(\fG_{-t}\cup\fG_s)$-measurable,
\begin{align*}
\EE^{\fH_s}[m_{t}(\V{x})]&=\EE^{\fH_s}\Big[\EE^{\sigma(\fG_{-t}\cup\fG_s)}[
\SPn{{M}_{-t}(\V{x})}{{M}_{t}(\V{x})}]\Big]
\\
&=\EE^{\fH_s}\Big[\SPB{{M}_{-t}(\V{x})}{\EE^{\sigma(\fG_{-t}\cup\fG_s)}[{M}_{t}(\V{x})]}\Big]
\\
&=\EE^{\sigma(\fG_{-s}\cup\fG_s)}\Big[\SPB{{M}_{-t}(\V{x})}{\EE^{\fG_s}[{M}_{t}(\V{x})]}\Big]
\\
&=\SPB{\EE^{\sigma(\fG_{-s}\cup\fG_s)}[{M}_{-t}(\V{x})]}{\EE^{\fG_s}[{M}_{t}(\V{x})]}
\\
&=\SPB{\EE^{\fG_{-s}}[{M}_{-t}(\V{x})]}{\EE^{\fG_s}[{M}_{t}(\V{x})]}.
\end{align*}
In the third step we used the independence of $\fG_{-t}$ and $\fG_s$ and the $\fG_{-t}$-independence
of $M_{t}(\V{x})$. Likewise, we exploited the independence of $\fG_{-s}$ and $\fG_s$ and the
$\fG_s$-independence of $M_{-t}(\V{x})$ in the last step. In view of Lemma~\ref{lemMmart} the term
in the last line equals $m_{s}(\V{x})$.
\end{proof}

From now on we choose $\Omega$ to be equal to 
$$
\Omega_{\Wi}:=C([0,\infty),\RR^6),
$$
as we shall exploit that $\Omega_{\Wi}$ is a Polish space when equipped with the topology 
associated with locally uniform convergence. The symbol
$\PP_{\Wi}$ will denote the completed Wiener measure on $\Omega_{\Wi}$ giving probability
one to the set of continuous paths starting at $0$. The symbol
$\EE_{\Wi}$ will denote the corresponding expectation; similarly for conditionial 
expectations. Furthermore, $\V{B}$ and $\V{B}_{-}$ will from now on stand for the first three
and the last three components, respectively, of the canonical evaluation process on $\Omega_{\Wi}$. 
We put 
\begin{align*}
\mr{\fF}_t:=\sigma(\V{B}_s,\V{B}_{-s}:\,0\le s\le t),\quad\text{for all $t\ge0$, and}\quad
\mr{\fF}_*:=\bigcup_{t\ge0}\mr{\fF}_t.
\end{align*} 
Then $\mr{\fF}_*$ generates the Borel $\sigma$-algebra associated with the Polish 
topology on $\Omega_{\Wi}$,
$$
\sigma(\mr{\fF}_*)=\fB(\Omega_{\Wi}).
$$
From now on the filtration $(\fF_t)_{t\ge0}$ is chosen to be the completion of $(\mr{\fF}_t)_{t\ge0}$,
which is indeed right continuous. Then $\fF_t=\fH_t$, for all $t\ge0$, where $\fH_t$ was defined 
in front of Lemma~\ref{lemmmart}.

We are now in a position to introduce path measures associated with $\Phi$ and $\V{x}$
by means of a standard construction: First, we define $\mu_{\V{x},*}:\mr{\fF}_*\to\RR$ by
\begin{align*}
\mu_{\V{x},*}(A)&:=\frac{1}{\|\Phi(\V{x})\|^2}
\EE_{\Wi}[1_Am_{t}(\V{x})],\quad A\in\mr{\fF}_t,\,t\ge0.
\end{align*}
The set function $\mu_{\V{x},*}$ is well-defined in this way since, by virtue of Lem.~\ref{lemmmart},
\begin{align*}
\EE_{\Wi}[1_Am_{t}(\V{x})]&=\EE_{\Wi}\Big[1_A\EE_{\Wi}^{\fF_s}[m_{t}(\V{x})]\Big]
=\EE_{\Wi}[1_Am_{s}(\V{x})],\quad A\in\mr{\fF}_s,\,0\le s\le t.
\end{align*}
In view of \eqref{Gibbs3}, each restriction $\mu\restr_{\mr{\fF}_t}$ is a probability 
measure on $\mr{\fF}_t$, i.e., $\mu_{\V{x},*}$ is a {\em promeasure} in the nomenclature of
\cite{HackenbrochThalmaier1994}. By a wellknown result 
(see, e.g., \cite[Satz~1.25${}'$]{HackenbrochThalmaier1994})
every such promeasure on $\mr{\fF}_*$ is automatically $\sigma$-additive.
(Here the fact that $\Omega_{\Wi}$ or, more precisely, the spaces $C([0,t],\RR^6)$, $t>0$, are
Polish is used.)

\begin{defn}\label{defnpathmeasGSx}
By the preceding remarks and Carath\'{e}odory's extension theorem, $\mu_{\V{x},*}$ has a unique 
extension to a probability measure on $(\Omega_{\Wi},\fB(\Omega_{\Wi}))$.  We denote this extension 
by $\mu_{\V{x},\infty}$ and call it the {\em path measure associated with $\Phi$ and $\V{x}$.}
\end{defn}


\subsection{Analysis of Ground State Expectations Via Path Measures}\label{ssecGSEE}

\noindent
To analyze expectations with respect to $\mu_{\V{x},\infty}$, it is helpful to introduce a 
new family of measures, given by more explicit formulas, that converges to 
$\mu_{\V{x},\infty}$ in a suitable sense. For this we pick an arbitrary square-integrable, 
non-negative function $g:\RR^3\to\RR$, that is not a.e. equal to zero. For every $t>0$, we then define a probability measure 
$\mu_{\V{x},t}$ on $\fB(\Omega_{\Wi})$ by
\begin{align}\label{maike1}
\mu_{\V{x},t}(A):=\frac{1}{Z_{\V{x},t}}\EE_{\Wi}[1_A\sL_{t}(\V{x})g(\V{B}_{-t}^{\V{x}})
g(\V{B}_t^{\V{x}})],\quad A\in\fB(\Omega_{\Wi}),
\end{align}
with a normalization constant $Z_{\V{x},t}$ and
\begin{align*}
\sL_{t}(\V{x})&:=e^{-\int_0^t(V(\V{B}_s^{\V{x}})+V(\V{B}_{-s}^{\V{x}}))\Id s
+u_{t}+{u}_{-t}+\SPn{{U}_{-t}^-}{{U}_{t}^-}}.
\end{align*}
Here we continue using our convention to put a minus sign in front of the time parameter of all
processes that are defined by the earlier formulas but with $\V{B}_-$ put in place of $\V{B}$.
Let us explain why $Z_{\V{x},t}$ is indeed strictly positive: Recall first that
$\epsilon(0)=(1,0,0,\ldots \ )$ is the vacuum vector in $\sF$. Employing \eqref{Fexpv}, 
\eqref{Fstarexpv}, and \eqref{defW} we deduce that
\begin{align}\label{maike2a}
W_{t}^V(\V{x})^*\epsilon(0)&=e^{u_t-\int_0^tV(\V{B}_s^{\V{x}})\Id s}\epsilon(-e_{\V{x}}U_t^{-})
=e^{u_t-\int_0^tV(\V{B}_s^{\V{x}})\Id s}\Gamma(e_{\V{x}})\epsilon(-U_{t}^-),
\end{align}
and analogously for $-t$. This permits to get 
\begin{align}\label{maike2}
\sL_{t}(\V{x})g(\V{B}_{-t}^{\V{x}})g(\V{B}_t^{\V{x}})
&=\SPn{{W}_{-t}^V(\V{x})^*g(\V{B}_{-t}^{\V{x}})\epsilon(0)}{
W_{t}^V(\V{x})^*g(\V{B}_t^{\V{x}})\epsilon(0)},
\end{align}
for all $t>0$. The independence of random variables indexed by $t$ and $-t$ now implies
\begin{align}\nonumber
Z_{\V{x},t}&=\EE_{\Wi}\Big[\SPn{{W}_{-t}^V(\V{x})^*g(\V{B}_{-t}^{\V{x}})
\epsilon(0)}{W_{t}^V(\V{x})^*g(\V{B}_t^{\V{x}})\epsilon(0)}\Big]
\\\nonumber
&=\SPB{\EE_{\Wi}[{W}_{-t}^V(\V{x})^*g(\V{B}_{-t}^{\V{x}})\epsilon(0)]}{\EE_{\Wi}[
W_{t}^V(\V{x})^*g(\V{B}_t^{\V{x}})\epsilon(0)]}
\\\label{maike3}
&=\|(T_{t}g\epsilon(0))(\V{x})\|^2.
\end{align}
Here the vector $g\epsilon(0)\in L^2(\RR^3,\sF)$ is non-negative and non-zero. 
Since, for $t>0$, we know that $T_{t}$ is positivity improving and 
elements in its range are continuous, we deduce that $(T_{t}g\epsilon(0))(\V{x})$ is
strictly positive and in particular has a non-vanishing norm.

The connection of the measures in \eqref{maike1} to the ground state path measure 
$\mu_{\V{x},\infty}$ is revealed by the
following theorem which is an analogue (here for fixed $\V{x}$ and without ultraviolet cutoff) 
of a result that first appeared in \cite{BHLMS2002}. Its
proof requires, however, a new discussion of certain convergence properties.

\begin{thm}\label{thmlocconvmu}
The family of probability measures $\{\mu_{\V{x},t}\}_{t>0}$ converges {\em locally} to 
$\mu_{\V{x},\infty}$ in the sense that $\mu_{\V{x},t}(A)\to\mu_{\V{x},\infty}(A)$, 
$t\to\infty$, for all $A\in\mr{\fF}_*$.
\end{thm}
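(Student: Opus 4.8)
The plan is to fix $A\in\mr{\fF}_t$ for some $t>0$ and show that $\mu_{\V{x},s}(A)\to\mu_{\V{x},\infty}(A)$ as $s\to\infty$. By the definition of $\mu_{\V{x},\infty}$ via the promeasure $\mu_{\V{x},*}$ we have, for every $s\ge t$,
\begin{align*}
\mu_{\V{x},\infty}(A)&=\mu_{\V{x},*}(A)=\frac{1}{\|\Phi(\V{x})\|^2}\EE_{\Wi}[1_Am_{s}(\V{x})]
=\frac{1}{\|\Phi(\V{x})\|^2}\EE_{\Wi}\big[1_A\SPn{M_{-s}(\V{x})}{M_{s}(\V{x})}\big],
\end{align*}
using the martingale property of $(m_r(\V{x}))_r$ from Lem.~\ref{lemmmart} to replace $m_t(\V{x})$ by $m_s(\V{x})$. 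On the other hand, \eqref{maike2} and \eqref{maike3} give
\begin{align*}
\mu_{\V{x},s}(A)&=\frac{1}{\|(T_{s}g\epsilon(0))(\V{x})\|^2}
\EE_{\Wi}\big[1_A\SPn{{W}_{-s}^V(\V{x})^*g(\V{B}_{-s}^{\V{x}})\epsilon(0)}{W_{s}^V(\V{x})^*g(\V{B}_s^{\V{x}})\epsilon(0)}\big].
\end{align*}
So the task reduces to comparing the two expectations (and the two normalizations). First I would rewrite $M_{s}(\V{x})=e^{sE}W_s^V(\V{x})^*\Phi(\V{B}_s^{\V{x}})$ and exploit the identity $\Phi=e^{sE}T_s\Phi$, i.e.\ $\Phi$ is a fixed point of the semigroup up to the exponential factor. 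The natural strategy is to insert $g\epsilon(0)$ as an approximation of $\Phi$: more precisely, split $\Phi=e^{(s-t)E}T_{s-t}\Phi$ and realize that replacing $\Phi$ by $e^{(s-t)E}T_{s-t}(g\epsilon(0))/(\text{const})$ inside the inner product, for $g$ chosen so that $g\epsilon(0)$ has nonzero overlap with the ground state, should converge as $s\to\infty$ because of the spectral gap at the bottom — but here there is no spectral gap, so instead one must use that $E$ is a \emph{simple} eigenvalue and that $e^{-r(\NV-E)}\to$ the projection onto $\CC\Phi$ in the strong sense along a sequence, or rather one works directly with the semigroup applied to $g\epsilon(0)$.

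The cleaner route, and the one I expect the paper to take, is the following. Using the Markov property of the Feynman--Kac integrands from \cite{MatteMoeller2017} (as already used in the proof of Lem.~\ref{lemMmart}), one writes, for $A\in\mr{\fF}_t$ and $s\ge t$,
\begin{align*}
\EE_{\Wi}\big[1_A\SPn{{W}_{-s}^V(\V{x})^*g(\V{B}_{-s}^{\V{x}})\epsilon(0)}{W_{s}^V(\V{x})^*g(\V{B}_s^{\V{x}})\epsilon(0)}\big]
&=\EE_{\Wi}\big[1_A\SPn{{W}_{-t}^V(\V{x})^*(T_{s-t}g\epsilon(0))(\V{B}_{-t}^{\V{x}})}{W_{t}^V(\V{x})^*(T_{s-t}g\epsilon(0))(\V{B}_t^{\V{x}})}\big],
\end{align*}
because $1_A$ is $\mr{\fF}_t$-measurable and we may take conditional expectation onto $\fG_{\pm t}$, pushing the tail of each Feynman--Kac path into the factors $T_{s-t}g\epsilon(0)$ evaluated at $\V{B}^{\V{x}}_{\pm t}$. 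Now $e^{(s-t)E}T_{s-t}g\epsilon(0)\to c_g\Phi$ in $L^2(\RR^3,\sF)$ as $s\to\infty$, where $c_g:=\SPn{\Phi}{g\epsilon(0)}/\|\Phi\|^2\neq0$ for suitable $g$ (e.g.\ $g$ a smooth positive bump), by simplicity of the eigenvalue $E$ and the spectral theorem — concretely $e^{(s-t)E}T_{s-t}=e^{-(s-t)(\NV-E)}\to\mathbbm{1}_{\{E\}}(\NV)=\SPn{\Phi}{\cdot}\Phi/\|\Phi\|^2$ strongly. One then needs to promote this $L^2$-convergence to convergence of the relevant expectation. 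This is where the estimates on the Feynman--Kac integrands enter decisively: by H\"older's inequality together with \eqref{kashmir} and \eqref{WinLp}, the map $\Psi\mapsto\EE_{\Wi}[1_A\SPn{{W}_{-t}^V(\V{x})^*\Psi(\V{B}_{-t}^{\V{x}})}{W_{t}^V(\V{x})^*\Psi(\V{B}_t^{\V{x}})}]$ is continuous on $L^2(\RR^3,\sF)$ — indeed $\Psi\mapsto W_{t}^V(\V{x})^*\Psi(\V{B}^{\V{x}}_t)$ maps $L^2$ boundedly into $L^2(\Omega_{\Wi};\sF)$ after integrating in $\V{x}$, but for \emph{fixed} $\V{x}$ one uses instead the transition-density smoothing: $\EE[\,\cdot\,]$ of a bilinear form in $\Psi(\V{B}^{\V{x}}_{\pm t})$ is controlled by $\|e^{t\Delta/2}\|\Psi(\cdot)\|^2\|_\infty\le c_t\|\Psi\|_2^2$ as in the proof of Prop.~\ref{propGSmu}(2). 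Hence both the numerator and, taking $A=\Omega_{\Wi}$, the normalization $Z_{\V{x},s}=\|(T_sg\epsilon(0))(\V{x})\|^2=e^{-2sE}\|(e^{sE}T_sg\epsilon(0))(\V{x})\|^2$ converge (after multiplying through by $e^{2sE}$) to the corresponding expressions with $c_g\Phi$ in place of $e^{sE}T_sg\epsilon(0)$, and the factors $c_g^2$ cancel between numerator and denominator, leaving exactly $\|\Phi(\V{x})\|^{-2}\EE_{\Wi}[1_Am_t(\V{x})]=\mu_{\V{x},\infty}(A)$.

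The main obstacle I anticipate is the passage from $L^2$-convergence of $e^{(s-t)E}T_{s-t}g\epsilon(0)$ to $c_g\Phi$ to convergence of the bilinear expectation at the fixed point $\V{x}$: one must ensure that evaluating at $\V{x}$ is a continuous operation in the relevant topology, which is not automatic for $L^2$-functions but is true here because the range of $T_{s-t}$ (for $s>t$) consists of continuous functions and, more quantitatively, because of the $L^2$-to-$L^\infty$ smoothing bound \eqref{hcbd}; so in fact I would apply \eqref{hcbd} with $F=0$, $\vk=0$ to get $\|(T_{s-t}g\epsilon(0))(\V{x})-c_g e^{-(s-t)E}\Phi(\V{x})\|_{\sF}\to0$ uniformly in $\V{x}$, which immediately controls everything after plugging into the explicit bilinear forms and using dominated convergence with the $L^p$-bounds \eqref{kashmir}, \eqref{WinLp} as integrable majorants. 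A secondary, more bookkeeping-type obstacle is verifying measurability/independence details so that the Markov-property reduction to time $t$ is legitimate for $A\in\mr{\fF}_t$ rather than merely $A\in\fG_t$ or $\fG_{-t}$; this follows from the product structure and the independence of the $\pm$ processes exactly as in the proof of Lem.~\ref{lemmmart}.
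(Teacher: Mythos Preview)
Your proposal is correct and follows essentially the same route as the paper: reduce $\mu_{\V{x},s}(A)$ for $A\in\mr{\fF}_t$ to a time-$t$ expression via the Markov property, then use the strong convergence $e^{-(s-t)(\NV-E)}\to|\Phi\rangle\langle\Phi|$ together with $L^2\!\to\!L^\infty$ smoothing to pass to the limit by dominated convergence. Two small remarks. First, your invocation of \eqref{hcbd} is slightly underspecified: that bound alone controls norms, not convergence; to get $e^{(s-t)E}(T_{s-t}g\epsilon(0))(\V{y})\to c_g\Phi(\V{y})$ pointwise (and the uniform majorant) you must insert one extra smoothing step, writing $T_{s-t}=T_tT_{s-2t}$ and using the $L^2$-convergence of $e^{(s-2t)E}T_{s-2t}g\epsilon(0)$ together with $T_t:L^2\to C_b$ --- this is exactly the factorization $\Psi_{t,\tau}=T_t\Psi_{2t,\tau}$ the paper uses. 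Second, the paper normalizes by $\|(T_\tau g\epsilon(0))(\V{x})\|$ rather than by $e^{\tau E}$, which folds the denominator into the vector $\Psi_{t,\tau}$ and avoids tracking $c_g$ separately; your version with the explicit $c_g^2$ cancellation is equivalent (and $c_g>0$ automatically since $\Phi$ is strictly positive and $g\epsilon(0)$ is non-negative non-zero in $\cQ$-space, so no restriction on $g$ beyond the standing hypotheses is needed).
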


\begin{proof}
Let $t>0$ and $A\in\mr{\fF}_t$. For all $\tau\ge t$, the formulas \eqref{maike1} and 
\eqref{maike2} imply
\begin{align*}
\mu_{\V{x},\tau}(A)&=\frac{1}{Z_{\V{x},\tau}}\EE_{\Wi}\bigg[1_A\EE_{\Wi}^{\fF_t}\Big[
\SPn{{W}_{-\tau}^V(\V{x})^*g(\V{B}_{-\tau}^{\V{x}})\epsilon(0)}{
W_{\tau}^V(\V{x})^*g(\V{B}_\tau^{\V{x}})\epsilon(0)}\Big]\bigg].
\end{align*}
Here the Markov property derived in \cite[Prop.~5.8(4)]{MatteMoeller2017} shows that
\begin{align*}
\EE_{\Wi}^{\fF_t}[W_{\tau}^V(\V{x})^*g(\V{B}_\tau^{\V{x}})\epsilon(0)]=W_{t}^V(\V{x})^*
(T_{\tau-t}g\epsilon(0))(\V{B}_t^{\V{x}}),
\end{align*}
and analogously for the objects indexed by $-\tau$. Thus,
\begin{align}\label{Gibbs17}
\mu_{\V{x},\tau}(A)&=\EE_{\Wi}\Big[1_A\SPn{{W}_{-t}^V(\V{x})^*\Psi_{t,\tau}(\V{B}^{\V{x}}_{-t})}{
W_{t}^V(\V{x})^*\Psi_{t,\tau}({\V{B}}^{\V{x}}_t)}\Big],
\end{align}
for all $\tau\ge t$, with (recall also \eqref{maike3})
\begin{align*}
\Psi_{t,\tau}&:=\frac{1}{\|(T_{\tau}g\epsilon(0))(\V{x})\|}T_{\tau-t}g\epsilon(0).
\end{align*}
Next, we claim that
\begin{align}\label{Gibbs18}
\Psi_{t,\tau}&\xrightarrow{\;\;\tau\to\infty\;\;}
\frac{e^{tE}}{\|\Phi(\V{x})\|}\Phi\quad\text{in $L^2(\RR^3,\sF)$.}
\end{align}
In fact, the spectral calculus implies $\Upsilon_s\to\Phi$, $s\to\infty$, in $L^2(\RR^3,\sF)$ with 
$$
\Upsilon_s:=\frac{1}{\|T_{s}g\epsilon(0)\|}T_{s}g\epsilon(0),\quad s\ge0,
$$
and since $T_{t}$ maps $L^2(\RR^3,\sF)$ continuously into $C_b(\RR^3,\sF)$,
the space of bounded continuous functions from $\RR^3$ to $\sF$, it follows that
\begin{align*}
\Psi_{t,\tau}=\frac{1}{\|(T_{t}\Upsilon_{\tau-t})(\V{x})\|}\Upsilon_{\tau-t}
\xrightarrow{\;\;\tau\to\infty\;\;}\frac{1}{\|(T_{t}\Phi)(\V{x})\|}\Phi\quad\text{in $L^2(\RR^3,\sF)$.}
\end{align*}
Here we used again that 
$(T_{t}T_{\tau-t}g\epsilon(0))(\V{x})=(T_{\tau}g\epsilon(0))(\V{x})$ is
strictly positive, for all $\tau\ge t>0$. We conclude the justification of \eqref{Gibbs18} by using
$(T_{t}\Phi)(\V{x})=e^{-tE}\Phi(\V{x})$.

Since $\Psi_{t,\tau}=T_{t}\Psi_{2t,\tau}$, for all $\tau\ge2t$, and $T_{t}$ is continuous from 
$L^2(\RR^3,\sF)$ into $C_b(\RR^3,\sF)$, we may further conclude that, 
pointwise on $\Omega_{\Wi}$, 
\begin{align*}
\Psi_{t,\tau}({\V{B}}^{\V{x}}_t)&\xrightarrow{\;\;\tau\to\infty\;\;}\frac{e^{2tE}}{\|\Phi(\V{x})\|}
(T_{t}\Phi)({\V{B}}^{\V{x}}_t)=\frac{e^{tE}}{\|\Phi(\V{x})\|}\Phi({\V{B}}^{\V{x}}_t),
\\
\sup_{\tau\ge2t}\|\Psi_{t,\tau}({\V{B}}^{\V{x}}_t)\|&\le\|T_{t}\|_{2,\infty}
\sup_{\tau\ge2t}\|\Psi_{2t,\tau}\|<\infty,
\end{align*}
and similarly for $\V{B}_-$. Since $\|{W}_{-t}^V(\V{x})\|\|W_{t}^V(\V{x})\|$ is $\PP_{\Wi}$-integrable,
we may therefore pass to the limit $\tau\to\infty$ under the expectation in \eqref{Gibbs17} with the 
help of the dominated convergence theorem to see that 
$\mu_{\V{x},\tau}(A)\to\mu_{\V{x},*}(A)=\mu_{\V{x},\infty}(A)$.
\end{proof}

The previous theorem implies a formula, Eqn.~\eqref{clara100} in the next corollary, 
for expectation values with respect to $\Phi(\V{x})$. This formula will be the starting point for 
the proof of Theorem~\ref{introthmsupexpNum}.
Upon integrating \eqref{clara100} with respect to $\V{x}$ we also get a formula for the ground
state expectation of any bounded decomposable operator on $L^2(\RR^3,\sF)$, at least when the
somewhat implicit and strong measurability and convergence conditions in the next corollary can be 
verified for every $\V{x}\in\RR^3$.

\begin{cor}[Ground state expectations via path measures]\label{corGSEEPM}
Let $\cO\in\LO(\sF)$ and define
\begin{align}\label{defwtKO}
\wt{K}_{\V{x},t}[\cO]&:=
\frac{\SPn{\epsilon(-{U}_{-t}^-)}{\Gamma(e_{\V{x}}^*)\cO\Gamma(e_{\V{x}})\epsilon(-U_{t}^-)}}{
\SPn{\epsilon(-{U}_{-t}^-)}{\epsilon(-U_{t}^-)}},\quad t\ge0.
\end{align}
Assume that $(\wt{K}_{\V{x},t}[\cO])_{t\ge0}$ has a $(\mr{\fF}_t)_{t\ge0}$-adapted modification
that we henceforth denote by $({K}_{\V{x},t}[\cO])_{t\ge0}$.
Assume further there exists a bounded $\fB(\Omega_{\Wi})$-measurable function
$K_{\V{x},\infty}[\cO]:\Omega_{\Wi}\to\RR$ such that
$K_{\V{x},t}[\cO]\to K_{\V{x},\infty}[\cO]$, $t\to\infty$, 
{\em uniformly on all of $\Omega_{\Wi}$}. Then
\begin{align}\label{clara100}
\SPn{\Phi(\V{x})}{\cO\Phi(\V{x})}
=\|\Phi(\V{x})\|^2\int_{\Omega_{\Wi}}K_{\V{x},\infty}[\cO]\Id\mu_{\V{x},\infty}.
\end{align}
\end{cor}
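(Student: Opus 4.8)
The plan is to prove the two identities
\[
\frac{\SPn{\Phi(\V{x})}{\cO\Phi(\V{x})}}{\|\Phi(\V{x})\|^2}=\lim_{t\to\infty}\int_{\Omega_{\Wi}}\wt{K}_{\V{x},t}[\cO]\,\Id\mu_{\V{x},t},\qquad
\int_{\Omega_{\Wi}}K_{\V{x},\infty}[\cO]\,\Id\mu_{\V{x},\infty}=\lim_{t\to\infty}\int_{\Omega_{\Wi}}\wt{K}_{\V{x},t}[\cO]\,\Id\mu_{\V{x},t},
\]
where $\mu_{\V{x},t}$ denotes the probability measures from \eqref{maike1}; together they yield \eqref{clara100}. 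For the first identity I would begin from the pointwise relation
\[
\SPn{{W}_{-t}^V(\V{x})^*g(\V{B}_{-t}^{\V{x}})\epsilon(0)}{\cO\,W_{t}^V(\V{x})^*g(\V{B}_t^{\V{x}})\epsilon(0)}=\wt{K}_{\V{x},t}[\cO]\,\sL_{t}(\V{x})\,g(\V{B}_{-t}^{\V{x}})g(\V{B}_t^{\V{x}}),\qquad t>0,
\]
which follows by inserting \eqref{maike2a} on the left, using that $\Gamma(e_{\V{x}})$ is unitary with $\Gamma(e_{\V{x}})^*=\Gamma(e_{\V{x}}^*)$, recognizing the Fock space factor as $\wt{K}_{\V{x},t}[\cO]\,\SPn{\epsilon(-U_{-t}^-)}{\epsilon(-U_t^-)}$ by the definition \eqref{defwtKO}, and reading off the scalar prefactor via \eqref{maike2}. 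Taking $\EE_{\Wi}$, moving the expectation through the sesquilinear form $\SPn{\cdot}{\cO\,\cdot}$ by the independence of the processes indexed by positive and negative times exactly as in the derivation of \eqref{maike3}, invoking the Feynman--Kac formula \eqref{defTGL} for $\Geb=\RR^3$ and $\Psi=g\epsilon(0)$, and using $Z_{\V{x},t}=\|(T_{t}g\epsilon(0))(\V{x})\|^2$ from \eqref{maike3}, I obtain
\[
\int_{\Omega_{\Wi}}\wt{K}_{\V{x},t}[\cO]\,\Id\mu_{\V{x},t}=\frac{\SPn{(T_{t}g\epsilon(0))(\V{x})}{\cO(T_{t}g\epsilon(0))(\V{x})}}{\|(T_{t}g\epsilon(0))(\V{x})\|^2},\qquad t>0.
\]

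Next I would let $t\to\infty$ in this closed formula. Since $g\epsilon(0)$ is non-negative and non-zero while $\Phi$ is strictly positive, $\SPn{\Phi}{g\epsilon(0)}>0$, and the spectral theorem gives $e^{tE}T_{t}(g\epsilon(0))=e^{-t(\NV-E)}g\epsilon(0)\to\SPn{\Phi}{g\epsilon(0)}\Phi$ in $L^2(\RR^3,\sF)$; hence the normalized vectors $\Upsilon_t:=T_{t}(g\epsilon(0))/\|T_{t}(g\epsilon(0))\|$ converge to $\Phi$ in $L^2(\RR^3,\sF)$. Writing $(T_{t}g\epsilon(0))(\V{x})=\|T_{t-1}(g\epsilon(0))\|\,(T_1\Upsilon_{t-1})(\V{x})$ and using that $T_1$ maps $L^2(\RR^3,\sF)$ continuously into $C_b(\RR^3,\sF)$ (Prop.~\ref{propGSmu}(2) with $p=\infty$, together with Rem.~\ref{remcontrange}), as well as $(T_1\Phi)(\V{x})=e^{-E}\Phi(\V{x})$ and $\Phi(\V{x})\neq0$, I conclude that $(T_{t}g\epsilon(0))(\V{x})/\|(T_{t}g\epsilon(0))(\V{x})\|\to\Phi(\V{x})/\|\Phi(\V{x})\|$ in $\sF$, so that $\int_{\Omega_{\Wi}}\wt{K}_{\V{x},t}[\cO]\,\Id\mu_{\V{x},t}\to\SPn{\Phi(\V{x})}{\cO\Phi(\V{x})}/\|\Phi(\V{x})\|^2$. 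This gives the first identity.

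For the second identity, note first that $\int_{\Omega_{\Wi}}\wt{K}_{\V{x},t}[\cO]\,\Id\mu_{\V{x},t}=\int_{\Omega_{\Wi}}K_{\V{x},t}[\cO]\,\Id\mu_{\V{x},t}$ because $(K_{\V{x},t}[\cO])_{t\ge0}$ is a modification of $(\wt{K}_{\V{x},t}[\cO])_{t\ge0}$ and $\mu_{\V{x},t}$ is absolutely continuous with respect to $\PP_{\Wi}$ by \eqref{maike1}. Given $\epsilon>0$, pick $s$ so large that $\|K_{\V{x},t}[\cO]-K_{\V{x},\infty}[\cO]\|_\infty<\epsilon$ for all $t\ge s$; then $K_{\V{x},s}[\cO]$ is a bounded $\mr{\fF}_s$-measurable function, hence a uniform limit of $\mr{\fF}_s$-simple functions, and the local convergence asserted in Thm.~\ref{thmlocconvmu} yields $\int K_{\V{x},s}[\cO]\,\Id\mu_{\V{x},t}\to\int K_{\V{x},s}[\cO]\,\Id\mu_{\V{x},\infty}$. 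Combining this with the two uniform estimates $|\int K_{\V{x},t}[\cO]\,\Id\mu_{\V{x},t}-\int K_{\V{x},s}[\cO]\,\Id\mu_{\V{x},t}|\le2\epsilon$ for $t\ge s$ (all $\mu_{\V{x},t}$ being probability measures) and $|\int K_{\V{x},s}[\cO]\,\Id\mu_{\V{x},\infty}-\int K_{\V{x},\infty}[\cO]\,\Id\mu_{\V{x},\infty}|\le\epsilon$, and then sending $\epsilon\downarrow0$, I obtain the second identity, which completes the proof.

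The step I expect to be the main obstacle is the first pointwise relation: one has to match, term by term, the scalar prefactors produced by \eqref{maike2a} (the Feynman--Kac weights $e^{u_{\pm t}-\int_0^tV}$ and the normalization $\SPn{\epsilon(-U_{-t}^-)}{\epsilon(-U_t^-)}=e^{\SPn{U_{-t}^-}{U_t^-}}$, which is real because $U_{\pm t}^-\in\mathfrak{r}$) with the definition of $\sL_{t}(\V{x})$, and then justify interchanging $\EE_{\Wi}$ with the bounded operator $\cO$ and with the sesquilinear form. The only delicate point in the last step is that $K_{\V{x},\infty}[\cO]$ need not be measurable with respect to any $\mr{\fF}_t$, so Thm.~\ref{thmlocconvmu} does not apply to it directly; this is circumvented precisely by the hypothesized \emph{uniform} convergence $K_{\V{x},t}[\cO]\to K_{\V{x},\infty}[\cO]$, which allows one to replace $K_{\V{x},\infty}[\cO]$ by the $\mr{\fF}_s$-measurable approximant $K_{\V{x},s}[\cO]$ at the cost of an arbitrarily small error.
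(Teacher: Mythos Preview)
Your proof is correct and follows essentially the same approach as the paper: you rewrite the ratio $\SPn{(T_t g\epsilon(0))(\V{x})}{\cO(T_t g\epsilon(0))(\V{x})}/\|(T_t g\epsilon(0))(\V{x})\|^2$ as $\int K_{\V{x},t}[\cO]\,\Id\mu_{\V{x},t}$ via \eqref{maike2a} and independence, pass to the limit using the convergence $(T_t g\epsilon(0))(\V{x})/\|(T_t g\epsilon(0))(\V{x})\|\to\Phi(\V{x})/\|\Phi(\V{x})\|$, and then combine Thm.~\ref{thmlocconvmu} with the uniform convergence hypothesis through an $\epsilon/3$-type argument using the $\mr{\fF}_s$-measurable approximant $K_{\V{x},s}[\cO]$. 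This matches the paper's proof in both structure and detail.
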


\begin{proof}
The discussion in the proof of Theorem~\ref{thmlocconvmu} shows that
\begin{align*}
\frac{1}{\|(T_{t}g\epsilon(0))(\V{x})\|}(T_{t}g\epsilon(0))(\V{x})&=(T_{1}\Psi_{1,t})(\V{x})
\\
&\xrightarrow{\;\;t\to\infty\;\;}\frac{e^{E}}{\|\Phi(\V{x})\|}
(T_{1}\Phi)(\V{x})=\frac{1}{\|\Phi(\V{x})\|}\Phi(\V{x}).
\end{align*}
Employing the defining formula for $T_t$ and the independence of the processes indexed by
$t$ and $-t$ as in \eqref{maike3}, we further find
\begin{align*}
\SPn{(T_{t}g\epsilon(0))(\V{x})}{\cO(T_{t}g\epsilon(0))(\V{x})}
&=\EE_{\Wi}\Big[\SPn{W_{-t}^V(\V{x})^*g(\V{B}_{-t}^{\V{x}})\epsilon(0)}{\cO
W_{t}^V(\V{x})^*g(\V{B}_{t}^{\V{x}})\epsilon(0)}\Big],
\end{align*}
for all $t>0$. Combining this with \eqref{maike2a} and taking the $\mr{\fF}_t$-measurability of 
$K_{\V{x},t}[\cO]$ into account in the last step, we thus obtain
\begin{align*}
\frac{\SPn{\Phi(\V{x})}{\cO\Phi(\V{x})}}{\|\Phi(\V{x})\|^2}
&=\lim_{t\to\infty}\frac{\SPn{(T_{t}g\epsilon(0))(\V{x})}{
\cO(T_{t}g\epsilon(0))(\V{x})}}{\|(T_{t}g\epsilon(0))(\V{x})\|^2},
\\
&=\lim_{t\to\infty}\frac{1}{Z_{\V{x},t}}\EE_{\Wi}\Big[\sL_{t}(\V{x})g(\V{B}_{-t}^{\V{x}})
g(\V{B}_t^{\V{x}})\wt{K}_{\V{x},t}[\cO]\Big]
\\
&=\lim_{t\to\infty}\int_{\Omega_{\Wi}}{K}_{\V{x},t}[\cO]\Id\mu_{\V{x},t}.
\end{align*}
Now the assertion follows from Theorem~\ref{thmlocconvmu} and the postulated uniform
convergence of ${K}_{\V{x},t}[\cO]$. In fact, given $\ve>0$, we pick $s>0$ so large that
$|{K}_{\V{x},t}[\cO]-{K}_{\V{x},r}[\cO]|<\ve$, for all $s\le r,t\le\infty$. Then
\begin{align*}
&\bigg|\int_{\Omega_{\Wi}}{K}_{\V{x},t}[\cO]\Id\mu_{\V{x},t}
-\int_{\Omega_{\Wi}}{K}_{\V{x},\infty}[\cO]\Id\mu_{\V{x},\infty}\bigg|
\\
&\le2\ve+\bigg|\int_{\Omega_{\Wi}}{K}_{\V{x},s}[\cO]\Id\mu_{\V{x},t}
-\int_{\Omega_{\Wi}}{K}_{\V{x},s}[\cO]\Id\mu_{\V{x},\infty}\bigg|,
\end{align*}
for all $t\ge s$. As $t\to\infty$, the term in the second line converges to $2\ve$ by
Theorem~\ref{thmlocconvmu} and the $\mr{\fF}_s$-measurability of $K_{\V{x},s}[\cO]$ and we conclude.
\end{proof}

When combined with the following lemma, the previous corollary can be used to study 
ground state expectations of certain {\em unbounded} observables, {\em without} a priori information
on whether $\Phi(\V{x})$ is in their domain or not. Here the crucial point is that the limiting measure 
$\mu_{\V{x},\infty}$ permits to construct holomorphic functions as the one called $g$ in the next 
lemma. The lemma is an abstracted version of an observation made in 
\cite[Thm.~10.12]{Hiroshima2004}. In its statement and proof $D_r(z)$ denotes the open disc of radius 
$r>0$ about $z\in\CC$ in the complex plane.

\begin{lem}\label{lemholext}
Let $A$ be a non-negative self-adjoint operator in some Hilbert space $\sK$ and let $\phi\in\sK$.
Suppose there exist $R>0$ and a holomorphic function $g:D_R(0)\to\CC$ such that
\begin{align}\label{gunnar}
\SPn{\phi}{e^{-zA}\phi}&=g(z),
\end{align}
for all $z\in D_R(0)$ with $\Re[z]>0$. Then $\phi\in\dom(e^{sA/2})$, for all $s\in(0,R)$, and 
\begin{align}\label{gunnar2}
\SPn{e^{-xA/2}\phi}{e^{-xA/2-iyA}\phi}&=g(z),\quad z\in D_R(0),\,x=\Re[z],\,y=\Im[z].
\end{align}
\end{lem}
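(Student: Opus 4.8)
The plan is to exploit the spectral theorem for $A$ together with the uniqueness of holomorphic extension. Let $\{P_\lambda\}_{\lambda\ge0}$ be the spectral family of $A$, so that for $\Re[z]>0$ we have $\SPn{\phi}{e^{-zA}\phi}=\int_{[0,\infty)}e^{-z\lambda}\Id\mu_\phi(\lambda)$, where $\mu_\phi$ is the (finite, non-negative) spectral measure $\mu_\phi(\cdot):=\SPn{\phi}{P_{(\cdot)}\phi}$. First I would observe that the Laplace transform $L(z):=\int_{[0,\infty)}e^{-z\lambda}\Id\mu_\phi(\lambda)$ is holomorphic on the open right half-plane $\{\Re[z]>0\}$ and, by hypothesis \eqref{gunnar}, agrees with $g$ on $D_R(0)\cap\{\Re[z]>0\}$. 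Since the latter set has an accumulation point, the identity theorem forces $L=g$ on all of $D_R(0)\cap\{\Re[z]>0\}$, and in particular $L$ extends holomorphically to $D_R(0)$.

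The key step is to upgrade this analytic continuation of the scalar function into an integrability statement for the measure $\mu_\phi$. For real $x\in(0,R)$ the function $L$ restricted to the interval $(0,R)$ is the restriction of a function holomorphic in a complex neighbourhood, hence real-analytic; but more to the point, I would use the following standard fact about Laplace transforms: if $L(x)=\int e^{-x\lambda}\Id\mu_\phi(\lambda)$ admits a holomorphic extension to a disc $D_R(0)$, then the abscissa of convergence of the integral $\int e^{t\lambda}\Id\mu_\phi(\lambda)$ is $\le -R$, i.e. $\int_{[0,\infty)}e^{s\lambda}\Id\mu_\phi(\lambda)<\infty$ for every $s\in(0,R)$. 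Concretely, one can prove this by a Taylor expansion/monotone convergence argument: the derivatives of $L$ at a point $x_0\in(0,R)$ satisfy $L^{(n)}(x_0)=(-1)^n\int\lambda^n e^{-x_0\lambda}\Id\mu_\phi(\lambda)$, and the holomorphy of $g$ on $D_R(0)$ gives the bound $|L^{(n)}(x_0)|\le n!\,C\,\rho^{-n}$ for any $\rho<R-x_0$; summing the Taylor series of $\lambda\mapsto e^{(x_0+s-x_0)\lambda}$ term by term and invoking monotone convergence (Tonelli) then yields $\int e^{s\lambda}\Id\mu_\phi(\lambda)=\sum_n \frac{s^n}{n!}\int\lambda^n e^{-x_0\lambda}\Id\mu_\phi(\lambda)<\infty$ as soon as $s<R$. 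By the spectral theorem, $\int_{[0,\infty)}e^{s\lambda}\Id\mu_\phi(\lambda)<\infty$ is precisely the statement $\phi\in\dom(e^{sA/2})$ (with $\|e^{sA/2}\phi\|^2=\int e^{s\lambda}\Id\mu_\phi(\lambda)$), which gives the first claim for all $s\in(0,R)$.

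Finally, for the identity \eqref{gunnar2}: fix $z\in D_R(0)$, write $x=\Re[z]$, $y=\Im[z]$. If $x>0$ the claim is immediate from \eqref{gunnar} and the functional calculus, since $e^{-xA/2}\phi,\,e^{-xA/2-iyA}\phi\in\sK$ and $\SPn{e^{-xA/2}\phi}{e^{-xA/2-iyA}\phi}=\SPn{\phi}{e^{-xA-iyA}\phi}=\SPn{\phi}{e^{-zA}\phi}=g(z)$. If $x\le0$ (so $-x\in[0,R)$ and $\phi\in\dom(e^{-xA/2})$ by the previous paragraph), both sides are well-defined: the left side equals $\int_{[0,\infty)}e^{-x\lambda}e^{-iy\lambda}\Id\mu_\phi(\lambda)$ by the functional calculus (using $e^{-xA/2}\phi\in\dom(e^{-xA/2})$ and $\mu_{e^{-xA/2}\phi}=e^{-x\lambda}\Id\mu_\phi$), which is $L(z)=g(z)$ since $|e^{-z\lambda}|=e^{-x\lambda}$ is $\mu_\phi$-integrable. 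The main obstacle I anticipate is making the term-by-term summation in the Taylor-series argument rigorous — i.e. justifying the interchange of the infinite sum with the integral against $\mu_\phi$ — but this is handled cleanly by Tonelli's theorem since all the integrands $\lambda^n e^{-x_0\lambda}$ are non-negative, so no genuine difficulty remains once the bookkeeping is set up.
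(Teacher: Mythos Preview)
Your approach is essentially identical to the paper's: both pass to the spectral measure, Taylor-expand the Laplace transform at a base point in $(0,R)$ (the paper calls it $\xi$), apply Tonelli to the resulting non-negative series to obtain $\int e^{s\lambda}\,\Id\mu_\phi(\lambda)<\infty$, and conclude \eqref{gunnar2} via the identity theorem. One bookkeeping slip to correct: as written, $\sum_n\tfrac{s^n}{n!}\int\lambda^n e^{-x_0\lambda}\,\Id\mu_\phi$ equals $\int e^{(s-x_0)\lambda}\,\Id\mu_\phi$, not $\int e^{s\lambda}\,\Id\mu_\phi$, and the Cauchy estimate with radius $R-x_0$ then only gives finiteness for $s-x_0<R-2x_0$ --- so you must let $x_0\downarrow 0$ to reach every $s<R$, which is exactly what the paper does by choosing its center $\xi$ arbitrarily small.
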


\begin{proof}
Let $r\in(0,R)$. Then we find real numbers $0<\xi<\rho<R$ such that
$D_r(0)\subset D_\rho(\xi)\subset D_R(0)$. Let $g(z)=\sum_{n=0}^\infty b_n(z-\xi)^n$ denote the 
Taylor series of $g$ at $\xi$, whose radius of convergence is larger than $\rho$. Furthermore, let 
$\vs$ be the spectral measure of $A$ corresponding to $\phi$. Then Fubini's theorem implies 
\begin{align}\label{Gibbs78}
\SPn{\phi}{e^{-zA}\phi}=\sum_{n=0}^\infty(\xi-z)^n\frac{1}{n!}
\int_0^\infty\lambda^n e^{-\xi\lambda}\Id\vs(\lambda),\quad z\in D_\xi(\xi).
\end{align}
Comparing coefficients we infer from the validity of \eqref{gunnar} for $z\in D_\xi(\xi)$ that
$$
b_n=\frac{(-1)^n}{n!}\int_0^\infty\lambda^ne^{-\xi\lambda}\Id\vs(\lambda),\quad n\in\NN_0.
$$
Since the Taylor series of $g$ at $\xi$ converges absolutely on $D_\rho(\xi)$, we see {\em a posteriori} 
that the series on the right hand side of \eqref{Gibbs78} actually converges absolutely for all 
$z\in D_\rho(\xi)$. This permits to invoke Tonelli's theorem to argue that
\begin{align*}
\int_0^\infty e^{-x\lambda}\Id\vs(\lambda)&=\sum_{n=0}^\infty(\xi-x)^n\frac{1}{n!}
\int_0^\infty\lambda^ne^{-\xi\lambda}\Id\vs(\lambda)<\infty,\quad -r<x<0.
\end{align*}
Since $r\in(0,R)$ was arbitrary, this shows that $\phi\in\dom(e^{sA/2})$, for all $s\in(0,R)$.
The spectral calculus now implies that 
$D_R(0)\ni z\mapsto\int_0^\infty e^{-z\lambda}\Id\vs(\lambda)$
is holomorphic and equal to the function defined by the left hand side of \eqref{gunnar2}.
The identity theorem for holomorphic functions finally entails the equality in \eqref{gunnar2}.
\end{proof}


\subsection{Super-Exponential Decay of Boson Numbers in Ground States}\label{ssecsuperexpN}

\noindent
As an application of Corollary~\ref{corGSEEPM} and the succeeding lemma
we prove Theorem~\ref{introthmsupexpNum} at the end of this subsection. For $0<L<\infty$, we shall 
consider the second quantization of the characteristic function 
\begin{align}\label{defchiL}
\chi_L&:=1_{\{|\V{m}|\le L\}}.
\end{align}
For all $t\ge0$ and $z\in\CC$ with $\Re[z]\ge0$, we infer from \eqref{SGdGamma} 
and \eqref{defwtKO} that
\begin{align}\label{carla1}
\wt{K}_{\V{x},t}[e^{-z\Id\Gamma(\chi_L)}]&=e^{\SPn{U_{-t}^-}{(e^{-z\chi_L}-1)U_{t}^-}}.
\end{align}
A direct consequence of \cite[Lem.~3.11]{MatteMoeller2017} is that, $\PP_{\Wi}$-a.s.,
\begin{align}\label{forUminus}
U_{t}^-&=\int_0^te^{-s\omega-i\V{m}\cdot\V{B}_s}f_L\Id s
+(1-e^{-t\omega-i\V{m}\cdot\V{B}_t})\beta_{L,\infty}-M^-_{L,t},\quad t\ge0,
\end{align}
where
\begin{align*}
M_{L,t}^-&:=\int_0^te^{-s\omega-i\V{m}\cdot\V{B}_s}i\V{m}\beta_{L,\infty}\Id\V{B}_s,\quad t\ge0,
\end{align*}
is a square-integrable $L^2(\RR^3)$-valued martingale; see \cite[Lem.~3.10]{MatteMoeller2017}.
The process $U_{-t}^-$ is given by the same formulas with $\V{B}_-$ substituted for $\V{B}$. 
Notice that the last two members of the right hand side of \eqref{forUminus} vanish when they
are multiplied with $(e^{-z\chi_L}-1)$; recall \eqref{deffbeta}. This implies that the exponent on the right
hand side of \eqref{carla1} can $\PP_{\Wi}$-a.s. be written as
\begin{align*}
\SPn{{U}_{-t}^-}{(e^{-z\chi_L}-1){U}_{t}^-}&=(e^{-z}-1)w_{L,t},\quad t\ge0,
\end{align*}
where $w_{L,t}$ is the bounded, $\mr{\fF}_t$-measurable random variable given by
\begin{align*}
w_{L,t}:=\ee^2\int_0^t\int_0^t\int_{\{|\V{k}|\le L\}}
e^{-(r+s)\omega(\V{k})-i\V{k}\cdot(\V{B}_r-\V{B}_{-s})}
\frac{\eta(\V{k})^2}{\omega(\V{k})}\Id\V{k}\,\Id r\,\Id s.
\end{align*}

\begin{prop}\label{kettoren}
Let $0<L<\infty$. Then, as $t$ goes to infinity, $w_{L,t}$ converges uniformly on $\Omega_{\Wi}$ 
to some bounded, $\fB(\Omega_{\Wi})$-measurable random variable 
$w_{L,\infty}:\Omega_{\Wi}\to\CC$. For every $z\in\CC$, we further have 
$\Phi(\V{x})\in\dom(e^{z\Id\Gamma(\chi_L)})$ and 
\begin{align}\label{numren}
\SPn{\Phi(\V{x})}{e^{-z\Id\Gamma(\chi_L)}\Phi(\V{x})}&=
\|\Phi(\V{x})\|^2\int_{\Omega_{\Wi}}e^{(e^{-z}-1)w_{L,\infty}}\Id\mu_{\V{x},\infty}.
\end{align}
\end{prop}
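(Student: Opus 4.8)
The plan is to establish the uniform convergence of $w_{L,t}$ first, then feed this into Corollary~\ref{corGSEEPM} together with Lemma~\ref{lemholext} to obtain \eqref{numren} for all $z\in\CC$. For the uniform convergence, I would observe that
\begin{align*}
w_{L,t}&=\ee^2\int_{\{|\V{k}|\le L\}}\frac{\eta(\V{k})^2}{\omega(\V{k})}
\Big(\int_0^te^{-r\omega(\V{k})-i\V{k}\cdot\V{B}_r}\Id r\Big)
\overline{\Big(\int_0^te^{-s\omega(\V{k})+i\V{k}\cdot\V{B}_{-s}}\Id s\Big)}\,\Id\V{k},
\end{align*}
and note that since $\mu=0$ does not prevent $\omega(\V{k})\ge|\V{k}|>0$ for $\V{k}\ne0$, the inner time integrals converge absolutely and uniformly in the path: indeed $\big|\int_0^te^{-r\omega(\V{k})-i\V{k}\cdot\V{B}_r}\Id r\big|\le\int_0^\infty e^{-r\omega(\V{k})}\Id r=\omega(\V{k})^{-1}$, independently of the Brownian path. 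Hence the integrand in $\V{k}$ is dominated pointwise (in $\gamma\in\Omega_{\Wi}$) by $\ee^2\eta(\V{k})^2\omega(\V{k})^{-3}1_{\{|\V{k}|\le L\}}$, which is integrable because $L<\infty$ and $\omega^{-3}\eta^2\in L^1_\loc(\RR^3)$ by our standing assumption. The tail estimate
\begin{align*}
|w_{L,\infty}(\gamma)-w_{L,t}(\gamma)|&\le2\ee^2\int_{\{|\V{k}|\le L\}}
\frac{\eta(\V{k})^2}{\omega(\V{k})}\cdot\frac{1}{\omega(\V{k})}\cdot\frac{e^{-t\omega(\V{k})}}{\omega(\V{k})}\Id\V{k}
\le2\ee^2e^{-tL^{-1}\cdot0}\!\!\int\limits_{\{|\V{k}|\le L\}}\!\!\frac{\eta(\V{k})^2}{\omega(\V{k})^3}\Id\V{k}
\end{align*}
is not quite right as written; instead I would split the $\V{k}$-integral into $\{|\V{k}|\le\delta\}$ and $\{\delta<|\V{k}|\le L\}$, make the first piece small uniformly using integrability of $\omega^{-3}\eta^2$, and on the second piece use $e^{-t\omega(\V{k})}\le e^{-t\delta}\to0$. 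This gives uniform convergence on all of $\Omega_{\Wi}$ to the measurable limit $w_{L,\infty}$, and the same argument for the doubly-indexed integrand shows $w_{L,t}$ is $\mr\fF_t$-measurable (it depends only on $\V{B}_r,\V{B}_{-s}$ with $r,s\in[0,t]$).

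Next I would apply Corollary~\ref{corGSEEPM} with $\cO=e^{-z\Id\Gamma(\chi_L)}$ for $\Re[z]\ge0$, so that $\cO\in\LO(\sF)$. By \eqref{carla1} and the computation above, $\wt K_{\V{x},t}[\cO]=e^{(e^{-z}-1)w_{L,t}}$, whose modification ${K}_{\V{x},t}[\cO]$ is $(\mr\fF_t)_{t\ge0}$-adapted thanks to the measurability of $w_{L,t}$; and the uniform convergence of $w_{L,t}$ together with continuity of $u\mapsto e^{(e^{-z}-1)u}$ on the bounded range of $w_{L,t}$ gives uniform convergence of ${K}_{\V{x},t}[\cO]$ to $K_{\V{x},\infty}[\cO]=e^{(e^{-z}-1)w_{L,\infty}}$ (bounded, since $\Re[(e^{-z}-1)w_{L,\infty}]$ is bounded for $\Re[z]\ge0$: $w_{L,\infty}$ ranges over a bounded subset of $\CC$ whose real part is in particular controlled). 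Corollary~\ref{corGSEEPM} then yields
\begin{align}\label{planstep}
\SPn{\Phi(\V{x})}{e^{-z\Id\Gamma(\chi_L)}\Phi(\V{x})}
&=\|\Phi(\V{x})\|^2\int_{\Omega_{\Wi}}e^{(e^{-z}-1)w_{L,\infty}}\Id\mu_{\V{x},\infty},
\quad\Re[z]\ge0.
\end{align}

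Finally, to extend \eqref{planstep} to all $z\in\CC$ and conclude $\Phi(\V{x})\in\dom(e^{z\Id\Gamma(\chi_L)})$, I would apply Lemma~\ref{lemholext} with $A=\Id\Gamma(\chi_L)\ge0$, $\phi=\Phi(\V{x})$, and $g(z):=\|\Phi(\V{x})\|^2\int_{\Omega_{\Wi}}e^{(e^{-z}-1)w_{L,\infty}}\Id\mu_{\V{x},\infty}$. This $g$ is entire: since $w_{L,\infty}$ is bounded on $\Omega_{\Wi}$, say $|w_{L,\infty}|\le C$, the integrand $e^{(e^{-z}-1)w_{L,\infty}}$ is, for $z$ in any compact set, bounded by $e^{C(1+e^{|z|})}$ uniformly in $\gamma$, so differentiation under the integral sign (Morera or dominated convergence on difference quotients) shows $g$ is holomorphic on all of $\CC$; in particular the hypothesis \eqref{gunnar} holds with any $R>0$. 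Lemma~\ref{lemholext} then gives $\Phi(\V{x})\in\dom(e^{s\Id\Gamma(\chi_L)/2})$ for all $s>0$, hence $\Phi(\V{x})\in\dom(e^{z\Id\Gamma(\chi_L)})$ for every $z\in\CC$ (writing $e^{z\Id\Gamma(\chi_L)}=e^{\Re[z]\Id\Gamma(\chi_L)}e^{i\Im[z]\Id\Gamma(\chi_L)}$ and noting the second factor is unitary), and the identity \eqref{gunnar2} is precisely \eqref{numren} after replacing $z$ by $-z$ in the bookkeeping. The main obstacle I anticipate is the uniform (rather than merely pointwise or in-probability) convergence of $w_{L,t}$: one must be careful that the path-dependence is genuinely dominated, and the key is the deterministic bound $|\int_0^te^{-r\omega(\V{k})-i\V{k}\cdot\V{B}_r}\Id r|\le\omega(\V{k})^{-1}$ which is what makes the convergence uniform in $\gamma$ and thus lets Corollary~\ref{corGSEEPM} apply without any integrability subtleties beyond the standing infrared condition $\omega^{-3}\eta^2\in L^1_\loc$.
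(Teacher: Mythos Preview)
Your proposal is correct and follows essentially the same approach as the paper: the uniform convergence of $w_{L,t}$ via the deterministic domination $|\int_0^t e^{-r\omega(\V{k})-i\V{k}\cdot\V{B}_r}\Id r|\le\omega(\V{k})^{-1}$ together with the infrared condition, then Cor.~\ref{corGSEEPM} for $\Re[z]\ge0$, and finally Lem.~\ref{lemholext} applied to the entire function $g$. The paper's proof is slightly more compact in that it writes the Cauchy estimate $|w_{L,t}-w_{L,t'}|\le2\ee^2\int_{\{|\V{k}|\le L\}}e^{-t\omega(\V{k})}\eta(\V{k})^2\omega(\V{k})^{-3}\Id\V{k}$ directly and invokes dominated convergence, which is exactly what your $\delta$-splitting argument spells out by hand.
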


\begin{proof} 
We start by observing that
\begin{align*}
|w_{L,t}-w_{L,t'}|&\le2\ee^2\int_{\{|\V{k}|\le L\}}e^{-t\omega(\V{k})}
\frac{\eta(\V{k})^2}{\omega(\V{k})^3}\Id\V{k},\quad t'>t>0.
\end{align*}
By virtue of the infrared condition this implies the first statement, which further reveals that the right 
hand side of \eqref{numren} defines an entire function of $z\in\CC$. For all $z\in\CC$ with
$\Re[z]\le0$, the identity \eqref{numren} is now a direct consequence of Corollary~\ref{corGSEEPM}.
By virtue of Lemma~\ref{lemholext} we finally extend \eqref{numren} to all $z\in\CC$.
\end{proof}

\begin{proof}[Proof of Theorem~\ref{introthmsupexpNum}.]
We choose $L=1$ in \eqref{defchiL} and set $\ol{\chi}_1:=1-\chi_1$. In view of 
Theorems~\ref{introthmexpdec} and~\ref{introthmsupexp} it suffices to treat the massless case
$\mu=0$, where $\ol{\chi}_1\le\omega\wedge1$. Let $r>0$. 
By virtue of a Cauchy-Schwarz inequality and Proposition~\ref{kettoren} we obtain
\begin{align*}
\|e^{r\Id\Gamma(1)}\Phi(\V{x})\|&\le\|e^{2r\Id\Gamma(\chi_1)}\Phi(\V{x})\|^\eh
\|e^{2r\Id\Gamma(\ol{\chi}_1)}\Phi(\V{x})\|^\eh
\\
&\le\|\Phi(\V{x})\|^\eh e^{(e^{4r}-1)\|w_{1,\infty}\|_\infty/4}
\|e^{2r\Id\Gamma(\omega\wedge1)}\Phi(\V{x})\|^\eh,
\end{align*}
and we conclude by applying the bound in 
Theorem~\ref{introthmexpdec} (resp. Theorem~\ref{introthmsupexp}).
\end{proof}


\subsection{On Gaussian Domination}\label{ssecGaussdom}

\noindent
Another application of Corollary~\ref{corGSEEPM} and Lemma~\ref{lemholext} is the following proof
of our last main theorem.

\begin{proof}[Proof of Theorem~\ref{introthmGaussdomlb}.]
Let $h\in L^2(\RR^3)$. Plugging the Weyl operator $\sW(-ish)=e^{-is\vp(h)}$ into 
\eqref{defwtKO} and taking \eqref{defWeyl} into account, we obtain
\begin{align*}
\wt{K}_{\V{x},t}[\sW(-ish)]&=e^{-s^2\|h\|^2/2+is\tilde{\alpha}_{\V{x},t}[h]},
\end{align*}
for all $t\ge0$ and $s\in\RR$, with
\begin{align*}
\tilde{\alpha}_{\V{x},t}[h]&:=\SPn{h}{e_{\V{x}}U_{t}^-}+\SPn{e_{\V{x}}U_{-t}^-}{h}.
\end{align*}
We now proceed in three steps. In the first step we prove \eqref{lbGauss} imposing a technical extra 
condition on $h$ that is dropped in the second one. In the third step we verify \eqref{notinGauss}.

{\em Step 1.} Assume in addition that that $\omega^{\epsilon} h\in L^2(\RR^3)$, for some 
$\epsilon>0$. Then we infer from \cite[Lem.~3.6]{MatteMoeller2017} that 
$\tilde{\alpha}_{\V{x}}[h]$ is indistinguishable from the $(\mr{\fF}_t)_{t\ge0}$-adapted
complex-valued integral process
\begin{align}\nonumber
\alpha_{\V{x},t}[h]&:=\ee\int_0^t\int_{\RR^3}e^{-s\omega(\V{k})-i\V{k}\cdot\V{B}_s^{\V{x}}}
\ol{h(\V{k})}\omega(\V{k})^\mh\eta(\V{k})\Id\V{k}\,\Id s
\\\label{claudi2000}
&\quad+\ee\int_0^t\int_{\RR^3}e^{-s\omega(\V{k})+i\V{k}\cdot\V{B}_{-s}^{\V{x}}}h(\V{k})
\omega(\V{k})^\mh\eta(\V{k})\Id\V{k}\,\Id s,\quad t\ge0.
\end{align}
In fact, thanks to the infrared condition $\omega^{-\nf{3}{2}}\eta\in L^2_\loc(\RR^3)$ and the
additional condition on $h$, both double Lebesgue integrals in the previous formula exist also for 
$t=\infty$ and define a bounded $\fB(\Omega_{\Wi})$-measurable function 
$\alpha_{\V{x},\infty}[h]:\Omega_{\Wi}\to\CC$.
Furthermore, $\alpha_{\V{x},t}[h]\to\alpha_{\V{x},\infty}[h]$ uniformly on $\Omega_{\Wi}$, 
as $t\to\infty$. Hence, Corollary~\ref{corGSEEPM} applies and yields
\begin{align*}
\SPn{\Phi(\V{x})}{e^{-is\vp(h)}\Phi(\V{x})}
&=\|\Phi(\V{x})\|^2\int_{\Omega_{\Wi}}e^{-s^2\|h\|^2/2+is\alpha_{\V{x},\infty}[h]}
\Id\mu_{\V{x},\infty},
\end{align*}
for all $s\in\RR$. Upon integrating the above identity over $\RR$ with respect to the measure
$(2\pi)^{\mh}e^{-s^2/2}\Id s$ and employing the spectral calculus and Fubini's theorem, 
we deduce that
\begin{align}\nonumber
\SPn{\Phi(\V{x})}{&e^{-z\vp(h)^2/2}\Phi(\V{x})}
\\\label{frieda1}
&=\frac{\|\Phi(\V{x})\|^2}{\sqrt{1+z\|h\|^2}}
\int_{\Omega_{\Wi}}e^{-z\alpha_{\V{x},\infty}[h]^2/2(1+z\|h\|^2)}\Id\mu_{\V{x},\infty}.
\end{align}
Here we also put $z^\eh h$ in place of $h$, which is allowed for all $z\in[0,\infty)$. 
Since $\alpha_{\V{x},\infty}[h]$ is a bounded random variable, 
the right hand side of \eqref{frieda1} is, however,
well-defined and analytic as a function of $z$ on the disc  $D_h:=\{z\in\CC:|z|<1/\|h\|^2\}$. 
(Here we choose the branch of the complex square root slit on the negative half-axis in the 
denominator on the right hand side.) By the spectral calculus, the left hand side of \eqref{frieda1}
is well-defined and analytic on $\{z\in\CC:\Re[z]>0\}$. By the identity theorem for holomorphic 
functions the left and right hand sides of \eqref{frieda1} agree on $\{z\in D_h:\Re[z]>0\}$.
Employing Lemma~\ref{lemholext} we conclude that $\Phi(\V{x})\in\dom(e^{r\vp(h)^2/4})$,
for every $r\in(0,1/\|h\|^2)$, and 
\begin{align}\nonumber
\SPn{e^{-x\vp(h)^2/4}\Phi(\V{x})}{&e^{-x\vp(h)^2/4-iy\vp(h)^2/2}\Phi(\V{x})}
\\
&=\frac{\|\Phi(\V{x})\|^2}{\sqrt{1+z\|h\|^2}}
\int_{\Omega_{\Wi}}e^{-z\alpha_{\V{x},\infty}[h]^2/2(1+z\|h\|^2)}\Id\mu_{\V{x},\infty},
\end{align}
for all $z\in D_h$ with $x:=\Re[z]$ and $y:=\Im[z]$.

Now assume that $h\in\mathfrak{r}$, where $\mathfrak{r}$ is the completely real subspace
of $L^2(\RR^3)$ defined in \eqref{realsubspace}. Then $\alpha_{\V{x},\infty}[h]$ is a
{\em real-valued} random variable: For taking the complex conjugate of $\alpha_{\V{x},\infty}[h]$
leads to the same result as substituting $\V{k}\to-\V{k}$ in the two integrals on the right hand side
of \eqref{claudi2000}. Also assuming $\|h\|<1/2$ 
and choosing $z=-4$, we see that $e^{2\alpha_{\V{x},\infty}[h]^2/(1-4\|h\|^2)}\ge1$ and
obtain the desired lower bound \eqref{lbGauss} under
the extra condition $\omega^{\epsilon} h\in L^2(\RR^3)$.

{\em Step~2.} Let $h\in\mathfrak{r}$ satisfy $\|h\|<1/2$ but otherwise be arbitrary. 
Pick some $\epsilon>0$ and $h_n\in\mathfrak{r}$
satisfying $\|h_n\|\le\|h\|$, $\omega^\epsilon  h_n\in L^2(\RR^3)$, $n\in\NN$, 
and $h_n\to h$, $n\to\infty$. (E.g., $h_n=1_{\{\omega<n\}}h$.)
Pick some $\alpha>1$ such that $4\alpha^2\|h\|^2<1$. 
Define $\Theta(t):=c(t/R)e^{t^2}$, $t\in\RR$, for some $R>0$ to be fixed later on, where
$c:\RR\to[0,1]$ is continuous with $\supp(c)\subset[-2,2]$ and $c=1$ on $[-1,1]$.
If $g\in\{h\}\cup\{h_n|\,n\in\NN\}$, then the spectral calculus and 
Lemma~\ref{leminvGauss} entail
\begin{align*}
\|e^{\vp(g)^2}\Phi(\V{x})-\Theta(\vp(g))\Phi(\V{x})\|
&\le e^{-(\alpha-1)R^2}\|e^{\alpha\vp(g)^2}\Phi(\V{x})\|
\\
&\le\frac{e^{-(\alpha-1)R^2}}{\sqrt{1-4\alpha^2\|h\|^2}}\|e^{\Id\Gamma(1)/(\alpha-1)}\Phi(\V{x})\|,
\end{align*}
where the term in the second line is well-defined by Theorem~\ref{introthmsupexpNum}.
Here we also used the condition $\|h_n\|\le\|h\|$ in the last step. Let $\delta>0$.
Then the previous bound permits to fix a sufficiently large $R>0$ such that
\begin{align*}
\|e^{\vp(g)^2}\Phi(\V{x})-\Theta(\vp(g))\Phi(\V{x})\|<\frac{\delta}{3},\quad 
g\in\{h\}\cup\{h_n|\,n\in\NN\}.
\end{align*}
Since $h_n\to h$, $n\to\infty$, in $L^2(\RR^3)$, the strong continuity of the Weyl representation
further implies that $\vp(h_n)\to\vp(h)$ in the strong resolvent sense, whence
$\Theta(\vp(h_n))\to\Theta(\vp(h))$ strongly; see, e.g., \cite[Thm.~VIII.20(b)]{ReedSimonI}.
Thus, we find some $m\in\NN$ such that 
$\|\Theta(\vp(h_m))\Phi(\V{x})-\Theta(\vp(h))\Phi(\V{x})\|<\delta/3$. 
Combining these remarks and applying \eqref{lbGauss} to $h_m$ we find
\begin{align*}
\|e^{\vp(h)^2}\Phi(\V{x})\|&>\|e^{\vp(h_m)^2}\Phi(\V{x})\|-\delta
\ge\frac{\|\Phi(\V{x})\|}{\sqrt[4]{1-4\|h_m\|^2}}-\delta
\ge\frac{\|\Phi(\V{x})\|}{\sqrt[4]{1-4\|h\|^2}}-\delta.
\end{align*}
Since $\delta>0$ was arbitrary, this concludes the proof of \eqref{lbGauss}.

{\em Step 3.} Let $h\in\mathfrak{r}$ satisfy $\|h\|\ge1/2$. Then we find some $\delta\in(0,1]$
such that $4\delta\|h\|^2=1$. We further pick $\delta_n\in(0,1)$ such that $\delta_n\uparrow\delta$,
as $n\to\infty$. Let $\nu_{\Phi(\V{x})}$ be the spectral measure of $\vp(h)$ associated
with $\Phi(\V{x})$. Then the monotone convergence theorem and \eqref{lbGauss} imply
\begin{align*}
\int_\RR e^{2t^2}\Id\nu_{\Phi(\V{x})}&\ge\int_\RR e^{2\delta t^2}\Id\nu_{\Phi(\V{x})}
\\
&=\lim_{n\to\infty}\int_\RR e^{2\delta_n t^2}\Id\nu_{\Phi(\V{x})}
\ge\lim_{n\to\infty}\frac{\|\Phi(\V{x})\|^2}{\sqrt{1-4\delta_n\|h\|^2}}=\infty,
\end{align*}
which proves \eqref{notinGauss}
\end{proof}


\appendix


\section{Some Properties of Weyl Operators}

\noindent
Here we collect some technical results on the Weyl representation that are used in 
Section~\ref{secabsence} and in the succeeding Appendix~\ref{appGrosstrafo}.
The reader should keep in mind that $\sW(g)^*=\sW(-g)$, $g\in L^2(\RR^3)$, in what follows.

\begin{lem}\label{lemWvp}
Let $f,g\in L^2(\RR^3)$. Then $\sW(\pm g)\dom(\vp(f))=\dom(\vp(f))$ and
\begin{align}\label{Wvp}
\sW(g)\vp(f)\sW(g)^*&=\vp(f)-2\Re\SPn{f}{g}.
\end{align}
\end{lem}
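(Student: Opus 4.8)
The plan is to establish \eqref{Wvp} first on the exponential vectors, where everything is explicit, and then extend to all of $\dom(\vp(f))$ by a density/core argument, which will simultaneously give the invariance statement $\sW(\pm g)\dom(\vp(f))=\dom(\vp(f))$.

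First I would recall that $\vp(f)$ is the self-adjoint generator of the strongly continuous unitary group $\RR\ni t\mapsto\sW(-itf)$, so by Stone's theorem it suffices to identify the conjugated group. Using the Weyl relations \eqref{Weylrel},
\begin{align*}
\sW(g)\sW(-itf)\sW(-g)
&=e^{-i\Im\SPn{g}{-itf}}\sW(g-itf)\sW(-g)
\\
&=e^{-i\Im\SPn{g}{-itf}}e^{-i\Im\SPn{g-itf}{-g}}\sW(-itf),
\end{align*}
and a short computation of the two imaginary parts, namely $\Im\SPn{g}{-itf}=-t\Re\SPn{g}{f}$ and $\Im\SPn{g-itf}{-g}=\Im\SPn{-itf}{-g}=t\,\Im(i\SPn{f}{g})=t\,\Re\SPn{f}{g}$ (using $\SPn{g}{f}=\ol{\SPn{f}{g}}$), shows the total phase is $e^{2it\Re\SPn{f}{g}}$. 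Hence $\sW(g)\sW(-itf)\sW(g)^*=e^{2it\Re\SPn{f}{g}}\sW(-itf)$ for all $t\in\RR$. Since the right-hand side is the unitary group generated by $\vp(f)-2\Re\SPn{f}{g}$ (a bounded perturbation by a real scalar of a self-adjoint operator), Stone's theorem forces the generators to coincide, including domains: $\sW(g)\dom(\vp(f))=\dom(\vp(f))$ and \eqref{Wvp} holds on that domain. Replacing $g$ by $-g$ gives the statement for $\sW(-g)$ as well.

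There is essentially no hard obstacle here; the only thing to be careful about is the bookkeeping of the cocycle phases in the Weyl relations and the convention $\SPn{\cdot}{\cdot}$ is linear in the second slot (as fixed in Subsect.~\ref{ssecWeyl}), so that $\SPn{g}{-itf}=-it\SPn{g}{f}$ and its imaginary part is $-t\,\Im(i\SPn{g}{f})=-t\,\Re\SPn{g}{f}=-t\,\Re\SPn{f}{g}$. One should double-check the sign there, since the final answer $-2\Re\SPn{f}{g}$ must come out with the correct sign; tracking it through the two phase factors is the one place an error could creep in. Alternatively, if one prefers to avoid Stone's theorem, one can verify \eqref{Wvp} directly on exponential vectors $\epsilon(h)$ with $h\in\mathfrak r$ using the derivative formula $\Id_f\sW(f)\epsilon(h)v=-i\vp(iv)\sW(f)\epsilon(h)$ recalled in Subsect.~\ref{ssecWeyl} together with \eqref{defWeyl}, and then invoke totality of the exponential vectors plus the fact that they form a core for $\vp(f)$; but the Stone's-theorem route is cleaner and I would use it.
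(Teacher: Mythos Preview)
Your approach is exactly the paper's: conjugate the Weyl group $t\mapsto\sW(-itf)$ by $\sW(g)$ using the Weyl relations, read off the phase, and conclude via Stone's theorem (the paper just says ``spectral calculus''). One bookkeeping slip: in your second phase you write $\Im\SPn{-itf}{-g}=t\,\Re\SPn{f}{g}$, but with the paper's convention (conjugate-linear in the first slot) one has $\SPn{-itf}{-g}=\ol{(-it)}(-1)\SPn{f}{g}=-it\SPn{f}{g}$, hence $\Im\SPn{-itf}{-g}=-t\,\Re\SPn{f}{g}$; combining with the first phase then correctly gives the total $e^{2it\Re\SPn{f}{g}}$ you state, so the conclusion is unaffected.
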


\begin{proof}
By the spectral calculus and the Weyl relations \eqref{Weylrel} both sides of \eqref{Wvp}
generate the same strongly continuous unitary group.
\end{proof}

\begin{lem}\label{lemWdG}
Let $\vk$ be a maximal, non-negative, and invertible multiplication operator in $L^2(\RR^3)$ 
and let $g\in\dom(\vk)$. Then $\vp(\vk g)$ is infinitesimally form bounded with respect to
$\Id\Gamma(\vk)$, $\sW(\pm g)\fdom(\Id\Gamma(\vk))=\fdom(\Id\Gamma(\vk))$ and, 
for all $\psi\in\fdom(\Id\Gamma(\vk))$,
\begin{align}\label{WdGammaform}
\|\Id\Gamma(\vk)^\eh\sW(g)^*\psi\|^2=\|\Id\Gamma(\vk)^\eh\psi\|^2-\SPn{\psi}{\vp(\vk g)\psi}
+\SPn{g}{\vk g}\|\psi\|^2.
\end{align}
\end{lem}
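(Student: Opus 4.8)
The strategy is to reduce the statement about the second quantization $\Id\Gamma(\vk)$ to the already-established facts about $\Id\Gamma(\id)$ (the number operator) via a bounded invertible intertwiner, and to the already-known form bound for field operators. First I would treat the infinitesimal form boundedness of $\vp(\vk g)$ with respect to $\Id\Gamma(\vk)$: since $\vk$ is maximal, non-negative and invertible, $\vk^\mh$ is densely defined and $\vk g\in\dom(\vk^\mh)$ with $\vk^\mh(\vk g)=\vk^\eh g\in L^2(\RR^3)$ because $g\in\dom(\vk)$; hence \eqref{rbvp} applies with $f=\vk g$ and gives
\begin{align*}
\|\vp(\vk g)\psi\|&\le2\|(\vk^\mh\vee1)\vk g\|\,\|(1+\Id\Gamma(\vk))^\eh\psi\|,\quad\psi\in\fdom(\Id\Gamma(\vk)),
\end{align*}
and the usual $\varepsilon$-$\delta$ trick (replacing $g$ by $\lambda g$ inside the Weyl-type scaling, or simply using $\|(\vk^\mh\vee1)\vk g\|<\infty$ together with the interpolation $\|(1+\Id\Gamma(\vk))^\eh\psi\|^2\le\varepsilon\|\Id\Gamma(\vk)^\eh\psi\|^2+C_\varepsilon\|\psi\|^2$ — which follows by spectral calculus since one may cut at a large eigenvalue) yields infinitesimal form boundedness.

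Next I would prove the invariance $\sW(\pm g)\fdom(\Id\Gamma(\vk))=\fdom(\Id\Gamma(\vk))$ and formula \eqref{WdGammaform} first on the total set of exponential vectors and then extend. On exponential vectors one computes directly using \eqref{defWeyl}, \eqref{dGexpv}, and the identity $\Id\Gamma(\vk)\sW(g)^*\epsilon(h)=\Id\Gamma(\vk)e^{-\|g\|^2/2+\SPn{g}{h}}\epsilon(h-g)=e^{-\|g\|^2/2+\SPn{g}{h}}\epsilon'(h-g)\vk(h-g)$, which is legitimate because $h,g\in\dom(\vk)$. Taking the square norm and expanding produces exactly the three terms on the right of \eqref{WdGammaform}, where the cross term is recognized as $-\SPn{\psi}{\vp(\vk g)\psi}$ using $\vp(\vk g)=\ad(\vk g)+a(\vk g)$ on exponential vectors and the term $\SPn{g}{\vk g}\|\psi\|^2$ arises from the $\|\vk g\|^2$ contribution. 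The cleaner route, and the one I would actually carry out, is to work with the quadratic form identity: for $\psi$ in the form domain write $\Id\Gamma(\vk)^\eh\sW(g)^*\psi$ and use that $\sW(g)$ conjugates $\Id\Gamma(\vk)$ into $\Id\Gamma(\vk)$ shifted by a field operator — more precisely, the Weyl relations \eqref{Weylrel} together with differentiation of $t\mapsto\Gamma(e^{-it\vk})\sW(g)\Gamma(e^{it\vk})=\sW(e^{-it\vk}g)$ (a consequence of \eqref{Weylrel} with $U_1=e^{-it\vk}$) show, after differentiating at $t=0$, that $[\Id\Gamma(\vk),\sW(g)]=\ldots$ formally equals $\sW(g)$ times a field operator of $-i\vk g$; integrating this gives \eqref{WdGammaform} in the operator sense on a core, and the form version then follows.

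The density/closure step is where the only real care is needed: having established \eqref{WdGammaform} on $\mathrm{span}\{\epsilon(h):h\in\dom(\vk)\}$, which is a form core for $\Id\Gamma(\vk)$ (since $\dom(\vk)$ is dense and one can approximate by cutting $\vk$), one shows that $\sW(g)^*$ maps this form core into $\fdom(\Id\Gamma(\vk))$ with the quadratic form $\|\Id\Gamma(\vk)^\eh\sW(g)^*\cdot\|^2$ bounded by the right-hand side of \eqref{WdGammaform}; by the infinitesimal form boundedness of $\vp(\vk g)$ just proved, the right-hand side is controlled by $\|(1+\Id\Gamma(\vk))^\eh\psi\|^2$, so $\sW(g)^*$ extends to a bounded map from $\fdom(\Id\Gamma(\vk))$ (with the form norm) into itself, and \eqref{WdGammaform} extends by continuity. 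Applying the same with $-g$ in place of $g$ and using $\sW(g)\sW(-g)=\id$ (from \eqref{Weylrel}) gives the claimed equality of form domains. I expect the main obstacle to be purely bookkeeping: making sure the exponential vectors with $h\in\dom(\vk)$ genuinely form a form core (a standard monotone-cutoff argument, cutting $\vk$ at level $n$ and using \eqref{SGdGamma}-type reasoning), and keeping the three terms in \eqref{WdGammaform} correctly signed when passing between the exponential-vector computation and the field-operator form; no deep difficulty is anticipated.
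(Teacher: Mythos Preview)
Your plan is correct and essentially parallels the paper's approach: establish the identity on a dense subset and extend to the full form domain by a closure/continuity argument. The paper simply cites \cite[Lem.~C.3 and Lem.~C.4]{GriesemerWuensch2017} for the case $\psi\in\dom(\Id\Gamma(\vk))$ and then extends to $\fdom(\Id\Gamma(\vk))$ by noting that $\dom(\Id\Gamma(\vk))$ is a form core and $\Id\Gamma(\vk)^\eh\sW(g)^*$ is closed; your direct exponential-vector computation replaces the citation but reaches the same endpoint via the same mechanism.

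One small slip to fix: your parenthetical ``interpolation'' $\|(1+\Id\Gamma(\vk))^\eh\psi\|^2\le\varepsilon\|\Id\Gamma(\vk)^\eh\psi\|^2+C_\varepsilon\|\psi\|^2$ is false for $\varepsilon<1$, since the left side equals $\|\psi\|^2+\|\Id\Gamma(\vk)^\eh\psi\|^2$ exactly. The infinitesimal form bound you want follows instead straight from \eqref{rba}: $|\SPn{\psi}{\vp(\vk g)\psi}|=2|\Re\SPn{\psi}{a(\vk g)\psi}|\le2\|\vk^\eh g\|\,\|\psi\|\,\|\Id\Gamma(\vk)^\eh\psi\|$, and then Young's inequality gives the $\varepsilon$-bound. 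This does not affect the rest of your argument.
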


\begin{proof}
If we consider only $\psi\in\dom(\Id\Gamma(\vk))$ in \eqref{WdGammaform}, then detailed proofs 
of all statements can be found, e.g., in \cite[Lem.~C.3 and Lem.~C.4]{GriesemerWuensch2017}.
Since $\dom(\Id\Gamma(\vk))$ is a form core for $\Id\Gamma(\omega)$ and 
$\Id\Gamma(\vk)^\eh\sW(g)^*$ is closed, it is, however, clear that \eqref{WdGammaform} 
extends to all $\psi\in\fdom(\Id\Gamma(\vk))$.
\end{proof}

The next lemma summarizes results from \cite[Lem.~C.4 and Cor.~C.5]{GriesemerWuensch2017}: 

\begin{lem}\label{lemGrWubd}
Let $g\in\fdom(\omega)$. Then $\sW(g)\fdom(\Id\Gamma(\omega))=\fdom(\Id\Gamma(\omega))$ and
\begin{align}\label{GrWubd}
\|(1+\Id\Gamma(\omega))^\eh\sW(g)(1+\Id\Gamma(\omega))^\mh\|&\le1+\|\omega^\eh g\|.
\end{align}
\end{lem}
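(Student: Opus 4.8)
\textbf{Proof plan for Lemma \ref{lemGrWubd}.}
The statement to establish is that, for $g\in\fdom(\omega)$, conjugation by the Weyl operator $\sW(g)$ preserves the form domain of $\Id\Gamma(\omega)$ together with the quantitative bound \eqref{GrWubd}. Since the lemma is explicitly attributed to \cite{GriesemerWuensch2017}, the plan is essentially to reconstruct that short argument from the pieces already assembled in the preceding Appendix, most importantly Lemma \ref{lemWdG} applied with the specific choice $\vk=\omega$ (which is maximal, non-negative; invertibility holds automatically when $\mu>0$ and, in the massless case, one applies the lemma to $\omega+\delta$ and lets $\delta\downarrow0$, or simply notes that the cited proof in \cite{GriesemerWuensch2017} does not actually need invertibility for this conclusion). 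First I would record the invariance $\sW(\pm g)\fdom(\Id\Gamma(\omega))=\fdom(\Id\Gamma(\omega))$, which is the first assertion of Lemma \ref{lemWdG} verbatim; combined with the obvious invariance of $\dom((1+\Id\Gamma(\omega))^{\eh})=\fdom(\Id\Gamma(\omega))$ this shows the operator $(1+\Id\Gamma(\omega))^{\eh}\sW(g)(1+\Id\Gamma(\omega))^{\mh}$ is everywhere defined on $\sF$, hence (being closed as a product of a closed and a bounded operator) bounded by the closed graph theorem.

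For the norm bound itself I would argue as follows. Fix $\psi\in\fdom(\Id\Gamma(\omega))$ with $\|\psi\|=1$ and set $\phi:=\sW(g)\psi$. By \eqref{WdGammaform} of Lemma \ref{lemWdG}, applied with $\vk=\omega$ and with $-g$ in place of $g$ (so that $\sW(g)^*$ becomes $\sW(-g)^*=\sW(g)$), one gets
\begin{align*}
\|\Id\Gamma(\omega)^{\eh}\sW(g)\psi\|^2
&=\|\Id\Gamma(\omega)^{\eh}\psi\|^2+\SPn{\psi}{\vp(\omega g)\psi}+\SPn{g}{\omega g}\|\psi\|^2.
\end{align*}
The cross term is controlled by the relative form bound \eqref{rbvp}: with $f=\omega g$ and the weight $\vk=\omega$ one has $\omega^{\mh}f=\omega^{\eh}g$, so $|\SPn{\psi}{\vp(\omega g)\psi}|\le2\|(\omega^{\eh}\vee1)g\|\,\|(1+\Id\Gamma(\omega))^{\eh}\psi\|\,\|\psi\|$; after absorbing this into the other two terms via the elementary inequality $2ab\le a^2+b^2$ one arrives at $\|(1+\Id\Gamma(\omega))^{\eh}\sW(g)\psi\|^2\le(1+\|\omega^{\eh}g\|)^2\|(1+\Id\Gamma(\omega))^{\eh}\psi\|^2$, giving \eqref{GrWubd} after replacing $\psi$ by $(1+\Id\Gamma(\omega))^{\mh}\chi$ for arbitrary $\chi\in\sF$. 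The bookkeeping has to be done carefully so that the constant comes out as exactly $1+\|\omega^{\eh}g\|$ rather than something larger; completing the square in the variable $\|\Id\Gamma(\omega)^{\eh}\psi\|$ against the $\SPn{g}{\omega g}$ term is the clean way to see this.

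The only genuine obstacle is the invertibility hypothesis in Lemma \ref{lemWdG}, which fails for $\omega$ in the massless case $\mu=0$. I would circumvent it by the standard regularization: apply Lemma \ref{lemWdG} to $\omega_\delta:=\omega+\delta$ for $\delta>0$ (invertible, maximal, non-negative, and $g\in\fdom(\omega)\subset\fdom(\omega_\delta)$), obtain the bound $\|(1+\Id\Gamma(\omega_\delta))^{\eh}\sW(g)(1+\Id\Gamma(\omega_\delta))^{\mh}\|\le1+\|\omega_\delta^{\eh}g\|$ uniformly controllable as $\delta\downarrow0$, and then pass to the limit using $\Id\Gamma(\omega_\delta)=\Id\Gamma(\omega)+\delta\Id\Gamma(1)$ together with monotone convergence of the associated quadratic forms on the common form core $\fdom(\Id\Gamma(\omega))\cap\fdom(\Id\Gamma(1))$; alternatively, and more economically, I would simply cite the computation in \cite[Lem.~C.4 and Cor.~C.5]{GriesemerWuensch2017} as done in the statement, since that reference handles the massless case directly. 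Given that the lemma is already flagged as a summary of results from \cite{GriesemerWuensch2017}, the proof in the paper can legitimately be a one-line pointer to that source, and the above is the argument it encapsulates.
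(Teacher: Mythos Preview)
The paper does not prove this lemma; it simply records it as a summary of \cite[Lem.~C.4 and Cor.~C.5]{GriesemerWuensch2017}. Your reconstruction is in the right spirit, and you correctly anticipate that the paper's ``proof'' is a citation, but the argument as written has two genuine gaps.

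First, Lemma~\ref{lemWdG} requires $g\in\dom(\vk)$, not merely $g\in\fdom(\vk)$. With $\vk=\omega$ (or $\vk=\omega+\delta$) this means $g\in\dom(\omega)$, whereas the hypothesis of Lemma~\ref{lemGrWubd} is only $g\in\fdom(\omega)=\dom(\omega^{\eh})$. Your regularization $\omega\mapsto\omega+\delta$ addresses invertibility but does nothing for this domain mismatch: $\dom(\omega+\delta)=\dom(\omega)$. Second, your appeal to \eqref{rbvp} with $f=\omega g$ and weight $\vk=\omega$ is incorrect: that bound involves $\|(\vk^{\mh}\vee1)f\|=\|(\omega^{\eh}\vee\omega)g\|$, not $\|(\omega^{\eh}\vee1)g\|$ as you wrote, and the former need not be finite when $g\notin\dom(\omega)$.

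Both issues are repaired by the same device: apply Lemma~\ref{lemWdG} with the \emph{bounded} weight $\vk=\omega_m:=\omega\wedge m$ (so $g\in L^2=\dom(\omega_m)$ trivially), bound the cross term via \eqref{rba} as $|\SPn{\psi}{\vp(\omega_m g)\psi}|=2|\Re\SPn{\psi}{a(\omega_m g)\psi}|\le 2\|\omega_m^{\eh}g\|\,\|\Id\Gamma(\omega_m)^{\eh}\psi\|\,\|\psi\|$, complete the square to get $\|\Id\Gamma(\omega_m)^{\eh}\sW(g)\psi\|\le\|\Id\Gamma(\omega_m)^{\eh}\psi\|+\|\omega_m^{\eh}g\|\,\|\psi\|$, and then let $m\to\infty$ by monotone convergence. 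This is exactly the mechanism the paper itself uses in the proof of the next lemma (Lemma~\ref{lemstrcontWeyldG}). The final step, passing from the bound on $\|\Id\Gamma(\omega)^{\eh}\sW(g)\psi\|$ to the bound on $\|(1+\Id\Gamma(\omega))^{\eh}\sW(g)\psi\|$ with the exact constant $1+\|\omega^{\eh}g\|$, follows from the elementary inequality $(A+cB)^2+B^2\le(1+c)^2(A^2+B^2)$ for $A,B,c\ge0$.
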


The previous two lemmas permit to complement the strong continuity of the Weyl
representation by the following result:

\begin{lem}\label{lemstrcontWeyldG}
Let $g,g_n\in\fdom(\omega)$, $n\in\NN$, such that $f_n\to f$ and $\omega^\eh f_n\to\omega^\eh f$
in $L^2(\RR^3)$, as $n\to\infty$. Then
\begin{align*}
\Id\Gamma(\omega)^\eh(\sW(g_n)-\sW(g))(1+\Id\Gamma(\omega))^\mh\psi
&\xrightarrow{\;\;n\to\infty\;\;}0,\quad\psi\in\sF.
\end{align*}
\end{lem}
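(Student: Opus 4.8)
The plan is to trade the unbounded weight $\Id\Gamma(\omega)^\eh$ for the bounded, conjugated Weyl operators
\[
D_n:=(1+\Id\Gamma(\omega))^\eh\sW(g_n)(1+\Id\Gamma(\omega))^\mh,\qquad
D:=(1+\Id\Gamma(\omega))^\eh\sW(g)(1+\Id\Gamma(\omega))^\mh ,
\]
and to prove that $D_n\to D$ strongly. By Lem.~\ref{lemGrWubd} (which applies since $g_n,g\in\fdom(\omega)$) these are bounded with $\|D_n\|\le1+\|\omega^\eh g_n\|$, so $C:=\sup_n\|D_n\|<\infty$ because $\omega^\eh g_n$ converges in $L^2(\RR^3)$. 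For a fixed $\psi\in\sF$ set $\phi_n:=(\sW(g_n)-\sW(g))(1+\Id\Gamma(\omega))^\mh\psi$; by Lem.~\ref{lemGrWubd} this lies in $\fdom(\Id\Gamma(\omega))$, so $\|(1+\Id\Gamma(\omega))^\eh\phi_n\|^2=\|\phi_n\|^2+\|\Id\Gamma(\omega)^\eh\phi_n\|^2$. Since $\|\phi_n\|\to0$ by the strong continuity of the Weyl representation and $(1+\Id\Gamma(\omega))^\eh\phi_n=(D_n-D)\psi$, the asserted convergence $\Id\Gamma(\omega)^\eh\phi_n\to0$ is \emph{equivalent} to $D_n\psi\to D\psi$. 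Hence the lemma follows once $D_n\to D$ strongly is established.

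Because $\sup_n\|D_n\|\le C$, it suffices to verify $D_n\psi\to D\psi$ for $\psi$ in a dense subset of $\sF$, and I would take $\sD:=(1+\Id\Gamma(\omega))^\eh\,\mathrm{span}\{\epsilon(h):h\in\dom(\omega)\}$. This set is dense: $\mathrm{span}\{\epsilon(h):h\in\dom(\omega)\}$ is a core for $\Id\Gamma(\omega)$, hence for $(1+\Id\Gamma(\omega))^\eh$, and applying to one of its cores a self-adjoint operator with bounded inverse yields a dense subspace (given $\psi$, approximate $(1+\Id\Gamma(\omega))^\mh\psi$ in graph norm by core vectors). For $h\in\dom(\omega)$ and $\psi=(1+\Id\Gamma(\omega))^\eh\epsilon(h)$, formula \eqref{defWeyl} gives $D_n\psi=c_n(1+\Id\Gamma(\omega))^\eh\epsilon(g_n+h)$ with $c_n:=e^{-\|g_n\|^2/2-\SPn{g_n}{h}}\to c:=e^{-\|g\|^2/2-\SPn{g}{h}}$, and $D\psi=c(1+\Id\Gamma(\omega))^\eh\epsilon(g+h)$; moreover $g_n+h\to g+h$ in the graph norm of $\omega^\eh$ since $g_n\to g$, $\omega^\eh g_n\to\omega^\eh g$ in $L^2(\RR^3)$ and $h\in\dom(\omega)$.

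Everything thus reduces to the continuity of the map $\fdom(\omega)\ni k\mapsto(1+\Id\Gamma(\omega))^\eh\epsilon(k)\in\sF$. To get it I would first compute, for $a,b\in\dom(\omega)$, using \eqref{dGexpv} and the relations $a(f)\epsilon(b)=\SPn{f}{b}\epsilon(b)$, $\ad(f)^*=a(f)$, $\SPn{\epsilon(a)}{\epsilon(b)}=e^{\SPn{a}{b}}$, that
\[
\SPn{(1+\Id\Gamma(\omega))^\eh\epsilon(a)}{(1+\Id\Gamma(\omega))^\eh\epsilon(b)}
=\SPn{\epsilon(a)}{\epsilon(b)}+\SPn{\epsilon(a)}{\Id\Gamma(\omega)\epsilon(b)}
=\bigl(1+\SPn{\omega^\eh a}{\omega^\eh b}\bigr)e^{\SPn{a}{b}} .
\]
Both sides are continuous in $(a,b)$ for the $\omega^\eh$-graph norm and $\dom(\omega)$ is dense in $\fdom(\omega)$, so this identity extends to all $a,b\in\fdom(\omega)$; expanding $\|(1+\Id\Gamma(\omega))^\eh(\epsilon(k_n)-\epsilon(k))\|^2$ with it then shows the expression tends to $0$ as $k_n\to k$ in $\fdom(\omega)$, which is the required continuity. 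Combining this with the previous two paragraphs gives $D_n\psi\to D\psi$ on $\sD$, hence $D_n\to D$ strongly, hence the claim. I expect the only mildly delicate points to be the domain bookkeeping behind the reduction to $D_n\to D$ and the small remark that $(1+\Id\Gamma(\omega))^\eh$ carries one of its cores onto a dense set; the exponential-vector computation itself is routine.
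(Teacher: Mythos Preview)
Your proof is correct but takes a different route from the paper's. The paper first uses the Weyl relations to reduce to the case $g=0$, $h_n:=g_n-g\to 0$: it suffices to show $\Id\Gamma(\omega)^\eh(\sW(h_n)-\id)\phi\to 0$ for $\phi\in\dom(\Id\Gamma(\omega))$. Then, instead of testing on exponential vectors, it applies the identity \eqref{WdGammaform} (Lem.~\ref{lemWdG}) with the truncated weights $\omega_m:=m\wedge\omega$, expands $\|\Id\Gamma(\omega_m)^\eh\sW(h_n)\phi-\Id\Gamma(\omega_m)^\eh\phi\|^2$ into terms controlled by $\|\omega^\eh h_n\|$, $\|a(\omega_m h_n)\phi\|$, and $\SPn{\Id\Gamma(\omega_m)\phi}{\sW(h_n)\phi}$, bounds these uniformly in $m$, lets $m\to\infty$, and finally $n\to\infty$ using strong continuity of $\sW$. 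Your argument instead rephrases the claim as strong convergence of the uniformly bounded operators $D_n=(1+\Id\Gamma(\omega))^\eh\sW(g_n)(1+\Id\Gamma(\omega))^\mh$, checks it on the dense set $(1+\Id\Gamma(\omega))^\eh\,\mathrm{span}\{\epsilon(h):h\in\dom(\omega)\}$, and reduces everything to the explicit formula $\SPn{(1+\Id\Gamma(\omega))^\eh\epsilon(a)}{(1+\Id\Gamma(\omega))^\eh\epsilon(b)}=(1+\SPn{\omega^\eh a}{\omega^\eh b})e^{\SPn{a}{b}}$, which is manifestly jointly continuous in $(a,b)$ for the $\omega^\eh$-graph norm. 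Your approach avoids the $\omega_m$ truncation entirely and is arguably more transparent, while the paper's argument is slightly more robust in that it does not lean on the exponential-vector calculus and works directly with the abstract identity \eqref{WdGammaform}.
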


\begin{proof}
Set $h_n:=g_n-g$, $n\in\NN$. In view of the Weyl relations and \eqref{GrWubd} it suffices to show
that $\Id\Gamma(\omega)^\eh(\sW(h_n)-\id)\phi\to0$, $n\to\infty$, for every 
$\phi\in\dom(\Id\Gamma(\omega))$. So let $\phi$ be in the domain of $\Id\Gamma(\omega)$.
Applying Lemma~\ref{lemWdG} to every $\vk=\omega_m:=m\wedge\omega$ with $m\in\NN$, we obtain
\begin{align*}
\|\Id\Gamma(\omega_m)^\eh\sW(h_n)\phi-\Id\Gamma(\omega_m)^\eh\phi\|^2
&=2\|\Id\Gamma(\omega_m)^\eh\phi\|^2+2\Re\SPn{\phi}{a(\omega_mh_n)\phi}
\\
&\quad-2\Re\SPn{\Id\Gamma(\omega_m)\phi}{\sW(h_n)\phi}
+\|\omega_m^\eh h_n\|^2\|\phi\|^2
\\
&\le2\|\Id\Gamma(\omega)^\eh\phi\|^2+2\|\phi\|\|\omega^\eh h_n\|\|\Id\Gamma(\omega)^\eh\phi\|
\\
&\quad-2\Re\SPn{\Id\Gamma(\omega_m)\phi}{\sW(h_n)\phi}
+\|\omega^\eh h_n\|^2\|\phi\|^2.
\end{align*}
Passing to the limit $m\to\infty$ in the terms in the first and last lines we find
\begin{align*}
\|\Id\Gamma(\omega)^\eh\sW(h_n)\phi-\Id\Gamma(\omega)^\eh\phi\|^2
&\le2\|\Id\Gamma(\omega)^\eh\phi\|^2+2\|\phi\|\|\omega^\eh h_n\|\|\Id\Gamma(\omega)^\eh\phi\|
\\
&\quad-2\Re\SPn{\Id\Gamma(\omega)\phi}{\sW(h_n)\phi}
+\|\omega^\eh h_n\|^2\|\phi\|^2.
\end{align*}
Since $\sW(h_n)\to\id$ strongly and $\|\omega^\eh h_n\|\to0$, the right hand side of the previous
inequality goes to zero as $n\to\infty$.
\end{proof}


\section{Gross Transformation of the Nelson Hamiltonian with Ultraviolet Cutoff}\label{appGrosstrafo}

\noindent
In this appendix we verify the assertions on the Gross transformed Nelson form stated in
Proposition~\ref{propGrosstrafo}. Before we prove the latter proposition, we present a lemma which,
together with Lemma~\ref{lemWvp} and Lemma~\ref{lemWdG},
explains how the various terms in the Nelson form $\nv_{\Geb,\UV}$ 
transform under $G_{K,\UV}$. Recall that $G_{K,\UV}$ is defined in \eqref{defGrosstrafo}.

\begin{lem}\label{lemnablaGrossPsi}
Let $0\le K<\UV<\infty$ be such that $\beta_{K,\UV}\in L^2(\RR^3)$ and let $\Psi\in\QGV$.
Then $G_{K,\UV}^*\Psi\in\dom(\mathfrak{t}_{\Geb}^+)$ and, for a.e. $\V{x}\in\RR^3$,
\begin{align}\label{nablaGrossPsi}
\nabla(G_{K,\UV}^*\Psi)(\V{x})&=\sW(-e_{\V{x}}\beta_{K,\UV})
\big(\nabla\Psi(\V{x})+i\vp(e_{\V{x}}\V{m}\beta_{K,\UV})\Psi(\V{x})\big).
\end{align}
\end{lem}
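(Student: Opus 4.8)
The statement to prove is the differentiation formula \eqref{nablaGrossPsi} together with the claim that $G_{K,\UV}^*\Psi\in\dom(\mathfrak{t}_\Geb^+)$, i.e., $G_{K,\UV}^*\Psi\in W^{1,2}(\RR^3,\sF)$ (the restriction to $\Geb$ then lies in the closure of $\sD(\Geb,\sF)$ since $\Psi$ does and $\sW(-e_{\cdot}\beta_{K,\UV})$ is, as we will see, a smooth family of unitaries). The natural strategy is to first establish the formula on the dense subspace $\sD(\Geb,\fdom(\Id\Gamma(\omega)))$, where everything is classically differentiable, and then pass to the limit using $W^{1,2}$-closedness.

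\textbf{Step 1: the formula on nice vectors.} Take $\Psi=g\phi$ with $g\in C_0^\infty(\Geb)$ and $\phi\in\fdom(\Id\Gamma(\omega))$; by linearity handle finite sums of such. Then $(G_{K,\UV}^*\Psi)(\V{x})=g(\V{x})\sW(-e_{\V{x}}\beta_{K,\UV})\phi$. The map $\V{x}\mapsto e_{\V{x}}=e^{-i\V{m}\cdot\V{x}}$ is differentiable into $L^2(\RR^3)$ on the support of $g$ (since $\beta_{K,\UV}\in L^2$ and $\V{m}\beta_{K,\UV}\in L^2$ because $0\le K<\UV<\infty$ forces a compactly supported, bounded $\V{m}\beta_{K,\UV}$), with $\partial_{x_j}e_{\V{x}}=-im_je_{\V{x}}$. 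Hence $\V{x}\mapsto -e_{\V{x}}\beta_{K,\UV}$ is differentiable into $L^2(\RR^3)$ with $\partial_{x_j}(-e_{\V{x}}\beta_{K,\UV})=im_je_{\V{x}}\beta_{K,\UV}$, and this derivative lies in the real subspace $\mathfrak{r}$ since $\beta_{K,\UV}\in\mathfrak{r}$ (as $\beta_{K,\UV}$ is a real even function times $e^{-i\V{m}\cdot\V{x}}$... more precisely one checks $i m_j e_{\V x}\beta_{K,\UV}\in\mathfrak r$ using that $\beta_{K,\UV}$ is real and even in $\V k$). Therefore the chain rule \eqref{karl} applies to $\beta(\V{x}):=-e_{\V{x}}\beta_{K,\UV}$ and gives, for $\phi\in\fdom(\Id\Gamma(\omega))$ (one first does $\phi$ an exponential vector or more generally uses \eqref{rbvp} to control $\vp$-terms on $\fdom(\Id\Gamma(\omega))$, noting $\V{m}\beta_{K,\UV}\in\dom(\omega^\mh)$ because it is bounded with compact support),
\begin{align*}
\partial_{x_j}\big(\sW(-e_{\V{x}}\beta_{K,\UV})\phi\big)
&=-i\vp\big(i\,\partial_{x_j}(-e_{\V{x}}\beta_{K,\UV})\big)\sW(-e_{\V{x}}\beta_{K,\UV})\phi
\\
&=\vp(m_je_{\V{x}}\beta_{K,\UV})\sW(-e_{\V{x}}\beta_{K,\UV})\phi.
\end{align*}
Using the Weyl-conjugation identity \eqref{Wvp} of Lem.~\ref{lemWvp} to move $\vp(m_je_{\V x}\beta_{K,\UV})$ past $\sW(-e_{\V x}\beta_{K,\UV})$ — picking up a scalar term $-2\Re\SPn{m_je_{\V x}\beta_{K,\UV}}{-e_{\V x}\beta_{K,\UV}}$ which one checks vanishes because $\int m_j|\beta_{K,\UV}|^2$ is the integral of an odd function — this rearranges to $\sW(-e_{\V x}\beta_{K,\UV})\vp(m_je_{\V x}\beta_{K,\UV})\phi$. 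Combining with the product rule for $g(\V{x})\cdot(\,\cdot\,)$ and writing $\vp(e_{\V x}m_j\beta_{K,\UV})\phi$ using $\partial_{x_j}(g\phi)(\V x)=(\partial_{x_j}g)(\V x)\phi$, we obtain exactly \eqref{nablaGrossPsi} for such $\Psi$, with classical partial derivatives.

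\textbf{Step 2: passing to general $\Psi\in\QGV$.} By \eqref{deftGplus}--\eqref{defQGV}, $\sD(\Geb,\fdom(\Id\Gamma(\omega)))$ is a core for both $\mathfrak{t}_\Geb^+$ and $\Id\Gamma(\omega)$, so given $\Psi\in\QGV$ pick $\Psi_n\in\sD(\Geb,\fdom(\Id\Gamma(\omega)))$ with $\Psi_n\to\Psi$ in the graph norm of $(\mathfrak{t}_\Geb^++\int\|\Id\Gamma(\omega)^\eh(\cdot)(\V x)\|^2\Id\V x)^{1/2}$, i.e. in $W^{1,2}(\RR^3,\sF)$ and in $L^2(\RR^3,\fdom(\Id\Gamma(\omega)))$. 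Since $G_{K,\UV}^*$ is unitary on $L^2(\RR^3,\sF)$, $G_{K,\UV}^*\Psi_n\to G_{K,\UV}^*\Psi$ in $L^2$. On the right-hand side of \eqref{nablaGrossPsi}: $\sW(-e_{\V x}\beta_{K,\UV})$ is a fixed pointwise-unitary multiplier, so $\sW(-e_{\cdot}\beta_{K,\UV})\nabla\Psi_n\to\sW(-e_{\cdot}\beta_{K,\UV})\nabla\Psi$ in $L^2$; and $\vp(e_{\V x}\V m\beta_{K,\UV})\Psi_n(\V x)\to\vp(e_{\V x}\V m\beta_{K,\UV})\Psi(\V x)$ in $L^2(\RR^3,\sF)$ by the relative bound \eqref{rbvp} applied pointwise and integrated, using $\V m\beta_{K,\UV}\in\dom(\omega^\mh)$ and $\Psi_n\to\Psi$ in $L^2(\RR^3,\fdom(\Id\Gamma(\omega)))$. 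Hence the right-hand sides converge in $L^2(\RR^3,\sF)$, and since the weak gradient is a closed operator, $G_{K,\UV}^*\Psi$ has weak partial derivatives given by the right-hand side of \eqref{nablaGrossPsi}, which lie in $L^2(\RR^3,\sF)$; so $G_{K,\UV}^*\Psi\in W^{1,2}(\RR^3,\sF)$. That $G_{K,\UV}^*\Psi$ actually lies in $\dom(\mathfrak{t}_\Geb^+)$, i.e. in the $W^{1,2}$-closure of $\sD(\Geb,\sF)$, follows because each $G_{K,\UV}^*\Psi_n$ does (it equals $\sW(-e_{\cdot}\beta_{K,\UV})\Psi_n$ with $\Psi_n$ compactly supported in $\Geb$, and the $\V x$-dependence of $\sW$ is smooth on that support, so a mollification/cutoff argument in $\V x$ produces approximants in $\sD(\Geb,\sF)$) together with $W^{1,2}$-convergence $G_{K,\UV}^*\Psi_n\to G_{K,\UV}^*\Psi$, the latter being a consequence of the convergence just established. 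The finiteness of the $V_+$-part of $\mathfrak{t}_\Geb^+[G_{K,\UV}^*\Psi]$ is automatic since $\|(G_{K,\UV}^*\Psi)(\V x)\|=\|\Psi(\V x)\|$ pointwise and $\Psi\in\dom(\mathfrak{t}_\Geb^+)$.

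\textbf{Main obstacle.} The genuinely delicate point is Step 1: correctly invoking the chain rule \eqref{karl} and the Weyl conjugation \eqref{Wvp} for vectors $\phi$ only in the \emph{form domain} $\fdom(\Id\Gamma(\omega))$ rather than $\dom(\Id\Gamma(\omega))$, and checking that the anticipated scalar correction terms $\Re\SPn{m_je_{\V x}\beta_{K,\UV}}{e_{\V x}\beta_{K,\UV}}$ vanish by the odd-integrand/evenness symmetry of $\beta_{K,\UV}$. One way to be safe is to first establish \eqref{nablaGrossPsi} with $\dom(\Id\Gamma(\omega))$ in place of $\fdom(\Id\Gamma(\omega))$ (where all manipulations are literal) and then run the Step 2 limiting argument a second time within a fixed $C_0^\infty(\Geb)$-envelope to reach general $\phi\in\fdom(\Id\Gamma(\omega))$; the relative bounds \eqref{rbvp} and the closedness of weak differentiation handle this transition cleanly.
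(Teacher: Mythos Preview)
Your proposal is correct and follows essentially the same two-step strategy as the paper: verify \eqref{nablaGrossPsi} on a dense set of nice vectors via the chain rule \eqref{karl} and the Weyl conjugation identity \eqref{Wvp} (together with the vanishing of $\SPn{e_{\V{x}}\beta_{K,\UV}}{\V{m}e_{\V{x}}\beta_{K,\UV}}$ by odd symmetry), then close up by approximation and the relative bound \eqref{rbvp}. The only real difference is that the paper sidesteps the ``main obstacle'' you flag by working from the outset with $\Psi\in\mathrm{span}\{g\epsilon(h):g\in C_0^\infty(\Geb),\,h\in\dom(\omega)\}$, where \eqref{karl} applies literally and $G_{K,\UV}^*\Psi$ is manifestly in $C_0^\infty(\Geb,\sF)$; this set is a core for $\cfG$ by \cite[Cor.~4.6]{Matte2017}, so a single limiting argument suffices instead of the two passes you propose.
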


\begin{proof}
Thanks to the sharp ultraviolet cutoff at $\UV<\infty$, the map
$\RR^3\ni\V{x}\mapsto e_{\V{x}}\beta_{K,\UV}\in L^2(\RR^3)$ is smooth.
We also recall that $L^2(\RR^3)\ni h\mapsto\epsilon(h)\in\sF$ is analytic.
So, if $\Psi$ is equal to $g\epsilon(h)$ with $g\in C_0^\infty(\Geb)$ and $h\in\dom(\omega)$, then
$G_{K,\Lambda}^*\Psi$ is manifestly smooth in view of \eqref{defWeyl} and, furthermore,
\eqref{nablaGrossPsi} is a consequence of \eqref{karl}, \eqref{Wvp}, and
$\SPn{e_{\V{x}}\beta_{K,\UV}}{\V{m}e_{\V{x}}\beta_{K,\UV}}=0$. 
For general $\Psi\in\QGV$, we can employ \cite[Cor.~4.6]{Matte2017} according to which we find 
$\Psi_n\in\mathrm{span}\{g\epsilon(h)|\,g\in C_0^\infty(\Geb),\,h\in\dom(\omega)\}$,
$n\in\NN$, such that $\cfG[\Psi_n-\Psi]\to0$, as $n\to\infty$. 
Then $G_{K,\UV}^*\Psi_n\in C_0^\infty(\Geb,\sF)$, for every $n\in\NN$. Plugging
$\Psi_n$ into \eqref{nablaGrossPsi} and using \eqref{rbvp} we see that
$\nabla(G_{K,\UV}^*\Psi_n)$ converges in $L^2(\Geb,\sF^3)$ to 
the right hand side of \eqref{nablaGrossPsi}, which therefore equals the
weak gradient of $G_{K,\UV}^*\Psi$. Finally, it is clear that
\begin{align*}
\int_{\Geb}V_+(\V{x})\|G_{K,\UV}^*(\Psi_n-\Psi)(\V{x})\|^2\Id\V{x}\le
\cfG[\Psi_n-\Psi]\xrightarrow{\;\;n\to\infty\;\;}0,
\end{align*}
so that $G_{K,\UV}^*\Psi\in\dom(\mathfrak{t}_{\Geb}^+)$; recall \eqref{deftGplus}.
\end{proof}

\begin{proof}[Proof of Proposition~\ref{propGrosstrafo}.]
Combining Lemma~\ref{lemGrWubd} and Lemma~\ref{lemnablaGrossPsi} (which also holds for
the coupling constant $-\ee$), we first see that $G_{K,\UV}$ and $G_{K,\UV}^*$ map
$\QGV$ into itself. The identity \eqref{trafoNelsonform} is then a direct consequence of
\eqref{Wvp}, \eqref{WdGammaform}, \eqref{nablaGrossPsi}, and the relations
\begin{align*}
\vp(e_{\V{x}}f_{\UV})\psi-\vp(e_{\V{x}}\omega\beta_{K,\UV})\psi
&=\vp(e_{\V{x}}f_{K})\psi+\frac{1}{2}\vp(e_{\V{x}}\V{m}^2\beta_{K,\UV})\psi,
\\
\Re\SPn{e_{\V{x}}f_{\UV}}{e_{\V{x}}\beta_{K,\UV}}&=E_{\UV}^{\ren}-E_{K}^{\ren},
\\
\SPn{e_{\V{x}}\beta_{K,\UV}}{e_{\V{x}}\omega\beta_{K,\UV}}
-(E_{\UV}^{\ren}-E_{K}^{\ren})&=-\frac{1}{2}\|\V{m}\beta_{K,\UV}\|^2,
\end{align*}
valid for all $\V{x}\in\RR^3$ and $\psi\in\fdom(\Id\Gamma(\omega))$ in the first line.
\end{proof}


\section{Relative Bounds Needed to Remove the Ultraviolet Cutoff}\label{apprbUV}

\noindent
In this appendix we derive the relative bounds employed in the main text to construct
renormalized operators and to study the infrared behavior of ground state eigenvectors.
In particular, we shall prove Proposition~\ref{proprbUV1} and Proposition~\ref{proprbUV2}.
As mentioned earlier, the estimations below are simple modifications of the ones used by Nelson
in \cite{Nelson1964}, which are re-obtained in essence by setting $K=L$. We also implement
later extensions to the case $\mu=0$ of \cite{GriesemerWuensch2017,HHS2005}, where
bounds similar to \eqref{myresluger1} and \eqref{myresluger2} have been applied as well.

To start with we verify the basic relation splitting $\hv_{\Geb,K,\UV}$ into a $\UV$-independent, 
well-understood comparison term and a $\UV$-dependent perturbation.

\begin{lem}\label{lemLsplit}
Let $0\le K\le L\le \UV<\infty$. Then \eqref{defqKLV} holds true for all $\Psi\in{\QGV}$.
\end{lem}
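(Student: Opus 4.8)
The identity \eqref{defqKLV} is purely algebraic: we must expand the quadratic form $\hv_{\Geb,K,\UV}$ given by \eqref{defhGx}--\eqref{deftLambdax} using the splitting $\beta_{K,\UV}=\beta_{K,L}+\beta_{L,\UV}$ and collect the terms that do not involve $\beta_{L,\UV}$ into $\hv_{\Geb,K,L}$, the remainder being $\mathfrak{v}_{\Geb,K,L,\UV}$ as defined in \eqref{defvKL}. First I would fix $\Psi\in{\QGV}$ and work pointwise in $\V{x}$, recalling that $\QGV\subset W^{1,2}(\RR^3,\sF)$ and $\Psi(\V{x})\in\fdom(\Id\Gamma(\omega))$ for a.e.\ $\V{x}$, so that all field operators appearing below (which involve $\V{m}\beta$, $\V{m}^2\beta$, and $f$, all in $\dom(\omega^{\mh})$ by the finiteness of $\UV$) act on $\Psi(\V{x})$ according to the relative bounds \eqref{rba}--\eqref{rbvp}. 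The only term in \eqref{deftLambdax} that is quadratic in the interaction is $\frac12\|\vp(e_{\V{x}}\V{m}\beta_{K,\UV})\Psi(\V{x})\|^2$-type; more precisely I would expand
\begin{align*}
\tfrac12\|\nabla\Psi(\V{x})+i\vp(e_{\V{x}}\V{m}\beta_{K,\UV})\Psi(\V{x})\|^2
\end{align*}
by bilinearity using $\vp(e_{\V{x}}\V{m}\beta_{K,\UV})=\vp(e_{\V{x}}\V{m}\beta_{K,L})+\vp(e_{\V{x}}\V{m}\beta_{L,\UV})$, producing the ``diagonal'' $K,L$ block, a ``diagonal'' $L,\UV$ block $\tfrac12\|\vp(e_{\V{x}}\V{m}\beta_{L,\UV})\Psi(\V{x})\|^2$, and two cross terms $\Re\SPn{i\vp(e_{\V{x}}\V{m}\beta_{L,\UV})\Psi(\V{x})}{\nabla\Psi(\V{x})+i\vp(e_{\V{x}}\V{m}\beta_{K,L})\Psi(\V{x})}$.

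Next I would treat the linear terms. The term $\SPn{\Psi(\V{x})}{\vp(e_{\V{x}}f_K)\Psi(\V{x})}$ in \eqref{defhGx} contains no $\beta_{K,\UV}$ at all, so it stays in the comparison form. The term $\tfrac12\SPn{\Psi(\V{x})}{\vp(e_{\V{x}}\V{m}^2\beta_{K,\UV})\Psi(\V{x})}$ splits linearly into the $\beta_{K,L}$ part (into $\hv_{\Geb,K,L}$) and the $\beta_{L,\UV}$ part. Likewise the scalar term $(E_K^{\ren}-\tfrac12\|\V{m}\beta_{K,\UV}\|^2)\|\Psi(\V{x})\|^2$ requires care: here $E_K^{\ren}$ is $\UV$-independent so it goes straight into $\hv_{\Geb,K,L}$, but $\|\V{m}\beta_{K,\UV}\|^2=\|\V{m}\beta_{K,L}\|^2+2\Re\SPn{\V{m}\beta_{K,L}}{\V{m}\beta_{L,\UV}}+\|\V{m}\beta_{L,\UV}\|^2$; since $\beta_{K,L}$ and $\beta_{L,\UV}$ have disjoint supports in $\V{k}$-space (one is supported in $\{K<|\V{k}|\le L\}$, the other in $\{L<|\V{k}|\le\UV\}$) the cross term vanishes, leaving $\|\V{m}\beta_{K,L}\|^2$ (into $\hv_{\Geb,K,L}$) plus $\|\V{m}\beta_{L,\UV}\|^2$ (a perturbation scalar). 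The term $\|\Id\Gamma(\omega)^\eh\Psi(\V{x})\|^2$ and the potential term $V(\V{x})\|\Psi(\V{x})\|^2$ carry no $\beta$ and remain in $\hv_{\Geb,K,L}$.

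Then I would reassemble. Using $\vp=\ad+a$ on the $\beta_{L,\UV}$ field operators, the leftover perturbation terms are: the diagonal $\tfrac12\|\vp(e_{\V{x}}\V{m}\beta_{L,\UV})\Psi(\V{x})\|^2$, the two cross terms with $\nabla\Psi$ and $\vp(e_{\V{x}}\V{m}\beta_{K,L})\Psi$, plus $\tfrac12\SPn{\Psi(\V{x})}{\vp(e_{\V{x}}\V{m}^2\beta_{L,\UV})\Psi(\V{x})}-\tfrac12\|\V{m}\beta_{L,\UV}\|^2\|\Psi(\V{x})\|^2$. To match \eqref{defvKL}, I would rewrite $\vp(e_{\V{x}}\V{m}\beta_{L,\UV})=a(e_{\V{x}}\V{m}\beta_{L,\UV})+\ad(e_{\V{x}}\V{m}\beta_{L,\UV})$ in the cross terms and use that $\Re\SPn{\ad(g)\Psi(\V{x})}{\nabla\Psi(\V{x})+i\vp(h)\Psi(\V{x})}$ can be moved to $a$-form by taking adjoints; the quadratic term $\tfrac12\|\vp(e_{\V{x}}\V{m}\beta_{L,\UV})\Psi(\V{x})\|^2=\tfrac12\|(a+\ad)(e_{\V{x}}\V{m}\beta_{L,\UV})\Psi(\V{x})\|^2$ expands via the CCR \eqref{CCR1} into $\|a(e_{\V{x}}\V{m}\beta_{L,\UV})\Psi(\V{x})\|^2+\SPn{\ad(e_{\V{x}}\V{m}\beta_{L,\UV})\Psi(\V{x})}{a(e_{\V{x}}\V{m}\beta_{L,\UV})\Psi(\V{x})}+\tfrac12\|\V{m}\beta_{L,\UV}\|^2\|\Psi(\V{x})\|^2$, where the commutator contribution $\tfrac12\|\V{m}\beta_{L,\UV}\|^2\|\Psi(\V{x})\|^2$ cancels exactly against the $-\tfrac12\|\V{m}\beta_{L,\UV}\|^2$ scalar; and the relation \eqref{fiberXX} (with $\beta_{L,\UV}$ in place of $\beta_{K,\Lambda}$) converts $\tfrac12\SPn{\Psi(\V{x})}{\vp(e_{\V{x}}\V{m}^2\beta_{L,\UV})\Psi(\V{x})}$ together with its imaginary counterpart into the $a(e_{\V{x}}\V{m}^2\beta_{L,\UV})$ contribution --- but since \eqref{defvKL} only displays the $a$-annihilation terms and the manifest real part, I would collect everything as the real part of the bracket in \eqref{defvKL}. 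Integrating over $\V{x}\in\Geb$ yields \eqref{defqKLV}.

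\textbf{Main obstacle.} The bookkeeping is routine but delicate: one must keep track of which terms are genuinely $\UV$-independent (hence belong in $\hv_{\Geb,K,L}$), exploit the disjointness of the supports of $\beta_{K,L}$ and $\beta_{L,\UV}$ to kill cross scalar products, and correctly use the CCR so that the $\|\V{m}\beta_{L,\UV}\|^2$ normal-ordering constants cancel --- it is precisely this cancellation that makes $\mathfrak{v}_{\Geb,K,L,\UV}$ amenable to the Nelson-type estimates of Prop.~\ref{proprbUV1} and Prop.~\ref{proprbUV2}. The only analytic (as opposed to algebraic) point is justifying that all the manipulations are legitimate on $\QGV$ rather than merely on the form core $\mathrm{span}\{g\epsilon(h)\}$; this follows by first establishing \eqref{defqKLV} on that core using the smoothness and analyticity exploited in Lem.~\ref{lemnablaGrossPsi}, and then extending by continuity, using the relative bounds \eqref{rba}--\eqref{rbvp} and \eqref{fbUV} to control each term in the $\cfG$-norm.
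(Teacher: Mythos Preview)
Your overall plan is correct and matches the paper's approach: reduce to the form core $\mathrm{span}\{g\epsilon(h):g\in C_0^\infty(\Geb),\,h\in\dom(\omega)\}$, expand using $\beta_{K,\UV}=\beta_{K,L}+\beta_{L,\UV}$, normal order the $\tfrac12\|\vp(e_{\V{x}}\V{m}\beta_{L,\UV})\Psi(\V{x})\|^2$ term so that the constant $\tfrac12\|\V{m}\beta_{L,\UV}\|^2$ cancels, and then extend to all of $\QGV$ via the relative bounds. The disjoint-support observation also correctly handles the $\vp(e_{\V{x}}\V{m}\beta_{K,L})$ part of the cross term.

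The gap is in how you convert the $\ad$-half of the cross term with $\nabla\Psi$ into the $a$-form that \eqref{defvKL} requires. Writing ``can be moved to $a$-form by taking adjoints'' is not sufficient: taking the adjoint of $\Re\SPn{i\ad(e_{\V{x}}\V{m}\beta_{L,\UV})\Psi(\V{x})}{\nabla\Psi(\V{x})}$ gives you $a$ acting on $\nabla\Psi$, not on $\Psi$. What the paper does instead (see \eqref{split78}) is a Leibniz rule in $\V{x}$: since $\partial_{x_j}e_{\V{x}}=-im_je_{\V{x}}$, one obtains a divergence term $\nabla\cdot\Re\SPn{ia(e_{\V{x}}\V{m}\beta_{L,\UV})\Psi(\V{x})}{\Psi(\V{x})}$ which vanishes only after integration over $\Geb$ (using compact support on the core), together with a commutator $-\Re\SPn{\ad(e_{\V{x}}\V{m}^2\beta_{L,\UV})\Psi(\V{x})}{\Psi(\V{x})}=-\tfrac12\SPn{\Psi(\V{x})}{\vp(e_{\V{x}}\V{m}^2\beta_{L,\UV})\Psi(\V{x})}$. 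It is this commutator that cancels the linear $\tfrac12\vp(e_{\V{x}}\V{m}^2\beta_{L,\UV})$ term from \eqref{defhGx}; your appeal to \eqref{fiberXX} is a red herring, and \eqref{defvKL} contains no $\V{m}^2\beta_{L,\UV}$ contribution at all. In particular the identity $\hv_{K,\UV,\V{x}}=\hv_{K,L,\V{x}}+\mathfrak{v}_{K,L,\UV,\V{x}}$ does \emph{not} hold pointwise in $\V{x}$; it holds only after the $\V{x}$-integration, so the phrase ``Integrating over $\V{x}\in\Geb$ yields \eqref{defqKLV}'' hides precisely the nontrivial step.
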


\begin{proof}
For every $\Psi\in\QGV$, we find 
$\Psi_n\in\mathrm{span}\{g\epsilon(h)|\,g\in C_0^\infty(\Geb),\,h\in\dom(\omega)\}$, $n\in\NN$,
such that $\Psi_n\to\Psi$ in $L^2(\Geb,\sF)$ and $\cfG[\Psi_n-\Psi]\to0$, as $n\to\infty$;
see, e.g., \cite[Cor.~4.6]{Matte2017}. In view of the relative bounds \eqref{rba}--\eqref{rbvp} it
therefore suffices to verify \eqref{defqKLV} for every 
$\Psi\in\mathrm{span}\{g\epsilon(h)|\,g\in C_0^\infty(\Geb),\,h\in\dom(\omega)\}$.
So let $\Psi$ be in the latter space in what follows and let $\V{x}\in\Geb$. Then the
definition \eqref{deftLambdax} entails
\begin{align}\nonumber
\mathfrak{t}_{K,\UV,\V{x}}[\Psi]-\mathfrak{t}_{K,L,\V{x}}[\Psi]
&=\Re\SPn{i\vp(e_{\V{x}}\V{m}\beta_{L,\UV})\Psi(\V{x})}{\nabla\Psi(\V{x})
+i\vp(e_{\V{x}}\V{m}\beta_{K,L})\Psi(\V{x})}
\\\label{split77}
&\quad+\frac{1}{2}\|\vp(e_{\V{x}}\V{m}\beta_{L,\UV})\Psi(\V{x})\|^2.
\end{align}
After normal ordering the term in the second line of the previous identity reads
\begin{align*}
\frac{1}{2}\|\vp(e_{\V{x}}\V{m}\beta_{L,\UV})\Psi(\V{x})\|^2
&=
\Re\SPn{\ad(e_{\V{x}}\V{m}\beta_{L,\UV})\Psi(\V{x})}{a(e_{\V{x}}\V{m}\beta_{L,\UV})\Psi(\V{x})}
\\
&\quad
+\|a(e_{\V{x}}\V{m}\beta_{L,\UV})\Psi(\V{x})\|^2
+\frac{1}{2}\|\V{m}\beta_{L,\UV}\|^2\|\Psi(\V{x})\|^2.
\end{align*}
In the first term on the right hand side of \eqref{split77}
we split up the field operator in the left entry of the scalar product as 
$\vp(e_{\V{x}}\V{m}\beta_{L,\UV})=\ad(e_{\V{x}}\V{m}\beta_{L,\UV})+a(e_{\V{x}}\V{m}\beta_{L,\UV})$ 
and re-write the contribution of the creation operator as
\begin{align}\nonumber
\Re&\SPn{i\ad(e_{\V{x}}\V{m}\beta_{L,\UV})\Psi(\V{x})}{\nabla\Psi(\V{x})
+i\vp(e_{\V{x}}\V{m}\beta_{K,L})\Psi(\V{x})}
\\\nonumber
&=\Re\SPn{\nabla\Psi(\V{x})
+i\vp(e_{\V{x}}\V{m}\beta_{K,L})\Psi(\V{x})}{ia(e_{\V{x}}\V{m}\beta_{L,\UV})\Psi(\V{x})}
\\\label{split78}
&\quad-\Re\SPn{\ad(e_{\V{x}}\V{m}^2\beta_{L,\UV})\Psi(\V{x})}{\Psi(\V{x})}
+\nabla\Re\SPn{ia(e_{\V{x}}\V{m}\beta_{L,\UV})\Psi(\V{x})}{\Psi(\V{x})}.
\end{align}
Here we used a Leibniz rule and exploited that $\beta_{K,L}$ and
$\beta_{L,\UV}$ have disjoint supports up to a zero set, so that $\vp(e_{\V{x}}\V{m}\beta_{K,L})$
and $\ad(e_{\V{x}}\V{m}\beta_{K,L})$ commute when they are applied to $\Psi(\V{x})$.
After an integration with respect to $\V{x}\in\Geb$, the term $\nabla\Re\ldots$
in the last line of \eqref{split78} vanishes and 
a few further easy manipulations finish the proof of \eqref{defqKLV}.
\end{proof}

The next lemma provides the key estimate in Nelson's renormalization strategy in a variant
suitable for arbitrary choices of the boson mass $\mu\ge0$.

\begin{lem}\label{lemtertius}
Let $\beta,\beta':\RR^3\to\RR$ be measurable such that $|\V{m}|^\eh\beta$ and $|\V{m}|\beta$
are square-integrable and similarly for $\beta'$. Then 
\begin{align}\nonumber
|\SPn{\ad(\V{m}\beta)\psi}{a(\V{m}{\beta}')\psi}|
&\le6\|(|\V{m}|^\eh\vee|\V{m}|^{\nf{3}{4}})\beta\|
\|(|\V{m}|^\eh\vee|\V{m}|^{\nf{3}{4}}){\beta}'\|
\\\label{tertius}
&\quad\,\cdot\|(1+\Id\Gamma(\omega))^\eh\psi\|\|\Id\Gamma(\omega)^\eh\psi\|,\quad\psi\in\fdom(\Id\Gamma(\omega)).
\end{align}
\end{lem}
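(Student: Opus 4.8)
The estimate \eqref{tertius} is a standard ``$N_\tau$-type'' bound in the spirit of Nelson's original renormalization argument, and the plan is to reduce it to a scalar pointwise inequality in the boson momenta followed by two applications of the Cauchy--Schwarz inequality in Fock space. First I would pass to a component-wise formulation: by the triangle inequality and the vector notation \eqref{vecnot0}, it suffices to bound each $|\SPn{\ad(m_j\beta)\psi}{a(m_j\beta')\psi}|$, and then to sum over $j\in\{1,2,3\}$ and estimate $\sum_j|m_j(\V{k})|^2=|\V{k}|^2$. So fix a component and write $g:=m_j\beta$, $g':=m_j\beta'$, noting $|g|\le|\V{k}||\beta|$, $|g'|\le|\V{k}||\beta'|$.

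\textbf{The core inequality.} The heart of the matter is to control the ``bad'' term where a creation operator sits to the left of an annihilation operator. The standard device is to insert the dispersion relation $\omega$: since $a(g')\psi$ and $\ad(g)\psi$ both live in $\fdom(\Id\Gamma(\omega))$ when $\psi$ does (this is the content of the last sentence before \eqref{CCR1}, together with the fact that $g,g'\in\dom(\omega^\mh)$ because $|\V{m}|^\eh\beta,|\V{m}|^\eh\beta'\in L^2$ and $\omega\ge|\V{m}|$ pointwise only away from the mass shell — here one uses $\omega^{-1}\le|\V{m}|^{-1}$ and $\omega\le(1+|\V{m}|)$, or more carefully splits at $|\V{k}|=1$), I would write, on the $n$-particle level,
\begin{align*}
|\SPn{\ad(g)\psi}{a(g')\psi}|
&\le\|\omega^{-\eh}g\|\,\|\omega^{\eh}\ad(g')\psi\|\wedge\big(\text{symmetric expression}\big),
\end{align*}
and then use the commutation relation \eqref{CCR1} to move the annihilation operator past the creation operator, picking up a scalar term $\SPn{g}{g'}\|\psi\|^2$ which is itself bounded by $\|\omega^{-\eh}g\|\,\|\omega^{\eh}g'\|\,\|\psi\|^2$ after one more Cauchy--Schwarz, but more efficiently one commutes through $\Id\Gamma(\omega)^\eh$. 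Concretely: $\|\Id\Gamma(\omega)^\eh a(g')\psi\|\le\|a(g')\Id\Gamma(\omega)^\eh\psi\|+\|[\Id\Gamma(\omega)^\eh,a(g')]\psi\|$, and the commutator is controlled because $a(g')$ lowers the boson number by one so $\Id\Gamma(\omega)^\eh$ acting after $a$ differs from $\Id\Gamma(\omega)^\eh$ acting before by something dominated by $a(\omega g')$, whence $\|a(\omega g')\psi\|\le\|\omega^\eh g'\|\|\Id\Gamma(\omega)^\eh\psi\|$ via \eqref{rba}. Combining with \eqref{rba}, \eqref{rbad} and $\|\omega^{-\eh}g\|\le\|(|\V{m}|^{-\eh}\vee1)\,|\V{m}|\beta\|\le\||\V{m}|^\eh\beta\|+\||\V{m}|^{\nf32}\beta\|\le 2\|(|\V{m}|^\eh\vee|\V{m}|^{\nf34})\beta\|$ (splitting at $|\V{k}|=1$), and likewise $\|\omega^\eh g\|\le\|(|\V{m}|^\eh\vee|\V{m}|^{\nf32})\beta\|$, I expect to land on a bound of the form (constant)$\cdot\|(|\V{m}|^\eh\vee|\V{m}|^{\nf34})\beta\|\,\|(|\V{m}|^\eh\vee|\V{m}|^{\nf34})\beta'\|\,\|(1+\Id\Gamma(\omega))^\eh\psi\|\,\|\Id\Gamma(\omega)^\eh\psi\|$ per component, and summing three components and tracking the constant should give $6$.

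\textbf{Main obstacle.} The genuinely delicate point is the massless case $\mu=0$, where $\omega=|\V{m}|$ degenerates at the origin: one must verify that all the norms $\|\omega^{\pm\eh}m_j\beta\|$ and $\|\omega^{-\eh}m_j\beta'\|$ appearing after inserting $\omega$ really are dominated by the stated weighted $L^2$-norms $\|(|\V{m}|^\eh\vee|\V{m}|^{\nf34})\beta\|$ — the low-frequency part needs the weight $|\V{m}|^\eh$ (so that $\omega^{-\eh}|\V{m}|\beta\sim|\V{m}|^\eh\beta$ is integrable near $0$) while the high-frequency part needs $|\V{m}|^{\nf34}$ (so that $\omega^\eh|\V{m}|\beta\sim|\V{m}|^{\nf32}\beta$ is controlled by $|\V{m}|^{\nf34}\beta$ only after a Cauchy--Schwarz splitting, which is why the exponent is $\nf34$ rather than $1$). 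Getting the bookkeeping right so that the final constant is exactly $6$ and not something larger is the only real work; everything else is the routine Fock-space calculus recalled in Subsect.~\ref{ssecWeyl}. I would also remark that one should first establish \eqref{tertius} for $\psi$ in a core (e.g.\ finite particle vectors with smooth compactly supported components, or exponential vectors $\epsilon(h)$ with $h\in\dom(\omega)$), where all manipulations are literal, and then extend to all $\psi\in\fdom(\Id\Gamma(\omega))$ by density since both sides are continuous in the form norm of $\Id\Gamma(\omega)$.
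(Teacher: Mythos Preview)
Your plan has a genuine gap in the ultraviolet regime. The manipulations you describe --- applying \eqref{rba}, \eqref{rbad} directly and commuting $\Id\Gamma(\omega)^\eh$ through $a(g')$ --- can at best produce an \emph{asymmetric} bound of the type $\|(|\V{m}|^\eh\vee|\V{m}|)\beta\|\cdot\||\V{m}|^\eh\beta'\|$, with one factor carrying the weight $|\V{m}|$ at high frequency and the other $|\V{m}|^\eh$. No Cauchy--Schwarz step converts such a product into the symmetric one $\||\V{m}|^{\nf34}\beta\|\cdot\||\V{m}|^{\nf34}\beta'\|$; your remark that $|\V{m}|^{\nf32}\beta$ is ``controlled by $|\V{m}|^{\nf34}\beta$ only after a Cauchy--Schwarz splitting'' is not substantiated. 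And the distinction is essential for the intended application: in Prop.~\ref{proprbUV1} one takes $\beta=\beta'=\beta_{L,\UV}$, for which $\||\V{m}|\beta_{L,\UV}\|^2\sim\int_L^\UV|\V{k}|^{-3}\Id\V{k}$ diverges logarithmically in $\UV$, whereas $\||\V{m}|^{\nf34}\beta_{L,\UV}\|$ stays bounded uniformly in $\UV$ and vanishes as $L\to\infty$.

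The paper obtains the symmetric $|\V{m}|^{\nf34}$ weight by a different device. After the IR/UV split at $|\V{k}|=1$ (which you correctly anticipate) and after handling the infrared and mixed pieces via \eqref{rba}, \eqref{rbad} and the CCR, the pure UV term is rewritten as $\SPn{\ad(\V{m}\beta_>)\psi}{a(\V{m}\beta'_>)\psi}=\SPn{\psi}{a(\V{m}\beta_>)a(\V{m}\beta'_>)\psi}$ and the pair $(2+\Id\Gamma(\omega\wedge1))^{\pm\eh}$ is inserted between the two factors. The crucial estimate is the double-annihilation bound
\begin{align*}
\big\|(2+\Id\Gamma(\omega\wedge1))^\mh a(f)a(g)\phi\big\|
\le\|\omega^{-\nf14}f\|\,\|\omega^{-\nf14}g\|\,\|\Id\Gamma(\omega)^\eh\phi\|,
\end{align*}
valid for $f,g$ supported in $\{|\V{m}|\ge1\}$, which is proved by a pointwise Cauchy--Schwarz on the $(n+2)$-particle sector together with the elementary inequality $\sum_j 1_{\{|\V{k}_j|\ge1\}}\omega(\V{k}_j)^\eh\le\big(\sum_j(\omega\wedge1)(\V{k}_j)\big)^\eh\big(\sum_j\omega(\V{k}_j)\big)^\eh$. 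The point is that \emph{both} integrated momenta absorb a factor $\omega^{-\nf14}$ simultaneously, so that $\omega^{-\nf14}|\V{m}|\sim|\V{m}|^{\nf34}$ appears symmetrically on $\beta$ and $\beta'$. This is the mechanism behind the exponent $\nf34$; your commutator route does not reach it.
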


\begin{proof}
We split the function in the creation operator in an infrared part,
$\beta_{<}:=1_{\{|\V{m}|\le1\}}\beta$, and an ultraviolet part, $\beta_{>}:=1_{\{|\V{m}|>1\}}\beta$.
The infrared part is dealt with by the standard bounds \eqref{rba} and \eqref{rbad} which imply
\begin{align*}
&|\SPn{\ad(\V{m}\beta_<)\psi}{a(\V{m}\beta')\psi}|
\\
&\le2\|(\omega^\mh\vee1)\V{m}\beta_{<}\|\|(1+\Id\Gamma(\omega))^\eh\psi\|
\|\omega^\mh\V{m}\beta'\|\|\Id\Gamma(\omega)^\eh\psi\|,
\end{align*} 
where $\psi\in\fdom(\Id\Gamma(\omega))$. Define $\beta_<'$ and $\beta_>'$ in the same way
as $\beta_<$ and $\beta_>$.
Then the canonical commutation relations and an analogous estimate further yield
\begin{align}\nonumber
&|\SPn{\ad(\V{m}\beta_{>})\phi}{a(\V{m}\beta_{<}')\phi}|
\\\nonumber
&=|\SPn{\ad(\V{m}\beta_{<}')\phi}{a(\V{m}\beta_{>})\phi}|
\\\label{tertius2}
&\le2\|(\omega^\mh\vee1)\V{m}\beta_{<}'\|\|(1+\Id\Gamma(\omega))^\eh\phi\|
\|\omega^\mh\V{m}\beta_{>}\|\|\Id\Gamma(\omega)^\eh\phi\|,
\end{align}
for all $\phi\in\dom(\Id\Gamma(\omega))$.
For the remaining part we shall employ the following bound, valid for any
$f,g\in\dom(\omega^{-\nf{1}{4}})$ and
$\phi\in\dom(\Id\Gamma(1_S\omega))$, where $S\subset\RR^3$ is a measurable set such that
$f=g=0$ a.e. on the complement of $S$,
\begin{align}\nonumber
\big\|(2+\Id\Gamma&(\omega\wedge1))^\mh a(f)a(g)\phi\big\|
\\\label{myresluger1}
&\le\|\omega^{-\nf{1}{4}}f\|\|\omega^{-\nf{1}{4}}g\|
\sup_{\ve>0}\|(\ve+\Id\Gamma(\omega\wedge1))^\mh
\Id\Gamma(1_S\omega^\eh)\phi\|.
\end{align}
It is obtained upon successively multiplying the inequalities
\begin{align*}
&\frac{\left|
\int_{\RR^3}\int_{\RR^3}\bar{f}(\V{k}_{1})\bar{g}(\V{k}_{2})\phi^{(n+2)}
(\V{k}_1,\ldots,\V{k}_{n+2})\Id\V{k}_{1}\Id\V{k}_{2}\right|^2}{
2+\ve+\sum_{j=3}^{n+2}(\omega\wedge1)(\V{k}_j)}
\\
&\le\|\omega^{-\nf{1}{4}}f\|^2\|\omega^{-\nf{1}{4}}g\|^2\int_{S}\int_{S}
\frac{\omega(\V{k}_1)^\eh\omega(\V{k}_2)^\eh|\phi^{(n+2)}(\V{k}_1,\ldots,\V{k}_{n+2})|^2}{
\ve+\sum_{j=1}^{n+2}(\omega\wedge1)(\V{k}_j)}\Id\V{k}_{1}\Id\V{k}_{2}
\end{align*}
with $(n+2)(n+1)$, integrating over $(\RR^3)^n$, summing with respect to $n\in\NN_0$, 
and exploiting the permutation symmetry of $\phi^{(n+2)}$, 
where $\phi=(\phi^{(n)})_{n=1}^\infty\in\sF$.
We apply \eqref{myresluger1} with $S=\{|\V{m}|\ge1\}$ and combine it with the elementary estimates
\begin{align}\nonumber
\sum_{j=1}^n1_{\{|\V{m}|\ge1\}}(\V{k}_j)\omega^\eh(\V{k}_j)
&\le\Big(\sum_{j=1}^n1_{\{|\V{m}|\ge1\}}(\V{k}_j)\Big)^\eh
\Big(\sum_{j=1}^n\omega(\V{k}_j)\Big)^\eh
\\\label{myresluger2}
&\le\Big(\sum_{j=1}^n(\omega\wedge1)(\V{k}_j)\Big)^\eh
\Big(\sum_{j=1}^n\omega(\V{k}_j)\Big)^\eh,
\end{align}
for all $\V{k}_1,\ldots,\V{k}_n\in\RR^3$ and $n\in\NN$. This leads to
\begin{align}\nonumber
&|\SPn{\ad(\V{m}\beta_{>})\phi}{a(\V{m}\beta_{>}')\phi}|
\\\nonumber
&\le\|(2+\Id\Gamma(\omega\wedge1))^\eh\phi\|
\big\|(2+\Id\Gamma(\omega\wedge1))^\mh a(\V{m}\beta_{>})a(\V{m}\beta_{>}')\phi\big\|
\\\label{tertius3}
&\le\|\omega^{-\nf{1}{4}}\V{m}\beta_{>}\|\|\omega^{-\nf{1}{4}}\V{m}\beta_{>}'\|
\|(2+\Id\Gamma(\omega))^\eh\phi\|\|\Id\Gamma(\omega)^\eh\phi\|,
\end{align}
for all $\phi\in\dom(\Id\Gamma(\omega))$. We finally observe that
the terms in the first and last lines of \eqref{tertius2} and \eqref{tertius3} are well-defined
and continuous on $\fdom(\Id\Gamma(\omega))$ as functions of $\phi$.
\end{proof}

\begin{lem}\label{lemvx1}
Let $0\le K\le L\le\UV<\infty$ and $\Psi\in\QGV$. Then the following bound holds, for
a.e. $\V{x}\in\RR^3$ and all $\ve>0$,
\begin{align*}
&|\mathfrak{v}_{K,L,\Lambda,\V{x}}[\Psi]|
\\
&\le\frac{\ve}{2}\|\nabla\Psi(\V{x})+i\vp(e_{\V{x}}\V{m}\beta_{K,L})\Psi(\V{x})\|^2
+\Big(1+\frac{2}{\ve}\Big)\||\V{m}|^\eh\beta_{L,\Lambda}\|^2
\|\Id\Gamma(\omega)^\eh\Psi(\V{x})\|^2
\\
&\quad+6\|(|\V{m}|^\eh\vee|\V{m}|^{\nf{3}{4}})\beta_{L,\Lambda}\|^2
\|(1+\Id\Gamma(\omega))^\eh\Psi(\V{x})\|\|\Id\Gamma(\omega)^\eh\Psi(\V{x})\|.
\end{align*}
\end{lem}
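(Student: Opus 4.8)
The plan is to bound separately the three summands making up $\mathfrak{v}_{K,L,\UV,\V{x}}[\Psi]$ in \eqref{defvKL}, working pointwise at those $\V{x}$ (a full-measure set, since $\Psi\in\QGV\subset W^{1,2}(\RR^3,\sF)$) for which $\Psi(\V{x})\in\fdom(\Id\Gamma(\omega))$ and $\nabla\Psi(\V{x})\in\sF$. The only inputs are the relative bounds \eqref{rba} (or \eqref{rbaomega} in the massless case $\inf\omega=0$), the elementary inequality $\V{m}^2\omega^{-1}\le|\V{m}|$, which follows from $|\V{m}|\le\omega$ and yields $\|\omega^\mh\V{m}\beta_{L,\UV}\|\le\||\V{m}|^\eh\beta_{L,\UV}\|$ in the vector notation of \eqref{vecnot1}, and Lem.~\ref{lemtertius}; note that all $\V{m}$-weighted $L^2$-norms of $\beta_{L,\UV}$ occurring below are finite because $L\le\UV<\infty$.

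First I would treat the cross term. Writing $-i\nabla\Psi(\V{x})+\vp(e_{\V{x}}\V{m}\beta_{K,L})\Psi(\V{x})=-i\big(\nabla\Psi(\V{x})+i\vp(e_{\V{x}}\V{m}\beta_{K,L})\Psi(\V{x})\big)$, the Cauchy--Schwarz inequality in $\sF^3$ together with Young's inequality gives the bound $\frac{\ve}{2}\|\nabla\Psi(\V{x})+i\vp(e_{\V{x}}\V{m}\beta_{K,L})\Psi(\V{x})\|^2+\frac{2}{\ve}\|a(e_{\V{x}}\V{m}\beta_{L,\UV})\Psi(\V{x})\|^2$; then \eqref{rba}/\eqref{rbaomega} give $\|a(e_{\V{x}}\V{m}\beta_{L,\UV})\Psi(\V{x})\|\le\|\omega^\mh\V{m}\beta_{L,\UV}\|\,\|\Id\Gamma(\omega)^\eh\Psi(\V{x})\|\le\||\V{m}|^\eh\beta_{L,\UV}\|\,\|\Id\Gamma(\omega)^\eh\Psi(\V{x})\|$. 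This produces the first term of the asserted bound and the $\frac{2}{\ve}\||\V{m}|^\eh\beta_{L,\UV}\|^2\|\Id\Gamma(\omega)^\eh\Psi(\V{x})\|^2$ part of the second. The summand $\|a(e_{\V{x}}\V{m}\beta_{L,\UV})\Psi(\V{x})\|^2$ is estimated by the same relative bound and supplies the remaining ``$1$'' in the coefficient $1+\frac{2}{\ve}$.

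The last summand $\SPn{\ad(e_{\V{x}}\V{m}\beta_{L,\UV})\Psi(\V{x})}{a(e_{\V{x}}\V{m}\beta_{L,\UV})\Psi(\V{x})}$ is the one that genuinely requires Lem.~\ref{lemtertius}, which however is stated for real form factors, so first I would strip off the phase $e_{\V{x}}=e^{-i\V{m}\cdot\V{x}}$ by the gauge transformation $\Gamma(e_{\V{x}})$: this operator is unitary, conjugates $a(g)$ and $\ad(g)$ into $a(e_{\V{x}}g)$ and $\ad(e_{\V{x}}g)$ (immediate from \eqref{defWeyl}), and commutes with $\Id\Gamma(\omega)$ since $e_{\V{x}}$ and $\omega$ are both functions of $\V{m}$. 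Hence, with $\phi:=\Gamma(e_{\V{x}})^*\Psi(\V{x})\in\fdom(\Id\Gamma(\omega))$, one has $\SPn{\ad(e_{\V{x}}\V{m}\beta_{L,\UV})\Psi(\V{x})}{a(e_{\V{x}}\V{m}\beta_{L,\UV})\Psi(\V{x})}=\SPn{\ad(\V{m}\beta_{L,\UV})\phi}{a(\V{m}\beta_{L,\UV})\phi}$ together with $\|\Id\Gamma(\omega)^\eh\phi\|=\|\Id\Gamma(\omega)^\eh\Psi(\V{x})\|$ and $\|(1+\Id\Gamma(\omega))^\eh\phi\|=\|(1+\Id\Gamma(\omega))^\eh\Psi(\V{x})\|$. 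Since $\beta_{L,\UV}$ is real, Lem.~\ref{lemtertius} with $\beta=\beta'=\beta_{L,\UV}$ now yields precisely the third term of the claimed bound, with constant $6$. Bounding $|\mathfrak{v}_{K,L,\UV,\V{x}}[\Psi]|$ by the sum of the absolute values of its three summands and adding up the three estimates completes the proof.

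I do not expect a serious obstacle here; the only points needing a little care are the reduction of the normally ordered term to Lem.~\ref{lemtertius} via $\Gamma(e_{\V{x}})$-conjugation (this is where realness of $\beta_{L,\UV}$ enters) and, in the massless case, checking that $a(e_{\V{x}}\V{m}\beta_{L,\UV})$, $\ad(e_{\V{x}}\V{m}\beta_{L,\UV})$ and the weighted norms of $\beta_{L,\UV}$ are all well-defined on $\fdom(\Id\Gamma(\omega))$, which uses $L\le\UV<\infty$ and, when $\inf\omega=0$, the extended annihilation operator together with the bound \eqref{rbaomega}.
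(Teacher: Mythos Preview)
Your proof is correct and follows essentially the same approach as the paper: bound the first two summands of \eqref{defvKL} via Cauchy--Schwarz/Young together with the relative bound \eqref{rba}, and handle the normally ordered term by Lem.~\ref{lemtertius}. The paper's proof applies Lem.~\ref{lemtertius} directly with $\beta=\beta'=e_{\V{x}}\beta_{L,\Lambda}$, tacitly using that its proof goes through verbatim for complex form factors (all estimates there depend only on moduli), whereas you take the extra care of conjugating by $\Gamma(e_{\V{x}})$ to reduce to the real case actually stated; this is a harmless refinement, not a different route.
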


\begin{proof}
In view of \eqref{rba} it is clear how to estimate the first two terms on the right hand side of 
\eqref{defvKL}. The third term can be dealt with by Lemma~\ref{lemtertius} where we choose
$\beta:=\beta':=e_{\V{x}}\beta_{L,\Lambda}$.
\end{proof}

\begin{lem}\label{lemvx2}
Let $0\le K\le L\le\Lambda<\Lambda'<\infty$ and $\Psi\in\QGV$. Then, for a.e. $\V{x}\in\Geb$,
\begin{align*}
&\big|\mathfrak{v}_{K,L,\Lambda,\V{x}}[\Psi]-\mathfrak{v}_{K,L,\Lambda',\V{x}}[\Psi]\big|
\\
&\le\|\nabla\Psi(\V{x})+i\vp(e_{\V{x}}\V{m}\beta_{K,L})\Psi(\V{x})\|
\||\V{m}|^\eh\beta_{\Lambda,\Lambda'}\|\|\Id\Gamma(\omega)^\eh\Psi(\V{x})\|
\\
&\quad+2\||\V{m}|^\eh\beta_{L,\infty}\|\||\V{m}|^\eh\beta_{\Lambda,\Lambda'}\|
\|\Id\Gamma(\omega)^\eh\Psi(\V{x})\|^2
\\
&\quad+12\|(|\V{m}|^\eh\vee|\V{m}|^{\nf{3}{4}})\beta_{L,\infty}\|
\|(|\V{m}|^\eh\vee|\V{m}|^{\nf{3}{4}})\beta_{\Lambda,\Lambda'}\|
\\
&\quad\quad\cdot\|(1+\Id\Gamma(\omega))^\eh\Psi(\V{x})\|\|\Id\Gamma(\omega)^\eh\Psi(\V{x})\|.
\end{align*}
\end{lem}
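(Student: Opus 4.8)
The plan is to expand the difference $\mathfrak{v}_{K,L,\Lambda',\V{x}}[\Psi]-\mathfrak{v}_{K,L,\Lambda,\V{x}}[\Psi]$ termwise, starting from the explicit formula \eqref{defvKL} and inserting the decomposition $\beta_{L,\Lambda'}=\beta_{L,\Lambda}+\beta_{\Lambda,\Lambda'}$, and then to estimate the three resulting contributions by the relative bound \eqref{rba} and by Lem.~\ref{lemtertius}, in exactly the same spirit as the proof of Lem.~\ref{lemvx1}. No density or approximation step is needed here, because \eqref{defvKL} already holds for every $\Psi\in\QGV$ and a.e. $\V{x}$, and all the operators occurring act on $\Psi(\V{x})\in\fdom(\Id\Gamma(\omega))$. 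Before the estimates I would record the elementary facts to be used repeatedly: the functions $\beta_{L,\Lambda}$ and $\beta_{\Lambda,\Lambda'}$ have disjoint supports up to a null set; since $L\le\Lambda$ one has the pointwise domination $|\beta_{\Lambda,\Lambda'}|\le\beta_{L,\infty}$ and $|\beta_{L,\Lambda}|\le\beta_{L,\infty}$; the phase $e_{\V{x}}$ has modulus one, so that all the weighted $L^2$-norms of $e_{\V{x}}g$ equal those of $g$; $\omega\ge|\V{m}|$, so that $\|\omega^\mh|\V{m}|g\|\le\||\V{m}|^\eh g\|$; and finally $-i\nabla\Psi(\V{x})+\vp(e_{\V{x}}\V{m}\beta_{K,L})\Psi(\V{x})=-i\bigl(\nabla\Psi(\V{x})+i\vp(e_{\V{x}}\V{m}\beta_{K,L})\Psi(\V{x})\bigr)$, so these two vectors have the same norm.

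Carrying out the subtraction in \eqref{defvKL}, the contribution of the first term is $2\Re\SPn{a(e_{\V{x}}\V{m}\beta_{\Lambda,\Lambda'})\Psi(\V{x})}{-i\nabla\Psi(\V{x})+\vp(e_{\V{x}}\V{m}\beta_{K,L})\Psi(\V{x})}$, which by Cauchy--Schwarz, the norm identity above, and \eqref{rba} is bounded by a constant times $\|\nabla\Psi(\V{x})+i\vp(e_{\V{x}}\V{m}\beta_{K,L})\Psi(\V{x})\|\,\||\V{m}|^\eh\beta_{\Lambda,\Lambda'}\|\,\|\Id\Gamma(\omega)^\eh\Psi(\V{x})\|$, i.e. the first term on the right-hand side of the asserted inequality. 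The contribution of the second term is $\|a(e_{\V{x}}\V{m}\beta_{L,\Lambda'})\Psi(\V{x})\|^2-\|a(e_{\V{x}}\V{m}\beta_{L,\Lambda})\Psi(\V{x})\|^2$, which I would factor as a difference times a sum; estimating the difference factor by $\|a(e_{\V{x}}\V{m}\beta_{\Lambda,\Lambda'})\Psi(\V{x})\|$ and the sum factor by $2\|a(e_{\V{x}}\V{m}\beta_{L,\infty})\Psi(\V{x})\|$ (using the pointwise domination), and then applying \eqref{rba} to each of these together with $\omega\ge|\V{m}|$, yields the second term on the right-hand side with its factor $2$. Finally, the contribution of the third term of \eqref{defvKL} is, by bilinearity, $\Re\bigl[\SPn{\ad(e_{\V{x}}\V{m}\beta_{\Lambda,\Lambda'})\Psi(\V{x})}{a(e_{\V{x}}\V{m}\beta_{L,\Lambda'})\Psi(\V{x})}+\SPn{\ad(e_{\V{x}}\V{m}\beta_{L,\Lambda})\Psi(\V{x})}{a(e_{\V{x}}\V{m}\beta_{\Lambda,\Lambda'})\Psi(\V{x})}\bigr]$. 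I would apply Lem.~\ref{lemtertius} to each of the two scalar products (with $\beta,\beta'$ chosen as the corresponding sharp-cutoff functions dressed by $e_{\V{x}}$) and then replace $\beta_{L,\Lambda'}$, respectively $\beta_{L,\Lambda}$, by the pointwise larger $\beta_{L,\infty}$ in the resulting mixed-weight norm; the two contributions combine to give the factor $12$ in the third term of the claimed bound. Adding the three estimates via the triangle inequality finishes the proof.

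The only point that requires a little care is that Lem.~\ref{lemtertius} is stated for functions with both $|\V{m}|^\eh\beta$ and $|\V{m}|\beta$ square-integrable, whereas $|\V{m}|\beta_{L,\infty}$ fails to be square-integrable at infinity when $\mu=0$. This is harmless, because I invoke Lem.~\ref{lemtertius} only for the genuinely cutoff functions $\beta_{\Lambda,\Lambda'},\beta_{L,\Lambda'},\beta_{L,\Lambda}$ (all supported in $\{|\V{m}|\le\Lambda'\}$, for which the hypotheses are trivially met), and pass to $\beta_{L,\infty}$ only afterwards in the already-estimated expression, where only the mixed norm $\|(|\V{m}|^\eh\vee|\V{m}|^{\nf{3}{4}})\beta_{L,\infty}\|$ appears — and this is finite because at infinity the weight $|\V{m}|^{\nf{3}{4}}$ dominates and $\beta_{L,\infty}$ decays like $|\V{m}|^{-\nf{5}{2}}$. (One also reads Lem.~\ref{lemtertius} with the harmless modulus-one phase $e_{\V{x}}$ included, exactly as was already done in the proof of Lem.~\ref{lemvx1}; the complex conjugates in \eqref{myresluger1}--\eqref{myresluger2} make this automatic.) Beyond this, everything is routine bookkeeping with the disjoint supports of $\beta_{L,\Lambda}$ and $\beta_{\Lambda,\Lambda'}$, so I do not expect any real obstacle.
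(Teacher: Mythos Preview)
Your proposal is correct and follows essentially the same route as the paper: you expand the difference termwise, handle the first two contributions via Cauchy--Schwarz and \eqref{rba}, and bound the two cross terms coming from the $\ad\!\cdot a$ piece by Lem.~\ref{lemtertius}, passing to $\beta_{L,\infty}$ only in the weighted $L^2$-norms afterwards. One small wording point: pointwise domination $|\beta_{L,\Lambda'}|\le\beta_{L,\infty}$ does \emph{not} give $\|a(e_{\V{x}}\V{m}\beta_{L,\Lambda'})\Psi(\V{x})\|\le\|a(e_{\V{x}}\V{m}\beta_{L,\infty})\Psi(\V{x})\|$, so in the middle term you should apply \eqref{rba} first and only then use the monotonicity $\||\V{m}|^{\eh}\beta_{L,\Lambda'}\|\le\||\V{m}|^{\eh}\beta_{L,\infty}\|$---which is exactly what you end up doing anyway.
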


\begin{proof}
Thanks to \eqref{rba} it is again clear how to estimate the terms in the second and third lines of
\begin{align}\nonumber
&\mathfrak{v}_{K,L,\Lambda,\V{x}}[\Psi]-\mathfrak{v}_{K,L,\Lambda',\V{x}}[\Psi]
\\\nonumber
&=-\Re\Big[2\SPB{ia(e_{\V{x}}\V{m}\beta_{\Lambda,\Lambda'})\Psi(\V{x})}{\nabla\Psi(\V{x})
+i\vp(e_{\V{x}}\V{m}\beta_{K,L})\Psi(\V{x})}\Big]
\\\nonumber
&\quad+\|a(e_{\V{x}}\V{m}\beta_{L,\Lambda})\Psi(\V{x})\|^2
-\|a(e_{\V{x}}\V{m}\beta_{L,\Lambda'})\Psi(\V{x})\|^2
\\\nonumber
&\quad-\Re\Big[\SPn{\ad(e_{\V{x}}\V{m}\beta_{L,\Lambda})
\Psi(\V{x})}{a(e_{\V{x}}\V{m}\beta_{\Lambda,\Lambda'})\Psi(\V{x})}\Big]
\\\label{quartus}
&\quad-\Re\Big[\SPn{\ad(e_{\V{x}}\V{m}\beta_{\Lambda,\Lambda'})
\Psi(\V{x})}{a(e_{\V{x}}\V{m}\beta_{L,\Lambda'})\Psi(\V{x})}\Big],\quad\text{a.e. $\V{x}\in\Geb$.}
\end{align}
The bound \eqref{tertius} applies to the terms in the last two lines.
\end{proof}

\begin{proof}[Proof of Proposition~\ref{proprbUV1} and Proposition~\ref{proprbUV2}.]
Since the restriction of $V_-$ to $\Geb$ is infinitesimally form bounded with respect to the negative 
Dirichlet-Laplacian, a diamagnetic inequality (see, e.g., \cite{Matte2017}) implies
$$
\frac{1}{4}\int_\Geb\|\nabla\Psi(\V{x})+i\vp(e_{\V{x}}\V{m}\beta_{K,L})\Psi(\V{x})\|^2
\Id\V{x}-\int_{\Geb}V_-(\V{x})\|\Psi(\V{x})\|^2\Id\V{x}+c_{V_-}\|\Psi\|^2\ge0,
$$ 
for all $\Psi\in\QGV$ and some $c_{V_-}>0$. For all $\Psi\in L^2(\Geb,\fdom(\Id\Gamma(\omega)))$ 
and a.e. $\V{x}\in\Geb$, we further deduce from \eqref{rba} that
\begin{align*}
\frac{1}{2}\|\Id\Gamma(\omega)^\eh\Psi(\V{x})\|^2
+\frac{1}{2}\SPb{\Psi(\V{x})}{\vp(2e_{\V{x}}f_{K}+e_{\V{x}}\V{m}^2\beta_{K,\Lambda})\Psi(\V{x})}&
\\
+\Big(2\||\V{m}|^{\mh}f_{K}\|^2+\frac{1}{2}\||\V{m}|^{\nf{3}{2}}\beta_{K,L}\|^2\Big)\|\Psi(\V{x})\|^2&\ge0.
\end{align*}
Combining these remarks we find, for all $\Psi\in{\QGV}$,
\begin{align}\nonumber
&\frac{1}{2}\int_{\Geb}\Big(\frac{1}{2}\|\nabla\Psi(\V{x})+i\vp(e_{\V{x}}\V{m}\beta_{K,L})\Psi(\V{x})\|^2
+\|\Id\Gamma(\omega)^\eh\Psi(\V{x})\|^2\Big)\Id\V{x}
\\\label{sarah1}
&\le\hv_{\Geb,K,L}[\Psi]+\Big(2\||\V{m}|^{\mh}f_{K}\|^2+\|(|\V{m}|^{\nf{3}{2}}\vee|\V{m}|)\beta_{K,L}\|^2
+c_{V_-}\Big)\|\Psi\|^2.
\end{align}
Proposition~\ref{proprbUV1} and Proposition~\ref{proprbUV2} are now direct consequences of
\eqref{sarah1} together with Lemma~\ref{lemvx1} and Lemma~\ref{lemvx2}, respectively.
\end{proof}

The above relative bounds can also be used to study the dependence on $\ee\eta$:

\begin{lem}\label{lembdvmu}
In the situation of Lemma~\ref{lempert}, let $0\le\UV<\infty$ and define 
$\hat{\mathfrak{v}}_{\UV,\V{x}}$ in the same way as 
$\mathfrak{v}_{\UV,\V{x}}:=\mathfrak{v}_{0,0,\UV,\V{x}}$ in \eqref{defvKL} but with 
the symbol $\hat{\beta}$ put in place of $\beta$ everywhere. Then
there exists a universal constant $c>0$ such that, for all $\Psi\in\QGV$ and a.e. $\V{x}\in\Geb$,
\begin{align}\nonumber
&\big|\mathfrak{v}_{\UV,\V{x}}[\Psi]-\hat{\mathfrak{v}}_{\UV,\V{x}}[\Psi]\big|
\\\label{bdvmu}
&\le\frac{d_\UV}{2}\|\nabla\Psi(\V{x})\|^2+cd_\UV(1\vee|\ee|\vee|\hat{\ee}|)
\big(\|\Id\Gamma(\omega)^\eh\Psi(\V{x})\|^2+\|\Psi(\V{x})\|^2\big).
\end{align}
\end{lem}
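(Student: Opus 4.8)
The plan is to start from the observation that, by \eqref{deffbeta}, $\beta_{0,0}=0$ a.e., so that the $K=L=0$ instance of \eqref{defvKL} defining $\mathfrak{v}_{\UV,\V{x}}$ reduces to
$$
\mathfrak{v}_{\UV,\V{x}}[\Psi]=\Re\Big[2\SPn{a(e_{\V{x}}\V{m}\beta_{\UV})\Psi(\V{x})}{-i\nabla\Psi(\V{x})}+\|a(e_{\V{x}}\V{m}\beta_{\UV})\Psi(\V{x})\|^2+\SPn{\ad(e_{\V{x}}\V{m}\beta_{\UV})\Psi(\V{x})}{a(e_{\V{x}}\V{m}\beta_{\UV})\Psi(\V{x})}\Big],
$$
and likewise for $\hat{\mathfrak{v}}_{\UV,\V{x}}$ with $\hat\beta_\UV$ everywhere in place of $\beta_\UV$; for $\Psi\in\QGV$ all occurring vectors are well-defined since $\Psi(\V{x})\in\fdom(\Id\Gamma(\omega))$ for a.e. $\V{x}$. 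We may assume $d_\UV>0$ (else $\beta_\UV=\hat\beta_\UV$ a.e. and there is nothing to prove). Writing $\delta_\UV:=\beta_\UV-\hat\beta_\UV$ and using additivity of $a(\cdot)$ and linearity of $\ad(\cdot)$ in their arguments, I would subtract the two expressions and telescope term by term. The first contribution becomes $\Re[-2i\SPn{a(e_{\V{x}}\V{m}\delta_\UV)\Psi(\V{x})}{\nabla\Psi(\V{x})}]$; the second, via $\|Au\|^2-\|Bu\|^2=\Re\SPn{(A-B)u}{(A+B)u}$, becomes $\Re\SPn{a(e_{\V{x}}\V{m}\delta_\UV)\Psi(\V{x})}{a(e_{\V{x}}\V{m}(\beta_\UV+\hat\beta_\UV))\Psi(\V{x})}$; and the third becomes $\SPn{\ad(e_{\V{x}}\V{m}\delta_\UV)\Psi(\V{x})}{a(e_{\V{x}}\V{m}\beta_\UV)\Psi(\V{x})}+\SPn{\ad(e_{\V{x}}\V{m}\hat\beta_\UV)\Psi(\V{x})}{a(e_{\V{x}}\V{m}\delta_\UV)\Psi(\V{x})}$.

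For the first two contributions I would apply \eqref{rba}, using $\omega\ge|\V{m}|$ — hence $\omega^\mh|\V{m}|\le|\V{m}|^\eh\le|\V{m}|^\eh\vee|\V{m}|^{\nf{3}{4}}$ — to get $\|a(e_{\V{x}}\V{m}\delta_\UV)\Psi(\V{x})\|\le\||\V{m}|^\eh\delta_\UV\|\,\|\Id\Gamma(\omega)^\eh\Psi(\V{x})\|\le d_\UV\|\Id\Gamma(\omega)^\eh\Psi(\V{x})\|$. For the first contribution, Young's inequality $2ab\le\frac{d_\UV}{2}a^2+\frac{2}{d_\UV}b^2$ with $a=\|\nabla\Psi(\V{x})\|$, $b=\||\V{m}|^\eh\delta_\UV\|\,\|\Id\Gamma(\omega)^\eh\Psi(\V{x})\|$, together with $\||\V{m}|^\eh\delta_\UV\|\le d_\UV$ once more, produces exactly $\frac{d_\UV}{2}\|\nabla\Psi(\V{x})\|^2$ plus $2d_\UV\|\Id\Gamma(\omega)^\eh\Psi(\V{x})\|^2$. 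For the second contribution \eqref{rba} gives a bound $d_\UV\||\V{m}|^\eh(\beta_\UV+\hat\beta_\UV)\|\,\|\Id\Gamma(\omega)^\eh\Psi(\V{x})\|^2$. The two $\ad\cdot a$ cross terms I would instead route through Lemma~\ref{lemtertius} (which controls the $|\V{m}|^{\nf{3}{4}}$-weight that \eqref{rbad} cannot), applied with $e_{\V{x}}\delta_\UV$ resp. $e_{\V{x}}\beta_\UV$, $e_{\V{x}}\hat\beta_\UV$; since $|e_{\V{x}}|\equiv1$ this gives bounds $6d_\UV\|(|\V{m}|^\eh\vee|\V{m}|^{\nf{3}{4}})\beta_\UV\|\,\|(1+\Id\Gamma(\omega))^\eh\Psi(\V{x})\|\,\|\Id\Gamma(\omega)^\eh\Psi(\V{x})\|$ and its analogue with $\hat\beta_\UV$. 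The ingredient making all of this $\UV$- and $\mu$-uniform is the pair of elementary bounds $\||\V{m}|^\eh\beta_\UV\|\le c|\ee|$ and $\|(|\V{m}|^\eh\vee|\V{m}|^{\nf{3}{4}})\beta_\UV\|\le c|\ee|$ (and their hatted versions), which follow from $0\le\eta\le1$, $\omega\ge|\V{m}|$ and the convergence of $\int_{\RR^3}(1\vee|\V{k}|^\eh)(|\V{k}|+\V{k}^2/2)^{-2}\Id\V{k}$.

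It then remains to collect the pieces. Using $\|(1+\Id\Gamma(\omega))^\eh\Psi(\V{x})\|^2=\|\Psi(\V{x})\|^2+\|\Id\Gamma(\omega)^\eh\Psi(\V{x})\|^2$ and $2uv\le u^2+v^2$ one bounds $\|(1+\Id\Gamma(\omega))^\eh\Psi(\V{x})\|\,\|\Id\Gamma(\omega)^\eh\Psi(\V{x})\|\le\|\Id\Gamma(\omega)^\eh\Psi(\V{x})\|^2+\frac{1}{2}\|\Psi(\V{x})\|^2$; absorbing $1+|\ee|+|\hat\ee|\le3(1\vee|\ee|\vee|\hat\ee|)$ then turns the sum of all non-gradient terms into $c\,d_\UV(1\vee|\ee|\vee|\hat\ee|)\bigl(\|\Id\Gamma(\omega)^\eh\Psi(\V{x})\|^2+\|\Psi(\V{x})\|^2\bigr)$ for a universal $c$, while the coefficient of $\|\nabla\Psi(\V{x})\|^2$ is $d_\UV/2$, arising solely from the first contribution; this is \eqref{bdvmu}. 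I expect the main obstacle to be bookkeeping rather than any new estimate: one must keep the coefficient of $\|\nabla\Psi(\V{x})\|^2$ pinned at exactly $d_\UV/2$ — which dictates the particular Young split and the repeated use of $\||\V{m}|^\eh\delta_\UV\|\le d_\UV$ — and one must be careful to estimate the $\ad\cdot a$ terms via Lemma~\ref{lemtertius} and not \eqref{rbad}, so that it is the weight $|\V{m}|^{\nf{3}{4}}$, and hence $d_\UV$, that actually appears on the right-hand side.
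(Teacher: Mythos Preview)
Your proof is correct and follows essentially the same route as the paper: the paper's own proof is a one-liner stating that \eqref{bdvmu} ``follows in a straightforward fashion from \eqref{rba}, a computation analogous to \eqref{quartus}, and Lem.~\ref{lemtertius},'' and you have written out precisely that computation, telescoping the difference as in \eqref{quartus}, bounding the gradient--annihilator and $\|a\|^2$ terms via \eqref{rba}, and the two $\ad\cdot a$ cross terms via Lemma~\ref{lemtertius}. Your bookkeeping with the Young split to pin the $\|\nabla\Psi(\V{x})\|^2$ coefficient at $d_\UV/2$ and the uniform bounds $\|(|\V{m}|^\eh\vee|\V{m}|^{3/4})\beta_\UV\|\le c|\ee|$ are exactly what is needed.
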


\begin{proof}
The bound \eqref{bdvmu} follows in a straightforward fashion from \eqref{rba}, 
a computation analogous to \eqref{quartus}, and Lemma~\ref{lemtertius}. 
\end{proof}


\section{A Simple Lemma on Resolvents of Positive Operators}\label{appeasyresolvent}

\noindent
The next lemma implies a strengthened version of a well-known criterion for norm resolvent 
convergence of semi-bounded self-adjoint operators.

\begin{lem}\label{lemeasyresolvent}
Let $A$ and $B$ be strictly positive self-adjoint operators in some Hilbert space $\sK$
with lower bounds $\alpha>0$ and $\beta>0$, respectively.
Let $\mathfrak{a}$ and $\mathfrak{b}$ be the associated quadratic forms and assume
that $\dom(\mathfrak{a})=\dom(\mathfrak{b})$. Finally, assume there exist $q\in(0,1)$ and $\zeta\ge0$
such that 
\begin{align}\label{bedAB}
|\mathfrak{a}[\psi]-\mathfrak{b}[\psi]|&\le q\mathfrak{a}[\psi]+q\zeta\|\psi\|^2,
\quad\psi\in\dom(\mathfrak{a}).
\end{align}
Then
\begin{align}\label{easyres1}
\sup_{\psi\in\dom(\mathfrak{a}):\|\psi\|=1}\big\|A^{\eh}(A^{-1}-B^{-1})A^\eh\psi\big\|
\le\frac{q}{1-q}\Big(1+\frac{\zeta}{\alpha}\Big)\Big(1+\frac{q\zeta}{\beta}\Big).
\end{align}
If $M$ is a non-negative self-adjoint operator in $\sK$ with associated quadratic
form $\mathfrak{m}$ such that $\dom(\mathfrak{a})\subset\dom(\mathfrak{m})$ and
$$
\frac{1}{c}\mathfrak{m}[\psi]-c\|\psi\|^2\le\mathfrak{a}[\psi],\quad\psi\in\dom(\mathfrak{a}),
$$ 
for some $c>0$, then
\begin{align*}
\sup_{\phi\in\dom(\mathfrak{m}):\|\phi\|=1}\big\|M^{\eh}(A^{-1}-B^{-1})M^\eh\phi\big\|
\le\frac{cq+c^2q}{1-q}\Big(1+\frac{\zeta}{\alpha}\Big)\Big(1+\frac{q\zeta}{\beta}\Big).
\end{align*}
\end{lem}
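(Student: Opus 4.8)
The plan is to first establish \eqref{easyres1}, which is the heart of the matter, and then derive the $M$-weighted version from it by a short interpolation-type argument. For \eqref{easyres1}, I would start from the algebraic identity $A^{-1}-B^{-1}=A^{-1}(B-A)B^{-1}$, understood in the form sense: for $\psi\in\sK$, writing $\phi:=A^{-1}A^\eh\psi=A^{-\eh}\psi$ and $\phi':=B^{-1}A^\eh\psi$ (both in $\dom(\mathfrak{a})=\dom(\mathfrak{b})$), one has $A^\eh(A^{-1}-B^{-1})A^\eh\psi=A^\eh(\phi-\phi')$, and $\phi-\phi'$ is characterized variationally by $\mathfrak{a}[\phi-\phi',\chi]=\mathfrak{b}[\phi',\chi]-\mathfrak{a}[\phi',\chi]$ for all $\chi\in\dom(\mathfrak{a})$ (using $A\phi=A^\eh\psi=B\phi'$ in the form sense). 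Taking $\chi=\phi-\phi'$ and applying the polarized version of \eqref{bedAB} together with Cauchy--Schwarz gives
\begin{align*}
\mathfrak{a}[\phi-\phi']&=\mathfrak{b}[\phi',\phi-\phi']-\mathfrak{a}[\phi',\phi-\phi']
\le q\,\mathfrak{a}[\phi']^\eh\mathfrak{a}[\phi-\phi']^\eh+q\zeta\|\phi'\|\,\|\phi-\phi'\|,
\end{align*}
after bounding the mixed form by $q(\mathfrak a[\phi']^{1/2}\mathfrak a[\phi-\phi']^{1/2}+\zeta\|\phi'\|\|\phi-\phi'\|)$; here I use the standard fact that \eqref{bedAB} for a form implies the same bound for the associated sesquilinear form with the geometric-mean right-hand side. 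Using $\mathfrak{a}[\phi-\phi']=\|A^\eh(\phi-\phi')\|^2$, $\mathfrak a[\phi']=\|A^\eh\phi'\|^2$, the lower bounds $\|\phi-\phi'\|\le\alpha^{-\eh}\|A^\eh(\phi-\phi')\|$ and $\|\phi'\|\le\beta^{-\eh}\|B^\eh\phi'\|$, and the norm estimates $\|A^\eh\phi'\|^2=\mathfrak a[\phi']\le(1-q)^{-1}\mathfrak b[\phi']$ (from \eqref{bedAB}) with $\mathfrak b[\phi']=\|B^\eh\phi'\|^2=\langle A^\eh\psi,\phi'\rangle\le\|\psi\|\,\beta^{-\eh}\|B^\eh\phi'\|$ — hence $\|B^\eh\phi'\|\le\beta^{-\eh}\|\psi\|$ and $\|A^\eh\phi'\|\le(1-q)^{-\eh}\beta^{-\eh}\|\psi\|$ — one divides through by $\|A^\eh(\phi-\phi')\|$ and collects terms to reach $\|A^\eh(\phi-\phi')\|\le\frac{q}{1-q}(1+\zeta/\alpha)(1+q\zeta/\beta)\|\psi\|$ after the routine arithmetic. (The precise tracking of constants is the only tedious part, but it is purely mechanical.)

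For the second assertion I would argue similarly but replace the outer $A^\eh$ by $M^\eh$. The hypothesis $\tfrac1c\mathfrak m[\psi]-c\|\psi\|^2\le\mathfrak a[\psi]$ gives, for any $\psi\in\dom(\mathfrak a)$, the operator-type bound $\|M^\eh u\|^2=\mathfrak m[u]\le c\,\mathfrak a[u]+c^2\|u\|^2\le c(1+c\alpha^{-1})\,\mathfrak a[u]$ for $u\in\dom(\mathfrak a)$ (absorbing $\|u\|^2\le\alpha^{-1}\mathfrak a[u]$), i.e. $\|M^\eh A^{-\eh}\|\le(c+c^2/\alpha)^\eh\le(c+c^2)^\eh$ up to the harmless factor involving $\alpha$; more simply $\|M^\eh A^{-\eh}\|^2\le c+c^2\alpha^{-1}$. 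Then
\begin{align*}
\|M^\eh(A^{-1}-B^{-1})M^\eh\phi\|
=\|M^\eh A^{-\eh}\cdot A^\eh(A^{-1}-B^{-1})A^\eh\cdot A^{-\eh}M^\eh\phi\|
\le\|M^\eh A^{-\eh}\|^2\cdot\Big(\text{bound in }\eqref{easyres1}\Big)\cdot\|\phi\|,
\end{align*}
and substituting $\|M^\eh A^{-\eh}\|^2\le c+c^2$ (taking $\alpha\ge1$ WLOG, or keeping the $\alpha$-dependence and noting it only improves the constant when $\alpha\ge1$; in general one can simply state the bound with $c+c^2$ as in the lemma, which holds since the factors $(1+\zeta/\alpha)(1+q\zeta/\beta)$ already dominate) yields the claimed $\frac{cq+c^2q}{1-q}(1+\zeta/\alpha)(1+q\zeta/\beta)$.

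The main obstacle I anticipate is not conceptual but bookkeeping: one must be careful that all the vectors $A^{-\eh}\psi$, $B^{-1}A^\eh\psi$, $A^{-\eh}M^\eh\phi$ genuinely lie in $\dom(\mathfrak a)$ so that the polarized form of \eqref{bedAB} may be applied, and one must correctly pass from the form inequality \eqref{bedAB} to its sesquilinear (polarized) version with the geometric-mean right-hand side — this uses that the underlying perturbation $\mathfrak a-\mathfrak b$, being a difference of forms, is a symmetric form dominated in the sense of \eqref{bedAB}, so Cauchy--Schwarz applies to it after adding $q\zeta\|\cdot\|^2$. Once that standard step is in place, the rest is a chain of elementary estimates with the indicated constants, and no further input (beyond the stated hypotheses) is needed.
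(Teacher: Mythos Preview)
Your argument for \eqref{easyres1} has a genuine gap at the step ``$\mathfrak b[\phi']=\langle A^\eh\psi,\phi'\rangle\le\|\psi\|\,\beta^{-\eh}\|B^\eh\phi'\|$''. Since $B\phi'=A^\eh\psi$, one has exactly $\mathfrak b[\phi']=\|B^\eh\phi'\|^2=\|B^{-\eh}A^\eh\psi\|^2$, and the only trivial estimate is $\|B^{-\eh}A^\eh\psi\|\le\beta^{-\eh}\|A^\eh\psi\|$, not $\beta^{-\eh}\|\psi\|$; for unit $\psi\in\dom(\mathfrak a)$ the quantity $\|A^\eh\psi\|$ is unbounded, so the chain collapses. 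What is actually required is the boundedness of $D:=A^\eh B^{-\eh}$ (equivalently of its adjoint, which extends $B^{-\eh}A^\eh$): applying the rearrangement $\mathfrak a\le(1-q)^{-1}\mathfrak b+(1-q)^{-1}q\zeta\|\cdot\|^2$ of \eqref{bedAB} to $B^{-\eh}\eta$ gives $\|D\|^2\le(1-q)^{-1}(1+q\zeta/\beta)$, hence $\mathfrak b[\phi']=\|D^*\psi\|^2\le\|D\|^2\|\psi\|^2$. There is a second, smaller issue: the polarized bound you invoke, $|(\mathfrak b-\mathfrak a)[\phi',\chi]|\le q(\mathfrak a[\phi']^\eh\mathfrak a[\chi]^\eh+\zeta\|\phi'\|\|\chi\|)$, is not a standard fact and fails in general; what \eqref{bedAB} actually yields by polarization is $|(\mathfrak b-\mathfrak a)[\phi',\chi]|\le q\big(\mathfrak a[\phi']+\zeta\|\phi'\|^2\big)^\eh\big(\mathfrak a[\chi]+\zeta\|\chi\|^2\big)^\eh$.

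The paper sidesteps both difficulties by an operator factorization rather than a variational identity: with $C:=B^\eh A^{-\eh}$ and $D:=A^\eh B^{-\eh}$ it writes $A^{-1}-B^{-1}=A^{-\eh}(C^*C-\id)DD^*A^{-\eh}$, so that $A^\eh(A^{-1}-B^{-1})A^\eh=(C^*C-\id)DD^*$ on $\dom(\mathfrak a)$. The bound $\|C^*C-\id\|\le q(1+\zeta/\alpha)$ follows directly from the \emph{diagonal} inequality \eqref{bedAB} applied to $A^{-\eh}\eta$ (no polarization needed), and $\|DD^*\|=\|D^*D\|\le(1-q)^{-1}(1+q\zeta/\beta)$ is exactly the $D$-bound above. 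Your handling of the $M$-weighted part---sandwiching by $M^\eh A^{-\eh}$ and its adjoint---is essentially the paper's argument and works once \eqref{easyres1} is established; note only that the extension of $A^\eh(A^{-1}-B^{-1})A^\eh$ to a bounded operator on all of $\sK$ (needed to act on $A^{-\eh}M^\eh\phi$) is justified precisely by the factorization $(C^*C-\id)DD^*$.
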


\begin{proof}
In view of \eqref{bedAB} and its consequence
$$
\mathfrak{a}[\psi]\le\frac{1}{1-q}\mathfrak{b}[\psi]+\frac{q\zeta}{1-q}\|\psi\|^2,
\quad\psi\in\dom(\mathfrak{a}),
$$
the operators $C:=B^\eh A^{\mh}$ and $D:=A^\eh B^\mh$ are in $\LO(\sK)$ with
$\|C^*C-\id\|\le q(1+\zeta/\alpha)$ and $\|DD^*\|=\|D^*D\|\le(1+q\zeta/\beta)/(1-q)$. 
Furthermore, $A^{\mh}B^\eh\subset C^*$ and $B^{\mh}A^\eh\subset D^*$.
Since the range of $B^{-1}$ is $\dom(B)\subset\dom(A^{\eh})$,
\begin{align*}
A^{-1}-B^{-1}&=(A^{-1}B-\id)B^{-1}
\\
&=A^{\mh}(C^*C-\id)A^{\eh}B^{-1}=A^{\mh}(C^*C-\id)DD^*A^{\mh}.
\end{align*}
Finally, the assumptions on $M$ entail $\|M^\eh A^\mh\|^2\le c+c^2$.
If $\phi\in\dom(\mathfrak{m})$ is normalized, so that in particular
$A^{\mh}M^\eh\phi\in\dom(\mathfrak{a})$, then we infer from the previous bound and \eqref{easyres1} that
$$
\|M^{\eh}(A^{-1}-B^{-1})M^\eh\phi\|\le(c+c^2)^\eh\frac{q}{1-q}
\Big(1+\frac{\zeta}{\alpha}\Big)\Big(1+\frac{q\zeta}{\beta}\Big)\|A^{\mh}M^\eh\phi\|.
$$
We conclude by using that $A^{\mh}M^\eh\subset(M^\eh A^\mh)^*$.
\end{proof}


\section[Feynman-Kac Formulas for Dirichlet Realizations]{Feynman-Kac Formulas for Dirichlet Realizations of Nelson Hamiltonians}\label{appFKD}

\noindent
At the end of this appendix we prove the Feynman-Kac formulas for proper open subsets
$\Geb\subset\RR^3$ asserted in Theorem~\ref{thmFKD}. Departing from the known formulas
in the case $\Geb=\RR^3$, this can be done by a standard procedure for Schr\"{o}dinger operators 
originating from \cite{Simon1978Adv}; see also \cite[App.~B]{BHL2000} where
Schr\"{o}dinger operators with classical magnetic fields are treated. 
In \cite{Matte2019} this procedure is carried through in a slightly abstracted setting 
also covering models of nonrelativistic quantum field theory. All we do here is 
verifying the hypotheses of the next lemma, which is a special case of \cite[Lem.~3.4]{Matte2019}.

We suppose that $A_{\RR^3}$ and $A_\Geb$ are self-adjoint operators in $\HR:=L^2(\RR^3,\sF)$ 
and its subspace $\HR_\Geb:=1_\Geb L^2(\RR^d,\sF)$, respectively, which are semi-bounded from 
below. We denote their quadratic forms by $\mathfrak{a}_{\RR^3}$ and $\mathfrak{a}_\Geb$ and 
suppose that these forms are defined on $\sQ_{\RR^3}$ and $\QGV$, respectively.
We further assume these two forms to be related as described in the following paragraph:

We pick a sequence of compact sets $K_\ell$, $\ell\in\NN$, exhausting $\Geb$, i.e.,
$$
K_\ell\subset\mr{K}_{\ell+1}, \;\,\ell\in\NN,\quad\text{and}\quad 
\bigcup_{\ell\in\NN}K_\ell=\Geb.
$$
Furthermore, we pick cutoff functions $\vt_\ell\in C_0^\infty(\RR^d)$ with $\vt_\ell=1$ on $K_\ell$,
$\vt_\ell=0$ on $K_{\ell+1}^c$, and $0\le\vt_\ell\le1$, for all $\ell\in\NN$.
As in \cite{Simon1978Adv} we put 
\begin{align}\label{defYsGinfty}
Y(\V{x})&:=\left\{\begin{array}{ll}
\frac{1}{\dist(\V{x},\Geb^c)^{3}}+\sum_{\ell=1}^\infty|\nabla\vt_\ell|^2(\V{x}),&\V{x}\in\Geb,
\\
\infty,&\V{x}\in\Geb^c.
\end{array}\right.
\end{align}
The numerical function $Y$ defines a closed form in $\HR$ with domain
$$
\fdom(Y)=\bigg\{\Psi\in L^2(\RR^3,\sF)\,\bigg|\:
\int_{\RR^3}Y(\V{x})\|\Psi(\V{x})\|^2\Id\V{x}<\infty\bigg\}\subset\HR_{\Geb}.
$$
(In general this domain is not dense.) We further set 
$$
\sD_Y:=\sQ_{\RR^3}\cap\fdom(Y)\subset\HR_{\Geb}.
$$
We now fix $t>0$ in the rest of this appendix and assume:
\begin{enumerate}
\item[(a)] $\sD_Y\subset\QGV$. 
\item[(b)] The closure of $\sD_Y$ with respect to the form norm of $A_\Geb$ equals $\QGV$.
\item[(c)] $\mathfrak{a}_\Geb[\Psi]=\mathfrak{a}_{\RR^d}[\Psi]$, for all $\Psi\in\sD_Y$.
\item[(d)] For all $\V{x}\in\RR^d$, there exists a
strongly measurable map $M_t(\V{x}):\Omega\to\LO(\sF)$ such that, for all $\Psi\in\HR$,
\begin{align}\label{domMt}
\|M_t(\V{x})\|\|\Psi(\V{B}^{\V{x}}_t)\|\in L^1(\Omega,\PP),
\end{align}
and, for all bounded and continuous functions $v:\RR^d\to\RR$,
\begin{align}\label{FKabsRRd}
(e^{-t(A_{\RR^3}+v)}\Psi)(\V{x})=\EE\big[e^{-\int_0^tv(\V{B}^{\V{x}}_s)\Id s}
M_t(\V{x})\Psi(\V{B}^{\V{x}}_t)\big],\quad\text{a.e. $\V{x}\in\RR^3$.}
\end{align}
\end{enumerate}

\begin{lem}\label{lemFKGeb}
In the situation described above, let $\Psi\in\HR_\sG$. Then
\begin{align}\label{FKabs}
(e^{-tA_\Geb}\Psi)(\V{x})&=\EE\big[1_{\{\tau_\Geb(\V{x})>t\}}
M_t(\V{x})\Psi(\V{B}^{\V{x}}_t)\big],\quad\text{a.e. $\V{x}\in\RR^3$,}
\end{align}
with $\tau_\Geb(\V{x})$ defined as in \eqref{firstentryBGebc}.
\end{lem}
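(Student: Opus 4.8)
The plan is to follow the well-known Simon-type approximation procedure for Dirichlet realizations, realizing $A_\Geb$ as a monotone limit of the operators $A_{\RR^3}+\zeta n Y_n$, where $Y_n := Y\wedge n$ (or, more precisely, the bounded truncations of the killing potential $Y$ defined in \eqref{defYsGinfty}), and then passing this limit through the Feynman-Kac formula \eqref{FKabsRRd}. Since the hard analytical input — the Feynman-Kac formula on all of $\RR^3$ with the operator-valued integrand $M_t(\V{x})$, its integrability \eqref{domMt}, and the relation of the forms on the dense subspace $\sD_Y$ — is provided by assumptions (a)--(d), the proof is essentially the verification that the abstract scheme of \cite[Lem.~3.4]{Matte2019} applies, and indeed the statement is explicitly a special case of that lemma. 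So the write-up will consist of recalling why hypotheses (a)--(d) are exactly what is needed and then invoking that reference, filling in only the few lines that are specific to the present setting.

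\textbf{Key steps, in order.} First I would introduce, for $n\in\NN$, the self-adjoint operator $A_n$ in $\HR$ associated with the closed form $\mathfrak{a}_{\RR^3}+n\,\mathfrak{y}_n$, where $\mathfrak{y}_n$ is the bounded quadratic form of multiplication by $Y\wedge n$; since $Y\wedge n$ is bounded and measurable, $A_n = A_{\RR^3}+n(Y\wedge n)$ is a self-adjoint operator with form domain $\sQ_{\RR^3}$, semi-bounded from below uniformly in $n$. Second, I would check that $A_n$ converges in the strong resolvent sense, as $n\to\infty$, to an operator $\wt A$ whose form is the limit of the monotonically increasing forms $\mathfrak{a}_{\RR^3}[\,\cdot\,]+n\int(Y\wedge n)\|\cdot\|^2$; by the standard monotone-convergence theorem for forms \cite[Thm.~S.14 and Thm.~S.16]{ReedSimonI} the limiting form is the closure of its restriction to the set where all these quantities stay finite, i.e.\ to $\sD_Y = \sQ_{\RR^3}\cap\fdom(Y)$, and on $\sD_Y$ it agrees with $\mathfrak{a}_{\RR^3}$; hypotheses (a), (b), (c) then identify $\wt A$ with $A_\Geb$. (Here one also uses that any $\Psi$ in the limiting form domain must vanish a.e.\ on $\Geb^c$, since $Y=\infty$ there, so $\wt A$ acts in $\HR_\Geb$.) Third, for bounded continuous $v$, the Feynman-Kac formula \eqref{FKabsRRd} applies to each $A_n$ with the bounded continuous potential $v + n(Y\wedge n)$ replaced appropriately — more precisely $(e^{-t(A_n+v)}\Psi)(\V{x}) = \EE[e^{-\int_0^t (v+n(Y\wedge n))(\V{B}^{\V{x}}_s)\,\Id s} M_t(\V{x})\Psi(\V{B}^{\V{x}}_t)]$, which is legitimate since $n(Y\wedge n)$ is bounded and continuous on $\Geb$ — and then one lets $n\to\infty$. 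The exponential $e^{-n\int_0^t(Y\wedge n)(\V{B}^{\V{x}}_s)\,\Id s}$ decreases monotonically to $1_{\{\tau_\Geb(\V{x})>t\}}$ pointwise on $\Omega$, because $\int_0^t Y(\V{B}^{\V{x}}_s)\,\Id s<\infty$ exactly when the path stays in $\Geb$ up to time $t$ (this is the classical fact that the additive functional built from $\dist(\cdot,\Geb^c)^{-3}$ diverges along paths that hit $\Geb^c$, and is finite otherwise — one uses $3 \ge 1$ together with continuity of Brownian paths). Fourth, I would pass to the limit inside the expectation using dominated convergence, the dominating function being $\|M_t(\V{x})\|\,\|\Psi(\V{B}^{\V{x}}_t)\|$ from \eqref{domMt} (times $e^{t\|v\|_\infty}$), and on the operator side use strong resolvent convergence $A_n\to A_\Geb$ together with boundedness of $e^{-t(A_n+v)}$ to get $(e^{-t(A_n+v)}\Psi)(\V{x})\to(e^{-t(A_\Geb+v)}\Psi)(\V{x})$ in $\HR$, hence along a subsequence a.e.\ in $\V{x}$. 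This yields \eqref{FKabs} with $v$ present; taking $v=0$ gives the stated formula, or one notes $v$ can be absorbed into $V$ from the start.

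\textbf{Main obstacle.} The genuinely delicate point is the identification of the monotone-limit form with exactly the Dirichlet form $\mathfrak{a}_\Geb$, i.e.\ the step where hypotheses (a)--(c) are used: one must argue that the closure of $(\mathfrak{a}_{\RR^3},\sD_Y)$ equals $(\mathfrak{a}_\Geb,\QGV)$, which is hypothesis (b) combined with (c), but verifying hypothesis (b) in the concrete Nelson setting — that $\sD_Y = \sQ_{\RR^3}\cap\fdom(Y)$ is form-dense in $\QGV$ for the Dirichlet Nelson form — requires knowing that smooth compactly supported (in $\Geb$) vectors are a form core, which is built into the construction of $\TV_\Geb$ and $\HV_{\Geb,\UV}$ via \eqref{deftGplus} and Prop.~\ref{propdomHKLV}. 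I expect most of the real work to be this core/density bookkeeping and the pointwise convergence $e^{-n\int_0^t(Y\wedge n)(\V{B}^{\V{x}}_s)\,\Id s}\downarrow 1_{\{\tau_\Geb(\V{x})>t\}}$; everything else is routine monotone- and dominated-convergence manipulation. Since \cite[Lem.~3.4]{Matte2019} already packages precisely this argument at the needed level of generality, the proof reduces to: recall the construction of the truncations, verify (a), (b), (c) from Subsect.~\ref{ssecNelson}, Subsect.~\ref{ssecGross}, and App.~\ref{appGrosstrafo}, note that (d) is Thm.~\ref{thmFKD} for $\Geb=\RR^3$ (with $M_t(\V{x}) = W_{\infty,t}(\V{x})^*$ or $\wt W_{\infty,t}(\V{x})^*$) together with \eqref{WinLp}, and then quote \cite[Lem.~3.4]{Matte2019}.
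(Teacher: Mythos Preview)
Your proposal is correct and follows precisely the Simon-type approximation scheme that underlies \cite[Lem.~3.4]{Matte2019}, which the paper simply cites without reproducing a proof. One small point of confusion: in your final paragraph you describe verifying hypotheses (a)--(d) for the concrete Nelson operators, but that is the content of the \emph{subsequent} theorem (Thm.~\ref{prop-FK-bd-dom}), not of the abstract lemma itself, where (a)--(d) are assumed; the lemma's proof is exactly the monotone-form/dominated-convergence argument you outlined in your first four steps.
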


Now we apply the previous lemma to the Nelson operators and their non-Fock versions. Recall that 
the right hand sides of our Feynman-Kac formulas are defined in \eqref{defTGL} and \eqref{defwtTGL}.

\begin{thm}\label{prop-FK-bd-dom}
Let $\UV\in[0,\infty]$ and $\Psi\in L^2(\Geb,\sF)$. Then, for a.e. $\V{x}\in\Geb$,
\begin{align}\label{FKGeb}
\big(e^{-t\NV_{\Geb,\UV}}\Psi\big)(\V{x})&=\big(T_{\Geb,\UV,t}\Psi\big)(\V{x}),\quad
\big(e^{-t\HV_{\Geb,\UV}}\Psi\big)(\V{x})=\big(\wt{T}_{\Geb,\UV,t}\Psi\big)(\V{x}).
\end{align}
\end{thm}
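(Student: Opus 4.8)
The plan is to deduce both identities in \eqref{FKGeb} from Lemma~\ref{lemFKGeb} by specializing it, first, to $A_{\RR^3}=\NV_{\RR^3,\UV}$, $A_\Geb=\NV_{\Geb,\UV}$ with $M_t(\V{x})$ the operator-valued integrand $e^{-\int_0^tV(\V{B}_s^{\V{x}})\Id s}W_{\UV,t}(\V{x})^*$ of \eqref{defTGL}, and, second, to $A_{\RR^3}=\HV_{\RR^3,\UV}$, $A_\Geb=\HV_{\Geb,\UV}$ with $M_t(\V{x})=e^{-\int_0^tV(\V{B}_s^{\V{x}})\Id s}\wt{W}_{\UV,t}(\V{x})^*$, the integrand of \eqref{defwtTGL}. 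For the Gross transformed operators this works uniformly for all $\UV\in[0,\infty]$, since by Thm.~\ref{thmrbUV} the form of $\HV_{\Geb,\UV}$ is precisely $\hv_{\Geb,\UV}$ with domain $\QGV=\sQ_\Geb$, and likewise the $\RR^3$-form has domain $\sQ_{\RR^3}$; the same applies to $\NV_{\Geb,\UV}$ as long as $\UV<\infty$, using the explicit form $\nv_{\Geb,\UV}$ from \eqref{Nform}. The one case not covered this way is $\NV_{\Geb,\infty}$, because in the infrared singular situation $\sQ_\Geb$ need not be contained in the form domain of $\NV_{\Geb,\infty}$ (cf.\ Rem.~\ref{remdefnonFockNelson}); this case I would obtain afterwards by approximation in $\UV$.

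So the core of the work is to verify hypotheses (a)--(d) preceding Lemma~\ref{lemFKGeb} in the two situations above. Hypothesis (d), for $\Geb=\RR^3$, is exactly the Feynman--Kac formula of Thm.~\ref{thmFKD} applied with the Kato decomposable potential $V+v$ in place of $V$ (a bounded continuous $v$ leaves the potential Kato decomposable, and the extra scalar factor $e^{-\int_0^tv(\V{B}_s^{\V{x}})\Id s}$ is the one displayed in \eqref{FKabsRRd}), while strong measurability and the integrability \eqref{domMt} follow from \eqref{WinLp}, \eqref{kashmir}, and $\sup_{\V{x}}\EE[\|\Psi(\V{B}_t^{\V{x}})\|^2]<\infty$ via the Gaussian transition density. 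For (b) I would note that $Y$ from \eqref{defYsGinfty} is locally bounded on $\Geb$ (on a compact $K\subset\Geb$ the distance to $\Geb^c$ is bounded below and only finitely many $\vt_\ell$ are non-constant), so $\sD(\Geb,\fdom(\Id\Gamma(\omega)))\subset\sD_Y\subset\sQ_\Geb$; since $\sD(\Geb,\fdom(\Id\Gamma(\omega)))$ is a form core for $A_\Geb$ (by construction of $\mathfrak{t}_\Geb^+$ and \eqref{pernille0}, resp.\ \eqref{rbcfnv}, and \cite[\textsection4]{Matte2017}), so is $\sD_Y$. Hypothesis (a), $\sD_Y\subset\QGV$, reduces to the statement that $\Psi\in W^{1,2}(\RR^3,\sF)$ together with $\int_{\RR^3}\dist(\V{x},\Geb^c)^{-3}\|\Psi(\V{x})\|^2\Id\V{x}<\infty$ forces $\Psi$ to lie in the $W^{1,2}$-completion of $\sD(\Geb,\sF)$---the vector-valued version of the classical zero-Dirichlet-trace criterion of \cite{Simon1978Adv}, carried through in the present setting in \cite{Matte2019}. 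Finally (c) is immediate: on $\sD_Y$ the forms $\mathfrak{a}_\Geb$ and $\mathfrak{a}_{\RR^3}$ are given by the same integral expression---\eqref{Nform} in the Nelson case, \eqref{deftLambdax}--\eqref{defhGx} in the non-Fock case for finite $\UV$, and the common limit \eqref{defhKinftyV} of these when $\UV=\infty$---because these forms differ only through their domains.

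It remains to handle $\NV_{\Geb,\infty}$. Fixing $\UV_n\uparrow\infty$, I would use that $\NV_{\Geb,\UV_n}\to\NV_{\Geb,\infty}$ in the strong resolvent sense (Rem.~\ref{remdefrenNelson}), hence $e^{-t\NV_{\Geb,\UV_n}}\Psi\to e^{-t\NV_{\Geb,\infty}}\Psi$ in $L^2(\Geb,\sF)$, while on the other side $W_{\UV_n,t}(\V{x})\to W_{\infty,t}(\V{x})$ in every $L^p(\Omega,\PP;\LO(\sF))$---a convergence established in \cite{MatteMoeller2017}---so that, by the H\"older/dominated-convergence estimate used in the proof of Prop.~\ref{propGSmu}(2) (with \eqref{kashmir}, \eqref{WinLp}, and the $L^1\to L^2$ smoothing of $e^{t\Delta/2}$), one gets $T_{\Geb,\UV_n,t}\Psi\to T_{\Geb,\infty,t}\Psi$ in $L^2(\Geb,\sF)$; since $T_{\Geb,\UV_n,t}\Psi=e^{-t\NV_{\Geb,\UV_n}}\Psi$ for each $n$, the two limits agree. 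I expect the main obstacle to be not any single hard estimate but keeping the bookkeeping straight across the two families of operators and the case distinction $\UV<\infty$ versus $\UV=\infty$; within that, the genuinely model-specific point is hypothesis (a), i.e.\ importing the vector-valued Dirichlet-trace argument of \cite{Simon1978Adv,Matte2019}, everything else being a direct application of results already recorded above.
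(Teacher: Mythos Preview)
Your proposal is correct and follows essentially the same approach as the paper: verify hypotheses (a)--(d) of Lemma~\ref{lemFKGeb} for the pairs $(\hv_{\RR^3,\UV},\hv_{\Geb,\UV})$ with $\UV\in[0,\infty]$ and $(\nv_{\RR^3,\UV},\nv_{\Geb,\UV})$ with $\UV<\infty$, invoking the $\RR^3$ formulas of \cite{MatteMoeller2017} for (d) and the vector-valued Dirichlet-trace machinery of \cite{Matte2019} for (a) and (b), then treat $\NV_{\Geb,\infty}$ separately via strong resolvent convergence and the $L^p$-convergence $W_{\UV_n,t}(\V{x})\to W_{\infty,t}(\V{x})$ from \cite[Prop.~5.6]{MatteMoeller2017}. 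The paper streamlines (a)--(c) by first noting that all the relevant form norms are equivalent to the comparison norm $\mathfrak{q}_*[\Psi]=\cfG[\Psi]+\|\Psi\|^2$ (via \eqref{rbcfnv} and \eqref{pernille0}), so that (a) and (b) reduce to a single statement about $\mathfrak{q}_*$ covered by \cite[Prop.~5.13]{Matte2019}; your slightly more hands-on verification of (b) via $\sD(\Geb,\fdom(\Id\Gamma(\omega)))\subset\sD_Y$ amounts to the same thing.
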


\begin{proof}
When we apply Lemma~\ref{lemFKGeb} we can substitute 
$(\hv_{\RR^3,\UV},\hv_{\Geb,\UV})$ with $0\le\UV\le\infty$ or
$(\nv_{\RR^3,\UV},\nv_{\Geb,\UV})$ with $0\le\UV<\infty$ for
the pair of forms $(\mathfrak{a}_{\RR^\nu},\mathfrak{a}_\Geb)$.
We consider the forms and operators associated with $\Geb$
as forms and operators in $\HR_\Geb$ in the canonical way; elements
of $L^2(\Geb,\sF)$ are extended by $0$ to $\RR^3$.
Then the Feynman-Kac formulas derived in \cite{MatteMoeller2017} play the role of the
postulated relation \eqref{FKabsRRd} with the obvious choices of $M_t(\V{x})$. The 
integrability condition \eqref{domMt} is valid by virtue of \eqref{kashmir} and \eqref{WinLp}. 

It remains to check the conditions (a), (b), and (c). To this end we recall that, by \eqref{rbcfnv} and 
\eqref{pernille0}, the form norms of $\HV_{\Geb,\UV}$ with $0\le\UV\le\infty$ and of
$\NV_{\Geb,\UV}$ with $0\le\UV<\infty$ are all equivalent to the norm associated with
$\mathfrak{q}_{*}[\Psi]:=\cfG[\Psi]+\|\Psi\|^2$, $\Psi\in{\QGV}$. Here $\cfG$ is
given by \eqref{compform}. Therefore, we can replace the form norm of $A_\Geb$ in (b)
by $\mathfrak{q}_{*}^{1/2}$. But then (a) and (b) are (simple) special cases of \cite[Prop.~5.13]{Matte2019}.
(To prove (a) we have to approximate $\Psi\in\sD_Y$ in the norm $\mathfrak{q}_*^{1/2}$ by elements of 
$\sD(\Geb,\fdom(\Id\Gamma(1\vee\omega)))$, and
the function $Y$ is introduced to ensure that 
$\mathfrak{q}_{*}[\vt_\ell\Psi-\Psi]\to0$, as $\ell\to\infty$.)

Let $\UV$ be finite. Obviously, $\nv_{\RR^3,\UV}[\Phi]=\nv_{\Geb,\UV}[\Phi]$, for all
$\Phi\in\sD(\Geb,\fdom(\Id\Gamma(1\vee\omega)))$, where
$\sD(\Geb,\fdom(\Id\Gamma(1\vee\omega)))$ is a core for 
$\nv_{\Geb,\UV}$ by definition. Since we know by now that $\sD_Y\subset\QGV$, we conclude that 
$\nv_{\RR^3,\UV}[\Psi]=\nv_{\Geb,\UV}[\Psi]$, for every $\Psi\in\sD_Y$.
In the same way we see that 
$\hv_{\RR^3,\UV}[\Psi]=\hv_{\Geb,\UV}[\Psi]$ for all $\UV\in[0,\infty]$.
Thus, (c) is satisfied as well and Lemma~\ref{lemFKGeb} implies \eqref{FKGeb}
in all cases considered at present.

So far we excluded the renormalized Nelson operator, because its form domain is not known explicitly.
To extend the result to $\NV_{\Geb,\infty}$ we recall that $\NV_{\Geb,\UV}$
converges to $\NV_{\Geb,\infty}$ in strong resolvent sense, as $\UV\to\infty$. Hence, it suffices to set
$\UV_n:=n$, $n\in\NN$, and show that
\begin{align*}
\sup_{\V{x}\in\RR^3}\EE\Big[\|W_{\UV_n,t}(\V{x})-W_{\infty,t}(\V{x})\|^p\Big]
\xrightarrow{\;\;n\to\infty\;\;}0,
\end{align*}
for all $p,t>0$, which is done in \cite[Prop.~5.6]{MatteMoeller2017}.
\end{proof}


\section{Proving Exponential Localization of Spectral Subspaces}\label{appexploc}

\noindent
As promised in Subsection~\ref{ssecexploc}, we present detailed proofs of our $L^2$-exponential
localization estimates in this appendix. To this end we shall proceed along the lines of 
\cite{BFS1998b,Griesemer2004} with a few modifications necessary to derive \eqref{exot2} and
to cover weight functions $F$ that increase faster than linearly.

We start with a few bounds on the decay properties of resolvents:

\begin{lem}\label{lemresexploc}
Let $0\le K<\UV\le\infty$ and $F:\RR^3\to\RR$ be locally Lipschitz continuous and bounded from 
below. Assume that the domain of the maximal operator of 
multiplication with $|\nabla F|$ in $L^2(\Geb,\sF)$ contains ${\QGV}$. Assume further that 
\begin{align}\label{preben}
\inf\bigg\{(1-\epsilon)\hv_{\Geb,K,\UV}[\Psi]-\frac{1}{2}\int_{\Geb}|\nabla F(\V{x})|^2
\|\Psi(\V{x})\|^2\Id\V{x}\,\bigg|\:\Psi\in{\QGV},\,\|\Psi\|=1\bigg\}\ge\delta,
\end{align}
for some $\delta>0$ and some $\epsilon\ge0$. Let $z\in\CC$ with $\Re[z]\le0$.
Then the range of $(\HV_{\Geb,K,\UV}-z)^{-1}e^{-F}$ is contained in the domain of $e^F$ and
\begin{align}\label{resexploc}
\|e^F(\HV_{\Geb,K,\UV}-z)^{-1}e^{-F}\|\le\frac{1}{\delta}.
\end{align}
Furthermore, if $\epsilon>0$, then $e^F$ maps the range of 
$(\HV_{\Geb,K,\UV}-z)^{-1}e^{-F}$ into ${\QGV}$ and
\begin{align}\label{resexploc2a}
\|(\HV_{\Geb,K,\UV})^\eh
e^F(\HV_{\Geb,K,\UV}-z)^{-1}e^{-F}\|\le\frac{1}{\epsilon\delta^\eh}.
\end{align}
In particular, if $a,\delta>0$, $\hv_{\Geb,K,\UV}\ge\delta+a^2/2$, $\cK\subset\RR^3$ is 
compact, and $d_\cK(\V{x}):=\dist(\V{x},\cK)$, $\V{x}\in\RR^3$, then the range of 
$(\HV_{\Geb,K,\UV}-z)^{-1}1_{\cK\cap\Geb}$ is contained in $\dom(e^{ad_\cK})$ and
\begin{align}\label{resexploc2}
\|e^{ad_\cK}(\HV_{\Geb,K,\UV}-z)^{-1}1_{\cK\cap\Geb}\|\le\frac{1}{\delta}.
\end{align}
\end{lem}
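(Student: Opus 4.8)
The strategy is the standard exponential-twisting (or "Combes--Thomas"/"IMS"-type) argument adapted to the quadratic form $\hv_{\Geb,K,\UV}$; the main work is to make sense of the conjugated resolvent when $F$ is only locally Lipschitz and possibly superlinear. First I would reduce to the case of a \emph{bounded} Lipschitz $F$ by a truncation: for $N\in\NN$ set $F_N:=F\wedge N$, which is globally Lipschitz with $|\nabla F_N|\le|\nabla F|$ a.e., so that \eqref{preben} still holds with $F_N$ in place of $F$ (the integrand only decreases). For such bounded Lipschitz $F_N$, multiplication by $e^{\pm F_N}$ is a bounded invertible operator preserving $\QGV$, and by Ex.~\ref{exIMS}, formula~\eqref{exp5}, one has the key identity
\begin{align}\label{planeqIMS}
\Re\,\hv_{\Geb,K,\UV}[e^{-F_N}\Psi,e^{F_N}\Psi]
&=\hv_{\Geb,K,\UV}[\Psi]-\tfrac12\int_{\Geb}|\nabla F_N(\V{x})|^2\|\Psi(\V{x})\|^2\Id\V{x},
\end{align}
valid for all $\Psi\in\QGV$ (first for finite $\UV$, then passing to $\UV=\infty$ via Thm.~\ref{thmrbUV}). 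Writing $A:=\HV_{\Geb,K,\UV}$ and using $\Re[z]\le0$, \eqref{planeqIMS} combined with \eqref{preben} gives, for $\Psi\in\QGV$,
$$
\Re\big\langle e^{-F_N}\Psi\,\big|\,(A-z)e^{F_N}\Psi\big\rangle
\ge\hv_{\Geb,K,\UV}[\Psi]-\tfrac12\!\int_{\Geb}\!|\nabla F_N|^2\|\Psi\|^2\Id\V{x}
\ge(1-\epsilon)^{-1}\cdot 0 \ \text{-type bound}\ \ge\delta\|\Psi\|^2
$$
(more precisely, dropping the nonnegative term $\epsilon\,\hv_{\Geb,K,\UV}[\Psi]/(1-\epsilon)$ coming from rearranging \eqref{preben}). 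This coercivity estimate, via Lax--Milgram applied to the bounded-below symmetric part, shows $e^{-F_N}(A-z)e^{F_N}$ (defined on $\QGV$) has a bounded inverse of norm $\le1/\delta$; equivalently $e^{F_N}(A-z)^{-1}e^{-F_N}$ extends to a bounded operator with the same bound. This is \eqref{resexploc} for $F_N$.

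To obtain \eqref{resexploc2a} for bounded $F_N$ I keep the retained term: from \eqref{preben},
$\Re\langle e^{-F_N}\Psi|(A-z)e^{F_N}\Psi\rangle\ge\delta\|\Psi\|^2+\tfrac{\epsilon}{1-\epsilon}\hv_{\Geb,K,\UV}[\Psi]\ge\delta\|\Psi\|^2+\epsilon\,\hv_{\Geb,K,\UV}[\Psi]$, so with $\Psi:=e^{-F_N}(A-z)^{-1}e^{-F_N}\varphi$ and Cauchy--Schwarz on the left one gets $\epsilon\,\hv_{\Geb,K,\UV}[\Psi]\le\|\Psi\|\|\varphi\|$, whence $\hv_{\Geb,K,\UV}[\Psi]\le\|\varphi\|^2/(\epsilon^2\delta)$; since $A\ge0$ this is exactly $\|A^{1/2}e^{F_N}(A-z)^{-1}e^{-F_N}\varphi\|^2\le\|\varphi\|^2/(\epsilon^2\delta)$, i.e. \eqref{resexploc2a}. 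Finally I would remove the truncation: for fixed $\varphi$ with $e^{-F}\varphi\in L^2$, the vectors $\Psi_N:=e^{F_N}(A-z)^{-1}e^{-F_N}\cdot(e^{F_N}e^{-F}\varphi)$ are uniformly bounded in $\QGV$ (by the $N$-uniform bounds just proved and $|e^{F_N}e^{-F}|\le1$), hence have a weakly convergent subsequence in $\QGV$; identifying the limit with $e^{F}(A-z)^{-1}e^{-F}\varphi$ via the resolvent identity and monotone/dominated convergence (using $F_N\uparrow F$, so $e^{F_N}\uparrow e^{F}$) gives $(A-z)^{-1}e^{-F}\varphi\in\dom(e^F)$ together with the bounds \eqref{resexploc} and \eqref{resexploc2a} in the limit. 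The \textbf{main obstacle} is precisely this last limiting step: one must argue that the a priori-only-formal object $e^F(A-z)^{-1}e^{-F}$ is genuinely the strong limit of the bounded operators $e^{F_N}(A-z)^{-1}e^{-F_N}$ on the dense domain $\{\varphi:e^{-F}\varphi\in L^2\}$ (equivalently on $e^F L^2$), and that weak $\QGV$-limits land in $\dom(e^F)$ with no loss of norm; closedness of $A^{1/2}e^F$ (as a product of closed operators with $e^F$ essentially self-adjoint on a suitable core) and lower semicontinuity of the form under weak convergence are the tools here.

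For the last assertion \eqref{resexploc2}, take $F(\V{x}):=a\,d_\cK(\V{x})$, which is globally Lipschitz (constant $a$) and bounded below, with $|\nabla F|\le a$ a.e.; since $\hv_{\Geb,K,\UV}\ge\delta+a^2/2$ by hypothesis, the bracket in \eqref{preben} with $\epsilon=0$ is $\ge\hv_{\Geb,K,\UV}[\Psi]-\tfrac{a^2}{2}\|\Psi\|^2\ge\delta\|\Psi\|^2$, so \eqref{preben} holds. Moreover $e^{-F}\equiv1$ on $\cK$, so $(A-z)^{-1}1_{\cK\cap\Geb}=(A-z)^{-1}e^{-F}1_{\cK\cap\Geb}$, and \eqref{resexploc} gives $\|e^{a d_\cK}(A-z)^{-1}1_{\cK\cap\Geb}\|=\|e^{F}(A-z)^{-1}e^{-F}1_{\cK\cap\Geb}\|\le1/\delta$. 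The same proof applies verbatim with $\NV_{\Geb',\UV}$ in place of $\HV_{\Geb,K,\UV}$, since both \eqref{exp5} and the relative form bounds used only entered through Ex.~\ref{exIMS} and Thm.~\ref{thmrbUV}, which cover $\NV$ as well (for finite $\UV$ directly, for $\UV=\infty$ via Rem.~\ref{remdefrenNelson}).
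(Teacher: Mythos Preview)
Your approach is essentially the same as the paper's: conjugate the resolvent by $e^{\pm F}$, use the identity \eqref{exp5} to get the coercivity bound, deduce the resolvent estimates, then remove a truncation on $F$. There are two places where your execution differs from the paper's, and one of them is a small gap.

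First, the paper does not work directly with the bounded Lipschitz truncation $F_N=F\wedge N$; instead it additionally mollifies, setting $F_{n,\varepsilon}$ equal to a standard mollification of $F_n$, because \eqref{exp5} (and the fact that multiplication by $f$ preserves $\QGV$) is only stated in Ex.~\ref{exIMS} for $f\in C^\infty$ bounded with bounded derivative. Your appeal to \eqref{exp5} for merely Lipschitz $F_N$ therefore needs one more approximation layer; this is routine but should be mentioned. Once \eqref{resexploc} and \eqref{resexploc2a} are established for smooth bounded $F_{n,\varepsilon}$, the paper removes first $\varepsilon\downarrow0$ by dominated convergence and then $n\to\infty$ by monotone convergence.

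Second, your limiting step is more elaborate than necessary. Since $F_N\uparrow F$ pointwise and $e^{-F}$ is bounded (as $F$ is bounded below), the vector $\Upsilon:=(\HV_{\Geb,K,\UV}-z)^{-1}e^{-F}\Phi$ is fixed, and $\|e^{F_N}\Upsilon\|\le\delta^{-1}\|e^{F_N-F}\Phi\|\le\delta^{-1}\|\Phi\|$ uniformly in $N$; monotone convergence immediately gives $\Upsilon\in\dom(e^F)$ with the same bound. No weak-$\QGV$ compactness is needed for \eqref{resexploc}. For \eqref{resexploc2a}, the paper inserts the bounded regularizer $(m^{-1}\HV_{\Geb,K,\UV}+1)^{-1/2}$, applies the already-proved $L^2$-convergence $e^{F_{n,\varepsilon}}\Upsilon\to e^F\Upsilon$, and then lets $m\to\infty$ via monotone convergence in a spectral representation of $\HV_{\Geb,K,\UV}$; this is cleaner than arguing with weak limits and closedness of $A^{1/2}e^F$. (Incidentally, your substitution should read $\Psi:=e^{F_N}(A-z)^{-1}e^{-F_N}\varphi$, not $e^{-F_N}$ in front.)

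Your treatment of \eqref{resexploc2} with $F=a\,d_{\cK}$ is exactly the paper's.
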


\begin{proof}
To start with with we assume in addition that $F$ is smooth and bounded with a bounded 
derivative. To derive \eqref{resexploc} from \eqref{exp5} with $f=e^F$ we could just copy the
corresponding arguments in \cite[pp.~326/7]{Griesemer2004}. Since we are also interested in the
bound \eqref{resexploc2a}, we have to extend these arguments slightly: Let $\Re[z]\le0$. As in 
\cite{Griesemer2004} we infer from \eqref{exp5} that $e^F$ maps ${\QGV}$ into itself and
\begin{align*}
\Re\SPn{\Psi}{(e^F\HV_{\Geb,K,\UV}e^{-F}-z)\Psi}&\ge\hv_{\Geb,K,\UV}[\Psi]
-\frac{1}{2}\int_{\Geb}|\nabla F(\V{x})|^2\|\Psi(\V{x})\|^2\Id\V{x},
\end{align*}
for all $\Psi\in e^F\dom(\HV_{\Geb,K,\UV})$. In conjunction with \eqref{preben} and the
Cauchy-Schwarz inequality this yields
\begin{align*}
\|\Psi\|\|e^F(\HV_{\Geb,K,\UV}-z)e^{-F}\Psi\|
&\ge\delta\|\Psi\|^2+\epsilon\|(\HV_{\Geb,K,\UV})^\eh\Psi\|^2
\\
&\ge\delta\|\Psi\|^2+\epsilon\delta^\eh\|\Psi\|\|(\HV_{\Geb,K,\UV})^\eh\Psi\|,
\end{align*}
for $\Psi$ as above. Choosing $\Psi:=e^F(\HV_{\Geb,K,\UV}-z)^{-1}e^{-F}\Phi$, for arbitrary
$\Phi\in L^2(\Geb,\sF)$, we obtain
\begin{align*}
\delta\|e^F(\HV_{\Geb,K,\UV}-z)^{-1}e^{-F}\Phi\|+\epsilon\delta^\eh
\|(\HV_{\Geb,K,\UV})^\eh e^F(\HV_{\Geb,K,\UV}-z)^{-1}e^{-F}\Phi\|&\le\|\Phi\|,
\end{align*}
which proves \eqref{resexploc} and \eqref{resexploc2} under the additional conditions on $F$.

For general $F$ as in the statement, let $F_{n,\ve}$, $\ve\in(0,1]$, denote a standard
mollification of $F_n:=F\wedge n$, where $n\in\NN$. Let $\Phi\in L^2(\Geb,\sF)$ and
put $\Upsilon:=(\HV_{\Geb,\UV}-z)^{-1}e^{-F}\Phi$. Since $F_n-F\le0$, it is then clear that
\begin{align*}
\int_{\Geb}e^{2F_{n,\ve}(\V{x})}\|\Upsilon(\V{x})\|^2\Id\V{x}
&\le\frac{1}{\delta}\|e^{F_{n,\ve}-F_n}\Phi\|^2,\quad\ve\in(0,1],\,n\in\NN.
\end{align*}
By means of the dominated convergence theorem we can now pass to the limit $\ve\downarrow0$
in the previous inequality. Afterwards we let $n\to\infty$ by monotone convergence. This shows
that $\Upsilon\in\dom(e^F)$ and finishes the proof of \eqref{resexploc}. It is now also clear that
$e^{F_{n,\ve}}\Upsilon\to e^{F_n}\Upsilon$, $\ve\downarrow0$, 
and $e^{F_n}\Upsilon\to e^F\Upsilon$, $n\to\infty$, in $L^2(\Geb,\sF)$, 
which together with \eqref{resexploc2a} implies the bounds
\begin{align*}
\|(\HV_{\Geb,K,\UV})^{\eh}(m^{-1}\HV_{\Geb,K,\UV}+1)^\mh e^F\Upsilon\|
&\le\frac{1}{\epsilon\delta^\eh}\|\Phi\|,\quad m\in\NN,
\end{align*}
provided that $\epsilon>0$. To pass to the limit $m\to\infty$, we now apply the monotone 
convergence theorem in a spectral representation of $\HV_{\Geb,K,\UV}$.
\end{proof}

In the main text we applied the next remark and the succeeding example to prove absence of 
ground states of the infrared singular Nelson operator:

\begin{rem}\label{remNresexploc}
Suppose the hypotheses of Lemma~\ref{lemresexploc} are fulfilled for some $0<K<\UV\le\infty$.
Since $\NV_{\Geb,\UV}$ and $\HV_{\Geb,K,\UV}$ are unitarily equivalent via the Gross 
transformation (recall Proposition~\ref{propGrosstrafo} and Remark~\ref{remdefrenNelson}) and since 
$G_{K,\UV}e^F=e^FG_{K,\UV}$, the bounds \eqref{resexploc} and \eqref{resexploc2a} still hold true
when $\NV_{\Geb,\UV}$ is put in place of $\HV_{\Geb,K,\UV}$. 
\end{rem}

\begin{ex}\label{exresexpabs}
Let $a,\zeta>0$ satisfy $a^2/2<\zeta$ and assume that $F:\RR^3\to[0,\infty)$ is Lipschitz continuous 
such that $|\nabla F|\le a$ holds a.e. on $\RR^3$. Then the range of
$(\NV_{\Geb,\infty}-E_{\Geb,\infty}+\zeta)^{-1}e^{-F}$ is contained in the domain of $e^F$ and
\begin{align}\label{resexploc2aN}
\|e^F(\NV_{\Geb,\infty}-E_{\Geb,\infty}+\zeta)^{-1}e^{-F}\|&\le\frac{1}{\zeta-a^2/2}.
\end{align}
To verify this bound we apply Remark~\ref{remNresexploc} with $V-E_{\Geb,\infty}+\zeta$ put in place
of $V$. Then \eqref{preben} is obviously fulfilled with $\epsilon=0$, $\delta=\zeta-a^2/2$
and, for instance, $K=1$.
\end{ex}

After these preparations we are in a position to prove our $L^2$-exponential localization estimates:

\begin{proof}[Proof of Proposition~\ref{propexploc}.]
Instead of treating $\HV_{\Geb',\UV}$ and $\NV_{\Geb',\UV}$ separately, we prove the
proposition for the operator $\HV_{\Geb',K,\UV}$ with arbitrary $0\le K<\UV\le\infty$.
Recall that $\HV_{\Geb',\UV}=\HV_{\Geb',0,\UV}$. 
To obtain the proposition for the Nelson operator we exploit the unitary equivalence
$\NV_{\Geb',\UV}=G_{K,\UV}^*\HV_{\Geb',K,\UV}G_{K,\infty}$,
which holds for all $0<K<\UV\le\infty$, and the obvious fact that
$G_{K,\infty}^*e^{(1-\ve)F}=e^{(1-\ve)F}G_{K,\infty}^*$ on the domain of $e^{(1-\ve)F}$.
Notice also that, for every $R>0$, the operators
$\HV_{\Geb_R,\UV}$, $\HV_{\Geb_R,K,\UV}$, and $\NV_{\Geb_R,\UV}$ all have the same
spectrum, since the latter two are unitarily equivalent and $\HV_{\Geb_R,K,\UV}$ converges
to $\HV_{\Geb_R,\UV}$ in the norm resolvent sense, as $K\downarrow0$.
In particular, the conditions \eqref{defDelta}, \eqref{bedFa}, and \eqref{bedFb} are the same for 
all these operators.

So let $0\le K<\UV\le\infty$. We shall employ the IMS partition of unity 
$\chi_{0,R}^2+\chi_{1,R}^2=1$ introduced in Ex.~\ref{exIMS}.
Similarly as in \cite{Griesemer2004} we define a comparison operator 
\begin{align}\label{cassius1}
Y_{\Geb'}^{R}:=\HV_{\Geb',K,\UV}+(\Th{R}{\UV}-E_{\Geb,\UV})\chi_{0,R}^2
+\frac{1}{2}|\nabla F|^2\chi_{0,R}^2.
\end{align}
Let $\Psi\in\sQ_{\Geb'}$. Then it is straightforward to verify that 
$\chi_{1,R}\Psi$ is not only in $\sQ_{\Geb'}$, but also in $\sQ_{\Geb_R'}$ when considered
as a function on $\Geb_R'$, and that
\begin{align}\label{cassius2}
\hv_{\Geb',K,\UV}[\chi_{1,R}\Psi]=\hv_{\Geb_R',K,\UV}[\chi_{1,R}\Psi]
&\ge E_{\Geb_R',\UV}\|\chi_{1,R}\Psi\|^2\ge E_{\Geb_R,\UV}\|\chi_{1,R}\Psi\|^2.
\end{align}
In the last step we took into account that $\Geb_R'\subset\Geb_R$. Notice that
the previous estimation also holds in the case $\Geb_R'=\emptyset$, where 
$E_{\Geb_R',\UV}=\infty$, since then $\chi_{1,R}\Psi=0$; here we employ the 
usual convention $\infty\cdot0:=0$. Likewise,
\begin{align}\nonumber
\hv_{\Geb',K,\UV}[\chi_{1,R}\Psi]&\ge E_{\Geb_R',\UV}^0\|\chi_{1,R}\Psi\|^2
+\int_{\Geb'}V(\V{x})\|(\chi_{1,R}\Psi)(\V{x})\|^2\Id\V{x}
\\\label{cassius3}
&\ge E_{\Geb_R,\UV}^0\|\chi_{1,R}\Psi\|^2
+\int_{\Geb'}V(\V{x})\|(\chi_{1,R}\Psi)(\V{x})\|^2\Id\V{x},
\\\label{cassius4}
\hv_{\Geb',K,\UV}[\chi_{0,R}\Psi]&\ge E_{\Geb',\UV}\|\chi_{0,R}\Psi\|^2\ge
 E_{\Geb,\UV}\|\chi_{0,R}\Psi\|^2.
\end{align}
By virtue of the IMS localization formula \eqref{IMSq} and \eqref{cassius1} through\eqref{cassius4}
we obtain the following bounds in the sense of quadratic forms on $\sQ_{\Geb'}$,
\begin{align}\label{exp99a}
Y_{\Geb'}^{R}-\lambda&\ge\Th{R}{\UV}-\lambda-\frac{4}{R^2}=\Delta,
\\\label{exp99b}
Y_{\Geb'}^{R}-\lambda&\ge\frac{1}{2}|\nabla F|^2+\Delta\chi_{0,R}^2\ge\frac{1}{2}|\nabla F|^2.
\end{align}
To get \eqref{exp99b} we applied \eqref{cassius2} in the case where \eqref{bedFa} holds,
and \eqref{cassius3} in the case where \eqref{bedFb} is satisfied.
Let $\ve\in(0,1)$ and set $F_\ve:=(1-\ve)F$.
Combining \eqref{exp99a} and \eqref{exp99b} we then find
\begin{align*}
\Big(1-\frac{\ve}{2}(2-\ve)\Big)\Big(Y_{\Geb'}^{R}-\lambda-\frac{\ve\Delta}{5}\Big)
&\ge\frac{1}{2}|\nabla F_\ve|^2-(1-\ve)^2\frac{\ve\Delta}{5}
+\frac{\ve}{2}(2-\ve)\Big(1-\frac{\ve}{5}\Big)\Delta
\\
&\ge\frac{1}{2}|\nabla F_\ve|^2-\frac{\ve\Delta}{5}+\frac{\ve}{2}\Big(1-\frac{1}{5}\Big)\Delta
\\
&\ge\frac{1}{2}|\nabla F_\ve|^2+\delta,
\end{align*}
as quadratic forms on $\sQ_{\Geb'}$, with
\begin{align*}
\delta:=\min\Big\{1 \ , \ \frac{\ve\Delta}{5}\Big\}.
\end{align*}
Putting the expression $V+(\Th{R}{\UV}-E_{\Geb,\UV}+|\nabla F|^2/2)\chi_{0,R}^2-\lambda-\delta$ 
in place of $V$ in \eqref{resexploc} and \eqref{resexploc2a}, we then see that
\begin{align}\label{exp42}
\|e^{F_\ve}(Y_{\Geb'}^{R}-z)^{-1}e^{-F_\ve}\|&\le\frac{1}{\delta},
\\\label{exp43}
\big\|(Y_{\Geb'}^R-\lambda-\delta)^\eh e^{F_\ve}(Y_{\Geb'}^{R}-z)^{-1}e^{-F_\ve}\big\|
&\le\frac{2}{\ve\delta^\eh},
\end{align}
for all $z\in\CC$ with $\Re[z]\le\lambda+\delta$.

Pick some $\vr\in C_0^\infty(\RR,\RR)$ with $0\le\vr\le1$, $\vr(y)=1$ for $|y|\le1/2$, and
$\vr(y)=0$ for $|y|\ge1$. Furthermore, we define $\theta\in C_0^\infty(\RR,\RR)$ by setting
$\theta(x)=1$ for $x\in J:=[E_{\Geb,\UV},\lambda]$, $\theta(x)=\vr(x-E_{\Geb,\UV})$ for
$x\le E_{\Geb,\UV}$, and $\theta(x)=\vr([x-\lambda]/\delta)$ for $x\ge\lambda$. 
(We may assume without loss of generality that $\lambda\ge E_{\Geb',\UV}\ge E_{\Geb,\UV}$,
for otherwise the statement of Proposition~\ref{propexploc} is trivial.)
Next, we define an
extension of $\theta$ to $\CC$ by $\tilde{\theta}(z):=(\theta(x)+\theta'(x)iy)\vr(y)$, where
as usual $z=x+iy$ with $x,y\in\RR$. Notice that $|\partial_{\bar{z}}\tilde{\theta}(z)|/|y|$ 
with $\partial_{\bar{z}}=(\partial_x+i\partial_y)/2$ is integrable on $\CC\setminus\RR$ and
\begin{align}
\int_{\CC\setminus\RR}\frac{|\partial_{\bar{z}}\tilde{\theta}(z)|}{|y|}\Id x\Id y&\le
c\max\{\|\vr'\|_\infty^2,\|\vr''\|_\infty\}
\Big(\frac{1}{\delta}+1+\lambda+\delta-E_{\Geb,\UV}\Big),
\end{align}  
where $\lambda+\delta<\Th{R}{\UV}$ and $c>0$ is some universal constant. 
We shall employ the Helffer-Sj\"{o}strand formula,
\begin{align}\label{HSformula}
\theta(A)&=-\frac{1}{\pi}\int_{\CC\setminus\RR}(A-z)^{-1}\partial_{\bar{z}}\tilde{\theta}(z)\Id x\Id y,
\end{align}
valid for any self-adjoint operator $A$ in some Hilbert space. It follows from the formula for
the fundamental solution to the Cauchy-Riemann operator $\partial_{\bar{z}}$
and the spectral calculus. The main idea \cite{BFS1998b} is to exploit the following key relation
entailed by \eqref{exp99b}  and \eqref{HSformula},
\begin{align*}
1_{J}(\HV_{\Geb',K,\UV})&=\frac{1}{\pi}\int_{\CC\setminus\RR}\Big((Y_{\Geb'}^{R}-z)^{-1}
-(\HV_{\Geb',K,\UV}-z)^{-1}\Big)1_{J}(\HV_{\Geb',K,\UV})\partial_{\bar{z}}\tilde{\theta}(z)\Id x\Id y.
\end{align*}
Multiplying it by $e^{F_\ve\wedge n}$ and by 
$$
Z_m:=(Y_{\Geb'}^{R}-\lambda-\delta)^\eh
(m^{-1}Y_{\Geb'}^{R}-m^{-1}\lambda-m^{-1}\delta+1)^\mh,
$$
applying the second resolvent identity, and using \eqref{exp42} and \eqref{exp43}, we find
\begin{align*}
\max&\Big\{{\delta}\|e^{F_\ve\wedge n}1_{(-\infty,\lambda]}(\HV_{\Geb',K,\UV})\| \ , \
\frac{\ve\delta^\eh}{2}\|Z_me^{F_\ve\wedge n}1_{(-\infty,\lambda]}(\HV_{\Geb',K,\UV})\|\Big\}
\\
&\le\frac{1}{\pi}\Big(\Th{R}{\UV}-E_{\Geb,\UV}+\frac{1}{2}\|\chi_{0,R}|\nabla F|\|_\infty^2\Big)
\|e^{F_\ve}1_{B_{2R}}\|_\infty
\int_{\CC\setminus\RR}\frac{|\partial_{\bar{z}}\tilde{\theta}(z)|}{|y|}\Id x\Id y
\\
&\le \frac{c'}{\delta}\|e^{F_\ve}1_{B_{2R}}\|(1+\|1_{B_{2R}}|\nabla F|\|_\infty^2)
(\Th{R}{\UV}-E_{\Geb,\UV}+1)^2,\quad m,n\in\NN,
\end{align*}
with another universal constant $c'>0$.
Together with a limiting argument this entails the analogues of \eqref{exot1} and \eqref{exot2}
for the operator $\HV_{\Geb',K,\UV}$. Here we also take
\begin{align*}
\|(\HV_{\Geb',K,\UV}-E_{\Geb',\UV})^\eh(Y_{\Geb'}^{R}-\lambda-\delta)^\mh\|^2
&\le\frac{5}{4}\Big(1+\frac{\Th{R}{\UV}-E_{\Geb,\UV}}{\Delta}\Big)
\end{align*}
into account, which follows from the bounds
$\HV_{\Geb',K,\UV}-E_{\Geb',\UV}\le Y_{\Geb'}^{R}-E_{\Geb,\UV}$ 
and $\lambda+\delta<\Th{R}{\UV}$ as well as from \eqref{exp99a}.
\end{proof}


\section{A Compactness Criterion in Fock Space}\label{appcpt}

\noindent
For the reader's convenience we now explain the arguments used in \cite{Matte2016} to
obtain the compactness result of  Proposition~\ref{propcpt}. The following
proof combines Kolmogorov's characterization of compact subsets in $L^p(\RR^d)$
with observations from \cite{GLL2001}.

\begin{proof}[Proof of Proposition~\ref{propcpt}.]
In this proof we shall only treat the case of the Hilbert space $L^2(\RR^3,\sF)$ explicitly. 
To deal with $\sF$ alone we simply
have to ignore everything related to the variable $\V{x}$ in what follows.
According to the canonical isomorphism 
$L^2(\RR^3,\sF)=\bigoplus_{\ell=0}^\infty L^2(\RR^3,L^2_{\mathrm{sym}}(\RR^{3\ell}))$, 
we write every $\Psi\in L^2(\RR^3,\sF)$ as $\Psi=(\Psi^{(\ell)})_{\ell\in\NN_0}$.

{\em Step~1.} First, we pick $\ell\in\NN$ and $\delta\in(0,1]$ and show that
$\{(\Gamma(\vr_\delta)\Phi_\iota)^{(\ell)}:\iota\in\sI\}$ is relatively compact, where
\begin{align*}
(\Gamma(\vr_\delta)\Phi_{\iota})^{(\ell)}(\V{x},\V{k}_1,\ldots,\V{k}_\ell)
:=\bigg(\prod_{j=1}^\ell\vr_\delta(\V{k}_j)\bigg)
\Phi_{\iota}^{(\ell)}(\V{x},\V{k}_1,\ldots,\V{k}_\ell).
\end{align*}
Exploiting the permutation symmetry of $\Phi_{\iota}^{(\ell)}$ in the variables
$(\V{k}_1,\ldots,\V{k}_\ell)=:\V{k}_{[\ell]}$ and the obvious fact that, if $|(\V{x},\V{k}_{[\ell]})|\ge R$,
then at least one of the $\ell+1$ component vectors in $(\V{x},\V{k}_{[\ell]})$ must have norm
$\ge R/(\ell+1)$, we find
\begin{align*}
&\sup_{\iota\in\sI}\int_{\RR^{3(\ell+1)}}1_{\{|(\V{x},\V{k}_{[\ell]})|\ge R\}}|(\Gamma(\vr_\delta)
\Phi_{\iota})^{(\ell)}(\V{x},\V{k}_{[\ell]})|^2\Id(\V{x},\V{k}_{[\ell]})
\\
&\le\sup_{\iota\in\sI}\|1_{\{|\hat{\V{x}}|\ge R/(\ell+1)\}}\Phi_\iota\|^2+
\sup_{\iota\in\sI}\int_{\RR^3}1_{\{|\V{k}|\ge R/(\ell+1)\}}\|(a\Phi_\iota)(\V{k})\|^2\Id\V{k}
\xrightarrow{\;\;R\to\infty\;\;}0.
\end{align*}
Here we used \eqref{cpt1a} and \eqref{cpt2a} in the last step. For every
$\Psi^{(\ell)}\in L^2(\RR^{3(\ell+1)})$, we now put
$(S_{(\V{y},\V{h}_{[\ell]})}\Psi^{(\ell)})(\V{x},\V{k}_{[\ell]})
:=\Psi^{(\ell)}(\V{x}+\V{y},\V{k}_{[\ell]}+\V{h}_{[\ell]})$. Furthermore, we abbreviate 
$S^{(1)}_{\V{h}_j}:=S_{(0,\V{h}_{j},0,...,0)}$, i.e., 
$S^{(1)}_{\V{h}_j}$ shifts the variable $\V{k}_1$ by $\V{h}_j$.
By a telescopic summation and the permutation symmetry of $\Phi_{\iota}^{(\ell)}$ 
in its last $\ell$ variables, we then obtain, for all $\V{y}\in\RR^3$ and
$\V{h}_{[\ell]}=(\V{h}_1,\ldots,\V{h}_\ell)\in\RR^{3\ell}$,
\begin{align*}
&\|(\Gamma(\vr_\delta)\Phi_{\iota})^{(\ell)}
-S_{(\V{y},\V{h}_{[\ell]})}(\Gamma(\vr_\delta)\Phi_{\iota})^{(\ell)}\|
\\
&\le\Big(\int_{\RR^3}\|\Phi_{\iota}(\V{x})-\Phi_{\iota}(\V{x}+\V{y})\|^2\Id\V{x}\Big)^\eh
+\sum_{j=1}^\ell\|(\Gamma(\vr_\delta)\Phi_{\iota})^{(\ell)}-S_{\V{h}_{j}}^{(1)}
(\Gamma(\vr_\delta)\Phi_{\iota})^{(\ell)}\|,
\end{align*}
where the sum is $\le\sqrt{\ell}\max_{j=1}^\ell\triangle_\delta(\V{h}_j)^{\eh}$.
From \eqref{cpt1b} and \eqref{cpt2b} it now follows that
$\|(\Gamma(\vr_\delta)\Phi_{\iota})^{(\ell)}
-S_{(\V{y},\V{h}_{[\ell]})}(\Gamma(\vr_\delta)\Phi_{\iota})^{(\ell)}\|\to0$, as
$(\V{y},\V{h}_{[\ell]})\to0$, uniformly in $\iota\in\sI$.
Altogether this implies that the bounded set $\{(\Gamma(\vr_\delta)\Phi_\iota)^{(\ell)}:\iota\in\sI\}$ is 
relatively compact in $L^2(\RR^{3(\ell+1)})$. Furthermore, it follows directly from
\eqref{cpt2a} and \eqref{cpt2b} that the bounded set
$\{{\Phi}^{(0)}_{\iota}:\iota\in\sI\}$ is relatively compact in $L^2(\RR^{3})$.

{\em Step~2.}
Now let $\{\Phi_n\}_{n\in\NN}$ be a sequence in $\{\Phi_\iota\}_{\iota\in\sI}$.
Since it is bounded, is contains a weakly converging subsequence which we again denote
by $\{\Phi_n\}_{n\in\NN}$ for simplicity. Let $\Phi_\infty$ be its weak limit. Then
$\|\Phi_\infty\|\le\liminf_{n\to\infty}\|\Phi_n\|$. 
We shall find natural numbers $n_1<n_2<\ldots\,$ such that
 $\|\Phi_\infty\|^2\ge\|\Phi_{n_s}\|^2-1/s$, $s\in\NN$, which implies
$\|\Phi_{n_s}\|\to\|\Phi_\infty\|$, $s\to\infty$, and hence $\Phi_{n_s}\to\Phi_\infty$ strongly.

Let $s\in\NN$. Pick $m\in\NN$ and $\delta\in(0,1]$ such that
$N(0,\infty)/(m+1)\le1/4s$ and $N(0,2\delta)\le1/4s$.
After a $(m+1)$-fold iterative selection of subsequences, employing Step~1 successively for 
$\ell\in\{0,\ldots,m\}$, we find natural numbers $\kappa_1<\kappa_2<\ldots\,$ such that every 
sequence $\{(\Gamma(\vr_\delta)\Phi_{\kappa_i})^{(\ell)}\}_{i\in\NN}$ with $\ell\in\{0,\ldots,m\}$
converges strongly to its weak limit $(\Gamma(\vr_\delta)\Phi_\infty)^{(\ell)}$. 
Let $p_{m}$ be the orthogonal projection in $L^2(\RR^3,\sF)$ onto the subspace 
$L^2(\RR^3)\oplus\bigoplus_{\ell=1}^m L^2(\RR^3,L^2_{\mathrm{sym}}(\RR^{3\ell}))$ and
$p_m^\bot=1-p_m$. Then
\begin{align*}
\|\Phi_\infty\|^2&\ge\|p_{m}\Gamma(\vr_\delta)\Phi_\infty\|^2
=\lim_{i\to\infty}\|p_{m}\Gamma(\vr_\delta)\Phi_{\kappa_i}\|^2
=\lim_{i\to\infty}\SPn{\Phi_{\kappa_i}}{p_m\Gamma(\vr_\delta^2)\Phi_{\kappa_i}},
\end{align*}
where we used that $p_m$ and $\Gamma(\vr_\delta)$ commute.
Let $B_{2\delta}$ denote the open ball about $0$ of radius $2\delta$ in $\RR^3$.
Since $\vr_\delta=1$ on $B_{2\delta}^c$, we have
$1-\Gamma(\vr_\delta^2)\le1-\Gamma(1_{B_{2\delta}^c})$, where in each subspace
$L^2_{\mathrm{sym}}(\RR^{3\ell})$ the operator $\Gamma(1_{B_{2\delta}^c})$ acts by
multiplication with the characteristic function of the kartesian product 
$\times_{j=1}^\ell B_{2\delta}^c$.
Of course, if $\V{k}_{[\ell]}=(\V{k}_1,\ldots,\V{k}_\ell)$ is in the complement of 
$\times_{j=1}^\ell B_{2\delta}^c$, then $1_{B_{2\delta}}(\V{k}_j)=1$, for at least
one $j\in\{1,\ldots,\ell\}$. Therefore, 
$1-\Gamma(\vr_\delta^2)\le\Id\Gamma(1_{B_{2\delta}})$, whence
\begin{align*}
\SPn{\Phi_{\kappa_i}}{p_m\Gamma(\vr_\delta^2)\Phi_{\kappa_i}}
&=\|\Phi_{\kappa_i}\|^2-\|p_m^\bot\Phi_{\kappa_i}\|^2-\SPn{\Phi_{\kappa_i}}{
p_m(1-\Gamma(\vr_\delta^2))\Phi_{\kappa_i}}
\\
&\ge\|\Phi_{\kappa_i}\|^2-\frac{N(0,\infty)}{m+1}
-\sup_{\iota\in\sI}\|\Id\Gamma(1_{B_{2\delta}})^\eh\Phi_\iota\|^2,\quad i\in\NN.
\end{align*}
Here the last supremum is equal to $N(0,2\delta)$ in view of \eqref{fordGak}. Putting all these
remarks together we see that
$\|\Phi_\infty\|^2\ge\limsup_{i\to\infty}\|\Phi_{\kappa_i}\|^2-1/2s$. 

It is now clear how to find the above indices $n_1<n_2<\ldots\,$ and we conclude.
\end{proof}


\section{Domination of Inverse Gaussians of Field Operators}\label{appinvGauss}

\noindent
Here we prove the relative bounds employed in Remarks~\ref{introremGauss1} 
and~\ref{introremGauss2}.

\begin{lem}\label{leminvGauss}
Let $\vk:\RR^3\to\RR$ be measurable and a.e. strictly positive and let $f\in L^2(\RR^3)$
and $\alpha>1$ satisfiy $\vk^\mh f\in L^2(\RR^3)$ and
$4\alpha\|(\vk^\mh\vee1)f\|^2<1$. Then every 
$\psi\in\dom(e^{\Id\Gamma(\vk/(\alpha-1))})$ is in the domain of  $e^{\vp(f)^2}$ and 
\begin{align*}
\|e^{\vp(f)^2}\psi\|&\le\frac{1}{\sqrt{1-4\alpha\|(\vk^\mh\vee1)f\|^2}}
\|e^{\Id\Gamma(\vk/(\alpha-1))}\psi\|.
\end{align*}
\end{lem}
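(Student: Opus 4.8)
The plan is to prove the Gaussian bound by passing through a Laplace-transform (Mehler-type) representation of $e^{-s\vp(f)^2}$ and then turning the resulting second-quantization estimate into the claimed bound. First I would record the elementary Gaussian identity
\begin{align*}
e^{\vp(f)^2}&=\frac{1}{\sqrt{\pi}}\int_{\RR}e^{-s^2}e^{2is\vp(f)}\,\Id s
=\frac{1}{\sqrt{\pi}}\int_{\RR}e^{-s^2}\sW(-2isf)\,\Id s,
\end{align*}
valid in the sense of the spectral calculus applied to the self-adjoint operator $\vp(f)$; this makes sense on a dense domain, and the whole point is to control $\|e^{\vp(f)^2}\psi\|$ for $\psi$ in the stated domain. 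The second step is to estimate $\|\sW(-2isf)\psi\|$ uniformly well enough in $s$ to integrate. Using the relative bound \eqref{rbvp} together with \eqref{GrWubd} (or more directly the explicit action \eqref{defWeyl} on exponential vectors and a density/closedness argument), one gets, for $\psi\in\fdom(\Id\Gamma(\vk))$,
\begin{align*}
\|e^{t\Id\Gamma(\vk)}\sW(-2isf)e^{-t\Id\Gamma(\vk)}\|&\le e^{cs^2\|(\vk^\mh\vee1)f\|^2}
\end{align*}
for a suitable constant; more precisely I would compute $\|e^{t\Id\Gamma(\vk)}\sW(g)\phi\|^2$ on exponential vectors via \eqref{SGdGamma}, \eqref{defWeyl}, so that the shift $g=-2isf$ produces exactly the exponential weight $\exp(s^2\cdot 4\|(\vk^{\mh}\vee 1)f\|^{2}\cdot(\text{coefficient}))$, keeping careful track of the factor $4$ so that the threshold condition $4\alpha\|(\vk^\mh\vee1)f\|^2<1$ comes out correctly.

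The third step is the bookkeeping: set $t:=1/(\alpha-1)$ and write
\begin{align*}
e^{\vp(f)^2}\psi&=\frac{1}{\sqrt{\pi}}\int_{\RR}e^{-s^2}
\big(e^{-t\Id\Gamma(\vk)}\,e^{t\Id\Gamma(\vk)}\sW(-2isf)e^{-t\Id\Gamma(\vk)}\big)\,
e^{t\Id\Gamma(\vk)}\psi\,\Id s,
\end{align*}
using $\|e^{-t\Id\Gamma(\vk)}\|\le1$ (valid since $\vk\ge0$, hence $\Id\Gamma(\vk)\ge0$), bound the middle factor by $e^{\gamma s^2\|(\vk^\mh\vee1)f\|^2}$ with $\gamma$ chosen so the exponent is $4\alpha\|(\vk^{\mh}\vee1)f\|^{2}$ after the Gaussian integration — this is where $\alpha$ enters, via splitting $e^{-s^2}=e^{-s^2/\alpha}e^{-s^2(\alpha-1)/\alpha}$ or an equivalent rescaling so that the weight $t=1/(\alpha-1)$ matches — and then evaluate $\int_{\RR}e^{-s^2+4\alpha\|(\vk^\mh\vee1)f\|^2 s^2}\,\Id s=\sqrt{\pi}/\sqrt{1-4\alpha\|(\vk^\mh\vee1)f\|^2}$. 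This yields the asserted inequality for $\psi$ in a core, and a standard closedness/monotone limit argument (the operators $e^{\vp(f)^2}$ and $e^{\Id\Gamma(\vk/(\alpha-1))}$ being closed, approximating by spectral cutoffs $1_{\{\Id\Gamma(\vk)\le n\}}\psi$) upgrades it to all $\psi\in\dom(e^{\Id\Gamma(\vk/(\alpha-1))})$, simultaneously giving the domain inclusion.

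The main obstacle I expect is getting the constants exactly right in the intertwining bound for $e^{t\Id\Gamma(\vk)}\sW(-2isf)e^{-t\Id\Gamma(\vk)}$ — in particular making sure the $\vk^\mh\vee1$ appears (not just $\vk^\mh$), which is needed because $f$ need not lie in $\dom(\vk^\mh)$ only but the "$\vee1$" handles the ultraviolet part, and making sure the factor $4\alpha$ rather than, say, $2\alpha$ or $\alpha$ emerges. Concretely one must compute, for $g=-2isf$ and $\phi=\epsilon(h)$,
\begin{align*}
\|e^{t\Id\Gamma(\vk)}\sW(g)\epsilon(h)\|^2
&=e^{-\|f\|^2\cdot 4s^2/2-2\Re\SPn{g}{h}}\,\|e^{t\Id\Gamma(\vk)}\epsilon(g+h)\|^2
\end{align*}
and expand $\|e^{t\Id\Gamma(\vk)}\epsilon(g+h)\|^2=e^{\|e^{t\vk}(g+h)\|^2}$ via \eqref{SGdGamma}, then complete the square in $h$; the cross terms between $g$ and $h$ must be absorbed into the $e^{-t\Id\Gamma(\vk)}\psi$ side and into the Gaussian, and a Cauchy--Schwarz / Young step controls them at the cost of the factor $\alpha$. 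I would also double-check the degenerate and limiting cases ($\vk$ not bounded below, $\psi$ not in the form domain a priori) go through by the cutoff argument. Everything else is routine Fock-space calculus already set up in Subsect.~\ref{ssecWeyl}.
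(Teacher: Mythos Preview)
There is a genuine gap: your Gaussian identity has the wrong sign. Completing the square gives
\[
\frac{1}{\sqrt{\pi}}\int_{\RR}e^{-s^2}e^{2ist}\,\Id s=e^{-t^2},
\]
so by the spectral calculus
\[
\frac{1}{\sqrt{\pi}}\int_{\RR}e^{-s^2}\sW(-2isf)\,\Id s=e^{-\vp(f)^2},
\]
not $e^{+\vp(f)^2}$. This is not a minor slip that can be repaired by a sign change elsewhere: the Weyl operators $\sW(-2isf)$ are unitary, so any Bochner integral of them against a finite measure produces a contraction, never the unbounded operator $e^{\vp(f)^2}$. To get $e^{t^2}$ from a Gaussian integral one needs the real kernel $e^{2st}$, which on the operator side means the unbounded, non-unitary semigroup $e^{2s\vp(f)}$; these are not Weyl operators and the intertwining estimate you propose for $e^{t\Id\Gamma(\vk)}\sW(\cdot)e^{-t\Id\Gamma(\vk)}$ is then unavailable. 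There is a second, independent obstruction: even for the unitary Weyl operators, the conjugation $e^{t\Id\Gamma(\vk)}\sW(g)e^{-t\Id\Gamma(\vk)}$ sends $\epsilon(h)$ to a multiple of $\epsilon(e^{t\vk}g+h)$, and since the lemma imposes no growth condition on $\vk$ (only $\vk>0$ a.e.\ and $\vk^{-1/2}f\in L^2$), there is no reason for $e^{t\vk}f$ to be square-integrable, so the conjugated operator need not be bounded.

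The paper takes a completely different route: it expands $\|e^{\vp(f)^2}\psi\|^2=\sum_{N\ge0}\frac{2^N}{N!}\langle\psi,\vp(f)^{2N}\psi\rangle$ via the spectral measure, normal-orders each $\vp(f)^{2N}$ into a trinomial sum over $\ad(f)^\ell a(f)^m$ with coefficients $\|f\|^{2k}/2^k$, and then bounds $\|a(f)^m\phi\|\le\|\vk^{-1/2}f\|^m\|\Id\Gamma(\vk)^{m/2}\phi\|$. After a Cauchy--Schwarz inequality and the multinomial theorem, the free parameter $\lambda$ (which rescales $\vk$) is chosen as $\lambda=2/(\alpha-1)$ so that the trinomial sum collapses to $1$, and the remaining geometric series in $4\alpha\|(\vk^{-1/2}\vee1)f\|^2$ gives the claimed prefactor. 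This argument never leaves the polynomial/moment level and hence never needs $e^{t\vk}f\in L^2$ or any operator-valued integral representation.
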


\begin{proof}
The proof is based on the normal ordering 
\begin{align}\label{karl1}
\vp(f)^{n}\phi&=\sum_{{k,\ell,m\in\NN_0:\atop2k+\ell+m=n}}\frac{n!}{k!\ell!m!}
\frac{\|f\|^{2k}}{2^k}\ad(f)^\ell a(f)^m\phi,
\end{align}
valid for all $\phi\in\dom(\Id\Gamma(\vk)^{\nf{n}{2}})$, as well as on the following relative bound 
implied by the Cauchy-Schwarz inequality (somewhat similarly to \eqref{myresluger1}),
\begin{align}\label{karl2}
\|a(f)^m\phi\|&\le\|\vk^\mh f\|^m\|\Id\Gamma(\vk)^{\nf{m}{2}}\phi\|,
\end{align}
for all $m\in\NN$ and $\phi$ in the domain of $\Id\Gamma(\vk)^{\nf{m}{2}}$.

Let $\lambda>0$ and $\psi\in\bigcap_{j=1}^\infty\dom(\Id\Gamma(\vk)^{\nf{j}{2}})$.
Applying \eqref{karl1} and \eqref{karl2} we then find
\begin{align*}
S_f(\psi)&:=\sum_{N=0}^\infty\frac{2^N}{N!}\SPn{\psi}{\vp(f)^{2N}\psi}
\\
&=\sum_{N=0}^\infty\frac{2^N}{N!}\sum_{{k,\ell,m\in\NN_0:\atop2k+\ell+m=2N}}
\frac{(2N)!}{k!\ell!m!}\frac{\|f\|^{2k}}{2^k}\SPn{a(f)^\ell\psi}{a(f)^m\psi}
\\
&\le\sum_{N=0}^\infty\frac{(2\alpha)^{N}}{N!}\|(\vk^\mh\vee1)f\|^{2N}
\\
&\quad\cdot
\sum_{{k,\ell,m\in\NN_0:\atop2k+\ell+m=2N}}
\frac{(2N)!}{(2^kk!)\ell!m!}\frac{1}{\alpha^k(\alpha\lambda)^{\nf{\ell}{2}}
(\alpha\lambda)^{\nf{m}{2}}}
\|\Id\Gamma(\lambda\vk)^{\nf{\ell}{2}}\psi\|\|\Id\Gamma(\lambda\vk)^{\nf{m}{2}}\psi\|.
\end{align*}
Next, we employ the bound $((2p)!)^\eh\le2^pp!$ with $p=N$ and $p=k$ to get
\begin{align*}
\frac{((2N)!)^\eh}{N!2^kk!}\le\frac{2^{N}}{((2k)!)^\eh}.
\end{align*}
This implies
\begin{align*}
S_f(\psi)&\le\sum_{N=0}^\infty(4\alpha)^{N}\|(\vk^\mh\vee1)f\|^{2N}
C_{2N}^{\alpha,\lambda}(\vk,\psi),
\end{align*}
where we abbreviate
\begin{equation*}
C_{2N}^{\alpha,\lambda}(\vk,\psi):=\!\!\sum_{{k,\ell,m\in\NN_0:\atop2k+\ell+m=2N}}\!\!
\Big(\frac{(2N)!}{(2k)!\ell!m!}\Big)^\eh
\frac{\|(\ell!)^\mh\Id\Gamma(\lambda\vk)^{\nf{\ell}{2}}\psi\|
\|(m!)^\mh\Id\Gamma(\lambda\vk)^{\nf{m}{2}}\psi\|}{
\alpha^k(\alpha\lambda)^{\nf{\ell}{2}}(\alpha\lambda)^{\nf{m}{2}}}.
\end{equation*}
With the help of the Cauchy-Schwarz inequality and the multinomial theorem we obtain
\begin{align*}
C_{2N}^{\alpha,\lambda}(\vk,\psi)&\le
\Bigg(\sum_{{k,\ell,m\in\NN_0:\atop2k+\ell+m=2N}}\frac{(2N)!}{(2k)!\ell!m!}
\frac{1}{\alpha^{2k}(\alpha\lambda)^\ell(\alpha\lambda)^m}\Bigg)^{\eh}
\\
&\quad\cdot
\Bigg(\sum_{{k,\ell,m\in\NN_0:\atop2k+\ell+m=2N}}
\frac{\SPn{\psi}{\Id\Gamma(\lambda\vk)^\ell\psi}}{\ell!}
\frac{\SPn{\psi}{\Id\Gamma(\lambda\vk)^m\psi}}{m!}\Bigg)^\eh
\\
&\le\Big\{\frac{1}{\alpha}+\frac{2}{\alpha\lambda}\Big\}^{N}
\SPB{\psi}{\sum_{j=1}^{2N}\frac{1}{j!}\Id\Gamma(\lambda\vk)^j\psi}.
\end{align*}
We now choose $\lambda:=2/(\alpha-1)$ so that the curly bracket $\{\cdots\}$ in the last line
of the previous estimation equals $1$.
We further assume in addition that $\psi\in\dom(e^{\Id\Gamma(\vk/(\alpha-1))})$ and let
$\nu_\psi$ denote the spectral measure of $\vp(f)$ associated with $\psi$. Putting the
above remarks together we then conclude
\begin{align*}
\int_{\RR}e^{2t^2}\Id\nu_\psi(t)&=\sum_{N=0}^\infty\frac{2^N}{N!}
\int_{\RR}t^{2N}\Id\nu_\psi\le\frac{\|e^{\Id\Gamma(\vk/(\alpha-1))}\psi\|^2}{
1-4\alpha\|(\vk^\mh\vee1)f\|^2},
\end{align*}
where we applied Fubini's theorem for non-negative functions in the first step.
\end{proof}


\section*{Acknowledgments}

F.H. thanks Aalborg University for their kind hospitality. F.H. is financially supported by 
a Grant-in-Aid for Science Research ((B)16H03942) from the Japan Society for the Promotion of Science. 

O.M. thanks Kyushu University for their kind hospitality. O.M. was supported by the
VILLUM Foundation via the project grant ``Spectral Analysis of Large Particle Systems'' (VKR023170) 
 during the early phase of work on this article. 
 
F.H. and O.M. are grateful for support by the Danish Agency for Science, Technology and 
Innovation via the International Network Programme grant ``Exciting Polarons'' (5132-00122B).


\end{document}